\definecolor{TUMBlau}{RGB}{0,101,189} 
\definecolor{TUMBlauDunkel}{RGB}{0,82,147} 
\definecolor{TUMBlauHell}{RGB}{152,198,234} 
\definecolor{TUMBlauMittel}{RGB}{100,160,200} 
\definecolor{TUMElfenbein}{RGB}{218,215,203} 
\definecolor{TUMGruen}{RGB}{162,173,0} 
\definecolor{TUMOrange}{RGB}{227,114,34} 
\definecolor{TUMGrau}{gray}{0.6} 
\definecolor{TUMGruenDunkel}{RGB}{0,124,48} 
\definecolor{TUMRot}{RGB}{196,7,27} 
\definecolor{TUMBlue}{RGB}{0,101,189} 
\definecolor{TUMBlueDark}{RGB}{0,82,147} 
\definecolor{TUMBlueLight}{RGB}{152,198,234} 
\definecolor{TUMBlueMedium}{RGB}{100,160,200} 
\definecolor{TUMIvory}{RGB}{218,215,203} 
\definecolor{TUMGreen}{RGB}{162,173,0} 
\definecolor{TUMGray}{gray}{0.6} 
\definecolor{TUMGrayDark}{gray}{0.3} 
\definecolor{TUMGreenDark}{RGB}{0,124,48} 
\definecolor{TUMRed}{RGB}{196,7,27} 
\definecolor{TUMPink}{RGB}{181,92,165}
\definecolor{TUMPinkDark}{RGB}{155,70,141}
\definecolor{TUMPink1}{RGB}{198, 128, 187}
\definecolor{TUMPink2}{RGB}{214, 164, 206}
\definecolor{TUMPink3}{RGB}{230, 199, 225}
\definecolor{TUMPink4}{RGB}{246, 234, 244}
\definecolor{plotColor1}{RGB}{0,101,189} 
\definecolor{plotColor2}{RGB}{0,124,48} 
\definecolor{plotColor3}{RGB}{196,7,27} 
\definecolor{plotColor4}{RGB}{227,114,34} 
\definecolor{plotColor5}{RGB}{0,82,147} 
\definecolor{plotColor6}{RGB}{162,173,0} 
\definecolor{plotColor7}{gray}{0.3} 
\definecolor{plotColorIA1}{RGB}{0,124,48} 
\definecolor{plotColorIA2}{RGB}{0,82,147} 
\definecolor{plotColorIA3}{RGB}{100,160,200} 
\definecolor{plotColorIA4}{RGB}{152,198,234} 
\definecolor{plotColorIA5}{gray}{0.3} 
\definecolor{plotColorIA6}{RGB}{227,114,34} 
\definecolor{plotColorIA7}{RGB}{196,7,27} 
\definecolor{darkgreen}{rgb}{0,0.7,0}
\newif\ifshowComment
\newif\ifhighlightChanges
\numberwithin{equation}{chapter}
\newcolumntype{C}{>{$}c<{$}} 
\pgfplotsset{compat=newest}
\tikzset{
  >=stealth',
  mybox_block/.style={rectangle,rounded corners,draw=black, thick,text width=1em,minimum height=2em,minimum width=4.75em,text centered},
}
\newcommand*\mymatrixbraceleft[4]{
    \draw[mymatrixbrace, decoration={raise=4pt}] (#1.west|-#1-#3-1.south west) -- node[left=4pt] {#4} (#1.west|-#1-#2-1.north west);
}
\newcommand*\mymatrixbraceright[4]{
    \draw[mymatrixbrace,decoration={raise=4pt}] (#1.east|-#1-#2-1.north east) -- node[right=7pt] {#4} (#1.east|-#1-#3-1.south east);
}
\newcommand*\mymatrixbracetop[4]{
    \draw[mymatrixbrace] (#1.north-|#1-1-#2.north west) -- node[above=2pt] {#4} (#1.north-|#1-1-#3.north east);
}
\newcommand*\mymatrixbracebottom[4]{
    \draw[mymatrixbrace, decoration={mirror, raise=1pt}] (#1.south-|#1-1-#2.south west) -- node[below=2pt] {#4} (#1.south-|#1-1-#3.north east);
}
\tikzset{style green/.style={
    set fill color=TUMGreenDark!80!lime!20,fill opacity=0.5,
    set border color=TUMGreenDark!60!lime!40,draw opacity=1.0,
  },
  style cyan/.style={
    set fill color=cyan!90!blue!60, draw opacity=0.4,
    set border color=blue!70!cyan!30,fill opacity=0.1,
  },
  style orange/.style={
    set fill color=TUMOrange!40,fill opacity=0.3,
    set border color=TUMOrange!90,  draw opacity=0.8,
  },
  style brown/.style={
    set fill color=brown!70!orange!40, draw opacity=0.4,
    set border color=brown, fill opacity=0.3,
  },
  style purple/.style={
    set fill color=violet!90!pink!20, draw opacity=0.5,
    set border color=violet, fill opacity=0.3,
  },
  kwad/.style={
    above left offset={-0.07,0.23},
    below right offset={0.07,-0.23},
    #1
  },
  pion/.style={
    above left offset={-0.07,0.2},
    below right offset={0.07,-0.32},
    #1
  },
  poz/.style={
    above left offset={-0.03,0.18},
    below right offset={0.03,-0.3},
    #1
  },set fill color/.code={\pgfkeysalso{fill=#1}},
  set border color/.style={draw=#1}
}
\newcommand{\footnoteref}[1]{\protected@xdef\@thefnmark{\ref{#1}}\@footnotemark}
\newtheorem{theorem}{Theorem}[chapter]
\newtheorem{definition}{Definition}[chapter]
\newtheorem{remark}{Remark}[chapter]
\newtheorem{lemma}{Lemma}[chapter]
\newtheorem{corollary}{Corollary}[chapter]
\newtheorem{construction}{Construction}[chapter]
\newtheorem{example}{Example}[chapter]
\newtheorem{proposition}{Proposition}[chapter]
\newtheorem{claim}{Claim}
\newcommand{\cA}{\mathcal{A}}
\newcommand{\cB}{\mathcal{B}}
\newcommand{\cC}{\mathcal{C}}
\newcommand{\cD}{\mathcal{D}}
\newcommand{\cE}{\mathcal{E}}
\newcommand{\cF}{\mathcal{F}}
\newcommand{\cG}{\mathcal{G}}
\newcommand{\cI}{\mathcal{I}}
\newcommand{\cJ}{\mathcal{J}}
\newcommand{\cK}{\mathcal{K}}
\newcommand{\cL}{\mathcal{L}}
\newcommand{\cM}{\mathcal{M}}
\newcommand{\cN}{\mathcal{N}}
\newcommand{\cP}{\mathcal{P}}
\newcommand{\cR}{\mathcal{R}}
\newcommand{\cS}{\mathcal{S}}
\newcommand{\cT}{\mathcal{T}}
\newcommand{\cU}{\mathcal{U}}
\newcommand{\cV}{\mathcal{V}}
\newcommand{\cX}{\mathcal{X}}
\newcommand{\cZ}{\mathcal{Z}}
\newcommand{\bA}{\boldsymbol{A}}
\newcommand{\bB}{\boldsymbol{B}}
\newcommand{\bC}{\boldsymbol{C}}
\newcommand{\bD}{\boldsymbol{D}}
\newcommand{\bE}{\boldsymbol{E}}
\newcommand{\bF}{\boldsymbol{F}}
\newcommand{\bG}{\boldsymbol{G}}
\newcommand{\bH}{\boldsymbol{H}}
\newcommand{\bI}{\boldsymbol{I}}
\newcommand{\bM}{\boldsymbol{M}}
\newcommand{\bN}{\boldsymbol{N}}
\newcommand{\bR}{\boldsymbol{R}}
\newcommand{\bS}{\boldsymbol{S}}
\newcommand{\bT}{\boldsymbol{T}}
\newcommand{\bV}{\boldsymbol{V}}
\newcommand{\bX}{\boldsymbol{X}}
\newcommand{\bY}{\boldsymbol{Y}}
\newcommand{\ba}{\boldsymbol{a}}
\newcommand{\bb}{\boldsymbol{b}}
\newcommand{\bc}{\boldsymbol{c}}
\newcommand{\bd}{\boldsymbol{d}}
\newcommand{\be}{\boldsymbol{e}}
\newcommand{\bg}{\boldsymbol{g}}
\newcommand{\bi}{\boldsymbol{i}}
\newcommand{\bm}{\boldsymbol{m}}
\newcommand{\bn}{\boldsymbol{n}}
\newcommand{\bp}{\boldsymbol{p}}
\newcommand{\br}{\boldsymbol{r}}
\newcommand{\bs}{\boldsymbol{s}}
\newcommand{\bu}{\boldsymbol{u}}
\newcommand{\bv}{\boldsymbol{v}}
\newcommand{\bw}{\boldsymbol{w}}
\newcommand{\bx}{\boldsymbol{x}}
\newcommand{\by}{\boldsymbol{y}}
\newcommand{\0}{\boldsymbol{0}}
\newcommand{\1}{\boldsymbol{1}}
\newcommand{\balpha}{\boldsymbol{\alpha}}
\newcommand{\bbeta}{\boldsymbol{\beta}}
\newcommand{\bbF}{\mathbb{F}}
\newcommand{\bbM}{\mathbb{M}}
\newcommand{\bbN}{\mathbb{N}}
\newcommand{\bbR}{\mathbb{R}}
\newcommand{\bbZ}{\mathbb{Z}}
\newcommand{\sfC}{\mathsf{C}}
\newcommand{\sfE}{\mathsf{E}}
\newcommand{\sfH}{\mathsf{H}}
\newcommand{\sfL}{\mathsf{L}}
\newcommand{\sfM}{\mathsf{M}}
\newcommand{\sfQ}{\mathsf{Q}}
\newcommand{\sfR}{\mathsf{R}}
\newcommand{\sfS}{\mathsf{S}}
\newcommand{\sfc}{\mathsf{c}}
\newcommand{\sfs}{\mathsf{s}}
\renewcommand{\ring}{\cA}
\newcommand{\subring}{\cB}
\newcommand{\field}{F}
\newcommand{\ideal}{I}
\newcommand{\GrobBases}{G}
\newcommand{\module}{\cM}
\newcommand{\polyRing}{\cR}
\newcommand{\IntRing}{\bbZ}
\newcommand{\F}{\mathbb{F}}
\newcommand{\Fq}{\mathbb{F}_q}
\newcommand{\Fqm}{\mathbb{F}_{q^m}}
\DeclareMathOperator{\diag}{diag}
\DeclareMathOperator{\supp}{supp}
\DeclareMathOperator{\rk}{rank}
\DeclareMathOperator{\rank}{rank} 
\DeclareMathOperator{\wt}{wt}
\DeclareMathOperator{\gcrd}{gcrd}
\DeclareMathOperator{\lclm}{lclm}
\DeclareMathOperator{\lcm}{lcm}
\DeclareMathOperator{\Id}{id}
\DeclareMathOperator{\lt}{lt} 
\DeclareMathOperator{\lc}{lc} 
\DeclareMathOperator{\lm}{lm} 
\DeclareMathOperator{\lex}{lex}
\DeclareMathOperator{\grlex}{grlex}
\DeclareMathOperator{\grevlex}{grevlex}
\DeclareMathOperator{\mdeg}{mdeg} 
\DeclareMathOperator{\rem}{rem} 
\DeclareMathOperator{\quo}{quo} 
\newcommand{\defeq}{:=}
\newcommand{\defeqrev}{=:}
\newcommand{\card}[1]{\ensuremath{\left|{#1}\right|}}
\newcommand{\ceil}[1]{\ensuremath{\left\lceil #1 \right\rceil}}
\newcommand{\floor}[1]{\ensuremath{\left\lfloor #1 \right\rfloor}}
\renewcommand{\leq}{\leqslant}
\renewcommand{\geq}{\geqslant}
\newcommand{\RS}{\sfR\sfS}
\newcommand{\RM}{\sfR\sfM}
\newcommand{\QC}{\sfQ\sfC}
\DeclareMathOperator{\dist}{\ensuremath{\mathrm{d}}}
\DeclareMathOperator{\wtH}{\wt_{\sfH}} 
\DeclareMathOperator{\dH}{\dist_{\sfH}}
\DeclareMathOperator{\wtR}{\wt_{\sfR}}
\DeclareMathOperator{\dR}{\dist_{\sfR}}
\newcommand{\extbasis}[1]{\mathrm{ext}_{#1}}
\newcommand{\wtSR}[1]{\ensuremath{\wt_{\sfS\sfR,#1}}}
\newcommand{\dSR}[1]{\ensuremath{\dist_{\sfS\sfR,#1}}}
\newcommand{\myspan}[1]{\left\langle #1 \right\rangle}
\newcommand{\Endom}{\theta}
\newcommand{\Frobaut}{\sigma}
\newcommand{\Deriv}{\delta}
\newcommand{\SkewVar}{X}
\newcommand{\RingSkewPolys}{\ring[\SkewVar;\Endom,\Deriv]}
\newcommand{\FrobSkewPolys}{\Fqm[\SkewVar;\Frobaut]}
\newcommand{\GLRS}{\bG^{(\sfL\sfR\sfS)}}
\newcommand{\GLRSi}[1]{\bG^{(\sfL\sfR\sfS)}_{#1}}
\newcommand{\alphas}[1]{\alpha_1,\dots,\alpha_{#1}}
\newcommand{\colMulVec}{\bb}
\newcommand{\betalt}{\beta_{l,t}}
\newcommand{\betal}[1]{\beta_{l,#1}}
\newcommand{\Mod}{\ensuremath{\ \mathrm{mod}\ }}
\newcommand{\modns}{\ensuremath{\mathrm{mod}}} 
\newcommand{\modstar}{\ensuremath{\mathrm{mod}^*}}
\newcommand{\modstarq}{\ (\mathrm{mod}^*\ q)}
\DeclareMathOperator{\gap}{gap_2}
\newcommand{\eps}{\varepsilon}
\newcommand{\quadbinom}[2]{\genfrac{[}{]}{0pt}{}{#1}{#2}}
\DeclarePairedDelimiter\set{\{}{\}}
\DeclarePairedDelimiter\parenv{\lparen}{\rparen}
\newcommand{\PindSet}{\Omega}
\newcommand{\locSet}{\cL}
\newcommand{\rootSet}{\cZ}
\newcommand{\zeroSet}{Z}
\newcommand{\FrobautPolyt}[2]{\Frobaut^{#2}(#1)}
\newcommand{\rowpoly}{f}
\newcommand{\priEle}{\gamma}
\newcommand{\indMap}{\varphi}
\newcommand{\indBlock}{l}
\newcommand{\evapt}[1]{\hat{#1}}
\newcommand{\numRows}{s}
\newcommand{\subRows}{\nu}
\newcommand{\numMulPolyVar}{n}
\newcommand{\mulVarRng}{R_\numMulPolyVar}
\newcommand{\fZt}{\mathsf{f}}
\newcommand{\tfzt}{\tau}
\newcommand{\Skewnk}{\cS_{n,k}}
\newcommand{\stepone}{\emph{(Step 1) }}
\newcommand{\steptwo}{\emph{(Step 2) }}
\newcommand{\matProne}{\emph{(I)}} 
\newcommand{\matPrtwo}{\emph{(II)}}
\newcommand{\FrobPolysn}{\mulVarRng[\SkewVar;\Frobaut]}
\newcommand{\up}{u}
\newcommand{\upc}{u}
\newcommand{\vp}{v}
\newcommand{\vpc}{v}
\newcommand{\mincut}{w}
\newcommand{\intOrd}{s}
\newcommand{\GRS}{\mathsf{GRS}}
\newcommand{\GRSp}{\GRS_{\balpha,\bv}^d}
\newcommand{\GRSallp}{\mathbb{G}_{\balpha}^d}
\newcommand{\ALTallp}{\mathbb{A}_{\balpha}^d}
\newcommand{\code}{\cC}
\newcommand{\kopt}{k_q^{\mathsf{opt}}}
\newcommand{\dGoppa}{d_{\mathsf{Goppa}}}
\newcommand{\kGoppa}{k_{\mathsf{Goppa}}}
\newcommand{\EB}[3]{\mathbb{E}_{#1}^{(#2,#3)}}
\newcommand{\Ebbad}[1]{\mathbb{E}^{#1}_{\mathsf{bad}}}
\newcommand{\vLambda}{\boldsymbol{\Lambda}}
\newcommand{\supbrac}[2]{\ensuremath{#1^{(#2)}}}
\newcommand{\Pmisc}{P_{\mathsf{misc}}}
\newcommand{\Psuc}{P_{\mathsf{suc}}}
\newcommand{\Pfail}{P_{\mathsf{fail}}}
\newcommand{\tmax}{\ensuremath{t_{\max}}}
\newcommand{\tInt}{\ensuremath{t_{\mathsf{Int}}}}
\newcommand{\labelRS}{\mathsf{L.RS}}
\newcommand{\labelMain}{\mathsf{L.A}}
\newcommand{\labelSingleton}{\mathsf{L.A1}}
\newcommand{\labelLz}{\mathsf{L.A2}}
\newcommand{\labelLarge}{\mathsf{L.T}}
\newcommand{\labelMisc}{\mathsf{M}}
\newcommand{\labelLower}{\mathsf{U}}
\newcommand{\labelSim}{\mathsf{SIM}}
\title{Channel Coding based on Skew Polynomials\\[3pt] and Multivariate Polynomials}
\author{Hedongliang Liu}
\date{}
\begin{document}
\frontmatter

\chapter*{Abstract}

The exponential growth of data generated nowadays has created a high demand for novel solutions to increase efficiency in communication networks and the reliability of large-scale storage systems.
Error-correcting codes with related properties have been studied intensively in recent years.
Error correction is also essential for the development of quantum computers that can run useful algorithms with negligible miscalculation rate.


This dissertation considers new constructions and decoding approaches for error-correcting codes based on non-conventional polynomials, with the objective of providing new coding solutions to the applications mentioned above.

With skew polynomials, we construct codes that are dual-containing,
which is a desired property of quantum error-correcting codes.
By considering evaluation codes based on skew polynomials, a condition on the existence of optimal support-constrained codes is derived and an application of such codes in the distributed multi-source networks is proposed. For a class of multicast networks, the advantage of vector network coding compared to scalar network coding is investigated.

Multivariate polynomials have been attracting increasing interest in constructing codes with repair capabilities by accessing only a small amount of available symbols, which is required to build failure-resistant distributed storage systems. A new class of bivariate evaluation codes and their local recovery capability are studied. Interestingly, the well-known Reed-Solomon codes are used in a class of locally recoverable codes with \emph{availability} (multiple disjoint recovery sets) via subspace design.

Aside from new constructions, decoding approaches are considered in order to
increase the error correction capability in the case where the code is fixed.
In particular, new lower and upper bounds on the success probability of joint decoding interleaved \emph{alternant} codes by a syndrome-based decoder are derived, where alternant codes are an important class of algebraic codes containing Goppa codes, BCH codes and Reed-Muller codes as sub-classes.


\chapter*{Acknowledgments}
This dissertation is based on the research I conducted during my doctoral studies at the Institute for Communication Engineering (ICE) at TUM, within the Coding and Cryptography group led by Antonia Wachter-Zeh. I am deeply grateful to have had the opportunity to be part of such an inspiring and motivating research environment, and I wish to express my heartfelt thanks to everyone who contributed to this incredible journey.

First and foremost, I would like to express my deepest gratitude to my advisor, Antonia, for her exceptional supervision throughout my doctoral training. Antonia has provided me with the freedom to explore a wide range of research topics, fostering my scientific curiosity. Her passion for initiating new collaborations and her insightful guidance have been instrumental in my growth, both academically and personally. I am especially thankful for her continued support from my Master’s studies through to my PhD, her encouragement during challenging moments, and her exemplary role as a researcher, teacher, and leader.

I would also like to extend my sincere appreciation to Moshe Schwartz and Hengjia Wei for their invaluable guidance and collaboration throughout our collaborative projects. Working with them has been an enriching experience, and their knowledge and insights have been vital to my research. I am particular grateful for the warm hospitality from and the fruitful discussions with Moshe, Hengjia and Han Cai during my visit to Ben-Gurion University of Negev, as well as Felix Ulmer, Pierre Loidreau and Delphine Boucher during my stay in Université de Rennes 1.

I want to thank Sven Puchinger for his mentorship and all his insightful inputs through our collaborations in the early stage of my PhD.
I would also like to thank Felice Manganiello for serving as my second examiner, and Wolfgang Kellerer for chairing the examination committee.

Over the past four years, I have had the privilege of collaborating with many talented researchers, and these joint efforts have been both joyful and intellectually rewarding. In particular, I would like to thank my co-authors: Alessandro Neri, Alexander Zeh, Anmoal Porwal, Antonia Wachter-Zeh, Chih-Chang Huang, Cornelia Ott, Frank Kschischang, Felix Ulmer, Georg Maringer, Hannes Bartz, Hengjia Wei, Hugo Sauerbier-Couvée, Ilya Vorobyev, Johan Rosenkilde, Julian Renner, Lukas Holzbaur, Marvin Xhemrishi, Moshe Schwartz, Nikita Polianskii, Rawad Bitar, Sabine Pircher, Sven Puchinger, Tim Janz, Thomas Jerkovits, Violetta Weger, and Volodya Sidorenko. Collaborating with all of you has expanded my knowledge in many meaningful ways.

I am also grateful to Violetta Weger and Stefan Ritterhoff for their help in proofreading parts of this dissertation.

To my office mates, Lorenz Welter, Marvin Xhemrishi, Sven Puchinger, Sabine Pircher, and Stefan Ritterhoff, thank you for the camaraderie and the positive atmosphere we shared. I would also like to acknowledge all members of the institute for creating such a welcoming and vibrate environment. Many thanks to the professors and organizers of events like JWCC, doctoral seminars, and the Instituteausflug -- these events have enriched my experience and fostered a spirit of collaboration. Outside of work, I will always cherish the memories of after-work gatherings, board games and weekend brunches, which added balance to my PhD life.

To my friends Hongdou, Shelton, Johnathan, Tobit, Keyue, Yanqin, and many others in Munich, thank you for pulling me out of the office and home-office from time to time, and for the unforgettable trips we shared. Han and Lisa, your constant support and generous hospitality during my visits to France and Austria have meant so much to me. Chen, Shiying, and Xuanxuan, thank you for our long-distance calls, deep conversations, and the warm welcomes whenever I returned to China—your friendship has been a source of strength throughout this journey.

Lastly, I am deeply thankful to my family, especially my mom, for her unwavering love, support, and care, and my dad, for always being my role model. Their dedication to my education and constant encouragement have been the foundation for everything I have accomplished.
\\[1em]

\noindent
\emph{Hedongliang Liu}\\
Munich, September 2024

\setcounter{tocdepth}{2}
\tableofcontents
\pagebreak
\chapter*{Nomenclature}
\section*{Sets, Rings and Fields}%
\begin{longtable}{p{4cm}|p{\linewidth-4cm}}
  $[a,b]$ & Set of integers $\{i \ | \ a\leq i\leq b\}$\\
  $[b]$ & Set of integers $\{i \ | \ 1\leq i\leq b\}$\\

  $\ring$ or $\polyRing$ & Ring \\
  $\IntRing_p$ & Integer ring of order $p$\\
  $\field$ & Field (may be infinite) \\
  $\F$ (or $\Fq$) & Finite field (with $q$ elements)\\
  $\Fqm$ & Extension field with $q^m$ elements\\

\end{longtable}

\section*{ Vectors, Matrices and Vector Spaces}%
\begin{longtable}{p{4cm}|p{\linewidth-4cm}}
  $\ba$ & Vector\\
  $\supp(\ba)$ & Set of indices of nonzero positions of $\ba$\\
  $\ba|_{\cI}$ & Vector $\ba$ restricted to the positions indexed by $\cI$\\
  $\ba\star \bb$ & Entry-wise multiplication of vectors $\ba$ and $\bb$\\
  $\diag(\ba)$ & Diagonal matrix with the entries of $\ba$ on its diagonal\\
  $\bA$ & Matrix\\
  $\bA^\top$ & Transpose of the matrix $\bA$\\
  $\supp(\bA)$ & Set of indices of nonzero columns of $\bA$\\
  $\bA|_{\cI}$ & Matrix $\bA$ restricted to the columns indexed by $\cI$\\
  $\ba_i$ & The $i$-th row of $\bA$\\ 
  $\myspan{\bA}$ & Row span of $\bA$\\
  $\cG_q(n,k)$ & Grassmannian of dimension $k$ of $\Fq^n$ (i.e., the set of all $k$-dimensional subspaces of $\Fq^n$)\\
\end{longtable}

\section*{Codes}%
\begin{longtable}{p{4cm}|p{\linewidth-4cm}}
  $\cC$, $\cC^\perp$  & Code (a set of vectors/matrices), Dual code\\
  $[n,k]_q$ & A $\Fq$-linear code of length $n$ and dimension $k$\\
  $[n,k,d]_q$ & A $\Fq$-linear code of length $n$, dimension $k$ and minimum distance $d$\\ 
  $\dH$ & Minimum Hamming distance\\
  $\dR$ & Minimum rank distance\\
  $\dSR{\bn_\ell}$ & Minimum sum-rank distance w.r.t.~an ordered partition $\bn_\ell$ of $n$\\
  $\bG$ & Generator matrix of a code\\
  $\bH$ & Parity-check matrix of a code\\

\end{longtable}

\section*{Polynomials}
\begin{longtable}{p{4cm}|p{\linewidth-4cm}}
  $\ring[x]$ or $\F[x]$ & Univariate polynomial ring in $x$ over a ring $\ring$ or a finite field $\F$\\
  $\polyRing_n$ & Multivariate polynomial ring in $n$ variables\\
  $I=\myspan{f_1,\dots, f_s}$
  & Ideal generated by $f_1,\dots, f_s$\\
  $\Endom$ & Endomorphism \\
  $\Deriv$ & $\theta$-derivation\\
  $\RingSkewPolys$ & Skew polynomial ring in variable $\SkewVar$ over a ring $\ring$\\
  $\gcrd$ & Greatest common right divisor\\
  $\lclm$ & Least common left multiple\\
\end{longtable}


\section*{Abbreviations}
\begin{longtable}{p{4cm}|p{\linewidth-4cm}}
  AAD & Almost affinely disjoint \\
  AS & Almost sparse \\
  BCH & Bose--{Chaudhuri}--Hocquenghem\\
  GRS & Generalized Reed--Solomon\\
  i.i.d. & Independently and identically distributed\\ 
  LRS & Linearized Reed-Solomon\\
  MDS & Maximum distance separable\\
  MRD & Maximum rank distance\\
  MSRD & Maximum sum-rank distance\\
  QEC & Quantum error-correcting\\
  QLRS & Quadratic lifted Reed-Solomon codes\\
  RM & Reed--Muller\\
  RS & Reed--Solomon\\
\end{longtable}



\listoffigures
\listoftables

\mainmatter%


\chapter{Motivation and Overview}
\label{chap:overview}

Channel coding
originates from the seminal work by Shannon \cite{Shannon}, which laid the mathematical foundation of \emph{reliable communication in the presence of noise}.
The channel coding theorem by Shannon shows that reliable communication is achievable as long as the code rate is below the capacity of the (noisy) channel.
However, the proof of this result is non-constructive and focuses on the asymptotic behavior of error-correcting codes in a probabilistic setting.
On the contrary, the work by Hamming \cite{Hamming-1950} around the similar time extracted the combinatorial basis for the theory of error-correcting codes.
Their works are deeply intertwined and perfectly complementary.
Due to the clear difference between the probabilistic, asymptotic viewpoint of Shannon and the combinatorial, constructive perspective of Hamming, prosperous studies following their footprints are growing into two main respective areas\footnote{We refer the interested reader to \cite{slepian1974key} and \cite{berlekamp1974key} for the influential papers in the development of the respective areas.}: \emph{information theory} and \emph{coding theory}.
The former focuses on characterizing the capacity (i.e., asymptotically achievable code rate) for various channel models with different statistical behaviors, while the cornerstone of the latter is the finite behavior of codes for scenarios where error correction is needed.

Many well-known and widely used codes are based on polynomials.
For instance, Reed-Muller codes, used in deep-space communication~\cite{massey1992deep}, wireless communications~\cite{arikan2008performance,mondelli2014polar} and probabilistic checkable proofs in computational complexity theory~\cite{arora1997improved,sudan1999pseudorandom}, can be described as low-degree multivariate polynomials.
BCH codes, used in satellite communication~\cite{cheung1988phobos} and solid-state drives~\cite{micheloni2013bch}, are suitable for implementations on small and low-power hardwares because of their underlying polynomial structure.
Nowadays, error correction is not only used for communications, but also in abundant scenarios ranging from digital data storage in the daily life to the frontier research on quantum computing.

As the exponential growth of data generated and exchanged nowadays, large-scale distributed storage systems are needed to store vast amounts of data.
The main goal of such systems is to guarantee the integrity of the stored data, i.e., to protect the data from loss even if some storage disks are defective. Instead of simple replication, several distributed storage systems have utilized error-correcting codes to provide reliable services,
e.g., Facebook's f4 storage system~\cite{muralidhar2014f4}, Baidu's Atlas Cloud Storage~\cite{lai2015atlas}, Hadoop~\cite{hadoop2023hdfs} and Backblaze Vaults cloud storage~\cite{beach2019backblaze} use Reed-Solomon codes.

  In quantum computing, a \emph{qubit} is the basic unit of quantum information that can carry richer states beyond just $0$ and $1$.
  The challenge is that the qubits are so sensitive that even stray light or slight temperature change can cause errors~\cite{chuang1995quantum}. The state-of-the-art quantum processors typically have error rates around $10^{-3}$ per interaction between \emph{physical} qubits~\cite{foxen2020demonstrating,wu2021strong}, which is far beyond the error rate required to run useful algorithms.
  Quantum error-correcting (QEC) codes are proposed to suppress error rates of calculation by constructing \emph{logical} qubits, where each logical qubit is composed of multiple physical qubits (i.e., by adding redundancy to reduce the error rate per logic operation).
  \emph{Surface codes}~\cite{kitaev2003fault} have been thoroughly studied for QEC architectures and have been demonstrated in small examples by teams at IBM~\cite{chen2022calibrated,sundaresan2023demonstrating} and Google Quantum AI~\cite{google2023suppressing}.
  However, surface codes have the drawback that they require too many physical qubits, possibly 200 million qubits for problems of interest, which makes them impractical due to the cost and complexity.
  Recently, low-density-parity-check (LDPC) codes were proposed as a promising candidate for QEC codes as they feature a more than ten-fold reduction in the number of physical qubits compared to surface codes under similar error rate level~\cite{bravyi2023high}. Advancements in finding codes with better code rate and fast decoding algorithms suitable for quantum circuits are still highly demanded.

This dissertation intends to provide new constructions from non-conventional polynomials and decoding approaches for error-correcting codes with the desired properties in the aforementioned applications.
The structure of this dissertation is as follows.

\textbf{\cref{chap:intro_polys}} provides the basics of the polynomials and the metrics used in the remaining chapters.
We first give a brief introduction of multivariate polynomials and present a powerful tool, Gr\"obner basis, for solving polynomial equation systems.
We then introduce skew polynomials and their properties.
Finally, we cover the Hamming, the rank and the sum-rank metrics. 

In \textbf{\cref{chap:mod_ring}} we construct \emph{dual-containing} codes over rings based on skew polynomials. We first define \emph{$(\Endom,\Deriv)$-polycyclic codes} (in short, $(\Endom,\Deriv)$-codes)
and derive a parity-check matrix of this class of codes within the framework of skew polynomials.
Based on the properties of skew polynomials and dual-containing codes, we develop an algorithm to compute all Euclidean-/Hermitian-dual-containing $(\Endom,\Deriv)$-codes constructed from skew polynomials
and apply this algorithm to several rings $\ring$ of order 4.
Moreover, we give an algorithm to test whether the dual code is also a $(\Endom,\Deriv)$-code and apply it to the resulting dual-containing codes found by the previous algorithm.

\textbf{\cref{chap:eva_skew}} is devoted to a class of evaluation codes of skew polynomials, \emph{linearized Reed-Solomon (LRS)} codes, and network coding. LRS codes are \emph{maximum sum-rank distance (MSRD)} codes.
Motivated by the practical and theoretical interest in support-constrained codes,
we derive a necessary and sufficient condition on the existence of an MSRD code fulfilling certain support constraints and give an upper bound on the field size to construct such a code.
With the help of the condition, we develop a scheme to design \emph{distributed LRS codes}
for multi-source networks.
The second focus of this chapter is the advantage of vector network coding versus scalar network coding for a family of multicast networks, \emph{generalized combination networks}.
The task of this multicast network coding problem is to find the coding coefficients of the relay nodes at the middle layer of the network, so that all the receivers that connect to a fixed number of middle layer nodes can decode all the messages.
The \emph{solution} of such a network is the set of coding coefficients at each relay nodes.
We investigate the advantage
by bounding the gap between the minimum required alphabet size of the scalar solutions and the vector solutions.

\textbf{\cref{chap:eva_multivar}} deals with codes with local properties constructed from multivariate polynomials.
We first propose a class of bivariate evaluation codes, so called \emph{quadratic lifted Reed-Solomon (QLRS)} codes, where the codeword symbols whose coordinates lie on a quadratic curve form a local recovery set, so that any missing symbol in the set can be recovered within the set.
We study the dimension and minimum Hamming distance of the QLRS codes and compare them to other multivariate evaluation codes, \emph{linearized Reed-Solomon codes}, in terms of the performance in local recovery.
As the second part of this chapter, we investigate an \emph{almost affinely disjoint} (AAD) family of subspaces which is motivated by \emph{batch} codes, a class of locally recoverable codes with availability.
The subspaces in an AAD family form a partial spread where any affine transformation of any subspace in the family intersects with only a few other subspaces in the family.
We give a construction for the AAD family using the best-known evaluation codes -- Reed-Solomon codes.
Aside from the explicit construction, we also provide upper and lower bounds on the cardinality of this family.

\textbf{\cref{chap:dec_eva}} concerns joint decoding of \emph{interleaved} evaluation codes. A codeword of an interleaved code can be seen as $\intOrd$ parallel codewords from linear codes. When $\intOrd$ additive errors have a common support with restricted size, we can decode beyond half the minimum Hamming distance of the code with high probability.
\emph{Alternant codes} are subfield subcodes of Reed-Solomon. They contain Goppa codes and BCH codes as sub-classes.
We apply the
Schmidt-Sidorenko-Bossert joint decoding algorithm, which is known for decoding interleaved Reed-Solomon codes, to interleaved alternant codes, and derive a necessary and sufficient condition such that this algorithm succeeds.
Based on this condition, we derive lower and upper bounds on the success probability of decoding interleaved alternant codes by the Schmidt-Sidorenko-Bossert decoder.
Moreover, we briefly summarize the results on joint decoding of interleaved \emph{generalized Goppa codes} and on improvements in decoding radius by utilizing list decoding for interleaved alternant codes.




\chapter{Introduction to Codes based on Polynomials}
\label{chap:intro_polys}

\noindent

Polynomials were firstly considered for error control by David E.~Muller \cite{muller1954application} for simplifying switching circuits with multiple outputs via \emph{polynomial representations} in \emph{Boolean algebra} and by Irving S.~Reed \cite{Reed1954AClass}, who exhibited the ability of Muller's polynomial codes to correct multiple errors
and proposed the first efficient decoding algorithm.
Reed-Muller codes can be described as evaluations of low-degree multivariate polynomials.
Their works not only constructed one of the oldest classes of codes
that have been extensively studied, but also brought some preliminary indications about a large body of knowledge about finite algebraic structures (rings, fields, vector spaces) that could be used for error correction.
Since then, various well-known code constructions based on polynomials have appeared.
For instance, Reed-Solomon codes \cite{reed1960polynomial} can be seen as a set of low-degree univariate polynomials and 
BCH codes \cite{Hocquenghem_1959,Bose_RayChaudhuri_1960} can be seen as principle ideals in a quotient polynomial ring.

This chapter gives an introduction to codes constructed from polynomials.
\cref{sec:notations} provides basic notations used in this thesis.
\cref{sec:mul-var-polys,sec:skew-polys} contain the basics on multivariate polynomials and skew polynomials that concern most of this thesis.
\cref{sec:block-codes} provides two methods of constructing linear block codes from polynomials.
In \cref{sec:metrics}, we present three metrics for measuring the error-correction capability of a code.

\section{Basic Notations}
\label{sec:notations}

Denote by $[a,b]$ the set of integers $\{a, a+1,\dots, b-1, b\}$, and $[b]:=[1,b]$. 
Let $\bbN$ be the set of nonnegative integers.
For any set $\cA$, denote by $\cA^*\defeq\cA\setminus\set*{0}$ the set of all the nonzero elements in $\cA$.
A ring $\cA$ is \emph{unitary} if there exists $1\in \ring^*$, such that $1\cdot a=a\cdot 1=a, \forall a\in \ring$. A ring $\cA$ is \emph{commutative} if $ab=ba, \forall a,b \in\ring$.
Denote by $\F$ a finite field, by $\F_q$ a finite field of size $q$, and by $\Fqm$ the extension field over $\Fq$ of extension degree $m$.
The integer ring of size $q$ is denoted by $\mathbb{Z}_q$.
Note that for a prime $p$, $\bbZ_p=\F_p$.
Denote by $[n,k]_{q}$ a linear block code of length $n$ and dimension $k$ over an alphabet of size $q$.
If the minimum distance $d$ of the code is also of importance, we denote the code by $[n,k,d]_q$.

Given two vectors $\ba=(a_1,\dots,a_n),\bb=(b_1,\dots, b_n)$, we denote the entry-wise multiplication of $\ba$ and $\bb$ by $\ba\star\bb\defeq (a_1b_1, a_2b_2, \dots, a_nb_n)$.
For a vector $\ba$ of length $n$, 
we denote by $\supp(\ba)$ the set of indices of the nonzero entries of $\ba$ and by $\diag(\ba)$ the $n\times n$ diagonal matrix with the entries of $\ba$ on its diagonal.
Given a set $\cE\subseteq[n]$, we denote by $\ba|_{\cE}$ the restriction of $\ba$ to the entries indexed by the set $\cE$.
For an $m\times n$ matrix $\bE$, we denote by $\supp(\bE)$ the set of indices of the nonzero columns of $\bE$ and by $\be_{i}$ the $i$-the row of $\bE$.
Given a set $\cE\subseteq[n]$, we denote by $\bE|_{\cE}$ the restriction of $\bE$ to the columns indexed by $\cE$.

Denote by $\cG_q(n,k)$ the \emph{Grassmannian} of dimension $k$, which is a set of all $k$-dimensional subspaces of $\bbF_q^n$. The cardinality of $\cG_q(n,k)$ is the well-known \emph{$q$-binomial coefficient}:
\begin{equation*}
  |\cG_q(n,k)|=\quadbinom{n}{k}_q \defeq \prod\limits_{i=0}^{k-1} \frac{q^n-q^i}{q^k-q^i}=\prod\limits_{i=0}^{k-1} \frac{q^{n-i}-1}{q^{k-i}-1}\ .
\end{equation*}

Given a ring $\ring$, we denote by $\ring[x]$ the univariate commutative polynomial ring in variable $x$ with coefficients from $\ring$.
The \emph{degree} of a nonzero polynomial
$f=\sum_{i\in\bbN}a_ix^i\in\ring[x]$
is $\deg(f)\defeq \max \set*{i\in\bbN\ |\ a_{i}\neq 0}$.
We use the convention that the degree of a zero polynomial is defined as $-\infty$.
The leading coefficient of $f$ is denoted by $\lc(f)$.
A polynomial $f\in\ring[x]$ is \emph{monic} if $\lc(f)=1$.
For $g,f\in\ring[x]$, we denote by $g\divides f$ if $g$ divides $f$, by $\lcm(g,f)$ the least common multiplier of $g$ and $f$, and by $\gcd(g,f)$ the greatest common divisor of $g$ and $f$.

Throughout the thesis, the indices
start from $1$.
For convenience, the indices of the coefficients of polynomials start from $0$. Hence, for vectors associated with polynomials, the indices of their entries are corresponding to the polynomials, i.e., the coefficient vector $(a_0,a_1,\dots,a_d)$ corresponds to the polynomial $f=\sum_{i=0}^da_ix^i$.

\section{Multivariate Polynomials}
\label{sec:mul-var-polys}
In this thesis, we use multivariate polynomials in a wide range. Hence, we introduce the basics of multivariate polynomials and a powerful tool -- \emph{Gr\"obner bases} -- in solving polynomial equations in this section.

Given a field $\field$ (may be infinite), we denote by $\polyRing_n=\field[x_1,\dots,x_n]$ the commutative polynomial ring in $n$ variables over $\field$.
We associate a vector $\bd = (d_1,\dots, d_n)\in\bbN^n$ to the exponents of a monomial by
\begin{align*}
  \bx^{\bd} = x_1^{d_1}x_2^{d_2}\cdots x_n^{d_n} \in\polyRing_n\ .
\end{align*}

\begin{definition}[Monomial order]
  A \emph{monomial order} in $\polyRing_n=\field[x_1,\dots, x_n]$ is a relation $\prec$ on $\bbN^n$ such that
  \begin{itemize}
  \item for all $\ba,\bb\in\bbN^n$, either $\ba=\bb$ or $\ba\prec\bb$ or $\bb\prec\ba$,
  \item for all $\ba,\bb,\bc \in\bbN^n$, $\ba\prec\bb \implies \ba+\bc \prec \bb+\bc$,
  \item for all $\ba\in\bbN^n$, $\0\prec\ba$ or $\0=\ba$.
  \end{itemize}
\end{definition}
The following orders are monomial orders \cite[Theorem 21.6]{von2013modern}:
\begin{itemize}
\item \emph{Lexicographic order:}\label{item:lexicographic-order}
  \begin{align*}
\ba\prec_{\lex} \bb \iff \textrm{ the leftmost nonzero entry in } \ba-\bb \textrm{ is negative.}
  \end{align*}
\item \emph{Graded lexicographic order:}
  \begin{align*}
    \ba \prec_{\grlex} \bb \iff \sum_{i=1}^n a_i < \sum_{i=1}^nb_i \textrm{ or } \parenv*{ \sum_{i=1}^n a_i = \sum_{i=1}^nb_i \textrm{ and } \ba \prec_{\lex} \bb}\ .
  \end{align*}
\item \emph{Graded reverse lexicographic order:}
  \begin{align*}
    \ba \prec_{\grevlex} \bb \iff &\sum_{i=1}^n a_i < \sum_{i=1}^nb_i \textrm{ or } \big( \sum_{i=1}^n a_i = \sum_{i=1}^nb_i \textrm{ and the rightmost}\\
    &\textrm{nonzero entry in }\ba-\bb\in\bbZ^n \textrm{ is positive}\big).
  \end{align*}
\end{itemize}

Since there are multiple variables in a multivariate polynomial, the properties of a polynomial (such as ``degree'') are different from those of a univariate polynomial. We give the formal definitions of these properties in the following.
\begin{definition}
  Let $\field$ be a field, $\polyRing_n=\field[x_1,\dots,x_n]$ be a polynomial ring, $f=\sum_{\bd\in\bbN^n}c_{\bd}\bx^{\bd} \in\polyRing_n$ be a nonzero polynomial with all $c_{\bd}\in\field$, $\bd\in\bbN^n$,
  and $\prec$ be a monomial order.
  \begin{itemize}
  \item Each $c_{\bd}\bx^{\bd}$ with $c_{\bd}\neq 0$ is a \emph{term} of $f$.
  \item The \emph{(total) degree} of $f$ is $\deg(f) \defeq \max_{\bd\in\bbN^n}\set*{\sum_{i=1}^nd_i \ |\ c_{\bd}\neq 0}\in\bbN$.
  \item The \emph{multidegree} of $f$ is $\mdeg(f) \defeq \max_{\prec}\set*{\ \bd\ |\ c_{\bd}\neq 0}\in\bbN^n$, where $\max_{\prec}$ is the maximum with respect to $\prec$.
  \item The \emph{$x_i$-degree} of $f$ is $\deg_{x_i} (f) \defeq \max_{\bd\in\bbN^n}\set*{\ d_i\ |\ c_{\bd}\neq 0}\in\bbN$.
  \item The \emph{leading coefficient} of $f$ is $\lc(f)\defeq c_{\mdeg(f)}\in\field^*$.
  \item The \emph{leading monomial} of $f$ is $\lm(f) \defeq \bx^{\mdeg(f)}\in\polyRing$.
  \item The \emph{leading term} of $f$ is $\lt(f) \defeq \lc(f)\cdot \lm(f)\in\polyRing$.
  \end{itemize}
\end{definition}
Addition, subtraction, multiplication and division between two polynomials in $\polyRing_n$ follow naturally from univariate polynomials.
We present in \cref{algo:Multivariate-division} a division algorithm with multiple divisors, which is used in Buchberger's algorithm (\cref{algo:Buchberger}) presented later to compute a Gr\"obner basis.
\begin{algorithm}[htb!]
  \caption{Multivariate division algorithm (cf.~\cite[Algorithm 21.11]{von2013modern})}
  \label{algo:Multivariate-division}
  \KwIn{Nonzero polynomials $f, f_1,\dots,f_s\in\polyRing_n=\field[x_1,\dots,x_n]$ where $\field$ is a field; a monomial order $\prec$ on $\polyRing_n$.}
  \KwOut{Quotients $q_1,\dots,q_s\in\polyRing_n$ and remainder $r\in\polyRing_n$ such that $f= q_1f_1+\dots+q_sf_s+r$ and $\forall i\in[s]$, $\lt(f_i)\not\divides r$.}
  $r\gets 0$, $p\gets f$, $q_i\gets 0, \forall i\in[s]$\;
  \While{$p\neq 0$}{
    \If{for some $i\in[s]$, $\lt(f_i) \divides\lt(p)$ \label{step:fi-divides-f}}{
      $q_i\gets q_i+\tfrac{\lt(p)}{\lt(f_i)}$, $p\gets p - \tfrac{\lt(p)}{\lt(f_i)}f_i$
    }
    \Else{
      $r\gets r+\lt(p)$, $p\gets p-\lt(p)$
    }
  }
  \Return{$q_1,\dots, q_s, r$}
\end{algorithm}

The output of this kind of division may not be unique, since there may be more than one $i\in[s]$ such that $\lt(f_i)$ divides $\lt(p)$ at \cref{step:fi-divides-f}.
In \cref{algo:Multivariate-division}, if we always chooses the smallest possible $i$ at \cref{step:fi-divides-f}, then the quotients $q_1,\dots, q_s$ and the remainder $r$, denoted by
\begin{align*}
  (q_1,\dots,q_s)&=f\ \quo\ (f_1,\dots,f_s)\ ,\\
  r&=f\ \rem\ (f_1,\dots,f_s)\ ,
\end{align*}
are uniquely determined.

\begin{theorem}[Combinatorial Nullstellensatz~{\cite[Theorem 1.2]{alon1999combinatorial}}]
  \label{lem:nullstellensatz}
  Let $\field$ be an arbitrary field and $f$ be a nonzero polynomial in $\field[x_1,\dots,x_n]$ of total degree $\deg(f)=\sum_{i=1}^n t_i$, where $t_i\in\bbN, \ \forall i$.
  Then, if $\cX_1,\dots, \cX_\numMulPolyVar$ are subsets of $\field$ with $|\cX_i|> t_i$, then there are $\evapt{x}_1\in \cX_1,\dots, \evapt{x}_n\in \cX_n$ so that
  \begin{align*}
    f(\evapt{x}_1,\dots,\evapt{x}_n)\neq 0\ .
  \end{align*}
\end{theorem}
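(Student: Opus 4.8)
The plan is to argue by contradiction, following Alon's polynomial-reduction proof. Suppose $f$ vanishes at every point of the grid $\cX_1\times\cdots\times\cX_n$; I will contradict the assumption that the coefficient $c_{\bt}$ of the monomial $\bx^{\bt}=x_1^{t_1}\cdots x_n^{t_n}$ in $f$ is nonzero, where $\bt=(t_1,\dots,t_n)$. First I would replace each $\cX_i$ by a subset of size exactly $t_i+1$ (this only strengthens the hypothesis), and for each $i$ form the monic polynomial $g_i(x_i)=\prod_{s\in\cX_i}(x_i-s)\in\field[x_i]$ of degree $t_i+1$, which vanishes on $\cX_i$. Consequently, on the grid every power $x_i^{m}$ with $m\ge t_i+1$ may be rewritten, using $g_i$, as a polynomial of degree $\le t_i$ in $x_i$ without changing any evaluation.

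The core of the proof is to iterate this rewriting on $f$: as long as the current polynomial has a monomial $\bx^{\bd}$ with $d_i>t_i$ for some $i$, subtract the multiple of $x_i^{d_i-t_i-1}\bigl(\prod_{j\neq i}x_j^{d_j}\bigr)g_i(x_i)$ that cancels it. Two bookkeeping facts — whose careful verification is the bulk of the work — drive the argument: (1) each rewrite strictly decreases the total degree of the monomial it removes, because the leading term $x_i^{t_i+1}$ of $g_i$ is exactly what cancels and $g_i-x_i^{t_i+1}$ has degree $\le t_i$; hence the process terminates, and since $\deg(f)=\sum_i t_i$ it never produces a monomial of total degree exceeding $\sum_i t_i$; (2) every rewrite preserves all evaluations on $\cX_1\times\cdots\times\cX_n$. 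The output $\bar f$ therefore satisfies $\deg_{x_i}(\bar f)\le t_i$ for all $i$, still vanishes on the grid, and has the same $\bx^{\bt}$-coefficient $c_{\bt}$ as $f$: the monomial $\bx^{\bt}$ is never rewritten (all its exponents already meet their budgets) and never created (that would require rewriting a monomial of total degree larger than $\sum_i t_i$, which does not occur). So $\bar f\not\equiv 0$.

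To finish, I would invoke the elementary grid lemma: if $h\in\field[x_1,\dots,x_n]$ has $\deg_{x_i}(h)\le\abs{\cX_i}-1$ for every $i$ and vanishes on all of $\cX_1\times\cdots\times\cX_n$, then $h=0$. This follows by induction on $n$: for $n=1$ it is the statement that a nonzero one-variable polynomial of degree $\le m$ over a field has at most $m$ roots; for the step, write $h=\sum_{k=0}^{t_n}h_k(x_1,\dots,x_{n-1})x_n^{k}$, note that fixing $(x_1,\dots,x_{n-1})$ anywhere in $\cX_1\times\cdots\times\cX_{n-1}$ gives a one-variable polynomial in $x_n$ of degree $\le t_n=\abs{\cX_n}-1$ with $\abs{\cX_n}$ roots, hence each $h_k$ vanishes on $\cX_1\times\cdots\times\cX_{n-1}$, and apply the inductive hypothesis. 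Applied to $h=\bar f$ with $\abs{\cX_i}-1=t_i$, this yields $\bar f=0$, contradicting $c_{\bt}\neq 0$. Hence $f$ is nonzero somewhere on $\cX_1\times\cdots\times\cX_n$, which gives the existence of the desired $\evapt{x}_1\in\cX_1,\dots,\evapt{x}_n\in\cX_n$.

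The step I expect to be the main obstacle is confirming that the reduction preserves the coefficient of $\bx^{\bt}$: this is the one point where the equality $\deg(f)=\sum_i t_i$ is genuinely used, rather than just termination bookkeeping or the standard grid lemma.
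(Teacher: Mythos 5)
The paper only cites this result and does not prove it, so there is no in-paper argument to compare against; your proof is essentially Alon's original polynomial-reduction argument and is correct in substance. One thing worth flagging: as printed in the paper, the statement is missing Alon's hypothesis that the coefficient of $x_1^{t_1}\cdots x_n^{t_n}$ in $f$ is nonzero. Without it the claim is false as written: take $f=x_1^2\in\field[x_1,x_2]$ with $t_1=0$, $t_2=2$, $\cX_1$ a singleton, and $\cX_2$ any set of size $3$; all the stated hypotheses hold, $f\neq 0$, yet $f$ vanishes on the whole grid. Your proof silently restores the missing hypothesis at the outset (you set out to contradict $c_{\bt}\neq 0$), so you are in fact proving Alon's actual Theorem 1.2, which is what the paper intends. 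A minor bookkeeping gap in your write-up: ``each rewrite strictly decreases the total degree of the monomial it removes, hence the process terminates'' is not by itself a termination argument, since every rewrite also introduces new monomials. The standard fix is to note that the maximum total degree among monomials that still violate some degree budget never increases, and at a fixed maximum the number of violating monomials of that degree strictly decreases, giving a well-founded lexicographic measure. This is routine and does not affect correctness, but it is worth saying explicitly since it is precisely the place where readers tend to object.
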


\subsection{Ideals and Variety}
\label{sec:ideals-variety}
Let $f_1,\dots, f_s$ be polynomials in $\polyRing_n$. The polynomials generate
an \emph{ideal} in $\polyRing_n$
\begin{align*}
  \ideal = \myspan{f_1,\dots,f_s} \defeq \set*{\left. \sum_{i=1}^s p_if_i\ \right|\ p_i\in\polyRing_n}\ .
\end{align*}
An ideal is \emph{principle} if it is generated by a single element of the ring. For example, $\ideal=\myspan{f_1}\subseteq \polyRing_n$ is a principle ideal.

The \emph{variety} of $\ideal$ is
\begin{align*}
  V(\ideal)\defeq \set*{\bu\in\field^n\ |\ f(\bu)=0,\ \forall f\in \ideal}=\set*{\bu\in\field^n\ |\ f_1(\bu)=\cdots =f_s(\bu)=0}\ .
\end{align*}
We also write $V(f_1,\dots, f_s)$ instead of $V(\myspan{f_1,\dots,f_s})$ for short.
It can be readily seen that the variety $V(\ideal)$ is the set of all solutions to the \emph{system of polynomial equations} $\set*{f_1=0,\dots,f_s=0}$.

Interesting questions about $\ideal$ and $V(\ideal)$ that also concern solving the system of polynomial equations include:
\begin{itemize}
\item How ``big'' is $V(\ideal)$? Is $V(\ideal)\neq \varnothing$?
\item Ideal membership problem: given $f\in\polyRing_n$, is $f\in \ideal$?
\end{itemize}

The famous Hilbert's \emph{Nullstellensatz} \cite{hilbert1893vollen} says the following: if $\field$ is algebraically closed, then, for $f\in\ideal=\myspan{f_1,\dots, f_s}$,
there is an integer $e\in\bbN$ such that $f^e\in\ideal$.
This implies that for any ideal $\ideal$ over an algebraically closed field, the variety $V(\ideal)=\varnothing$ if and only if $1\in \ideal$.

For $n=1$, $\polyRing_1=\field[x]$, the ideal membership problem is easy to check. Let $g = \gcd(f_1,\dots, f_s)$ be the greatest common divisor of $f_1,\dots, f_s$. Then, the ideal $\ideal=\myspan{f_1,\dots, f_s} = \myspan{g}$ \cite[Proposition 1.3.8]{adams2022introduction}. Hence, for any $f\in\polyRing_n$, $f\in\ideal$ if and only if $g \divides f$.
For $n\geq 2$ and $s=1$, the ideal membership problem can be solved by \cref{algo:Multivariate-division}: $f\in\myspan{f_1}$ if and only if $f\ \rem\ f_1=0$. However, this method fails in general for $s\geq 2$.
The next subsection introduces a special type of bases of an ideal where the ideal membership can be easily (only conceptually, not computationally) determined (see \cref{thm:IMbyGB}). These special bases are the analogue to the greatest common divisor for multivariate polynomials.

\subsection{Gr\"obner Bases}
\label{sec:grobBases}
A \emph{Gr\"obner basis} of an ideal $\ideal$ is a special ``basis'' for $\ideal$, in which the questions about $V(\ideal)$ mentioned in \cref{sec:ideals-variety} are easy to answer.
Heisuke Hironaka introduced in \cite{hironaka1964resolution} a special type of basis for polynomial ideals, called ``standard basis''. Bruno Buchberger invented them independently in his dissertation~\cite{buchberger1965algorithm} and named them as Gr\"obner bases after his advisor Wolfgang Gr\"obner.

\begin{definition}[Gr\"obner basis]
  Let $\prec$ be a monomial order  and $\ideal\subseteq \polyRing_n$ be an ideal. A finite set $\GrobBases\subseteq  \ideal$ is a \emph{Gr\"obner basis} for $\ideal$ with respect to $\prec$ if $\myspan{\lt(\GrobBases)}=\myspan{\lt(\ideal)}$, where
  $\lt(\GrobBases)\defeq\set*{\left. \lt(g)\ \right|\ g\in\GrobBases}$ and $\lt(\ideal)\defeq\set*{\left. \lt(g)\ \right|\ g\in\ideal}$.
\end{definition}
For a polynomial ring $\polyRing_n=\field[x_1,\dots,x_n]$, every ideal $\ideal\subseteq \polyRing$ has a Gr\"obner basis \cite[Corollary 21.26]{von2013modern}.
The following theorem shows that given a Gr\"obner basis of an ideal, we can solve the ideal membership problem.
\begin{theorem}[{\cite[Theorem 21.28]{von2013modern}}]
  \label{thm:IMbyGB}
  Let $\GrobBases$ be a Gr\"obner basis for the ideal $\ideal\subseteq \polyRing_n$ with respect to a monomial order $\prec$, and $f\in\polyRing_n$. Then
  \begin{align*}
    f\in \ideal \iff f\ \rem\ \GrobBases = 0\ .
  \end{align*}
\end{theorem}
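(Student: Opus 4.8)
The plan is to prove both implications of the equivalence $f \in \ideal \iff f \rem \GrobBases = 0$ separately, the reverse direction being immediate and the forward direction being where the defining property of a Gröbner basis is really used. First I would observe that by the definition of the remainder produced by \cref{algo:Multivariate-division} (with the convention that the smallest admissible index is chosen), we may write $f = \sum_{g \in \GrobBases} q_g\, g + r$ with $r = f \rem \GrobBases$, and where no term of $r$ is divisible by any $\lt(g)$, $g \in \GrobBases$. If $r = 0$, then $f = \sum_g q_g g$ is manifestly an element of $\ideal = \myspan{\GrobBases}$ (note $\GrobBases \subseteq \ideal$ so $\myspan{\GrobBases} \subseteq \ideal$, and conversely any basis of $\ideal$ generates it), which settles the direction $(\Leftarrow)$.

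For the direction $(\Rightarrow)$, suppose $f \in \ideal$. From the division identity $f = \sum_g q_g g + r$ and $f \in \ideal$, $\sum_g q_g g \in \ideal$, we get $r = f - \sum_g q_g g \in \ideal$ as well. The goal is to force $r = 0$. Assume for contradiction $r \neq 0$. Then $\lt(r) \in \lt(\ideal)$, so $\lt(r) \in \myspan{\lt(\ideal)}$; but since $\GrobBases$ is a Gröbner basis, $\myspan{\lt(\ideal)} = \myspan{\lt(\GrobBases)}$, hence $\lt(r)$ lies in the monomial ideal generated by $\set*{\lt(g) : g \in \GrobBases}$. The key step is the standard fact that a monomial lying in a monomial ideal is divisible by one of its monomial generators: therefore $\lt(g) \divides \lt(r)$ for some $g \in \GrobBases$. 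This contradicts the output guarantee of \cref{algo:Multivariate-division}, which asserts that no term of $r$ — in particular not $\lt(r)$ — is divisible by any $\lt(g)$. Hence $r = 0$, i.e.\ $f \rem \GrobBases = 0$.

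The main obstacle — more a point requiring care than a genuine difficulty — is the lemma that membership of a monomial in a monomial ideal implies divisibility by a generator. One proves this by writing the monomial as a polynomial combination of the generators and comparing monomials: a monomial $\bx^{\bd} = \sum_i p_i \lt(g_i)$ forces, after expanding each $p_i$ into its terms, that $\bx^{\bd}$ equals a term on the right-hand side, which is a monomial multiple of some $\lt(g_i)$. One should also be slightly careful that $r \in \ideal$ genuinely follows (it does, since $\ideal$ is closed under subtraction and under multiplication by ring elements), and that the division algorithm terminates and produces the claimed remainder property — but both of these are already established by \cref{algo:Multivariate-division} and its surrounding discussion, so they may be invoked directly.
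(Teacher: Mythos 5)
The paper does not give its own proof of this theorem — it is stated as a cited result from von zur Gathen and Gerhard~\cite[Theorem~21.28]{von2013modern}. Your proof is correct and is, essentially, the standard textbook proof of this equivalence. The $(\Leftarrow)$ direction via the division identity $f=\sum_g q_g g + r$, and the $(\Rightarrow)$ direction via the chain $r\in\ideal$, $\lt(r)\in\myspan{\lt(\ideal)}=\myspan{\lt(\GrobBases)}$, divisibility of a monomial in a monomial ideal by a monomial generator, and contradiction with the remainder property, is exactly how this is proved in the cited reference (and in Cox--Little--O'Shea). Your handling of the one point that deserves care — the monomial-ideal divisibility lemma — is right, and you also correctly interpret the output guarantee of \cref{algo:Multivariate-division} as saying that \emph{no term} of $r$ is divisible by any $\lt(f_i)$ (the paper's phrasing of this invariant is slightly loose, writing $\lt(f_i)\not\divides r$, but the algorithm only ever moves a term of $p$ into $r$ when that term is divisible by none of the $\lt(f_i)$, so your reading is the intended one). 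No gaps.
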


To present the Buchberger's algorithm that computes a Gr\"obner basis of an ideal, we need the following definition of an $S$-polynomial.
\begin{definition}[$S$-polynomial]
  Let $g,h\in\polyRing_n$, $\ba=\mdeg(g)$, $\bb=\mdeg(h)$, and $\bc= (\max\set*{a_1,b_1},\dots,\max\set*{a_n,b_n})$. The \emph{$S$-polynomial} of $g$ and $h$ is
  \begin{align*}
    S(g,h) = \frac{\bx^{\bc}}{\lt(g)}g - \frac{\bx^{\bc}}{\lt(h)}h\ \in\polyRing_n\ .
  \end{align*}
\end{definition}
The following theorem shows the importance of the $S$-polynomials for computing a Gr\"obner basis.
\begin{theorem}[{\cite[Theorem 21.31]{von2013modern}}]
  A finite set $\GrobBases=\set*{g_1,\dots, g_s}\subseteq\polyRing_n$ is a Gr\"obner basis of the ideal $\myspan{\GrobBases}$ if and only if
  \begin{align*}
    S(g_i,g_j)\ \rem\ (g_1,\dots, g_s) = 0 \textrm{ for } 1\leq i<j\leq s\ .
  \end{align*}
\end{theorem}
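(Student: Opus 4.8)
The plan is to prove the equivalence in both directions, with the forward implication being a one-line consequence of \cref{thm:IMbyGB} and essentially all of the work living in the converse. For the converse I would run the classical ``rewriting via $S$-polynomials'' argument: reduce the problem to showing $\lt(f)\in\myspan{\lt(\GrobBases)}$ for every nonzero ideal element $f$, take a representation of $f$ with minimal top multidegree, and use the hypothesis (via the $S$-polynomials) to show that no cancellation can occur at that top multidegree.

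\textbf{Forward direction.} Suppose $\GrobBases=\set*{g_1,\dots,g_s}$ is a Gr\"obner basis of $\ideal=\myspan{\GrobBases}$. For any $1\leq i<j\leq s$ the element $S(g_i,g_j)$ lies in $\ideal$ by its very definition, so \cref{thm:IMbyGB} immediately forces $S(g_i,g_j)\ \rem\ (g_1,\dots,g_s)=0$. Nothing more is needed.

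\textbf{Converse, setup.} Assume every $S(g_i,g_j)$ reduces to $0$ modulo $(g_1,\dots,g_s)$. To conclude that $\GrobBases$ is a Gr\"obner basis I must show $\myspan{\lt(\ideal)}\subseteq\myspan{\lt(\GrobBases)}$, i.e.\ $\lt(f)\in\myspan{\lt(\GrobBases)}$ for every nonzero $f\in\ideal$ (the reverse inclusion is trivial since $\GrobBases\subseteq\ideal$). Fix such an $f$ and, among all representations $f=\sum_{i=1}^s h_ig_i$ with $h_i\in\polyRing_n$, pick one minimizing $\delta\defeq\max_{\prec}\set*{\mdeg(h_ig_i) : h_i\neq 0}$; this minimum exists because a monomial order well-orders $\bbN^n$. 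Compatibility of $\prec$ with addition gives $\mdeg(f)\preceq\delta$. If $\mdeg(f)=\delta$, then the coefficient of $\bx^\delta$ in $f$ equals $\sum_{i\,:\,\mdeg(h_ig_i)=\delta}\lc(h_i)\lc(g_i)\neq 0$, so $\lt(f)$ is a nonzero scalar multiple of $\bx^\delta$; since $\bx^\delta=\bx^{\mdeg(h_j)}\lm(g_j)$ for any index $j$ attaining the maximum, $\lt(g_j)$ divides $\lt(f)$, hence $\lt(f)\in\myspan{\lt(\GrobBases)}$ and we are done.

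\textbf{Converse, the main obstacle.} It remains to rule out $\mdeg(f)\prec\delta$, which is the crux. Suppose it fails. Isolating the top-degree summands, $\sum_{i\,:\,\mdeg(h_ig_i)=\delta}\lt(h_i)g_i$ has multidegree strictly below $\delta$ (every other contribution already does, and $\mdeg(f)\prec\delta$ by assumption). I would then invoke the standard cancellation lemma: a sum of terms each of multidegree exactly $\delta$ whose total has smaller multidegree can be rewritten as an $\field$-linear combination of monomial multiples $\bx^{\bc}\,S(g_i,g_j)$, each of multidegree $\prec\delta$ -- this is proved by telescoping the leading coefficients, and it is the only place the $S$-polynomial construction enters. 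Next, using the hypothesis, write each such $S(g_i,g_j)=\sum_l a_{ijl}g_l$ with zero remainder; inspecting \cref{algo:Multivariate-division} shows every nonzero $a_{ijl}g_l$ has multidegree $\preceq\mdeg(S(g_i,g_j))\prec\delta$. Substituting these expansions back in and recombining with the untouched lower-order pieces yields a new representation $f=\sum_i\widetilde h_ig_i$ whose associated maximum multidegree is strictly smaller than $\delta$, contradicting minimality. Hence $\mdeg(f)=\delta$, completing the proof. The nontrivial ingredient I expect to have to write out carefully is exactly this cancellation lemma together with the multidegree bookkeeping for the substitution; the rest is formal.
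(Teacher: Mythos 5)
Your proof is correct and is the standard argument for Buchberger's $S$-polynomial criterion: the forward direction is an immediate consequence of the ideal-membership criterion, and the converse proceeds by choosing a representation of $f \in \ideal$ with minimal top multidegree $\delta$, applying the cancellation (syzygy) lemma to the top-degree part when $\mdeg(f) \prec \delta$, and using the standard-representation bound that the division algorithm provides on the quotients to build a smaller representation, contradicting minimality. The paper itself gives no proof of this statement -- it cites it directly from von zur Gathen and Gerhard -- and your argument follows the same route as that source, with the cancellation lemma being the one ingredient you have (reasonably) deferred.
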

We now present a simplified version of Buchberger's algorithm \cite{buchberger1965algorithm} in \cref{algo:Buchberger}.
\begin{algorithm}
  \caption{Buchberger's algorithm for Gr\"obner basis computation (cf.~\cite[Algorithm 21.33]{von2013modern})}
  \label{algo:Buchberger}
  \KwIn{Nonzero polynomials $f_1,\dots,f_s\in\polyRing_n$, and a monomial order $\prec$.}
  \KwOut{A Gr\"obner basis $\GrobBases\subseteq \polyRing_n$ for the ideal $\ideal = \myspan{f_1,\dots, f_s}$ with respect to $\prec$.}
  $\GrobBases\gets \set*{f_1,\dots,f_s}$\;
  \Repeat{
    $\cS\gets \varnothing$\;
    order the elements in $\GrobBases$ as $g_1,\dots, g_t$ according to $\prec$\;
    \ForEach{$i\in[t-1], j\in[i+1,t]$}{
      $r\gets S(g_i,g_j)\ \rem\ (g_1,\dots, g_t)$\tcc*[r]{Apply \cref{algo:Multivariate-division}}
      \If{$r\neq 0$}{
        $\cS\gets \cS\cup \set*{r}$
      }
    }
    \If{$\cS=\varnothing$}{\Return{$\GrobBases$}}
    \Else{$\GrobBases\gets \GrobBases\cup \cS$}
  }
\end{algorithm}

The extended Euclidean algorithm for computing the greatest common divisor (gcd) of univariate polynomials in $\field[x]$ is a special case of Buchberger's algorithm.
A proof of the correctness of \cref{algo:Buchberger} can be found in \cite[Theorem 21.34]{von2013modern}.
In general, the Gr\"obner basis computed by Buchberger's algorithm is neither unique nor of minimal size. However, one can further process the polynomials in $\GrobBases$ to obtain a unique \emph{reduced Gr\"obner basis}, which is defined as follows. Such a unique basis exists for every ideal \cite[Theorem 21.38]{von2013modern}.
\begin{definition}
  A subset $\GrobBases\subseteq \polyRing_n$ is a \emph{minimal} Gr\"obner basis of $I=\myspan{G}$ if it is a Gr\"obner basis for $I$ and for all $g\in\GrobBases$
  \begin{enumerate}
  \item $\lc(g) = 1$ ,
  \item $\lt(g) \not\in \myspan{\lt(\GrobBases\setminus \set*{g})}$ .
  \end{enumerate}
  An element $g$ of a Gr\"obner basis $\GrobBases$ is \emph{reduced with respect to $\GrobBases$} if no monomial of $g$ is in $\myspan{\lt(\GrobBases\setminus\set*{g})}$.
  A minimal Gr\"obner basis $\GrobBases$ of an ideal $I$ is \emph{reduced} if all its elements are reduced with respect to $\GrobBases$.
\end{definition}

To understand the complexity of Buchberger's algorithm, we need to know the maximal total degree of a polynomial occurring during the computation of a Gr\"obner bases and the number of polynomials in the Gr\"obner basis.
The choice of the monomial ordering is critical to these values.
Buchberger investigated in \cite{buchberger1983note} the maximal (total) degree and the number of polynomials occurring in a Gr\"obner basis $\GrobBases$ of the ideal $\ideal=\myspan{f_1,\dots, f_s}$ for a finite set of bivariate polynomials $\cF=\set*{f_1,\dots, f_s}\subseteq \field[x_1,x_2]$.
In the case of $\prec_{\grlex}$, the maximum degree of the polynomials in $G$ is
$2\cdot\max_{f\in\cF}\deg(f)-1$.
In the case of $\prec_{\lex}$, the maximum degree of the polynomials in $G$ is
$\max_{f\in\cF}\deg(f)^2$.
For any valid monomial ordering, the number of polynomials in $\GrobBases$ is $|\GrobBases|=\min_{f\in\cF} \deg(\lt(f))+1$.
From a practical point of view, in the \href{https://magma.maths.usyd.edu.au/magma/handbook/text/1259}{computation of a Gr\"obner basis in Magma} \cite{Magma} for instance, $\prec_{\grevlex}$ is recommended for faster computation while $\prec_{\lex}$ is hard for computation, though it usually presents the most information about the ideal.

The worst-case cost of Buchberger's algorithm is still unknown today.
K\"uhnle and
Mayr \cite{kuhnle1996exponential} presented an algorithm for computing the unique reduced Gr\"obner basis for a given ideal, which requires exponential space. This gives a lower bound on the worst-case cost of Buchberger's algorithm and concludes that finding a reduced Gr\"obner basis is an $\cE\cX\cP\cS\cP\cA\cC\cE$-complete problem (see \cite[Section 25.8]{von2013modern} for the classification of computation complexities).

In this thesis, we use Gr\"obner bases for solving a system of equations. Let $f_1,\dots, f_s$ be polynomials in $\polyRing_n$. The set of solutions to the set of equations $\set*{f_1=0, \dots, f_s=0}$ is the variety $V(f_1,\dots, f_s)$. Let $\GrobBases$ be a Gr\"obner basis of the ideal $\ideal=\myspan{f_1,\dots, f_s}$. It can be shown that the variety $V(\GrobBases) = V(\ideal)$.
In \cref{sec:algo-search-via-GB}, we use the implementation of Gr\"obner bases in \href{http://magma.maths.usyd.edu.au/magma/}{Magma} \cite{Magma} to solve a system of polynomial equations.
In \cref{sec:distributed-LRS}, we use the facilities for multivariate polynomials in \href{https://doc.sagemath.org/html/en/reference/index.html}{SageMath} \cite{sagemath2022} to solve a systems of linear equations.
\section{Skew Polynomials}
\label{sec:skew-polys}
\emph{Skew polynomials} over division rings\footnote{A ring is a division ring if all the nonzero elements have a multiplicative inverse.} are \emph{non-commutative} polynomials that were introduced and studied by {\O}ystein Ore in \cite{ore1933theory}.
The theory of skew polynomials is quite rich and widely investigated in the literature. For instance, the division and factorization properties were studied in \cite{ore1933theory, giesbrecht1998factoring, baumbaugh2016results}.
Evaluation of skew polynomials and sets of roots were first considered by Tsit-Yuen Lam in \cite{lam1986general} and studied in great detail thereafter by Lam and Andr{\'e} Leroy \cite{lam1988algebraic, lam1988vandermonde, leroy1995pseudo, LamLer2004, lam2008wedderburn,leroy2012noncommutative}.
Faster algorithms for skew polynomials have been proposed for factorization and counting the number of factorizations \cite{caruso2012some}, interpolation \cite{liu2014kotter, bartz2022fastKoetter} and multiplication \cite{caruso2017fast,puchinger2018fast}. Properties of multivariate skew polynomial have been studied in \cite{martinez2019evaluation}.

The general definition of skew polynomial rings $\RingSkewPolys$ involves an \emph{endomorphism} $\Endom$ of the base ring $\ring$
and a \emph{derivation} associated with the endomorphism.
We start with the basics for the general definition in \cref{sec:generalSkewRing}, where $\ring$ is a general ring. This definition is mainly used in \cref{chap:mod_ring}.
In \cref{sec:Frob-zero-deri} we restrict our focus to a simpler class of skew polynomial ring $\FrobSkewPolys$ with the \emph{Frobenius automorphism} $\Frobaut(a)=a^q,\forall a\in\Fqm$ and the zero derivation $\Deriv=0$, on which \cref{chap:eva_skew} is based.

\subsection{The General Definition: $\RingSkewPolys$}
\label{sec:generalSkewRing}
Consider a ring $\ring$ with addition $+$ and multiplication $\cdot$ (we may omit the $\cdot$ between two elements for simplicity).
We consider the most general definition, where $\ring$ is not necessarily a division ring.
\begin{definition}[Endomorphism and derivation]
  \label{def:endo-deri}
  An \emph{endomorphism} of a ring $\ring$ is a map $\Endom: \ring\to \subring\subseteq\ring$ such that, for all $a,b\in\ring$,
  \begin{itemize}
  \item $\Endom(a+b) = \Endom(a)+\Endom(b)$,
  \item $\Endom(ab) = \Endom(a)\Endom(b)$.
  \end{itemize}
  A map $\Endom$ is an \emph{automorphism} if $\subring=\ring$.

  A \emph{$\Endom$-derivation} of $\ring$ is a map $\Deriv:\ring\to \subring\subseteq\ring$ such that, for all $a,b\in\ring$
  \begin{itemize}
  \item $\Deriv(a+b)  =\Deriv(a)+\Deriv(b)$,
  \item $\Deriv(ab) = \Deriv(a)b+\Endom(a)\Deriv(b)$.
  \end{itemize}
  A $\Endom$-derivation $\Deriv$ is an \emph{inner} $\Endom$-derivation if there exists $\beta\in \ring$ such that $\Deriv(a)= \beta a - \Endom(a)\beta$ for all $a\in \ring$.
\end{definition}
It follows from the definition that for any endomorphsim $\Endom$ and any $\Endom$-derivation $\Deriv$ of $\ring$, it holds that
\begin{itemize}
\item $\Endom(0) = 0,\ \Endom(1) = 1$,
\item $\Deriv(0) = 0, \ \Deriv(1) = 0$.
\end{itemize}
We denote by $\Id$ the \emph{identity automorphism} $\Endom(a)=a, \forall a\in\ring$.
For ease of notation, we also use the exponential notation $\Endom(a)=a{^\Endom}$ and $\Deriv(a)=a^{\Deriv}$.
\begin{lemma}
  \label{lem:inv-ele-endo}
  If $a\in\ring^*$ is invertible, then $a$ is not a \emph{zero-divisor}\footnote{An elements $b\in\ring^*$ is a zero divisor if $\exists a\in\ring^*$ such that $ab=ba=0$.} and $\Endom(a)$ invertible.
\end{lemma}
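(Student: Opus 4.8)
This is a short computation from the defining properties. Write $b \defeq a^{-1}$ for the two-sided inverse of $a$, so $ab = ba = 1$. The plan is to treat the two claims separately: the non-zero-divisor property follows by cancelling with $b$, and the invertibility of $\Endom(a)$ follows by applying $\Endom$ to the identities $ab = ba = 1$ and using that $\Endom$ is multiplicative with $\Endom(1) = 1$.

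\textbf{Step 1: $a$ is not a zero-divisor.} Suppose $c \in \ring$ satisfies $ac = 0$ (this covers the left half of the zero-divisor condition $ac = ca = 0$). Multiplying on the left by $b$ gives $c = 1\cdot c = (ba)c = b(ac) = b\cdot 0 = 0$. Hence no nonzero $c$ can satisfy $ac = ca = 0$, so by the definition of a zero-divisor recalled in the footnote, $a$ is not a zero-divisor. (One may symmetrically note that $ca = 0$ alone already forces $c = 0$ by multiplying on the right by $b$.)

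\textbf{Step 2: $\Endom(a)$ is invertible.} Apply the ring endomorphism $\Endom$ to $ab = 1$ and to $ba = 1$. Using $\Endom(xy) = \Endom(x)\Endom(y)$ and the identity $\Endom(1) = 1$ recorded just before the lemma, we obtain
\begin{align*}
  \Endom(a)\,\Endom(b) = \Endom(ab) = \Endom(1) = 1, \qquad
  \Endom(b)\,\Endom(a) = \Endom(ba) = \Endom(1) = 1 .
\end{align*}
Thus $\Endom(b) = \Endom(a^{-1})$ is a two-sided inverse of $\Endom(a)$, so $\Endom(a) \in \ring^*$ is invertible with $\Endom(a)^{-1} = \Endom(a^{-1})$.

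\textbf{Main obstacle.} There is essentially none: the only subtlety is being careful that $\ring$ is noncommutative, so both one-sided identities $ab = 1$ and $ba = 1$ must be used, and that "invertible" means possessing a two-sided inverse; both are handled automatically by pushing $\Endom$ through each of the two defining equations.
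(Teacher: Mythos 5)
Your proof is correct and follows essentially the same approach as the paper: cancel by $a^{-1}$ to show $a$ is not a zero-divisor, then push $\Endom$ through the inverse identity to get $\Endom(a^{-1})$ as an inverse of $\Endom(a)$. You are slightly more careful than the paper by applying $\Endom$ to both $ab=1$ and $ba=1$ to exhibit a two-sided inverse, which is the right instinct if $\ring$ were allowed to be noncommutative, but the substance is identical.
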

\begin{proof}
  We first show that any $a\in\ring$ cannot be both invertible and a zero-divisor. Suppose an invertible $a\in\ring$ is a zero-divisor. Let $0\neq b \in\ring$ with $ab=0$. Then $b= (a^{-1} a)b = a^{-1}(ab) = 0$, which is a contradiction.
  According to \cref{def:endo-deri}, $\Endom(a\cdot a^{-1}) = \Endom(a)\cdot\Endom(a^{-1})$. Together with $\Endom(a\cdot a^{-1}) = \Endom(1) = 1$, it can be seen that $\Endom(a)$ is invertible and its inverse is $\Endom(a^{-1})$.
\end{proof}
\begin{definition}[Skew polynomial rings]
  \label{def:skew-poly-ring}
  A \emph{skew polynomial ring} $\RingSkewPolys$ is a set of polynomials
  \begin{align*}
   \RingSkewPolys\defeq \set*{\left. \sum_{i=0}^na_i\SkewVar^i\ \right|\ a_i \in \ring, n\in \bbN}
  \end{align*}
  with addition $+$ as for usual polynomials, and multiplication $\cdot$ following the basic rule
  \begin{align}
    \SkewVar a=\Endom(a)\SkewVar +\Deriv(a), \ \forall\ a \in \ring\ . \label{eq:basic-mul-rule}
  \end{align}
  The multiplication extends to all elements in $\RingSkewPolys$ by associativity and distributivity.
  The \emph{degree} of a nonzero skew polynomial $f = \sum_{i\in\bbN} f_i \SkewVar^{i} \in\RingSkewPolys$ is $\deg f\defeq\max\set*{\ i \ |\ f_i\neq 0}$.
  By convention, the degree of the zero polynomial is $\deg(0)=-\infty$.
\end{definition}
\subsubsection{Multiplication}
Given an endomorphism $\Endom$ of a ring $\ring$, the powers of $\Endom$ are $a^{\Endom^{i+1}}=\Endom^{i+1}(a)\defeq \Endom(\Endom^{i}(a))$, for all $i\in\bbN, a\in\ring$.
Given a $\Endom$-derivation $\Deriv$ of $\ring$ and $a\in\ring$, we denote $a^{\Endom\Deriv}=\Deriv(\Endom(a))$.

For any $h=\sum_{i=0}^dh_i\SkewVar^i$ and $g=\sum_{i=0}^eg_i\SkewVar^i$ in $\RingSkewPolys$, the product of the two polynomials is
\begin{align*}
  h\cdot g = h_dg_e^{\Endom^d}\SkewVar^{d+e} + \parenv*{h_{d-1}g_e^{\Endom^{d-1}}+h_d\parenv*{g_e^{\Endom^{d-1}\Deriv}+g_e^{\Endom^{d-2}\Deriv\Endom}
  +\cdots + g_e^{\Deriv\Endom^{d-1}}}}\SkewVar^{d+e-1} +\cdots \ ,
\end{align*}
where $g_e^{\Endom^{d-2}\Deriv\Endom}=\Endom(\Deriv(g_e^{\Endom^{d-2}}))$ and $g_e^{\Deriv\Endom^{d-1}}=\Endom^{d-1}(\Deriv(g_e))$ according to the notation introduced above.
It can be seen that the explicit expression of the product is quite messy when a nonzero derivation is involved.
For skew polynomials with a zero $\Endom$-derivation, e.g., $h=\sum_{i\in\bbN}h_i\SkewVar^i$ and $g=\sum_{i\in\bbN}g_i\SkewVar^i$ in $\ring[\SkewVar; \Endom]$, the expression of the product is simply
\begin{align}
  \label{eq:prod-skew-polys}
  h\cdot g = \sum_{i\in\bbN}\sum_{j\in\bbN} h_i\Endom^i(g_j) \SkewVar^{i+j} =\sum_{s\in\bbN}\parenv*{\sum_{\substack{i\in\bbN\\i\leq s}}h_i\Endom^{i}(g_{s-i})}\SkewVar^s\ .
\end{align}
\begin{theorem}
  \label{thm:deg-prod-skew}
  For any $h,g\in\RingSkewPolys$, if the leading coefficient $\lc(g)$ is invertible, then $\deg(hg)=\deg(h)+\deg(g)$.
\end{theorem}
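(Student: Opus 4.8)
The plan is to pin down the top-degree coefficient of the product $hg$ and show that invertibility of $\lc(g)$ forces it to be nonzero. Write $h=\sum_{i=0}^{d}h_i\SkewVar^i$ with $h_d=\lc(h)\neq 0$ and $g=\sum_{j=0}^{e}g_j\SkewVar^j$ with $g_e=\lc(g)$ invertible; note that $\lc(g)$ being defined already forces $g\neq 0$, and if $h=0$ the claim is the trivial identity $-\infty=-\infty+\deg(g)$, so we may assume $h\neq 0$ as well.

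First I would record the auxiliary fact that for every $a\in\ring$ and every $i\in\bbN$,
\[
  \SkewVar^i a=\Endom^i(a)\,\SkewVar^i+(\text{terms of degree} <i),
\]
proved by induction on $i$: the case $i=0$ is immediate, and the step rewrites $\SkewVar^{i+1}a=\SkewVar(\Endom^i(a)\SkewVar^i+r)$ with $\deg r<i$, then applies the basic rule \eqref{eq:basic-mul-rule} to $\SkewVar\cdot\Endom^i(a)$ and observes that $\SkewVar r$ and $\Deriv(\Endom^i(a))\SkewVar^i$ both have degree $<i+1$. In particular $\deg(\SkewVar^i a)\leq i$, so every summand of $hg=\sum_{i,j}h_i\SkewVar^i g_j\SkewVar^j$ has degree at most $i+j\leq d+e$, giving $\deg(hg)\leq d+e$. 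Moreover, by the displayed identity $h_i\SkewVar^i g_j\SkewVar^j=h_i\Endom^i(g_j)\SkewVar^{i+j}+(\text{degree} <i+j)$, so the only summand that can carry a nonzero coefficient at $\SkewVar^{d+e}$ is the one with $(i,j)=(d,e)$, and the coefficient of $\SkewVar^{d+e}$ in $hg$ equals $h_d\,\Endom^d(g_e)$.

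It then remains to show $h_d\,\Endom^d(g_e)\neq 0$. Iterating \cref{lem:inv-ele-endo}, invertibility of $g_e$ gives invertibility of $\Endom^d(g_e)$, with inverse $\Endom^d(g_e^{-1})$. If $h_d\,\Endom^d(g_e)=0$, right-multiplying by this inverse yields $h_d=0$, contradicting $h_d\neq 0$. Hence $\SkewVar^{d+e}$ occurs in $hg$ with nonzero coefficient, so $\deg(hg)\geq d+e$; combined with the upper bound this proves $\deg(hg)=d+e=\deg(h)+\deg(g)$.

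The only genuinely delicate point is this last non-vanishing argument: over a ring with zero divisors the product of the two nonzero leading coefficients could collapse to $0$, and the hypothesis that $\lc(g)$ is invertible is used precisely to exclude this — as recorded in \cref{lem:inv-ele-endo}, an invertible element is never a zero divisor, and applying $\Endom$ preserves invertibility. Everything else is routine bookkeeping with the commutation rule \eqref{eq:basic-mul-rule}.
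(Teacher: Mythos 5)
Your proof is correct and follows essentially the same route as the paper's: bound $\deg(hg)$ above by observing that commuting $\SkewVar$ past coefficients never raises the degree, then identify the coefficient of $\SkewVar^{d+e}$ as $\lc(h)\,\Endom^{d}(\lc(g))$ and invoke \cref{lem:inv-ele-endo} to conclude it is nonzero. You have merely made explicit what the paper leaves implicit — the inductive proof of $\SkewVar^i a=\Endom^i(a)\SkewVar^i+(\text{lower degree})$ and the handling of the degenerate case $h=0$ — which is reasonable bookkeeping rather than a different argument.
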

\begin{proof}
  Let $d= \deg(h)$ and $e= \deg(g)$.
  It can be seen from the multiplication rule in \eqref{eq:basic-mul-rule} that commuting the variable $\SkewVar$ and the coefficients does not increase the degree in $\SkewVar$. Therefore, the coefficient of $\SkewVar^{d+e+i}=0, \forall i>0$.
  It follows from \cref{lem:inv-ele-endo} that the coefficient of the monomial $\SkewVar^{d+e}$ is $\lc(h)\cdot \Endom^{d}(\lc(g))\neq 0$. 
\end{proof}

\subsubsection{Division}
\label{sec:division}

If the base ring $\ring$ of $\polyRing=\RingSkewPolys$ is a division ring, then for $f,g\in\polyRing$, one can always perform a right division on $f$ by $g$, i.e., find the quotient polynomial $q\in\polyRing$ and the remainder polynomial $r\in\polyRing$ such that $f=q\cdot g + r$ with $\deg (r)<\deg(g)$~\cite{ore1933theory}.
The right division algorithm is presented in \cref{algo:EAskew}, which is the analogue to the well-known Euclidean algorithm for skew polynomials.

For a non-division ring $\ring$, the following theorem shows that the right division can be done by \cref{algo:EAskew} as well, as long as the leading coefficient of the divisor is invertible.

\begin{theorem}
  \label{thm:right-division-over-ring}
  For any $f,g\in\polyRing$, if the leading coefficient $\lc(g)$ is invertible, then \cref{algo:EAskew} outputs a unique pair of quotient and remainder $q,r\in\polyRing$.
\end{theorem}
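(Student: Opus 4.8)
The plan is to separate the statement into two halves: first, that the loop in \cref{algo:EAskew} is well-defined, terminates, and returns a pair $(q,r)\in\polyRing^2$ with $f=qg+r$ and $\deg r<\deg g$; second, that any such pair is unique, so that in particular the algorithm's output is it.

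For the first half I would trace the loop, maintaining the invariant $f=qg+r$ (which holds initially with $q=0$, $r=f$, and in particular already does the job when $\deg f<\deg g$). Set $e\defeq\deg g$. In an iteration with $d\defeq\deg r\geq e$, the algorithm subtracts $c\SkewVar^{d-e}\cdot g$ from $r$ and adds $c\SkewVar^{d-e}$ to $q$, where $c\defeq\lc(r)\cdot\bigl(\Endom^{d-e}(\lc(g))\bigr)^{-1}$. The point that uses the hypothesis is that $c$ makes sense: since $\lc(g)$ is invertible, by induction on $i$ via \cref{lem:inv-ele-endo} every iterated image $\Endom^{i}(\lc(g))$ is invertible, in particular $\Endom^{d-e}(\lc(g))$. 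By \cref{thm:deg-prod-skew} applied to $c\SkewVar^{d-e}$ and $g$ we get $\deg\bigl(c\SkewVar^{d-e}g\bigr)=(d-e)+e=d$ with leading coefficient $c\,\Endom^{d-e}(\lc(g))=\lc(r)$, so the $\SkewVar^{d}$-terms cancel and $\deg\bigl(r-c\SkewVar^{d-e}g\bigr)<d$. Hence $\deg r$ strictly decreases each pass, the loop halts after finitely many passes with $\deg r<e$, and the invariant delivers $f=qg+r$ with $q,r\in\polyRing$.

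For the second half, suppose $f=q_1g+r_1=q_2g+r_2$ with $\deg r_1,\deg r_2<\deg g$. Then $(q_1-q_2)g=r_2-r_1$. If $q_1\neq q_2$, then since $\lc(g)$ is invertible, \cref{thm:deg-prod-skew} gives $\deg\bigl((q_1-q_2)g\bigr)=\deg(q_1-q_2)+\deg g\geq\deg g$, contradicting $\deg(r_2-r_1)<\deg g$. Hence $q_1=q_2$ and then $r_1=r_2$. Applying this to the pair produced in the first half shows it is the unique one.

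There is no serious obstacle here. The only place the non-division setting matters is the well-definedness of the quotient coefficients, and that is precisely why the hypothesis asks for $\lc(g)$ to be invertible rather than merely $g\neq 0$: one needs all the iterated images $\Endom^{i}(\lc(g))$ to remain invertible, which is exactly the content of \cref{lem:inv-ele-endo} used inductively. Termination via the strictly decreasing degree, the telescoping invariant, and the uniqueness argument then mirror classical Euclidean division over a field verbatim.
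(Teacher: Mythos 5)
Your argument matches the paper's proof: both establish termination by showing that invertibility of $\lc(g)$ (via \cref{lem:inv-ele-endo}) guarantees the leading term of $r$ is cancelled each iteration, and both prove uniqueness by the same degree-comparison contradiction using \cref{thm:deg-prod-skew}. Your write-up is merely more explicit about the loop invariant $f=qg+r$ and the well-definedness of the quotient coefficients, but the substance is identical.
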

\begin{proof}
  By \cref{lem:inv-ele-endo}, it can seen that, with an invertible $\lc(g)$, $q_i$ at \cref{eq:right-division-skew-start} in \cref{algo:EAskew} is nonzero and \cref{eq:right-division-skew-end} can always remove the leading term in $r$. Hence, $\deg(r)$ decreases in every loop and the algorithm terminates.
  We show that the outputs $q,r$ are unique by contradiction. Suppose that $f=hg+r=\tilde{h}g+\tilde{r}$, which implies $(h-\tilde{h})g=r-\tilde{r}$. If $h-\tilde{h}$ is nonzero, then $\deg((h-\tilde{h})g)\geq\deg(g)$. However, by the termination condition of the algorithm, $\deg(r-\tilde{r})<\deg(g)$.
\end{proof}

\begin{algorithm}[htb!]
  \caption{Right division (Euclidean) algorithm for skew polynomials}
  \label{algo:EAskew}
  \KwIn{$f,g\in\RingSkewPolys$ where $\lc(g)$ is invertible}
  \KwOut{A unique pair $q,r\in\RingSkewPolys$ such that $f=q\cdot g + r$ with $\deg (r)<\deg(g)$.}
  $q\gets 0, r\gets f$\;
  \While{$\deg(r)\geq \deg(g)$}{
    $q_i \gets \lc(r)\Endom^{n-m}(g_m^{-1})\SkewVar^{n-m}$\;
    \label{eq:right-division-skew-start}
    $q \gets q+ q_i$\;
    $r \gets  r-q_i\cdot g$\;
    \label{eq:right-division-skew-end}
  }
  \Return{$q,r$}
\end{algorithm}

Left division between any $f,g\in\polyRing$ can be performed only if $\Endom$ is an automorphism of $\ring$. For left division, the repeated procedure from \cref{eq:right-division-skew-start} to \cref{eq:right-division-skew-end} in \cref{algo:EAskew} becomes
\begin{align*}
  q_i \gets& \Endom^{-m}(g_m^{-1}\lc(r))\SkewVar^{n-m}\\
  q \gets& q+ q_i\\
  r \gets & f-g \cdot q_i
\end{align*}
and the outputs $q,r\in\polyRing$ are such that $f = g\cdot q +r$. The inverse map of $\Endom$ is required in the computation of $q_i$. Hence, $\Endom$ needs to be an automorphism in order to perform the left division.

We say that $g$ right (resp., left) divides $f$ or $f$ is right (left) divisible by $g$, if the remainder of the right (left) division on $f$ by $g$ is $0$.
We denote by $g\divides_r f$ (resp., $g\divides_l f$) if $f$ is right (left) divisible by $g$.



\begin{definition}[gcrd and lclm]
  \label{def:gcrd-lclm}
  For any $f_1,f_2\in\polyRing$, the \emph{greatest common right divisor} (gcrd) of $f_1, f_2$, denoted by $\gcrd(f_1,f_2)$, is a monic polynomial $g\in\polyRing$ of the largest degree such that $g$ right divides both $f_1$ and $f_2$.

  The \emph{least common left multiplier} (lclm) of $f_1, f_2\in\polyRing$ is a monic polynomial $m\in\polyRing$ of the lowest degree which is right divisible by both $f_1$ and $f_2$, i.e.,
  \begin{align*}
    m= g_1\cdot f_1 = g_2 \cdot f_2 \quad \textrm{for some }g_1,g_2\in\polyRing\ .
  \end{align*}
\end{definition}

It has been shown in \cite[Section 2]{ore1933theory} that for any $f_1,f_2\in\polyRing$ where $\ring$ is a division ring, there is a unique $g=\gcrd(f_1,f_2)$ and it can be computed from the output of \cref{algo:EEAskew}: $\gcrd(f_1,f_2) = r_\ell$ (up to a scalar multiple).
If $\gcrd(f_1,f_2)=1$, then $f_1$ and $f_2$ are \emph{relatively prime}.

Similarly, it has been shown in \cite[Section 2]{baumbaugh2016results} that for any $f_1,f_2\in\polyRing$ where $\ring$ is a division ring, there is a unique $m=\lclm(f_1,f_2)$ and it can be computed up to a scalar multiple from the output of \cref{algo:EEAskew}: $\lclm(f_1,f_2) = s_{\ell+1}f_1 = -t_{\ell+1}f_2$.

The degree of the lclm of $f_1,f_2$ and the degree of the gcrd can be related via
\begin{align*}
  \deg \lclm(f_1,f_2) = \deg f_1 +\deg f_2 -\deg \gcrd(f_1,f_2)\ .
\end{align*}

\begin{algorithm}[htb!]
  \caption{Extended Euclidean algorithm for skew polynomials.}
  \label{algo:EEAskew}
  \KwIn{$f_1,f_2\in\polyRing$ where $\ring$ is a division ring.}
  \KwOut{$\ell\in\bbN$, $r_i,s_i,t_i\in\polyRing, i\in[\ell+1]$, and $q_i\in\polyRing,i\in[\ell]$}
  $r_1\gets f_1, s_1 \gets 1, t_1\gets 0$\;
  $r_2\gets f_2, s_2 \gets 0, t_2\gets 1$\;
  $i\gets 2$
  \While{$r_i\neq 0$}{
    $q_i, r_{i+1}\gets$ right divide $r_i$ by $r_{i-1}$
    \tcc*[r]{Apply \cref{algo:EAskew}}
    $s_{i+1}\gets s_{i-1}-q_is_i$\;
    $t_{i+1}\gets t_{i-1}-q_it_i$\;
    $i\gets i+1$
  }
  $\ell\gets i-1$\;
  \Return{$r_i,s_i,t_i, \forall i\in[\ell+1]$, and $q_i, \forall i\in[\ell]$}
\end{algorithm}


\subsubsection{Evaluation}
For a commutative polynomial $f\in\ring[x]$, the process of evaluating $f$ at $a\in\ring$, denoted by $f(a)$, is simply ``plugging in'' the value $a$ in place of $x$ in $f$ and carry out the proper operation in $\ring$. The result by this simple method coincides with the result by the \emph{remainder evaluation}, where $f(a)\in\ring$ is the remainder of dividing $f$ by $(x-a)$. In other words, the evaluation $f(a)$ is such that $f = g\cdot(x-a)+f(a)$ for some $g\in\ring[x]$.
However, for skew polynomials, simple plugging-in does not always give the same result as the remainder evaluation,
as shown in \cref{eg:plugin-neq-remainder}.
Nevertheless, we show in \cref{thm:skewEvaNform} that another way of ``plugging-in'' is equivalent to the remainder evaluation.
\begin{definition}[Remainder evaluation of skew polynomials]
\label{def:eva_division}
Let $\ring$ be a ring and $\polyRing=\ring[\SkewVar;\Endom, \Deriv]$.
For any $f\in\polyRing$ and $a\in\ring$, the \emph{evaluation} of $f$ at $a$, denoted by $f(a)$, is the remainder from the right division on $f$ by $\SkewVar-a$. In other words, we can write
\begin{align*}
    f(a) = f - g\cdot (\SkewVar-a) \quad \textrm{ for some } g\in\polyRing\ .
\end{align*}
\end{definition}
Since $\SkewVar-a$ is monic, it follows from \cref{thm:right-division-over-ring} that the remainder evaluation $f(a)$ is unique.
\begin{example}
  \label{eg:plugin-neq-remainder}
  Let $\F_4[\SkewVar; \Frobaut,\Deriv]$ be a skew polynomial ring with the Frobenius automorphism $\Frobaut(a)=a^2, \forall a \in\F_4$ and an inner $\Frobaut$-derivation $\Deriv(a) = \Frobaut(a)-a, \forall a \in\F_4$ (recall from \cref{def:endo-deri} that $\beta =1$).
  Let $\alpha$ be a primitive element of $\F_4$. To evaluate $f=\SkewVar^3+\SkewVar+1$ at $\alpha\in\F_4$, by the ``plugging-in'' method, we get
  \begin{align*}
    f(\alpha) = \alpha^3+\alpha+1 = \alpha\ .
  \end{align*}
  By the remainder evaluation from \cref{def:eva_division}, we have
  \begin{align*}
    f(\alpha) = \alpha+1 = f - (\SkewVar^2+\alpha\SkewVar)(\SkewVar-\alpha)\ .
  \end{align*}
\end{example}
The following theorem shows that the remainder evaluation for skew polynomials has an equivalent form so that one can perform the evaluation by addition and multiplication, without applying any division algorithm (e.g., \cref{algo:EAskew}). The form has been proven in \cite[Lemma 2.4]{lam1988vandermonde} for division rings and applies naturally to general rings.
\begin{theorem}
  \label{thm:skewEvaNform}
  Let $\ring$ be a ring, $\Endom$ be an endomorphism of $\ring$ and $\Deriv$ be a $\Endom$-derivation. For any $a\in\ring$, define recursively the \emph{$i$-th truncated norm} of $a$ as
  \begin{align*}
    N_0(a) &\defeq 1\ ,\\
    N_{i+1}(a) &\defeq \Endom(N_i(a))\cdot a + \Deriv(N_i(a))\ ,\ \forall i\in\bbN\ .
  \end{align*}
  Then, for any $f=\sum_{i\in\bbN}f_i\SkewVar^i\in\polyRing$, the evaluation of $f$ at $a\in\ring$ is
  \begin{align*}
    f(a) = \sum_{i\in\bbN}f_iN_i(a)\ .
  \end{align*}
\end{theorem}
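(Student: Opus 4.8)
The plan is to proceed by induction on $\deg(f)$, or more cleverly, to establish the key identity directly for monomials $\SkewVar^i$ and then extend by $\ring$-linearity of the remainder evaluation. First I would observe that the map $f \mapsto f(a)$, being ``take the remainder of right division by $\SkewVar - a$'', is additive: if $f = g\cdot(\SkewVar-a) + r$ and $f' = g'\cdot(\SkewVar-a)+r'$ with $\deg r, \deg r' < 1$ (so $r,r'\in\ring$), then $f+f' = (g+g')(\SkewVar-a) + (r+r')$, and uniqueness of the remainder (from \cref{thm:right-division-over-ring}, since $\SkewVar-a$ is monic) gives $(f+f')(a) = f(a)+f'(a)$. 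Hence it suffices to prove that $(\SkewVar^i)(a) = N_i(a)$ for every $i\in\bbN$, because then for $f=\sum_i f_i\SkewVar^i$ we get $f(a) = \sum_i f_i (\SkewVar^i)(a) = \sum_i f_i N_i(a)$. (One must be slightly careful that the coefficients $f_i$ sit on the left, so additivity in the above sense is exactly what is needed; scaling $\SkewVar^i$ on the left by $f_i$ multiplies the remainder on the left by $f_i$, again by uniqueness.)

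Next I would set up the induction on $i$ to show $(\SkewVar^i)(a) = N_i(a)$. The base case $i=0$ is immediate: $\SkewVar^0 = 1$ has remainder $1 = N_0(a)$ upon division by $\SkewVar - a$. For the inductive step, suppose $\SkewVar^i = g_i\cdot(\SkewVar-a) + N_i(a)$ for some $g_i\in\polyRing$. Multiply on the left by $\SkewVar$:
\begin{align*}
  \SkewVar^{i+1} = \SkewVar g_i\cdot(\SkewVar-a) + \SkewVar N_i(a).
\end{align*}
Now apply the basic commutation rule \eqref{eq:basic-mul-rule} to the scalar $N_i(a)\in\ring$: $\SkewVar N_i(a) = \Endom(N_i(a))\SkewVar + \Deriv(N_i(a))$. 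Writing $\SkewVar = (\SkewVar - a) + a$ gives
\begin{align*}
  \SkewVar N_i(a) = \Endom(N_i(a))(\SkewVar-a) + \Endom(N_i(a))\,a + \Deriv(N_i(a)).
\end{align*}
Therefore
\begin{align*}
  \SkewVar^{i+1} = \bigl(\SkewVar g_i + \Endom(N_i(a))\bigr)(\SkewVar-a) + \bigl(\Endom(N_i(a))\,a + \Deriv(N_i(a))\bigr),
\end{align*}
and the last parenthesized term is exactly $N_{i+1}(a)$ by the recursive definition. Since it lies in $\ring$ (degree $<1$), uniqueness of the remainder yields $(\SkewVar^{i+1})(a) = N_{i+1}(a)$, closing the induction.

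The only genuinely delicate point — and the one I would expect to be the main obstacle — is justifying the left-linearity of the remainder evaluation over a possibly non-commutative, non-division ring $\ring$: namely that $(f_i\SkewVar^i)(a) = f_i\cdot(\SkewVar^i)(a)$ and that remainders add. This rests entirely on the uniqueness half of \cref{thm:right-division-over-ring}, which holds because $\SkewVar-a$ is monic (its leading coefficient $1$ is invertible); so the argument goes through verbatim for general rings, exactly as the paper claims. A secondary bookkeeping concern is that the $g_i$ in the induction need not be scalars, but this never matters since we only ever track the remainder term, never $g_i$ itself. Finally I would remark that this reproves \cite[Lemma 2.4]{lam1988vandermonde} in the general-ring setting and note consistency with \cref{eg:plugin-neq-remainder} as a sanity check.
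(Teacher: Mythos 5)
Your proof is correct and follows essentially the same route as the paper's: both establish by induction that $\SkewVar^i$ leaves remainder $N_i(a)$ on right division by $\SkewVar - a$ (the paper phrases this as $(\SkewVar - a)\divides_r (\SkewVar^i - N_i(a))$), with the inductive step resting on exactly the same commutation-rule computation $\SkewVar N_i(a) = \Endom(N_i(a))(\SkewVar - a) + N_{i+1}(a)$, and both then conclude by $\ring$-linearity on the left together with uniqueness of remainders. Your explicit discussion of left-linearity up front is a slightly more careful exposition of a point the paper handles implicitly, but the argument is the same.
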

\begin{proof}
  We first show that for any $k\in\bbN$, $(\SkewVar-a)\divides_r(\SkewVar^k-N_k(a))$ by induction.
  This is trivial for $k=0$, since $\SkewVar^0-N_0(a) = 1-1=0$.
  Assume it is true for some $k\geq 0$. Then
  \begin{align}
    \SkewVar^{k+1}-N_{k+1}(a)&=\SkewVar^{k+1}-\Endom\parenv*{N_k(a)}\cdot a - \Deriv\parenv*{N_k(a)}\nonumber\\
                             &=\SkewVar^{k+1}-\Endom\parenv*{N_k(a)}\cdot a + \Endom(N_k(a))\cdot \SkewVar - \Endom(N_k(a))\cdot \SkewVar - \Deriv(N_k(a))\nonumber\\
                             &=\SkewVar^{k+1}+\Endom\parenv*{N_k(a)}(\SkewVar- a) - \parenv*{\Endom(N_k(a))\cdot \SkewVar + \Deriv(N_k(a))}\nonumber\\
                             &= \Endom(N_k(a))(\SkewVar-a) + \SkewVar^{k+1}-\SkewVar\cdot N_k(a) \label{eq:commuteX}\\
                             &= \Endom(N_k(a))(\SkewVar-a) + \SkewVar(\SkewVar^{k}-\cdot N_k(a))\ ,\nonumber
  \end{align}
  where \eqref{eq:commuteX} follows from the multiplication rule in \eqref{eq:basic-mul-rule} for commuting the variable and the coefficient.
  With the induction hypothesis, $(\SkewVar-a)\divides_r(\SkewVar^k-N_k(a))$, we conclude that both terms on the last line are right divisible by $\SkewVar-a$.

  Then we can see that
  \begin{align*}
    f - f(a) &= \sum_{i\in\bbN}f_i\SkewVar^i - \sum_{i\in\bbN}f_iN_i(a)\\
    &= \sum_{i\in\bbN}f_i\cdot\parenv*{\SkewVar^i-  N_i(a)}\ .
  \end{align*}
  Applying the result for $\SkewVar^{k+1}-N_{k+1}(a)$ above to each term $\parenv*{\SkewVar^i-  N_i(a)}$ in the sum, we conclude that $(\SkewVar-a)\divides_r\parenv*{f-f(a)}$. Therefore, we can write $f= q\cdot (\SkewVar-a) + f(a)$ for some $q\in\polyRing$. Since $\deg(f(a))<1=\deg(\SkewVar-a)$, it is indeed the remainder of the right division on $f$ by $\SkewVar-a$.
\end{proof}

So far, we only considered polynomials in the \emph{left form}, $f=\sum_{i\in\bbN}f_i\SkewVar^i$. If a polynomial is given in the \emph{right form}, i.e., $h=\sum_{i\in\bbN}\SkewVar^ih_i$, then for some $a\in\ring$, the evaluation $h(a)\neq\sum_{i\in\bbN}N_i(a)h_i$. Instead, we must first convert $h$ to the left form $\sum_{i\in\bbN}g_i\SkewVar^i$ by the multiplication rule \eqref{eq:basic-mul-rule} and then compute $h(a) = \sum_{i\in\bbN}g_iN_i(a)$.

With the definition of $N_i(a)$ in \cref{thm:skewEvaNform}, we define the \emph{$(\Endom,\Deriv)$-Vandermonde matrix} on a set of elements $\Omega=\set*{a_1,\dots, a_n}\subseteq\ring$.
\begin{definition}
[$(\Endom,\Deriv)$-Vandermonde matrix]
\label{def:skewVandermonde}
Let $N_i(\cdot)$ be the $i$-th truncated norm as defined in \cref{thm:skewEvaNform}.
Given a set $\Omega=\set*{a_1,\dots, a_{n}} \subseteq \ring$,
the $(\Endom,\Deriv)$-Vandermonde matrix of $\Omega$ is given by
\begin{align*}
    \bV^{\Endom,\Deriv}(\Omega) \defeq
    \begin{pmatrix}
      N_0(a_1) & N_0(a_2) & \dots & N_0(a_n)\\
      N_1(a_1) & N_1(a_2) & \dots & N_1(a_n)\\
      \vdots & \vdots & \ddots &\vdots \\
      N_{n-1}(a_1) & N_{n-1}(a_2) & \dots & N_{n-1}(a_n)
    \end{pmatrix}\ .
\end{align*}
\end{definition}
Similar to the evaluation of commutative univariate polynomials, the evaluation of a polynomial $f=\sum_{i=0}^kf_i\SkewVar^i\in\polyRing, k\in\bbN$ at all the elements in $\Omega=\set*{a_1,\dots,a_n}$ can be written as
\begin{align*}
  \parenv*{f(a_1), f(a_2), \dots, f(a_n)} &= {(f_0,\dots, f_k)}\cdot
  \underbrace{
    \begin{pmatrix}
      N_0(a_1) & N_0(a_2) & \dots & N_0(a_n)\\
      \vdots & \vdots & \ddots &\vdots \\
      N_{k}(a_1) & N_{k}(a_2) & \dots & N_{k}(a_n)
    \end{pmatrix}}_{\defeqrev\bV_k^{\Endom,\Deriv}(\Omega)}\ .
\end{align*}

Next, we seek for a formula for evaluating a product of two polynomials at a point $a\in\ring$. For this purpose, we first define the notion of \emph{$(\Endom,\Deriv)$-conjugacy}.
\begin{definition}[$(\Endom,\Deriv)$-conjugacy]
  \label{def:generalConjugate}
For any two elements $a\in\ring,c\in\ring^*$, define
\begin{align*}
  a^c\defeq \Endom(c)ac^{-1}+\Deriv(c)c^{-1}\ ,
\end{align*}
where the term $\Deriv(c)c^{-1}$ is also known as the ``logarithmic derivative'' of $c\in\ring^*$.
Two elements $a,b\in\ring$ are said to be \emph{$(\Endom,\Deriv)$-conjugate} if there exists an element $c\in\ring^*$ such that $a^c=b$.
\end{definition}
It is easy to check that $(\Endom,\Deriv)$-conjugacy fulfills the following properties of equivalence relations:
\begin{itemize}
\item Reflexivity: $a$ is $(\Endom,\Deriv)$-conjugate to $a$, for any $a\in\ring$.
\item Symmetry: If $a$ is $(\Endom,\Deriv)$-conjugate to $b$, then $b$ is $(\Endom,\Deriv)$-conjugate to $a$, for any $a,b\in\ring$.
\item Transitivity: If $a$ is $(\Endom,\Deriv)$-conjugate to $b$ and $b$ is $(\Endom,\Deriv)$-conjugate to $c$, then $a$ is $(\Endom,\Deriv)$-conjugate to $c$, for any $a,b,c\in\ring$.
\end{itemize}
Since $(\Endom,\Deriv)$-conjugacy is an equivalence relation, we can speak of $(\Endom,\Deriv)$-conjugacy classes. For instance, the $(\Endom,\Deriv)$-conjugacy class of $0$, $\set*{0^c=\Deriv(c)c^{-1}\ |\ c\in\ring^*}$, consists of the logarithmic derivatives of all the nonzero elements in $\ring$.
\begin{definition}[$(\Endom,\Deriv)$-conjugacy classes]
For any $a \in\ring$, the \emph{$(\Endom,\Deriv)$-conjugacy class} of $a$ is
  \begin{align*}
    C_{\Endom,\Deriv}(a)\defeq \set*{ a^c\ |\ c\in \ring^*}\ .
  \end{align*}
\end{definition}

Using the notion of $(\Endom,\Deriv)$-conjugacy, the next result provides us with a useful formula for evaluating a product $f\cdot g$ at $a\in\ring$. 

\begin{theorem}[{\cite[Theorem 2.7]{lam1988vandermonde}}]
  Let $f,g\in\polyRing$ and $a\in\ring$. Then, $(f\cdot g)(a) \neq f(a)\cdot g(a)$ in general. Instead,
  \begin{align*}
    (f\cdot g)(a) =
    \begin{cases}
      0 & \text{ if }g(a)=0\\
      f\parenv*{a^{g(a)}}\cdot g(a) & \text{ else }
    \end{cases}\ ,
  \end{align*}
  where $a^{g(a)}\defeq\Endom(g(a))\cdot a\cdot g(a)^{-1}+\Deriv(g(a))\cdot g(a)^{-1}$.
\end{theorem}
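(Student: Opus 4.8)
The plan is to reduce everything to right-divisibility by the monic degree-one polynomials $\SkewVar - a$, using only \cref{def:eva_division} (remainder evaluation) together with one algebraic identity. The key fact, which I would establish first, is the \emph{conjugation identity}: for every $c\in\ring^*$,
\[
  (\SkewVar - a^c)\cdot c \;=\; \Endom(c)\cdot(\SkewVar - a).
\]
This follows in two lines from the commutation rule \eqref{eq:basic-mul-rule}: one has $\SkewVar c = \Endom(c)\SkewVar + \Deriv(c)$ while $a^c c = \Endom(c)a + \Deriv(c)$ by the formula for $a^c$ in \cref{def:generalConjugate}; subtracting, the $\Deriv(c)$ terms cancel and the claimed identity drops out. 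I would also record the (immediate, from uniqueness of the remainder in \cref{thm:right-division-over-ring}) facts that remainder evaluation at $a$ is additive and annihilates every left multiple of $\SkewVar - a$.

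The degenerate branch is then trivial: if $g(a)=0$ then \cref{def:eva_division} gives $g = q\cdot(\SkewVar-a)$ for some $q\in\polyRing$, so $f\cdot g = (fq)\cdot(\SkewVar-a)$ is a left multiple of $\SkewVar-a$, whence $(f\cdot g)(a)=0$.

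For the main branch, put $b\defeq g(a)\in\ring^*$. Writing $g = q_g\cdot(\SkewVar-a) + b$ and left-multiplying by $f$ gives $f\cdot g = (fq_g)\cdot(\SkewVar-a) + f\cdot b$, so reducing modulo $\SkewVar-a$ yields $(f\cdot g)(a) = (f\cdot b)(a)$. Next, divide $f$ on the right by the monic polynomial $\SkewVar - a^{b}$ (legitimate by \cref{thm:right-division-over-ring}): $f = q_f\cdot(\SkewVar - a^{b}) + f(a^{b})$. Right-multiplying by $b$ and inserting the conjugation identity with $c=b$,
\[
  f\cdot b \;=\; q_f\cdot(\SkewVar - a^{b})\cdot b + f(a^{b})\cdot b \;=\; \bigl(q_f\,\Endom(b)\bigr)\cdot(\SkewVar - a) + f(a^{b})\cdot b .
\]
The first summand is a left multiple of $\SkewVar-a$, and $f(a^{b})\cdot b\in\ring$ has degree $<1=\deg(\SkewVar-a)$; by uniqueness of the remainder, $f(a^{b})\cdot b$ is therefore exactly $(f\cdot b)(a)$. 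Chaining the two equalities gives $(f\cdot g)(a) = f(a^{g(a)})\cdot g(a)$, as claimed.

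The only genuine obstacle is discovering and checking the conjugation identity; once it is in hand the rest is bookkeeping with monic right division. One point worth flagging: the ``else'' branch tacitly needs $g(a)$ to be invertible (automatic in a division ring, the setting of the cited \cite[Theorem~2.7]{lam1988vandermonde}), since $a^{g(a)}$ is otherwise undefined, so I would state the hypothesis on $\ring$ accordingly rather than leave it implicit.
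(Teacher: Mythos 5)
Your proof is correct, and the argument is essentially the standard one (the key \emph{conjugation identity} $(\SkewVar - a^c)\cdot c = \Endom(c)\cdot(\SkewVar - a)$ is precisely the tool Lam and Leroy use in \cite[Theorem 2.7]{lam1988vandermonde}). The paper itself does not prove this theorem — it simply cites the literature and moves on — so there is no internal proof to compare against; your reconstruction is the right reference argument.

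One point worth amplifying rather than flagging in passing: the paper has deliberately relaxed the ambient structure from a division ring (the setting of the cited Lam--Leroy result) to a general, possibly non-division, ring $\ring$, and the paper's notation $\ring^*$ denotes the \emph{nonzero} elements, not the \emph{units}. As you observe, $a^{g(a)}$ is defined via $g(a)^{-1}$, so the ``else'' branch only parses when $g(a)$ is invertible. Your Step~4 also uses right division by $\SkewVar - a^{g(a)}$, which is monic and hence always licit by \cref{thm:right-division-over-ring}, so the only real obstruction is well-definedness of $a^{g(a)}$. In a division ring the two are the same thing; over the general rings the paper considers (e.g.\ $\F_2[v]$ with $v^2=v$ or $\F_2[u]$ with $u^2=0$, which have nonzero zero divisors), a nonzero $g(a)$ need not be a unit, and the formula as literally stated in the paper is incomplete. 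A careful restatement for this chapter should either require $g(a)\in\ring^*$ in the sense of being a \emph{unit}, or add a third branch for the case $g(a)\neq 0$ but $g(a)$ not invertible (in which case the remainder $(f\cdot g)(a)$ still exists and is well-defined, but admits no such multiplicative formula). Your proof handles the two stated cases cleanly; this is a gap in the theorem's hypotheses in the paper, not in your argument.
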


\subsubsection{Ideals}
\label{sec:skew-ideal}
Since the skew polynomial rings $\polyRing$ are non-commutative, we need to differentiate between left and right ideals of $\polyRing$.

A \emph{left ideal} $\myspan{f}_l \subseteq \polyRing$ generated by $f\in\polyRing$ is a set of skew polynomials
\begin{align*}
  \myspan{f}_l \defeq \set*{\left. gf\ \right|\ g\in\polyRing}\ .
\end{align*}
Similarly, a \emph{right ideal} $\myspan{f}_r\subseteq \polyRing$ generated by $f\in\polyRing$ is a set of skew polynomials
\begin{align*}
  \myspan{f}_r \defeq \set*{\left. fg\ \right|\ g\in\polyRing}\ .
\end{align*}

\begin{lemma}
  \label{lem:g-right-div-f}
  For $g,f\in\polyRing$, $\myspan{g}_l\supseteq\myspan{f}_l$ if and only if $g\divides_r f$.
\end{lemma}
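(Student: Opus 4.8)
The plan is to reduce the asserted equivalence to a single membership statement: I would argue that $\myspan{g}_l\supseteq\myspan{f}_l$ holds if and only if $f\in\myspan{g}_l$, and then observe that, by the very definition of the principal left ideal $\myspan{g}_l=\set*{qg\mid q\in\polyRing}$, the condition $f\in\myspan{g}_l$ is nothing but $g\divides_r f$. So the whole content of the lemma is the bookkeeping equivalence ``inclusion of principal left ideals $=$ the generator of the smaller one lies in the larger one''.

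For the implication ``$g\divides_r f\Rightarrow\myspan{g}_l\supseteq\myspan{f}_l$'', I would take the definition of right divisibility in the form $f=q\cdot g$ for some $q\in\polyRing$, and then, for an arbitrary element $h\cdot f\in\myspan{f}_l$, use associativity of multiplication in $\polyRing$ to rewrite $h\cdot f=(h\cdot q)\cdot g\in\myspan{g}_l$. This gives $\myspan{f}_l\subseteq\myspan{g}_l$ at once; note that this half needs no case distinction and no appeal to any division algorithm.

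For the converse ``$\myspan{g}_l\supseteq\myspan{f}_l\Rightarrow g\divides_r f$'', I would use that $\polyRing$ is unitary, so that $f=1\cdot f\in\myspan{f}_l$. The hypothesis then forces $f\in\myspan{g}_l$, i.e.\ $f=q\cdot g$ for some $q\in\polyRing$, which is precisely $g\divides_r f$.

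I do not expect a substantive obstacle -- the argument is a couple of lines in each direction. The two points worth stating explicitly are: (i) one should read ``$g\divides_r f$'' as the existential statement ``$f=q\cdot g$ for some $q\in\polyRing$'', so that the proof does not silently assume $\lc(g)$ is invertible (which \cref{thm:right-division-over-ring} would require in order to run \cref{algo:EAskew}); and (ii) the step $f\in\myspan{f}_l$ uses that $\polyRing$ has a multiplicative identity. The statement is just the one-sided, non-commutative analogue of the familiar fact that in $\field[x]$ an inclusion of principal ideals is equivalent to divisibility, and the proof mirrors that one essentially verbatim.
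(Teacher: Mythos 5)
Your proof is correct, and the forward direction ($g\divides_r f\Rightarrow\myspan{g}_l\supseteq\myspan{f}_l$) is essentially identical to the paper's. Where you genuinely diverge is the converse: the paper argues by contrapositive, invoking the right division algorithm to produce a nonzero remainder $a$ with $\deg(a)<\deg(g)$ from $f=q\cdot g+a$, and then claims that for every $u\in\polyRing$ with $\deg(u)=0$ one has $u\cdot f\not\in\myspan{g}_l$. You instead give a direct argument: since $\polyRing$ is unitary, $f=1\cdot f\in\myspan{f}_l\subseteq\myspan{g}_l$, so $f=q\cdot g$ for some $q$. Your route is cleaner on two counts. First, the paper's ``for any $u$ of degree $0$'' is an over-claim -- if $u$ is a zero divisor annihilating the remainder $a$, then $u\cdot f=u\cdot q\cdot g$ does land in $\myspan{g}_l$ -- whereas you take only $u=1$, which is all that is needed. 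Second, the paper's argument is tied to the division algorithm (\cref{algo:EAskew}), which per \cref{thm:right-division-over-ring} presupposes $\lc(g)$ invertible; your reading of ``$g\divides_r f$'' as the bare existential ``$f=q\cdot g$ for some $q$'' sidesteps this, which you flag explicitly in point (i). Both proofs are correct under the paper's standing hypothesis that right division by $g$ is available, but yours is a touch more elementary and more transparent about what is actually being used.
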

\begin{proof}
  We first show the sufficiency. Suppose $g\divides_r f$, then $\forall p \in\myspan{f}_l$, $p=u\cdot f$ for some $u\in\polyRing$ and $p=u\cdot (q\cdot g)$ for $q$ being the quotient of the right division on $f$ by $g$. Hence, $\myspan{g}_l\supseteq\myspan{f}_l$.
  For the necessity, assume
  $g\not\divides_r f$.
  Then there is a nonzero $a\in\polyRing$ of $\deg(a)<\deg(g)$ such that $f= q\cdot g+a$. It can be seen that for any $u\in\polyRing$ of $\deg(u)=0$, $u\cdot f\not\in\myspan{g}_l$.
\end{proof}

\subsection{With Frobenius Automorphism and Zero Derivation}
\label{sec:Frob-zero-deri}
In this subsection, we present the properties of skew polynomials over an extension field $\Fqm$, with the Frobenius automorphism and zero derivation.

The \emph{Frobenius automorphism} of an extension field $\Fqm$ is the map
\begin{align*}
  \Frobaut : \Fqm &\to \Fqm\ \\
  a &\mapsto a^q\ .
\end{align*}
Let $\FrobSkewPolys$ be a skew polynomial ring over $\Fqm$ with Frobenius automorphism $\Frobaut$ and zero derivation $\Deriv =0$.
We follow the notation used in \cref{sec:generalSkewRing} but omit the $\Deriv$ for simplicity.

For two skew polynomials $h=\sum_{i\in\bbN} h_i\SkewVar^{i}$ and $g=\sum_{j\in\bbN} g_j\SkewVar^{j}$ in $\FrobSkewPolys$, their product $h\cdot g$ can be computed according to \eqref{eq:prod-skew-polys}. Since every nonzero element in $\Fqm$ is invertible, it follows from \cref{thm:deg-prod-skew} that $\deg(hg)=\deg(h)+\deg(g)$.

For any $\alpha\in\Fqm$, its $i$-th truncated norm $N_i(\alpha)$ (defined in \cref{thm:skewEvaNform}) becomes
\begin{align*}
  N_i(\alpha)=\prod_{j=0}^{i-1} \Frobaut^j(\alpha) =\alpha^{\sum_{j=0}^{i-1}q^j}=\alpha^{(q^i-1)/(q-1)}\ .
\end{align*}
The evaluation of any $f=\sum_{i\in\bbN} f_i \SkewVar^{i}\in \FrobSkewPolys$ at $\alpha\in\Fqm$ becomes
\begin{align}
  \label{eq:Frob-remainder-evaluation}
  f(\alpha) = \sum_{i\in\bbN}f_iN_{i}(\alpha) =  \sum_{i\in\bbN} f_i \alpha^{(q^{i}-1)/(q-1)}\ .
\end{align}
\subsubsection{$\sigma$-Conjugacy Classes}
For any $a\in\Fqm$, its \emph{$\Frobaut$-conjugacy} w.r.t.~$c\in\Fqm^*$ becomes
\begin{align}
  \label{eq:Frob-conj}
  a^c\defeq \Frobaut(c)a c^{-1} = ac^{q-1}\ ,
\end{align}
and its $\Frobaut$-conjugacy class becomes
\begin{align}
  \label{eq:Frob-conj-classes}
  C_{\Frobaut}(a)\defeq \set*{\left. ac^{q-1}\ \right|\ c\in \Fqm^*}\ .
\end{align}

We say that two elements $a, b\in\Fqm$ are \emph{$\Frobaut$-distinct} if $b\not\in C_{\Frobaut}(a)$.
The following theorem shows that the $\Frobaut$-conjugacy classes of $\Fqm$ form a partition of $\Fqm$.
\begin{theorem}[Structure of $\Frobaut$-conjugacy classes {\cite[Theorem 2.12]{FnTsurvey-Umberto}}]
  \label{thm:conjugacyClasses}
  Let $\priEle\in\Fqm^*$ be a primitive element of $\Fqm$. For the Frobenius automorphism $\Frobaut$, the $q-1$ elements $1,\priEle,\priEle^2,\dots, \priEle^{q-2}$ are pair-wise $\Frobaut$-distinct.
  There are $q$ disjoint conjugacy classes in $\Fqm$ and 
  $$\Fqm=C_{\Frobaut}(0)\cup C_{\Frobaut}(\priEle^0)\cup\cdots \cup C_{\Frobaut}(\priEle^{q-2})\ ,$$
  where $C_{\Frobaut}(0)=\set*{0}$ and $\left|C_{\Frobaut}(\priEle^i)\right|=\frac{q^m-1}{q-1}, i\in[0,q-2]$. 
\end{theorem}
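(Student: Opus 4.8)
The plan is to reduce everything to elementary group theory on the cyclic group $\Fqm^*$. By \eqref{eq:Frob-conj-classes}, for $a\in\Fqm^*$ the conjugacy class is $C_{\Frobaut}(a) = a\cdot H$, where $H \defeq \set*{c^{q-1}\ :\ c\in\Fqm^*}$ is the image of the group endomorphism $\psi\colon \Fqm^*\to\Fqm^*$, $c\mapsto c^{q-1}$. Thus the nonzero conjugacy classes are exactly the cosets of $H$ in $\Fqm^*$, while from the same formula $C_{\Frobaut}(0)=\set*{0}$. Since the excerpt already records that $(\Endom,\Deriv)$-conjugacy (hence $\Frobaut$-conjugacy) is an equivalence relation, these classes partition $\Fqm$, so the whole statement comes down to counting cosets of $H$ and exhibiting representatives.

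First I would compute $\abs{H}$. The kernel of $\psi$ is $\set*{c\in\Fqm^*\ :\ c^{q-1}=1}$, the group of $(q-1)$-th roots of unity in $\Fqm$; since $q-1$ divides $q^m-1=(q-1)(q^{m-1}+q^{m-2}+\cdots+1)$, this kernel has exactly $q-1$ elements (indeed it is $\F_q^*$). Hence $\abs{H}=\tfrac{q^m-1}{q-1}$ by the first isomorphism theorem, and the index $[\Fqm^*:H]=q-1$. This immediately yields $q-1$ nonzero conjugacy classes, each of cardinality $\tfrac{q^m-1}{q-1}$, together with the class $\set*{0}$, so $q$ classes in total; it also proves the partition $\Fqm=C_{\Frobaut}(0)\cup\bigcup_{i=0}^{q-2} C_{\Frobaut}(\priEle^i)$ once the listed elements are shown to meet all cosets.

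It then remains to check that $1,\priEle,\dots,\priEle^{q-2}$ represent the $q-1$ distinct cosets, i.e.\ are pairwise $\Frobaut$-distinct. Since $\priEle$ generates $\Fqm^*$, the subgroup $H$ is generated by $\psi(\priEle)=\priEle^{q-1}$, so $\priEle^{\ell}\in H$ if and only if $(q-1)\mid \ell$ (working with exponents modulo $q^m-1$, which is fine as $q-1\mid q^m-1$). For $0\le i<j\le q-2$ we have $0<j-i<q-1$, hence $\priEle^{j-i}\notin H$; by \eqref{eq:Frob-conj} this says precisely that $\priEle^i$ and $\priEle^j$ lie in different cosets of $H$ and are therefore not $\Frobaut$-conjugate. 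Because there are exactly $q-1$ cosets, these $q-1$ elements form a complete set of coset representatives, which finishes the proof.

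I do not expect a genuine obstacle: the only points requiring a moment's care are the divisibility $q-1\mid q^m-1$, which pins down $\abs{\ker\psi}$ (and hence $\abs{H}$ and the index), and the translation of ``$\Frobaut$-conjugate'' into ``same coset of $H$'' via \eqref{eq:Frob-conj}; the remainder is bookkeeping with Lagrange's theorem and the structure of cyclic groups.
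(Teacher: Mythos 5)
The paper states this result as a citation to \cite[Theorem 2.12]{FnTsurvey-Umberto} and does not include its own proof, so there is nothing internal to compare against. Your argument is correct as a self-contained derivation: identifying the nonzero $\Frobaut$-conjugacy classes with the cosets of the subgroup $H=\set*{c^{q-1}:c\in\Fqm^*}$, computing $\abs{H}=\frac{q^m-1}{q-1}$ from $\abs{\ker\psi}=q-1$ (using $q-1\mid q^m-1$ in the cyclic group $\Fqm^*$), and then observing that $H=\myspan{\priEle^{q-1}}$ forces $1,\priEle,\dots,\priEle^{q-2}$ into distinct cosets. This is the standard textbook route (and essentially the one in the cited survey), reducing everything to Lagrange's theorem and the structure of cyclic groups; no gaps.
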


\subsubsection{Polynomial Independence}
In the following, we discuss the set of roots of skew polynomials, i.e., given $f\in\FrobSkewPolys$, we determine $\set*{\alpha\in\Fqm\ |\ f(\alpha)=0}$.
Different from commutative polynomials $g\in\F[x]$, where the number of roots of $g$ is at most $\deg(g)$, the following example shows that this does not hold for skew polynomials in general.
\begin{example}
  Consider $q=2, m=2$ and the skew polynomial ring $\F_4[\SkewVar;\Frobaut]$ with the Frobenius automorphism and zero derivation. Let $\alpha$ be a primitive element of $\F_4$. For the polynomial $f=\SkewVar^2+1\in\F_4[\SkewVar;\Frobaut]$, it can be verified that the set of root of $f$ is
  \begin{align*}
    \set*{1,\alpha,\alpha+1}\ .
  \end{align*}
\end{example}
In the commutative case, we count \emph{distinct} roots. In the case of skew polynomials, we need a different notion of \emph{distinctness}, namely, \emph{polynomial independence}.
For this, we need the following definition.
\begin{definition}[Minimal polynomial]\label{def:minPoly}
  Given a nonempty set $\Omega\subseteq \Fqm$, we say $f_{\Omega}$ is a \emph{minimal polynomial} of $\Omega$ if it is
  a monic polynomial $f_{\Omega}\in\FrobSkewPolys$ of minimal degree such that $f_{\Omega}(\alpha)=0$ for all $\alpha\in\PindSet$.
\end{definition}
The following theorem shows that for any nonempty $\Omega\subseteq\Fqm$, its minimal polynomial is unique.
It also implies that $\FrobSkewPolys$ is a \emph{principle left ideal ring}\footnote{A principle left ideal ring is a ring where every left ideal can be generated by only one element in the ideal.}.
\begin{theorem}[{\cite[Theorem 2.5]{FnTsurvey-Umberto}}]
  \label{thm:uniqueMinimalSkewPoly}
  Given any nonempty set $\Omega\subseteq\Fqm$, there exists a unique monic skew polynomial $f\in\FrobSkewPolys$ such that, for any $g$ in the left ideal $\myspan{f}_l$,
  \begin{align*}
      g(\alpha)=0,\ \forall \alpha \in \Omega\ .
  \end{align*}
\end{theorem}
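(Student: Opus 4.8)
The plan is to realize the set of skew polynomials vanishing on $\Omega$ as a left ideal and exploit that $\FrobSkewPolys$ is left-Euclidean. Write $\Omega=\set*{\alpha_1,\dots,\alpha_n}$ and put
\[
  I_\Omega\defeq\set*{g\in\FrobSkewPolys\ |\ g(\alpha)=0\ \text{for all }\alpha\in\Omega}.
\]
First I would check that $I_\Omega$ is a left ideal of $\FrobSkewPolys$. Closure under addition is immediate from \eqref{eq:Frob-remainder-evaluation}, since $g\mapsto g(\alpha)$ is $\Fqm$-linear in the coefficients of $g$. Closure under left multiplication is the step where non-commutativity genuinely matters: one may \emph{not} use $(h\cdot g)(\alpha)=h(\alpha)g(\alpha)$, but the product-evaluation formula recalled above (\cite[Theorem 2.7]{lam1988vandermonde}) gives $(h\cdot g)(\alpha)=0$ whenever $g(\alpha)=0$, which is exactly what is needed. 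This already yields the inclusion $\myspan{f}_l\subseteq I_\Omega$ asserted in the statement for any $f\in I_\Omega$; the substantive claim is the existence of a unique monic generator.

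Next I would show $I_\Omega\neq\set*{0}$. Looking for $g=\sum_{i=0}^n g_i\SkewVar^i$ with $g(\alpha_j)=\sum_{i=0}^n g_i N_i(\alpha_j)=0$ for every $j\in[n]$ is a homogeneous $\Fqm$-linear system of $n$ equations in the $n+1$ unknowns $g_0,\dots,g_n$, hence it has a nonzero solution. (Alternatively, the iterated $\lclm$ of $\SkewVar-\alpha_1,\dots,\SkewVar-\alpha_n$ lies in $I_\Omega$.) Pick $f\in I_\Omega\setminus\set*{0}$ of minimal degree; as $\Fqm$ is a field, $\lc(f)^{-1}f$ is monic and still lies in $I_\Omega$ by the left-ideal property, so we may assume $f$ is monic.

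The core step is to prove $I_\Omega=\myspan{f}_l$. The inclusion $\supseteq$ was noted above. For $\subseteq$, take $g\in I_\Omega$; since $\lc(f)=1$ is invertible, \cref{thm:right-division-over-ring} produces a right division $g=qf+r$ with $\deg r<\deg f$. Then $r=g-qf\in I_\Omega$, because $g\in I_\Omega$ and $qf\in\myspan{f}_l\subseteq I_\Omega$, and minimality of $\deg f$ in $I_\Omega$ forces $r=0$, so $g=qf\in\myspan{f}_l$. For uniqueness, if $f'$ is monic with $\myspan{f'}_l=I_\Omega=\myspan{f}_l$, then \cref{lem:g-right-div-f} gives $f\divides_r f'$ and $f'\divides_r f$; comparing degrees via \cref{thm:deg-prod-skew} and normalizing the leading coefficient forces $f'=f$. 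This $f$ is precisely the minimal polynomial $f_\Omega$ of \cref{def:minPoly}, which also shows $\FrobSkewPolys$ is a principal left ideal ring.

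I expect the only genuinely delicate point to be the closure of $I_\Omega$ under left multiplication — resisting the false identity $(h\cdot g)(\alpha)=h(\alpha)g(\alpha)$ and invoking the correct product-evaluation formula — together with the minor bookkeeping that guarantees $I_\Omega$ contains a nonzero, hence a monic, element. After that, everything reduces to the standard fact that a left-Euclidean domain has only principal left ideals.
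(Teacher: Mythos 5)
The paper states this theorem as a citation to the Mart\'inez-Pe\~nas survey and gives no proof of its own, so there is no internal argument to compare against. Your proof is correct and is the standard principal-left-ideal argument: closure of $I_\Omega$ under left multiplication is indeed the one delicate non-commutative step, and the product-evaluation formula of Lam--Leroy is exactly what makes it work; the non-vanishing of $I_\Omega$, the Euclidean right-division against a minimal-degree element to get $I_\Omega=\myspan{f}_l$, and uniqueness from mutual right-divisibility of two monic generators together with \cref{thm:deg-prod-skew} are all sound. You are also right to note that, read literally, the displayed condition ``$g(\alpha)=0$ for all $g\in\myspan{f}_l$'' is satisfied by \emph{any} monic element of $I_\Omega$; what is meant, and what you actually prove, is that $f$ is the unique monic generator of $I_\Omega$ (equivalently $\myspan{f}_l=I_\Omega$), which is consistent with the surrounding remarks identifying $f$ with $f_\Omega$ and concluding that $\FrobSkewPolys$ is a principal left ideal ring.
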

The minimal polynomial can be constructed by an iterative Newton interpolation approach as follows (cf.~\cite[Proposition 2.6]{FnTsurvey-Umberto}).
First, set
\begin{align*}
  g_1 =\SkewVar-\alpha_1 .
\end{align*}
Then for $i=2,3,\dots, n$, perform
\begin{align}\label{eq:newtonInterpolation}
  g_i=
  \begin{cases}
    g_{i-1}\quad &\text{if } g_{i-1}(\alpha_i)=0,\\
    \left(\SkewVar-\alpha^{g_{i-1}(\alpha_i)}\right)\cdot g_{i-1}\quad &\text{otherwise,}
  \end{cases}
\end{align}
where $\alpha^{g_{i-1}(\alpha_i)}= \Frobaut(g_{i-1}(\alpha_i))\alpha_i g_{i-1}(\alpha_i)^{-1}$ is the $\Frobaut$-conjugate of $\alpha$ w.r.t.~$g_{i-1}(\alpha_i)$ as in \eqref{eq:Frob-conj}.
Upon termination, we have $g_n=f_{\Omega}$.

It can be shown that the minimal polynomial of a set $\Omega$ can be also constructed by computing
\begin{align}\label{eq:lclm}
  f_\Omega = \lclm_{\alpha\in\Omega} \{\SkewVar-\alpha\}\ ,
\end{align}
where $\lclm$ is the least common left multiple as defined in \cref{def:gcrd-lclm}.

The following theorem summarizes some properties of the minimal polynomial of a set $\Omega\subseteq\Fqm$.
\begin{theorem}[{\cite[Theorem 7]{liu2017matroidal}}]
  \label{thm:prop-minimal-skew-poly}
  Let $\Omega_1,\Omega_2\subseteq \Fqm$ and $f_{\Omega_1},f_{\Omega_2}$ be their minimal polynomials. Then
  \begin{itemize}
  \item $f_{\Omega_1\cup\Omega_2}=\lclm(f_{\Omega_1},f_{\Omega_2})$.
  \item $f_{\Omega_1\cap\Omega_2}=\gcrd(f_{\Omega_1},f_{\Omega_2})$.
  \item $\deg(f_{\Omega_1\cup\Omega_2})=\deg(f_{\Omega_1})+\deg(f_{\Omega_2})-\deg(f_{\Omega_1\cap\Omega_2})$.
  \end{itemize}
\end{theorem}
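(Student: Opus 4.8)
The plan is to push everything through the correspondence between subsets of $\Fqm$ and left ideals of $\polyRing:=\FrobSkewPolys$. For nonempty $\Omega\subseteq\Fqm$ set $U_\Omega:=\set*{g\in\polyRing\ |\ g(\alpha)=0\ \forall\alpha\in\Omega}$. By the remainder–evaluation description (\cref{def:eva_division}), $g(\alpha)=0\iff \SkewVar-\alpha\divides_r g\iff g\in\myspan{\SkewVar-\alpha}_l$, so $U_\Omega=\bigcap_{\alpha\in\Omega}\myspan{\SkewVar-\alpha}_l$ is a left ideal, and by \cref{thm:uniqueMinimalSkewPoly} together with \cref{lem:g-right-div-f} it is exactly $\myspan{f_\Omega}_l$, with $f_\Omega$ its unique monic generator. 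First I would record the standard dictionary in a principal left ideal ring: $\myspan{f_1}_l\cap\myspan{f_2}_l=\myspan{\lclm(f_1,f_2)}_l$ and $\myspan{f_1}_l+\myspan{f_2}_l=\myspan{\gcrd(f_1,f_2)}_l$. Both follow from \cref{def:gcrd-lclm}, \cref{lem:g-right-div-f} and \cref{thm:deg-prod-skew} (the latter ensures $\deg(hf)=\deg h+\deg f$, so a left ideal's monic generator is its unique monic element of least degree and right-divides every element).

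Bullet 1 then drops out: $g$ vanishes on $\Omega_1\cup\Omega_2$ iff it vanishes on $\Omega_1$ and on $\Omega_2$, hence $U_{\Omega_1\cup\Omega_2}=U_{\Omega_1}\cap U_{\Omega_2}=\myspan{f_{\Omega_1}}_l\cap\myspan{f_{\Omega_2}}_l=\myspan{\lclm(f_{\Omega_1},f_{\Omega_2})}_l$, and comparing monic generators gives $f_{\Omega_1\cup\Omega_2}=\lclm(f_{\Omega_1},f_{\Omega_2})$. Bullet 3 is then a corollary once bullet 2 is in hand: by bullet 1 and the degree identity for the $\lclm$ stated just before \cref{algo:EEAskew}, $\deg f_{\Omega_1\cup\Omega_2}=\deg f_{\Omega_1}+\deg f_{\Omega_2}-\deg\gcrd(f_{\Omega_1},f_{\Omega_2})$, and substituting $\deg\gcrd(f_{\Omega_1},f_{\Omega_2})=\deg f_{\Omega_1\cap\Omega_2}$ from bullet 2 finishes it.

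So the whole weight sits on bullet 2, $f_{\Omega_1\cap\Omega_2}=\gcrd(f_{\Omega_1},f_{\Omega_2})$. One inclusion is free: since $g\mapsto g(\alpha)$ is additive in $g$, any $p_1+p_2$ with $p_j\in U_{\Omega_j}$ vanishes on $\Omega_1\cap\Omega_2$, so $\myspan{\gcrd(f_{\Omega_1},f_{\Omega_2})}_l=U_{\Omega_1}+U_{\Omega_2}\subseteq U_{\Omega_1\cap\Omega_2}=\myspan{f_{\Omega_1\cap\Omega_2}}_l$, whence $f_{\Omega_1\cap\Omega_2}\divides_r\gcrd(f_{\Omega_1},f_{\Omega_2})$ and $\deg f_{\Omega_1\cap\Omega_2}\le\deg\gcrd(f_{\Omega_1},f_{\Omega_2})$. (When $\Omega_1\cap\Omega_2$ would be empty this must be read with $\Omega_1\cap\Omega_2$ replaced by $\set*{\alpha\in\Fqm\ |\ f_{\Omega_1}(\alpha)=f_{\Omega_2}(\alpha)=0}$; replacing each $\Omega_j$ by its root set changes neither $f_{\Omega_j}$ nor $f_{\Omega_1\cup\Omega_2}$, so this costs nothing.) The reverse inequality $\deg f_{\Omega_1\cap\Omega_2}\ge\deg\gcrd(f_{\Omega_1},f_{\Omega_2})$ is the real obstacle, and my plan for it is to exploit the $\Frobaut$-conjugacy structure. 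Each $U_{\Omega_j}$ is a finite intersection of the maximal left ideals $\myspan{\SkewVar-\alpha}_l$ (maximal since $\polyRing/\myspan{\SkewVar-\alpha}_l\cong\Fqm$ is $\polyRing$-simple), so $\polyRing/U_{\Omega_j}$, and hence also $\polyRing/(U_{\Omega_1}+U_{\Omega_2})$, is a finite-dimensional semisimple module; consequently $U_{\Omega_1}+U_{\Omega_2}$ is again an intersection of maximal left ideals of the form $\myspan{\SkewVar-\beta}_l$, i.e.\ $\gcrd(f_{\Omega_1},f_{\Omega_2})=f_{\Omega'}$ for some $\Omega'\subseteq\Fqm$. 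Using \cref{thm:conjugacyClasses} I would then split $\Fqm$ into its $\Frobaut$-conjugacy classes, observe that $\myspan{\SkewVar-\alpha}_l$ and $\myspan{\SkewVar-\beta}_l$ yield isomorphic (hence ``mutually non-interfering'') simple modules precisely when $\alpha,\beta$ lie in the same class, so that $U_\Omega$ and therefore $\lclm$ and $\gcrd$ all factor class by class, reducing everything to a single class $C_{\Frobaut}(a)$; there I would invoke the Lam--Leroy dictionary (as in \cite[Sec.~2]{lam1988vandermonde} and \cite{FnTsurvey-Umberto}) identifying the root sets of skew polynomials inside $C_{\Frobaut}(a)$ with the $\Fq$-subspaces of $\Fqm$, under which $f_\Omega$ corresponds to the $\Fq$-span of the ``logarithmic coordinates'' of $\Omega$, $\deg$ to $\Fq$-dimension, $\lclm$ to subspace sum, and $\gcrd$ to subspace intersection. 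Under this dictionary the needed equality of degrees is exactly the modular law $\dim(V\cap W)+\dim(V+W)=\dim V+\dim W$; combined with the already-established $f_{\Omega_1\cap\Omega_2}\divides_r\gcrd(f_{\Omega_1},f_{\Omega_2})$ and monicity it gives bullet 2, and then bullet 3 as above.

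The hard part, as indicated, is the reverse inclusion in bullet 2: everything hinges on the reduction to a single $\Frobaut$-conjugacy class and on setting up the subspace dictionary cleanly — in particular checking that minimal polynomials, $\lclm$ and $\gcrd$ respect the conjugacy-class decomposition — which is where the genuine effort lies, the rest being formal manipulation of left ideals.
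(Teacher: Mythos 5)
The paper does not actually prove this theorem; it is cited verbatim as \cite[Theorem~7]{liu2017matroidal} with no proof given. The closest thing the dissertation proves is property \ref{p:gcrd} in the appendix (over the multivariate coefficient ring $\mulVarRng$), which asserts only bullet two and \emph{adds} the hypothesis that $\rootSet_1\cup\rootSet_2$ is P-independent. That proof is short: right-divisibility $f_{\rootSet_1\cap\rootSet_2}\divides_r\gcrd(f_{\rootSet_1},f_{\rootSet_2})$ comes for free, and equality of degrees follows by combining the Ore identity $\deg\gcrd+\deg\lclm=\deg f_1+\deg f_2$ with the fact that P-independence turns each $\deg f_{\rootSet}$ into $|\rootSet|$, so everything collapses to the inclusion--exclusion identity $|\rootSet_1|+|\rootSet_2|-|\rootSet_1\cup\rootSet_2|=|\rootSet_1\cap\rootSet_2|$.

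Your route is genuinely different and, importantly, more general. You avoid the P-independence crutch entirely: the hard degree inequality is obtained by decomposing along $\Frobaut$-conjugacy classes, passing to the Lam--Leroy subspace dictionary within a single class, and reading off the modular law for $\Fq$-subspaces. That is exactly the right mechanism when P-independence is not assumed, since then $\deg f_\Omega$ is no longer $|\Omega|$ and the paper's counting argument breaks. What the paper's P2 proof buys is brevity and avoidance of structure theory; what your proof buys is that it actually covers the statement \emph{as written}, which has no P-independence hypothesis.

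Two substantive points you have already half-noticed but should make load-bearing rather than parenthetical. First, the statement as written is literally false for the set-theoretic intersection: take $q=2$, $m=2$, $\Omega_1=\set*{1,\alpha}$, $\Omega_2=\set*{1,\alpha+1}$ in $\F_4[\SkewVar;\Frobaut]$; both minimal polynomials equal $\SkewVar^2+1$ (root set $\set*{1,\alpha,\alpha+1}$), so $\gcrd(f_{\Omega_1},f_{\Omega_2})$ has degree $2$, while $f_{\Omega_1\cap\Omega_2}=f_{\set*{1}}=\SkewVar-1$ has degree $1$. Your proposed fix (replace each $\Omega_j$ by the full root set $\overline{\Omega}_j=\set*{\alpha\mid f_{\Omega_j}(\alpha)=0}$, which changes none of $f_{\Omega_1}$, $f_{\Omega_2}$, $f_{\Omega_1\cup\Omega_2}$) is exactly the right repair, but it is not confined to the case $\Omega_1\cap\Omega_2=\varnothing$ as your parenthetical suggests; the example above has a nonempty intersection. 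Second, the place where your argument has real work left is the claim that $U_{\Omega_1}+U_{\Omega_2}$ is again an intersection of ideals $\myspan{\SkewVar-\beta}_l$ and that $\gcrd$ and $\lclm$ factor class by class. The first part follows cleanly from semisimplicity of $\polyRing/U_{\Omega_1}$ and the fact that quotients of semisimple modules whose simple summands are all of the form $\polyRing/\myspan{\SkewVar-\alpha}_l$ are again of that type; you should not skip this, since for general monic $g$ the ideal $\myspan{g}_l$ need not have that form (irreducible skew polynomials of degree $\geq 2$ exist). The class-by-class factoring then follows by matching isotypic components. Once those are pinned down, the single-class reduction plus the subspace dictionary deliver bullet two, and bullets one and three are, as you say, formal.
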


With the notion of minimal polynomials, we can now introduce the ``distinctness'' for the roots of skew polynomials.
\begin{definition}[P-independent set] 
  \label{def:PindSet}
  A set $\Omega\subseteq \Fqm$ is P-independent in $\FrobSkewPolys$ if the degree of its minimal polynomial is equal to $|\Omega|$, i.e., $\deg (f_{\Omega}) = |\Omega|$.
\end{definition}

Recall from \cref{def:skewVandermonde} that a $\Frobaut$-Vandermonde matrix can be constructed for a set $\Omega\subseteq\Fqm$.
It has been shown in \cite[Theorem 8]{lam1986general} that
\begin{align}
  \label{eq:PindRankVand}
  \Omega \textrm{ is P-independent } \iff |\PindSet|=\rank(\bV^{\Frobaut}(\Omega))\ .
\end{align}

We can derive the following result from the definition of P-independent sets.
\begin{lemma}\label{lem:noMoreZero}
Given a P-independent set $\Omega$, for any subset $\rootSet\subset \Omega$, let $f_{\rootSet}(x)\in\FrobSkewPolys$ be the minimal polynomial of $\rootSet$. Then, for any element $\alpha\in\Omega\setminus\rootSet$, $f_{\rootSet}(\alpha)\neq 0$.
\end{lemma}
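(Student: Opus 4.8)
The plan is to argue by contradiction, the whole argument being degree bookkeeping for minimal polynomials. First I would record the elementary bound $\deg f_S \le \lvert S\rvert$ valid for \emph{every} finite $S\subseteq\Fqm$: this is immediate from the Newton-interpolation construction \eqref{eq:newtonInterpolation}, since one starts from $g_1=\SkewVar-\alpha_1$ of degree $1$ and each further step either leaves the degree unchanged or increases it by exactly one, and the final polynomial equals $f_S$. I also adopt the convention $f_\emptyset := 1$ (degree $0$), which vacuously vanishes on $\emptyset$ and is needed so that \cref{thm:prop-minimal-skew-poly} can be applied across disjoint unions.

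Second, I would upgrade this to the statement that every subset of a P-independent set is again P-independent. Let $\rootSet\subseteq\Omega$ and write $\Omega=\rootSet\cup(\Omega\setminus\rootSet)$, a disjoint union. By the union-degree formula of \cref{thm:prop-minimal-skew-poly} together with $\deg f_{\rootSet\cap(\Omega\setminus\rootSet)}=\deg f_\emptyset=0$,
\[
\lvert\Omega\rvert=\deg f_\Omega=\deg f_{\rootSet}+\deg f_{\Omega\setminus\rootSet}\le \lvert\rootSet\rvert+\lvert\Omega\setminus\rootSet\rvert=\lvert\Omega\rvert,
\]
where the first equality is P-independence of $\Omega$ (\cref{def:PindSet}) and the inequality is the bound from the first step. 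Hence equality holds throughout, so $\deg f_{\rootSet}=\lvert\rootSet\rvert$. Applying the same computation to $\rootSet':=\rootSet\cup\{\alpha\}\subseteq\Omega$ for $\alpha\in\Omega\setminus\rootSet$ gives $\deg f_{\rootSet'}=\lvert\rootSet'\rvert=\lvert\rootSet\rvert+1$.

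Third, the contradiction: suppose $f_{\rootSet}(\alpha)=0$ for some $\alpha\in\Omega\setminus\rootSet$. Since $f_{\rootSet}$ is monic and, by definition, also vanishes on every element of $\rootSet$, it is a monic polynomial vanishing on all of $\rootSet\cup\{\alpha\}=\rootSet'$. By the minimality defining $f_{\rootSet'}$ (\cref{def:minPoly}, with uniqueness from \cref{thm:uniqueMinimalSkewPoly}) this forces $\deg f_{\rootSet'}\le\deg f_{\rootSet}$. But the degree count of the previous paragraph gives $\deg f_{\rootSet'}=\lvert\rootSet\rvert+1>\lvert\rootSet\rvert=\deg f_{\rootSet}$, a contradiction. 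Hence $f_{\rootSet}(\alpha)\neq 0$.

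I do not expect a genuine obstacle here; the only points requiring a line of care are the convention $f_\emptyset=1$ and the universally valid bound $\deg f_S\le\lvert S\rvert$, both of which follow directly from \eqref{eq:newtonInterpolation}. An essentially equivalent alternative route would invoke \cref{lem:g-right-div-f}: if $f_{\rootSet}(\alpha)=0$ then $f_{\rootSet}$ lies in the left ideal $\myspan{f_{\rootSet'}}_l$, so $f_{\rootSet'}\divides_r f_{\rootSet}$, again contradicting $\deg f_{\rootSet'}>\deg f_{\rootSet}$.
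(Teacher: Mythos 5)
Your proof is correct and follows essentially the same degree-bookkeeping contradiction that the paper uses: assume $f_{\rootSet}(\alpha)=0$, note $f_{\rootSet}$ is then monic and vanishes on $\rootSet\cup\{\alpha\}$, and derive a degree clash with the P-independence of $\rootSet\cup\{\alpha\}$. The one place you diverge is in your second step: you re-derive from the union-degree identity of \cref{thm:prop-minimal-skew-poly} (together with the convention $f_{\emptyset}=1$) that any subset of a P-independent set is P-independent. The paper takes this hereditary property for granted, simply asserting that $\rootSet\cup\{\alpha\}\subseteq\Omega$ is P-independent; it is the content of \cref{thm:lam-prop-Pind-sets}, which the paper states only afterwards and cites to Lam. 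Your version is therefore more self-contained, at the modest cost of needing the $f_{\emptyset}=1$ convention (which the paper's \cref{def:minPoly} does not cover, since it speaks only of nonempty sets — you are right to flag it). You also avoid asserting the stronger identity $f_{\rootSet\cup\{\alpha\}}=f_{\rootSet}$ that appears in the paper's proof and only use the degree inequality $\deg f_{\rootSet'}\le\deg f_{\rootSet}$, which is cleaner. The alternative via \cref{lem:g-right-div-f} you sketch at the end is also valid and equivalent.
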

\begin{proof}
Assume $f_{\rootSet}(\alpha)=0$, then the minimal polynomial $f_{\rootSet\cup\{\alpha\}}=f_{\rootSet}$ and $\deg(f_{\rootSet\cup\{\alpha\}}) = |\rootSet|< |\rootSet\cup\{\alpha\}|$, which contradicts to that $\rootSet\cup\{\alpha\}\subseteq \Omega$ is P-independent.
\end{proof}

In the early work by Lam, the property of unions and subsets of P-independent sets has been derived, as summarized in the following theorem.
\begin{theorem}[{\cite[Theorem 22--23]{lam1986general}}]
  \label{thm:lam-prop-Pind-sets}
For any two sets $\Omega_1,\Omega_2\subseteq\Fqm$, let $\Omega = \Omega_1\cup \Omega_2$.
\begin{itemize}
\item If $\Omega_1$ and $\Omega_2$ are P-independent and no element in $\Omega_1$ is $\Frobaut$-conjugate to any element in $\Omega_2$, then $\Omega$ is P-independent.
\item If $\Omega$ is P-independent, then $\Omega_1$ and $\Omega_2$ are P-independent. In other words, any subset of a P-independent set is P-independent.
\end{itemize}
\end{theorem}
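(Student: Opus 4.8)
The plan is to treat the two items separately, doing the subset statement first because the union statement relies on it. For a finite $S\subseteq\Fqm$ write $Z(f_S)\defeq\set*{\alpha\in\Fqm:f_S(\alpha)=0}$ for the root set of the minimal polynomial $f_S\in\FrobSkewPolys$, and note $\deg f_S\leq|S|$ always holds (run the Newton construction \eqref{eq:newtonInterpolation}, or use \eqref{eq:lclm} and iterate the relation $\deg\lclm(g,h)=\deg g+\deg h-\deg\gcrd(g,h)$), so that $S$ is P-independent exactly when this is an equality.

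\emph{Subsets.} It suffices to show: if $\Omega$ is P-independent and $\gamma\in\Omega$, then $\Omega'\defeq\Omega\setminus\set*{\gamma}$ is P-independent; iterating this and applying it to $\Omega_1$ and $\Omega_2$ separately gives the claim. Build $f_\Omega$ via \eqref{eq:newtonInterpolation}, processing the elements of $\Omega'$ first (in any order) and $\gamma$ last. By uniqueness of the minimal polynomial (\cref{thm:uniqueMinimalSkewPoly}) the polynomial obtained after the first $|\Omega'|$ steps is $f_{\Omega'}$, and the last step left-multiplies it by $1$ or by a degree-$1$ factor, so $\deg f_\Omega\leq\deg f_{\Omega'}+1$. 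Since $\deg f_\Omega=|\Omega|=|\Omega'|+1$ and $\deg f_{\Omega'}\leq|\Omega'|$, both inequalities are equalities; in particular $\deg f_{\Omega'}=|\Omega'|$.

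\emph{Unions.} Let $\Omega_1,\Omega_2$ be P-independent with no element of $\Omega_1$ being $\Frobaut$-conjugate to any element of $\Omega_2$, and put $\Omega\defeq\Omega_1\cup\Omega_2$. Reflexivity of $\Frobaut$-conjugacy forces $\Omega_1\cap\Omega_2=\varnothing$, so $|\Omega|=|\Omega_1|+|\Omega_2|$. From \eqref{eq:lclm} (equivalently \cref{thm:prop-minimal-skew-poly}) and the degree relation,
\[
  \deg f_\Omega=\deg\lclm(f_{\Omega_1},f_{\Omega_2})=|\Omega_1|+|\Omega_2|-\deg\gcrd(f_{\Omega_1},f_{\Omega_2}),
\]
so it remains to prove $\gcrd(f_{\Omega_1},f_{\Omega_2})=1$. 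For this I would first establish the structural fact that $Z(f_S)\subseteq\bigcup_{\gamma\in S}C_{\Frobaut}(\gamma)$ for every finite $S$, by induction on $|S|$: in the nontrivial Newton step $f_{S'\cup\set*{\gamma}}=(\SkewVar-\gamma^{f_{S'}(\gamma)})\,f_{S'}$ one has $\gamma^{f_{S'}(\gamma)}\in C_{\Frobaut}(\gamma)$, and by the product-evaluation formula a root $\mu$ of the product either satisfies $f_{S'}(\mu)=0$ (so $\mu$ lies in a class met by $S'$, by induction) or satisfies $\mu^{f_{S'}(\mu)}=\gamma^{f_{S'}(\gamma)}$, which forces $C_{\Frobaut}(\mu)=C_{\Frobaut}(\gamma)$ since conjugacy classes partition $\Fqm$ (\cref{thm:conjugacyClasses}).

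Now set $g\defeq\gcrd(f_{\Omega_1},f_{\Omega_2})$ and suppose $\deg g\geq 1$. Since $g$ right-divides $f_{\Omega_1}$, the product-evaluation formula gives $Z(g)\subseteq Z(f_{\Omega_1})$, and similarly $Z(g)\subseteq Z(f_{\Omega_2})$; by the structural fact and the hypothesis these two root sets lie in disjoint unions of conjugacy classes, so $Z(g)=\varnothing$. But $f_{\Omega_1}$ is a product of $|\Omega_1|$ linear factors (each Newton step for the P-independent $\Omega_1$ raises the degree, hence left-multiplies by a degree-$1$ factor), and a nonconstant monic right divisor of a product of linear polynomials always has a root: along the chain of monic right divisors $1=d_0\mid d_1\mid\dots\mid d_{|\Omega_1|}=f_{\Omega_1}$ with $\deg d_j=j$, take the least $j$ with $\deg\gcrd(g,d_j)\geq 1$; then $\gcrd(g,d_{j-1})=1$, so $\deg\lclm(\gcrd(g,d_j),d_{j-1})=\deg\gcrd(g,d_j)+(j-1)$, and since this $\lclm$ right-divides $d_j$ we get $\deg\gcrd(g,d_j)\leq 1$, i.e.\ $\gcrd(g,d_j)=\SkewVar-c$ for some $c\in\Fqm$ with $g(c)=0$ — contradicting $Z(g)=\varnothing$. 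Hence $\deg g=0$, so $\gcrd(f_{\Omega_1},f_{\Omega_2})=1$ and the proof is complete. The main obstacle is exactly making the conjugacy-class partition (\cref{thm:conjugacyClasses}) interact correctly with the $\gcrd$/$\lclm$ bookkeeping; the two lemmas carrying the weight are the structural fact $Z(f_S)\subseteq\bigcup_{\gamma\in S}C_{\Frobaut}(\gamma)$ and the root-extraction property for products of linear polynomials. One could instead induct on $|\Omega_2|$, adding one $\beta\in\Omega_2$ at a time and checking $f_{\Omega_1\cup(\Omega_2\setminus\set*{\beta})}(\beta)\neq 0$ via \cref{lem:noMoreZero} together with the fact that the roots of $f_S$ inside a single conjugacy class are governed only by the part of $S$ in that class — the same obstacle in a different guise.
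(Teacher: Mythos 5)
The paper does not prove this theorem; it is cited directly from Lam's Theorems 22 and 23, so there is no in-paper argument to compare your proof against. Your proof is correct and self-contained. The subset direction cleanly exploits the freedom to order the Newton construction so that $\Omega'$ is processed first, whence $\deg f_{\Omega}\leq\deg f_{\Omega'}+1$ and the squeeze against $\deg f_{\Omega}=|\Omega|$ gives $\deg f_{\Omega'}=|\Omega'|$. For the union direction, both auxiliary facts check out: the containment $Z(f_S)\subseteq\bigcup_{\gamma\in S}C_{\Frobaut}(\gamma)$ follows by induction on the Newton step using the product-evaluation formula together with the partition of $\Fqm$ into $\Frobaut$-conjugacy classes (\cref{thm:conjugacyClasses}); and the root-extraction argument --- take the smallest $j$ in the chain $1=d_0,\ldots,d_{|\Omega_1|}=f_{\Omega_1}$ of Newton partial products with $\deg\gcrd(g,d_j)\geq 1$, use the $\lclm$/$\gcrd$ degree identity and the fact that $\lclm(\gcrd(g,d_j),d_{j-1})$ right-divides $d_j$ to pin $\gcrd(g,d_j)=\SkewVar-c$, whence $g(c)=0$ --- correctly produces a root of $g$, contradicting $Z(g)=\varnothing$.
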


The relation between P-independence and linear independence and the structure of roots of the minimal polynomial $f_{\Omega}$ of some special sets $\Omega\subseteq\Fqm$ are concluded in the following.

\begin{theorem}[{\cite[Lemma 1, Theorem 10]{liu2017matroidal}}]
    Let $\Omega=\set*{\alphas{n}}\subseteq C_{\Frobaut}(\priEle^l)$ for some $l\in[m]$, and $a_1,\dots,a_n\in\Fqm$ be such that $\alpha_i=\priEle^l a_i^{q-1}$ for all $i\in[n]$. 
    Then, $\Omega$ is P-independent if and only if $a_1,\dots,a_n$ are linearly independent over $\Fq$.
    Moreover, let $f_{\Omega}$ be the minimal polynomial of $\Omega$ and $\overline{\Omega}\defeq\set*{\alpha\in\Fqm\ |\ f_{\Omega}(\alpha) = 0}$ be the set of roots of $f_{\Omega}$. Then
    \begin{align*}
      \overline{\Omega}=\set*{\left. \priEle^l a^{q-1} \ \right| \ a \in\myspan{a_1,\dots,a_n}_q}\subseteq C_{\Frobaut}(\priEle^l)\ ,
    \end{align*}
    where $\myspan{a_1,\dots,a_n}_q$ denotes the $\Fq$-subspace of $\Fqm$ generated by $a_1,\dots,a_n$.
\end{theorem}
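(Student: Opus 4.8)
The plan is to reduce the statement to the classical theory of $q$-linearized polynomials, exploiting the structure of the conjugacy class $C_{\Frobaut}(\priEle^l)$.

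\emph{The dictionary.} Since $0\notin C_{\Frobaut}(\priEle^l)$, every $a_i$ is nonzero. To a skew polynomial $f=\sum_i f_i\SkewVar^i\in\FrobSkewPolys$ I attach the $\Fq$-linearized polynomial $L_f(y):=\sum_i f_iN_i(\priEle^l)\,y^{q^i}$, i.e.\ the coefficients of $f$ rescaled by the nonzero constants $N_i(\priEle^l)=\priEle^{l(q^i-1)/(q-1)}$. From $N_i(\priEle^lc^{q-1})=N_i(\priEle^l)c^{q^i-1}$ and \eqref{eq:Frob-remainder-evaluation} one gets the identity $f(\priEle^lc^{q-1})=c^{-1}L_f(c)$ for all $c\in\Fqm^*$. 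The map $f\mapsto L_f$ is a degree-preserving bijection ($\deg f$ equals the $q$-degree of $L_f$, as each $N_i(\priEle^l)\neq0$), and $L_f$ is $\Fq$-linear; hence $f$ vanishes on $\Omega=\set*{\priEle^la_1^{q-1},\dots,\priEle^la_n^{q-1}}$ iff $L_f$ vanishes on $a_1,\dots,a_n$, i.e.\ iff $V:=\myspan{a_1,\dots,a_n}_q\subseteq\ker L_f$.

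\emph{The degree of $f_\Omega$ and the first equivalence.} A nonzero $\Fq$-linearized polynomial has $q$-degree at least the $\Fq$-dimension of its root space, so any $f$ vanishing on $\Omega$ has $\deg f\geq\dim_{\Fq}V=:r$; conversely the subspace polynomial $L_V(y):=\prod_{v\in V}(y-v)$ is $\Fq$-linearized of $q$-degree $r$, separable, and vanishes on $V\supseteq\set*{a_1,\dots,a_n}$, hence corresponds to a monic degree-$r$ skew polynomial vanishing on $\Omega$. Therefore $\deg f_\Omega=r$ and, by uniqueness of the minimal polynomial, $f_\Omega$ is that polynomial, so $L_{f_\Omega}=N_r(\priEle^l)L_V$. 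Now $\Omega$ is P-independent $\iff\deg f_\Omega=|\Omega|=n\iff\dim_{\Fq}V=n\iff a_1,\dots,a_n$ are $\Fq$-linearly independent, which proves the first claim. (Alternatively this follows from \eqref{eq:PindRankVand} after factoring the $\Frobaut$-Vandermonde matrix of $\Omega$ as a diagonal matrix times the Moore matrix of $a_1,\dots,a_n$ times another diagonal matrix, so that its rank equals $\dim_{\Fq}V$.)

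\emph{The root set, and the main obstacle.} Write $L_V=\sum_i p_iy^{q^i}$, so $p_r=1$ and $p_0\neq0$ (separability); then $L_{f_\Omega}=N_r(\priEle^l)L_V$ yields $(f_\Omega)_i=N_r(\priEle^l)\,p_i\,\priEle^{-l(q^i-1)/(q-1)}$. For $\alpha=\priEle^lc^{q-1}\in C_{\Frobaut}(\priEle^l)$ the identity of the first step gives $f_\Omega(\alpha)=0\iff L_{f_\Omega}(c)=0\iff c\in V$, so the roots of $f_\Omega$ inside $C_{\Frobaut}(\priEle^l)$ are exactly $\set*{\priEle^la^{q-1}:a\in V\setminus\set*{0}}$ -- the claimed set. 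The real work is to show $f_\Omega$ has no further roots, and I expect this to be the only serious difficulty. Let $\beta\in\Fqm$ with $\beta\notin C_{\Frobaut}(\priEle^l)$ and put $w:=\priEle^{-l}\beta$; using the formula for $(f_\Omega)_i$ and $N_i(\beta)=\beta^{(q^i-1)/(q-1)}$ one finds $f_\Omega(\beta)=N_r(\priEle^l)\sum_i p_i\,w^{(q^i-1)/(q-1)}$. If $\beta=0$ this equals $p_0\neq0$. If $\beta\neq0$, then $w\in\Fqm^*$ is not a $(q-1)$-th power (otherwise $\beta\in C_{\Frobaut}(\priEle^l)$); choosing $\eta\in\overline{\Fq}$ with $\eta^{q-1}=w$ forces $\eta\notin\Fqm$, and then $w^{(q^i-1)/(q-1)}=\eta^{q^i-1}$, so the sum equals $\eta^{-1}L_V(\eta)=\eta^{-1}\prod_{v\in V}(\eta-v)\neq0$ since $\eta\notin\Fqm\supseteq V$. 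In every case $f_\Omega(\beta)\neq0$, so $\overline{\Omega}=\set*{\priEle^la^{q-1}:a\in\myspan{a_1,\dots,a_n}_q\setminus\set*{0}}\subseteq C_{\Frobaut}(\priEle^l)$, as desired. The hard part is exactly this last step -- ensuring the minimal polynomial, which is forced to vanish on a full $\Fq$-subspace worth of points inside one conjugacy class, picks up no spurious roots in the other $q-2$ classes or at $0$; the passage to $\overline{\Fq}$, together with the fact that dividing a point outside $C_{\Frobaut}(\priEle^l)$ by $\priEle^l$ yields a non-$(q-1)$-th power whose $(q-1)$-th root escapes $\Fqm$, is the cleanest route I see.
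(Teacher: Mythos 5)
Your proof is correct, and the paper does not supply its own argument for this result (it is cited from \cite{liu2017matroidal}), so there is nothing in-text to compare against beyond the cited source. The dictionary $f(\priEle^l c^{q-1})=c^{-1}L_f(c)$, the degree comparison with the subspace polynomial $L_V$, and the passage to $\overline{\F_q}$ to rule out roots outside $C_{\Frobaut}(\priEle^l)$ all check out. The one observation worth recording is that you actually prove something slightly sharper than the stated equality: $f_\Omega(0)=(f_\Omega)_0=p_0N_r(\priEle^l)\neq0$, so $0\notin\overline{\Omega}$, and the correct set is $\set*{\priEle^l a^{q-1}\ |\ a\in\myspan{a_1,\dots,a_n}_q\setminus\set*{0}}$; the statement as written, which lets $a$ range over all of the $\F_q$-span including $0$ while simultaneously claiming $\overline{\Omega}\subseteq C_{\Frobaut}(\priEle^l)$, contains a small inconsistency that your proof implicitly repairs.

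One remark on the last step, since you flagged it as the hard part: the argument that $\eta\notin\F_{q^m}$ forces $L_V(\eta)\neq0$ is clean and elementary, but it is also worth noting it recovers a general fact from the Lam--Leroy theory (that the zero set of the minimal polynomial of a P-independent subset of a single conjugacy class stays inside that class). Your version is preferable for a self-contained write-up because it avoids invoking that machinery; the alternative via \eqref{eq:PindRankVand} and the diagonal-times-Moore factorization of the $\Frobaut$-Vandermonde matrix, which you mention parenthetically for the first claim, would still leave the containment $\overline{\Omega}\subseteq C_{\Frobaut}(\priEle^l)$ to be established separately, so the route you chose is the more economical one.
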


\subsection{Applications of Skew Polynomials in Coding Theory}
Codes constructed from skew polynomials were first introduced and studied by Boucher and Ulmer \cite{boucher2007skew,boucher2009modules, boucher2011note, boucher2014self}.
Various classes of codes based on skew polynomials were constructed thereafter. For instance, codes over rings \cite{abualrub2010skew}, evaluation codes \cite{boucher2014linear,liu2015construction}, optimal codes in skew and sum-rank metric \cite{martinez2018skew} (well-known as linearized Reed-Solomon codes), skew Reed-Muller codes \cite{geiselmann2019skew}, new optimal codes in the rank metric \cite{sheekey2020new}, skew convolutional codes \cite{sidorenko2020skew}, skew-cyclic codes \cite{gluesing2021introduction}, sum-rank BCH and cyclic-skew-cyclic codes \cite{martinez2021sum}, quantum codes \cite{aydin2016optimal, dinh2021class}, and twisted linearized Reed-Solomon codes \cite{neri2022twisted}.
Skew polynomials have been considered to construct cryptographic schemes \cite{coulter2001giesbrecht, boucher2010key}, Shamir's secret sharing scheme \cite{zhang2010secret}, DNA codes \cite{gursoy2017reversible} and maximal-recoverable locally repairable codes (MR-LRCs) \cite{gopi2022improved}.

\section{Linear Block Codes Constructed from Polynomials}
\label{sec:block-codes}
Let $\polyRing$ be a polynomial ring, a polynomial code can be simply defined as a set of polynomials fulfilling some conditions,
\begin{align*}
  \cC \defeq \{\left. f\in \polyRing\ \right|\ f \text{ fulfills certain conditions }\}\ .
\end{align*}
For instance, a Reed-Solomon (RS) code of dimension $k$ is a set of univariate polynomials of degree less than $k$ and a Reed-Muller (RM) code of dimension $\binom{m+d}{m}$ is a set of $m$-variate polynomials of total degree at most $d$. We denote these two set of polynomials respectively by
\begin{equation}
  \label{eq:RS-RM-poly-def}
  \begin{split}
    \RS_q(k) &\defeq \{\left. f\in\Fq[x]\ \right|\ \deg(f)<k\}\ ,\\
    \RM_q(d,m) &\defeq \{\left. f\in\F_q[x_1,\dots, x_m]\ \right|\ \deg(f)\leq d\}\ .
  \end{split}
\end{equation}

In most of the coding theory literature, as shown in the transmission model in \cref{fig:transmission-model}, the elements in a code, called \emph{codewords}, are considered to be a vector $\bc$ of length $n$ over a certain alphabet $\cA$ (e.g., $\Fq$ for the $\RS_q(k)$ code) and are mapped from an \emph{information} vector $\bu$ of length $k$.
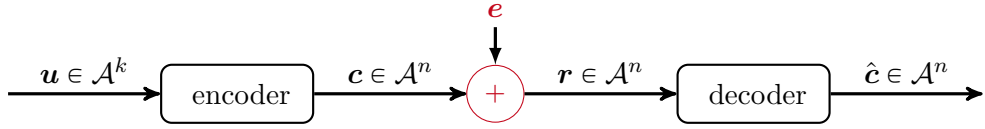
\begin{figure}[h]
  \centering
\begin{tikzpicture}
  \node (r1)[] {};
  \node (r2)[mybox_block,minimum width=2cm,right= 2cm of r1] {$\hspace{-2.4ex}$encoder};
  \node (r3)[circle,draw=TUMRed,right= 2cm of r2] {\textcolor{TUMRed}{$+$}};
  \node (r4)[mybox_block,minimum width=2cm,right= 2cm of r3] {$\hspace{-2.4ex}$decoder};
  \node (r5)[right = 2cm of r4] {};
  \node (c2)[above=0.5cm of r3] {\textcolor{TUMRed}{$\be$}};
  \path[-]
  (r1) edge[->,very thick] node[above] {$\bu\in\cA^k$} (r2)
  (r2) edge[->,very thick] node[above] {$\bc\in\cA^n$} (r3)
  (r3) edge[->,very thick] node[above] {$\br\in\cA^n$} (r4)
  (r4) edge[->,very thick] node[above] {$\hat{\bc}\in\cA^n$} (r5)
  (c2) edge[-latex, very thick] (r3);
\end{tikzpicture}
  \caption{Transmission model via a linear $[n,k]$ block code with an alphabet $\cA$.}
  \label{fig:transmission-model}
\end{figure}

Depending on how the symbols in $\bu$ and $\bc$ are related to a polynomial $f$ in a code $\cC$, we characterize polynomial codes into two categories: \emph{evaluation codes} and \emph{polycyclic codes}.

Throughout the thesis, when the code is seen as a set of polynomials as in \eqref{eq:RS-RM-poly-def}, we denote within the parenthesis ``$(\ )$'' the parameters regarding the polynomials, e.g., $k$ in $\RS_q(k)$ is the degree restriction of the polynomials. When the code is seen as a linear block code with fixed length $n$ and dimension $k$, we denote within the rectangular brackets ``$[\ ]$'' the parameters, e.g., $n$ and $k$ in $\RS_q[n,k]$ are the length and the dimension.
\subsection{Evaluation Codes}
\label{sec:evaluation-codes}
Evaluation codes are a class of codes obtained by evaluating polynomials at some evaluation points. The number of evaluation points determines the length of the codes and the degree restriction on the polynomials determines the dimension of the codes.

The RS codes $\RS_q(k)$ can be also defined as a block code of length $n$ via evaluating the polynomials at $n$ distinct points.
An $[n,k]_q$ RS code is defined as
\begin{align*}
  \RS_q[n,k]\defeq \set*{\left. \parenv*{f(\alpha_1),\dots, f(\alpha_n)}\ \right|\ f\in \F_q[x],\ \deg (f)<k}\ ,
\end{align*}
where $\alpha_1,\dots,\alpha_n$ are distinct elements in $\Fq$ and called \emph{code locators} of the RS codes.

Similarly, an RM code $\RM_q(d,m)$ is often seen as a set of $\Fq$-vectors of length $n=q^m$ via evaluating the polynomials at all the points in $\F_q^m$. Hence, $\RM_q(d,m)$ can be defined as an $[n,k]_q$ block code with $n=q^m$ and $k=\binom{m+d}{m}$,
\begin{align*}
  \RM_q[n,k] \defeq \{\parenv*{f(\bv)}_{\bv\in\F_q^m}\ | \ f\in\F_q[x_1,\dots, x_m],\ \deg(f)\leq d\}\ .
\end{align*}

For evaluation codes, with the alphabet $\cA$, each information vector $\bu\in\cA^k$ is associated to the coefficients of a polynomial of degree $k$ in $\cA[x]$, i.e.,
\begin{align*}
  (u_0, u_1,\dots,u_{k-1}) \mapsto f = u_0 + u_1x +\dots, u_{k-1} x^{k-1}\ .
\end{align*}
Evaluation codes are an important method to construct block codes from polynomials. One of the reasons is that the minimum Hamming distance can be derived by studying the roots of the underlying polynomials. The most studied maximum distance achievable codes in the Hamming, rank and sum-rank metric are all evaluation codes (see \cref{sec:metrics}).
The \emph{interpolation-based} decoders for evaluation codes are essentially developed from interpolating a polynomial from some of its evaluations.

\subsection{Polycyclic Codes}
\label{sec:polycyclic-codes}

Let $\ring$ be a commutative ring and $\polyRing= \ring[x]$ be a polynomial ring.
Denote by $\myspan{f}\subseteq\polyRing$ a principle ideal of $\polyRing$ generated by $f\in\polyRing$.
The \emph{quotient ring} (or \emph{factor ring}) of $\polyRing$ modulo the ideal $\myspan{f}$ is defined as
\begin{align*}
  \polyRing/\myspan{f}\defeq \set*{ u \Mod f\ |\ u\in\polyRing}\ .
\end{align*}

Informally speaking, a \emph{polycyclic code}, if seen as a set of polynomial, is an ideal in the quotient ring $\polyRing/\myspan{f}$. If seen as a set of vectors, it is the vector representation of the ideal.
\begin{definition}[Polycyclic codes]
  \label{def:polycyclic}
  Let $f\in\polyRing$ with $\deg(f)=n$, $g\in\polyRing$ with $\deg(g)=n-k$, and $g\divides f$.
  A \emph{polycyclic code} w.r.t.~$g, f$ is defined as the principle ideal $ \myspan{g}/\myspan{f}$ in $\polyRing/\myspan{f}$, i.e.,
  \begin{align*}
    \cC(g,f)\defeq \set*{u\cdot g \Mod f \ |\ u\in\polyRing}\ .
  \end{align*}
  A linear $[n,k]$ code is a \emph{polycyclic code} if
  \begin{align*}
    \cC[n,k] \defeq\set*{\left. \bc=(c_0, c_1,\dots, c_{n-1}) \ \right|\  \sum_{i=0}^{n-1}c_ix^i\in \cC(g,f)}\ . 
  \end{align*}
  The polynomial $g$ is a \emph{generator polynomial} of $\cC$.
\end{definition}


An information vector $\bu=(u_0,u_1,\dots,u_{k-1})$ corresponds uniquely to a codeword $\bc\in\cC[n,k]$ such that
\begin{align*}
  \parenv*{u_0+u_1 x +\cdots + u_{k-1} x^{k-1}}\cdot g = c_0+c_1x+\cdots+c_{n-1}x^{n-1}\ .
\end{align*}

\begin{remark}
  \label{remark:special-polycyclic}
  The name ``polycyclic'' is a generalization of the well-known cyclic codes.
  Polycyclic codes have the following special cases that have been defined and studied in the literature.
  \begin{itemize}
  \item If $f=x^n-1$, then $\cC$ is a \emph{cyclic code} such that
    \begin{align*}
      (c_{n-1}, c_0,\dots,c_{n-2})\in\cC\quad \textrm{ for all }(c_0,c_1,\dots, c_{n-1})\in\cC\ .
    \end{align*}
  \item If $f=x^n+1$, then $\cC$ is a \emph{negacyclic code} such that
    \begin{align*}
      (-c_{n-1}, c_0,\dots,c_{n-2})\in\cC \quad \textrm{ for all }(c_0,c_1,\dots, c_{n-1})\in\cC\ .
    \end{align*}
  \item If $f=x^n-a$ for some $a\in\ring^*$, then $\cC$ is a \emph{constacyclic code} such that
    \begin{align*}
      (ac_{n-1}, c_0,\dots,c_{n-2})\in\cC \quad \textrm{ for all }(c_0,c_1,\dots, c_{n-1})\in\cC\ .
    \end{align*}
  \end{itemize}
\end{remark}
Cyclic codes are an important class of codes in both theory and practice. Theoretically, there is rich mathematical theory in their properties, while practically, they are efficient to be encoded and decoded.
The study of cyclic codes over finite fields started from the reports by
Prange \cite{prange1957cyclic, prange1985some}. BCH codes are a class of cyclic codes with special generator polynomials which result in good distance property.
Cyclic codes over finite rings are extensively studied since the work by
Shankar \cite{shankar1979bch}, where the Chinese Remainder Theorem was used to investigate the BCH codes over integer rings.
In \cref{chap:mod_ring}, we introduce a class of polycyclic codes where $\polyRing$ is a skew polynomial ring. 
\section{Metrics}
\label{sec:metrics}
It can be seen from the last section that a \emph{code} is simply a set of vectors or polynomials.
In order to design error-correcting codes, we need to measure the distinction between the elements in the set.
A distance measure $\dist_{\sfM}(\cdot,\cdot)$ on a set $\cA$ is called a \emph{metric} if it fulfills the following conditions for all $a,b,c\in\cA$:
\begin{itemize}
\item Positive definiteness: $\dist_{\sfM}(a,b)\geq 0$ and the equality holds if and only if $a=b$.
\item Symmetry: $\dist_{\sfM}(a,b) = \dist_{\sfM}(b,a)$.
\item Triangle inequality: $\dist_{\sfM}(a,b)+\dist_{\sfM}(b,c)\geq \dist_{\sfM}(a,c)$.
\end{itemize}

In this thesis, several metrics are used to study the error-correction capability of the codes constructed from polynomials.

\subsection{Hamming Metric}
The Hamming metric is the most used distance measure for error-correcting codes. It can be used for any linear block codes.
\begin{definition}[Hamming metric]
  \label{def:Hamming-metric}
  The \emph{Hamming weight} on $\Fq^n$ is defined as
  \begin{align*}
    \wtH(\cdot)\ :\ \Fq^n&\to \bbN\\
    \ba & \mapsto |\set*{i\in[n]\ |\ a_i\neq 0}|\ .
  \end{align*}
  The \emph{Hamming distance} between two vectors is defined as
  \begin{align*}
    \dH(\cdot,\cdot) \ :\ \Fq^n\times \Fq^n &\to \bbN\\
    \ba,\bb &\mapsto |\set*{i\in[n]\ |\ a_i-b_i\neq 0}| = \wtH(\ba-\bb)\ .
  \end{align*}
  For a code $\cC\subseteq \Fq^n$, its \emph{minimum Hamming distance} is
  \begin{align*}
    \dH(\cC) &\defeq \min_{\substack{\bc_1,\bc_2\in\cC\\\bc_1\neq\bc_2}} \dH(\bc_1,\bc_2)\\
    &= \min_{\0\neq \bc\in\cC} \wtH(\bc)\ (\textrm{If }\cC \textrm{ is linear}).
  \end{align*}
\end{definition}

\begin{theorem}[Singleton bound~\cite{singleton1964maximum}]
  \label{thm:singleton-Hamming}
  For any block code $\cC\subseteq\Fq^n$ (linear or non-linear) with minimum Hamming distance $\dH(\cC)=d$,
  \begin{align*}
    |\cC|\leq q^{n-\dH(\cC)+1}\ .
  \end{align*}
  If $\cC$ is linear, then its dimension $k$ fulfills
  \begin{align*}
    k\leq n-d+1\ .
  \end{align*}
\end{theorem}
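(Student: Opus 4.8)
The plan is to use the standard shortening (coordinate-deletion) argument. First I would set $d = \dH(\cC)$ and, assuming $d \ge 1$ (the case $d=0$ being vacuous since then $|\cC|\le 1$ forces $n\ge 0$ and the bound is trivial), consider the projection map
\begin{align*}
  \pi\ :\ \Fq^n &\to \Fq^{n-d+1}\\
  (c_1,\dots,c_n) &\mapsto (c_1,\dots,c_{n-d+1})
\end{align*}
that simply deletes the last $d-1$ coordinates. The key claim is that the restriction of $\pi$ to $\cC$ is injective.

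To prove injectivity, I would argue by contradiction: suppose $\bc_1,\bc_2\in\cC$ are distinct but $\pi(\bc_1)=\pi(\bc_2)$. Then $\bc_1$ and $\bc_2$ agree on the first $n-d+1$ coordinates, so $\bc_1-\bc_2$ can be nonzero only among the last $d-1$ coordinates, i.e.\ $\wtH(\bc_1-\bc_2)\le d-1$. But $\bc_1\neq\bc_2$ gives $\dH(\bc_1,\bc_2)=\wtH(\bc_1-\bc_2)\ge \dH(\cC)=d$, a contradiction. Hence $\pi|_{\cC}$ is injective, so $|\cC| = |\pi(\cC)| \le |\Fq^{n-d+1}| = q^{n-d+1}$, which is the first inequality. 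For the linear case, $|\cC| = q^k$, and combining with $|\cC|\le q^{n-d+1}$ yields $q^k\le q^{n-d+1}$, hence $k\le n-d+1$.

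I do not expect any real obstacle here: the only point requiring a little care is the edge case $d-1 \ge n$ (equivalently $d = n+1$ for a code with more than one codeword), where the ``deletion'' would remove all coordinates; in that situation $\pi$ maps into $\Fq^0$, which has a single element, forcing $|\cC|\le 1 = q^{n-d+1}$, consistent with the bound. Everything else is the elementary observation that deleting fewer than $d$ coordinates cannot merge two codewords at Hamming distance $\ge d$.
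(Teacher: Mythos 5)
Your proof is correct: the shortening (coordinate-deletion) argument is the standard classical proof of the Singleton bound, and your treatment of the injectivity claim and the edge case $d = n+1$ are both sound. The paper itself does not present a proof of this theorem, relying instead on the citation to Singleton's original work, so there is no alternative in-paper argument to compare against; your approach matches the textbook proof.
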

A code whose length, cardinality and minimum Hamming distance fulfill the Singleton bound is called \emph{maximum distance separable} (MDS).
(Generalized) Reed-Solomon (GRS) codes are the most studied class of MDS codes. They require a field $\Fq$ of size at least the code length $n$. There are not many other non-trivial (i.e., not $[n,1]$ or $[n,n]$) linear codes with this property and all of them are relatively short compared to their field size. In fact the famous MDS conjecture (\cite{segre1955ovals}\cite[Conjecture 6.13]{Ball2020}) asserts that within the range $4\leq k\leq q-2$, a $k$-dimensional MDS codes of length $n$ satisfies $n\leq q+1$.
\begin{theorem}[Sphere-packing~\cite{Hamming-1950} and Gilbert-Varshamov bounds~\cite{gilbert1952comparison,varshamov1957estimate}]
  Let $A^{\sfH}_q(n,d)$ be the maximum cardinality of a code 
  $\cC\subseteq\Fq^n$ with minimum Hamming distance $\dH(\cC)=d$. Then,
  \begin{align*}
    \frac{q^n}{\sum_{i=0}^{d-1}\binom{n}{i}(q-1)^i}\leq
    A_q^{\sfH}(n,d)
    \leq \frac{q^n}{\sum_{i=0}^{t}\binom{n}{i}(q-1)^i}\ ,
  \end{align*}
  where $t=\floor{\tfrac{d-1}{2}}$. The upper bound is the sphere-packing bound and the lower bound is the Gilbert-Varshamov (GV) bound.
\end{theorem}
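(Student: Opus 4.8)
The plan is to prove both inequalities by elementary packing and covering arguments with Hamming balls, using only the metric axioms for $\dH$ recorded above and the finiteness of $\Fq^n$. The single computational ingredient is the size of a Hamming ball: for $\bx\in\Fq^n$ and an integer $r\ge 0$ set $B_r(\bx)\defeq\set*{\by\in\Fq^n\ |\ \dH(\bx,\by)\le r}$. Partitioning $B_r(\bx)$ by the exact distance $i=\dH(\bx,\by)\in\set*{0,1,\dots,r}$, a point at distance exactly $i$ is specified by the $i$-subset of the $n$ coordinates on which it disagrees with $\bx$ together with a nonzero value in each of those coordinates; hence $\abs{B_r(\bx)}=V_q(n,r)\defeq\sum_{i=0}^r\binom{n}{i}(q-1)^i$, independently of the centre $\bx$.

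For the sphere-packing (upper) bound, let $\cC\subseteq\Fq^n$ be any code with $\dH(\cC)=d$ and put $t=\floor{(d-1)/2}$, so that $2t\le d-1$. The balls $B_t(\bc)$, $\bc\in\cC$, are pairwise disjoint: if $\by\in B_t(\bc)\cap B_t(\bc')$ for distinct $\bc,\bc'\in\cC$, then the triangle inequality would force $d\le\dH(\bc,\bc')\le\dH(\bc,\by)+\dH(\by,\bc')\le 2t\le d-1$, a contradiction. These $\abs{\cC}$ pairwise disjoint balls all lie inside $\Fq^n$, so $\abs{\cC}\cdot V_q(n,t)\le q^n$; since this holds for every such $\cC$, it holds for one of maximum size, giving $A_q^{\sfH}(n,d)\le q^n/V_q(n,t)$.

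For the Gilbert-Varshamov (lower) bound, take $\cC\subseteq\Fq^n$ of maximum size among codes with minimum Hamming distance at least $d$ (which exists because $\Fq^n$ is finite). Then $\bigcup_{\bc\in\cC}B_{d-1}(\bc)=\Fq^n$: if some $\by$ were not covered, then $\dH(\by,\bc)\ge d$ for all $\bc\in\cC$, so $\cC\cup\set*{\by}$ would still have minimum distance at least $d$ but strictly larger size, contradicting maximality. A union of $\abs{\cC}$ sets of size $V_q(n,d-1)$ can cover the $q^n$ vectors of $\Fq^n$ only if $\abs{\cC}\cdot V_q(n,d-1)\ge q^n$, i.e.\ $A_q^{\sfH}(n,d)\ge q^n/V_q(n,d-1)$.

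There is no serious obstacle: the argument is entirely classical. The only points that need care are the ball-size count — one must partition by the \emph{exact} distance to avoid over- or under-counting, rather than naively summing terms indexed by "distance at most $i$" — and, for the lower bound, the remark that the extremal code in the definition of $A_q^{\sfH}(n,d)$ may be taken inclusion-maximal, which is precisely what licenses the greedy step of adjoining a vector that is far from all current codewords.
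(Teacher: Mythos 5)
The paper itself gives no proof of this theorem --- it simply records the statement with citations to Hamming, Gilbert, and Varshamov --- so there is no argument in the paper to compare yours against. Your proof is the standard and correct one: the ball-volume count $\abs{B_r(\bx)}=\sum_{i=0}^r\binom{n}{i}(q-1)^i$, disjointness of radius-$t$ balls via the triangle inequality for the sphere-packing bound, and the greedy/maximality covering argument with radius-$(d-1)$ balls for Gilbert--Varshamov. Both steps are correct, and you were right to flag that the two non-trivial points are (i) counting by \emph{exact} distance and (ii) taking the code inclusion-maximal before invoking the covering property.

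One minor remark: the paper writes $\dH(\cC)=d$, and your GV argument produces a maximal code with $\dH(\cC)\geq d$ (which could in principle exceed $d$). This is the universal convention for $A_q(n,d)$ and exactly what the classical theorem asserts, so it is not a gap in your reasoning; it is just worth being aware that the literal "$=d$" in the paper's wording is the usual abuse of notation, and your proof implicitly (and correctly) treats the constraint as "$\geq d$."
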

A code whose parameters fulfill the sphere-packing bound is called a \emph{perfect} code. The GV bound is often referred as the \emph{random coding} bound. Namely, constructing a code by randomly taking $\tfrac{q^n}{\sum_{i=0}^{d-1}\binom{n}{i}(q-1)^i}$ distinct vectors from $\Fq^n$, with high probability, the code has minimal Hamming distance at least $d$.
\subsection{Rank Metric}
  Fix a basis $\bbeta=(\beta_1,\beta_2,\dots,\beta_m)$ of $\Fqm$ over $\Fq$. We define a mapping from $\Fqm^n$ to $\Fq^{m\times n}$ by
  \begin{align}
    \extbasis{\bbeta} : \Fqm^n&\to\Fq^{m\times n} \nonumber\\
    \bc =(c_1,c_2,\dots, c_n) &\mapsto \bC=
                                \begin{pmatrix}
                                  c_{1,1} & c_{1,2} & \dots & c_{1,n}\\
                                  \vdots & \vdots & \ddots & \vdots \\
                                  c_{m,1} & c_{m,2} & \dots & c_{m,n}
                                \end{pmatrix}\ ,
                                                              \label{eq:ext-map}
  \end{align}
  where $\bC$ is unique such that $c_j=\sum_{i=1}^{m}c_{i,j}\beta_i$, for all $j=1,\dots, n$.
  The $\Fq$-rank of $\bc$ is defined as $\rank_q(\bc)\defeq \rank(\bC)$.

\begin{definition}[Rank metric]
  The \emph{rank weight} on $\Fqm^n$ is defined as
  \begin{align*}
    \wtR(\cdot)\ :\ \Fqm^n&\to \bbN\\
    \ba & \mapsto \rank_q(\ba)\ .
  \end{align*}
  The \emph{Rank distance} between two vectors is defined as
  \begin{align*}
    \dR(\cdot,\cdot) \ :\ \Fqm^n\times \Fqm^n &\to \bbN\\
    \ba,\bb &\mapsto \rank_q(\ba-\bb) .
  \end{align*}
  For a code $\cC\subseteq \Fqm^n$, its \emph{minimum rank distance} is
  \begin{align*}
    \dR(\cC)& \defeq \min_{\substack{\bc_1,\bc_2\in\cC\\\bc_1\neq\bc_2}} \dR(\bc_1,\bc_2)\\
    &= \min_{\0\neq \bc\in\cC} \wtR(\bc)\ (\textrm{If }\cC \textrm{ is linear}).
  \end{align*}
\end{definition}

\begin{theorem}[Singleton bound in rank metric~{\cite[Theorem 5.4]{delsarte1978bilinear}}]
For a code $\cC\subseteq\Fqm^n$ with minimum rank distance $\dR(\cC)=d$,
\begin{align*}
  |\cC|\leq q^{\min\set*{n(m-d+1), m(n-d+1)}} = q^{\max\set*{n,m}(\min\set*{n,m}-d+1)}\ .
\end{align*}
  If $\cC$ is $\Fqm$-linear, then its dimension $k$ over $\Fqm$ fulfills
  \begin{align*}
    k\leq n-d+1\ .
  \end{align*}
\end{theorem}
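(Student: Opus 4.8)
The plan is to deduce both halves of the inequality from the Hamming-metric Singleton bound (\cref{thm:singleton-Hamming}), using two elementary facts about the matrix representation $\extbasis{\bbeta}$: that the rank weight never exceeds the Hamming weight, and that transposition preserves matrix rank.

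First I would observe that for any $\ba=(a_1,\dots,a_n)\in\Fqm^n$ with associated matrix $\bA=\extbasis{\bbeta}(\ba)\in\Fq^{m\times n}$, each zero coordinate $a_j$ produces a zero column of $\bA$, so $\rank_q(\ba)=\rank(\bA)\leq\abs{\set{j\in[n]: a_j\neq 0}}=\wtH(\ba)$. Hence $\dR(\cC)\leq\dH(\cC)$, and in particular the minimum Hamming distance of $\cC$ is at least $d$. Viewing $\cC$ as a block code over the alphabet $\Fqm$ of size $q^m$ and invoking \cref{thm:singleton-Hamming} (equivalently, deleting any $d-1$ coordinates, which is injective on $\cC$ because distinct codewords differ in at least $d$ positions) gives $\abs{\cC}\leq(q^m)^{n-d+1}=q^{m(n-d+1)}$.

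Next, to obtain the companion estimate $q^{n(m-d+1)}$, I would exploit $\rank(\bA)=\rank(\bA^\top)$. Fixing additionally a basis of $\F_{q^n}$ over $\Fq$ yields an $\Fq$-linear isomorphism $\Fq^{n\times m}\to\F_{q^n}^m$ under which the rank of a matrix equals the rank weight of the corresponding length-$m$ vector over $\F_{q^n}$. Composing $\extbasis{\bbeta}$ with transposition and this isomorphism sends $\cC$ to a code $\cC'\subseteq\F_{q^n}^m$ with $\abs{\cC'}=\abs{\cC}$ and minimum rank distance $d$. Applying the bound just established with the roles of $m$ and $n$ interchanged gives $\abs{\cC}=\abs{\cC'}\leq(q^n)^{m-d+1}=q^{n(m-d+1)}$, so altogether $\abs{\cC}\leq q^{\min\set{m(n-d+1),\,n(m-d+1)}}$.

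It remains to identify the exponent and treat the linear case. Assuming $m\leq n$ without loss of generality, $m(n-d+1)-n(m-d+1)=(n-m)(d-1)\geq 0$, so the minimum equals $n(m-d+1)=\max\set{n,m}\bigl(\min\set{n,m}-d+1\bigr)$ (and symmetrically when $n\leq m$), which is the stated closed form. If $\cC$ is $\Fqm$-linear of dimension $k$, then $\abs{\cC}=q^{mk}$, and comparing with $\abs{\cC}\leq q^{m(n-d+1)}$ forces $k\leq n-d+1$. No step here is a real obstacle; the only points demanding a little care are making the identification $\Fq^{n\times m}\cong\F_{q^n}^m$ precise enough that the minimum rank distance is faithfully transported, and the routine arithmetic reconciling the two forms of the exponent.
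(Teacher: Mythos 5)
The paper offers no proof here---the result is stated with a bare citation to Delsarte~\cite[Theorem~5.4]{delsarte1978bilinear}---so there is nothing in the source to compare against; your argument has to stand on its own merits, and it does. The reduction is correct: $\dR\leq\dH$ gives $|\cC|\leq(q^m)^{n-d+1}$ via the Hamming Singleton bound, the composite $\extbasis{\bgamma}^{-1}\circ(\cdot)^{\top}\circ\extbasis{\bbeta}$ is an $\Fq$-linear, rank-weight-preserving bijection $\Fqm^n\to\F_{q^n}^m$ that transports $\cC$ to a code of the same cardinality and minimum rank distance (so the same Hamming argument with the roles of $m,n$ swapped yields $|\cC|\leq(q^n)^{m-d+1}$), the identity $m(n-d+1)-n(m-d+1)=(n-m)(d-1)$ reconciles the two closed forms, and comparing $q^{mk}$ with $q^{m(n-d+1)}$ gives $k\leq n-d+1$ in the $\Fqm$-linear case. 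This is the standard transpose-trick proof; the only thing worth flagging is that $\cC'$ is generally only $\Fq$-linear even when $\cC$ is $\Fqm$-linear, which is harmless because the Hamming Singleton bound you invoke holds for arbitrary (not necessarily linear) codes.
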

A code whose length, cardinality and minimum rank distance fulfilling the Singleton bound is a \emph{maximum rank distance} (MRD) code.
\emph{Gabidulin codes} \cite{delsarte1978bilinear, gabidulin1985theory, roth1991maximum} are the most well-known MRD codes. They are a class of evaluation codes based on \emph{linearlized polynomials} (a special class of skew polynomials).

\begin{theorem}[Sphere-packing and Gilbert-Varshamov bounds in rank metric~{\cite{gadouleau2006genp1}}]
  Let $A^{\sfR}_{q^m}(n,d)$ be the maximum cardinality of a code 
  $\cC\subseteq\Fqm^n$ with minimum rank distance $\dR(\cC)=d$. Then,
  \begin{align*}
    \frac{q^{mn}}{\card{\cB_{\sfR}^{(d-1)}}}\leq
    A^{\sfR}_{q^m}(n,d)
    \leq \frac{q^{mn}}{\card{\cB_{\sfR}^{(t)}}}\ ,
  \end{align*}
  where $t=\floor{\tfrac{d-1}{2}}$ and $\card{\cB_{\sfR}^{(\tau)}}$ is a set (often called \emph{ball}) of all the vectors of rank distance at most $\tau$ to a fixed vector $\bb\in\Fqm^n$ (e.g., $\bb=\0$), i.e.,
  \begin{align*}
    \cB_{\sfR}^{(\tau)} &\defeq \set*{\ba\in\Fqm^n\ |\ \wtR(\ba)\leq \tau}\ \textrm{and}\\
    \card{\cB_{\sfR}^{(\tau)}} &=\sum_{i=0}^{\tau}\quadbinom{m}{i}_q\prod_{j=0}^{i-1}(q^n-q^j) \textrm{ with }
    \quadbinom{m}{i}_q = \prod\limits_{j=0}^{i-1} \frac{q^m-q^j}{q^i-q^j}\ .
  \end{align*}
\end{theorem}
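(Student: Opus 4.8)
The plan is to run the classical volume (sphere-packing / greedy-covering) argument, adapted to the rank metric. The key structural fact that makes this work is that the rank distance is \emph{translation invariant}: $\dR(\ba,\bb)=\wtR(\ba-\bb)$, so for every center $\bb\in\Fqm^n$ the closed ball $\bb+\cB_{\sfR}^{(\tau)}=\set*{\ba\in\Fqm^n\ |\ \dR(\ba,\bb)\leq\tau}$ is a translate of $\cB_{\sfR}^{(\tau)}$ and hence has the same cardinality $\card{\cB_{\sfR}^{(\tau)}}$. The triangle inequality for $\dR$ (which holds since the rank is a norm under $\extbasis{\bbeta}$) is the only metric property needed.

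For the upper (sphere-packing) bound, I would let $\cC\subseteq\Fqm^n$ have $\dR(\cC)=d$ and put $t=\floor{\tfrac{d-1}{2}}$. If some $\ba$ lay in $\bc_1+\cB_{\sfR}^{(t)}$ and $\bc_2+\cB_{\sfR}^{(t)}$ for distinct $\bc_1,\bc_2\in\cC$, then $\dR(\bc_1,\bc_2)\leq\dR(\bc_1,\ba)+\dR(\ba,\bc_2)\leq 2t\leq d-1<d$, a contradiction; so the $\card{\cC}$ balls of radius $t$ around the codewords are pairwise disjoint subsets of $\Fqm^n$. Counting points gives $\card{\cC}\cdot\card{\cB_{\sfR}^{(t)}}\leq q^{mn}$, i.e.\ $A^{\sfR}_{q^m}(n,d)\leq q^{mn}/\card{\cB_{\sfR}^{(t)}}$. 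For the lower (Gilbert--Varshamov) bound, I would take $\cC$ to be a code of \emph{maximal} cardinality with $\dR(\cC)\geq d$ (this maximum, $A^{\sfR}_{q^m}(n,d)$, is attained since $\Fqm^n$ is finite). Maximality forces $\bigcup_{\bc\in\cC}\bigl(\bc+\cB_{\sfR}^{(d-1)}\bigr)=\Fqm^n$: if some $\bv$ were uncovered, then $\dR(\bv,\bc)\geq d$ for all $\bc\in\cC$, so $\cC\cup\set*{\bv}$ would still have minimum rank distance $\geq d$, contradicting maximality. Hence $q^{mn}\leq\card{\cC}\cdot\card{\cB_{\sfR}^{(d-1)}}$, which rearranges to $A^{\sfR}_{q^m}(n,d)\geq q^{mn}/\card{\cB_{\sfR}^{(d-1)}}$.

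It then remains to justify the explicit ball size $\card{\cB_{\sfR}^{(\tau)}}=\sum_{i=0}^{\tau}\quadbinom{m}{i}_q\prod_{j=0}^{i-1}(q^n-q^j)$. Via the $\Fq$-linear isomorphism $\extbasis{\bbeta}\colon\Fqm^n\to\Fq^{m\times n}$, vectors of rank $i$ correspond bijectively to $m\times n$ matrices over $\Fq$ of rank exactly $i$, so it suffices to count the latter and sum over $i=0,\dots,\tau$. A rank-$i$ matrix is determined by its column space, an $i$-dimensional subspace $U\subseteq\Fq^m$ — there are $\quadbinom{m}{i}_q$ of these — together with a surjective $\Fq$-linear map $\Fq^n\to U$; after fixing a basis of $U$, such maps are exactly the $i\times n$ matrices of full row rank $i$, and there are $\prod_{j=0}^{i-1}(q^n-q^j)$ of them (the $j$-th row must avoid the span of the previous $j-1$ rows, giving $q^n-q^{j-1}$ choices). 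This yields $\quadbinom{m}{i}_q\prod_{j=0}^{i-1}(q^n-q^j)$ matrices of rank $i$, and summing gives the stated formula for $\card{\cB_{\sfR}^{(\tau)}}$.

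The argument is essentially routine; the only place that needs care is the last enumeration — correctly counting rank-$i$ matrices by fixing the column space and then counting surjections onto it — and checking that the ball size is genuinely center-independent, which is where translation invariance of $\dR$ is used. Everything else is the standard packing/covering dichotomy transported verbatim from the Hamming setting.
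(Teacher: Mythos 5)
Your proof is correct, and it is the standard packing/covering argument: translation invariance of the rank metric plus the triangle inequality give the sphere-packing bound, maximality of the code gives the Gilbert--Varshamov covering bound, and the ball-size formula follows from the standard enumeration of rank-$i$ matrices in $\Fq^{m\times n}$ by choosing the $i$-dimensional column space in $\quadbinom{m}{i}_q$ ways and then a surjection onto it in $\prod_{j=0}^{i-1}(q^n-q^j)$ ways. The paper itself supplies no proof (it cites \cite{gadouleau2006genp1}), so there is nothing in the text to compare against; your argument is exactly the expected one.
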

\subsection{Sum-Rank Metric}
The sum-rank metric was first considered in coding for MIMO (multiple-input multiple-output) block-fading channels \cite{el2003design, lu2005unified} and the design of PSK-AM (phase-shift keying with amplitude modulation) constellations \cite{lu2006constructions}. It was then explicitly introduced in multi-shot network coding \cite{nobrega2009multishot}.
An explicit construction of optimal space-time codes in terms of \emph{rate-diversity trade-off} from sum-rank metric codes over a finite field was first given in \cite{shehadeh2021space}. Additionally, sum-rank metric codes have been considered in applications such as network streaming \cite{mahmood2016convolutional}, distributed storage systems \cite{martinez2019universal,martinez2020locally,cai2021construction} and post-quantum secure code-based cryptosystems \cite{dalconzo2022codeequivalence,hormann2022security}.
Extensive research has been done in recent years in fundamental coding-theoretical properties of sum-rank metric codes, e.g., \cite{martinez2019theory,byrne2020fundamental, ott2021bounds,camps2022optimal, ott2022covering, ott2023geometrical}, constructions of perfect/optimal/systematic sum-rank metric codes \cite{martinez2020hamming, almeida2020systematic, martinez2022generalMSRD, alfarano2022sum, caruso2022duals}.

Given an ordered partition $\bn_\ell =(n_1, \dots,n_\ell)$ of $n\in\bbN$, we write a vector $\ba$ of length $n$ with respect to $\bn_{\ell}$ as
\begin{align*}
  \ba = (\ba_1,\ba_2,\dots,\ba_{\ell})\ ,
\end{align*}
where $\ba_i$ is of length $n_i$, for all $i\in[\ell]$.
\begin{definition}[Sum-rank metric]
  \label{def:sum-rank-metric}
  The \emph{sum-rank weight} on $\Fqm^n$, w.r.t.~an ordered partition $\bn_\ell =(n_1, \dots,n_\ell)$ of $n$, is defined as
\begin{align*}
    \wtSR{\bn_\ell}(\cdot)\ :\ \Fqm^n &\to \bbN\\
    \ba &\mapsto \sum_{i=1}^\ell \rank_q(\ba_i)\ .
\end{align*}
The \emph{sum-rank distance} between two vectors is defined as
\begin{align*}
    \dSR{\bn_\ell}(\cdot,\cdot)\ :\ \Fqm^n\times\Fqm^n &\to \bbN\\
    \ba,\bb &\mapsto \wtSR{\bn_\ell}(\ba-\bb)\ .
\end{align*}
For a code $\cC\subseteq\Fqm^n$, its \emph{minimum sum-rank distance} is
\begin{align*}
  \dSR{\bn_\ell}(\cC)&\defeq \min_{\substack{\bc_1,\bc_2\in\cC\\\bc_1\neq\bc_2}} \dSR{\bn_\ell}(\bc_1,\bc_2) \\
  & = \min_{\0\neq \bc\in\cC} \wtSR{\bn_\ell}(\bc)\ (\textrm{If }\cC \textrm{ is linear})\ .
\end{align*}
\end{definition}
It is known that for $\ell=1$, the sum-rank metric coincides with the rank metric, and for $\ell=n$, the sum-rank metric is the Hamming metric \cite[Proposition 1.4, 1.5]{FnTsurvey-Umberto}.
The following lemma gives a relation among the Hamming, the sum-rank and the rank weights of a fixed vector $\bx\in\Fqm^n$.
\begin{lemma}
  \label{lem:sum-rank-Hamming}
  For a vector $\bx\in\Fqm^n$ and any ordered partition $\bn_\ell =(n_1, \dots,n_\ell)$ of $n$, $\wtR(\bx)\leq \wtSR{\bn_{\ell}}(\bx)\leq \wtH(\bx)$.
\end{lemma}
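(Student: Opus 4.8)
The plan is to establish the two inequalities separately, in each case reducing the claim to an elementary fact about the block structure of the matrix representation $\bX \defeq \extbasis{\bbeta}(\bx) \in \Fq^{m\times n}$ from \eqref{eq:ext-map}. First I would fix the basis $\bbeta$, write $\bx=(\bx_1,\dots,\bx_\ell)$ according to the ordered partition $\bn_\ell$, and partition the columns of $\bX$ accordingly as $\bX=(\bX_1\mid\bX_2\mid\dots\mid\bX_\ell)$, so that $\bX_i=\extbasis{\bbeta}(\bx_i)$ and hence $\rank_q(\bx_i)=\rank(\bX_i)$ for each $i\in[\ell]$.

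For the lower bound $\wtR(\bx)\le\wtSR{\bn_\ell}(\bx)$, I would observe that the $\Fq$-column space of $\bX$ is the sum of the column spaces of the blocks $\bX_1,\dots,\bX_\ell$ (every column of $\bX$ lies in some $\bX_i$). Therefore $\rank(\bX)\le\sum_{i=1}^\ell\rank(\bX_i)$, which is precisely $\wtR(\bx)\le\sum_{i=1}^\ell\rank_q(\bx_i)=\wtSR{\bn_\ell}(\bx)$.

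For the upper bound $\wtSR{\bn_\ell}(\bx)\le\wtH(\bx)$, I would fix a block index $i$ and use that $\bbeta$ is an $\Fq$-basis of $\Fqm$: a column of $\bX_i$ is zero if and only if the corresponding entry of $\bx_i$ is $0$, so $\bX_i$ has exactly $\wtH(\bx_i)$ nonzero columns. Since the rank of a matrix is at most its number of nonzero columns, $\rank_q(\bx_i)=\rank(\bX_i)\le\wtH(\bx_i)$. Summing over $i$ and using that the blocks partition the $n$ coordinates gives $\sum_{i=1}^\ell\rank_q(\bx_i)\le\sum_{i=1}^\ell\wtH(\bx_i)=\wtH(\bx)$, as desired.

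There is no genuine obstacle here; the only points that warrant a line of justification are (i) that horizontal concatenation of matrices cannot raise the rank beyond the sum of the block ranks, and (ii) that under $\extbasis{\bbeta}$ a coordinate is zero exactly when the associated column vanishes — both immediate from linear algebra and the basis property of $\bbeta$. As a consistency check, the extreme partitions $\ell=1$ and $\ell=n$ recover $\wtR(\bx)\le\wtR(\bx)\le\wtH(\bx)$ and $\wtR(\bx)\le\wtH(\bx)\le\wtH(\bx)$ respectively, in agreement with the fact that the sum-rank metric interpolates between the rank and Hamming metrics.
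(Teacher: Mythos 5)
Your proof is correct and follows essentially the same approach as the paper: bound each block rank by its number of nonzero columns for the upper inequality, and observe that the rank of the whole matrix is at most the sum of the block ranks for the lower one. The only cosmetic difference is that you make the matrix representation $\bX=\extbasis{\bbeta}(\bx)$ and the column-space argument fully explicit, whereas the paper phrases the lower bound a bit more loosely in terms of ``$\Fq$-linearly independent entries''; the content is identical (and in fact your column-space argument is the one the paper spells out later in \cref{lem:rank-leq-sumrank-mat}).
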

\begin{proof}
  We first show that $\wtSR{\bn_{\ell}}(\bx)\leq \wtH(\bx)$. Consider $\bx=(\bx_1,\dots,\bx_\ell)\in \Fqm^n$ with $\wtH(\bx)=n-t=\sum_{i=1}^{\ell}n_i-t_i$ where $t_i$ is the number of zero entries in $\bx_i$ and $\sum_{i=1}^\ell t_i=t$.
  For any $i\in[\ell]$, $\rank_q(\bx_i)\leq \wtH(\bx_i)$ since the rank of a matrix is at most the number of its nonzero columns.
  By the definition of $\wtSR{\bn_{\ell}}$, we have $\wtSR{\bn_{\ell}}(\bx)=\sum_{i=1}^\ell\rank_q(\bx_i)\leq \sum_{i=1}^\ell n_i-t_i=n-t=\wtH(\bx)$.

  Now we show $\wtR(\bx)\leq \wtSR{\bn_{\ell}}(\bx)$. Suppose $\wtSR{\bn_{\ell}}(\bx)=t=\sum_{i=1}^\ell t_i$, which means each $\bx_i$ has $t_i$ $\Fq$-linearly independent entries for $i\in[\ell]$. Then $\bx$ has at most $t$ $\Fq$-linearly independent entries, which corresponds to the rank weight of $\bx$.
\end{proof}
It can be seen that $\Fqm^n$ is isometric to $\Fq^{m\times n}$ or $\Fq^{n\times m}$ as $\Fq$-vector spaces. The definition of the sum-rank metric on $\Fq^{m\times n}$ and $\Fq^{n\times m}$ follows naturally from \cref{def:sum-rank-metric}.
For later usage, we give the definition of sum-rank metric on these matrix spaces in the following.
For simplicity, we abuse the notation $\wtSR{\bn_\ell}$ for the matrix space.

Let $\ell\in\bbN$ and $\bn_\ell= (n_1, \dots,n_\ell)\in\bbN^\ell$ be an ordered partition of $n=\sum_{\indBlock=1}^\ell n_\indBlock$.
For matrices 
\begin{align*}
  \bA =
  \begin{pmatrix}
    \bA_1 & \bA_2 & \dots & \bA_\ell
  \end{pmatrix}
                        \in\Fq^{m\times n}
                        \quad\quad\text{ or }\quad\quad
                        \bB =
                        \begin{pmatrix}
                          \bB_1\\
                          \bB_2\\
                          \vdots\\
                          \bB_\ell
                        \end{pmatrix}
  \in\Fq^{n\times m}\ ,
\end{align*}
where $\bA_i\in\Fq^{m\times n_i}$ and $\bB_i\in\Fq^{n_i\times m}$, $i\in[\ell]$, we say $\bA$ has a \emph{column-wise} partition with respect to $\bn_{\ell}$ and $\bB$ has a \emph{row-wise} partition w.r.t.~$\bn_{\ell}$. The sum-rank weights of $\bA$ and $\bB$ w.r.t.~$\bn_\ell$ are, respectively,
\begin{align*}
  \wtSR{\bn_\ell}(\bA)\defeq \sum_{\indBlock=1}^{\ell} \rank(\bA_\indBlock)
  \quad\quad\text{ and }\quad\quad
  \wtSR{\bn_\ell}(\bB)\defeq \sum_{\indBlock=1}^{\ell} \rank(\bB_\indBlock)\ .
\end{align*}
We can find the following relation between the sum-rank weight and the rank of a matrix.
\begin{lemma}
  \label{lem:rank-leq-sumrank-mat}
  For a matrix $\bA\in\Fq^{m\times n}$ and an ordered partition $\bn_\ell= (n_1, \dots,n_\ell)$ of $n$, $\rank(\bA)\leq\wtSR{\bn_{\ell}}(\bA)\leq \ell\cdot \rank(\bA)$. Similarly, for a matrix $\bB\in\Fq^{n\times m}$, $\rank(\bB)\leq \wtSR{\bn_{\ell}}(\bB)\leq \ell\cdot \rank(\bB)$.
\end{lemma}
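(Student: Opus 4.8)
The plan is to prove the statement for the column-wise partitioned matrix $\bA\in\Fq^{m\times n}$ and then obtain the row-wise case for $\bB\in\Fq^{n\times m}$ by transposition. Indeed, $\rank(\bB)=\rank(\bB^\top)$ and $\rank(\bB_i)=\rank(\bB_i^\top)$ for each $i$, while $\bB^\top=(\bB_1^\top\mid\cdots\mid\bB_\ell^\top)$ is column-wise partitioned with the same block sizes $\bn_\ell=(n_1,\dots,n_\ell)$, so $\wtSR{\bn_\ell}(\bB)=\wtSR{\bn_\ell}(\bB^\top)$ and the two-sided bound for $\bB$ is exactly the one for $\bB^\top$.

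For the lower bound $\rank(\bA)\le\wtSR{\bn_\ell}(\bA)$ I would argue through column spaces. Write $\bA=(\bA_1\mid\cdots\mid\bA_\ell)$ with $\bA_i\in\Fq^{m\times n_i}$, and let $\cC_i\subseteq\Fq^m$ denote the $\Fq$-span of the columns of $\bA_i$. Every column of $\bA$ lies in exactly one $\bA_i$, so the column space of $\bA$ equals $\sum_{i=1}^\ell\cC_i$. Since the dimension of a sum of subspaces is at most the sum of their dimensions, $\rank(\bA)=\dim\bigl(\sum_{i=1}^\ell\cC_i\bigr)\le\sum_{i=1}^\ell\dim\cC_i=\sum_{i=1}^\ell\rank(\bA_i)=\wtSR{\bn_\ell}(\bA)$; equivalently, this is just subadditivity of rank under horizontal concatenation.

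For the upper bound $\wtSR{\bn_\ell}(\bA)\le\ell\cdot\rank(\bA)$, observe that each block $\bA_i$ is a column submatrix of $\bA$, hence $\rank(\bA_i)\le\rank(\bA)$ (deleting columns cannot increase the rank). Summing these $\ell$ inequalities gives $\wtSR{\bn_\ell}(\bA)=\sum_{i=1}^\ell\rank(\bA_i)\le\ell\cdot\rank(\bA)$.

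I do not expect any genuine obstacle: both inequalities rest on elementary properties of ranks of block matrices, and the $\bB$-part is a one-line transpose reduction. The only detail worth stating carefully is that a row-wise partition of $\bB$ becomes a column-wise partition of $\bB^\top$ with identical block sizes, so that the $\wtSR{\bn_\ell}$ value is preserved under transposition.
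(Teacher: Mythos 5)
Your proof is correct and takes essentially the same approach as the paper: the lower bound via subadditivity of dimension over the block column spaces, and the upper bound via each block being a column submatrix of $\bA$. The only cosmetic difference is that you handle $\bB$ by transposing to reduce to the $\bA$ case, whereas the paper simply says to repeat the argument with row spaces; both are equivalent.
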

\begin{proof}
  Denote by $\myspan{\bA}_{\sfc}$ the column space of a matrix $\bA$. For the first inequality,
  \begin{align*}
    \rank(\bA) = \dim(\myspan{\bA}_{\sfc}) &= \dim(\myspan{\bA_1}_{\sfc}+\cdots + \myspan{\bA_\ell}_{\sfc})\\
                                       &\leq \dim(\myspan{\bA_1}_{\sfc}) + \cdots \dim(\myspan{\bA_\ell}_{\sfc})\\
    &=\rank(\bA_1) +\cdots +\rank(\bA_\ell) = \wtSR{\bn_\ell}(\bA)\ .
  \end{align*}
  For the second inequality,
  \begin{align*}
    \wtSR{\bn_\ell}(\bA)= \sum_{i=1}^{\ell}\rank(\bA_i) 
    \leq \sum_{i=1}^{\ell}\rank(\bA) =\ell\cdot \rank(\bA)\ .
  \end{align*}
  For the matrix $\bB\in\Fq^{n\times m}$ with a row-wise partition, the proof is similar, by considering the row space of $\bB$ and $\bB_i$'s.
\end{proof}
With the relation between the sum-rank metric and the Hamming metric in \cref{lem:sum-rank-Hamming}, the following Singleton bound for sum-rank-metric codes can be easily derived from the Singleton bound for Hamming-metric codes in \cref{thm:singleton-Hamming}.
\begin{theorem}[Singleton bound in the sum-rank metric {\cite[Theorem 1.4]{FnTsurvey-Umberto}}]
  Let $\bn_{\ell}$ be an ordered partition of $n\in\bbN$.
  For a code $\cC\subseteq\Fqm^n$ (linear or non-linear) with minimum sum-rank distance $\dSR{\bn_{\ell}}(\cC)=d$,
\begin{align*}
  |\cC|\leq q^{m(n-d+1)}\ .
\end{align*}
  If $\cC$ is $\Fqm$-linear, then its dimension $k$ over $\Fqm$ fulfills
  \begin{align*}
    k\leq n-d+1\ .
  \end{align*}
  The equality holds in both equations if and only if $\cC \bA\defeq \set*{\bc \bA\ |\ \bc\in\cC}$ is an MDS code, i.e., $\dH(\cC\bA)=d$, for all $\bA=\diag(\bA_1,\dots,\bA_\ell)\in\Fq^{n\times n}$ where every $\bA_i\in\Fq^{n_i\times n_i},i\in[\ell]$ is invertible.
\end{theorem}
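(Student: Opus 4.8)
The plan is to deduce the whole statement from the Hamming-metric Singleton bound (\cref{thm:singleton-Hamming}) applied over the alphabet $\Fqm$, together with the elementary weight comparison $\wtSR{\bn_\ell}(\bx)\leq\wtH(\bx)$ from \cref{lem:sum-rank-Hamming}. The guiding observation is that a code of minimum sum-rank distance $d$, when viewed as a block code over $\Fqm$, is automatically a Hamming-metric code of minimum Hamming distance at least $d$.

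For the cardinality and dimension bounds, I would view $\cC$ as a block code of length $n$ over the alphabet $\Fqm$, which has $q^m$ elements. For distinct $\bc_1,\bc_2\in\cC$, applying \cref{lem:sum-rank-Hamming} to $\bx=\bc_1-\bc_2$ gives $\dH(\bc_1,\bc_2)=\wtH(\bx)\geq\wtSR{\bn_\ell}(\bx)\geq d$, hence $\dH(\cC)\geq d$. The Hamming Singleton bound over $\Fqm$ then yields $|\cC|\leq (q^m)^{\,n-\dH(\cC)+1}\leq q^{m(n-d+1)}$. If $\cC$ is $\Fqm$-linear of dimension $k$, then $|\cC|=q^{mk}$, and comparing exponents gives $k\leq n-d+1$.

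For the equality characterization, I would first record two invariance facts about right multiplication by a block-diagonal $\bA=\diag(\bA_1,\dots,\bA_\ell)$ with each $\bA_i\in\Fq^{n_i\times n_i}$ invertible: since $\bA$ is already invertible over $\Fq\subseteq\Fqm$, the map $\bc\mapsto\bc\bA$ is a bijection of $\Fqm^n$, so $|\cC\bA|=|\cC|$; and expanding each block $\bc_i\in\Fqm^{n_i}$ to an $m\times n_i$ matrix over $\Fq$ and multiplying on the right by the invertible $\Fq$-matrix $\bA_i$ preserves its $\Fq$-rank, so $\wtSR{\bn_\ell}(\bc\bA)=\wtSR{\bn_\ell}(\bc)$ and hence $\dSR{\bn_\ell}(\cC\bA)=\dSR{\bn_\ell}(\cC)=d$. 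The ``if'' direction is then immediate by specializing to $\bA=\bI$: if $\cC=\cC\bI$ is MDS with $\dH(\cC)=d$, then $|\cC|=q^{m(n-d+1)}$. For the ``only if'' direction, assume $|\cC|=q^{m(n-d+1)}$ and fix such an $\bA$; by the two invariance facts, $|\cC\bA|=q^{m(n-d+1)}$ and, again via \cref{lem:sum-rank-Hamming}, $\dH(\cC\bA)\geq\dSR{\bn_\ell}(\cC\bA)=d$. The Hamming Singleton bound over $\Fqm$ applied to $\cC\bA$ forces $\dH(\cC\bA)\leq d$, so $\dH(\cC\bA)=d$ and $\cC\bA$ meets the Hamming Singleton bound with equality, i.e.\ is MDS.

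I do not expect a genuine obstacle here: the whole argument is a translation of the Hamming-metric result through the weight inequality. The one thing to keep straight is the alphabet size — the Hamming Singleton bound must be invoked over $\Fqm$ (size $q^m$), not over $\Fq$ — and, in the equality statement, that ``$\cC\bA$ is MDS'' is being paired with ``$\dH(\cC\bA)=d$'', i.e.\ the minimum distance is pinned to exactly $d$ rather than something larger; the isometry fact is precisely what ensures this minimum distance is unchanged by the twist $\bA$.
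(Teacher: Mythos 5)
Your proposal is correct and follows exactly the route the paper indicates: the paper's preceding remark states that the bound "can be easily derived from the Singleton bound for Hamming-metric codes in \cref{thm:singleton-Hamming}" via \cref{lem:sum-rank-Hamming}, cites the survey for the statement, and does not spell out the details. Your elaboration — passing to the Hamming metric over the alphabet $\Fqm$ for the two bounds, and using that right-multiplication by a block-diagonal invertible $\bA\in\Fq^{n\times n}$ is a sum-rank isometry and a bijection for the equality characterization — fills in those details correctly and is the natural argument.
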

A code whose length, cardinality and minimum sum-rank distance fulfilling the Singleton bound is a \emph{maximum sum-rank distance} (MSRD) code.
\emph{Linearized Reed-Solomon codes} \cite{martinez2018skew,caruso2018reed} are a class of MSRD codes which gained a lot of research interest. They are evaluation codes based on skew polynomials.
\begin{theorem}[Sphere-packing and Gilbert-Varshamov bounds in sum-rank metric~{\cite{byrne2020fundamental}}]
  Let $\bn_{\ell}$ be an ordered partition of $n\in\bbN$ and 
  $A^{\sfS\sfR}_{q^m}(n,d)$ be the maximum cardinality of a code 
  $\cC\subseteq\Fqm^n$ with minimum sum-rank distance $\dSR{\bn_{\ell}}(\cC)=d$. Then,
  \begin{align*}
    \frac{q^{mn}}{\card{\cB_{\sfS\sfR}^{(d-1)}}}\leq
    A^{\sfS\sfR}_{q^m}(n,d)
    \leq \frac{q^{mn}}{\card{\cB_{\sfS\sfR}^{(t)}}}\ ,
  \end{align*}
  where $t=\floor{\tfrac{d-1}{2}}$ and $\card{\cB_{\sfS\sfR}^{(\tau)}}$ is a set (often called \emph{ball}) of all the vectors of sum-rank distance at most $\tau$ to a fixed vector in $\Fqm$ (e.g., $\0$), i.e.,
  \begin{align}
    \cB_{\sfS\sfR}^{(\tau)} &\defeq \set*{\ba\in\Fqm^n\ |\ \wtSR{\bn_{\ell}}(\ba)\leq \tau}\ \textrm{and}\nonumber\\
    \card{\cB_{\sfS\sfR}^{(\tau)}} &=\sum_{s=0}^{\tau}\sum_{\substack{(s_1,\dots,s_{\ell})\in\bbN^{\ell}\\s_1+\cdots+s_{\ell}=s}}\prod_{i=1}^{\ell}\quadbinom{n_i}{s_i}_q\prod_{j=0}^{s_i-1}(q^m-q^j)\ . \label{eq:sum-rank-ball-size}
  \end{align}
\end{theorem}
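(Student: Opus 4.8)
The plan is to combine the two classical volume (packing and covering) arguments with an explicit enumeration of the sum-rank ball. Two preliminary observations make both arguments go through. First, $\dSR{\bn_\ell}$ is by definition the $\wtSR{\bn_\ell}$-weight of a difference, hence translation invariant, so for every $\bb\in\Fqm^n$ the ball $\set*{\ba\in\Fqm^n\ |\ \dSR{\bn_\ell}(\ba,\bb)\leq\tau}$ has the same cardinality $\card{\cB_{\sfS\sfR}^{(\tau)}}$ as the ball around $\0$. Second, $\Fqm^n$ has exactly $q^{mn}$ elements (it is $\Fq$-isometric to $\Fq^{m\times n}$, as noted above).

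For the \emph{sphere-packing} upper bound I would take a code $\cC\subseteq\Fqm^n$ attaining $A^{\sfS\sfR}_{q^m}(n,d)$, set $t=\floor{(d-1)/2}$, and note that for distinct $\bc_1,\bc_2\in\cC$ the triangle inequality forces $\dSR{\bn_\ell}(\bc_1,\bc_2)\geq d\geq 2t+1$, so the radius-$t$ balls centred at the codewords are pairwise disjoint; counting points gives $\card{\cC}\cdot\card{\cB_{\sfS\sfR}^{(t)}}\leq q^{mn}$. For the \emph{Gilbert--Varshamov} lower bound I would instead take $\cC$ maximal under inclusion among all codes with $\dSR{\bn_\ell}(\cC)\geq d$: if some $\ba\in\Fqm^n$ were at sum-rank distance at least $d$ from every codeword, then $\cC\cup\set*{\ba}$ would still have minimum sum-rank distance at least $d$, contradicting maximality; hence the radius-$(d-1)$ balls centred at the codewords cover $\Fqm^n$, which gives $\card{\cC}\cdot\card{\cB_{\sfS\sfR}^{(d-1)}}\geq q^{mn}$ and therefore $A^{\sfS\sfR}_{q^m}(n,d)\geq\card{\cC}\geq q^{mn}/\card{\cB_{\sfS\sfR}^{(d-1)}}$.

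It remains to establish the closed form \eqref{eq:sum-rank-ball-size}. Writing $\ba=(\ba_1,\dots,\ba_\ell)$ with $\ba_i$ of length $n_i$ according to $\bn_\ell$, the condition $\wtSR{\bn_\ell}(\ba)=\sum_{i=1}^\ell\rank_q(\ba_i)\leq\tau$ splits as a disjoint union over rank profiles $(s_1,\dots,s_\ell)\in\bbN^\ell$ with $s_1+\cdots+s_\ell=s\leq\tau$, and within a fixed profile the $\ell$ blocks are chosen independently. Using $\extbasis{\bbeta}$, the number of $\ba_i\in\Fqm^{n_i}$ with $\rank_q(\ba_i)=s_i$ equals the number of rank-$s_i$ matrices in $\Fq^{m\times n_i}$, which I would count by first fixing an $s_i$-dimensional row space ($\quadbinom{n_i}{s_i}_q$ choices) and then a surjection onto it ($\prod_{j=0}^{s_i-1}(q^m-q^j)$ choices), giving $\quadbinom{n_i}{s_i}_q\prod_{j=0}^{s_i-1}(q^m-q^j)$. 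Taking the product over $i$ and summing over $s$ and over all profiles yields exactly \eqref{eq:sum-rank-ball-size}. The whole argument is routine; the only point needing genuine care is this last counting step — decomposing the ball over blocks correctly, enumerating the rank-$s_i$ matrices without double counting (choosing the row space, not also a column space), and respecting the boundary conventions ($\quadbinom{n_i}{0}_q=1$ together with an empty product for $s_i=0$, so the zero block is counted once). I do not expect any real obstacle beyond this bookkeeping.
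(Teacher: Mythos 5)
Your argument is correct, and it is also the standard one: disjointness of radius-$t$ balls via the triangle inequality for $\dSR{\bn_\ell}$ gives the packing bound, maximality under inclusion gives the covering bound, and the ball size is obtained by splitting over rank profiles $(s_1,\dots,s_\ell)$ and counting rank-$s_i$ matrices in $\Fq^{m\times n_i}$ by (row space) $\times$ (surjection $\Fq^m\twoheadrightarrow$ row space), which indeed yields $\quadbinom{n_i}{s_i}_q\prod_{j=0}^{s_i-1}(q^m-q^j)$. The paper states this theorem as background cited from \cite{byrne2020fundamental} and gives no proof of its own, so there is nothing internal to compare against; your write-up would serve as a correct self-contained proof. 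The only things you lean on without spelling out are translation invariance of $\wtSR{\bn_\ell}$ (which you do note) and the triangle inequality for $\dSR{\bn_\ell}$, which the paper likewise takes for granted; both follow from subadditivity of $\rank_q$ on each block, and you could add one line to that effect if you want the argument to be fully self-contained.
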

Due to the second sum over all the ordered partitions of $s$ in \eqref{eq:sum-rank-ball-size}, computing the ball size is expensive. An efficient algorithm is given in \cite[Algorithm 1]{puchinger2022generic} for computing the $\card{\cB_{\sfS\sfR}^{(\tau)}(\0)}$ with complexity $O\parenv*{\tau\parenv*{\ell d^3+d^4(\ell m+n)\log(q)}}$.
Simplified forms and asymptotic behaviors of the sphere-packing bound and the Gilbert-Varshamov bound in the sum-rank metric are given in \cite{ott2021bounds}.


\chapter{Dual-Containing Polycyclic Codes over Rings based on Skew Polynomials}
\label{chap:mod_ring}
Finite rings are considered to be possible alphabets for linear codes first by Assmus Jr.~and Mattson \cite{assmus1963error}.
\citeauthor{blake1972codes} investigated in \cite{blake1972codes} the structure of cyclic codes over $\IntRing_m$ and studied in \cite{blake1975codes} the analogues to Hamming, Reed-Solonmon and BCH codes over $\IntRing_{p^r}$.
Spiegel \cite{spiegel1977codes,spiegel1978codes} generalized \citeauthor{blake1972codes}'s results to any integer ring $\IntRing_m$ by using the Chinese Remainder Theorem.
The interest in codes over finite rings has been evoked since the works by
Carlderbank et al.~\cite{calderbank1993linear,hammons1994z4}, which use linear codes over $\IntRing_4$ to explain the duality between the nonlinear binary Kerkock and Preparata codes.
The works by
Wood \cite{wood1999duality, wood2008code,wood2009foundations} laid a foundation of algebraic coding theory over finite rings by extending the two classical theorems by
MacWilliams \cite{macwilliams1961error, macwilliams1962combinatorial} to codes over finite rings.
The extension theorem is also known as the equivalence theorem and the MacWilliams identities deal with the relation between the Hamming weight enumerators of a linear code and its dual.
Self-dual codes (or in general, dual-containing codes) have attracted a lot of research interest since the work by Calderbank et al.~\cite{calderbank1998quantum}, which transformed the problem of constructing quantum error-correcting (QEC) codes into the problem of finding classical additive codes which are dual-containing. Several QEC codes have been constructed from classical codes, such as BCH codes \cite{aly2007quantum}, Reed-Solomon codes \cite{li2008quantum}, Reed-Muller codes \cite{steane1999quantum} and algebraic geometric codes \cite{chen2005quantum}.
Many good QEC codes have been constructed from cyclic codes over finite rings \cite{qian2009quantum,kai2011quaternary,guenda2014quantum,tang2016new, bag2019new}. Constacyclic codes and negacyclic codes, as generalizations of cyclic codes (cf.~\cref{remark:special-polycyclic}), have also been used to construct quantum codes \cite{chen2015application,gao2018uConsta, wang2020some,alahmadi2021new}.

\emph{Skew-cyclic codes} (also called $\Endom$-cyclic code) are a class of polycyclic codes $\cC(g,f)$ where $g,f$ are polynomials in a skew polynomial ring $\ring[\SkewVar;\Endom]$ and $f=\SkewVar^n-1$, so that the \emph{$\Endom$-cyclic shift} of any codeword in a skew-cyclic code $\cC$ is also a codeword, i.e.,
  \begin{align*}
    \parenv*{\Endom(c_{n-1}), \Endom(c_0), \dots, \Endom(c_{n-2})} \in\cC \textrm{ for all }(c_0,c_1,\dots, c_{n-1})\in\cC\ .
  \end{align*}
  The concept of skew-cyclic codes was introduced over finite fields by Boucher, Geiselmann and Ulmer \cite{boucher2007skew}. Analogue to \cref{def:polycyclic} of the polycyclic codes, each skew-cyclic code corresponds to a right divisor $g$ of $f=\SkewVar^n-1$.
  Since skew polynomials do not necessarily have unique irreducible factorizations, a polynomial $\SkewVar^n-1$ may have a considerable number of distinct right divisors of the same degree, which leads to many skew cyclic codes. Therefore, there is better chance to obtain codes with good parameters. This was one motivation of \cite{boucher2007skew} to introduce the notion of skew-cyclic codes. The works \cite{boucher2008skew, boucher2009coding, boucher2009modules} further investigate skew-cyclic codes over finite rings.
  The notions of \emph{skew-constacyclic} and \emph{skew-negacyclic} are the generalizations with $f=\SkewVar^n-a, a\in\ring^*$ and $f=\SkewVar^n+1$, respectively.
The skew polynomials ring $\RingSkewPolys$ with nonzero derivation was first considered by Boucher and Ulmer \cite{boucher2014linear} to construct \emph{$(\Endom,\Deriv)$-cyclic codes}, where the following \emph{$(\Endom,\Deriv)$-cyclic shift} of a codeword is also a codeword:
  \begin{align*}
    \parenv*{\Endom(c_{n-1})+ \Deriv(c_0), \Endom(c_0)+\Deriv(c_1), \dots, \Endom(c_{n-2})+\Deriv(c_{n-1})} \in\cC \textrm{ for all }(c_0,c_1,\dots, c_{n-1})\in\cC.
  \end{align*}
  Boulagouaz and Leroy \cite{boulagouaz2013delta} generalized the notion to \emph{$(\Endom,\Deriv)$-polycyclic codes}. These works both considered codes over finite fields.
  Since the work \cite{sharma2018class} by Sharma and Bhaintwal, several works also investigated the $(\Endom,\Deriv)$-polycyclic codes over finite rings \cite{ma2021sigma,suprijanto2021skew, ma2022xn, patel2022theta, suprijanto2023linear}.
  Recently, $\Endom$-cyclic codes over various rings have been used to construct quantum codes \cite{dinh2021class,verma2022new,prakash2023quantum}, as well as $(\Endom,\Deriv)$-polycyclic codes \cite{patel2022f}.

  This chapter considers constructions of dual-containing codes over finite commutative rings from skew polynomials with derivations.
  We first introduce in \cref{sec:rings-ord-four} the rings over which we search for dual-containing codes.
  In \cref{sec:skew-polycyclic-codes}, we define the $(\Endom,\Deriv)$-polycyclic codes and introduce some properties regarding the codes and their dual codes. We then present in \cref{sec:algo-search-via-GB} an algorithm using Gr\"obner bases to compute all the dual-containing $(\Endom,\Deriv)$-polycyclic codes. The resulting codes found by the algorithm are presented in \cref{sec:computation-results}.

\emph{This chapter is based on the work~\cite{liu2023gr}, submitted to Advances in Mathematics of Communications.}

\section{Base Rings, Endomorphisms and Derivations}
\label{sec:rings-ord-four}
Throughout the chapter, we denote by $\ring$ a finite commutative ring, $\Endom$ an endomorphism of $\ring$, $\Deriv$ a $\Endom$-derivation of $\ring$, and $\polyRing=\ring[X;\Endom,\Deriv]$ a skew polynomial ring.

For the base ring $\ring$ over which dual-containing codes are constructed, we consider the finite commutative ring $\ring$ that is a \emph{free $\subring$-algebra}, denoted by $\ring=\subring[\beta_1,\beta_2,\dots,\beta_s]$, where $\beta_1,\beta_2,\dots,\beta_s$ form a \emph{basis} of $\ring$ over $\subring$. This means that any element $a\in\ring$ can be written as a $\subring$-linear combinations of $\beta_1,\beta_2,\dots,\beta_s$, i.e., $a=b_1\beta_1+\dots + b_s\beta_s$ for some $b_1,\dots,b_s\in\subring$.

In this chapter we consider the following rings $\ring$ of order $4$, which are all free $\F_2$-algebras, 
to construct self-dual codes $\cC\subseteq \ring^n$:
\begin{itemize}
\item $\ring=\F_2[v]$ where $v^2=v$,
\item $\ring=\F_2[u]$ where $u^2=0$,
\item $\ring=\F_4= \F_2[\alpha]$ where $\alpha^2 = \alpha+1$. 
\end{itemize}
Here we add some insights on the notations.
For the finite field $\F_4$, it is well-known that it is also a quotient ring $\F_2[x]/\myspan{x^2-x-1}$ where $x^2-x-1$ is an irreducible polynomial in $\F_2[x]$, and $\set*{1,\alpha}$ constitutes a basis of $\F_4$ over $\F_2$ ($\alpha$ is often called the primitive element in the context of finite fields).
Analogously, the ring $\F_2[v]$ (resp.~$\F_2[u]$) is the quotient ring $\F_2[x]/\myspan{x^2-x}$ ($\F_2[x]/\myspan{x^2}$), and $v$ ($u$) constitutes a basis of the ring over $\F_2$.
Since $x^2-x$ (resp.~$x^2$) is not irreducible in $\F_2[x]$, there are zero divisors $\set*{v, v+1}$ ($\set*{u}$) in the ring.

For the endomorphisms $\Endom$ and $\Endom$-derivations $\Deriv$ of $\ring$ that we use to define the skew polynomial ring $\ring[\SkewVar; \Endom,\Deriv]$, we consider those that can be written as \emph{polynomial maps} in the subring $\subring\subseteq \ring$.
\begin{definition}[Polynomial maps]
  \label{def:poly-maps}
  A \emph{polynomial map} on a ring $\subring$ is a map
  \begin{align*}
    f\ :\ \subring & \to \subring\\
    x & \mapsto  \sum_{i=0}^s b_ix^i,
  \end{align*}
  where $s\in \bbN$ and $b_i\in \subring$.
\end{definition}
For the rings $\F_2[v],\F_2[u]$ and $\F_2[\alpha]$, the endomorphisms $\Endom$ and derivations $\Deriv$ which are polynomial maps in the subring $\subring=\F_2$ are listed in \cref{tab:endo-deri-rings}.
Note that since $\Endom(0)=0, \Endom(1)=1, \Deriv(0)=0$ and $\Deriv(1)=0$, it is sufficient to give the map on the basis ($v$, $u$ or $\alpha$) of $\ring$ over $\subring$ to determine the value $\Endom(a)$ and $\Deriv(a)$ for all $a\in\ring$.
\begin{table}[htb!]
  \centering
  \caption{Endomorphisms and derivations of the rings. The gray cells indicate the inner derivations.}
  \label{tab:endo-deri-rings}
  \begin{subtable}[h!]{\linewidth}
    \centering
    \caption{The endomorphisms $\Endom$ and derivations $\Deriv$ of the ring $\F_2[v]$.}
    \label{subtab:F2v}
    \begin{tabular}{|c||l|l||l|l|}
      \hline
       $\F_2[v]$& \multicolumn{2}{c||}{Automorphism}  & \multicolumn{2}{c|}{Endomorphism} \\ \hline
          & $\Endom_1=\Id$& $\Endom_2:v\mapsto v+1$ & $\Endom_3:v\mapsto 0$  & $\Endom_4:v\mapsto 1$  \\ \hline \hline
      $\Deriv_1=0$&  \cellcolor{gray!30}{$v\mapsto 0$} & \cellcolor{gray!30}{$v\mapsto 0$}
                                                 &  \cellcolor{gray!30}{$v\mapsto 0$}
          & \cellcolor{gray!30}{$v\mapsto 0$} \\ \hline
      $\Deriv_2$ & & \cellcolor{gray!30}{$v\mapsto 1$}
                                                 & & \\ \hline
      $\Deriv_3$ & & \cellcolor{gray!30}{$v\mapsto v$}
                                                 & \cellcolor{gray!30}{$v\mapsto v$}
          & \\ \hline
      $\Deriv_4$ & & \cellcolor{gray!30}{$v\mapsto v+1$}
                                                 & & \cellcolor{gray!30}{$v\mapsto v+1$}
      \\ \hline
    \end{tabular}
\end{subtable}
\hfil
  \begin{subtable}[h!]{0.5\linewidth}
    \centering
    \caption{The endomorphisms $\Endom$ and derivations $\Deriv$ of the ring $\F_2[u]$.}
    \label{subtab:F2u}
    \begin{tabular}{|c||l||l|l|l|}
      \hline
      $\F_2[u]$ &Automorphism &Endomorphism \\ \hline
          & $\Endom_1=\Id$& $\Endom_2 :u\mapsto 0$  \\ \hline \hline
      $\Deriv_1=0$ &  \cellcolor{gray!30}{$u\mapsto 0$} &   \cellcolor{gray!30}{$u\mapsto 0$} \\ \hline
      $\Deriv_2$ & $u\mapsto 1$ &    \\ \hline
      $\Deriv_3$ & $u\mapsto u$ & \cellcolor{gray!30}{ $u\mapsto u$}  \\ \hline
      $\Deriv_4$ & $u\mapsto u+1$ &  \\ \hline
    \end{tabular}
  \end{subtable}\hfil %
  \begin{subtable}[h!]{0.41\linewidth}
    \centering
    \caption{The endomorphisms $\Endom$ and derivations $\Deriv$ of the ring $\F_2[\alpha]$.}
    \label{subtab:F2alpha}
    \begin{tabular}{|c||l||l|l|l|}
      \hline
      $\F_2[\alpha]$ &\multicolumn{2}{c|}{Automorphism} \\ \hline
      & $\Endom_1=\Id$ & $\Endom_2:\alpha\mapsto\alpha+1$ \\ \hline \hline
      $\Deriv_1=0$&   \cellcolor{gray!30}{$\alpha\mapsto 0$} &  \cellcolor{gray!30}{$\alpha\mapsto 0$}  \\ \hline
      $\Deriv_2$ & &  \cellcolor{gray!30}{$\alpha\mapsto 1$}   \\ \hline
      $\Deriv_3$ & &  \cellcolor{gray!30}{$\alpha\mapsto \alpha$} \\ \hline
      $\Deriv_4$ & &  \cellcolor{gray!30}{$\alpha\mapsto \alpha+1$} \\ \hline
    \end{tabular}
  \end{subtable}\hfil
\end{table}

The following examples show that a map which is a polynomial map in $\subring$ is not necessarily a polynomial map in $\ring \supset \subring$.
The principle to determine whether a map $\Endom$ (or $\Deriv$) is a polynomial map in $\ring$ is to check whether $\Endom$ can be written as $\Endom(x)=\sum_{i\in\bbN}a_ix^i$ with fixed $a_i\in\ring$ for all $x\in\ring$.

\begin{example}
  \label{eg:not-poly-map-F2v}
  Consider the ring $\ring=\F_2[v]$ of order $4$.
  As listed in \cref{subtab:F2v}, there are two automorphisms $\Endom_1=\Id$ and $\Endom_2$, and two non-trivial endomorphisms $\Endom_3$ and $\Endom_4$.

  The automorphism $\Endom_1=\Id$ is trivially a polynomial map on $\ring$.
  Suppose that the automorphism $\Endom_2$ is a polynomial map on $\ring$ such that for any $x\in\ring$,
  \begin{align*}
    \Endom_2: x\mapsto \sum_{i\in\bbN} \underbrace{(b_{i,0}+b_{i,1} v)}_{\in\ring} x^i\, = \,  \sum_{i\in\bbN} b_{i,0} x^i+  \sum_{i\in\bbN} b_{i,1}v x^i  \qquad (b_{i,j}\in\F_2)\ .
  \end{align*}
  Then $\Endom_2(0)=0 \implies  b_{0,0}=0$. Since $b_{i,j}\in \set*{0,1}$, $\Endom_2(v)$ is a sum of positive powers of $v$. Since $v^2=v$, we have that $\Endom_2(v)$ is a sum of $v$, which is either $v$ or $0$ in this ring of characteristic $2$. However, since $\Endom_2(v)=v+1$, we conclude that $\Endom_2$ is not a polynomial map on $\ring$.
\end{example}
\begin{example}
  \label{eg:not-poly-map-F2u}
  Consider the ring $\ring=\F_2[u]$. As listed in \cref{subtab:F2u}, the only automorphism of $\ring$ is $\Endom_1=\Id$, which is a polynomial map in $\ring$.
  Suppose that a $\Endom_1$-derivation $\Deriv$ of $\ring$ is a polynomial map in $\ring$ such that for any $x\in\ring$,
  \begin{align*}
    \Deriv:x\mapsto\sum_{i=0}^t a_ix^i \qquad (a_i\in\ring)\ .
  \end{align*}
  Since $\Deriv(1)=0$, we must have $a_0=0$ in the polynomial map. From $u^2=0$, we obtain $\Deriv(u)= \sum_{i=1}^t a_iu^i = a_1u$. Write $a_1=b_{1,0}+b_{1,1}u\in \ring$ with some $b_{1,0},b_{1,1}\in \F_2$. Then $\Deriv(u)= b_{1,0} u+b_{1,1}u^2=b_{1,0}u$, which can never be $u+1$ or $1$.
  Hence, $\Deriv_2(u)=1$ and $\Deriv_4(u)=u+1$ are not polynomial maps on $\ring$.
\end{example}

\section{Polycyclic Codes over Rings based on Skew Polynomials}
\label{sec:skew-polycyclic-codes}
Denote by $\myspan{f}_l\subseteq\polyRing$ a left ideal in $\polyRing$ generated by $f\in\polyRing$.
The quotient ring (or factor ring) of $\polyRing$ modulo the left ideal $\myspan{f}_l$ is defined as
\begin{align*}
  \polyRing/\myspan{f}_l\defeq \set*{\left. h \Mod f\ \right|\ h\in\polyRing}\ ,
\end{align*}
where $h \Mod f$ gives the remainder of right dividing $h$ by $f$ (applying \cref{algo:EAskew}).
\begin{proposition}
  \label{prop:principle-ideal-g-rdiv-f}
  Let $\cI$ be a left ideal in $\polyRing/\myspan{f}_l$. Then
  \begin{enumerate}
  \item There is a unique monic polynomial $g\in\cI$ of minimal degree. \label{propitem:g-min-deg}
  \item $\cI$ is principle with a generator $g$. \label{propitem:I-principle}
  \item $g\divides_r f$ in $\polyRing$. \label{propitem:g-rdiv-f}
  \end{enumerate}
\end{proposition}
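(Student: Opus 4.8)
The plan is to transcribe the classical proof that $\field[x]/\langle f\rangle$ is a principal ideal ring, replacing Euclidean division over a field by right division in $\polyRing$ (\cref{algo:EAskew}), whose termination and output-uniqueness are guaranteed by \cref{thm:right-division-over-ring} as soon as the divisor has an invertible leading coefficient. Here $f$ is monic — or at least $\lc(f)$ is a unit, which is what makes $h\Mod f$ well defined — so every coset of $\myspan{f}_l$ has a unique representative of degree $<\deg f$, namely $h\Mod f$, and the image of $h$ in $\polyRing/\myspan{f}_l$ is $0$ exactly when $f\divides_r h$. I will identify elements of the quotient with their canonical representatives of degree $<\deg f$. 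If $\cI=\set*{0}$, then $\cI$ is generated by the image of $f$, which is monic and satisfies $f\divides_r f$, so all three claims are immediate; assume henceforth that $\cI$ contains a nonzero element.

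\textbf{Step 1 (part (i)).} First I would set $d:=\min\set*{\deg p \ |\ p\in\cI,\ p\neq 0}$. Since $\cI$ contains a nonzero element of degree $<\deg f$, we have $d<\deg f$. Pick $p\in\cI$ with $\deg p=d$. The key point — flagged below as the obstacle — is that $\lc(p)$ must be a unit of $\ring$; granting this, $g:=\lc(p)^{-1}\cdot p$ is monic of degree $d$ and lies in $\cI$, since left multiplication by the constant $\lc(p)^{-1}$ stays inside the left ideal $\cI$. For uniqueness: if $g_1,g_2\in\cI$ are both monic of degree $d$, then $g_1-g_2\in\cI$ has degree $\leq d-1<d$, so by minimality of $d$ it vanishes, giving $g_1=g_2$.

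\textbf{Step 2 (parts (ii) and (iii)).} Let $g$ be this monic minimal-degree element. For an arbitrary $p\in\cI$, I would right-divide $p$ by $g$ in $\polyRing$ — legitimate since $\lc(g)=1$ is a unit — to get $p=qg+r$ with $\deg r<\deg g=d$. Passing to the quotient, the image of $qg$ lies in $\cI$ (because $g\in\cI$ and $\cI$ is a left ideal), hence so does the image of $r=p-qg$; since $\deg r<d<\deg f$, this image is the canonical representative of itself, so if $r\neq 0$ it would be a nonzero element of $\cI$ of degree $<d$, contradicting the choice of $d$. Thus $r=0$, so $p=qg\in\myspan{g}_l$, proving $\cI=\myspan{g}_l$ — this is (ii). For (iii) I would run the same computation with $f$ in place of $p$: write $f=qg+r$ with $\deg r<d$; since the image of $f$ is $0$, the image of $r$ equals minus the image of $qg$ and so lies in $\cI$, and exactly as above $r=0$ in $\polyRing$. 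Hence $f=qg$, i.e.\ $g\divides_r f$.

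\textbf{Main obstacle.} Apart from Step 1, the argument is a faithful copy of the field case, using only that right division by a polynomial with invertible leading coefficient (here it is $1$) terminates with a unique quotient and remainder. The hard part is the normalization in Step 1: showing that a minimal-degree element of $\cI$ can be chosen monic, equivalently that it has an invertible leading coefficient. Over a field this is automatic; over a ring with zero divisors it can fail for a general left ideal, so here it must be extracted from the structure of the base ring $\ring$ (or from an accompanying hypothesis on $f$, or on $\cI$, coming from the code-theoretic setup). I expect the real content of the proof to live at exactly this step, with Steps 2 and 3 being routine once a monic generator is in hand.
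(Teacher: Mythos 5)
Your argument for parts (ii), (iii), and the uniqueness half of (i) is the paper's proof essentially line for line: right-divide the element of $\cI$ (resp.\ $f$) by the monic minimal-degree generator $g$, observe that the remainder lies in $\cI$ (because $g\in\cI$ and $\cI$ is a left ideal) and has degree $<\deg g$, and conclude it is zero by minimality. There is no difference in strategy there.

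Your flagged obstacle is real, and — this is the important point — the paper does not resolve it either. The published proof of (i) takes two monic minimal-degree polynomials $g,h\in\cI$ and derives a contradiction from $g-h$ having smaller degree; that establishes uniqueness, but the \emph{existence} of a monic minimal-degree element is never addressed. Your suspicion that this can genuinely fail over a ring with zero divisors is correct. Over $\ring=\F_2[v]$ with $v^2=v$ (one of the paper's own base rings), take $\Endom=\Id$, $\Deriv=0$, $f=X^2+1$, and let $\cI$ be the left ideal of $\polyRing/\myspan{f}_l$ generated by $v$. Then $\cI=\{0,\,v,\,vX,\,v+vX\}$, which contains no monic polynomial at all, since $v$ is a nonunit idempotent; parts (i) and (ii) of the proposition fail. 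The result is correct only for the ideals the paper actually needs, namely those of the form $\myspan{g}_l/\myspan{f}_l$ with $g$ a monic right divisor of $f$ as in \cref{def:skewPolycyclicCodes} — there the monic generator is part of the input, and your Steps 2–3 then finish the job. Your instinct to look for an accompanying hypothesis from the code-theoretic setup was exactly right; the missing hypothesis is that $\cI$ is one of these code ideals, handed to us together with its monic generator.
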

\begin{proof}
  \textit{\ref{propitem:g-min-deg}} Suppose that there are two monic skew polynomials with minimum degree in $\cI$, say $h$ and $g$ and $g\neq h$. Since $\cI$ forms an additive group, $r=g-h\in\cI$ and $\deg(r)<\deg(g)=\deg(h)$, which is contrary to the minimal degree assumption on $g$ and $h$.

  \textit{\ref{propitem:I-principle}} Suppose that there exists an $h\in \cI$ which is not a right multiple of $g$. Then $h=mg+r$ for some $m\in\polyRing$, $r\in\cI$ and $\deg(r)<\deg(g)$. This contradicts the degree minimality of $g$.

 \textit{\ref{propitem:g-rdiv-f}} Suppose that $g\not\divides_r f$. We can then write $f=mg+r$ for some $m,r\in\polyRing$ and $\deg(r)<\deg(g)$. Since $g$ is a generator of $\cI$, $mg\in\cI$ and $r= f-mg\equiv -mg\Mod f$, which means $r$ is an additive inverse of $mg$ in $\polyRing/\myspan{f}_l$ and therefore $r\in\cI\subseteq\polyRing/\myspan{f}_l$. However, this contradicts the degree minimality of $g$.
\end{proof}
The results above shows that every left ideal in the quotient ring $\polyRing/\myspan{f}_l$ is a principle left ideal and different left ideals are generated by different right factors of $f$.

Analogue to the polycyclic codes (\cref{def:polycyclic}) which are ideals in the quotient ring $\ring[x]/\myspan{f}$ for some $f\in\ring[x]$, for a non-commutative skew polynomial ring, we define the polycyclic codes as the left ideals in the quotient ring $\polyRing/\myspan{f}_l$ for some $f\in\polyRing$.

\begin{definition}[$(\Endom,\Deriv)$-polycyclic code]
  \label{def:skewPolycyclicCodes}
  Let $f\in \polyRing$ be a monic skew polynomial with a right divisor $g\in \polyRing$.
  A \emph{$(\Endom,\Deriv)$-polycyclic code} (in short $(\Endom,\Deriv)$-code) w.r.t.~$g,f$ is defined as a left ideal in $\polyRing/\myspan{f}_l$ generated by $g$, 
  i.e.,
  \begin{align*}
    \cC(g,f)\defeq \myspan{g}_l/\myspan{f}_l=\set*{u\cdot g \Mod f\ |\ u\in\polyRing }\ .
  \end{align*}
  The polynomial $g$ is a \emph{generator polynomial} of $\cC(g,f)$.

  Let $n=\deg(f)$ and $k=n-\deg(g)$.
  A linear block code $\cC[n,k]$ is a \emph{$(\Endom,\Deriv)$-code} if
  \begin{align*}
    \cC[n,k]\defeq\set*{\left. \bc=(c_0,c_1,\dots,c_{n-1})\ \right|\ c_0+c_1\SkewVar+\cdots+c_{n-1} \SkewVar^{n-1}\in \cC(g,f)}\ .
  \end{align*}
\end{definition}

\subsection{Module Codes}

Linear codes of length $n$ over a finite field $\F$ can be seen as subspaces of the vector spaces $\F^n$. Analogously, linear codes over a finite ring $\ring$ can be seen as \emph{submodules} of the \emph{module} $\ring^n\defeq\set*{(a_0,a_1, \dots, a_{n-1})\ |\ a_i\in\ring, \forall i\in[n]}$.
Informally speaking, the concept of \emph{module} is a generalization of the vector space, in the sense that the set of scalars is a ring instead of a field.
\begin{definition}[Module]
  \label{def:modules}
  Let $\ring$ be a ring. A \emph{left $\ring$-module} $\module$ consists of an abelian group $(\module, +)$ and a \emph{left scalar multiplication} $\cdot: \ring\times \module \to \module$ such that for all $a,b\in\ring$ and $x,y\in\module$,
  \begin{enumerate}
  \item $a\cdot(x+y) = a\cdot x$,
  \item $(a+b)\cdot x = a\cdot x+ b\cdot x$,
  \item $(ab)\cdot x = a\cdot (b\cdot x)$,
  \item $1\cdot x= x$.
  \end{enumerate}
A \emph{right $\ring$-module} $\module$ is defined similarly with a right scalar multiplication $\cdot$: $\module\times \ring \to \module$.
\end{definition}
Vector spaces over a finite field $\F$ always have a \emph{basis} (i.e., every element in the vector space is a unique $\F$-linear combination of the elements in the basis), and the dimension is unique.
However, modules do not always have a basis.
The modules that have a basis are called \emph{free} modules.

\begin{proposition}
  Let $f\in \polyRing$ be a monic skew polynomial.
  The quotient ring $\polyRing/\myspan{f}_l$ is
  a left $\polyRing$-module and
  a free left $\ring$-module. Moreover, $\polyRing/\myspan{f}_l\cong\ring^n$.
\end{proposition}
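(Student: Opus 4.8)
The plan is to fix notation $M \defeq \polyRing/\myspan{f}_l$, $n \defeq \deg(f)$, and to exploit that $f$ is monic, so $\lc(f)=1$ is invertible and unique right division by $f$ is available (\cref{thm:right-division-over-ring}). I would then establish the three assertions in turn, the only real input being the right-division structure of $\polyRing$ over $f$.

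\emph{The left $\polyRing$-module structure.} First I would check that $\myspan{f}_l = \set*{gf : g\in\polyRing}$ is a left ideal: for any $h\in\polyRing$ one has $h(gf)=(hg)f\in\myspan{f}_l$. As the quotient of the left $\polyRing$-module $\polyRing$ by a submodule, $M$ then carries the standard left $\polyRing$-action $h\cdot(u+\myspan{f}_l)\defeq hu+\myspan{f}_l$, well defined because $h\cdot\myspan{f}_l\subseteq\myspan{f}_l$. Under the additive bijection with the remainder set $\set*{h\Mod f : h\in\polyRing}$ used to define the quotient, this action reads $h\cdot(u\Mod f)=(hu)\Mod f$; the one point needing a line of justification is that the inner reduction may be dropped, i.e.\ $\bigl(h\,(u\Mod f)\bigr)\Mod f=(hu)\Mod f$, which holds because $u-(u\Mod f)$ is a left multiple of $f$, hence so is $hu-h(u\Mod f)$, and two polynomials differing by a left multiple of the monic $f$ have the same right remainder by the uniqueness in \cref{thm:right-division-over-ring}.

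\emph{Freeness over $\ring$ and the isomorphism.} Restricting scalars along the embedding $\ring\hookrightarrow\polyRing$ of constants makes $M$ a left $\ring$-module, and I claim $\mathcal{B}=\set*{\SkewVar^i+\myspan{f}_l : 0\le i\le n-1}$ is a basis. For spanning: given $h\in\polyRing$, \cref{algo:EAskew} (applicable since $\lc(f)$ is invertible) writes $h=qf+r$ with $r=\sum_{i=0}^{n-1}r_i\SkewVar^i$, so $h+\myspan{f}_l=\sum_{i=0}^{n-1}r_i(\SkewVar^i+\myspan{f}_l)$. For independence: if $\sum_{i=0}^{n-1}a_i(\SkewVar^i+\myspan{f}_l)=0$, then $p\defeq\sum_{i=0}^{n-1}a_i\SkewVar^i=gf$ for some $g\in\polyRing$; were $g\ne0$, \cref{thm:deg-prod-skew} (again via invertibility of $\lc(f)$) would give $\deg p=\deg g+n\ge n$, contradicting $\deg p<n$, so $g=0$, $p=0$, and uniqueness of the left-form expansion of a skew polynomial (\cref{def:skew-poly-ring}) forces all $a_i=0$. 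Hence $M$ is a free left $\ring$-module on the $n$-element basis $\mathcal{B}$, and the coordinate map $\ring^n\to M$, $(a_0,\dots,a_{n-1})\mapsto\sum_{i=0}^{n-1}a_i(\SkewVar^i+\myspan{f}_l)$, is a left $\ring$-module isomorphism (bijectivity and $\ring$-linearity are immediate from the basis property and the module axioms), so $M\cong\ring^n$.

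\emph{Main obstacle.} The delicate step is the linear independence in the second part: this is exactly where monicity of $f$ is used, through the degree formula \cref{thm:deg-prod-skew}, whose hypothesis (invertibility of the leading coefficient of the right factor) can fail for general divisors over a non-division ring $\ring$. Everything else is a formal consequence of $\myspan{f}_l$ being a left ideal and of unique right division by the monic $f$.
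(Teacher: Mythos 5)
Your proof is correct and follows essentially the same route as the paper's: identify representatives of $\polyRing/\myspan{f}_l$ with the remainders of degree $<n$ via unique right division by the monic $f$, then read off freeness from the standard monomial basis $(1,\SkewVar,\dots,\SkewVar^{n-1})$. The one place you go further than the paper (which disposes of the point with "it is easy to see") is the linear-independence step, which you justify via \cref{thm:deg-prod-skew}; that is exactly the right way to pin down where monicity of $f$ enters, and it makes explicit why the argument would fail for a divisor with a non-invertible leading coefficient over a non-division ring.
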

\begin{proof}
  By definition, $\polyRing/\myspan{f}_l$ is a left $\polyRing$-module.
  Since $f$ is monic, we can perform the right division (\cref{algo:EAskew}) on any element $u\in\polyRing$ by $f$ and obtain a unique remainder polynomial of degree $<n$.
  For any $r\in\polyRing$ of $\deg(r)<n$, there exist $u,q\in\polyRing$ (possibly infinite pairs) such that $u = qf+r$. Therefore, $\polyRing/\myspan{f}_l=\set*{r\in\polyRing\ |\ \deg(r)<n}$.
  It is easy to see that $\set*{r\in\polyRing\ |\ \deg(r)<n}$ is a free $\ring$-module with a basis $(1,X,\ldots,X^{n-1})$ and
  \begin{align*}
    \set*{(r_0,r_1,\dots,r_{n-1})\ |\ r_0+r_1\SkewVar+\cdots+r_{n-1} \SkewVar^{n-1}\in\polyRing/\myspan{f}_l } = \ring^n\ .
  \end{align*}
\end{proof}

\begin{proposition}
  \label{prop:module}
  Let $f\in \polyRing$ be a monic skew polynomial with a right divisor $g\in \polyRing$.
  The $(\Endom,\Deriv)$-code
  $\cC(g,f)$ is
  a left $\polyRing$-submodule of $\polyRing/\myspan{f}_l$ and
  a free left $\ring$-submodule of dimension $k=\deg(f)-\deg(g)$.
\end{proposition}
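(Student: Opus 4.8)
The plan is to treat the two assertions separately: first that $\cC(g,f)$ is a left $\polyRing$-submodule of $\polyRing/\myspan{f}_l$, and then that it is a free left $\ring$-module of dimension $k=\deg(f)-\deg(g)$. Throughout I would use that $\lc(g)$ is invertible (this is implicit in ``$g$ right-divides $f$''), and, after replacing $g$ by $\lc(g)^{-1}g$ --- which changes neither $\myspan{g}_l$ nor $\cC(g,f)$ --- that $g$ is monic; writing $f=h\cdot g$, \cref{thm:deg-prod-skew} gives $\deg(h)=k$ and a short leading-coefficient computation forces $h$ to be monic as well, so right division by $h$ is available.

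For the submodule claim: since $g\divides_r f$, \cref{lem:g-right-div-f} yields $\myspan{f}_l\subseteq\myspan{g}_l$, so $\myspan{g}_l$ is a left $\polyRing$-submodule of $\polyRing$ containing $\myspan{f}_l$; hence its image under the canonical projection $\polyRing\to\polyRing/\myspan{f}_l$, which is exactly $\cC(g,f)=\myspan{g}_l/\myspan{f}_l$, is a left $\polyRing$-submodule of $\polyRing/\myspan{f}_l$. Restricting scalars to the constants $\ring\subseteq\polyRing$, it is in particular a left $\ring$-submodule.

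For freeness and the dimension, the key step is to exhibit the explicit $\ring$-basis $\set*{g,\SkewVar g,\dots,\SkewVar^{k-1}g}$. First I would identify the reduced representatives: for any $u\in\polyRing$, right-dividing $u$ by the monic polynomial $h$ gives $u=qh+r$ with $\deg(r)<k$, so $ug=qf+rg\equiv rg\pmod{\myspan{f}_l}$, and since $\deg(rg)=\deg(r)+\deg(g)<k+\deg(g)=n$ by \cref{thm:deg-prod-skew}, the polynomial $rg$ is its own reduction modulo $f$; hence $\cC(g,f)=\set*{rg \mid r\in\polyRing,\ \deg(r)<k}$. Expanding $r=\sum_{i=0}^{k-1}a_i\SkewVar^i$ with $a_i\in\ring$ shows that $\set*{\SkewVar^ig}_{i=0}^{k-1}$ spans $\cC(g,f)$ as a left $\ring$-module.

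It remains to verify $\ring$-linear independence, and this is where the invertibility of $\lc(g)$ really matters. If $\sum_{i=0}^{k-1}a_i\SkewVar^ig=0$ in $\polyRing/\myspan{f}_l$ with $a_i\in\ring$, then this element lies in $\myspan{f}_l$ and has degree $<n$; since $f$ is monic, every nonzero element of $\myspan{f}_l$ has degree $\geq n$ (again \cref{thm:deg-prod-skew}), so it must be the zero polynomial of $\polyRing$. Writing $r=\sum_{i=0}^{k-1}a_i\SkewVar^i$, if $r\neq 0$ then $\deg(rg)=\deg(r)+\deg(g)\geq 0$ contradicts $rg=0$; hence $r=0$ and all $a_i=0$. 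Thus $\set*{\SkewVar^ig}_{i=0}^{k-1}$ is a free $\ring$-basis and $\cC(g,f)$ is free of dimension $k=\deg(f)-\deg(g)$. I expect the main point to stress is that this last part cannot be shortcut by ``a submodule of a free module is free'': that is false over the rings considered here (e.g.\ $\F_2[u]$ is not even a domain), so the explicit basis --- together with the degree bookkeeping via \cref{thm:deg-prod-skew} that underpins its independence --- is genuinely needed rather than cosmetic.
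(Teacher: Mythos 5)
Your proof is correct and follows essentially the same route as the paper's: both establish the submodule property and then exhibit the explicit free basis $\set*{g,\SkewVar g,\dots,\SkewVar^{k-1}g}$ via right division and degree bookkeeping enabled by the invertibility of $\lc(g)$. The only cosmetic difference is that you obtain the reduced representatives by right-dividing $u$ by $h$ (where $f=hg$) before multiplying by $g$, whereas the paper right-divides the reduced representative $w$ directly by $g$ and invokes uniqueness of the quotient from \cref{thm:right-division-over-ring}; the two are interchangeable.
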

\begin{proof}

  By definition, $\cC(g,f)=\myspan{g}_l /\myspan{f}_l$ is a left $\polyRing$-submodule of $\polyRing/\myspan{f}_l$.
  Since $f$ is monic, the leading coefficient $\lc(g)$ is a right divisor of $1$ and is therefore invertible.
  For any $w\in\cC(g,f)$, $g\divides_r w$ and the quotient polynomial $q$ is unique (\cref{thm:right-division-over-ring}) and of degree at most $k$.
  This implies that the $\cC(g,f)$ is a free $\ring$-submodule of dimension $k$ with a basis $(g,Xg,\dots,X^{k-1}g)$.
\end{proof}
The relations between the quotient ring $\polyRing/\myspan{f}_l$, the $\ring$-module $\ring^n$, the left ideal $\cC(g,f)$ in $\polyRing/\myspan{f}_l$, and the block code $\cC[n,k]$ are illustrated below.
\begin{center}
\begin{tikzpicture}
  \node (Rmod) [] {$\polyRing/\myspan{f}_l$};
  \node (sub1) [below=0.5em of Rmod, rotate=90, yshift=1.5ex] {$\supseteq$};
  \node (Cgf) [below=1em of Rmod] {$\cC(g,f)$};
  \node (cong1) [right=1em of Rmod] {$\cong$};
  \node (Amod) [right=1em of cong1] {$\ring^n$};
  \node (sub2) [below=0.7em of Amod, rotate=90,yshift=1.5ex] {$\supseteq$};
  \node (cong2) [below=1.5em of cong1] {$\cong$};
  \node (Csub) [right=0.5em of cong2] {$\cC[n,k]$};
\end{tikzpicture}
\end{center}

The code $\cC[n,k]$ has a generator matrix of the following form:
\begin{align*}
  \bG=
  \begin{pmatrix}
    g_0 & g_1 & \cdots & g_{n-k} & 0& \cdots & \\
    g_0^\Deriv & g_1^\Deriv+g_0^\Endom & g_2^\Deriv+g_1^\Endom &\cdots &g_{n-k}^\Endom &  0 &\cdots \\
    \vdots & \ddots & \ddots & \cdots &\ddots &\ddots\\
    g_0^{\Deriv^{k-1}} &&\cdots &&\cdots && g_{n-k}^{\Endom^{k-1}}
  \end{pmatrix}\ .
\end{align*}
The rows are given by the coefficients of $g, \SkewVar g,\dots, \SkewVar^{k-1} g$, which can be computed using the rule $\SkewVar a =a^{\Endom} \SkewVar +a^{\Deriv} $ for $a\in \ring$. In particular, the code is completely determined by $g$, $\Endom$ and $\Deriv$.

\begin{example}
  \label{eg:genMat} Let $f\in\polyRing$ be a monic skew polynomial of degree $4$ and $g=g_0+g_1\SkewVar$ be a right divisor of $f$. The $(\Endom,\Deriv)$-code $\cC[4,3]\cong\cC(g,f)$ has a generator matrix
  \begin{align*}
    \bG=
    \begin{pmatrix}
      g_0 & g_1 & 0 & 0\\
      g_0^\Deriv & g_1^\Deriv+g_0^\Endom & g_1^\Endom &0\\
      g_0^{\Deriv^2} & g_0^{\Deriv\Endom}+g_0^{\Endom\Deriv}+g_1^{\Deriv^2} &g_0^{\Endom^2}+ g_1^{\Deriv\Endom}+g_1^{\Endom\Deriv}& g_1^{\Endom^2}
    \end{pmatrix}\ .
  \end{align*}
 \end{example}
 If $\Endom$ is of the form  $a\mapsto a^{q}$  and $\Deriv$ is an inner $\Endom$-derivation $a\mapsto \beta a -\Endom(a) \beta$, which are the only possibilities if $\ring$ is a finite field $\Fqm$, then the entries of the generator matrix $\bG$ become polynomial expressions in the coefficients of $g$, which allows sophisticated computations over $\ring$. Hence, most studies on self dual $(\Endom,\Deriv)$-code so far considered $\ring$ to be a finite field.

\subsection{Parity-Check Polynomials/Matrices of $(\Endom,\Deriv)$-Codes}
For $\ring=\Fq$, a parity-check matrix of $(\Endom,\Deriv)$-codes
has been derived for $\Deriv=0$ in \cite[Corollary 1]{boucher2009modules} for general $\Deriv\neq 0$ in \cite{boulagouaz2013delta}.
A later work \cite{boulagouaz2018characterizations} derived a parity-check matrix for $\ring$ being a finite commutative rings.
In \cite{sharma2018class}, a parity-check matrix for $(\Endom,\Deriv)$-codes $\cC(g,f)$ over the ring $\ring=\IntRing_4[u], u^2=1$ is studied when $f=hg$ is a central polynomial\footnote{A polynomial $f\in \polyRing$ is called \emph{central} if $hf=fh$, for all $h\in \polyRing$. Equivalently, a polynomial $f\in\RingSkewPolys$ is central if $\SkewVar f=f \SkewVar$.}.
The works \cite{boulagouaz2013delta, boulagouaz2018characterizations} used the framework of pseudo-linear transformations, while we only use the framework of skew polynomial rings to derive a parity-check matrix in this section.

In order to obtain a parity-check matrix for $(\Endom,\Deriv)$-codes $\cC(g,f)$, we make the additional assumption that there exists $\hbar\in \polyRing$ such that $f=hg=g\hbar$ (i.e., $g$ is a left and right divisor of $f$).
This assumption is weaker than the assumption that $f$ is central, which allows us to find more $g,f\in\polyRing$ to construct dual-containing codes (see the $[6,4]$ example in \cref{sec:all-f-are-noncentral}).

In the rest of this chapter, we also make the assumptions in following proposition on the leading coefficients of $g,h, \hbar$.
\begin{proposition}
  \label{prop:leading-coef-assumptions}
  For a $(\Endom,\Deriv)$-codes $\cC(g,f)$, where $f$ is monic and $g$ is a left and right divisor of $f$, i.e., $f=hg=g\hbar$ for some $h,\hbar\in\polyRing$, 
  \begin{enumerate}
  \item we can assume w.l.o.g.~that $g$ and $h$ are monic;
  \item if $\Endom$ is an automorphism, then we can also assume w.l.o.g.~that $\hbar$ is monic. \label{item:lc-hbar}
  \end{enumerate}
\end{proposition}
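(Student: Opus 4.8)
The engine of the argument is the leading-coefficient bookkeeping already used in the proof of \cref{thm:deg-prod-skew}: for any $p,q\in\polyRing$, the coefficient of $\SkewVar^{\deg p+\deg q}$ in the product $pq$ equals $\lc(p)\cdot\Endom^{\deg p}(\lc(q))$, since commuting $\SkewVar$ past a coefficient (via $\SkewVar a=a^{\Endom}\SkewVar+a^{\Deriv}$) never raises the $\SkewVar$-degree. The plan is to apply this identity three times.

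First I would normalise $g$. Applying the identity to $f=g\hbar$: since $\lc(f)=1\neq 0$, the top term of $g\hbar$ is nonzero, so $\deg f=\deg g+\deg\hbar$ and $\lc(g)\cdot\Endom^{\deg g}(\lc(\hbar))=1$; as $\ring$ is commutative this already exhibits $\lc(g)$ as a unit, with inverse $\Endom^{\deg g}(\lc(\hbar))$. (It is precisely here that the \emph{left}-divisibility $f=g\hbar$ is used, and not merely $f=hg$.) I would then replace $g$ by $g':=\lc(g)^{-1}g$: this multiplies every coefficient of $g$ by the scalar $\lc(g)^{-1}$, so $g'$ is monic of the same degree as $g$, and $\myspan{g'}_l=\polyRing\,\lc(g)^{-1}g=\polyRing g=\myspan{g}_l$ because right-multiplication by the unit $\lc(g)^{-1}$ permutes $\polyRing$. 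Hence $\cC(g',f)=\cC(g,f)$: the code is unchanged, so I may assume from now on that $\lc(g)=1$, the cofactor in $f=hg$ being replaced correspondingly by $h\lc(g)$.

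With $g$ monic, part (i) falls out: writing $f=hg$ and comparing leading coefficients, $1=\lc(f)=\lc(h)\cdot\Endom^{\deg h}(\lc(g))=\lc(h)\cdot\Endom^{\deg h}(1)=\lc(h)$, so the cofactor $h$ is monic. For part (ii) I would additionally assume that $\Endom$ is an automorphism, whence $\Endom^{\deg g}$ is a bijection of $\ring$; comparing leading coefficients in $f=g\hbar$ then gives $1=\lc(f)=\lc(g)\cdot\Endom^{\deg g}(\lc(\hbar))=\Endom^{\deg g}(\lc(\hbar))$, and since also $\Endom^{\deg g}(1)=1$, injectivity of $\Endom^{\deg g}$ forces $\lc(\hbar)=1$, so $\hbar$ is monic. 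The point that needs care is the normalisation step: one must check that passing from $g$ to $\lc(g)^{-1}g$ genuinely preserves both the code $\cC(g,f)$ and the working hypotheses on $f$; and one should note that, in contrast to the left-divisor case, $f=hg$ alone forces only $\Endom^{\deg h}(\lc(g))$ --- not $\lc(g)$ itself --- to be invertible (cf.\ \cref{lem:inv-ele-endo}), which is exactly why the automorphism hypothesis is indispensable in (ii) for stripping off the power $\Endom^{\deg g}$.
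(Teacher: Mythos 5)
Your proof is correct and follows the same architecture as the paper's: normalise $g$ by left-multiplying by $\lc(g)^{-1}$ (and absorbing the unit into the cofactor), observe that this leaves the left ideal $\myspan{g}_l$ and hence the code unchanged, and then read off $\lc(h)=1$ and, under the automorphism hypothesis, $\lc(\hbar)=1$ from leading-coefficient comparisons.

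The one place you do something genuinely different, and in fact more careful, is the step that exhibits $\lc(g)$ as a unit. The paper compares leading coefficients in the \emph{right}-divisor decomposition $f=hg$, obtains $h_k\,\Endom^k(g_{n-k})=1$, concludes that $\Endom^k(g_{n-k})$ is invertible --- and then proceeds to use $g_{n-k}^{-1}$ without explicitly bridging the gap from ``$\Endom^k(g_{n-k})$ is a unit'' to ``$g_{n-k}$ is a unit'' (a direction \cref{lem:inv-ele-endo} does not give, and which is not automatic when $\Endom$ is a non-injective endomorphism). You instead compare leading coefficients in $f=g\hbar$ and read off $\lc(g)\cdot\Endom^{\deg g}(\lc(\hbar))=1$, which in a commutative ring immediately makes $\lc(g)$ a unit with explicit inverse $\Endom^{\deg g}(\lc(\hbar))$; this is exactly what the left-divisor hypothesis buys, and it closes the gap cleanly. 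Your closing sentence is a little muddled, though: the contrast you draw between $f=hg$ and $f=g\hbar$ concerns the invertibility of $\lc(g)$ in part (i), whereas the automorphism hypothesis in part (ii) is needed for a separate reason, namely to cancel $\Endom^{\deg g}$ from the equation $\Endom^{\deg g}(\lc(\hbar))=1$. These are analogous phenomena, but ``which is exactly why'' overstates the link. One further point worth making explicit: the identity $\lc(g)\cdot\Endom^{\deg g}(\lc(\hbar))=1$ relies on $\deg f=\deg g+\deg\hbar$, which you assert rather than derive; in context this is harmless (it is forced once $\lc(g)$ is a unit, or once one fixes $k=n-\deg g$ as in \cref{def:skewPolycyclicCodes}), and the paper makes the same tacit assumption, but a sentence acknowledging it would be cleaner.
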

\begin{proof}
  Let $g= g_{n-k}X^{n-k}+\cdots + g_0$ and $h=h_kX^k+\cdots +h_0$.
  Since $f=hg$ and $f$ is monic, the leading coefficient of $f=hg$ is $h_k\Endom^{k}(g_{n-k})=1$, showing that $h_k$ and $\Endom^{k}(g_{n-k})$ are invertible. We can write
\begin{align*}
  f &= \underbrace{(h_kX^k+\cdots + h_0)}_{h}\cdot\underbrace{(g_{n-k}X^{n-k}+\cdots + g_0)}_{g} \\
   &= \underbrace{(h_kX^k+\cdots + h_0)\cdot g_{n-k}}_{\tilde{h}}\cdot\underbrace{g_{n-k}^{-1}\cdot(g_{n-k}X^{n-k}+\cdots + g_0)}_{\tilde{g}}\ .
\end{align*}
Note that $\tilde{g}$ is a monic polynomial. Since the endomorphism $\Endom$ maps $1$ to $1$, the leading coefficient of $\tilde{h}\tilde{g}$ is the leading coefficient of $\tilde{h}$. Because $\tilde{h}\tilde{g}=f$ is monic, we obtain that $\tilde{h}$ is also a monic polynomial.
The polynomials $g$ differ from $\tilde{g}$ by multiplying an invertible element.
It can be seen that any left multiple of $g$ by a polynomial of degree $\le k-1$ is also a left multiple of $\tilde{g}$ by another polynomial of degree $\le k-1$ and vice versa. Therefore, $\cC(g,f) = \cC(\tilde{g},f)$ and we can assume w.l.o.g.~that $g$ is a monic polynomial.
With the argument on $\tilde{h}$ above, we can assume that $h$ is also monic.

We now show that $\hbar=\hbar_kX^k+\cdots + \hbar_0$ is monic if $\Endom$ is an automorphism.
The leading coefficient $\lc(f)=\lc(g\hbar)=g_{n-k}\Endom^{n-k}(\hbar_k)=1$. Since both $f$ and $g$ are monic, $\Endom^{n-k}(\hbar_k)$ must be $1$. If $\Endom$ is an automorphism, we obtain that $\hbar_k=1$.
\end{proof}
\cref{prop:leading-coef-assumptions} \ref{item:lc-hbar} implies that if $\Endom$ is a non-trivial endomorphism (not an automorphism), the $\hbar$ in the decomposition $f=g\hbar$ may not be monic.
In the following we give an example of such case.
\begin{example} 
  \label{eg:hbar-not-monic}
  Consider the ring $\ring=\F_2[u]$ where $u^2=0$, the non-trivial endomorphism $\Endom_2(u)= 0$ of $\ring$ and the $\Endom_2$-derivation $\Deriv_3(u)=u$ (\cref{subtab:F2u}).
  Let $\polyRing=\ring[\SkewVar;\Endom_2,\Deriv_3]$. For the skew polynomials
  $g=\SkewVar^2 + u\SkewVar + u + 1$ and $f=\SkewVar^4 + (u + 1)\SkewVar^3 + \SkewVar + u + 1$, it can be verified that $g$ is a left and right divisor of $f$.
  For the case $g$ being a right divisor, we found that $f$ can be decomposed into $f=hg$ with the unique monic $h=\SkewVar^2 + (u + 1)\SkewVar + 1$.
  For the case $g$ being a left divisor, $f$ can be decomposed into $f=g\hbar$ with $\hbar = (u + 1)\SkewVar^2 + (u + 1)\SkewVar + u + 1$ or $\hbar= (u + 1)\SkewVar^2 + \SkewVar + u + 1$, neither of which is monic.
\end{example}

The following shows that the polynomial $\hbar$ plays a role as a ``parity-check polynomial''.
\begin{lemma}[Parity-check polynomial of a $(\Endom,\Deriv)$-code]
  \label{lem:chbar=0}
  Let $f\in \polyRing$ be a monic polynomial of degree $n$, $g\in\polyRing$ be a monic left and right divisor of $f$ of degree $n-k$ ($f=hg=g\hbar$ for some $h,\hbar\in\polyRing$),
  and $\cC[n,k]\cong\cC(g,f)$ be a $(\Endom,\Deriv)$-code as defined in \cref{def:skewPolycyclicCodes}.
  A word $\bc=(c_0, c_1,\dots, c_{n-1})\in \ring^n$ is a codeword of $\cC[n,k]$ if and only if the corresponding skew polynomial $c = c_0+c_1\SkewVar+c_{n-1}\SkewVar^{n-1}$ fulfills $c \hbar=0$ in $\polyRing/\myspan{f}_l$.
  \end{lemma}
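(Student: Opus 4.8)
The plan is to rewrite the codeword condition as a right-divisibility statement and then trade divisibility by $g$ on the left for annihilation by $\hbar$ on the right. Write $c=c_0+c_1\SkewVar+\cdots+c_{n-1}\SkewVar^{n-1}$; this is the unique representative of degree $<n$ of its class in $\polyRing/\myspan{f}_l$. Since $f=hg$ we have $g\divides_r f$, so $\myspan{f}_l\subseteq\myspan{g}_l$ by \cref{lem:g-right-div-f}, and therefore a class lies in $\cC(g,f)=\myspan{g}_l/\myspan{f}_l$ exactly when its degree-$<n$ representative lies in $\myspan{g}_l+\myspan{f}_l=\myspan{g}_l=\polyRing g$. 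Thus $\bc\in\cC[n,k]$ if and only if $g\divides_r c$, and it remains to prove, for every $c\in\polyRing$ with $\deg(c)<n$,
\[
  g\divides_r c \iff f\divides_r c\hbar ,
\]
the right-hand side being precisely the condition $c\hbar=0$ in $\polyRing/\myspan{f}_l$.

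The forward implication is immediate: if $c=ug$ for some $u\in\polyRing$, then $c\hbar=u(g\hbar)=uf$ is a left multiple of $f$, so $f\divides_r c\hbar$. For the converse, suppose $c\hbar=pf$ for some $p\in\polyRing$; substituting $f=g\hbar$ gives $c\hbar=(pg)\hbar$, hence $(c-pg)\hbar=0$. If right multiplication by $\hbar$ has trivial kernel on $\polyRing$, this forces $c=pg$, so $g\divides_r c$ and the reduction above finishes the proof.

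The only step that is not purely formal — and the main obstacle — is this injectivity of $a\mapsto a\hbar$, which is needed solely in the converse direction. By \cref{thm:deg-prod-skew} it holds as soon as $\lc(\hbar)$ is invertible: for $a\neq 0$ one then has $\deg(a\hbar)=\deg(a)+\deg(\hbar)\ge 0$ (here $\hbar\neq 0$ since $f=g\hbar$ is monic), so $a\hbar=0$ forces $a=0$. When $\Endom$ is an automorphism this is automatic because the second part of \cref{prop:leading-coef-assumptions} lets us take $\hbar$ monic; in general it is exactly the standing hypothesis on the leading coefficient of $\hbar$ in this chapter, and it holds for all the base rings considered — for instance in \cref{eg:hbar-not-monic} one has $\lc(\hbar)=u+1$, which is its own inverse. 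If one prefers to avoid invoking injectivity directly, the converse can instead be obtained by counting: the map $\Psi\colon\ring^n\to\ring^n$, $c\mapsto c\hbar\bmod f$, is left $\ring$-linear, and reducing $c$ modulo $g$ first shows $\Psi(c)=(c\bmod g)\,\hbar$, so its image equals $\set*{s\hbar : \deg(s)<n-k}$, which has cardinality $\card{\ring}^{\,n-k}$ when $\lc(\hbar)$ is invertible; hence $\card{\ker\Psi}=\card{\ring}^{\,k}=\card{\cC(g,f)}$ by \cref{prop:module}, and since $\cC(g,f)\subseteq\ker\Psi$ by the forward implication, the two sets coincide.
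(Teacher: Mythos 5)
You take essentially the same route as the paper: the forward direction is immediate from $g\hbar=f$, and the converse reduces to the injectivity of right multiplication by $\hbar$. The paper justifies that injectivity by asserting that $\hbar$ is not a zero divisor ``since $f$ is monic and $\hbar$ is a right divisor of $f$''; you instead isolate the hypothesis that $\lc(\hbar)$ be invertible and invoke \cref{thm:deg-prod-skew}. Your caution is in fact the correct reading, because the paper's assertion does not hold in general. Over $\ring=\F_2[v]$ with $v^2=v$, $\Endom_3\colon v\mapsto 0$ and $\Deriv=0$, take $g=X$ and $\hbar=(v+1)X+(v+1)$: then $g\hbar=X^2+X=f$ is monic, $g$ is also a right divisor of $f$ (via $h=X+1$), yet $v\hbar=0$, so $\hbar$ \emph{is} a zero divisor; correspondingly the constant $c=v$ satisfies $c\hbar\equiv 0\pmod f$ while $c\notin\cC(g,f)=\set*{sX : s\in\ring}$, so the lemma fails for this choice of $\hbar$ (it does hold for the alternative monic choice $\hbar=X+1$). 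The non-zero-divisor property of $\lc(\hbar)$ is therefore a genuine extra hypothesis, not a consequence of $f$ being monic and $\hbar$ right-dividing it; it is automatic when $\Endom$ is an automorphism by \cref{prop:leading-coef-assumptions}, and it is implicitly enforced in \cref{algo:dualContainingCodes-GrobBasis} by restricting $\hbar_k$ to units, but for a general endomorphism one can pick a bad $\hbar$.

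Your only minor misstatement is that the condition ``holds for all the base rings considered'' — the counterexample above lives over $\F_2[v]$, one of the chapter's rings; the accurate statement is that the chapter's construction always selects an $\hbar$ with invertible leading coefficient. The opening reduction to right-divisibility by $g$ via \cref{lem:g-right-div-f} is correct, and the counting variant you give is a sound alternative derivation under the same hypothesis on $\lc(\hbar)$.
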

\begin{proof}
  The statement is trivial for $\bc=\0$.
  Consider any nonzero $\bc\in\ring^n$.
  If $\bc\in \cC[n,k]$, then $c=wg$ for some $w\in \polyRing$. Then $c\hbar =wg\hbar=w(h g)=0\Mod f$, which is equivalent to say that $c\hbar =0$ in $\polyRing/\myspan{f}_l$.
  Conversely, if $c\hbar =0$ in $\polyRing/\myspan{f}_l$, then $c\hbar=\tilde{w} f= \tilde{w}(g \hbar)$ for some $\tilde{w}\in \polyRing$. Since $f$ is monic and $\hbar$ is a right divisor of $f$, $\hbar$ is not a zero divisor in $\polyRing$ and we obtain $c = \tilde{w} g$, showing that $\bc\in \cC[n,k]$.
\end{proof}

\begin{lemma}[Parity-check matrix of a $(\Endom,\Deriv)$-code]
  \label{lem:Mmat}
  We follow the notations in \cref{lem:chbar=0}. There is a matrix $\bM\in\ring^{n\times n}$ such that $\bc\bM=\0,\ \forall\bc \in \cC[n,k]$, where the entries of $\bM$ are images under $\Endom$ and $\Deriv$ of the coefficients of $\hbar$ and $g$.
\end{lemma}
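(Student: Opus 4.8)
The plan is to turn the parity-check condition of \cref{lem:chbar=0} into an $\ring$-linear map on coefficient vectors and simply read off its matrix. Define
\[
  \psi\colon \ring^n\to\ring^n,\qquad \psi(\bc)\defeq\text{the coefficient vector of }\ (c\hbar)\Mod f ,
\]
where $c=c_0+c_1\SkewVar+\dots+c_{n-1}\SkewVar^{n-1}$ is the skew polynomial with coefficient vector $\bc=(c_0,\dots,c_{n-1})$, and $(\cdot)\Mod f$ is the remainder of right division by $f$, which is well defined and unique because $f$ is monic (\cref{thm:right-division-over-ring}). The first step is to check that $\psi$ is $\ring$-linear. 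Right multiplication by $\hbar$ is $\ring$-linear on $\polyRing$, since $(ac)\hbar=a(c\hbar)$ for $a\in\ring$ by associativity of the product in $\polyRing$; and the reduction $p\mapsto p\Mod f$ is $\ring$-linear because if $p=qf+r$ with $\deg(r)<n$, then $ap=(aq)f+ar$ with $\deg(ar)\le\deg(r)<n$, so $ar$ is the (unique) remainder of $ap$. Composing these with the identification of a polynomial of degree $<n$ with its coefficient vector ($\polyRing/\myspan{f}_l\cong\ring^n$) shows $\psi$ is $\ring$-linear. An $\ring$-linear map $\ring^n\to\ring^n$ acting on row vectors is determined by the images of the standard basis vectors $\be_1,\dots,\be_n$, and placing $\psi(\be_i)$ as the $i$-th row produces a matrix $\bM\in\ring^{n\times n}$ with $\psi(\bc)=\bc\bM$ for all $\bc$.

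Next I would check that the code lies in the kernel of $\psi$. If $\bc\in\cC[n,k]$, then by \cref{def:skewPolycyclicCodes} the associated polynomial factors as $c=wg$ for some $w\in\polyRing$, hence $c\hbar=w(g\hbar)=wf\equiv 0\Mod f$, so $\bc\bM=\psi(\bc)=\0$. (Equivalently, one may just invoke the ``only if'' direction of \cref{lem:chbar=0}.) This already establishes the existence claim.

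To read off the entries of $\bM$, note that its $i$-th row is $\psi(\be_i)$, i.e.\ the coefficient vector of $(\SkewVar^{i-1}\hbar)\Mod f$ (coefficients indexed from $0$). Writing $\hbar=\sum_j\hbar_j\SkewVar^j$ and applying the commutation rule \eqref{eq:basic-mul-rule}, $\SkewVar a=a^{\Endom}\SkewVar+a^{\Deriv}$, repeatedly, the coefficients of $\SkewVar^{i-1}\hbar$ become sums of iterated images $\hbar_j^{\Endom\cdots\Deriv\cdots}$ of the $\hbar_j$ under compositions of $\Endom$ and $\Deriv$ — exactly the pattern already displayed for $\SkewVar g,\SkewVar^2 g$ in the generator matrix of \cref{eg:genMat}. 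The subsequent reduction modulo $f$ only subtracts left $\ring$-multiples of $f$, and since $f=g\hbar$, the coefficients of $f$ are themselves polynomial expressions in the coefficients of $g$ and $\hbar$ and their $\Endom,\Deriv$-images. Hence the entries of $\bM$ are, as claimed, built from images under $\Endom$ and $\Deriv$ of the coefficients of $\hbar$ and $g$.

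The step needing the most care is the $\ring$-linearity of $p\mapsto p\Mod f$: over a ring that may have zero divisors one cannot argue through degrees of quotients, but monicity of $f$ (hence uniqueness of the remainder, \cref{thm:right-division-over-ring}) together with the trivial fact that left multiplication by a ring element cannot raise the degree of a remainder makes it go through. I would also stress that, unlike \cref{lem:chbar=0} itself, the inclusion $\cC[n,k]\subseteq\ker\psi$ uses only that $f$ is monic (so that $(\cdot)\Mod f$ is defined); it does not need $\hbar$ to be a non-zero-divisor.
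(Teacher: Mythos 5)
Your proof is correct and follows essentially the same approach as the paper: identify the $(i,j)$-th entry $M_{ij}$ with the coefficient of $\SkewVar^j$ in $\SkewVar^i\hbar \Mod f$, observe that $c\hbar \Mod f$ expands $\ring$-linearly in the $c_i$ so that $\bc\bM$ is precisely the coefficient vector of $(c\hbar)\Mod f$, and conclude via \cref{lem:chbar=0}. Your explicit verification that the reduction map $p\mapsto p\Mod f$ is $\ring$-linear over a possibly non-division ring (hinging only on monicity of $f$, not on $\hbar$ being a non-zero-divisor) is a careful expansion of a step the paper leaves implicit, but it is the same argument.
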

\begin{proof}
Note that $c = c_0+c_1\SkewVar+c_{n-1}\SkewVar^{n-1}\in\cC(g,f)$ is the corresponding skew polynomial of $\bc\in\cC[n,k]$.
By \cref{lem:chbar=0}, we have $c\hbar=0$ in $\polyRing/\myspan{f}_l$, which is equivalent to
\begin{align}
  \label{eq:chbar=0}
  c\hbar=\left(\sum_{i=0}^{n-1} c_{i}X^{i}\right) \hbar \equiv 0 \Mod f\ .
\end{align}

By applying the multiplication and the right module by $f$, we get a system of linear equations
\begin{align}
  \label{eq:Mmat-entries}
  \sum_{i=0}^{n-1} c_iM_{ij}X^j =0, \ \forall j\in[0,n-1]\ ,
\end{align}
where $M_{ij}$ is the coefficient of $\SkewVar^j$ in $\SkewVar^i\cdot \hbar \Mod f$, which is the image under $\Endom$ and $\Deriv$ of the coefficients of $\hbar$ and $f$, $\forall i,j=0,\dots, n-1$.
Since $f=g\hbar$, the coefficients of $f$ can be replaced by the coefficients of $\hbar$ and $g$.
Let $\bM$ be an $n\times n$ matrix whose entries are $M_{ij}$'s.
It can be shown that
\begin{align*}
  \eqref{eq:Mmat-entries} \iff \bc\bM=\0 \ .
\end{align*}
\end{proof}
The following toy example provides some insights on how such a matrix $\bM$ may look like.
\begin{example}
  \label{eg:parity-check-mat}
  Consider a ring $\ring$, a skew polynomial ring $\polyRing=\ring[\SkewVar;\Endom,\Deriv]$, $f=\SkewVar^3+\sum_{i=0}^2f_i \SkewVar^i$, $g=\SkewVar^2+g_1\SkewVar+g_0$ and $\hbar=\hbar_1\SkewVar+\hbar_0$ in $\polyRing$ such that $f=g\hbar$.
  According to \cref{lem:chbar=0}, $\bc=(c_0,c_1,c_{2})\in  \cC[3,1]\cong\cC(g,f)$ if and only if $c\hbar \equiv 0 \Mod f$, where $c=c_0+c_1\SkewVar+c_2\SkewVar^2$.
  Since
  \begin{equation}
  \begin{split}
    c\hbar\Mod f= &\parenv*{c_2 (\hbar_1^{\Endom\Deriv}+\hbar_1^{\Deriv\Endom}+\hbar_0^{\Endom^2}{-\hbar_1^{\Endom^2}f_2})+c_1\hbar_1^\Endom}\SkewVar^2\\
           & + \parenv*{c_2(\hbar_1^{\Deriv^2}+\hbar_0^{\Endom\Deriv}+\hbar_0^{\Deriv\Endom}{-\hbar_1^{\Endom^2}f_1}) + c_1(\hbar_1^\Deriv+\hbar_0^\Endom) + c_0\hbar_1 }\SkewVar\\
                     &  + c_2(\hbar_0^{\Deriv^2}{-\hbar_1^{\Endom^2}f_0}) + c_1\hbar_0^\Deriv+c_0\hbar_0\ ,
  \end{split}
  \label{eq:ch-mod-f}
  \end{equation}
  we obtain the following matrix $\bM\in \ring^{3\times 3}$ such that $\bc \bM=\0$.
  \pgfkeys{tikz/Mmatrix/.style={matrix of math nodes,nodes in empty cells,left delimiter={(},right delimiter={)},inner sep=1pt,outer sep=3pt,column sep=4pt,row sep=4pt,nodes={minimum width=9pt,minimum height=6pt,anchor=base west,inner sep=0pt,outer sep=0pt}}}
  \begin{equation}
    \label{eq:Mn3k1}
  \bM =
  \begin{tikzpicture}
   [baseline={-0.5ex},mymatrixenv]
    \matrix [Mmatrix,inner sep=2pt] (M)
    {
      \tikzmarkin[kwad=style green]{rows}  \hbar_0 & \tikzmarkin[kwad=style orange]{cols} \hbar_1& 0 \\
      \hbar_0^{\Deriv} & \hbar_1^\Deriv+\hbar_0^\Endom&  \hbar_1^\Endom\tikzmarkend{rows} \\
      \hbar_0^{\Deriv^2}{-\hbar_1^{\Endom^2}f_0} &
      \hbar_1^{\Deriv^2}+\hbar_0^{\Endom\Deriv}+\hbar_0^{\Deriv\Endom}{-\hbar_1^{\Endom^2}f_1} &
      \hbar_1^{\Endom\Deriv}+\hbar_1^{\Deriv\Endom}+\hbar_0^{\Endom^2}{-\hbar_1^{\Endom^2}f_2} \tikzmarkend{cols} \\
    };
  \end{tikzpicture}
  \ .
  \end{equation}
  Note that the entry $M_{ij}$ in $\bM$ corresponds to the coefficient of the term $c_i\SkewVar^j$ in the polynomial $c\hbar \Mod f$, e.g., $M_{11}$ is the coefficient of $c_1\SkewVar$ in \eqref{eq:ch-mod-f}, which is $(\hbar_1^\Deriv+\hbar_0^\Endom)$.
\end{example}

\subsection{Dual Codes of $(\Endom,\Deriv)$-Codes}
\label{sec:dual-of-ideal-codes}
Most dual codes studied in the literature consider the \emph{Euclidean inner product} to define duality. In this section, we consider a more general notion of inner product and duality.
\begin{definition}[Hermitian inner product and Hermitian dual]
  \label{def:hermitian-dual}
  Let $\sigma$ be an automorphism of $\ring$ of order\footnote{The order of an automorphism $\sigma$ is the minimum positive integer $n$ such that $\sigma^n=\Id$.} at most $2$.
  The \emph{$\sigma$-Hermitian inner product} of $\bx,\by\in \ring^n$ is defined  as $ \langle \bx,\by \rangle_\sigma\defeq \sum_{i=1}^n x_i \sigma(y_i)$.
  The \emph{$\sigma$-Hermitian dual code} of a code $\cC$ is defined as
  \begin{align*}
    {\cC}^{\perp_{\sigma}}\defeq\{\bv\ |\  \myspan{ \bv,\bc }_\sigma=0,\ \forall \bc\in \cC\}\ .
  \end{align*}
  A code is \emph{$\sigma$-{dual-containing}} if ${\cC}^{\perp_{\sigma}}\subseteq \cC $, and \emph{$\sigma$-self dual} if $\cC={\cC}^{\perp_{\sigma}}$.
  In particular, if $\sigma=\Id$, we obtain the \emph{Euclidean inner product} and the \emph{Euclidean dual}. In this case, we omit the $\sigma$ in the notation.
\end{definition}

\emph{Frobenius rings} are promoted by Wood \cite{wood1999duality} as the most appropriate rings for coding theory over finite rings, because two classical theorems of MacWilliams -- the extension theorem (also known as the equivalence theorem) and the MacWilliams identities -- generalize to Frobenius rings.
Szabo and Wood \cite{szabo2017properties} showed that dual codes have complementary cardinality for codes over finite Frobenius rings.
Note that the rings $\F_2[v]$, $\F_2[u]$ and $\F_2[\alpha]$ are all Frobenius rings.

\begin{theorem}[MacWilliams theorems for codes over finite Frobenius rings {\cite{wood1999duality}}]
  Let $\cC$ be a linear code over a finite Frobenius ring $\ring$, then
  \begin{itemize}
  \item \emph{(Extension theorem)} Every isometric map $\cC\to\ring^n$ that preserves the Hamming weight of the code can be extended to a monomial transformation.
  \item \emph{(MacWilliams identities)} The weight enumerator of the dual code is determined by the weight enumerator of the code.
  \end{itemize}
\end{theorem}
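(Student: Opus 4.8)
The plan is to derive both assertions from the defining property of a finite Frobenius ring: the existence of a \emph{generating character}, that is, a group homomorphism $\chi\colon(\ring,+)\to\bbC^{*}$ whose kernel contains no nonzero left ideal (equivalently, no nonzero right ideal); this is in turn equivalent to an isomorphism $\widehat{\ring}\cong\ring$ of left $\ring$-modules, where $\widehat{\ring}$ is the character module. I would first record this characterization (it is the heart of Wood's structure theory and may be taken as the working definition of ``Frobenius'') and then treat the two claims in turn. Since the three rings of interest, $\F_2[v]$, $\F_2[u]$ and $\F_2[\alpha]$, are Frobenius, a generating character exists in each case.

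\textbf{MacWilliams identities.} Fix a generating character $\chi$ and consider the pairing $(\bx,\by)\mapsto\chi(\langle\bx,\by\rangle)$ on $\ring^{n}$, where $\langle\bx,\by\rangle=\sum_i x_iy_i$. The key lemma is that, for every linear code $\cC\subseteq\ring^{n}$, the set of $\by$ with $\chi(\langle\bx,\by\rangle)=1$ for all $\bx\in\cC$ equals the dual code $\cC^{\perp}$: if some such $\by$ had $\langle\bc,\by\rangle\neq0$ for a codeword $\bc$, then $\ring\langle\bc,\by\rangle$ would be a nonzero ideal contained in $\ker\chi$, contradicting the generating property. Hence $\cC^{\perp}$ is the character group of $\ring^{n}/\cC$, so $|\cC|\cdot|\cC^{\perp}|=|\ring|^{n}$, and one has the Poisson summation formula
\begin{align*}
  \sum_{\bc\in\cC}\prod_{i=1}^{n}g(c_{i})\;=\;\frac{1}{|\cC^{\perp}|}\sum_{\bv\in\cC^{\perp}}\prod_{i=1}^{n}\widehat{g}(v_{i})\ ,\qquad\widehat{g}(a)\defeq\sum_{r\in\ring}g(r)\,\chi(ra)\ ,
\end{align*}
valid for every $g\colon\ring\to\bbC$. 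Choosing $g(0)=1$ and $g(r)=z$ for $r\neq0$ turns the left-hand side into the Hamming weight enumerator $W_{\cC}(z)$ and the right-hand side into the classical MacWilliams polynomial in $W_{\cC^{\perp}}$, which is the identity. For the $\sigma$-Hermitian dual of \cref{def:hermitian-dual}, applying the ring automorphism $\sigma$ coordinatewise is a weight-preserving bijection relating $\cC^{\perp_{\sigma}}$ and $\cC^{\perp}$, so the two duals share the same weight enumerator and the Hermitian case reduces to the Euclidean one.

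\textbf{Extension theorem.} Let $\mu\colon\cC\to\ring^{n}$ be the given Hamming-isometry and $\lambda\colon\cC\hookrightarrow\ring^{n}$ the inclusion; after deleting the coordinates on which $\cC$ vanishes identically we may assume every coordinate of $\lambda$ is a nonzero functional on $\cC$. Writing $\lambda=(\lambda_{1},\dots,\lambda_{n})$ and $\mu=(\mu_{1},\dots,\mu_{n})$ with $\lambda_{i},\mu_{i}\in\mathrm{Hom}_{\ring}(\cC,\ring)$, express the Hamming weight via $\chi$ as $\wtH(\bx)=n-\tfrac{1}{|\ring|}\sum_{i=1}^{n}\sum_{r\in\ring}\chi(r x_{i})$, so that $\wtH(\mu(c))=\wtH(\lambda(c))$ for all $c$ becomes
\begin{align*}
  \sum_{i=1}^{n}\sum_{r\in\ring}\chi\bigl(r\lambda_{i}(c)\bigr)\;=\;\sum_{i=1}^{n}\sum_{r\in\ring}\chi\bigl(r\mu_{i}(c)\bigr)\qquad\text{for all }c\in\cC\ .
\end{align*}
Grouping each side by the homomorphism $c\mapsto r\lambda_{i}(c)$ and using linear independence of the additive characters $c\mapsto\chi(\phi(c))$ of $\cC$, one obtains equality of the two multisets $\{\,r\lambda_{i}\,\}$ and $\{\,r\mu_{i}\,\}$ in $\mathrm{Hom}_{\ring}(\cC,\ring)$. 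The Frobenius property — via $\widehat{\ring}\cong\ring$, hence $\widehat{\cC}\cong\mathrm{Hom}_{\ring}(\cC,\ring)$ as $\ring$-modules — is what allows this multiset equality to be organized by the action of the unit group $\ring^{\times}$ and then, by induction on the number of $\ring^{\times}$-orbits among the $\lambda_{i}$, upgraded to a bijection of the coordinates of $\mu$ with those of $\lambda$ together with a unit scaling on each; this bijection is precisely a monomial transformation $P$ with $\mu=\lambda P$, which extends the isometry.

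\textbf{Main obstacle.} The delicate point is the last step: passing from ``the same multiset of coordinate functionals, up to units'' to a single global monomial transformation, above all when several coordinates of $\lambda$ lie in one $\ring^{\times}$-orbit. This is exactly where the Frobenius hypothesis is indispensable — the conclusion genuinely fails over non-Frobenius rings — and the cleanest rigorous route is Wood's orbit-counting argument built on the module self-duality $\widehat{\ring}\cong\ring$. The MacWilliams-identities half, by contrast, is essentially formal once a generating character and the Poisson summation formula are in hand.
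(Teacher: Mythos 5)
The paper states this theorem with a citation to Wood and gives no proof, so there is no in-paper argument to compare against; the sketch must be judged on its own.

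Your outline faithfully reproduces the standard (Wood's) strategy. The MacWilliams-identities half is essentially complete: the identification $\{\by : \chi(\langle\bx,\by\rangle)=1\ \text{for all}\ \bx\in\cC\}=\cC^\perp$ is exactly where the generating character is used, and the specialization $g(0)=1$, $g(r)=z$ gives $\widehat{g}(0)=1+(|\ring|-1)z$ and $\widehat{g}(a)=1-z$ for $a\neq 0$ (using again that $r\mapsto\chi(ra)$ is a nontrivial character of $(\ring,+)$ when $a\neq 0$), which closes the loop. The reduction of the $\sigma$-Hermitian dual to the Euclidean one via the coordinatewise, weight-preserving bijection $\sigma$ is sound.

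For the extension theorem you locate the crux but, as you yourself flag, do not resolve it, and this is a genuine gap. Two points need care. First, the linear-independence step requires that distinct $\phi\in\mathrm{Hom}_{\ring}(\cC,\ring)$ give distinct characters $c\mapsto\chi(\phi(c))$ of $(\cC,+)$; this is again the Frobenius isomorphism $\widehat{\cC}\cong\mathrm{Hom}_{\ring}(\cC,\ring)$, and without it the character-sum comparison only determines a multiset of characters, not of $\ring$-linear functionals. Second, and more seriously, the passage from the multiset equality of $\{r\lambda_i\}$ and $\{r\mu_i\}$ (with $r$ ranging over $\ring$ and $i$ over coordinates) to a single monomial transformation $P$ with $\mu=\lambda P$ is the hard step: one must process the $\ring^\times$-orbits of functionals one at a time, cancel a maximal orbit from both sides by an averaging or M\"obius-inversion count, and induct on what remains. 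Writing that the equality is "organized by the action of $\ring^\times$ and then upgraded by induction" names the destination without supplying the counting argument that makes each cancellation legitimate, so as written this half of the proof is incomplete.
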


\begin{lemma}[Complementary cardinality of dual codes {\cite[Theorem 3.6]{szabo2017properties}}]
\label{lem:wood-size-of-dual}
Let $\ring$ be a finite frobenius ring and $\sigma$ be an automorphism of $\ring$ of order at most $2$.
For a linear code $\cC$ over $\ring$, its $\sigma$-Hermitian dual code $C^{\perp_{\sigma}}$ has the complementary cardinality of the code, i.e.,
${|\cC|\cdot |\cC^{\perp_{\sigma}}|=|\ring|^n}$.
\end{lemma}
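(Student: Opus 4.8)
The plan is to first reduce the $\sigma$-Hermitian statement to the Euclidean one, and then to count using the character theory of finite Frobenius rings. Since $\sigma$ is an automorphism of $\ring$ with $\sigma^2=\Id$ (order at most $2$), applying $\sigma$ coordinate-wise defines a bijection $\Sigma:\ring^n\to\ring^n$, $\Sigma(\bv)=(\sigma(v_1),\dots,\sigma(v_n))$. For any $\bc\in\cC$ and $\bv\in\ring^n$ one computes $\sigma\bigl(\langle\bv,\bc\rangle_\sigma\bigr)=\sigma\bigl(\sum_i v_i\sigma(c_i)\bigr)=\sum_i\sigma(v_i)\sigma^2(c_i)=\sum_i\sigma(v_i)c_i=\langle\Sigma(\bv),\bc\rangle_{\Id}$, using that $\sigma$ is an additive and multiplicative bijection fixing $0$. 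Hence $\bv\in\cC^{\perp_\sigma}$ iff $\Sigma(\bv)\in\cC^{\perp}$, the Euclidean dual, so $\Sigma$ restricts to a bijection $\cC^{\perp_\sigma}\xrightarrow{\sim}\cC^{\perp}$ and $|\cC^{\perp_\sigma}|=|\cC^{\perp}|$. It therefore suffices to show $|\cC|\cdot|\cC^{\perp}|=|\ring|^n$ for the Euclidean dual.

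For the Euclidean case I would invoke the defining property of a finite Frobenius ring: there is a \emph{generating character}, i.e., a character $\chi$ of the additive group $(\ring,+)$ whose kernel contains no nonzero ideal of $\ring$ (for commutative $\ring$, as in this chapter, the left/right distinctions used by Szabo and Wood collapse). Using $\chi$, set up the pairing
\[
  \beta:\ring^n\times\ring^n\to\bbC^\times,\qquad
  \beta(\bx,\by)=\chi\Bigl(\sum_{i=1}^n x_iy_i\Bigr).
\]
This pairing is bi-additive, and it is non-degenerate precisely because $\chi$ is a generating character (a coordinate at which $\bx$ is nonzero already forces $\beta(\bx,\cdot)$ nontrivial). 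Consequently $\by\mapsto\beta(\cdot,\by)$ is an isomorphism $\ring^n\xrightarrow{\sim}\widehat{\ring^n}$ from the additive group onto its Pontryagin dual.

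Next I would identify the Euclidean dual $\cC^{\perp}$ with the character-theoretic annihilator $\cC^{\circ}=\{\phi\in\widehat{\ring^n}:\phi|_{\cC}=1\}$ under this isomorphism. One inclusion is immediate: if $\langle\bx,\by\rangle=0$ for all $\bx\in\cC$ then $\beta(\cdot,\by)$ is trivial on $\cC$. For the converse --- the single place where the Frobenius hypothesis is really used --- fix $\by$ and note that $J=\{\langle\bx,\by\rangle:\bx\in\cC\}$ is an ideal of $\ring$, since $\cC$ is a submodule and $\langle r\bx,\by\rangle=r\langle\bx,\by\rangle$; if $\chi$ vanishes on $J$ then $J\subseteq\ker\chi$, which forces $J=0$ by the generating property, i.e.\ $\by\in\cC^{\perp}$. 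Finally, the standard duality of finite abelian groups gives $\cC^{\circ}\cong\widehat{\ring^n/\cC}$, so $|\cC^{\circ}|=|\ring^n|/|\cC|=|\ring|^n/|\cC|$, and transporting this through the isomorphism $\ring^n\cong\widehat{\ring^n}$ and the reduction above yields $|\cC|\cdot|\cC^{\perp_\sigma}|=|\ring|^n$.

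The main obstacle is the identification of $\cC^{\perp}$ with the annihilator $\cC^{\circ}$: this is exactly where the word ``Frobenius'' earns its keep, via the non-trivial fact that a generating character exists and that its kernel contains no nonzero ideal. The $\sigma$-twist and the finite-abelian-group count are routine. In the fully general, possibly non-commutative setting one would additionally have to track left versus right module and ideal structures and use that a finite ring is left Frobenius iff it is right Frobenius; since the present chapter only uses commutative $\ring$, that bookkeeping can be omitted here.
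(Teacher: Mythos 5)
The paper does not prove this lemma itself; it is stated as a direct citation to Szabo and Wood~\cite[Theorem 3.6]{szabo2017properties}, so there is no in-paper argument to compare against. Your proof is correct, and it is essentially the character-theoretic argument that underlies the cited result (and Wood's earlier foundational work). The reduction of the $\sigma$-Hermitian statement to the Euclidean one via the coordinate-wise bijection $\Sigma$ is exactly right: since $\sigma$ is an additive and multiplicative bijection with $\sigma(0)=0$ and $\sigma^2=\Id$, you get $\sigma\bigl(\langle\bv,\bc\rangle_\sigma\bigr)=\langle\Sigma(\bv),\bc\rangle$, hence $\bv\in\cC^{\perp_\sigma}$ iff $\Sigma(\bv)\in\cC^{\perp}$ and so $|\cC^{\perp_\sigma}|=|\cC^{\perp}|$. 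The Euclidean count via a generating character $\chi$, the pairing $\beta(\bx,\by)=\chi\bigl(\sum_i x_iy_i\bigr)$, the identification of $\cC^{\perp}$ with the annihilator $\cC^{\circ}\subseteq\widehat{\ring^n}$, and the finite-abelian-group duality $|\cC^{\circ}|=|\ring|^n/|\cC|$ is the standard Frobenius-ring argument, and you correctly isolate where ``Frobenius'' is used: in the inclusion $\cC^{\circ}\subseteq\cC^{\perp}$, where $J=\{\langle\bx,\by\rangle:\bx\in\cC\}$ is a (left) ideal lying in $\ker\chi$ and therefore must vanish. For the commutative rings used in this chapter ($\F_2[v]$, $\F_2[u]$, $\F_4$) the left/right distinction is vacuous, so your simplification is legitimate. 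One small point worth making explicit if you write this out fully: to see that $\by\mapsto\beta(\cdot,\by)$ is injective (and hence an isomorphism $\ring^n\to\widehat{\ring^n}$ by a cardinality count), use that $\beta$ is symmetric in the commutative case, so the non-degeneracy you establish in the first slot transfers to the second.
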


The following is devoted to derive a generator matrix of the Euclidean dual and $\sigma$-Hermitian dual of the $(\Endom, \Deriv)$-codes, respectively.
For the case $\Deriv=0$, the dual codes have been studied for $\ring$ being $\IntRing_4$~\cite{boucher2008skew}, $\F_4$~\cite{boucher2009modules}, $\Fq$~\cite{boucher2009modules} and general finite rings~\cite{boucher2011note,boulagouaz2018characterizations}.
For $\Deriv\neq 0$, the dual codes are much less studied.

\begin{theorem}[Generator matrix of the Euclidean dual]
  \label{thm:dual-generator}
  Let $\ring$ be a finite Frobenius ring.
  Let $\cC=\cC[n,k]$ be the linear block $(\Endom,\Deriv)$-code as defined in \cref{def:skewPolycyclicCodes}.
  Then, the Euclidean dual code $\cC^\perp$ of $\cC$ is a linear free $\ring$-module code of length $n$ and dimension $n-k$.
  A generator matrix $\bG^\perp\in \ring^{(n-k)\times n}$ of $\cC^\perp$
  is composed of the transpose of the last $n-k$ columns of the matrix $\bM$ defined in \eqref{eq:Mmat-entries}.
\end{theorem}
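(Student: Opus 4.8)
The plan is to sandwich $\cC^\perp$ between the $\ring$-row span of $\bG^\perp$ from one side and a cardinality count from the other. First I would observe that by \cref{lem:Mmat} we have $\bc\bM=\0$ for every $\bc\in\cC[n,k]$; reading this equality coordinate by coordinate says that the Euclidean inner product of $\bc$ with each column of $\bM$ vanishes, so every column of $\bM$, viewed as a row vector, lies in $\cC^\perp$. In particular the rows of $\bG^\perp$ — the transposes of the last $n-k$ columns of $\bM$ — lie in $\cC^\perp$, and since $\ring$ is commutative and $\cC^\perp$ is an $\ring$-submodule of $\ring^n$, the whole row span $\myspanRow{\bG^\perp}=\set*{\bu\bG^\perp\ |\ \bu\in\ring^{n-k}}$ is contained in $\cC^\perp$.

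Next I would fix the size of $\cC^\perp$. By \cref{prop:module} the code $\cC$ is a free $\ring$-module of dimension $k$, so $|\cC|=|\ring|^k$; since $\ring$ is a finite Frobenius ring, \cref{lem:wood-size-of-dual} yields $|\cC|\cdot|\cC^\perp|=|\ring|^n$, hence $|\cC^\perp|=|\ring|^{n-k}$. It then suffices to show that the $\ring$-linear map $\bu\mapsto\bu\bG^\perp$ from $\ring^{n-k}$ to $\ring^n$ is injective: its image is contained in $\cC^\perp$, has cardinality $|\ring|^{n-k}=|\cC^\perp|$, and is therefore equal to $\cC^\perp$; moreover, being the image of $\ring^{n-k}$ under an injective $\ring$-linear map, it is isomorphic to $\ring^{n-k}$, i.e.\ free of dimension $n-k$, with the rows of $\bG^\perp$ as a basis. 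That delivers all assertions of the theorem simultaneously.

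For the injectivity I would exploit the ``staircase'' shape of $\bM$. Recall $M_{ij}$ is the coefficient of $\SkewVar^j$ in $\SkewVar^i\hbar\bmod f$; for $0\le i\le n-k-1$ one has $\deg(\SkewVar^i\hbar)=i+k\le n-1$, so no reduction modulo $f$ is needed and $M_{ij}$ is simply the coefficient of $\SkewVar^j$ in $\SkewVar^i\hbar$. Hence $M_{ij}=0$ for $j>i+k$, while $M_{i,i+k}=\Endom^i(\hbar_k)$ is the leading coefficient of $\SkewVar^i\hbar$. Restricting $\bG^\perp$ to its first $n-k$ columns (equivalently, restricting $\bM$ to rows $0,\dots,n-k-1$ and columns $k,\dots,n-1$ and transposing) thus produces a triangular $(n-k)\times(n-k)$ matrix whose diagonal is $\Endom^0(\hbar_k),\Endom^1(\hbar_k),\dots,\Endom^{n-k-1}(\hbar_k)$. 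Comparing leading coefficients in $f=g\hbar$ with $f$ and $g$ monic gives $\Endom^{n-k}(\hbar_k)=1$; combining this with \cref{lem:inv-ele-endo} and the finiteness of $\ring$ one checks that $\hbar_k$, and hence each $\Endom^i(\hbar_k)$, is a unit, so this submatrix has a unit determinant. Therefore $\bu\bG^\perp=\0$ forces $\bu=\0$, and the argument closes.

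The step I expect to be the real obstacle is establishing, in full generality, that the leading coefficient $\hbar_k$ of $\hbar$ is a unit of $\ring$. When $\Endom$ is an automorphism this is immediate, since $\hbar$ may be taken monic by \cref{prop:leading-coef-assumptions}, so the diagonal above consists entirely of $1$'s; but when $\Endom$ is a non-trivial endomorphism $\hbar$ need not be monic (cf.\ \cref{eg:hbar-not-monic}), and one must rule out $\hbar_k$ being a zero divisor, which seems to require using both factorizations $f=g\hbar$ and $f=hg$ together with the concrete structure of the base rings under consideration. Everything else — Steps 1 and 2 — is bookkeeping once \cref{lem:Mmat}, \cref{prop:module}, and \cref{lem:wood-size-of-dual} are available.
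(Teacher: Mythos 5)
Your proposal follows the same skeleton as the paper's proof, with one small structural simplification. The paper first introduces an auxiliary code $\widetilde{\cC}$ spanned by \emph{all} $n$ columns of $\bM$, proves $\cC=\widetilde{\cC}^\perp$ via \cref{lem:Mmat}, uses \cref{lem:wood-size-of-dual} twice to equate cardinalities and conclude $\widetilde{\cC}=\cC^\perp$, and only then argues that the last $n-k$ columns already give a free generating set via the triangular structure. You skip the auxiliary object $\widetilde{\cC}$ entirely and argue directly that the row span of $\bG^\perp$ has cardinality $|\ring|^{n-k}=|\cC^\perp|$ by showing $\bu\mapsto\bu\bG^\perp$ is injective from the same triangular/unit-diagonal observation. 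That shortcut is sound and arguably cleaner.

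The concern you flag at the end is a genuine gap, and — importantly — it is also present in the paper's own proof, which simply \emph{asserts} that the diagonal entries $\hbar_k,\Endom(\hbar_k),\dots,\Endom^{n-k-1}(\hbar_k)$ are invertible without justification. \cref{prop:leading-coef-assumptions} only guarantees $\hbar$ monic when $\Endom$ is an automorphism; in general the factorization $f=g\hbar$ with $f,g$ monic yields only $\Endom^{n-k}(\hbar_k)=1$, which does not force $\hbar_k$ to be a unit when $\Endom$ is not injective. Concretely, take $\ring=\F_2[v]$ with $v^2=v$ and $\Endom_4\colon v\mapsto 1$ from \cref{subtab:F2v}, $\Deriv=0$, $n=2$, $k=1$, $g=\SkewVar$, $f=\SkewVar^2$. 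Then both $\hbar=\SkewVar$ and $\hbar=v\SkewVar$ satisfy $f=g\hbar$ (since $\SkewVar\cdot v\SkewVar=\Endom_4(v)\SkewVar^2=\SkewVar^2$). With the second choice $\hbar_k=v$, which is a zero divisor ($(1+v)v=0$), and $\bG^\perp=(v,\ 0)$, whose $\ring$-row span $\{\0,(v,0)\}$ is strictly contained in $\cC^\perp=\{(a,0):a\in\ring\}$. So the theorem is literally false for a ``bad'' choice of $\hbar$, and both your proof and the paper's tacitly fix a ``good'' one.

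Your instinct about how to close the gap is reasonable. When $\Endom$ is not injective the decomposition $f=g\hbar$ is not unique: any two right factors differ by $p$ with $gp=0$ and $\deg p\le k$, forcing $\Endom^{n-k}(p_k)=0$, so $\hbar_k$ varies within the fiber $\Endom^{-(n-k)}(1)$. In the example above (and in \cref{eg:hbar-not-monic}, where $\hbar_k=u+1$ is a unit of $\F_2[u]$) that fiber does contain a unit, and choosing it repairs the argument. A complete treatment would either prove that such a choice always exists, rephrase the theorem as asserting existence of a suitable decomposition $f=g\hbar$, or restrict to $\Endom$ an automorphism where $\hbar_k=1$ by \cref{prop:leading-coef-assumptions}. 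Identifying this as the true obstacle rather than glossing over it is the strongest part of your write-up.
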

\begin{proof}
  It follows from \cref{lem:Mmat} that all the columns of the $n\times n$ matrix $\bM$ are orthogonal to all the codewords in $\cC$. In other words, all the columns of $\bM$ are in $\cC^\perp$.
  We denote by $\widetilde{\cC}$ the code generated by the columns of the $n\times n$ matrix $\bM$ described in \eqref{eq:Mmat-entries} (see \eqref{eq:Mn3k1} for an example).
  By construction we have that $\widetilde{\cC}\subseteq \cC^\perp$.
  In \cref{lem:Mmat} we have shown that a vector $\bc\in \cC$ if and only if $\bc \bM=\0$, which is equivalent to $\bc$ being orthogonal to all generators of $\widetilde{\cC}$. Therefore, $\cC=\widetilde{\cC}^\perp$.
  Since $\widetilde{\cC}$ is a linear code over $\ring$, by \cref{lem:wood-size-of-dual} we have that $|\widetilde{\cC}^\perp| \cdot |\widetilde{\cC}|=|\ring|^n$.
  From $\cC=\widetilde{\cC}^\perp$ and $|\cC|=|\ring|^k$ we then get $|\widetilde{\cC}|=|\ring|^{n-k}$.
  Since $\cC$ is also a linear code over $\ring$, by \cref{lem:wood-size-of-dual} we also have $|\cC^\perp|\cdot |\cC|=|\ring|^n$, which implies that $|\cC^\perp| =|\ring|^{n-k}$. Since $\widetilde{\cC}\subseteq \cC^\perp$ and both codes have the same cardinality, we conclude that $\widetilde{\cC}= \cC^\perp$, i.e., $\cC^\perp$ is generated by the columns of the matrix $\bM$.

  It can be shown from \eqref{eq:Mmat-entries} that for $i=0,\dots, n-k-1$, the $i$-th row $\bM$ correspond to $\SkewVar^{i} \hbar$ (e.g., the first two rows (the green part) in \eqref{eq:Mn3k1}). This shows that the right-upper $(n-k)\times (n-k)$ submatrix of $\bM$ is lower triangular with invertible diagonal elements $\hbar_k,\Endom(\hbar_k),\ldots,\Endom^{n-k-1}(\hbar_k)$ and the right-most $n-k$ columns of $\bM$ are therefore linearly independent. Hence, the $\ring$-submodule generated by the right-most $n-k$ columns of $\bM$ contains $|\ring|^{n-k}$ elements. This shows that $\cC^\perp$ is a free $\ring$-module generated by the right-most $n-k$ columns of $\bM$ (e.g., the last two columns (the orange part) in \eqref{eq:Mn3k1}).
\end{proof}
\begin{corollary}
  \label{cor:left-cols-dependent}
  The first $k$ columns of $\bM$ are $\ring$-linear combinations of the last $n-k$ columns of $\bM$.
\end{corollary}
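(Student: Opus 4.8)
The plan is to read this off immediately from \cref{thm:dual-generator} and \cref{lem:Mmat}. First I would recall that \cref{lem:Mmat} gives $\bc\bM = \0$ for every codeword $\bc\in\cC[n,k]$. Reading this equation column by column, for each $j\in[0,n-1]$ the $j$-th column $\bm_j$ of $\bM$ satisfies $\sum_{i=0}^{n-1} c_i M_{ij} = 0$ for all $\bc\in\cC$; that is, $\bm_j$ is orthogonal, with respect to the Euclidean inner product, to every element of $\cC$. Hence every column of $\bM$, and in particular each of the first $k$ columns, lies in the Euclidean dual code $\cC^\perp$.

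Next I would invoke \cref{thm:dual-generator}, whose proof establishes that $\cC^\perp$ is a free $\ring$-module of dimension $n-k$ generated by the right-most $n-k$ columns of $\bM$ (these columns are even $\ring$-linearly independent, the right-upper $(n-k)\times(n-k)$ block of $\bM$ being lower triangular with invertible diagonal). Consequently every element of $\cC^\perp$ is an $\ring$-linear combination of those $n-k$ columns. Applying this to the first $k$ columns of $\bM$, which were just shown to belong to $\cC^\perp$, gives exactly the claim: each of the first $k$ columns of $\bM$ is an $\ring$-linear combination of the last $n-k$ columns.

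There is no real obstacle here; the only point requiring care is to keep the left/right module conventions consistent, so that ``$\bm_j\in\cC^\perp$'' extracted from $\bc\bM=\0$ refers to the same Euclidean duality and the same side of scalar multiplication used in \cref{thm:dual-generator} when stating that the last $n-k$ columns generate $\cC^\perp$ as an $\ring$-module, ensuring that the two notions of ``$\ring$-linear combination'' coincide.
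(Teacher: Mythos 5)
Your proof is correct and takes essentially the same route as the paper's: the one-line argument in the paper also reduces to the two facts you spell out, namely that every column of $\bM$ lies in $\cC^\perp$ and that $\cC^\perp$, of cardinality $|\ring|^{n-k}$, is freely generated as an $\ring$-module by the last $n-k$ columns of $\bM$ via \cref{thm:dual-generator} and \cref{lem:wood-size-of-dual}. Your closing caveat about left/right module conventions is moot here, since the chapter fixes $\ring$ to be a finite \emph{commutative} ring.
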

\begin{proof}
  Using the fact that $|\cC^\perp|=|\ring|^{n-k}$ from \cref{lem:wood-size-of-dual} and \cref{thm:dual-generator}, the statement is proven.
\end{proof}
\begin{example}
According to \cref{thm:dual-generator}, the Euclidean dual of the code in \cref{eg:parity-check-mat} is generated by the right-most two columns of $\bM$ in \eqref{eq:Mn3k1}, i.e.,
\begin{align*}
  \bG^\perp =
  \begin{pmatrix}
{\hbar_1} &
{  \hbar_1^\Deriv+\hbar_0^\Endom} &
{\hbar_1^{\Deriv^2}+\hbar_0^{\Endom\Deriv}+\hbar_0^{\Deriv\Endom}{-\hbar_1^{\Endom^2}f_1}}\\
{0 }& {\hbar_1^\Endom} &
{  \hbar_1^{\Endom\Deriv}+\hbar_1^{\Deriv\Endom}+\hbar_0^{\Endom^2}{-\hbar_1^{\Endom^2}f_2}}
  \end{pmatrix}\ .
\end{align*}
 \end{example}
 \begin{theorem}[Generator matrix of the $\sigma$-Hermian dual]
   \label{thm:Hermitian-gen-mat}
  Let $\cC=\cC[n,k]$ be the linear block $(\Endom,\Deriv)$-code as defined in \cref{def:skewPolycyclicCodes},
  and $\bG^\perp\in \ring^{(n-k)\times n}$ be a generator matrix of the Euclidean dual code $\cC^\perp$ of the code $\cC$ as derived in \cref{thm:dual-generator}.
  Denote by $\sigma(\bG^\perp)$ the matrix after applying $\sigma$ to every entry of $\bG^\perp$.
  Then, $\sigma(\bG^\perp)$ is a generator matrix of the $\sigma$-Hermitian dual code $\cC^{\perp_\sigma}$ of $\cC$.
\end{theorem}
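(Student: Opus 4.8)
The plan is to reduce the claim about the $\sigma$-Hermitian dual to the Euclidean dual, which has already been settled in \cref{thm:dual-generator}, by exploiting that $\sigma$ is a ring automorphism of order at most $2$. The first step I would carry out is to establish, for all $\bv,\bc\in\ring^n$, the pointwise identity $\sigma\parenv*{\myspan{\bv,\bc}_\sigma} = \myspan{\sigma(\bv),\bc}$, where $\myspan{\cdot,\cdot}$ on the right is the Euclidean inner product. This is immediate: applying $\sigma$ entrywise and using that $\sigma$ is a ring homomorphism gives $\sigma\parenv*{\sum_i v_i\sigma(c_i)} = \sum_i \sigma(v_i)\sigma^2(c_i)$, and since $\sigma$ has order at most $2$ we have $\sigma^2=\Id$, so the right-hand side equals $\sum_i\sigma(v_i)c_i = \myspan{\sigma(\bv),\bc}$. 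Because $\sigma$ is injective, it follows that $\myspan{\bv,\bc}_\sigma=0$ if and only if $\myspan{\sigma(\bv),\bc}=0$.

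Next I would use this to identify the dual as a set: $\bv\in\cC^{\perp_\sigma}$ iff $\myspan{\bv,\bc}_\sigma=0$ for every $\bc\in\cC$ iff $\myspan{\sigma(\bv),\bc}=0$ for every $\bc\in\cC$ iff $\sigma(\bv)\in\cC^\perp$. Hence $\cC^{\perp_\sigma}=\set*{\bv\in\ring^n \mid \sigma(\bv)\in\cC^\perp}$, and since $\sigma=\sigma^{-1}$ this is exactly the entrywise image $\sigma(\cC^\perp)$. It then remains to check that $\sigma(\cC^\perp)$ is the row span of $\sigma(\bG^\perp)$. By \cref{thm:dual-generator}, $\cC^\perp$ is the free $\ring$-module with basis the rows $\bg_1^\perp,\dots,\bg_{n-k}^\perp$ of $\bG^\perp$. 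Since $\sigma$ is additive and satisfies $\sigma(a\bg)=\sigma(a)\sigma(\bg)$ for $a\in\ring$, it maps the $\ring$-linear combination $\sum_j a_j\bg_j^\perp$ to $\sum_j\sigma(a_j)\sigma(\bg_j^\perp)$; conversely, because $\sigma$ is surjective, every $\ring$-linear combination of the rows of $\sigma(\bG^\perp)$ arises this way. Thus $\sigma(\cC^\perp)$ equals the row span of $\sigma(\bG^\perp)$.

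Finally I would verify that $\sigma(\bG^\perp)$ genuinely serves as a generator matrix, i.e.\ that its rows remain $\ring$-linearly independent: if $\sum_j c_j\sigma(\bg_j^\perp)=\0$, applying $\sigma$ gives $\sum_j\sigma(c_j)\bg_j^\perp=\0$, so all $\sigma(c_j)=0$ by independence of the rows of $\bG^\perp$, hence all $c_j=0$ by injectivity of $\sigma$. Therefore $\cC^{\perp_\sigma}$ is a free $\ring$-module of dimension $n-k$ with generator matrix $\sigma(\bG^\perp)$; this is also consistent with \cref{lem:wood-size-of-dual}, which gives $\card{\cC^{\perp_\sigma}}=\card{\ring}^n/\card{\cC}=\card{\ring}^{n-k}$. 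I do not expect a genuine obstacle here, since the argument is short; the only thing requiring care is the module-theoretic bookkeeping — checking that applying $\sigma$ entrywise commutes with forming $\ring$-linear combinations, and that it is precisely the bijectivity of $\sigma$ together with $\sigma^2=\Id$ that makes the correspondence $\cC^{\perp_\sigma}\leftrightarrow\cC^\perp$ work in both directions.
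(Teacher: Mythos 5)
Your argument is correct, and it takes a somewhat different and arguably cleaner route than the paper's. The paper proceeds in two stages: it first checks directly that each row of $\sigma(\bG^\perp)$ is $\sigma$-orthogonal to each row of a generator matrix of $\cC$ (using $\sigma^2=\Id$ to reduce to the known Euclidean orthogonality), which yields the inclusion of the row span of $\sigma(\bG^\perp)$ in $\cC^{\perp_\sigma}$; it then observes that $\sigma(\bG^\perp)$ inherits the upper-triangular shape established in the proof of \cref{thm:dual-generator}, so its row span has cardinality $|\ring|^{n-k}$, and finally applies the complementary-cardinality result (\cref{lem:wood-size-of-dual}) to upgrade the inclusion to equality. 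You instead extract the set-level identity $\cC^{\perp_\sigma}=\sigma(\cC^\perp)$ directly from the pointwise relation $\sigma\bigl(\langle\bv,\bc\rangle_\sigma\bigr)=\langle\sigma(\bv),\bc\rangle$ combined with injectivity of $\sigma$ and $\sigma^{-1}=\sigma$, and then push a free basis of $\cC^\perp$ through the bijection $\sigma$, carefully tracking the $\sigma$-twist on the scalar coefficients. This gets both inclusions at once, so you do not actually need the cardinality lemma (you invoke it only as a consistency check), and you also bypass the appeal to the upper-triangular structure of $\bG^\perp$: freeness and rank $n-k$ of $\cC^{\perp_\sigma}$ follow from the fact that $\sigma$ carries a basis to a basis. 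Both proofs hinge on the same $\sigma^2=\Id$ computation at the decisive step, but your version makes the underlying structural fact explicit — that the $\sigma$-Hermitian dual is the entrywise $\sigma$-image of the Euclidean dual — which the paper leaves implicit inside its row-by-row verification and counting argument.
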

\begin{proof}
  For each row $\bg_i=(g_{i,0},\dots,g_{i,n-1}), i\in\{0,\dots,k-1\}$ of a generating matrix $\bG$ of $\cC$ and each row $\bg^\perp_j=(g^\perp_{i,0},\dots,g^\perp_{i,n-1}), j\in\{0,\dots,n-k-1\}$ of a generating matrix of $\bG^\perp$ of $\cC^\perp$, we have $\langle g_i,g^\perp_j \rangle=\sum_{l=0}^{n-1} g_{i,l}\,g^\perp_{j,l}=0$. Since $\sigma$ is of order at most $2$, $\langle g_i,\sigma(g^\perp_j) \rangle_\sigma=\sum_{l=0}^{n-1} g_{i,l}\, \sigma(\sigma(g^\perp_{j,l}))=\sum_{i=0}^{n-1} g_{i,l}\, g^\perp_{j,l}=0$.
  Since $\bG^\perp$ is an upper triangular matrix with $n-k$ rows as shown in the proof of \cref{thm:dual-generator}
  and $\sigma$ is an automorphism, the $\sigma(\bG^\perp)$ is also an upper triangular matrix with $n-k$ rows and therefore generates a free $\ring$-module code of dimension $|\ring|^{n-k}$. Finally, \cref{lem:wood-size-of-dual} implies that they generate $\cC^{\perp_\sigma}$.
\end{proof}
\begin{corollary}
  \label{cor:Herm-left-cols-dependent}
  Let $\sigma(\bM)$ be the matrix after applying the automorphism $\sigma$ to every entry of $\bM$ derived in \cref{lem:Mmat}. The first $k$ columns of $\sigma(\bM)$ are $\cA$-linear combinations of the last $n-k$ columns of $\sigma(\bM)$.
\end{corollary}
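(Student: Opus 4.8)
The plan is to reduce the statement to \cref{cor:left-cols-dependent} by simply transporting the dependence relations among the columns of $\bM$ through the automorphism $\sigma$. Write $\bm^{(0)},\dots,\bm^{(n-1)}\in\ring^n$ for the columns of $\bM$ (indexed so that $\bm^{(0)},\dots,\bm^{(k-1)}$ are the ``first $k$'' and $\bm^{(k)},\dots,\bm^{(n-1)}$ are the ``last $n-k$''). By \cref{cor:left-cols-dependent}, for each $j\in\{0,\dots,k-1\}$ there exist scalars $a_{i,j}\in\ring$, $i\in\{k,\dots,n-1\}$, such that
\begin{align*}
  \bm^{(j)} = \sum_{i=k}^{n-1} a_{i,j}\,\bm^{(i)}\ ,
\end{align*}
where the right-hand side is the usual $\ring$-scalar multiplication of the column vectors $\bm^{(i)}$.

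The key step is then to apply $\sigma$ entrywise to this vector identity. Since $\sigma$ is a ring automorphism of $\ring$, it is additive and multiplicative, so for any scalar $a\in\ring$ and any vector $\bv\in\ring^n$ we have $\sigma(a\bv)=\sigma(a)\,\sigma(\bv)$ (componentwise), and $\sigma$ commutes with finite sums. Applying this to the relation above gives
\begin{align*}
  \sigma\!\left(\bm^{(j)}\right) = \sum_{i=k}^{n-1} \sigma(a_{i,j})\,\sigma\!\left(\bm^{(i)}\right)\ ,\qquad j\in\{0,\dots,k-1\}\ ,
\end{align*}
with $\sigma(a_{i,j})\in\ring$. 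But $\sigma\!\left(\bm^{(l)}\right)$ is exactly the $l$-th column of $\sigma(\bM)$, so this expresses each of the first $k$ columns of $\sigma(\bM)$ as an $\ring$-linear combination of its last $n-k$ columns, which is the claim.

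There is essentially no obstacle here: the only point that needs any care is that $\sigma$ is a ring homomorphism (so it passes through both the additions and the scalar multiplications defining the column dependence), which is part of the hypothesis that $\sigma$ is an automorphism of $\ring$ of order at most $2$. Alternatively, one can mirror the proof of \cref{cor:left-cols-dependent} more directly: by \cref{thm:Hermitian-gen-mat} the last $n-k$ columns of $\sigma(\bM)$ generate $\cC^{\perp_\sigma}$, which has cardinality $|\ring|^{n-k}$ by \cref{lem:wood-size-of-dual}, and all $n$ columns of $\sigma(\bM)$ lie in $\cC^{\perp_\sigma}$ (apply $\sigma$ to the orthogonality $\bc\bM=\0$ for $\bc\in\cC$ and use that $\sigma$ has order at most $2$), so the first $k$ columns must lie in the $\ring$-span of the last $n-k$; I would include the short argument above as the main proof and this as a remark.
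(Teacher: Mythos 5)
Your proof is correct and matches the paper's own argument almost verbatim: both take the dependency relations among the columns of $\bM$ from \cref{cor:left-cols-dependent} and push them through $\sigma$ entrywise, using only that $\sigma$ is a ring homomorphism (additivity and multiplicativity). The alternative route you sketch at the end (repeating the cardinality argument from \cref{thm:dual-generator} via \cref{thm:Hermitian-gen-mat} and \cref{lem:wood-size-of-dual}) is also sound, but the paper does not take it, so your main proof is the right one to present.
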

\begin{proof}
  Denote by $\bm_i$ the $i$-th column of $\bM$.
  It follows from \cref{cor:left-cols-dependent} that for $i\in\set*{0,1, \dots, k-1}$, we can write $\bm_i=\sum_{j=k}^{n-k-1}a_j\bm_j$ for some $a_j\in\ring$.
  By the rules of an automorphism (\cref{def:endo-deri}), we have $\sigma(\bm_i)=\sigma(\sum_{j=k}^{n-k-1}a_j\bm_j)=\sum_{j=k}^{n-k-1}\sigma(a\bm_j)=\sum_{j=k}^{n-k-1}\sigma(a_j)\sigma(\bm_j)$, which is an $\ring$-linear combination of the last $n-k$ columns of $\sigma(\bM)$.
\end{proof}
\section{Computing All $\sigma$-Dual-Containing $(\Endom,\Deriv)$-Codes}
\label{sec:algo-search-via-GB}
It follows from the definition of a $\sigma$-dual-containing code $\cC[n,k]$ in \cref{def:hermitian-dual} that, $k\geq n-k$. In the rest of this chapter, we implicitly assume this inequality holds when addressing dual-containing codes.
By \cref{def:skewPolycyclicCodes}, we can see that a $(\Endom,\Deriv)$-code $\cC(g,f)$ is determined by the polynomials $g,f\in\polyRing$. If $g$ is also a left divisor of $f$ (i.e., $f=g\hbar$), then the code $\cC(g,f)$ is determined by $g,\hbar$.

Our method to compute $\sigma$-dual-containing $(\Endom,\Deriv)$-codes considers the coefficients of $g,\hbar$ as unknowns and solves a system of equations derived from the following constraints:
\begin{enumerate}
\item $f=g\hbar$ is monic and $g\divides_r f$; \label{item:dual-containing-constraint-1}
\item (For Euclidean dual, $\sigma=\Id$) $\bM^\top\bM=\0$, where $\bM$ is described in \cref{lem:Mmat} and serves as a parity-check matrix of $\cC(g,f)$; \label{item:dual-containing-constraint-2}
\item (For $\sigma$-Hermitian dual) $\sigma(\bM)^\top\bM=\0$, where $\sigma(\bM)$ is the matrix after applying $\sigma$ to every entry of $\bM$. \label{item:Herm-dual-containing-constraint}
\end{enumerate}

\begin{lemma}
  \label{lem:dual-containing-constraints}
  Let $g, \hbar\in \polyRing$. If the coefficients of $g$ and $\hbar$ fulfill the system of polynomials equations that are derived from the constraints \ref{item:dual-containing-constraint-1} and \ref{item:dual-containing-constraint-2} (resp.~\ref{item:Herm-dual-containing-constraint}), then the $(\Endom,\Deriv)$-code $\cC(g,f)$ is dual-containing (resp.~$\sigma$-dual-containing), where $f=g\hbar$.
\end{lemma}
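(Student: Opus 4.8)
The plan is to reduce the statement to the structural description of the parity-check matrix $\bM$ obtained in \cref{lem:Mmat,thm:dual-generator,thm:Hermitian-gen-mat}, and then to observe that the matrix identities in constraints \ref{item:dual-containing-constraint-2} and \ref{item:Herm-dual-containing-constraint} are exactly the certificates for the module inclusions $\cC^\perp\subseteq\cC$ and $\cC^{\perp_\sigma}\subseteq\cC$ demanded by \cref{def:hermitian-dual}.

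First I would unpack constraint \ref{item:dual-containing-constraint-1}. If the coefficients of $g$ and $\hbar$ satisfy the polynomial equations coming from ``$f=g\hbar$ is monic and $g$ right-divides $f$'', then $f$ is a monic skew polynomial and $g$ is simultaneously a left and a right divisor of $f$; by \cref{prop:leading-coef-assumptions} we may additionally normalise $g$ (and the left cofactor $h$ with $f=hg$) to be monic without changing the code. Hence $\cC:=\cC(g,f)$ is a well-defined $(\Endom,\Deriv)$-code of length $n=\deg f$ and dimension $k=n-\deg g$, and \cref{lem:Mmat,thm:dual-generator,cor:Herm-left-cols-dependent,thm:Hermitian-gen-mat} all apply. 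In particular, with $\bM\in\ring^{n\times n}$ the matrix of \cref{lem:Mmat}, the proof of \cref{thm:dual-generator} yields the \emph{equality} $\cC=\{\bc\in\ring^n : \bc\bM=\0\}$ and shows that the columns of $\bM$ span $\cC^\perp$ as an $\ring$-submodule of $\ring^n$, i.e. $\cC^\perp=\{\ba\bM^\top : \ba\in\ring^n\}$.

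Next I would treat the two cases. For the Euclidean case ($\sigma=\Id$), take an arbitrary $\bv\in\cC^\perp$ and write $\bv=\ba\bM^\top$ with $\ba\in\ring^n$. Then $\bv\bM=\ba\bM^\top\bM=\ba\cdot\0=\0$ by constraint \ref{item:dual-containing-constraint-2}, so $\bv\in\cC$ by the characterisation of $\cC$ above; as $\bv$ was arbitrary, $\cC^\perp\subseteq\cC$, i.e.\ $\cC$ is dual-containing. The $\sigma$-Hermitian case is verbatim the same once a spanning set of $\cC^{\perp_\sigma}$ is identified: by \cref{thm:Hermitian-gen-mat} the last $n-k$ columns of $\sigma(\bM)$ generate $\cC^{\perp_\sigma}$, and by \cref{cor:Herm-left-cols-dependent} the first $k$ columns of $\sigma(\bM)$ lie in their $\ring$-span, so $\cC^{\perp_\sigma}=\{\ba\,\sigma(\bM)^\top : \ba\in\ring^n\}$; then for $\bv=\ba\,\sigma(\bM)^\top$ one gets $\bv\bM=\ba\,\sigma(\bM)^\top\bM=\0$ by constraint \ref{item:Herm-dual-containing-constraint}, hence $\bv\in\cC$ and $\cC^{\perp_\sigma}\subseteq\cC$.

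The only step that is more than bookkeeping — and hence the place where the real work sits — is the sharpening of \cref{lem:Mmat} from ``$\cC\subseteq\{\bc:\bc\bM=\0\}$'' to equality, together with the assertion that the columns of $\bM$ span \emph{all} of $\cC^\perp$ (not a proper subcode); both are already settled inside the proof of \cref{thm:dual-generator} via the complementary-cardinality property of dual codes over finite Frobenius rings (\cref{lem:wood-size-of-dual}), which is available since $\F_2[v]$, $\F_2[u]$, $\F_2[\alpha]$ are Frobenius. So the main obstacle has effectively been overcome in the earlier results, and what remains — carried out above — is the elementary observation that $\bM^\top\bM=\0$ (resp.\ $\sigma(\bM)^\top\bM=\0$) forces every element of the relevant dual to be annihilated by $\bM$, i.e.\ to lie in $\cC$.
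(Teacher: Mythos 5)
Your proof is correct and takes essentially the same approach as the paper's: both use constraint (i) to obtain the matrix $\bM$, identify $\cC^\perp$ (resp.\ $\cC^{\perp_\sigma}$) with the $\ring$-span of the columns of $\bM$ (resp.\ $\sigma(\bM)$) via \cref{thm:dual-generator,thm:Hermitian-gen-mat}, and then read off the required inclusion directly from $\bM^\top\bM=\0$ (resp.\ $\sigma(\bM)^\top\bM=\0$). The only cosmetic difference is that you quantify over the full column span of $\bM$, whereas the paper argues via the generator matrix $\bG^\perp$ together with \cref{cor:left-cols-dependent,cor:Herm-left-cols-dependent}; these are equivalent formulations of the same computation.
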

\begin{proof}
  The constraint \ref{item:dual-containing-constraint-1} suffices that $\cC(g,f)$ is a $(\Endom,\Deriv)$-code, $\hbar$ serves as a parity-check polynomial (\cref{lem:chbar=0}), and we can construct the matrix $\bM$ that serves as a parity-check matrix (\cref{lem:Mmat}).

  The constraint \ref{item:dual-containing-constraint-2} is equivalent to setting $\bG^\perp\cdot \bM=\0$ according to \cref{cor:left-cols-dependent}, where $\bG^\perp$ is the transpose of the last $n-k$ columns of $\bM$ (\cref{thm:dual-generator}). This ensures that $\cC^\perp\subseteq \cC$.

  Similarly, the constraint \ref{item:Herm-dual-containing-constraint} is equivalent to setting $\bG^{\perp_\sigma}\bM=\0$ according to \cref{cor:Herm-left-cols-dependent}, which ensures that $\cC^{\perp_\sigma}\subseteq \cC$.
\end{proof}

It can be seen in \cref{eg:parity-check-mat} that the entries in $\bM$ are symbolic expressions in images under compositions of $\Endom$ and $\Deriv$ of the coefficients of $\hbar$ and~$g$ (e.g., $\hbar_1^{\Endom}$, $\hbar_1^{\Deriv}$). However, since some $\Endom$ or $\Deriv$ maps are not always polynomial maps in $\ring$ (see \cref{eg:not-poly-map-F2v} and \cref{eg:not-poly-map-F2u}), it is difficult to solve for the coefficients of $g$ and $\hbar$ over $\ring$ by computers.

If $\ring=\subring[\beta_1,\dots,\beta_s]$ is a free $\subring$-algebra and the restriction of $\Deriv$ and $\Endom$ to $\subring$ are polynomial maps, then we can transform the symbolic expressions of images under $\Endom$ and $\Deriv$ (e.g., $\hbar_1^{\Endom}$) into polynomial expressions over $\subring$. By representing the unknown coefficients of $\hbar$ and $g$ as a linear combination of the algebra basis $(\beta_1,\dots,\beta_s)$ (e.g., $\hbar_1=\hbar_{1,1}\beta_1+\dots+\hbar_{1,s}\beta_s$), we obtain multivariate polynomial expressions over $\subring$ for the entries of $\bM$.

The following lemma shows that we can transform a system of symbolic equations over $\ring$
into a system of polynomial equations over $\subring$.
\begin{lemma}
  \label{lem:polymap-subring}
  Let $\cE_s$ be a finite system of symbolic equations over $\ring$ that are in the images of $\Endom$ and $\Deriv$ of a finite number of variables $y_1,\ldots,y_m$.
  If $\ring=\subring[\beta_1,\ldots,\beta_s]$ is a free $\subring$-algebra and the restriction of $\Endom$ and $\Deriv$ to $\subring$ are polynomial maps, then all solutions in $\ring^m$ to ${\cE_s}$ correspond to the solutions in $\subring^{ms}$ to a system $\cE_p$ of polynomial equations over $\subring$ in the variables $y_{1,1},\dots ,y_{1,s}, \dots,y_{m,1}, \dots, y_{m,s}$, where $y_i=y_{i,1}\beta_1+\cdots +y_{i,s}\beta_s, i\in[m]$.
\end{lemma}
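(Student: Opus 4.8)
The plan is to argue by unwinding the definitions and performing a term-by-term translation. First I would fix the free $\subring$-algebra structure: every element $a\in\ring$ has a \emph{unique} expansion $a = a_1\beta_1+\cdots+a_s\beta_s$ with $a_i\in\subring$, and the ring operations of $\ring$ are, in these coordinates, given by fixed $\subring$-bilinear formulas (addition is coordinatewise, and multiplication is governed by the structure constants $c_{ij}^k\in\subring$ defined by $\beta_i\beta_j=\sum_k c_{ij}^k\beta_k$). Hence \emph{any} composition of additions and multiplications of elements of $\ring$ becomes, in coordinates, an $s$-tuple of polynomial expressions over $\subring$ in the coordinates of the inputs.

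Next I would handle the maps $\Endom$ and $\Deriv$. The key point is that, although $\Endom$ and $\Deriv$ are only assumed to be polynomial maps \emph{on $\subring$} (not necessarily on $\ring$), additivity and the product/Leibniz rules in Definition \ref{def:endo-deri} let us reduce their action on a general $a = \sum_i a_i\beta_i$ to their action on the coordinates $a_i\in\subring$ and on the fixed basis elements $\beta_j\in\ring$. Concretely, $\Endom(a) = \sum_i \Endom(a_i)\Endom(\beta_i)$: here $\Endom(a_i)$ is a polynomial in $a_i$ with coefficients in $\subring$ (by the polynomial-map hypothesis on $\subring$), and $\Endom(\beta_i)$ is a fixed element of $\ring$, whose coordinates are therefore fixed elements of $\subring$. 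Expanding the product back into the basis via the structure constants, each coordinate of $\Endom(a)$ is a polynomial over $\subring$ in $a_1,\dots,a_s$. For $\Deriv$ one uses $\Deriv(a) = \sum_i\big(\Deriv(a_i)\beta_i + \Endom(a_i)\Deriv(\beta_i)\big)$ and argues the same way; iterated compositions such as $a^{\Endom\Deriv}$, $a^{\Deriv^2}$ are handled by induction on the length of the composition, since the composite of two coordinatewise-polynomial maps is again coordinatewise polynomial. Thus every symbolic entry appearing in $\cE_s$ — each of which is, by hypothesis, built from the variables $y_1,\dots,y_m$ by finitely many ring operations and applications of $\Endom,\Deriv$ — translates into an $s$-tuple of polynomials over $\subring$ in the $ms$ coordinate variables $y_{i,j}$.

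Finally, I would assemble the correspondence of solution sets. An equation $P = Q$ in $\cE_s$, with $P,Q\in\ring$, is equivalent, by uniqueness of the basis expansion, to the system of $s$ scalar equations $P_j = Q_j$ ($j\in[s]$) over $\subring$ obtained by equating coordinates; by the previous paragraph each $P_j, Q_j$ is a polynomial over $\subring$ in the $y_{i,j}$. Collecting these over all equations of $\cE_s$ defines the system $\cE_p$. The substitution $y_i = \sum_j y_{i,j}\beta_j$ then sets up a bijection between tuples $(y_1,\dots,y_m)\in\ring^m$ and tuples $(y_{1,1},\dots,y_{m,s})\in\subring^{ms}$, and by construction this bijection carries the solution set of $\cE_s$ onto the solution set of $\cE_p$, which is what is claimed.

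The routine part is the bookkeeping of structure constants and the induction on composition length; the only real subtlety — the step I expect to be the main obstacle — is making precise that $\Endom$ and $\Deriv$ are \emph{not} assumed polynomial on $\ring$, so one must genuinely route every evaluation through the coordinates in $\subring$ using only additivity and the Leibniz/multiplicativity rules, and check that this reduction is consistent with iterated compositions. Once that reduction is stated cleanly, everything else is a direct translation.
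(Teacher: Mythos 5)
Your proposal is correct and follows essentially the same route as the paper: expand each $y_i$ in the free-$\subring$-algebra basis, push $\Endom$ and $\Deriv$ through the sum via additivity, multiplicativity and the Leibniz rule to land on $\Endom(a_i)$, $\Deriv(a_i)$ (handled by the polynomial-map hypothesis on $\subring$) and the fixed basis images $\Endom(\beta_j)$, $\Deriv(\beta_j)$, then fold everything back through the structure constants and equate coordinates. The paper phrases the iteration as a recursive transformation of $\cE_s$ rather than an explicit induction on composition length, but this is only a cosmetic difference.
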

\begin{proof}
The image of the basis $(\beta_1,\ldots,\beta_s)$ of $\ring$ over $\subring$ under $\Deriv$ and $\Endom$ are expressions of the form $\beta_i^{\Endom}=\gamma_{i,1}\beta_1+\cdots +\gamma_{i,s}\beta_s$ and $\beta_i^{\Deriv}=\xi_{i,1}\beta_1+\cdots +\xi_{i,s}\beta_s$ for some $\gamma_{i,j}, \xi_{i,j}\in\subring$. The symbolic expressions in the images under $\Endom$ and $\Deriv$ of the variables $y_i, i\in[m]$ can therefore be written as
\begin{align*}
  y_i^{\Endom} & = (y_{i,1}\beta_1+\cdots +y_{i,s}\beta_s)^\Endom  \\
               &=  y_{i,1}^{\Endom}\beta_1^{\Endom} + \cdots  +y_{i,s}^{\Endom}\beta_s^{\Endom}\\
               & = y_{i,1}^{\Endom}\left(\gamma_{1,1}\beta_1+\cdots +\gamma_{1,s}\beta_s\right)+\cdots +y_{i,s}^{\Endom}\left(\gamma_{s,1}\beta_1+\cdots +\gamma_{s,s}\beta_s\right)\ ,\\
  y_i^{\Deriv} &= (y_{i,1}\beta_1+\cdots +y_{i,s}\beta_s)^\Deriv \\
               &=\; (y_{i,1}\beta_1)^\Deriv+\cdots +(y_{i,s}\beta_s)^\Deriv\\
               & = y_{i,1}^\Deriv \beta_1+ y_{i,1}^\Endom \beta_1^\Deriv +\cdots +y_{i,s}^\Deriv \beta_s+y_{i,s}^\Endom \beta_s^\Deriv\\
               & = y_{i,1}^{\Deriv} \beta_1+  y_{i,1}^{\Endom}\parenv*{\xi_{1,1}\beta_1+\cdots +\xi_{1,s}\beta_s}+ \cdots +y_{i,s}^{\Deriv} \beta_s+y_{i,s}^{\Endom}\parenv*{\xi_{s,1}\beta_1+\cdots +\xi_{s,s}\beta_s} \ .
\end{align*}
Using
\begin{enumerate}
\item the algebra relations $\beta_i\beta_j=\mu_{i,j,1}\beta_1+\ldots+\mu_{i,j,s}\beta_s$ (where $\mu_{i,j,s}\in \subring$ are given),
\item the additive and multiplicative rules of $\Endom$ and $\Deriv$ (\cref{def:endo-deri}),
\item the fact that the restriction of  $\Deriv$ and $\Endom$ to $\subring$ are polynomial maps on $\subring$, so that $y_{i,j}^\Endom$ and $y_{i,j}^\Deriv$ are polynomials in $y_{i,j}$ over $\subring$,
\end{enumerate}
we can recursively transform the system $\cE_s$ of symbolic equations in variables $y_1,\dots,y_m$ into a system $\cE_p$ of polynomial equations in the variables $y_{1,1}$, $\dots$, $y_{1,s}$, $\dots$, $y_{m,1}$, $\dots$, $y_{m,s}$.
For any solution $(\hat{y}_{1,1},\dots,\hat{y}_{m,s})\in \subring^{ms}$ to $\cE_p$, we can construct the corresponding solution $(\hat{y}_1,\dots,\hat{y}_m)\in\ring^m$ to $\cE_s$ by $\hat{y}_i=\hat{y}_{i,1}\beta_1+\cdots+\hat{y}_{i,s}\beta_s, i\in[m]$.
\end{proof}

Now we can proceed by solving a system of polynomial equations over $\subring$ to get the coefficients of $g,f$ such that the code $\cC(g,f)$ over $\ring$ is a $\sigma$-dual-containing code.
We use Gr\"obner bases (see \cref{sec:grobBases}) for ideals of multivariate polynomials to solve the system of polynomial equations.
\cref{algo:dualContainingCodes-GrobBasis} summarizes our implementation in Magma~\cite{Magma} to search for $\sigma$-dual-containing $(\Endom,\Deriv)$-codes.
If a Gr\"obner basis of the ideal generated by the polynomials in the system of equations over $\subring$ can be computed, we can find all $\sigma$-dual-containing codes $\cC(g,f)$ over $\ring$ for the given parameters $[n,k]$.

\begin{algorithm}[htb!]
  \caption{Computing all $\sigma$-dual-containing $(\Endom,\Deriv)$-codes for given $n,k$.}
  \label{algo:dualContainingCodes-GrobBasis}
  \KwIn{A ring $\subring$ where Gr\"obner bases can be computed, a ring $\ring=\subring[\beta_1,\dots,\beta_s]$ which is a free $\subring$-algebra, an endomorphism $\Endom$ and a $\Endom$-derivation $\Deriv$ of $\ring$ which are polynomial maps on $\subring$, an automorphism $\sigma$ of $\ring$ of order at most $2$, and code parameters $n,k$.}
  \KwOut{A set of solutions $\cP=\{\hat{g},\hat{\hbar}, \hat{f}\ |\ \cC(\hat{g},\hat{f}) \text{ is } \sigma\text{-dual-containing}\}$}
  $P_1\gets \subring[g_{0,1},\ldots, g_{0,s},\dots,g_{n-k-1,1},\ldots,g_{n-k-1,s},\hbar_{0,1},\ldots,\hbar_{0,s},\ldots, \hbar_{k-1,1},\ldots, \hbar_{k-1,s}]$    \tcc*[r]{multivariate polynomial ring over $\subring$}
  $\cP\gets \{\}$
  \tcc*[r]{Initialize a set to collect $\sigma$-dual-containing codes}
  \ForEach{$\hbar_k=\sum_{j=1}^s\hbar_{k,j}\beta_j\in \{$Invertible element of $\ring$\} \label{line:LCchoice}}{
    $\sfL\sfS\sfE\sfs\gets \{$Constraints s.t.~$g_{i,j}, \hbar_{i,j}\in \subring$\} \tcc*[r]{$g_{i,j}^p=g_{i,j}, \hbar_{i,j}^p=\hbar_{i,j}$ if $\subring=\mathbb{F}_p$}
    $g\gets \sum_{i=0}^{n-k-1}(\sum_{j=1}^sg_{i,j}\beta_j)\SkewVar^i+ \SkewVar^{n-k}$
    \tcc*[r]{$g \in P_1[\SkewVar;\Endom,\Deriv]$}
    $\hbar\gets\sum_{i=0}^{k-1} (\sum_{j=1}^s\hbar_{i,j}\beta_j) \SkewVar^i +
    \hbar_k\SkewVar^{k}$\tcc*[r]{$\hbar \in P_1[\SkewVar;\Endom,\Deriv]$}
    $f\gets g\cdot \hbar$\tcc*[r]{$\lc(f)$ may not be monic but does not contain variable}
    $h, r\gets$
    quotient, remainder of right dividing $f$ by $g$
    \tcc*[r]{$h,r\in P_1[\SkewVar;\Endom,\Deriv]$}
    $\sfL\sfS\sfE\sfs\overset{\text{Append}}{\longleftarrow}$\{All coefficients of $r$ are $0$ \}
    \tcc*[r]{implies $g\divides_r f$\label{line:g_rdiv_f}}
    $\bM\gets$ the matrix constructed from $\hbar$ according to \cref{lem:Mmat}\;
    $\sfL\sfS\sfE\sfs\overset{\text{Append}}{\longleftarrow}$\{All entries in $\sigma(\bM)^\top\cdot \bM$ are $0$\}
    \tcc*[r]{implies $\cC^{\perp_\sigma}\subseteq \cC$}
    $\cS\gets$ \{Solutions of $g_{0,1},\ldots,g_{n-k-1,s},\hbar_{0,1},\ldots, \hbar_{k-1,s}$ from a Gr\"obner basis of $\sfL\sfS\sfE\sfs$\}\label{line:solutions}\;
    $\cP\overset{\text{Append}}{\longleftarrow}\set{\hat{g},\hat{\hbar}, \hat{f}\in \ring[\SkewVar;\Endom,\Deriv]\ |\ \forall$ solution in $\cS}$\;
    \tcc{$\hat{g},\hat{\hbar}\in \ring[\SkewVar;\Endom,\Deriv]$ are reconstructed by evaluating coefficients of $g,\hbar\in P_1[\SkewVar;\Endom,\Deriv]$ at each solution in $\cS$;
      $\hat{f}=\hat{g}\cdot \hat{\hbar}$}
  }
  \Return{$\cP$}
\end{algorithm}

  Using the results found by \cref{algo:dualContainingCodes-GrobBasis}, we can further look into the following properties of the $(\Endom,\Deriv)$-codes $\cC(g,f)$:
  \begin{enumerate}
  \item Do nonzero derivations or non-trivial endomorphisms (not automorphisms) provide new $\sigma$-dual-containing codes that could not be constructed from skew polynomials with automorphisms or zero derivations?\\
    (The answer is yes. In \cref{tab:HammingF2v2} and \cref{tab:HammingF2u2}, we give examples of Hamming weight distributions of dual-containing $(\Endom,\Deriv)$-codes that could not be found without considering either nonzero derivations or non-trivial endomorphisms.)
  \item Is existing a central $f$ such that $g\divides_r f$ a necessary condition for the $(\Endom,\Deriv)$-code $\cC(g,f)$ being dual-containing?\\
    (The answer is no. In \cref{sec:all-f-are-noncentral} we give an example of a generating polynomial $g$ of a $[6,4]$ dual-containing $(\Endom,\Deriv)$-code where all $8$ polynomials $f=hg=g\hbar$ of degree $6$ are non-central.)
  \end{enumerate}
\subsection{Is the Dual Code of a $(\Endom,\Deriv)$-Code also a $(\Endom,\Deriv)$-Code?}
\label{sec:dual_principle}
By \cref{def:skewPolycyclicCodes}, a $(\Endom,\Deriv)$-code $\cC$ is a principle left ideal $\myspan{g}_l/\myspan{f}_l$ in the quotient ring $\polyRing/\myspan{f}_l$.
The algorithm we introduce in this section allows us to test whether the dual code $\cC^{\perp}$ is also a $(\Endom,\Deriv)$-code, in other words, whether there exists a generator polynomial $g^{\perp}\in\polyRing$ such that $\cC^{\perp}=\myspan{g^\perp}_l/\myspan{f^\perp}_l$ for some $f^\perp\in\polyRing$ and $g^\perp\divides_r f^\perp$.

Note that the rows of a generator matrix $G^{\perp}$ of $\cC^{\perp}$ derived in \cref{thm:Hermitian-gen-mat} correspond to skew polynomials $p_1,\dots,p_{k}$ in $\polyRing$, and they form a basis of the code $\cC^{\perp}$ when we see it as an $\ring$-module.
If the dual code $\cC^{\perp}$ has a monic generator polynomial $g^{\perp}$, then ${g^{\perp}}$ must be a right divisor of all the polynomials $p_1,\ldots,p_{k}$.

The method is similar to \cref{algo:dualContainingCodes-GrobBasis}. We see the coefficients of $g^{\perp}$ as unknowns and translate the constraints that $g^{\perp}$ right divides all $p_1,\dots,p_{k}$ into polynomial equations.
We then compute a Gr\"obner basis of the ideal generated by these polynomials.
If the Gr\"obner basis is $\set{1}$, meaning that there is no solution for these constraints, we can conclude that $\cC^{\perp}$ is not a
$(\Endom,\Deriv)$-code.
Otherwise, the Gr\"obner basis gives the solution of the monic generator polynomial $g^{\perp}$ of the $(\Endom,\Deriv)$-code $\cC^{\perp}$.
\cref{algo:is-dual-principle} summarizes our implementation in Magma~\cite{Magma}.
The algorithm can be extended to $\sigma$-Hermitian dual of a $(\Endom,\Deriv)$-code by changing the input $\bG^\perp$ to a generator matrix $\bG^{\perp_\sigma}$ of the $\sigma$-Hermitian dual code.
\begin{algorithm}[htb!]
  \caption{Testing whether the dual code is a $(\theta,\delta)$-code.}
  \label{algo:is-dual-principle}
  \KwIn{A ring $\subring$ where Gr\"obner bases can be computed, a ring $\ring=\subring[\beta_1,\dots,\beta_s]$ which is a free $\subring$-algebra, an endomorphism $\Endom$ and a $\Endom$-derivation $\Deriv$ of $\ring$ which are polynomial maps on $\subring$, and a generator matrix $\bG^\perp\in\ring^{(n-k)\times n}$ of the dual code $\cC^\perp$.}
  \KwOut{A monic generator polynomial $g^\perp$ of degree $k$ of the dual code $\cC^\perp$ or \texttt{False}.}
  $P_1\gets \subring[g^{\perp}_{0,1},\ldots, g^{\perp}_{0,s},\ldots,g^{\perp}_{k-1,1},\ldots,g^{\perp}_{k-1,s}]$
  \tcc*[r]{multivariate ring over $\subring$}
  $g^\perp\gets \sum_{i=0}^{k-1}(\sum_{j=1}^sg^{\perp}_{i,j}\beta_j)\SkewVar^i+ \SkewVar^{k}$
  \tcc*[r]{$g^\perp \in P_1[\SkewVar;\theta,\delta]$}
  $\mathsf{LSEs}\gets \{$Constraints on $g^{\perp}_{i,j}$ such that they are in $\subring$\}
  \tcc*[r]{ $(g^{\perp}_{i,j})^p=g^{\perp}_{i,j}$ if $\subring=\F_p$}
  \ForEach{$p_l\in\RingSkewPolys$ corresponding to the $l$-th row of $\bG^\perp$ \label{line:LCchoice}}{
    $h, r\gets$ quotient, remainder of right dividing $p_l$ by $g^\perp$
    \tcc*[r]{$h,r\in P_1[\SkewVar;\theta,\delta]$}
    $\mathsf{LSEs}\overset{\text{Append}}{\longleftarrow}$\{All coefficients of $r$ are $0$ \}
    \tcc*[r]{implies $g^\perp \divides_r p_l$\label{line:g_rdiv_f}}
  }
  $\mathsf{GB}\gets$ a Gr\"oebner basis of $\mathsf{LSEs}$\;
  \If{$\mathsf{GB}=\set*{1}$}{\Return \texttt{False}}
  \Else{
    $\cS\gets$ \{Solutions of $g^{\perp}_{0,1},\dots, g^{\perp}_{0,s},\dots,g^{\perp}_{k-1,1},\ldots,g^{\perp}_{k-1,s}$ from $\mathsf{GB}$ of $\mathsf{LSEs}$\}\label{line:solutions}\;
  }
  $\cP\gets
  \{g^\perp\in \ring[\SkewVar;\theta,\delta]\ |\ \forall$ solution in $\cS\}$\;
  \tcc{the monic $\hat{g}^\perp\in \ring[\SkewVar;\theta,\delta]$ is reconstructed by evaluating coefficients of $g^\perp\in P_1[\SkewVar;\theta,\delta]$ at each solution in $\cS$} \end{algorithm}

The results of \cref{algo:is-dual-principle} show that dual codes of $(\Endom,\Deriv)$-codes are in general not $(\Endom,\Deriv)$-codes, see \cref{tab:results-is-dual-principle} and \cref{tab:exist_principle_dual_F2u2} for examples over $\F_2[v]$ and $\F_2[u]$, respectively.

\section{Computation Results on Dual-Containing $(\Endom,\Deriv)$-Codes}
\label{sec:computation-results}
In this section we present the ($\sigma$-)dual-containing codes over the rings $\F_2[v], \F_2[u]$ and $\F_2[\alpha]=\F_4$ found by \cref{algo:dualContainingCodes-GrobBasis}. 
\subsection{Results for $\ring=\F_2[v]$ with $v^2=v$}
\label{sec:results-v2-v}
We keep the notation used in \cref{subtab:F2v} and compute the ($\sigma$-)dual-containing $(\Endom,\Deriv)$-code over the ring $\ring=\F_2[v]$ with $v^2=v$ using the method given in \cref{sec:algo-search-via-GB}.
Codes of small length over $\F_2[v]$ are classified in \cite{huffman2005classification}.
We follow \cite{dougherty2001maximum} and define the Lee weight of $0,1,v,v+1$ respectively as $0,2,1,1$ and the Bachoc weight respectively as $0,1,2,2$.

\cref{tab:F2v2+v} gives an overview of the best (in terms of minimum Hamming, Lee or Bachoc distance) dual-containing codes $\cC(g,f)\subset \polyRing/\myspan{f}_l$ (\cref{algo:dualContainingCodes-GrobBasis} found all such codes).
\cref{tab:HammingF2v2} gives detailed Hamming weight distributions that could only been found by some specific $(\Endom,\Deriv)$ combinations.

    \begin{table}[htbp]
      \caption{Results on dual-containing $(\Endom,\Deriv)$-codes over $\F_2[v]$.
      The blue cells mark the code parameters or the weight distributions that could only been found with nonzero derivations. The gray cells mark the code parameters or the weight distributions that could only been found with nonzero derivations and non-trivial endomorphisms.}
    \label{tab:results-F2v2}
    \begin{subtable}[h!]{\textwidth}
      \centering
      \subcaption{The largest Hamming, Lee and Bachoc distance of dual-containing $(\Endom,\Deriv)$-codes over $\mathbb{F}_2[v]$.
        The empty set $\emptyset$ indicates that no dual-containing $(\Endom,\Deriv)$-code
        exists for the parameter $[n,k]$.
        A question mark indicates that such dual-containing codes exist, but we could not compute the minimal distance due to the computation limitation.}
      \label{tab:F2v2+v}
    \begin{tabular}{|l|c|c|c|c|c|c|c|c|c|c|c|}
      \hline
      \diagbox[height=1.5em]{$n$}{$k$}
      & $2$ & $3$ & $4$ &$5$ &$6$ & $7$ & $8$ & $9$ & $10$ & $11$& $12$\\
      \hline  \hline
      $3$ & $1,1,2$&\multicolumn{10}{c|}{} \\
      \hline
      $4$ & $2,2,4$&  $2,2,2$&\multicolumn{9}{c|}{} \\
      \hline
      $5$ &  & $\emptyset$  &$\emptyset$ &\multicolumn{8}{c|}{} \\
      \hline
      $6$ &  &  $2,2,2$  & $2,2,2$   & $2,2,2$ &\multicolumn{7}{c|}{} \\
      \hline
      $7$ &  &  &  $3,3,5$  &$\emptyset$   & $\emptyset$ &\multicolumn{6}{c|}{} \\
      \hline
      $8$ &  &  &   $4,4,7$   & $2,2,4$  & $2,2,2$  & $2,2,2$  &\multicolumn{5}{c|}{} \\
      \hline
      $9$ &  &  &    &   $\emptyset$ &   $\emptyset$  &   $\emptyset$  &\cellcolor{TUMGray!35}{$1,1,2$} &\multicolumn{4}{c|}{}\\
      \hline
      $10$ &  &  &    & $2,2,2$   &  $2,2,2$   &   $\emptyset$    & $\emptyset$ & $2,2,2$ &\multicolumn{3}{c|}{}\\
      \hline
      $11$ &  &  &    &    &    $\emptyset$  &  $\emptyset$   & $\emptyset$  & $\emptyset$  & $\emptyset$ &\multicolumn{2}{c|}{}  \\
      \hline
      $12$ &  &  &    &   &   $4,4,6$   & $3,3,4$ & $2,2,?$ & $ 2,?,?$ & $?,?,?$ & $?,?,?$ &\\
      \hline
      $13$ &  &  &    &   &    &   $\emptyset$    & $\emptyset$ & $\emptyset$ & $\emptyset$  & $\emptyset$ & $\emptyset$ \\
      \hline
    \end{tabular}
  \end{subtable}
  \begin{subtable}[h!]{\textwidth}
    \centering
  \subcaption{Hamming weight distributions of dual-containing $(\Endom,\Deriv)$-codes over $\mathbb{F}_2[v]$.
    \label{tab:HammingF2v2}}
  \begin{tabular}{|c|l|l|}
    \hline
    $[n,k]$  & {Hamming Weight Distribution}    & {Constructed with }  $(\Endom,\Deriv)$\\
    \hline \hline
    \multirow{2}{*}{$[4,2]$} & $[1,0,6,0,9]$ 
                                                     & {all combinations $(\Endom,\Deriv)$}  \\
    \cline{2-3}
             & \cellcolor{TUMBlue!25}{$[1,0,4,4,7]$} 
                                                     & $(\Endom_2,\Deriv_2), (\Endom_3,\Deriv_3),(\Endom_4,\Deriv_4)$   \\
    \hline\hline
    \multirow{1}{*}{$[6,3]$} & $[1,0,9,0,27,0,27]$ 
                             &  {all combinations  $(\Endom,\Deriv)$} \\
    \hline\hline
    \multirow{4}{*}{[6,4]} & $[ 1, 0, 9, 24, 99, 72, 51 ]$ 
                                                     &  {all combinations $(\Endom,\Deriv)$} \\
    \cline{2-3}
             & \cellcolor{TUMBlue!25}$[ 1, 0, 17, 24, 83, 72, 59 ]$ 
                                                & $(\Endom_2,\Deriv_3),(\Endom_2,\Deriv_3)$   \\
    \cline{2-3}
             & \cellcolor{TUMGray!35}$[ 1, 2, 11, 28, 87, 66, 61 ]$ 
                                                & $(\Endom_3,\Deriv_3),(\Endom_4,\Deriv_4)$    \\
    \cline{2-3}
             &\cellcolor{TUMGray!35}$[ 1, 0, 13, 24, 91, 72, 55 ]$
                                                & $(\Endom_3,\Deriv_3),(\Endom_4,\Deriv_4)$  \\
    \hline \hline
    \multirow{4}{*}{[8,4]} & $[ 1, 0, 12, 0, 54, 0, 108, 0, 81 ]$ 
                                                &  {all combinations  $(\Endom,\Deriv)$} \\
    \cline{2-3}
             & $[ 1, 0, 0, 0, 28, 56, 84, 56, 31 ]$ 
                                                &  $(\Endom_2,0)$   \\
    \cline{2-3}
             & \cellcolor{TUMBlue!25}$[ 1, 0, 4, 0, 38, 32, 100, 32, 49 ]$ 
                                                & $(\Endom_2,\Deriv_2),(\Endom_3,\Deriv_3),(\Endom_4,\Deriv_4)$   \\
    \hline
  \end{tabular}
\end{subtable}
\end{table}
\subsubsection{Testing Results by \cref{algo:is-dual-principle}}
 We apply \cref{algo:is-dual-principle} to verify whether the dual of a dual-containing $(\Endom,\Deriv)$-code over $\F_2[v]$ is also a $(\Endom,\Deriv)$-code.
 \cref{tab:results-is-dual-principle} presents the results.
 We list the following two examples to illustrate that the dual code of a dual-containing $(\Endom,\Deriv)$-code is not always a $(\Endom,\Deriv)$-code:
\begin{itemize}
\item
  For $n=4,k=3$, we found three $g\in \F_2[v][X; \Endom_2,\Deriv_2]$ that generate dual-containing $(\Endom,\Deriv)$-codes:
  $g_1=X + v + 1$, $ g_2=X + 1$, $g_3=X + v$ where only the dual of $g_2 = X + 1$ is a $(\Endom,\Deriv)$-code, with $g_2^{\perp}=X^3 + X^2 + X + 1$.
\item For $n=6, k=4$, we found four $g\in \F_2[v][X;\Endom_3,\Deriv_3]$ that generate dual-containing $(\Endom,\Deriv)$-codes:
$g_1 =X^2 + (v + 1) X + v + 1$,
$g_2=X^2 + X + 1$,
$g_3=X^2 + X + v + 1$,
$g_4=X^2 + (v + 1)X + 1$.
Only the dual codes of $g_2$ and $g_4$ are $(\Endom,\Deriv)$-codes, with $g_2^{\perp}=X^4 + X^3 + X + 1$ and $g_4^{\perp}= X^4 + (v + 1)X^3 + X + v + 1$, respectively.
 \end{itemize}

\begin{table}[htbp]
  \caption{Results over $\F_2[v]$ on whether the dual code of a dual-containing $(\Endom,\Deriv)$-code is also a $(\Endom,\Deriv)$-code.}
  \label{tab:results-is-dual-principle}
  \begin{subtable}[h!]{\textwidth}
    \centering
  \subcaption{The entries indicates that None/Some/All of the $[n,k]$ dual-containing $(\Endom,\Deriv)$-codes whose dual codes are also $(\Endom,\Deriv)$-codes.}
  \label{tab:exist_principle_dual}
  \begin{tabular}{|l|c|c|c|c|c|c|c|c|}
    \hline
    \diagbox[height=1.5em]{$n$}{$k$}
      & 2 & 3 & 4 &5 &6 &7& 8 & 9 \\   \hline  \hline
      3 & None &\multicolumn{7}{c|}{}\\ \hline
      4 & All & Some &\multicolumn{6}{c|}{}\\  \hline
      5 &  & /  & / &\multicolumn{5}{c|}{}\\  \hline
      6 &  &  All & Some & Some &\multicolumn{4}{c|}{} \\  \hline
      7 &  &  &  All & /   & / &\multicolumn{3}{c|}{}\\  \hline
             8 &  &  &  All  & Some  & Some & Some &\multicolumn{2}{c|}{} \\  \hline
             9 &  &  &    &   /   & /   &  /   & None& \\  \hline
             10 &  &  &    & All  & Some  &   /     & /  & All \\  \hline
  \end{tabular}
  \end{subtable}
  \begin{subtable}[htb!]{\textwidth}
    \centering
    \subcaption{The number of dual-containing $(\Endom,\Deriv)$-codes whose dual codes are also $(\Endom,\Deriv)$-codes. Only the parameters marked with ``Some'' in \cref{tab:exist_principle_dual} are listed.}
    \label{tab:num_cyclic_dual}
    \begin{tabular}{|l|c|c|c|c|c|c|c|c|c|}
      \hline
      \multirow{2}{*}{$[n,k]$} &  \multicolumn{9}{c|}{\# of dual-containing $(\Endom,\Deriv)$-codes for each $(\Endom,\Deriv)$} \\
                               &  \multicolumn{9}{c|}{\# of dual-containing $(\Endom,\Deriv)$-codes whose dual codes are also $(\Endom,\Deriv)$-codes}   \\
      \hline &$(\mbox{Id};0)$ & $(\Endom_2,0)$&$(\Endom_2,\Deriv_2)$&$(\Endom_2,\Deriv_3)$&$(\Endom_2,\Deriv_4)$&$  (\Endom_3,0)$&$(\Endom_3,\Deriv_3)$&$(\Endom_4,0)$&$(\Endom_4,\Deriv_4)$ \\
      \hline
      \multirow{2}{*}{$[4,3]$} &  1 & 1&  3& 1& 1& 1& 2& 1& 2 \\
                               &1& 1& 1& 1& 1& 1& 1& 1& 1 \\
      \hline
      \multirow{2}{*}{$[6,4]$} &  1& 1& 1& 2& 2& 1& 4& 1& 4  \\
                               &  1& 1& 1& 1& 1& 1& 2& 1& 2  \\
      \hline
      \multirow{2}{*}{$[6,5]$} &  1& 1& 1& 2& 2& 1& 1& 1& 1  \\
                               &  1& 1& 1& 1& 1& 1& 1& 1& 1  \\
      \hline
      \multirow{2}{*}{$[8,5]$} &  1& 3& 5& 1& 1& 1& 8& 1& 8  \\
                               & 1& 3& 1& 1& 1& 1& 1& 1& 1  \\
      \hline
      \multirow{2}{*}{$[8,6]$} &  1& 3& 5& 1& 1& 1& 4& 1& 4  \\
                               &  1& 3& 3& 1& 1& 1& 2& 1& 2  \\
      \hline
      \multirow{2}{*}{$[8,7]$} &  1& 1& 3& 1& 1& 1& 2& 1& 2  \\
                               &  1& 1& 1& 1& 1& 1& 1& 1& 1  \\
      \hline
      \multirow{2}{*}{$[10,6]$} &  1& 1& 1& 1& 1& 1& 16& 1& 16 \\
                               &  1& 1& 1& 1& 1& 1& 2& 1& 2 \\
      \hline
    \end{tabular}
  \end{subtable}
\end{table}

\subsubsection{Hermitian Dual-Containing $(\Endom,\Deriv)$-Codes over $\F_2[v]$}
It can be verified that the automorphism $\Endom_2$ of $\F_2[v]$ is of order $2$. We hence use \cref{algo:dualContainingCodes-GrobBasis} to find all $\Endom_2$-dual-containing $(\Endom,\Deriv)$-codes over $\F_2[v]$.
\cref{tab:F2v2+v_herm} gives an overview of the largest minimum Hamming, Lee or Bachoc distance of $\Endom_2$-dual-containing codes $\cC(g,f)\subset \polyRing/\myspan{f}_l$.
\cref{tab:HammingF2v2_Herm} gives some Hamming weight distributions that could only be found by some specific $(\Endom,\Deriv)$ combinations.
\begin{table}[htbp]
  \caption{Results on $\Endom_2$-dual-containing $(\Endom,\Deriv)$-codes over $\F_2[v]$. The colored cells and special symbols have the same indications as in \cref{tab:results-F2v2}.}
  \label{tab:results-hermitian-dual-F2v2}
  \begin{subtable}[htb!]{\textwidth}
    \centering
    \subcaption{The largest Hamming, Lee, Bachoc distance of $\Endom_2$-dual-containing $(\Endom,\Deriv)$-codes over $\mathbb{F}_2[v]$.
    \label{tab:F2v2+v_herm}}
  \begin{tabular}{|l|c|c|c|c|c|c|c|c|c|c|c|}
    \hline
    \diagbox[height=1.5em]{$n$}{$k$} & $2$ & $3$ & $4$&$5$ &$6$ &$7$& $8 $ & $ 9 $ \\
    \hline  \hline
    $4 $ & $2,2,4$&  $2,2,2$ &\multicolumn{6}{c|}{}\\
    \hline
    $5 $ &  & \cellcolor{TUMBlue!25}{$2,2,2$} &\cellcolor{TUMBlue!25}{$1,1,2$}&\multicolumn{5}{c|}{}\\
    \hline
    $6 $ &  & \cellcolor{TUMGray!35}{$3,3,4$}   & $ 2,2,4   $ & $ 2,2,2 $ &\multicolumn{4}{c|}{}\\
    \hline
    $7 $ & & & $3,3,5$  &\cellcolor{TUMBlue!25}{$1,1,2$}   &\cellcolor{TUMBlue!25}{$1,1,2$} &\multicolumn{3}{c|}{}\\
    \hline
    $8 $ & &  & $3,3,6$ & $ 2,2,4  $ & $ 2,2,2 $ &$2,2,2$ &\multicolumn{2}{c|}{} \\
    \hline
    $9 $ & & & & \cellcolor{TUMGray!35}{$1,1,2$}&   $\emptyset$   & $\emptyset$  &$\emptyset$& \\
    \hline
    $10 $ & &  &  & $2,2,2$ & $2,2,2$ & $\emptyset$ & $ \emptyset$ &$2,2,2$ \\  \hline
  \end{tabular}
  \end{subtable}
  \begin{subtable}[htb!]{\textwidth}
    \centering
    \subcaption{The Hamming weight distributions of $\Endom_2$-dual-containing $(\Endom,\Deriv)$-codes over $\mathbb{F}_2[v]$.}
    \label{tab:HammingF2v2_Herm}
    \begin{tabular}{|c|l|l|}
      \hline
      $[n,k]$  & \mbox{Hamming Weight}    & \mbox{Constructed with }  $(\Endom,\Deriv)$\\
      \hline \hline
      \multirow{1}{*}{$[4,2]$} & $[1,0,6,0,9]$ 
                                          & all combinations  $(\Endom,\Deriv)$\\ 
      \cline{2-3}&$[1,0,2,8,5]$ & 
                                                      $(\Endom_{2},0)$\\
      \hline
      \multirow{1}{*}{$[4,3]$} & $[1,0,18,24,21]$ 
                                          & all combinations  $(\Endom,\Deriv)$ \\ 
      \cline{2-3}&\cellcolor{TUMBlue!25}$[1,2,16,22,23]$& 
                                                                                     $(\Endom_{2},\Deriv_2),(\Endom_3,\Deriv_3),(\Endom_4,\Deriv_4)$\\
      \cline{2-3}&\cellcolor{TUMBlue!25}$[1,2,12,30,19]$ & 
                                                                 $(\Endom_{2},\Deriv_3), (\Endom_2,\Deriv_4)$\\
      \hline
      \multirow{1}{*}{$[5,3]$} & \cellcolor{TUMBlue!25}$[1,0,8,14, 23, 18]$ 
                                          & $(\Endom_2,\Deriv_3),(\Endom_2,\Deriv_4)$ \\
      \hline
      \multirow{1}{*}{$[5,4]$} &\cellcolor{TUMBlue!25}$[1,3,22,66,105,59]$ & 
                                                                                                              $(\Endom_{2},\Deriv_{3}),
                                                                                                              (\Endom_{2},\Deriv_{4})$\\
      \hline
      \multirow{1}{*}{$[6,3]$} &
                   \cellcolor{TUMGray!35}$[1,0,0,8,21, 24, 10]$ & 
                                                                   $(\Endom_{3},\Deriv_{3}),
                                                                   (\Endom_{4},\Deriv_{4})$\\
      \hline
    \end{tabular}
  \end{subtable}
\end{table}

\subsubsection[An Example where $f$ being Central is Not Necessary for Dual-Containing Polycyclic Codes]{An Example where $f$ being Central is Not Necessary for Dual-Containing $(\Endom,\Deriv)$-Codes $\cC(g,f)$}
\label{sec:all-f-are-noncentral}
In most studies on the dual-containing $(\Endom,\Deriv)$-codes $\cC(g,f)$, e.g., \cite{boucher2008skew,boucher2011note,boulagouaz2018characterizations}, $f$ is assumed to be a central polynomial, since it is easier to derive closed formulas of a generator polynomial $g$ of the code.
With the example below we intend to show that there are dual-containing codes $\cC(g,f)$ where $g$ is not a right factor of any central $f$.

Note that many $f=hg=g \hbar$ can exist for the same $g$, and all $(g,f)$ pairs lead to the same code, whose generator matrix is determined only by $g$ and the corresponding $(\Endom,\Deriv)$.
To illustrate this, we present more details of the $[6,4]$ code with the Hamming weight distribution $[ 1, 0, 13, 24, 91, 72, 55 ]$ in \cref{tab:HammingF2v2}. There are four possible generator polynomials $g$ of this code and they are presented in \cref{tab:gG_n6_k4}.
\begin{table}[htbp]
  \centering
  \caption{All possible generator polynomials/matrices of the $[6,4]$ dual-containing $(\Endom,\Deriv)$-codes over $\mathbb{F}_2[v]$ with Hamming weight distribution $[ 1, 0, 13, 24, 91, 72, 55 ]$.}
  \label{tab:gG_n6_k4}
  \begin{tabular}{l|c|c|c}
    \hline
Index & $g$& $(\Endom,\Deriv)$ &$\mathbf{G}$  \\
\hline
1& $g=X^2 + X + v + 1$ &  $(\Endom_{3},\Deriv_{3})$
  & $\begin{pmatrix}
v + 1 & 1 & 1 & 0 & 0 & 0\\
v & 1 & 1 & 1 & 0 & 0\\
v & 0 & 1 & 1 & 1 & 0\\
v & 0 & 0 & 1 & 1 & 1
\end{pmatrix}
                    $ \\
    \hline
2& $g=X^2 + (v + 1)X + 1$   & $(\Endom_{3},\Deriv_{3})$
    & $\begin{pmatrix}
1 & v + 1 & 1 & 0 & 0 & 0\\
0 & v + 1 & 1 & 1 & 0 & 0\\
0 & v & 1 & 1 & 1 & 0\\
0 & v & 0 & 1 & 1 & 1
\end{pmatrix}
$ \\
    \hline
3&    $g=X^2 + vX + 1$
&$(\Endom_{4},\Deriv_{4})$
&$\begin{pmatrix}
1 & v & 1 & 0 & 0 & 0\\
0 & v & 1 & 1 & 0 & 0\\
0 & v + 1 & 1 & 1 & 1 & 0\\
0 & v + 1 & 0 & 1 & 1 & 1
\end{pmatrix}
$ \\
    \hline
4& $g=X^2 + X + v$   &
$(\Endom_{4},\Deriv_{4})$
 &$\begin{pmatrix}
v & 1 & 1 & 0 & 0 & 0\\
v + 1 & 1 & 1 & 1 & 0 & 0\\
v + 1 & 0 & 1 & 1 & 1 & 0\\
v + 1 & 0 & 0 & 1 & 1 & 1
\end{pmatrix}
                        $\\
    \hline
  \end{tabular}
\end{table}
We consider the first $g=\SkewVar^2 + \SkewVar + v + 1\in \mathbb{F}_2[v][\SkewVar, \Endom_{3},\Deriv_{3}]$ in \cref{tab:gG_n6_k4}. There are only $8$ non-central polynomials $f$ such that $f=hg=g\hbar$ for some $h,\hbar\in\polyRing$ (i.e., $g$ is both a left and right divisor of $f$):
\begingroup
\allowdisplaybreaks
\begin{eqnarray*}
  f_1& = & X^6 + vX^4 + vX^3 + vX + v + 1 = (X^4 + X^3 + vX^2 + X + v + 1) \cdot g\\
  f_2 & = & X^6 + X^5 + (v + 1) X^4 + X^3 + v X + v + 1 = (X^4 + v X^2 + (v + 1) X + 1) \cdot g\\
  f_3& =&X^6 + (v + 1) X^4 + v X^3 + v X^2 + X + v + 1= (X^4 + X^3 + (v + 1) X^2 + 1) \cdot g\\
   f_4 &=& X^6 + X^5 + v X^4 + X^3 + v X^2 + X + v + 1 =
                                                      (X^4 + (v + 1) X^2 + v X + v + 1 ) \cdot g\\
   f_5 &=& X^6 + v X^4 + v X^3 + X^2 + (v + 1) X = (X^4 + X^3 + v X^2 + X + v) \cdot g\\
    f_6 &= &X^6 + X^5 + (v + 1) X^4 + X^3 + X^2 + (v + 1) X = (X^4 + v X^2 + (v + 1) X) \cdot g\\
   f_7 &= &X^6 + (v + 1) X^4 + v X^3 + (v + 1) X^2 = (X^4 + X^3 + (v + 1) X^2) \cdot g\\
    f_8& =& X^6 + X^5 + v X^4 + X^3 + (v + 1) X^2 = (X^4 + (v + 1) X^2 + v X + v) \cdot g
\end{eqnarray*}
\endgroup
 For each $f_i, i=1,\dots, 8$, there is a unique $h_i$ corresponding to $f_i=h_ig$ (see the decomposition above) and 16 distinct $\hbar_i$ such that $f_i=g\hbar_i$, where one of $\hbar_i$ is equal to $h_i$.
In the following we present for $f_1$ the other 15 distinct $\hbar_1\neq h_1$ such that $f_1=g\hbar_1$:
\begingroup
\allowdisplaybreaks
 \begin{align*}
   f_1 
   &= g\cdot \left(  X^4 + X^3 + X + 1\right)
   &=&\ g\cdot\left( X^4 + X^3 + v X^2 + (v + 1) X + v + 1\right)\\
        &=g\cdot\left(X^4 + (v + 1) X^3 + (v + 1) X + v + 1 \right)
   &=&\ g\cdot \left(X^4 + X^3 + (v + 1) X + v + 1\right)\\
        &=g\cdot \left(X^4 + (v + 1) X^3 + v X^2 + X + v + 1\right)
   &=&\ g\cdot \left(X^4 + (v + 1) X^3 + X + v + 1 \right)\\
        &= g\cdot \left(X^4 + X^3 + X + v + 1 \right)
   &=&\ g\cdot \left( X^4 + (v + 1) X^3 + v X^2 + (v + 1) X + 1\right)\\
        &= g\cdot \left(X^4 + X^3 + v X^2 + (v + 1) X + 1 \right)
   &=&\ g\cdot \left( X^4 + (v + 1) X^3 + (v + 1) X + 1 \right)\\
        &= g\cdot \left(X^4 + X^3 + (v + 1) X + 1 \right)
   &=&\ g\cdot \left( X^4 + (v + 1) X^3 + v X^2 + X + 1\right)\\
        &= g\cdot \left(X^4 + X^3 + v X^2 + X + 1  \right)
   &=&\ g\cdot \left(X^4 + (v + 1) X^3 + X + 1 \right)\\
        &= g\cdot \big(X^4 + (v + 1) X^3 + v X^2 + (v + 1) X + v \hspace{-10pt}&+ &1\big)\ .
 \end{align*}
 \endgroup

\subsection{Results for $\ring=\F_2[u]$ with $u^2=0$}
We keep the notations used in \cref{subtab:F2u} and compute the dual-containing $(\Endom,\Deriv)$-codes over the ring $\ring=\F_2[u]$ with $u^2=0$ using the method presented in \cref{sec:algo-search-via-GB}.
We follow \cite{dougherty2001maximum} and define the Lee weight of $0,1,u,u+1$ respectively as $0,1,2,1$ and the Euclidean weight respectively as $0,1,4,1$.

\cref{tab:F2u2} gives an overview of the best (in terms of minimum Hamming, Lee and Euclidean distance) dual-containing $(\Endom,\Deriv)$-codes (\cref{algo:dualContainingCodes-GrobBasis} found all such codes).
\cref{tab:HammingF2u2} gives some Hamming weight distributions that could only been found by some specific $(\Endom,\Deriv)$ combinations.
\begin{table}[htbp]
  \caption{Results on dual-containing $(\Endom,\Deriv)$-codes over $\F_2[u]$. The colored cells and special symbols have the same indications as in \cref{tab:results-F2v2}.}
  \label{tab:results-F2u2}
  \begin{subtable}[htb!]{\textwidth}
    \centering
    \subcaption{The best Hamming, Lee, and Euclidean distances of dual-containing $(\Endom,\Deriv)$-codes over $\F_2[u]$.
    \label{tab:F2u2}}
  \begin{tabular}{|l|c|c|c|c|c|c|c|c|}
    \hline
    \diagbox[height=1.5em]{$n$}{$k$}& $2$ & $3$ & $4$ &$5$ &$6$ &$7$& $8$ & $9$ \\
    \hline  \hline
    $4$ &  $2,4,4$& $2,2,2$ &\multicolumn{6}{c|}{}\\
    \hline
    $5$ &  & $\emptyset$  &\cellcolor{TUMBlue!25}{$1,2,2$} &\multicolumn{5}{c|}{}\\
    \hline
    $6$ &  &  $2,4,4$ & $2,2,2$ & $2,2,2$ &\multicolumn{4}{c|}{} \\
    \hline
    $7$ &  &  &  $3,3,3$ & $\emptyset$   & {$1,2,2$} &\multicolumn{3}{c|}{}
    \\
    \hline
    $8$ &  &  &  $4,4,4$  & $2,4,4$  & $2,2,2$ &$2,2,2$  &\multicolumn{2}{c|}{} \\
    \hline
    $9$ &  &  &    &   $\emptyset$   & $\emptyset$   &  $\emptyset$   &\cellcolor{TUMBlue!25}{$1,2,2$}& \\
    \hline
    $10$ &  &  &    & $2,4,6$  & $2,4,5$  &   $\emptyset$     & $\emptyset$  &$2,2,2$ \\
    \hline
  \end{tabular}
  \end{subtable}
  \begin{subtable}[htb!]{\textwidth}
    \centering
  \subcaption{Hamming weight distributions of dual-containing $(\Endom,\Deriv)$-codes over $\mathbb{F}_2[u]$.
    \label{tab:HammingF2u2}}
  \begin{tabular}{|c|l|l|} \hline
$[n,k]$  & \mbox{Hamming Weight}    & \mbox{Constructed with }  $(\Endom,\Deriv)$\\ \hline \hline
    \multirow{2}{*}{$[4,2]$} & $[1,0,2,8,5]$ 
                                    & $(\Id,0), (\Id,\Deriv_2), (\Id,\Deriv_3), (\Endom_2,\Deriv_2)$   \\
    \cline{2-3}
         &   $[1,0,6,0,9]$ 
                                    & all combinations $(\Endom,\Deriv)$ 
    \\ \hline  \hline
    \multirow{5}{*}{$[8,4]$} & $[1,0,4,0,30, 64, 52, 64, 41]$ 
                             &  $(\Id,0), (\Endom_2,\Deriv_2)$   \\
    \cline{2-3}
         & $[1,0,4,0,46, 0, 148, 0,57]$ 
                             & $(\Id,0)$   \\
    \cline{2-3}
         & $[1,0,4,16, 14, 32, 84, 80, 25]$ 
                             &   $(\Id,0)$   \\
    \cline{2-3}
         & $[1,0,12, 0,54, 0, 108, 0,81]$ 
                             &  all combinations $(\Endom,\Deriv)$ \\
    \cline{2-3}
         &\cellcolor{TUMBlue!25}$[1,0,0,0,26, 64, 72, 64, 29]$ 
                             &  $(\Id,\Deriv_2)$   \\
    \hline  \hline
    \multirow{4}{*}{$[8,5]$} & $[1,0,4,16, 94, 224, 308, 272, 105]$ 
                                    &  $(\Id,0),(\Id,\Deriv_2)$   \\
    \cline{2-3}
         & $[ 1, 0, 4, 16, 110, 160, 404, 208, 121 ]$ 
                                    &     $(\Id,0)$   \\ \cline{2-3}
         & $[ 1, 0, 12, 0, 102, 192, 396, 192, 129 ]$
                                    &   all combinations $(\Endom,\Deriv)$  \\ \cline{2-3}
         &\cellcolor{TUMBlue!25}$[ 1, 0, 16, 8, 114, 176, 360, 200, 149 ]$
                                    & $(\Id,\Deriv_2)$ \\ \hline
  \end{tabular}
\end{subtable}
\end{table}
\subsubsection{Testing Results by \cref{algo:is-dual-principle}}
We apply \cref{algo:is-dual-principle} to verify whether the dual of the a dual-containing $(\Endom,\Deriv)$-code over $\F_2[u]$ is also a $(\Endom,\Deriv)$-code.
\cref{tab:exist_principle_dual_F2u2} presents the results.
 \begin{table}[htbp]
    \centering
    \caption{Test results over $\F_2[u]$ on whether the dual of a dual-containing $(\Endom,\Deriv)$-code is also a $(\Endom,\Deriv)$-code.
    }
\label{tab:exist_principle_dual_F2u2}
       \begin{tabular}{|l|c|c|c|c|c|c|c|c|}
     \hline
    \diagbox[height=1.5em]{$n$}{$k$}& 2 & 3 & 4 &5 &6 &7& 8 & 9 \\   \hline  \hline
      4 & All & All &\multicolumn{6}{c|}{} \\  \hline
  5 &  & /  & {
              \begin{tabular}{@{}c@{}}
                $(\mbox{id},\Deriv_2)$: None\\
                $(\mbox{id},\Deriv_4)$: All
              \end{tabular}}&\multicolumn{5}{c|}{}\\
                 \hline
  6 &  &  All  & All  & All &\multicolumn{4}{c|}{}  \\  \hline
  7 &  &  &  All  & /   & All &\multicolumn{3}{c|}{} \\  \hline
  8 &  &  &  All   & All   & All  &All &\multicolumn{2}{c|}{}   \\  \hline
         9 &  &  &    &   /   & /   &  /   & {
                                             \begin{tabular}{@{}c@{}}
                                              $(\Id,\Deriv_2)$: None \\
                                              $(\Id,\Deriv_4)$: Some
                                             \end{tabular}} &\\  \hline
  10 &  &  &    & All   & All   &   /     & /  &All  \\  \hline
       \end{tabular}
\end{table}

\subsection{Results for $\ring=\F_2[\alpha]=\F_4$}
We keep the notations used in \cref{subtab:F2alpha}. Note that $\Endom_2:a\mapsto a^2$ is an automorphism of order $2$. We compute the $\Endom_2$-dual-containing $(\Endom,\Deriv)$-codes over $\F_4$ by \cref{algo:dualContainingCodes-GrobBasis}.
Following \cite{dougherty2001maximum} we define the Lee weight of $0,1,\alpha,\alpha+1$ as $0,2,1,1$, respectively, and following \cite{LingSole2001} we define the Euclidean weight as $0,1,2,1$, respectively.

  \cref{tab:F4_herm} shows the existence and the best Hamming, Lee and Euclidean distance of the $\Endom_2$-Hermitian dual-containing $(\Endom,\Deriv)$-codes over $\F_4$.
  \cref{tab:HammingF4} provides some examples of the Hamming weight distributions.
  \begin{table}[htbp]
    \caption{Results on $\Endom_2$-dual-containing $(\Endom,\Deriv)$-codes over $\F_2[\alpha]=\F_4$. The colored cells and special symbols have the same indications as in \cref{tab:results-F2v2}.}
    \label{tab:results-F4_herm}
    \begin{subtable}[htb!]{\textwidth}
      \centering
    \subcaption{The best Hamming, Lee and Euclidean distance of $\Endom_2$-dual-containing codes over $\mathbb{F}_4$.
     \label{tab:F4_herm}}
       \begin{tabular}{|l|c|c|c|c|c|c|c|c|}
         \hline
         \diagbox[height=1.5em]{$n$}{$k$} & $2$ & $3$ & $4$ &$5$ &$6$ &$7$& $8$ & $9$\\
         \hline  \hline
         $4$ & $2,2,2$&  $2,2,2$ &\multicolumn{6}{c|}{} \\
         \hline
         $5$ &  & $3,3,3$  &\cellcolor{TUMBlue!25}$1,1,1$ &\multicolumn{5}{c|}{}  \\
         \hline
         $6$ &  &  $4,4,4$  & $2,2,2$   & $2,2,2$ &\multicolumn{4}{c|}{}  \\
         \hline
         $7$ &  &  &  $3,3,3$  & $\emptyset$  & \cellcolor{TUMBlue!25}$1,1,1$ &\multicolumn{3}{c|}{}  \\
         \hline
         $8$ &  &  &   $2,2,2$  & $2,2,2$  & $2,2,2$ &$2,2,2$ &\multicolumn{2}{c|}{} \\
         \hline
         $9$ &  &  &    &   $\emptyset$&   $\emptyset$ &   $\emptyset$  &\cellcolor{TUMBlue!25}$1,1,1$ &\\
         \hline
         $10$ &  &  &    &  $(4,4,4)$  &  $(3,3,3)$ & $(2,2,2)$     & $(2,2,2)$ & $(2,2,2)$\\
         \hline
       \end{tabular}
     \end{subtable}
\begin{subtable}[htb!]{\textwidth}
      \centering
  \caption{Weight distributions of $\Endom_2$-dual-containing $(\Endom,\Deriv)$-codes over $\F_4$.
    \label{tab:HammingF4}
  }
  \begin{tabular}{|c|l|c|}
    \hline
    $[n,k]$  & \mbox{Hamming Weight Enumerator}    & \mbox{Constructed with }  $(\Endom,\Deriv)$\\
    \hline \hline
    \multirow{2}{*}{$[4,3]$}
       & $[1,0,18,24, 21]$ 
                                                   & \text{all combinations $(\Endom,\Deriv)$ maps}   \\
    \cline{2-3}
             &\cellcolor{TUMBlue!25}$[1,6,12,18, 27]$ 
                                                   & $(\Endom_2,\Deriv_2)$   \\
    \hline\hline
    \multirow{1}{*}{$[5,4]$}
       & \cellcolor{TUMBlue!25}$[1,9,30, 54, 81, 81]$ 
                                                   & $(\Endom_2,\Deriv_2 )$ \\
    \hline\hline
    \multirow{2}{*}{$[6,5]$}
             & $[1,0, 45, 120, 315, 360, 183]$ 
                                                   & all combinations $(\Endom,\Deriv)$   \\
    \cline{2-3}
             & \cellcolor{TUMBlue!25}$[1,12, 57, 144, 243, 324, 243]$ 
                                                   & $(\Endom_2,\Deriv_2 )$ \\
    \hline
 \end{tabular}
\end{subtable}
\end{table}
 \section{Summary and Outlooks}
 \label{sec:mod-skew-conclusion-outlooks}
This chapter considers $(\Endom,\Deriv)$-polycyclic codes that are constructed from principle ideals of skew polynomials with endomorphisms $\Endom$ and $\Endom$-derivations $\Deriv$.
In particular, we focused on constructing dual-containing $(\Endom, \Deriv)$-codes over rings.
As a basis, we first derived a parity-check matrix of a $(\Endom, \Deriv)$-code within the framework of skew polynomials.
For a finite commutative Frobenius ring $\ring$, we then derived generator matrices of the Euclidean dual and the $\sigma$-Hermian dual of a $(\Endom, \Deriv)$-code. This implies that the ($\sigma$-Hermitian) dual codes are $\ring$-modules.
For $\ring=\subring[\beta_1, \dots, \beta_s]$ being a free $\subring$-algebra, we developed an algorithm using Gr\"obner bases to compute all the ($\sigma$-)dual-containing $(\Endom, \Deriv)$-codes over $\ring$.
We also presented an algorithm to test whether the dual code is also a $(\Endom, \Deriv)$-code, in other words, whether there is a generator polynomial in $\RingSkewPolys$ of the dual code.

With the computational results for several rings of order $4$, we obtain the following observations:
\begin{itemize}
\item nonzero derivations and non-trivial endomorphisms (not automorphisms) do give new (Euclidean/Hermitian) dual-containing $(\Endom, \Deriv)$-codes that could not been found by zero derivations or automorphisms. See \cref{tab:results-F2v2}, \cref{tab:results-hermitian-dual-F2v2}, \cref{tab:results-F2u2}, and \cref{tab:results-F4_herm}.
\item The monic generator polynomial $g$ being a right factor of some central polynomial $f$ is not a necessary condition for the $(\Endom, \Deriv)$-code generated by $g$ to be a dual-containing code. See the example in \cref{sec:all-f-are-noncentral}.
\item The dual code of a dual-containing $(\Endom, \Deriv)$-code is in general not a $(\Endom, \Deriv)$-code. See \cref{tab:results-is-dual-principle} and \cref{tab:exist_principle_dual_F2u2}.
\end{itemize}

It can be seen that there are dual-containing $(\Endom, \Deriv)$-codes over rings with large minimum Hamming distances, e.g., the $[6,3]_4$ code over $\F_2[v]$ with $\dH=3$ in \cref{tab:results-hermitian-dual-F2v2}, and the $[8,4]_4$ code over $\F_2[u]$ with $\dH=4$ in \cref{tab:results-F2u2}. For future research, it would be helpful to compare these codes with other existing codes over rings or some bounds on the cardinality or the distance of codes over rings, such as Singleton-like bounds, sphere-packing bounds.
Moreover, fast decoding algorithms for codes based on skew polynomials with automorphisms and zero derivations has been extensively studied lately, e.g., in~\cite{bartz2021fast,Hoermann2022efficient}. Since both of the above codes with large Hamming distances can only be found by non-trivial endomorphisms or nonzero derivations, advanced decoding algorithms based on~\cite{boucher2020algorithm} for such $(\Endom, \Deriv)$-codes are relevant to be developed.

\chapter{Support-Constrained Evaluation Codes based on Skew Polynomials}
\label{chap:eva_skew}
Gabidulin codes \cite{gabidulin1985theory}, introduced by
Ernest M.~Gabidulin, were the first evaluation codes from a special class of skew polynomials, namely the linearized polynomials, where the Frobenius automorphism and zero derivation are used.
Boucher and Ulmer \cite{boucher2014linear} extended the notion of evaluation codes to skew polynomials over finite fields with inner derivations.
Liu,
Manganiello and
Kschischang \cite{liu2015construction} defined \emph{generalized skew-evaluation codes} that contain Gabidulin codes as a special class, which combine the maximum distance separable (MDS) and the maximum rank distance (MRD) properties via a pasting construction.
Mart{\'\i}nez-Pe{\~n}as introduced in \cite{martinez2018skew} \emph{linearized Reed-Solomon} (LRS) codes, a class of evaluation codes based on skew polynomials that achieve the maximum sum-rank distance (MSRD) property.

This chapter investigates the support-constrained MSRD codes motivated from multi-source network coding and the advantage of vector network coding compared to scalar network coding.
We first give brief introductions to support-constrained codes and LRS codes in \cref{sec:support-constrained-codes} and \cref{sec:LRScodes}, respectively.
In \cref{sec:support-constrained-LRS} we present a necessary and sufficient condition on a generator matrix $\bG$ such that it generates an MSRD code. Moreover, if the required generator matrix $\bG$ does not satisfy the condition, we give the largest possible sum-rank distance of the code generated by $\bG$.
Using these results, we give in \cref{sec:multi-source-net-cod} a scheme to design \emph{distributed LRS codes} for distributed multi-source unicast networks.
Finally, we turn our focus to a family of \emph{multicast} networks, namely the \emph{generalized combination networks}, in \cref{sec:vec-net-cod}. We investigate the advantages of using vectors as coding coefficients at the relay nodes in the network compared to using scalars. The advantages are shown via the gap between the minimum required field size of vector coding solutions and that of scalar coding solutions.

\emph{The results in \cref{sec:support-constrained-LRS,sec:multi-source-net-cod} have been submitted to IEEE Transactions on Information Theory (TIT) and partly published in the proceeding of
  2023 IEEE Information Theory Workshop (ITW)~\cite{liu2023linearized}. The results in \cref{sec:vec-net-cod} were published in IEEE TIT~\cite{liu2021gap} and partly in the proceeding of 2020 IEEE Information Symposium on Information Theory (ISIT)~\cite{liu2020gap}. }

\section{Support-Constrained Codes}
\label{sec:support-constrained-codes}
\emph{Support-constrained codes} are codes that have some codewords having zeros at certain positions and these codewords form a basis of the code. In other words, a support-constrained code has a generator matrix having zeros at certain entries.
A formal definition is given as follows.
\begin{definition}[Support-constrained codes]
  \label{def:support-constrained-codes}
  Given the support constraints $\zeroSet_1, \dots, \zeroSet_k\subseteq \set*{1,\dots, n}$, a linear $[n,k]_q$ code is a \emph{support-constrained code} w.r.t.~$\zeroSet_1, \dots, \zeroSet_k$ if it has a generator matrix $\bG\in\Fq^{k\times n}$ fulfilling the support constraints, i.e.,
\begin{align}\label{eq:zeroConstraints}
    G_{ij}=0\ ,\quad \forall i\in\{1,\dots,k\},\ \forall j \in \zeroSet_i\ .
\end{align}
\end{definition}

Designing support-constrained error-correcting codes was motivated by its application in
weakly secure network coding for wireless cooperative data exchange \cite{yan2011weakly, song2012error, yan2014weakly, li2017cooperative}, where each node in the network stores a subset of all messages and the nodes communicate via broadcast transmissions to disseminate the messages in the presence of an eavesdropper.

From both, the theoretical and the practical point of view, the objective is to design support-constrained codes achieving the largest possible minimum distance.
In the Hamming metric, research focused on proving the following necessary and sufficient condition for the existence of MDS codes fulfilling the support constraints. The condition was first conjectured in \cite{dau2014existence} (known as the \emph{GM--MDS conjecture}), further studied in \cite{heidarzadeh2017algebraic, yildiz2018further}, and finally proven independently by
Lovett \cite{lovett2018mds} and by
Yildiz and
Hassibi \cite{yildiz2018optimum}.
\begin{theorem}[GM--MDS condition {\cite{yildiz2018optimum,lovett2018mds}}]
  \label{thm:yildizMDScondition}
  Let $\zeroSet_1,\dots,\zeroSet_k\subseteq \{1,\dots, n\}$.
  For any $q\geq n+k-1$, there exists an $[n,k]_{q}$ Reed--Solomon (RS) code with a generator matrix $\bG\in\Fq^{k\times n}$ fulfilling the support constraints in \eqref{eq:zeroConstraints},
if and only if, for any nonempty $\Omega\subseteq \{1,\dots, k\}$,
\begin{align}\label{eq:GM-MDScondition}
\left|\bigcap_{i\in\Omega}\zeroSet_i\right|+|\Omega|\leq k\ .
\end{align}
Moreover, if an MDS code has a generator matrix fulfilling the support constraint \eqref{eq:zeroConstraints}, then the sets $\zeroSet_i$’s satisfy \eqref{eq:GM-MDScondition}.
\end{theorem}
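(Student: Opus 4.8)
The statement splits into the ``moreover'' direction (necessity of \eqref{eq:GM-MDScondition}), which is elementary and I would prove directly, and the existence direction (sufficiency of \eqref{eq:GM-MDScondition} over every field with $q\geq n+k-1$), which is the substance of the former GM--MDS conjecture. Throughout, I encode the rows of the desired generator matrix by polynomials: a matrix $\bG\in\Fq^{k\times n}$ generates an RS code with distinct code locators $\alpha_1,\dots,\alpha_n$ if and only if its $i$-th row equals $(f_i(\alpha_1),\dots,f_i(\alpha_n))$ for some $f_i\in\Fq[x]$ with $\deg f_i<k$, under which the support constraint $G_{ij}=0$ for $j\in\zeroSet_i$ becomes ``$\alpha_j$ is a root of $f_i$ for all $j\in\zeroSet_i$''. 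I also use the standard fact that a linear $[n,k]$ code is MDS precisely when every $k\times k$ submatrix of a generator matrix is nonsingular.

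\textbf{Necessity.} Assume $\cC$ is MDS with generator matrix $\bG$ obeying \eqref{eq:zeroConstraints}, and suppose toward a contradiction that $|\zeroSet|+|\Omega|>k$ for some nonempty $\Omega\subseteq\{1,\dots,k\}$, where $\zeroSet\defeq\bigcap_{i\in\Omega}\zeroSet_i$. Pick $S\subseteq\zeroSet$ with $|S|=k-|\Omega|+1$ and extend it to a column set $T\subseteq\{1,\dots,n\}$ with $|T|=k$ (possible since $k\leq n$). In the submatrix $\bG|_T$, the $|\Omega|$ rows indexed by $\Omega$ vanish on the $|S|=k-|\Omega|+1$ columns in $S$, so they are determined by their $|\Omega|-1$ remaining entries and hence linearly dependent; thus $\det(\bG|_T)=0$, contradicting the MDS property. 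Therefore \eqref{eq:GM-MDScondition} must hold.

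\textbf{Existence.} First, a combinatorial reduction lets me assume $|\zeroSet_i|=k-1$ for all $i$: one enlarges the $\zeroSet_i$ to supersets of size $k-1$ that still satisfy \eqref{eq:GM-MDScondition} (this requires a careful choice of the adjoined indices, cf.~\cite{yildiz2018optimum}), and a generator matrix whose rows vanish on the larger sets a fortiori vanishes on the original ones. With $|\zeroSet_i|=k-1$, the polynomial $f_i$ is forced, up to a nonzero scalar, to be $f_i(x)=\prod_{l\in\zeroSet_i}(x-\alpha_l)$, so $G_{ij}=\prod_{l\in\zeroSet_i}(\alpha_j-\alpha_l)$. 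Now regard $\alpha_1,\dots,\alpha_n$ as indeterminates over $\Fq$ and write $f_i=\sum_{m=0}^{k-1}c_{im}x^m$ with $c_{im}\in\Fq[\alpha_1,\dots,\alpha_n]$. For any $k$-subset $T$ one gets the factorization $\det(\bG|_T)=\det(c_{im})_{i,m}\cdot\det(\alpha_j^m)_{j\in T,\,m}$, whose second factor is the Vandermonde determinant of $\{\alpha_j:j\in T\}$. Hence \emph{all} $k\times k$ minors of $\bG$ are nonzero at a given assignment of the $\alpha_j$ in $\Fq$ if and only if the single polynomial
\[
  P(\alpha_1,\dots,\alpha_n)\;\defeq\;\det\!\bigl(c_{im}\bigr)_{i,m}\cdot\!\!\prod_{1\leq j<j'\leq n}(\alpha_j-\alpha_{j'})
\]
is nonzero at that assignment; note that the Vandermonde factor also enforces the chosen locators to be distinct. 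It remains to (a) show $P\not\equiv 0$, and (b) realize a non-vanishing assignment in $\Fq$. For (a), the Vandermonde factor is a nonzero polynomial, and $\det(c_{im})_{i,m}\neq 0$ is equivalent to the $\Fq(\alpha_1,\dots,\alpha_n)$-linear independence of $f_1,\dots,f_k$; this is exactly the assertion that \eqref{eq:GM-MDScondition} guarantees, and it is the heart of the GM--MDS theorem of \cite{lovett2018mds,yildiz2018optimum} (one deduces from \eqref{eq:GM-MDScondition} a Hall-type marriage/system-of-distinct-representatives condition that forces a monomial in the determinant expansion to survive with nonzero coefficient). For (b), using \eqref{eq:GM-MDScondition} once more one checks that each variable $\alpha_l$ occurs in at most $k-1$ of the sets $\zeroSet_i$, whence the $\alpha_l$-degree of $P$ is at most $(k-1)+(n-1)=n+k-2$; taking $\cX_1=\dots=\cX_n=\Fq$ with $q=|\Fq|\geq n+k-1$ and applying the Combinatorial Nullstellensatz (\cref{lem:nullstellensatz}) yields locators $\hat\alpha_1,\dots,\hat\alpha_n\in\Fq$ with $P(\hat\alpha_1,\dots,\hat\alpha_n)\neq 0$. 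The corresponding RS code then has a generator matrix satisfying \eqref{eq:zeroConstraints} and is MDS.

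\textbf{Main obstacle.} The easy ingredients are the polynomial encoding, the minor/pigeonhole argument for necessity, the Vandermonde factorization, and the per-variable degree count that produces the sharp bound $q\geq n+k-1$. The genuinely hard step is (a): that condition \eqref{eq:GM-MDScondition} forces the coefficient determinant $\det(c_{im})$ to be a nonzero polynomial --- precisely the statement that was open as the GM--MDS conjecture. I would import its proof (Lovett's induction, or the Yildiz--Hassibi algebraic identity together with Hall's theorem) rather than reprove it; a secondary, more technical point is keeping the degree bookkeeping tight enough to avoid an off-by-one loss in the field-size threshold.
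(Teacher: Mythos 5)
The paper does not prove this theorem—it quotes it from Lovett and Yildiz--Hassibi—but your reconstruction is correct and mirrors exactly the method the paper uses to establish the generalization (the GM-MSRD condition, \cref{thm:fieldSizeFromGab}): the same reduction to $|\zeroSet_i|=k-1$, the same factorization of minors into a coefficient determinant (whose non-vanishing is the hard, imported step, corresponding to the paper's \cref{claim}/\cref{thm:sufficientCond}) times a Vandermonde factor, and the same per-variable degree count of $\deg_{\alpha_l}P\leq (k-1)+(n-1)$ followed by a Nullstellensatz-type argument to get $q\geq n+k-1$. One small caveat: as stated, \cref{lem:nullstellensatz} needs a nonzero leading-monomial coefficient with $\sum t_i=\deg P$, whereas what you (and the paper, in the proof of \cref{thm:fieldSizeFromGab}) actually invoke is the DeMillo--Lipton--Schwartz--Zippel variant that only requires per-variable degree bounds—harmless, but worth flagging if you want the application airtight.
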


Yildiz and Hassibi adapted the approach for Gabidulin codes in \cite{yildiz2019gabidulin} and derived the following GM--MRD condition.
\begin{theorem}[GM--MRD condition {\cite[Theorem 1]{yildiz2019gabidulin}}]
  Let $\zeroSet_1,\dots,\zeroSet_k\subseteq \{1,\dots,n\}$.
  For any prime power $q$ and integer $m\geq \max \{n, k-1+\log_q k\}$, there exists an $[n,k]_{q^m}$ Gabidulin code with a generator matrix $\bG\in\Fqm^{k\times n}$ fulfilling \eqref{eq:zeroConstraints}, if and only if, for any nonempty $\Omega\subseteq \{1,\dots, k\}$, the inequality \eqref{eq:GM-MDScondition} holds.
\end{theorem}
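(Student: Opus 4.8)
\emph{Necessity.} This direction reduces to the Hamming-metric result. Let $\cC$ be an $[n,k]_{q^m}$ Gabidulin code. Being MRD, it has minimum rank distance $\dR(\cC)=n-k+1$; since a rank weight never exceeds the corresponding Hamming weight (\cref{lem:sum-rank-Hamming}) we get $\dH(\cC)\geq\dR(\cC)=n-k+1$, while the Singleton bound (\cref{thm:singleton-Hamming}) gives $\dH(\cC)\leq n-k+1$, so $\cC$ is also MDS in the Hamming metric. Hence, if $\cC$ admits a generator matrix $\bG\in\Fqm^{k\times n}$ fulfilling the support constraints \eqref{eq:zeroConstraints}, the ``moreover'' part of \cref{thm:yildizMDScondition} forces \eqref{eq:GM-MDScondition} for every nonempty $\Omega\subseteq\{1,\dots,k\}$.

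\emph{Sufficiency -- reduction to an invertibility statement.} Assume \eqref{eq:GM-MDScondition}. I view an $[n,k]_{q^m}$ Gabidulin code as the set of evaluation vectors $(f(g_1),\dots,f(g_n))$, where $f$ runs over the $\Fq$-linearized polynomials of $\Frobaut$-degree less than $k$ (this ring, under composition, is isomorphic to $\FrobSkewPolys$ via $\SkewVar^i\leftrightarrow x^{q^i}$) and $g_1,\dots,g_n\in\Fqm$ are $\Fq$-linearly independent, which is possible since $m\geq n$. A matrix $\bG$ whose $i$-th row is the evaluation vector of a linearized polynomial $p_i$ of $\Frobaut$-degree $\leq k-1$ is a generator matrix of this code fulfilling \eqref{eq:zeroConstraints} exactly when (a) $p_i(g_j)=0$ for all $i$ and all $j\in\zeroSet_i$, and (b) the $k\times k$ matrix $\bM$ of coefficients of $p_1,\dots,p_k$ is invertible. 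So the task is to produce $\Fq$-linearly independent $g_1,\dots,g_n$ and polynomials $p_1,\dots,p_k$ satisfying (a) and (b).

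\emph{Sufficiency -- construction.} The construction parallels the one behind \cref{thm:yildizMDScondition}. First, the combinatorial core: \eqref{eq:GM-MDScondition} guarantees, by the same index-set argument as in the GM--MDS proof (which is insensitive to the metric), the existence of enlargements $\hat\zeroSet_i\supseteq\zeroSet_i$, each of size $k-1$ (entries with multiplicity), that are consistent in the way needed below. Set $p_i$ to be the composition, in a fixed order, of the atomic $\Fq$-linearized annihilators $x^{q}-g_j^{q-1}x$ over $j\in\hat\zeroSet_i$; then $p_i$ has $\Frobaut$-degree $k-1$ and, since $\{g_j:j\in\zeroSet_i\}$ lies in its root space, (a) holds automatically. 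For (b), regard $g_1,\dots,g_n$ as indeterminates over $\Fq$: then $\bM$ has entries in $\Fq[g_1,\dots,g_n]$, and everything hinges on showing that $\det\bM$ is \emph{not} the zero polynomial -- this is where \eqref{eq:GM-MDScondition} is really consumed, namely to single out one monomial of $\det\bM$ whose coefficient provably cannot vanish. Finally, to specialize the indeterminates, multiply $\det\bM$ by a Moore-type determinant $\Delta(g_1,\dots,g_n)$ whose nonvanishing certifies $\Fq$-linear independence of the $g_j$; the product is a nonzero polynomial, and applying the Combinatorial Nullstellensatz (\cref{lem:nullstellensatz}) over $\Fqm$ supplies an assignment $g_1,\dots,g_n\in\Fqm$ at which it is nonzero -- hence $\Fq$-linearly independent points and an invertible $\bM$ -- provided $q^m$ exceeds the degree of the product in each variable. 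Tracking these degrees -- the $\Frobaut$-degree $k-1$ of each $p_i$ converted to ordinary degree in the $g_j$, together with the $k$ rows of $\bM$ and the Moore factor -- reproduces exactly the hypothesis $m\geq\max\{n,\,k-1+\log_q k\}$.

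\emph{Main obstacle.} The decisive and least routine step is the claim $\det\bM\not\equiv 0$ in $\Fq[g_1,\dots,g_n]$. Because composition of linearized polynomials is non-commutative -- equivalently, $\FrobSkewPolys$ is non-commutative -- the monomial-selection argument that proves the analogous statement for ordinary Reed--Solomon polynomials does not carry over verbatim; controlling the order of the skew products while exhibiting a surviving monomial is the technical heart of \cite{yildiz2019gabidulin}. The degree bookkeeping in the specialization step (in particular the appearance of the $\log_q k$ term, and arranging $\Fq$-linear independence simultaneously with $\det\bM\neq 0$) is the other point requiring care, but it becomes straightforward once the nonvanishing of $\det\bM$ is secured.
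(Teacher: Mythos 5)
The paper does not prove this theorem itself — it states it as a cited result from \cite{yildiz2019gabidulin} — but it does give a self-contained proof of the strictly more general GM--MSRD condition (\cref{thm:fieldSizeFromGab}, with the combinatorial core deferred to \cref{thm:sufficientCond}), which specializes to the present statement at $\ell=1$, as the paper itself notes in the remark following \cref{thm:fieldSizeFromGab}. Your outline follows that proof in all the major beats: necessity via rank weight $\leq$ Hamming weight plus the ``moreover'' clause of the GM--MDS theorem; sufficiency by reducing to $\det\bM\neq 0$ over a polynomial ring in the code locators, multiplying by a Moore determinant, and invoking the Combinatorial Nullstellensatz; and degree bookkeeping yielding $m\geq\max\{n,\,k-1+\log_q k\}$. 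You also correctly locate the heart of the matter in proving $\det\bM\not\equiv 0$ and in the non-commutativity of $\FrobSkewPolys$, which the paper handles by the induction of \cref{sec:ind_proof}.

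However, your construction of the row polynomials $p_i$ is wrong as stated. You take $p_i$ to be ``the composition, in a fixed order, of the atomic $\Fq$-linearized annihilators $x^q-g_j^{q-1}x$ over $j\in\hat\zeroSet_i$'' and claim that all the $g_j$ then lie in its root space. They do not: $(x^q-g_1^{q-1}x)\circ(x^q-g_2^{q-1}x)$ annihilates $g_2$ (the inner factor kills it) but in general not $g_1$, because $(x^q-g_2^{q-1}x)(g_1)=g_1^q-g_2^{q-1}g_1$ is an $\Fq$-multiple of $g_1$ only when $g_1^{q-1}-g_2^{q-1}\in\Fq$. The correct object is the minimal skew polynomial of the set of code locators $\{g_j^{q-1}\}_{j\in\hat\zeroSet_i}$, i.e.\ $\lclm_{j\in\hat\zeroSet_i}\{\SkewVar-g_j^{q-1}\}$ as in \eqref{eq:lclmMinPoly} — equivalently the subspace polynomial $\prod_{\beta\in V_i}(x-\beta)$ of the $\Fq$-span $V_i$ of $\{g_j\}_{j\in\hat\zeroSet_i}$. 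That polynomial \emph{can} be factored into degree-one skew factors via Newton interpolation \eqref{eq:newtonInterpolation}, but the factors there are $\SkewVar-\alpha^{g_{i-1}(\alpha_i)}$, built from $\Frobaut$-conjugates, not the raw $g_j^{q-1}$. (Relatedly, the padding to $|\hat\zeroSet_i|=k-1$ is by adding distinct indices of $[n]$ while preserving \eqref{eq:GM-MDScondition}, per \cite[Corollary 3]{yildiz2019gabidulin}; ``entries with multiplicity'' should be dropped.) With $p_i$ defined correctly, the rest of your outline is sound.
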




Recently, special cases of support-constrained MDS codes have also been studied. For instance,
the work by Greaves and Syatriadi~\cite{greaves2019reed} considered the following two special cases of
$\zeroSet$'s:
\begin{enumerate}
\item $\left|\bigcap_{i=1}^s\zeroSet_i\right|=k-s$ for all $s=1,\dots, k$. Note that when $s=k-1$, it is required that $\left|\bigcap_{i=1}^{k-1}\zeroSet_i\right|=1$, which means that there is at least one column of the generator matrix $\bG$ containing $k-1$ zeros. For this case an $[n,k]_q$ RS code generated by $\bG$ exists if $q\geq n$.
\item $\left|\zeroSet_i\right|\leq i-1$ for all $i=1,\dots,k$. Note that when $i\leq k-1$, $|\zeroSet_i|\leq k-2$, which implies that less zeros are allowed in $\bG$ than in \eqref{eq:zeroConstraints}. For this case, an $[n,k]_q$ RS code generated by $\bG$ exists if $q\geq n+1$.
\end{enumerate}

Driven by the requirement of balanced computation load during the encoding process in wireless sensor networks \cite{dau2013balanced} and multiple access networks \cite{dau2014simple,halbawi2014distributed,halbawi2014distributedGab}, codes fulfilling \emph{sparse and balanced} support constraints have been proposed, e.g., in \cite{halbawi2016balanced, halbawi2018sparse}.
In \cite{song2018generalized,chen2022sparse}, the existence of an MDS code with a sparse and balanced generator matrix $\bG$ has been studied. ``Sparse'' means that each row of $\bG$ has the maximum number of zeros, i.e., $k-1$ zeros, and ``balanced'' means that the number of zeros in any two columns differs by at most one, i.e., the weight of each column is either $\ceil{k(n-k+1)/n}$ or $\floor{k(n-k+1)/n}$.
It was shown in \cite{song2018generalized} that for any $k\in[n]$, if $q\geq n+\ceil{k(k-1)/n}$, then there exists an $[n,k]_q$ generalized RS code with a sparse and balanced generator matrix.
More recently, it is shown in \cite{chen2022sparse} that for any $k\geq 3$, $\frac{k}{n}\geq \frac{1}{2}$, and $q\geq n-1$, there exists an $[n,k]_q$ MDS code with a sparse and balanced generator matrix.
\section{Linearized Reed-Solomon Codes}
\label{sec:LRScodes}
LRS codes \cite{martinez2018skew} are a class of evaluation codes based on skew polynomials \cite{ore1933theory}, achieving the Singleton bound in the sum-rank metric, and therefore known as MSRD codes.
They are the first linear MSRD codes with sub-exponential field sizes (in contrast to Gabidulin codes, which are MRD codes but require exponential field sizes in the code length).
LRS codes have been applied in network coding \cite{martinez2019reliable}, locally repairable codes \cite{martinez2019universal} and code-based cryptography \cite{hormann2022security}.
The decoding of LRS codes has been extensively studied recently, e.g., \cite{boucher2020algorithm,puchinger2021bounds,bartz2021fast,puchinger2022generic,Hoermann2022efficient,Hoermann2022errorerasure,jerkovits2022universal,bartz2022fast}.

The definition of LRS codes adopted in this chapter follows from the \emph{generalized skew evaluations codes} \cite[Section III]{liu2015construction} with particular choices of the evaluation points and column multipliers.
\begin{definition}[Linearized Reed-Solomon (LRS) code]
\label{def:LRS}
For a prime power $q$ and integers $m,n$, let $\ell\in[q-1]$ and $(n_1,\dots, n_{\ell})$ be an ordered partition of $n$ with $n_\indBlock\leq m, \forall\indBlock\in[\ell]$.
Let $a_1,\dots, a_\ell\in\Fqm$ be from distinct $\Frobaut$-conjugacy classes of $\Fqm$, called \emph{block representatives}. 
Let
\begin{align*}
\colMulVec=(\beta_{1,1},\dots,\beta_{1,n_1}\ ,\ \dots\ , \ \beta_{\ell,1},\dots,\beta_{\ell,n_\ell})\in\Fqm^n
\end{align*}
be a vector of \emph{column multipliers}, where $\betal{1},\dots,\betal{n_\indBlock}$, called the columns multipliers of the $\indBlock$-th block, are linearly independent over $\Fq, \forall \indBlock\in[\ell]$.

Let the set of \emph{code locators} be
\begin{align}\label{eq:LRSlocators}
  \locSet 
  =&\{a_1\beta_{1,1}^{q-1}, \dots, a_1\beta_{1, n_1}^{q-1}
     \ ,\ \dots\ ,\
     a_\ell\beta_{\ell,1}^{q-1}, \dots, a_\ell\beta_{\ell, n_\ell}^{q-1}\}\ .
\end{align}
An $[n,k]_{q^m}$ \emph{linearized Reed-Solomon} code is defined as
\begin{align*}
  \cC_{\locSet,\colMulVec}^{\Frobaut}[n,k]
  \defeq \set*{\left. \colMulVec \star (f(\alpha))_{\alpha\in\locSet} \  \right| \ f(\SkewVar)\in\FrobSkewPolys,\ \deg f(\SkewVar)<k},
\end{align*}
where $\FrobSkewPolys$ is the skew polynomial ring with the Frobenius automorphism $\sigma:a\mapsto a^q$ of $\Fqm$,
the evaluation $f(\alpha) = \sum_{i=0}^{\deg f}f_iN_{i}(\alpha)$ is the remainder evaluation as in \cref{thm:skewEvaNform}, and $\star$ is the entry-wise multiplication of two vectors.
\end{definition}
The code locator set $\locSet$ of LRS codes has the following properties.
\begin{proposition}[{\cite[Theorem 4.5]{lam1988vandermonde}}]
  \label{prop:blockPind}
  Since $\betal{1},\dots,\betal{n_\indBlock}$ are linearly independent over $\Fq$, the set of code locators in the $\indBlock$-th block, denoted by $\locSet^{(\indBlock)}=\set*{a_\indBlock\beta_{\indBlock,1}^{q-1},\dots, a_\indBlock\beta_{\indBlock,n_\indBlock}^{q-1}}$, is P-independent (\cref{def:PindSet}).
\end{proposition}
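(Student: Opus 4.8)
The plan is to reduce the statement to the characterisation of P-independence inside a single $\Frobaut$-conjugacy class, namely the result of \cite{liu2017matroidal} (Lemma~1 and Theorem~10 quoted above): a set of the form $\set*{\priEle^j a_1^{q-1},\dots,\priEle^j a_n^{q-1}}\subseteq C_{\Frobaut}(\priEle^j)$ is P-independent exactly when $a_1,\dots,a_n$ are linearly independent over $\Fq$. The first step is to note that the block locator set lies in one conjugacy class. Recalling from \eqref{eq:Frob-conj} that the $\Frobaut$-conjugacy action is $a^c = ac^{q-1}$ and from \eqref{eq:Frob-conj-classes} that $C_{\Frobaut}(a)=\set*{ac^{q-1}\mid c\in\Fqm^*}$, each locator $a_\indBlock\beta_{\indBlock,i}^{q-1}$ equals $a_\indBlock^{\,\beta_{\indBlock,i}}$, so $\locSet^{(\indBlock)}\subseteq C_{\Frobaut}(a_\indBlock)$; here the $\beta_{\indBlock,i}$ are nonzero because they are $\Fq$-linearly independent. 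If $a_\indBlock=0$ then $C_{\Frobaut}(a_\indBlock)=\set*{0}$, the block collapses to the single locator $0$, and the claim is trivial, so we may assume $a_\indBlock\in\Fqm^*$.

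The second step uses the structure of $\Frobaut$-conjugacy classes, \cref{thm:conjugacyClasses}: after fixing a primitive element $\priEle$ of $\Fqm$ there is a $j\in[0,q-2]$ with $C_{\Frobaut}(a_\indBlock)=C_{\Frobaut}(\priEle^j)$, hence $a_\indBlock=\priEle^j c^{q-1}$ for some $c\in\Fqm^*$. Substituting,
\begin{align*}
  a_\indBlock\beta_{\indBlock,i}^{q-1}=\priEle^j c^{q-1}\beta_{\indBlock,i}^{q-1}=\priEle^j(c\beta_{\indBlock,i})^{q-1},\qquad i\in[n_\indBlock].
\end{align*}
Since multiplication by the nonzero scalar $c$ is an $\Fq$-linear automorphism of $\Fqm$, the elements $c\beta_{\indBlock,1},\dots,c\beta_{\indBlock,n_\indBlock}$ are $\Fq$-linearly independent if and only if $\beta_{\indBlock,1},\dots,\beta_{\indBlock,n_\indBlock}$ are, which holds by hypothesis. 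Applying the cited theorem with class representative $\priEle^j$ and elements $c\beta_{\indBlock,i}$ then gives that $\locSet^{(\indBlock)}=\set*{\priEle^j(c\beta_{\indBlock,i})^{q-1}}_{i\in[n_\indBlock]}$ is P-independent, as desired.

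A more self-contained alternative bypasses that theorem and works with the rank criterion \eqref{eq:PindRankVand}: using $N_s(\alpha)=\alpha^{(q^s-1)/(q-1)}$, the $(s,i)$ entry of $\bV^{\Frobaut}(\locSet^{(\indBlock)})$ factors as $N_s(a_\indBlock)\,\beta_{\indBlock,i}^{q^s-1}$, so $\bV^{\Frobaut}(\locSet^{(\indBlock)})$ is the $n_\indBlock\times n_\indBlock$ Moore matrix $(\beta_{\indBlock,i}^{q^s})$ pre- and post-multiplied by the invertible diagonal matrices $\diag(N_s(a_\indBlock))_s$ and $\diag(\beta_{\indBlock,i}^{-1})_i$ (invertible since $a_\indBlock,\beta_{\indBlock,i}\in\Fqm^*$); its rank is therefore $n_\indBlock=\abs{\locSet^{(\indBlock)}}$ precisely when the $\beta_{\indBlock,i}$ are $\Fq$-linearly independent. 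Either way, the only non-elementary ingredient — the conjugacy-class characterisation of P-independence, respectively the classical Moore-determinant criterion for full rank — is where the real content sits; the rest is bookkeeping with the $\Frobaut$-conjugacy action, so I expect that invocation to be the only genuine obstacle.
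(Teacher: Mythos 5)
Your proof is correct; note that the paper itself provides no proof of this proposition at all — it simply cites \cite[Theorem 4.5]{lam1988vandermonde} — so there is no internal argument to compare against, and either of your routes would serve as a self-contained justification. Your first route chains the $\Frobaut$-conjugacy structure (\cref{thm:conjugacyClasses}) through the single-class characterisation quoted from \cite{liu2017matroidal}; the scalar $c$ you introduce is precisely the bookkeeping needed to rewrite $a_\indBlock$ in the required form $\priEle^{j}c^{q-1}$ while preserving $\Fq$-linear independence of the $\beta$'s, and you correctly dispose of the degenerate $a_\indBlock = 0$ case, which the paper excludes explicitly later on (``$a_1,\dots,a_\ell$ are fixed, nonzero''). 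Your second, Moore-matrix route is arguably closer to the spirit of the text: the paper's own factoring of $\GLRSi{\indBlock}$ immediately after \eqref{eq:Gevaluation} exhibits $\bV^{\Frobaut}_k(\locSet^{(\indBlock)})\cdot\diag(\colMulVec^{(\indBlock)})$ as a diagonal matrix of truncated norms $N_i(a_\indBlock)$ times the Moore matrix in the $\betal{i}$'s, and your argument is the square $k=n_\indBlock$ specialisation of that same identity combined with the rank criterion \eqref{eq:PindRankVand} and the classical fact that a Moore matrix is nonsingular exactly when its arguments are $\Fq$-independent. Both approaches buy a concrete derivation in place of the paper's bare citation; the Moore-matrix one additionally ties the P-independence of $\locSet^{(\indBlock)}$ directly to the MRD property of the punctured generator block, which is the structural fact the paper actually exploits afterwards.
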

\begin{proposition}[{\cite[Theorem 2.11]{FnTsurvey-Umberto}}]
  \label{lem:codeLocatorsPind}
  The union of P-independent sets which are subsets of different conjugacy classes is P-independent. Hence, the code locator set $\cL$ given in \eqref{eq:LRSlocators} is P-independent.
\end{proposition}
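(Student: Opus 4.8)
The plan is to prove the general statement by induction on the number of P-independent sets, using \cref{thm:lam-prop-Pind-sets} as the one-step tool, and then to specialize it to the code-locator blocks of \cref{def:LRS}. First I would recall two facts already available: by \cref{thm:conjugacyClasses} the $\Frobaut$-conjugacy classes form a partition of $\Fqm$, and $\Frobaut$-conjugacy is an equivalence relation (the specialization of the properties listed after \cref{def:generalConjugate}); hence two elements lying in distinct $\Frobaut$-conjugacy classes are never $\Frobaut$-conjugate. This is the only structural input needed to discharge the cross-conjugacy hypothesis of \cref{thm:lam-prop-Pind-sets}.

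For the general claim, suppose $\Omega_1,\dots,\Omega_t\subseteq\Fqm$ are P-independent and contained in pairwise distinct conjugacy classes $C_{\Frobaut}(a_1),\dots,C_{\Frobaut}(a_t)$. I would induct on $t$, the case $t=1$ being trivial. For the step, put $\Omega'=\Omega_1\cup\cdots\cup\Omega_t$, which is P-independent by the induction hypothesis. Every element of $\Omega'$ lies in $C_{\Frobaut}(a_1)\cup\cdots\cup C_{\Frobaut}(a_t)$, a union of classes each disjoint from $C_{\Frobaut}(a_{t+1})\supseteq\Omega_{t+1}$; therefore no element of $\Omega'$ is $\Frobaut$-conjugate to any element of $\Omega_{t+1}$. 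Applying \cref{thm:lam-prop-Pind-sets} to the pair $\Omega'$, $\Omega_{t+1}$ shows $\Omega'\cup\Omega_{t+1}$ is P-independent, which closes the induction.

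For the ``hence'' part, I would note that by the identity $a^c = a c^{q-1}$ in \eqref{eq:Frob-conj}, each code locator $a_\indBlock\beta_{\indBlock,j}^{q-1}$ is the $\Frobaut$-conjugate of the block representative $a_\indBlock$ with respect to $\beta_{\indBlock,j}$, so the $\indBlock$-th block's locator set $\locSet^{(\indBlock)}$ is a subset of the class $C_{\Frobaut}(a_\indBlock)$. Since $a_1,\dots,a_\ell$ are taken from distinct $\Frobaut$-conjugacy classes in \cref{def:LRS}, the sets $\locSet^{(1)},\dots,\locSet^{(\ell)}$ lie in pairwise distinct conjugacy classes, and each is P-independent by \cref{prop:blockPind}. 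Applying the general claim with $t=\ell$ then gives that $\locSet=\bigcup_{\indBlock=1}^{\ell}\locSet^{(\indBlock)}$, the locator set of \eqref{eq:LRSlocators}, is P-independent.

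I expect the only genuine point of care is the inductive step: the cross-conjugacy hypothesis of \cref{thm:lam-prop-Pind-sets} must be verified between the entire partial union $\Omega'$ and the incoming set $\Omega_{t+1}$, not merely pairwise among the original $\Omega_i$, and this is exactly where the pairwise disjointness of conjugacy classes from \cref{thm:conjugacyClasses} is essential; the rest is bookkeeping. An alternative but essentially equivalent route would go through minimal polynomials: from \cref{thm:prop-minimal-skew-poly}, $\Omega_i\cap\Omega_j=\varnothing$ for $i\neq j$ forces $\deg f_{\Omega_1\cup\cdots\cup\Omega_t}=\sum_i \deg f_{\Omega_i}=\sum_i|\Omega_i|=|\Omega_1\cup\cdots\cup\Omega_t|$; I would still prefer the direct appeal to \cref{thm:lam-prop-Pind-sets} as it is shorter and avoids re-deriving the degree formula.
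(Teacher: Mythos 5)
Your main argument is correct, and since the paper merely cites~\cite[Theorem 2.11]{FnTsurvey-Umberto} for this proposition rather than proving it, there is no in-paper proof to compare against; your induction via \cref{thm:lam-prop-Pind-sets}, with the cross-conjugacy hypothesis discharged by the disjointness of $\Frobaut$-conjugacy classes (\cref{thm:conjugacyClasses}), is exactly the right way to fill the gap from the tools already available in the chapter, and you correctly identified that the conjugacy check must be between the whole partial union $\Omega'$ and $\Omega_{t+1}$, not merely pairwise among the original blocks. The specialization to the LRS locator set, using $a_\indBlock \beta_{\indBlock,j}^{q-1}=a_\indBlock^{\beta_{\indBlock,j}}\in C_\Frobaut(a_\indBlock)$ together with \cref{prop:blockPind}, is also sound.

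One genuine flaw is in the alternative route you sketch at the end, and it is worth flagging even though you do not rely on it. The identity $\gcrd(f_{\Omega_1},f_{\Omega_2})=f_{\Omega_1\cap\Omega_2}$ and the accompanying degree formula in \cref{thm:prop-minimal-skew-poly} are \emph{not} unconditionally true for skew polynomials: if $\Omega_1=\set*{1}$ and $\Omega_2=\set*{\alpha,\alpha+1}$ in $\F_4[\SkewVar;\Frobaut]$, then $f_{\Omega_2}=\SkewVar^2+1$ vanishes at $1$ as well, so $\gcrd(\SkewVar-1,f_{\Omega_2})=\SkewVar-1$ has degree $1$, while $\Omega_1\cap\Omega_2=\varnothing$ gives $f_{\Omega_1\cap\Omega_2}=1$ of degree $0$, and the degree identity reads $2\ne 1+2-0$. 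These facts hold only under the additional hypothesis that $\Omega_1\cup\Omega_2$ is P-independent, as the paper itself records in property \ref{p:gcrd} of the appendix. Consequently the ``alternative'' derivation would be circular: it needs the very P-independence it is trying to establish. Your instinct to prefer the direct appeal to \cref{thm:lam-prop-Pind-sets} is therefore not just a matter of brevity but of correctness.
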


A generator matrix of the LRS code in \cref{def:LRS} is given by
\begin{align}\label{eq:Gevaluation}
  \GLRS =&
             \left( \quad \GLRSi{1} \quad  \dots    \quad \GLRSi{\ell}\quad \right)
           \in\Fqm^{k\times n}
\end{align}
where for each $\indBlock\in[\ell]$,
\begingroup
\setlength\arraycolsep{3pt}
\allowdisplaybreaks
\begin{align}
  &\GLRSi{\indBlock}
   = \bV^\Frobaut_{k}(\locSet^{(\indBlock)})  \cdot \diag(\colMulVec^{(\indBlock)}) \nonumber \\
  =& \begin{pmatrix}
    1 & \dots & 1 \\
    N_1(a_\indBlock\betal{1}^{q-1})& \dots & N_1(a_\indBlock\betal{n_\indBlock}^{q-1})\\
    \vdots & \ddots & \vdots \\
    N_{k-1}(a_\indBlock\betal{1}^{q-1})& \dots & N_{k-1}(a_\indBlock\betal{n_l}^{q-1})
  \end{pmatrix}\cdot
  \begin{pmatrix}
    \betal{1} &&\\
    &\ddots & \\
    && \betal{n_\indBlock}
  \end{pmatrix}\nonumber \\
     =&\begin{pmatrix}
       1 &&\\
       &\ddots & \\
       && N_{k-1}(a_\indBlock)
     \end{pmatrix}\cdot
           \begin{pmatrix}
             \betal{1} & \betal{2} & \dots & \betal{n_\indBlock}\\
             \betal{1}^{q} & \betal{2}^{q} & \dots & \betal{n_\indBlock}^{q}\\
               \vdots & \vdots & \ddots & \vdots \\
             \betal{1}^{q^{k-1}} & \betal{2}^{q^{k-1}} & \dots & \betal{n_\indBlock}^{q^{k-1}}
           \end{pmatrix}\nonumber
     \ ,
\end{align}
\endgroup
where $\locSet^{(\indBlock)}=\set*{a_\indBlock\beta_{\indBlock,1}^{q-1},\dots, a_\indBlock\beta_{\indBlock,n_\indBlock}^{q-1}}$ and $\colMulVec^{(\indBlock)} =\parenv*{\betal{1},\dots,\betal{n_\indBlock}}$. The last equality
holds because for $\Frobaut(a)=a^q$, we have that $N_i(\betalt^{q-1})\cdot \betalt=\left(\betalt^{q-1}\right)^{(q^i-1)/(q-1)}\cdot \betalt= \betalt^{q^i}$.

LRS codes $\cC=\myspan{\GLRS}\subseteq\Fqm^n$ are MSRD codes \cite[Theorem 2.20]{FnTsurvey-Umberto}
and the punctured codes $\cC_{\indBlock}=\myspan{\GLRSi{\indBlock}}\subseteq \Fqm^{n_{\indBlock}}$ at any block $\indBlock=1,\dots,\ell$ are MRD codes \cite[Section III.C]{liu2015construction}.

\begin{example}[{$[12,3]$} LRS code over $\F_{4^4}$with $3$ blocks]\label{ex:GLRS}
  Let $q=4,m=4, \ell=3, n_1=n_2=n_3=4, k=3$. Denote by $\priEle$ a primitive element of $\F_{4^4}$. We choose the block representatives to be $\ba=(1, \priEle, \priEle^2)$ and the basis of each block to be $\bb_1=(1, \priEle, \priEle^2, \priEle^3), \bb_2=(\priEle, \priEle^2, \priEle^3, \priEle^4), \bb_3 = (\priEle^2, \priEle^3, \priEle^4,\priEle^5)$. Then the code locators are
  \begin{align*}
    \locSet = \{\underbrace{1, \priEle^3, \priEle^6, \priEle^9}_{\text{first block}},
    \underbrace{\priEle^4, \priEle^7,\priEle^{10}, \priEle^{13}}_{\text{second block}},
    \underbrace{\priEle^{8}, \priEle^{11}, \priEle^{14}, \priEle^{17}}_{\text{third block}}\}\ .
  \end{align*}
  The generator matrix of this LRS code is
  \begin{align*}
    \GLRS=\left(
    \begin{array}{cccc;{2pt/2pt}cccc;{2pt/2pt}cccc}
      1 & \priEle& \priEle^2& \priEle^3& \priEle& \priEle^2& \priEle^3& \priEle^4& \priEle^2& \priEle^3& \priEle^4& \priEle^5\\
      1 & \priEle^{4}& \priEle^{8}& \priEle^{12}& \priEle^5& \priEle^9& \priEle^{13}& \priEle^{17}& \priEle^{10}& \priEle^{14}& \priEle^{18}& \priEle^{22}\\
      1 & \priEle^{16}& \priEle^{32}& \priEle^{48}& \priEle^{21}& \priEle^{37}& \priEle^{53}& \priEle^{69}& \priEle^{42}& \priEle^{59}& \priEle^{74}& \priEle^{90}
    \end{array}\right)\ .
  \end{align*}
\end{example}
In \cite[Definition~31]{martinez2018skew}, LRS codes are defined using linear operator evaluations with respect to the block representatives $\ba=(a_1,\dots,a_\ell)$ and block basis $\bb_{\indBlock} = (\betal{1},\dots,\betal{n_{\indBlock}})$. It was shown in \cite[Theorem 2.18]{FnTsurvey-Umberto} that these two definitions are equivalent.


\section{GM-MSRD Condition}
\label{sec:support-constrained-LRS}
Motivated by the practical interest in support-constrained codes and the theoretical research on MSRD codes (in particular, LRS codes), we investigate the existence of support-constrained MSRD codes in this section and prove the following result.
\begin{theorem}[GM-MSRD condition]
  \label{thm:fieldSizeFromGab}
  Let $\ell,n$ be positive integers and $(n_1, \dots, n_\ell)$ be an ordered partition of $n$.
  Given $\zeroSet_1,\dots,\zeroSet_k\subset [n]$,
  for any prime power $q\geq\ell+1$ and integer $m\geq \max_{l\in[\ell]}\{k-1+\log_qk, n_l\}$, there exists an $[n,k]_{q^m}$ linearized Reed-Solomon code with $\ell$ blocks, and each block of length $n_l$, $l\in[\ell]$ such that it has a generator matrix $\bG\in\Fqm^{k\times n}$ fulfilling the support constraints
  $G_{ij}=0, \ \forall i\in[k],\ \forall j \in \zeroSet_i$,
  if and only if, for any nonempty $\Omega\subseteq [k]$,
  \begin{align*}
    \left|\bigcap_{i\in\Omega}\zeroSet_i\right|+|\Omega|\leq k\ . \tag{\textrm{Recall }\eqref{eq:GM-MDScondition}}
  \end{align*}
\end{theorem}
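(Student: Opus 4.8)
The plan is to reduce the GM-MSRD statement to the already-established GM-MRD condition (Yildiz--Hassibi) by working block-by-block and then gluing the blocks together using the structure of $\Frobaut$-conjugacy classes and the P-independence results collected in \cref{sec:Frob-zero-deri}. First I would establish the easy direction: if an MSRD code has a generator matrix $\bG$ fulfilling the support constraints $G_{ij}=0$, then restricting $\bG$ to a suitable set of columns and applying the fact that any $k$ columns of the generator matrix of an MSRD code have rank $k$ (which follows from $\dSR{\bn_\ell}(\cC)=n-k+1$ together with \cref{lem:sum-rank-Hamming}, giving $\dH(\cC)\ge n-k+1$, i.e.\ the code is also MDS), one gets the necessity of \eqref{eq:GM-MDScondition} by the same counting argument as in \cref{thm:yildizMDScondition}: if some $\Omega$ violated the inequality, the rows $\set*{\bg_i : i\in\Omega}$ would all vanish on the $\ge k-|\Omega|+1$ common zero columns, forcing a nonzero codeword of weight $< n-k+1$.

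For the hard (sufficiency) direction, I would proceed constructively. Fix $q\ge \ell+1$ and $m\ge \max_{l\in[\ell]}\{k-1+\log_q k, n_l\}$. The key observation is the explicit factored form of $\GLRSi{\indBlock}$ derived in \cref{sec:LRScodes}: each block's generator matrix equals $\diag(1,N_1(a_\indBlock),\dots,N_{k-1}(a_\indBlock))$ times a Moore-type matrix $(\betal{j}^{q^i})_{i,j}$ in the block's column multipliers $\bb_\indBlock=(\betal{1},\dots,\betal{n_\indBlock})$. The left diagonal factor is invertible and does not affect which entries of the generator matrix can be made zero; the combinatorial content lives entirely in choosing the Moore matrices. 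So the task becomes: given the constraints $\zeroSet_1,\dots,\zeroSet_k$ partitioned across the $\ell$ blocks, choose vectors $\bb_1,\dots,\bb_\ell$ over $\Fqm$ (each $\bb_\indBlock$ a tuple of $n_\indBlock$ linearly independent elements, and with the blocks landing in distinct conjugacy classes via distinct representatives $a_1,\dots,a_\ell$ — possible since $q\ge\ell+1$ supplies at least $\ell$ nontrivial classes by \cref{thm:conjugacyClasses}) and a polynomial basis such that the resulting row combinations produce the prescribed zero pattern. I would invoke the GM-MRD machinery: the polynomial-interpolation argument of Yildiz--Hassibi shows that for each $i\in[k]$ one can take the $i$-th basis polynomial to be a product of $|\zeroSet_i|$ linear skew-factors $\SkewVar - \alpha$ chosen so that it vanishes exactly at the code locators indexed by $\zeroSet_i$; the condition \eqref{eq:GM-MDScondition} is precisely what guarantees these $k$ polynomials have degrees $<k$ and are $\Fqm$-linearly independent (equivalently, their leading-monomial structure is triangular). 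The sum-rank distance being maximal then follows because the whole locator set $\locSet$ is P-independent by \cref{lem:codeLocatorsPind}, so the full $\bV^\Frobaut_k(\locSet)$ has all $k\times k$ minors of the right form, and LRS codes with P-independent locators are MSRD by \cref{def:LRS} and \cite[Theorem~2.20]{FnTsurvey-Umberto}.

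The main obstacle I anticipate is the interaction between the \emph{global} linear-independence requirement on the $k$ chosen basis polynomials (needed so that $\bG$ genuinely has rank $k$ and generates a $k$-dimensional code) and the \emph{per-block} field-size bookkeeping: the GM-MRD proof controls degrees and independence within a single Moore/Vandermonde block, but here a single support constraint $\zeroSet_i$ may force zeros spread across several blocks simultaneously, so the skew-polynomial $f_i$ must vanish at locators drawn from different conjugacy classes. I would handle this by using the multiplicativity of skew-polynomial evaluation (\cref{thm:skewEvaNform} and the product-evaluation formula) together with \cref{thm:lam-prop-Pind-sets}: a minimal skew polynomial whose root set is a union of P-independent subsets of distinct conjugacy classes has degree equal to the total size, so "vanishing on $\zeroSet_i$" costs exactly $|\zeroSet_i|$ in degree regardless of how $\zeroSet_i$ splits across blocks — which is what makes the counting \eqref{eq:GM-MDScondition} the correct condition. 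The field-size bound $m\ge k-1+\log_q k$ then enters exactly as in the Gabidulin case, ensuring enough $\Fq$-linearly independent elements exist in each $\Fqm$-block to realize arbitrary such factorizations while keeping the blocks' column multipliers independent; the extra requirement $m\ge n_l$ is just the standard necessary condition for an MRD code of inner length $n_l$ over $\Fqm$ to exist.
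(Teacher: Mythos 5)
Your necessity argument is correct and matches the paper: by \cref{lem:sum-rank-Hamming}, MSRD implies MDS, so the GM-MDS counting argument applies unchanged. Your high-level sufficiency plan is also aligned with the paper's framework: one writes $\bG = \bT\cdot\GLRS$, interprets each row of $\bT$ as the coefficient vector of a skew polynomial $f_i$ that must vanish on the locators indexed by $\zeroSet_i$, takes $f_i$ to be the minimal polynomial ($\lclm$ of skew-linear factors) of the corresponding subset of the P-independent locator set $\locSet$, and uses the union property of P-independent sets across distinct conjugacy classes (\cref{thm:lam-prop-Pind-sets}, \cref{lem:codeLocatorsPind}) to get $\deg f_i = |\zeroSet_i| \le k-1$ regardless of how $\zeroSet_i$ spreads across blocks. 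That part of your plan is sound and is essentially what the paper does.

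The gap is the step where you write that the condition \eqref{eq:GM-MDScondition} "is precisely what guarantees these $k$ polynomials are $\Fqm$-linearly independent" and that this follows from "the GM-MRD machinery." It does not follow by citation. The linear independence of $f_1,\dots,f_k$ is equivalent to $\det\bT \ne 0$, and $\bT$ depends on the concrete choice of column multipliers $\betalt$; for some choices it can vanish. The paper's actual move is to regard the $\betalt$'s as \emph{indeterminates}, work over the multivariate polynomial ring $R_n = \Fqm[\beta_{1,1},\dots,\beta_{\ell,n_\ell}]$, and prove (\cref{claim}, via the much more general \cref{thm:sufficientCond}) that $\det\bT$ is a \emph{nonzero polynomial} in $R_n$ exactly when \eqref{eq:GM-MDScondition} holds. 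This is the sum-rank analogue of Yildiz--Hassibi's Theorem 3.A, but it is a genuinely new result: the skew ring is now $R_n[\SkewVar;\Frobaut]$ rather than $\Fqm[\SkewVar;\Frobaut]$, the relevant polynomials $\fZt(\zeroSet,\tau)$ carry extra $\SkewVar^\tau$-factors that appear when variables are set to $0$ during the induction, and the proof requires a fresh case-heavy triple induction on $(k,s,n)$ (with new ring-theoretic properties \textbf{P1}--\textbf{P5}). Your "invoke the GM-MRD machinery" skips all of this, and the single-block arguments you cite do not combine to give it.

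A second, smaller omission: you claim the field-size bound $m\ge k-1+\log_q k$ "enters exactly as in the Gabidulin case, ensuring enough $\Fq$-linearly independent elements exist." That is not how the bound arises. Once $\det\bT$ is known to be a nonzero polynomial in the $\betalt$'s, the paper multiplies it by the (nonzero) Moore determinants $P_{\bM_l}$ enforcing per-block linear independence of the column multipliers, computes the per-variable degree of the product $P$ (in particular $\deg_{\betalt} P_{\bT} \le (k-1)(q-1)q^{k-2}$, derived in an appendix), and applies the Combinatorial Nullstellensatz (\cref{lem:nullstellensatz}) to conclude a valid instantiation exists once $q^m$ exceeds that degree. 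Your proposal never mentions the Nullstellensatz, and without it you have no mechanism to go from "generically nonsingular" to "a concrete choice exists over $\Fqm$", nor to extract the stated bound on $m$. The requirement $m\ge n_l$ you do identify correctly; the $q\ge\ell+1$ condition is also correctly attributed to \cref{thm:conjugacyClasses}.
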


For the necessity, since the sum-rank weight of any vector in $\Fqm^n$ is at most its Hamming weight by \cref{lem:sum-rank-Hamming}, an MSRD code is necessarily an MDS code.
Therefore, \eqref{eq:GM-MDScondition} is also a necessary condition for $\bG$ to generate an MSRD code.

Now we proceed to show the sufficiency of \eqref{eq:GM-MDScondition} for MSRD codes, in particular, via support-constrained LRS codes with sufficiently large alphabet size.
Note that for any $\Omega = \{i\}$, we have $|\zeroSet_i|\leq k-1$. One can add elements from $[n]$ to each $\zeroSet_i$ until $|\zeroSet_i|$ reaches $k-1$ while preserving \eqref{eq:GM-MDScondition} \cite[Corollary 3]{yildiz2019gabidulin}. This operation will only put more zero constraints on $\bG$ but not remove any. This means that the code we design under the new $\zeroSet_i$'s of size $k-1$ will also satisfy the original constraints. Therefore, without loss of generality, along with \eqref{eq:GM-MDScondition}, we can assume that
\begin{align}
  \label{eq:maxRowZeros}
    |\zeroSet_i|= k-1,\ \forall i\in[k]\ .
\end{align}

Let $\GLRS$ be a generator matrix of an LRS code as in \eqref{eq:Gevaluation}. Given the following matrix
\begin{align}
  \label{eq:defTmat}
  \bG = \bT\cdot \GLRS\ ,
\end{align}
if $\bT\in\Fqm^{k\times k}$ has full rank, then $\bG$ is another generator matrix of the LRS code.
Recall that $a_1,\dots, a_\ell\in\Fqm$ are the block representatives, $\beta_{1,1},\dots,\beta_{1,n_1}, \dots, \beta_{\ell,1},\dots,\beta_{\ell,n_\ell}\in\Fqm$ are the column multipliers, and $\locSet = \{\alphas{n}\}$ is the code locator set as defined in \eqref{eq:LRSlocators}.

Let $n_0=0$. Define the following bijective map between the indices,
\begin{equation}
  \label{eq:indMap}
  \begin{split}
  \indMap:  \bbN\times \bbN & \to \bbN\\
    l,t & \mapsto j=t+\sum_{r=0}^{l-1}n_r\ .
  \end{split}
\end{equation}
Then $\alpha_j = a_l\beta^{q-1}_{t}$ for $j=\indMap(l,t)$.
The inverse map $\indMap^{-1}: \bbN \to \bbN\times \bbN$ is
$j \mapsto (l, t)$,
where $l=\max\set*{\left. i\in[\ell] \ \right|\ \sum_{r=0}^i n_r\leq j}$ and $t=j-\sum_{r=0}^{l-1}n_r$.

For all $i\in[k]$, define the skew polynomials
\begin{align}\label{eq:skewPolyEachRow}
  \rowpoly_i(\SkewVar) \defeq \sum_{j=1}^{k} T_{i,j}\SkewVar^{j-1}\ \in\FrobSkewPolys\ ,
\end{align}
where $T_{i,j+1}$ is the entry at $i$-th ($i\in[k]$) row, $j$-th ($j\in[k]$) column in $\bT$.
The entries of $\bG$ are 
$G_{ij}=\rowpoly_i(a_l\betalt^{q-1})\betalt,i\in[k],j=\indMap(l,t)\in[n]$. Then, the zero constraints in \eqref{eq:zeroConstraints} become the following root constraints on $\rowpoly_i$'s:
\begin{align}\label{eq:rootConstraints}
   \rowpoly_i(a_l\betalt^{q-1})=0, \quad\forall i \in[k],\ \forall j=\indMap(l,t)\in \zeroSet_i\ .
\end{align}

For brevity, denote by
\begin{align}
  \label{eq:rootSet}
  \rootSet_i \defeq\{\alpha_j=a_l\betalt^{q-1}\ |\ j=\indMap(l,t)\in \zeroSet_i\}
\end{align}
the corresponding set of code locators to the zero set $\zeroSet_i$.
Since $\locSet$ is P-independent, any subset $\rootSet_i\subset\locSet$ is also P-independent \cite[Theorem 23]{lam1986general}.
Then the minimal polynomial $\rowpoly_{\rootSet_i}(\SkewVar)$ of $\rootSet_i$ is of degree $|\rootSet_i|=k-1$. By the properties of $f_i(\SkewVar)$ in \eqref{eq:skewPolyEachRow} and \eqref{eq:rootConstraints}, it can be seen that $\rowpoly_{i}(\SkewVar)=\rowpoly_{\rootSet_i}(\SkewVar)$ by setting $T_{i, k}=1$ (as minimal polynomials are monic polynomials).
By the computation of the minimal polynomial in \eqref{eq:lclm}, the skew polynomials $\rowpoly_i(\SkewVar)$ fulfilling \eqref{eq:rootConstraints} can be written as
\begin{align}\label{eq:lclmMinPoly}
  \rowpoly_i(\SkewVar) = \rowpoly_{\rootSet_i}(\SkewVar) = \lclm_{\alpha\in\rootSet_i} \set*{\SkewVar-\alpha} \ .
\end{align}
Since all $\rootSet_i\subset \cL, i\in[k]$ are P-independent, it follows from \cref{lem:noMoreZero} that $\rowpoly_i(\alpha)\neq 0$, for all $\alpha\in\cL\setminus\rootSet_i$. Hence, there is no other zero entry in $\bG$ than the required positions in $\zeroSet_i$'s.
Moreover, with $k-1$ P-independent roots of $\rowpoly_i(\SkewVar)$ and setting $T_{i, k}=1$,
the coefficients $T_{i,j}$ of $\rowpoly_i(\SkewVar)$ in \eqref{eq:skewPolyEachRow} are uniquely determined in terms of $a_1\beta_{1,1}^{q-1}, \dots,a_\ell\beta_{\ell,n_\ell}^{q-1}$.

In the following, we assume that $a_1,\dots,a_\ell$ are fixed, nonzero, and from distinct $\Frobaut$-conjugacy classes. We see $\betalt$'s as variables
of the following commutative multivariate polynomial ring
\begin{align}
\label{eq:multiVarRing}
    R_{\numMulPolyVar}\defeq &\Fqm[\beta_{1,1},\dots, \beta_{\ell,n_\ell}]\ ,
\end{align}
and the coefficients $T_{i,j}$ of $\rowpoly_i(\SkewVar)$ are polynomials in $R_{\numMulPolyVar}$.
Then, the problem of finding $\betalt$'s such that $\bG$ generates the same LRS code as $\GLRS$ becomes finding $\betalt$'s
such that
\begin{align}
  \label{eq:defPl}
  P(\beta_{1,1},\dots, \beta_{\ell,n_\ell})\defeq P_{\bT}(\beta_{1,1},\dots, \beta_{\ell,n_\ell})\cdot \prod_{l=1}^{\ell} P_{\bM_l}(\beta_{l,1},\dots, \beta_{l,n_l}) \neq 0\ ,
\end{align}
where $P_{\bT}$ is the determinant of $\bT$ in \eqref{eq:defTmat}, whose entries are the coefficients of $f_i$'s and therefore determined by the roots of $f_i$'s, and
\begin{align*}
  P_{\bM_l}
  \defeq & \det
          \begin{pmatrix}
             \betal{1} & \betal{2} & \dots & \betal{n_l}\\
             \betal{1}^{q} & \betal{2}^{q} & \dots & \betal{n_l}^{q}\\
              \vdots & \vdots & \ddots & \vdots \\
             \betal{1}^{q^{n_l-1}} & \betal{2}^{q^{n_l-1}} & \dots & \betal{n_l}^{q^{n_l-1}}
          \end{pmatrix}\ . 
\end{align*}
Since the coefficient of the monomial $\prod_{i=1}^{n_l} \beta_{l,i}^{q^{i-1}}$ in  $P_{\bM_l}$ is $1$, $P_{\bM_l}$ is a nonzero polynomial in $R_{\numMulPolyVar}$.

\begin{claim}\label{claim}
If the condition in \eqref{eq:GM-MDScondition} is satisfied, then $P_{\bT}$ is a nonzero polynomial in $R_\numMulPolyVar$.
\end{claim}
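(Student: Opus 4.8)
Here is a plan for proving \cref{claim}.

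\medskip

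The strategy is to show that the polynomial $P_{\bT}=\det\bT\in R_{\numMulPolyVar}$ does not vanish identically by producing a \emph{single} assignment of the variables $\beta_{l,t}$, allowed to lie in a finite extension of $\Fqm$, at which $\det\bT\neq 0$; that assignment will be manufactured by reducing to the GM--MRD condition \cite{yildiz2019gabidulin}. Since the coefficients $T_{i,j}$ lie in $R_{\numMulPolyVar}$ and depend, besides the $\beta_{l,t}$, only on the fixed block representatives $a_1,\dots,a_\ell\in\Fqm$, a nonvanishing specialization over any extension field of $\Fqm$ suffices to conclude $P_{\bT}\neq 0$. We may also assume \eqref{eq:maxRowZeros}, so that $|\zeroSet_i|=k-1$ and each $\rowpoly_i=\rowpoly_{\rootSet_i}$ is monic of degree $k-1$, where $\rootSet_i$ is as in \eqref{eq:rootSet}.

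\medskip

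First I would pass to $\mathbb{E}=\mathbb{F}_{q^M}$ for a large multiple $M$ of $m(q-1)$. The factor $q-1$ forces every element of $\Fqm^*$ to be a $(q-1)$-th power in $\mathbb{E}^*$, and taking $M$ large also gives $M\geq\max\{n,\,k-1+\log_q k\}$, which is exactly the field-size hypothesis of the GM--MRD theorem. Fix $c_l\in\mathbb{E}^*$ with $c_l^{q-1}=a_l$ for each $l\in[\ell]$. Then $N_r(a_l)=a_l^{(q^r-1)/(q-1)}=c_l^{q^r-1}$ (cf.\ \cref{thm:skewEvaNform}), so from the simplified form of $\GLRSi{\indBlock}$ displayed below \eqref{eq:Gevaluation} one checks directly that over $\mathbb{E}$ the generator matrix \eqref{eq:Gevaluation} factors as $\GLRS=\bM'\bD'$, where $\bM'\in\mathbb{E}^{k\times n}$ is the $k\times n$ Moore matrix whose column indexed by $j=\indMap(l,t)$ is $\bigl(\gamma_{l,t}^{q^r}\bigr)_{r=0}^{k-1}$ with $\gamma_{l,t}\defeq c_l\beta_{l,t}$, and $\bD'$ is the invertible diagonal matrix carrying $c_l^{-1}$ in the positions of block $l$.

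\medskip

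Next, choose the $\gamma_{l,t}$ to be any $n$ elements of $\mathbb{E}$ that are linearly independent over $\Fq$ (possible as $M\geq n$), and set $\beta_{l,t}\defeq\gamma_{l,t}/c_l$; scaling by the nonzero constant $1/c_l$ preserves $\Fq$-linear independence inside each block, so the sets $\rootSet_i$ stay P-independent and the $\rowpoly_i$ keep degree $k-1$, i.e.\ $\bT$ is a genuine matrix over $\mathbb{E}$ at this point. Since $\zeroSet_1,\dots,\zeroSet_k$ satisfy \eqref{eq:GM-MDScondition} by hypothesis, the GM--MRD theorem \cite{yildiz2019gabidulin} applied to the Gabidulin code with locators $\gamma_{l,t}$ yields a full-rank $\bT^*\in\mathbb{E}^{k\times k}$ whose $(i,j)$ entry of $\bT^*\bM'$ vanishes for every $j\in\zeroSet_i$; hence $\bT^*\GLRS=\bT^*\bM'\bD'$ vanishes on the same support. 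Now I would identify $\bT^*$ with $\bT$ up to a diagonal invertible factor: row $i$ of $\bT^*$ is the coefficient vector of a $q$-linearized polynomial of $q$-degree $\leq k-1$ vanishing on the $(k-1)$-dimensional space $\myspan{\gamma_j : j\in\zeroSet_i}_q$, hence (by the root count for linearized polynomials) a nonzero scalar multiple of the unique monic such polynomial; and via the identity $\rowpoly_i(a_l\beta_{l,t}^{q-1})\,\beta_{l,t}=c_l^{-1}\sum_{r}T_{i,r+1}\gamma_{l,t}^{q^r}$, the $i$-th row of $\bT$ (the coefficient vector of the monic $\rowpoly_i=\rowpoly_{\rootSet_i}$) is precisely that monic polynomial's coefficient vector. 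Thus $\bT=\bD^*\bT^*$ for some invertible diagonal $\bD^*$, so $\det\bT=\det(\bD^*)\det(\bT^*)\neq 0$ at the chosen $\beta_{l,t}$, and therefore $P_{\bT}\neq 0$ in $R_{\numMulPolyVar}$.

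\medskip

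The step I expect to be the main obstacle is the identification in the last paragraph: arguing carefully that the full-rank matrix handed back by GM--MRD is a diagonal rescaling of the matrix $\bT$ of the claim. This rests on the uniqueness (up to a nonzero scalar) of a skew polynomial of degree $<k$ vanishing on a P-independent set of size $k-1$ --- equivalently of a $q$-linearized polynomial of $q$-degree $<k$ vanishing on a $(k-1)$-dimensional $\Fq$-subspace --- combined with the bookkeeping required to transport this through the substitution $\gamma_{l,t}=c_l\beta_{l,t}$ and the factorization $\GLRS=\bM'\bD'$. By contrast, the purely arithmetic point that $M$ can be chosen to meet both the divisibility $m(q-1)\mid M$ and the GM--MRD field bound, and the direct verification of the factorization, are routine.
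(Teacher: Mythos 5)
Your proposal is correct in substance, but it takes a genuinely different route from the paper. The paper derives \cref{claim} as a corollary of the self-contained \cref{thm:sufficientCond}, an equivalence proved by induction on the triple $(k,\numRows,n)$ in lexicographic order; the generality ``$\tfzt_i\ge 0$'' (rather than just $\tfzt_i=0$, which is all that \cref{claim} needs) is introduced precisely so that the inductive cases close (reducing $k$, then $\numRows$, then $n$). You instead reduce \cref{claim} to the GM--MRD theorem \cite{yildiz2019gabidulin} by a specialization argument: adjoining $(q-1)$-th roots $c_l$ of the block representatives $a_l$ over $\F_{q^M}$ with $m(q-1)\mid M$ collapses the $\ell$ conjugacy classes into one, the generator matrix factors as a Moore matrix in $\gamma_{l,t}=c_l\beta_{l,t}$ times a constant diagonal, and $\bT$ evaluated at the point $\beta^\ast_{l,t}=\gamma_j/c_l$ is a diagonal rescaling of the $\bT^\ast$ furnished by GM--MRD because both rows encode the unique (up to scalar) $q$-linearized polynomial of $q$-degree at most $k-1$ vanishing on the $(k-1)$-dimensional root space $\myspan{\gamma_j : j\in\zeroSet_i}_{\Fq}$. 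Your route is shorter if GM--MRD is taken as given; the paper's route is self-contained and actually \emph{generalizes} GM--MRD (recovering it at $\ell=1$) rather than reducing to it.

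One local correction you should make: as written, you ``choose the $\gamma_{l,t}$ to be any $n$ elements of $\F_{q^M}$ linearly independent over $\Fq$'' and then ``apply GM--MRD to the Gabidulin code with locators $\gamma_{l,t}$.'' The GM--MRD theorem asserts \emph{existence} of a Gabidulin code --- that is, of a set of $\Fq$-independent locators \emph{together with} a constrained full-rank $\bT^\ast$ --- not the existence of such a $\bT^\ast$ for an arbitrary pre-chosen tuple of $\Fq$-independent locators. The fix is immediate: take the locators $\gamma_j$ that GM--MRD produces, then set $\beta^\ast_{l,t}=\gamma_{\indMap(l,t)}/c_l$. Everything downstream (the factorization $\GLRS=\bM'\bD'$, the P-independence of $\rootSet_i(\beta^\ast)=\{\gamma_j^{\,q-1} : j\in\zeroSet_i\}$ since the $\gamma_j$ are globally $\Fq$-independent, and the diagonal-rescaling identification $\bT^\ast=\bD^\ast\,\bT(\beta^\ast)$) goes through unchanged, giving $\det\bT(\beta^\ast)\ne 0$ and hence $P_{\bT}\ne 0$ in $R_\numMulPolyVar$.
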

With \cref{claim}, we can conclude that $P(\beta_{1,1},\dots, \beta_{\ell,n_\ell})$ is a nonzero polynomial in $R_{\numMulPolyVar}$.
We now proceed to prove \cref{thm:fieldSizeFromGab} assuming that \cref{claim} is true.
A more general statement (\cref{thm:sufficientCond}) of the claim is given and proven in \cref{sec:ind_proof}.

For a fixed $l\in[\ell],t\in[n_\ell]$, the $\betalt$-degree of $P_{\bM_l}$ is $\deg_{\betalt}P_{\bM_l} = q^{n_l-1}$ \cite[Lemma 3.51]{lidl1997finite}. Moreover, $\deg_{\betalt}P_{\bT} \leq (k-1)(q-1)\cdot q^{k-2}$, which is proven in \cref{apendix:fieldSizeSkewPoly}.
Then, the $\betalt$-degree of $P(\beta_{1,1},\dots, \beta_{\ell,n_\ell})$ in \eqref{eq:defPl} is
\begin{align*}
  \deg_{\betalt}P &\leq (k-1)(q-1)\cdot q^{k-2}+q^{n_l-1}\ .
\end{align*}
\begin{proof}[Proof of \cref{thm:fieldSizeFromGab}]
  \cref{claim} implies that $P(\beta_{1,1},\dots, \beta_{\ell,n_\ell})$ is a nonzero polynomial.
  By the Combinatorial Nullstellensatz \cite[Theorem 1.2]{alon1999combinatorial}(see also in \cref{lem:nullstellensatz}),
  there exist $\evapt{\beta}_{1,1}, \dots, \evapt{\beta}_{\ell, n_\ell}$ in $\Fqm$ such that
  \begin{align*}
    P(\evapt{\beta}_{1,1}, \dots, \evapt{\beta}_{\ell, n_\ell})\neq 0\ ,
  \end{align*}
  if
  \begin{align}
    q^m &>\max_{l\in[\ell], t\in[n_l]}\{\deg_{\betalt}P\}  \nonumber\\
        &=\max_{l\in[\ell]}\{ (k-1)(q-1)\cdot q^{k-2}+q^{n_l-1}\}\ .\label{eq:condition_on_qm_2}
  \end{align}
  If $m\geq \max_{l\in[\ell]}\{k-1+\log_qk\ ,\ n_l\}$, we have
  \begin{align*}
    q^m =& (q-1)q^{m-1}+q^{m-1}\\
    \geq& \max_{l\in[\ell]}\{k(q-1)\cdot q^{k-2} + q^{n_l-1}\} >\eqref{eq:condition_on_qm_2}\ .
  \end{align*}
  To have $a_1,\dots,a_\ell$ from different non-trivial $\Frobaut$-conjugacy class of $\Fqm$, by \cref{thm:conjugacyClasses}, we require $q-1\geq \ell$.
\end{proof}
\begin{remark}
Consider the extreme cases:
\begin{enumerate}
\item
For $\ell=1$, the sum-rank metric is the rank metric and LRS codes are Gabidulin codes.
In this case, the field size in \cref{thm:fieldSizeFromGab} coincides with \cite[Theorem 1]{yildiz2019gabidulin}.
\item
For $\ell=n$ and $n_l=1,\forall l\in[\ell]$, the sum-rank metric is the Hamming metric. In addition, with $\Frobaut=\Id$, LRS codes are GRS codes with distinct nonzero $a_1,\dots,a_\ell$ as code locators and nonzero $\betalt$'s as column multipliers (see \cite[Theorem 2.17]{FnTsurvey-Umberto}, \cite[Table II]{martinez2019reliable}).
In this case, by adapting the setup in \eqref{eq:skewPolyEachRow}
to $\sigma=\Id$, and the proof in \cref{apendix:fieldSizeSkewPoly} with the usual evaluation of commutative polynomials, one can obtain the same results as in \cite[Theorem~2]{yildiz2018optimum}.
\end{enumerate}
\end{remark}
If the necessary and sufficient condition on $\zeroSet_1,\dots,\zeroSet_k$ in \eqref{eq:GM-MDScondition} is not satisfied, then we cannot obtain an MSRD code fulfilling the support constraints.
The following result derives the largest possible sum-rank distance that can be achieved with the given constraints.
In fact, the largest sum-rank distance can be achieved by subcodes of LRS codes.
This result is an analogue to those for MDS codes \cite{yildiz2018optimum} and MSRD codes \cite{yildiz2019gabidulin}.
The following upper bound on the minimum Hamming distance of a support-constrained code $\cC$ is given in \cite[Theorem 1]{yildiz2018optimum},
\begin{align*}
  \dH(\cC) \leq n-\widetilde{k}+1
\end{align*}
where
\begin{align}
  \label{eq:kSuperCode}
  \widetilde{k} \defeq \max_{\varnothing \neq \Omega\subseteq [k]}  \left|\bigcap_{i\in\Omega} \zeroSet_i\right|+|\Omega|\ .
\end{align}
Note that $\widetilde{k}>k$ if $\zeroSet_1,\dots,\zeroSet_k$ do not satisfy the condition in \eqref{eq:GM-MDScondition}.
For any ordered partition $\bn_\ell =(n_1, \dots,n_\ell)$ of $n$, according to \cref{lem:sum-rank-Hamming}, we have
\begin{align}\label{eq:distanceSubcode}
  \dSR{\bn_{\ell}}(\cC)\leq \dH(\cC) \leq n-\widetilde{k}+1\ .
\end{align}

\begin{theorem}\label{thm:subcode}
  Given $\zeroSet_1,\dots,\zeroSet_k\subseteq[n]$, let $\widetilde{k}$ be as in \eqref{eq:kSuperCode}. For any prime power $q\geq\ell+1$ and integer $m\geq \max_{l\in[\ell]}\{\widetilde{k}-1+\log_q\widetilde{k}, n_l\}$, there exists a subcode of an $[n,\widetilde{k}]_{q^m}$ linearized Reed-Solomon code with $\ell$ blocks, and each block of length $n_l$, $l\in[\ell]$ such that it has a generator matrix $\bG\in\Fqm^{k\times n}$ fulfilling the support constraints
  $G_{ij}=0, \ \forall i\in[k],\ \forall j \in \zeroSet_i$.
\end{theorem}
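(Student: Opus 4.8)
The strategy is to deduce the statement from the GM--MSRD condition, \cref{thm:fieldSizeFromGab}, applied at the enlarged dimension $\widetilde k$. The point is that, by \eqref{eq:kSuperCode}, $\widetilde k$ is exactly the smallest integer for which the family $\zeroSet_1,\dots,\zeroSet_k$ can be completed to a family of $\widetilde k$ subsets of $[n]$, each of size $\widetilde k-1$, that still satisfies the GM--MDS inequality \eqref{eq:GM-MDScondition} with $k$ replaced by $\widetilde k$. Given such a completion, \cref{thm:fieldSizeFromGab} provides a genuine $[n,\widetilde k]_{q^m}$ LRS code whose generator matrix meets the completed support constraints, and the code we want is carved out of it by discarding all but the first $k$ rows.

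First I would establish the combinatorial core: for $\widetilde k$ as in \eqref{eq:kSuperCode} there exist $\zeroSet_1',\dots,\zeroSet_{\widetilde k}'\subseteq[n]$ with $|\zeroSet_i'|=\widetilde k-1$ for all $i\in[\widetilde k]$, with $\zeroSet_i\subseteq\zeroSet_i'$ for all $i\in[k]$, and with $\bigl|\bigcap_{i\in\Omega}\zeroSet_i'\bigr|+|\Omega|\leq\widetilde k$ for every nonempty $\Omega\subseteq[\widetilde k]$. This is precisely the set-extension lemma underlying the Hamming-metric result of Yildiz and Hassibi \cite{yildiz2018optimum} and reused for Gabidulin codes in \cite{yildiz2019gabidulin}; the argument is an inductive exchange argument showing that the quantity in \eqref{eq:kSuperCode} is the only obstruction to such an extension. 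I expect this to be the main obstacle of the proof --- not conceptually, but in the bookkeeping: one must append the $\widetilde k-k$ new sets and pad all $\widetilde k$ of them up to size $\widetilde k-1$ while never violating the intersection inequality, and one must note that the construction only needs to be carried out when $\widetilde k\leq n$, since otherwise \eqref{eq:distanceSubcode} already forces $\dSR{\bn_\ell}(\cC)\leq 0$ and there is nothing to prove.

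Next I would invoke \cref{thm:fieldSizeFromGab} with code dimension $\widetilde k$ and the family $\zeroSet_1',\dots,\zeroSet_{\widetilde k}'$. Its hypotheses are exactly the ones assumed here, namely $q\geq\ell+1$ and $m\geq\max_{l\in[\ell]}\{\widetilde k-1+\log_q\widetilde k,\,n_l\}$, so it yields an $[n,\widetilde k]_{q^m}$ linearized Reed--Solomon code with block partition $(n_1,\dots,n_\ell)$ together with a generator matrix $\widetilde\bG\in\Fqm^{\widetilde k\times n}$ satisfying $\widetilde G_{ij}=0$ for all $i\in[\widetilde k]$, $j\in\zeroSet_i'$.

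Finally, let $\bG\in\Fqm^{k\times n}$ be the submatrix formed by the first $k$ rows of $\widetilde\bG$. As the $\widetilde k$ rows of $\widetilde\bG$ are $\Fqm$-linearly independent, so are any $k$ of them, hence $\bG$ has rank $k$ and $\cC\defeq\myspan{\bG}$ is a $k$-dimensional subcode of the LRS code $\myspan{\widetilde\bG}$. For $i\in[k]$ and $j\in\zeroSet_i\subseteq\zeroSet_i'$ one has $G_{ij}=\widetilde G_{ij}=0$, so $\bG$ fulfills the prescribed support constraints, which proves the theorem. As a by-product, since $\cC$ lies inside the MSRD code $\myspan{\widetilde\bG}$ its minimum sum-rank distance is at least $n-\widetilde k+1$, and together with the bound \eqref{eq:distanceSubcode} it equals $n-\widetilde k+1$; thus the support-constrained subcode attains the largest sum-rank distance allowed by the constraints, in analogy with \cite{yildiz2018optimum,yildiz2019gabidulin}.
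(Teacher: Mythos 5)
Your proof is correct and the overall architecture — invoke \cref{thm:fieldSizeFromGab} at the enlarged dimension $\widetilde k$, then keep only the first $k$ rows — coincides with the paper's. But you have substantially over-engineered the one preparatory step. You treat the extension of $\zeroSet_1,\dots,\zeroSet_k$ to a family of $\widetilde k$ sets satisfying \eqref{eq:GM-MDScondition} (with $k$ replaced by $\widetilde k$) as the "main obstacle", requiring the sets to be padded up to size $\widetilde k-1$ via an inductive exchange argument in the style of Yildiz--Hassibi. None of that is needed here. The paper simply puts $\zeroSet_{k+1}=\cdots=\zeroSet_{\widetilde k}=\varnothing$; then for any nonempty $\Omega\subseteq[\widetilde k]$, either $\Omega\subseteq[k]$, in which case $\bigl|\bigcap_{i\in\Omega}\zeroSet_i\bigr|+|\Omega|\leq\widetilde k$ by the very definition \eqref{eq:kSuperCode} of $\widetilde k$, or $\Omega$ meets $\{k+1,\dots,\widetilde k\}$, in which case the intersection is empty and the inequality degenerates to $|\Omega|\leq\widetilde k$. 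That one line replaces your entire combinatorial step, and the padding of each $\zeroSet_i'$ to size $\widetilde k-1$ is already performed (without loss of generality, via \cite[Corollary 3]{yildiz2019gabidulin}) inside the proof of \cref{thm:fieldSizeFromGab}, so it should not be replicated here. Your remark on the edge case $\widetilde k>n$, on the other hand, is a legitimate observation not addressed in the paper, and your closing sentence that the resulting subcode achieves $\dSR{\bn_\ell}(\cC)=n-\widetilde k+1$ matches the paper's conclusion.
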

\begin{proof}
  Let $\zeroSet_{k+1}=\cdots = \zeroSet_{\widetilde{k}} = \varnothing$. For any nonempty $\Omega\subseteq [\widetilde{k}]$, we have
  \begin{align*}
    \left|\bigcap_{i\in\Omega}\zeroSet_i\right|+|\Omega|\leq \widetilde{k}\ .
  \end{align*}
  Then, by \cref{thm:fieldSizeFromGab}, there exists an LRS code of dimension $\widetilde{k}$ with a generator matrix $\widetilde{\bG}\in\Fqm^{\widetilde{k}\times n}$ having zeros at the positions specified by $\zeroSet_1, \dots,\zeroSet_{\widetilde{k}}$. Since it is an MSRD code, its sum-rank distance is $n-\widetilde{k}+1$.
  The first $k$ rows of $\widetilde{\bG}$ will generate a subcode $\cC$ whose sum-rank distance is as good as the LRS code, i.e., $\dSR{\bn_{\ell}}(\cC)\geq n-\widetilde{k}+1$, where $\bn_{\ell}=(n_1,\dots, n_\ell)$. Hence, the subcode achieves the largest possible distance given in \eqref{eq:distanceSubcode}.
\end{proof}

\subsection{A More General Result of Claim 1}
\label{sec:ind_proof}

Let $R_\numMulPolyVar$
be the commutative multivariate polynomial ring defined in \eqref{eq:multiVarRing}. Note that $R_0=\Fqm$.
Let $\Frobaut$ be the Frobenius automorphism of $R_0$, which we extend to any $a=\sum_{\bi\in\bbN^n} a_{\bi}\cdot \beta_{1,1}^{i_1}\cdots \beta_{\ell, n_\ell}^{i_\numMulPolyVar}\in R_\numMulPolyVar$ by
\begin{align*}
    \Frobaut\ :\ R_\numMulPolyVar &\to R_\numMulPolyVar\\
  \sum_{\bi\in\bbN^n} a_{\bi}\cdot \beta_{1,1}^{i_1}\cdots \beta_{\ell, n_\ell}^{i_\numMulPolyVar} &\mapsto \sum_{\bi\in\bbN^n}
                                                                                                       a_{\bi}^q\cdot (\beta_{1,1}^{i_1})^q\cdots (\beta_{\ell, n_\ell}^{i_\numMulPolyVar})^q\ .
\end{align*}
Let $R_\numMulPolyVar[\SkewVar;\Frobaut]$ be the univariate skew polynomial ring with indeterminate $X$, whose coefficients are from $R_\numMulPolyVar$, i.e.,
\begin{align*}
  R_\numMulPolyVar[\SkewVar;\Frobaut] \coloneqq \left\{ \left. \sum_{i\in\bbN} c_iX^i\ \right|\  c_i \in R_\numMulPolyVar \right\}\ .
\end{align*}
The degree of $f=\sum_{i\in\bbN} c_iX^i\in\FrobPolysn$ is $\deg f\defeq\max\set*{ i\in\bbN\ |\ c_i\neq 0}$ and
$\deg 0 \defeq -\infty$ by convention.

Similar to skew polynomials over a finite field, addition is commutative and multiplication is defined using the commutation rule
\begin{align}\label{eq:commutationRule}
  \SkewVar\cdot a = \Frobaut(a)\cdot \SkewVar, \ \forall a\in R_\numMulPolyVar\ ,
\end{align}
and naturally extended by distributivity and associativity. As in \eqref{eq:prod-skew-polys}, the product of $f,g\in\FrobPolysn$ with $\deg f=d_f$ and $\deg g=d_g$ is
\begin{align}
f\cdot g=\sum_{i=0}^{d_f}\sum_{j=0}^{d_g} f_i\Frobaut^{i}(g_j) \SkewVar^{i+j}\ ,\label{eq:prodRn}
\end{align}
and the degree of the product is $\deg \left(f\cdot g\right) = d_f + d_g$. Note that
in general, $f\cdot g\neq g\cdot f$. 

With a bit abuse of the notation, in the following, we also denote by
\begin{align*}
\locSet
  =&\{a_1\beta_{1,1}^{q-1}, \dots, a_1\beta_{1, n_1}^{q-1}\ ,\ \dots \ , \ a_\ell\beta_{\ell,1}^{q-1}, \dots, a_\ell\beta_{\ell, n_\ell}^{q-1}\} \subseteq \mulVarRng
  \end{align*}
  the P-independent set as a subset of $\mulVarRng$.
  Let $\rootSet_i\subseteq \locSet$ be the set as in \eqref{eq:rootSet} corresponding to $\zeroSet_i$ and $f_{\rootSet_i}\in\FrobPolysn$ be the minimal polynomial of $\rootSet_i$ as in \eqref{eq:lclmMinPoly}.

We note the following properties of $\FrobPolysn$, which will be useful for the proof of the more general result of \cref{claim} in \cref{thm:sufficientCond}. The proofs of these properties can be found in \cref{appendix:SkewPolyProperty}.
\begin{enumerate}[label={\bfseries P\arabic*}]
\item \label{p:noZeroDiv}
  $\FrobPolysn$ is a ring without zero divisors.
\item \label{p:gcrd}
  For any sets 
  $\rootSet_1, \rootSet_2\subseteq \mulVarRng$ s.t.~$\rootSet_1\cup\rootSet_2$ is P-independent,
  $\gcrd(f_{\rootSet_1},f_{\rootSet_2})=f_{\rootSet_1\cap\rootSet_2}$.
  In particular, $\rootSet_1\cap\rootSet_2=\varnothing \iff \gcrd(f_{\rootSet_1},f_{\rootSet_2})=1$.
\item \label{p:xlrdivision}
  For $t\in\bbN$ and any $f\in\FrobPolysn$, $\SkewVar^t \divides_l f \iff \SkewVar^t\divides_r f$.
  In this case, we write $\SkewVar^t| f$.
\item \label{p:xdivision}
  For $t\in\bbN$ and any $f_1, f_2 \in\FrobPolysn$ such that $\SkewVar\not\divides f_2$, then $\SkewVar^t\divides (f_1\cdot f_2)\iff\SkewVar^t\divides f_1$.
\end{enumerate}
In the general result in \cref{thm:sufficientCond}, we are interested in skew polynomials in the following form: for any $\zeroSet\subseteq[n], \tfzt\geq 0$
\begin{align}\label{eq:fZt}
  \fZt(\zeroSet,\tfzt)\coloneqq \SkewVar^{\tfzt}\cdot \underset{\substack{\alpha\in\set*{\left. a_l\betalt^{q-1} \ \right|\ \indMap(l,t)\in\zeroSet}}}{\lclm}\set*{\SkewVar-\alpha} \in \FrobPolysn\ ,
\end{align}
where $\indMap(l,t)$ is defined in \eqref{eq:indMap}.

Define the set of skew polynomials of the following form:
\begin{equation}
  \label{eq:Skewnk}
  \Skewnk \coloneqq \set*{\left. \fZt(\zeroSet,\tfzt)\ \right|\ \tfzt\geq 0, \zeroSet\subseteq [n]~\text{s.t.}~|\zeroSet|+\tfzt\leq k-1}\subseteq \FrobPolysn\ .
\end{equation}
Note that $\deg f \leq k-1, \forall f\in\Skewnk$.
We also note the following properties of polynomials in $\Skewnk$, whose proofs are given in \cref{appendix:SkewPolyProperty}.
\begin{enumerate}[resume,label={\bfseries P\arabic*}]
\item \label{p:gcrd2polys}
  For any $f_1 = \fZt(\zeroSet_1,\tfzt_1), f_2 = \fZt(\zeroSet_2, \tfzt_2)\in\Skewnk$, we have
  \begin{align*}
    \gcrd(f_1,f_2)=\fZt(\zeroSet_1\cap \zeroSet_2, \min\{\tfzt_1,\tfzt_2\})\in\Skewnk\ .
  \end{align*}
\item \label{p:reducevar}
  Let $f=\fZt(\zeroSet,\tfzt)\in\Skewnk$ and let $f'=f|_{\beta_{\ell, n_\ell}=0}\in R_{\numMulPolyVar-1}[\SkewVar;\Frobaut]$ (namely, we substitute $\beta_{\ell, n_\ell}=0$ in each coefficient of $f$). Then $f'\in\cS_{n-1,k}$ and
\begin{align*}
    f'= \begin{cases}
    \fZt(\zeroSet,\tfzt) & n\not\in \zeroSet\ ,\\
    \fZt(\zeroSet\setminus\{n\}, \tfzt+1) & n\in \zeroSet\ .
    \end{cases}
\end{align*}
\end{enumerate}

The following theorem is a more general statement of \cref{claim} and it is the analogue of \cite[Theorem 3.A]{yildiz2019gabidulin} for skew polynomials.
\begin{theorem}\label{thm:sufficientCond}
  Let $k\geq \numRows\geq 1$ and $n\geq 0$. For any $f_1, f_2,\dots, f_\numRows\in \Skewnk$, the following are equivalent:
  \begin{enumerate}[label={\textrm(\roman*)}]
  \item For any $g_1,g_2,\dots, g_\numRows\in\FrobPolysn$ such that $\deg (g_i\cdot f_i)\leq k-1$, we have \label{item:equiv1}
    \begin{align*}
      \sum_{i=1}^\numRows g_i\cdot f_i = 0\quad \implies \quad g_1=g_2=\dots = g_\numRows =0\ .
    \end{align*}
  \item For all nonempty $\Omega\subseteq [\numRows]$, we have \label{item:equiv2}
    \begin{align}\label{eq:equiv2}
      k-\deg (\gcrd_{i\in\Omega} f_i) \geq \sum_{i\in\Omega} (k-\deg f_i)\ .
    \end{align}
  \end{enumerate}
\end{theorem}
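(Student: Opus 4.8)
\textbf{Proof plan for \cref{thm:sufficientCond}.}
The plan is to prove the two implications separately, following the structure of the proof of \cite[Theorem 3.A]{yildiz2019gabidulin} but replacing commutative-polynomial arguments by their skew-polynomial analogues, using properties \ref{p:noZeroDiv}--\ref{p:reducevar}. For the direction \ref{item:equiv1} $\Rightarrow$ \ref{item:equiv2} I would argue by contraposition: suppose \eqref{eq:equiv2} fails for some nonempty $\Omega\subseteq[\numRows]$, i.e.\ $k-\deg(\gcrd_{i\in\Omega}f_i) < \sum_{i\in\Omega}(k-\deg f_i)$. Write $d=\gcrd_{i\in\Omega}f_i$ and $f_i = h_i\cdot d$ for $i\in\Omega$ (right division, using \ref{p:noZeroDiv} so quotients are well-defined). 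The $h_i$ are pairwise "right-coprime" in the sense needed, and a dimension count on the left $R_\numMulPolyVar$-module of skew polynomials of degree $<k-\deg d$ that are right multiples of the various $h_i$ shows the $\numRows$-tuple of "reach sets" $\{g_i f_i : \deg(g_i f_i)\le k-1\}$ cannot be independent; concretely, the number of monomials available, $k-\deg f_i$ summed over $\Omega$, exceeds $k-\deg d$, forcing a nontrivial relation $\sum_{i\in\Omega} g_i f_i = 0$ with not all $g_i$ zero (extend by $g_i=0$ for $i\notin\Omega$). Here I would lean on the lclm/gcrd degree formula $\deg\lclm(f_1,f_2)=\deg f_1+\deg f_2-\deg\gcrd(f_1,f_2)$ and on \ref{p:gcrd2polys} to control common right factors.

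For the harder direction \ref{item:equiv2} $\Rightarrow$ \ref{item:equiv1}, I would induct on the number of variables $n$ (equivalently on $\numMulPolyVar$), with the base case $n=0$, where $R_0=\Fqm$ and $\FrobPolysn = \FrobSkewPolys$ is a (left and right) principal ideal domain, so the statement reduces to the known finite-field case — essentially the linear-independence criterion for minimal skew polynomials, which follows from \cref{thm:uniqueMinimalSkewPoly}, \cref{thm:prop-minimal-skew-poly} and the degree formula. For the inductive step, given $f_1,\dots,f_\numRows\in\Skewnk$ satisfying \ref{item:equiv2} and a relation $\sum_i g_i f_i = 0$ with $\deg(g_i f_i)\le k-1$, I would specialize $\beta_{\ell,n_\ell}=0$. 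By \ref{p:reducevar} each $f_i$ becomes some $f_i'\in\cS_{n-1,k}$, of the form $\fZt(\zeroSet_i,\tfzt_i)$ or $\fZt(\zeroSet_i\setminus\{n\},\tfzt_i+1)$; in either case $\deg f_i' = \deg f_i$, and one checks the gcrd condition \eqref{eq:equiv2} is preserved for the $f_i'$ (using \ref{p:gcrd2polys} and that intersections only shrink). The induction hypothesis applied to $\sum_i g_i' f_i' = 0$ over $R_{\numMulPolyVar-1}[\SkewVar;\Frobaut]$ gives $g_i' = 0$, i.e.\ $\beta_{\ell,n_\ell}$ divides every coefficient of every $g_i$. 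The remaining work is to "peel off" powers of $\beta_{\ell,n_\ell}$: write $g_i = \beta_{\ell,n_\ell}\cdot \widetilde{g}_i$ — but this is not immediate because the coefficients live in a polynomial ring and the commutation rule \eqref{eq:commutationRule} twists by $\Frobaut$. The clean way is to factor out $\beta_{\ell,n_\ell}$ on the left of each coefficient, absorb the $\Frobaut$-twist, and rerun the specialization argument, obtaining an infinite descent in the $\beta_{\ell,n_\ell}$-adic valuation of the $g_i$, which forces $g_i=0$.

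\textbf{Main obstacle.} I expect the crux to be the descent/peeling step in the inductive case: after the specialization $\beta_{\ell,n_\ell}=0$ kills the $g_i'$, I must convert "$\beta_{\ell,n_\ell}$ divides all coefficients of $g_i$" into a genuine factorization $g_i = \beta_{\ell,n_\ell}^{\,s}\cdot(\text{unit-twisted skew polynomial})$ compatible with the twisted multiplication, and then check that the divided-down relation still has the right degree bounds and gcrd structure so the induction (or a secondary induction on valuation) can be reapplied. Handling the interaction between the $\Frobaut$-action on $R_\numMulPolyVar$ and divisibility by the single variable $\beta_{\ell,n_\ell}$ — in particular that $\Frobaut(\beta_{\ell,n_\ell}^{q-1})$ is still a power of $\beta_{\ell,n_\ell}$, so valuations behave multiplicatively — is the delicate bookkeeping; everything else (the gcrd formula, \ref{p:noZeroDiv}, the degree count) is standard once that is in place. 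A secondary subtlety is verifying that the specialized polynomials $f_i'$ do not accidentally acquire a common right factor beyond what \eqref{eq:equiv2} predicts, which I would rule out using \ref{p:gcrd} and \ref{p:reducevar} directly.
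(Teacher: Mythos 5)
Your direction \ref{item:equiv1}$\implies$\ref{item:equiv2} by contraposition and a count of unknowns versus equations matches the paper's argument.

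The direction \ref{item:equiv2}$\implies$\ref{item:equiv1}, however, has a genuine gap. You propose to induct on $n$ alone, with the inductive step being the specialization $\beta_{\ell,n_\ell}=0$. The claim that ``one checks the gcrd condition \eqref{eq:equiv2} is preserved for the $f_i'$\dots using \ref{p:gcrd2polys} and that intersections only shrink'' is false in general. Write $\deg f_\Omega = |\bigcap_{i\in\Omega}\zeroSet_i| + \min_{i\in\Omega}\tfzt_i$ (which is what \ref{p:gcrd2polys} gives). Under $\beta_{\ell,n_\ell}=0$, by \ref{p:reducevar}, each $i$ with $n\in\zeroSet_i$ has $\zeroSet_i' = \zeroSet_i\setminus\{n\}$ and $\tfzt_i' = \tfzt_i+1$. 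If $n$ lies in $\zeroSet_i$ for \emph{some but not all} $i\in\Omega$, then $|\bigcap\zeroSet_i'| = |\bigcap\zeroSet_i|$ while $\min_{i\in\Omega}\tfzt_i'$ may be $\min\tfzt_i + 1$, so $\deg f_\Omega'$ can increase by one, $k-\deg f_\Omega'$ can \emph{decrease} by one, and the right-hand side of \eqref{eq:equiv2} is unchanged since $\deg f_i' = \deg f_i$. If \eqref{eq:equiv2} was an equality for that $\Omega$, it fails for $(f_1',\dots,f_s')$, and you cannot apply the induction hypothesis. The paper handles this with a lexicographic induction on $(k,s,n)$: when some $\tfzt_i\geq 1$ for all $i$ it reduces $k$ by factoring $\SkewVar$ off every $f_i$; when \eqref{eq:equiv2} is tight for a proper $\Omega$ with $2\leq|\Omega|\leq s-1$ it splits the problem into two subproblems with fewer $f_i$'s (reducing $s$); and only when at least two $\tfzt_i=0$ and \eqref{eq:equiv2} is \emph{strict} for all proper $\Omega$ of size $\geq 2$ does it specialize to reduce $n$ --- precisely because the strictness supplies the one unit of slack that specialization may consume.

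Two smaller points. First, the ``peeling off $\beta_{\ell,n_\ell}$'' step you flag as the main obstacle is not an infinite descent at all: one normalizes once, at the outset, by assuming without loss of generality that some coefficient of some $g_i$ is not divisible by $\beta_{\ell,n_\ell}$ (if not, factor out $\beta_{\ell,n_\ell}$ from the left of every coefficient of every $g_i$, which commutes with the skew product since $\beta_{\ell,n_\ell}$ acts as a scalar on the left and $\mulVarRng$ has no zero divisors). Then after specialization the $g_i'$ are not all zero, giving the contradiction in one stroke. Second, your base case is off: for $n=0$ and $s\geq 2$, $\Skewnk$ contains only $\SkewVar^{\tfzt}$, and both \ref{item:equiv1} and \ref{item:equiv2} \emph{always fail} (any $\SkewVar^{\tfzt_i}, \SkewVar^{\tfzt_j}$ satisfy a nontrivial relation, and \eqref{eq:equiv2} for $\Omega=\{i,j\}$ would force $\deg f_i\geq k$), so the equivalence holds vacuously; it does not ``reduce to the known finite-field case'' via \cref{thm:uniqueMinimalSkewPoly}. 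The only nontrivial base case is $s=1$ with arbitrary $n$, where \ref{p:noZeroDiv} gives \ref{item:equiv1} directly.
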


Before proving \cref{thm:sufficientCond}, we first show in \cref{cor:claim} that \cref{claim} is a special case of \cref{thm:sufficientCond}.
For this purpose, we give an equivalence of \cref{thm:sufficientCond} in terms of matrices with entries from $R_\numMulPolyVar$.

We first describe the multiplication between skew polynomials in matrix language.
Let $\up=\sum_{i\in\bbN} \upc_{i} \SkewVar^{i}\in\FrobPolysn$. For $b-a\geq \deg \up$, define the following matrix
\begin{align*}
  \bS_{a\times b}(\up)\coloneq
  \begin{pmatrix}
    \upc_0 & \cdots & \upc_{b-a} &\vphantom{v} \\
    & \Frobaut(\upc_0) & \cdots & \Frobaut(\upc_{b-a})\\
    &\hphantom{MM} \ddots & \hphantom{MM} \ddots &\hphantom{MM} \ddots\\
    &&\Frobaut^{a-1}(\upc_0) &  \cdots  &\Frobaut^{a-1}(\upc_{b-a})
 \end{pmatrix}\in R_n^{a\times b}\ .
\end{align*}
  In particular, for $a=1$, denote by $\FrobPolysn_{<b}$ the set of skew polynomials of degree strictly less than $b$. The map
  \begin{equation}
    \label{eq:polyMatMap}
    \begin{split}
      \bS_{1\times b}(\cdot)\ : \ \FrobPolysn_{<b} &\to \mulVarRng^b\\
      u &\mapsto (u_0, \dots, u_{b-1})
    \end{split}
  \end{equation}
  is bijective and $\bS_{1\times b}(0)=\0, \forall b\in\bbN$.
For any skew polynomial $\vp=\sum_i\vpc_i\SkewVar^{i}\in\FrobPolysn$, we have
\begin{align}\label{eq:matOfProduct}
  \bS_{a\times b}(\vp\cdot \up) = \bS_{a\times c}(\vp)\cdot \bS_{c\times b}(\up)\ ,
\end{align}
where $a,b,c\in\bbN$ are such that $c-a\geq \deg  \vp, b-c\geq \deg  \up$.
As a special case, when $\vp = \SkewVar^\tfzt, \tfzt\in\bbN$, we can write
\begin{align}
  \bS_{a\times(b+\tfzt)}(\SkewVar^\tfzt\cdot \up)=& \bS_{a\times (a+\tfzt)}(\SkewVar^\tfzt)\cdot \bS_{(a+\tfzt)\times (b+\tfzt)}(\up)\nonumber\\
  =&
      \left( \0_{a\times \tfzt}\quad \bI_{a \times a}\right)
                           \cdot \bS_{(a+\tfzt)\times (b+\tfzt)}(\up)\ .  \label{eq:matfZt} 
\end{align}
By the definition in \eqref{eq:fZt}, we have $\fZt(\zeroSet,\tfzt)=\SkewVar^\tfzt\cdot \up$ for some $\up\in\FrobPolysn$. It can be readily seen from \eqref{eq:matfZt} that the first $\tfzt$ columns of $\bS_{a\times (b+\tfzt)}(\fZt(\zeroSet,\tfzt))$ are all zero.

For $s\in[k]$, $i\in [\numRows]$, let $f_i=\fZt(\zeroSet_i, \tfzt_i)\in\Skewnk$.
We write $\bS(f_i)$ instead of $\bS_{(k-\tfzt_i-|\zeroSet_i|)\times k}(f_i)$ for ease of notation. By \eqref{eq:matfZt}, $\bS(f_i)$ looks like
\begingroup
\setlength\arraycolsep{3pt}
\begin{align*}
\bS(f_i)
	 &=\begin{pmatrix}
        0 & \cdots & 0 & \times & \times & \cdots & \times\\
        0 & \cdots & 0 &  & \times & \times & \cdots & \times\\
        \vdots && \vdots &&&\ddots & \ddots & &\ddots\\
        \makebox[0pt][l]{$\smash{\underbrace{\phantom{\begin{matrix}0&\cdots&0\end{matrix}}}_{\tfzt_i}}$}0 & \cdots & 0&
        \makebox[0pt][l]{$\smash{\hspace{-5pt}\underbrace{\phantom{\begin{matrix}\times&\times&\cdots\end{matrix}}}_{k-1-\tfzt_i-|\zeroSet_i|}}$}
        &&&
        \makebox[0pt][l]{$\smash{\underbrace{\phantom{\begin{matrix}\times&\times&\cdots&\times\end{matrix}}}_{|\zeroSet_i|+1}}$}
        \times & \times & \cdots & \times\\
    \end{pmatrix}\!\!\!\!\left.\vphantom{\begin{pmatrix}0\\0\\\vdots\\0\end{pmatrix}}
				\right\} {\scriptstyle k-\tfzt_i-|\zeroSet_i|}\ ,\\
\end{align*}
\endgroup
where the $\times$'s represent possibly nonzero entries.
Then, applying \eqref{eq:matOfProduct} to the expression $g_i\cdot f_i$ in \cref{thm:sufficientCond} yields
\begin{align}
  \bS_{1\times k}(g_i\cdot f_i) = \bu_i \cdot \bS(f_i)\ ,\nonumber
\end{align} 
where $\bu_i=\bS_{1\times (k-\tfzt_i-|\zeroSet_i|)}(g_i)$ is a row vector.
Therefore, we can write
\begin{align}
  \bS_{1\times k}(\sum_{i=1}^\numRows g_i\cdot f_i) =
    (\bu_1, \cdots , \bu_\numRows)
                     \cdot
                     \underbrace{\begin{pmatrix}
                       \bS(f_1)\\ \vdots \\ \bS(f_\numRows)
                     \end{pmatrix}}_{=:\bM(f_1,\dots, f_\numRows)}\ ,\label{eq:defMmat}
\end{align}
which is a linear combination of the rows of $\bM(f_1,\dots,f\numRows)$.

The following theorem is equivalent to \cref{thm:sufficientCond} in matrix language and is analogous to \cite[Theorem 3.B]{yildiz2019gabidulin}.
\begin{theorem}\label{thm:sufficientMat}
  Let $k\geq s\geq 1$ and $n\geq 0$. For $i\in [\numRows]$, let $\zeroSet_i\in[n], \tfzt_i\geq 0$ such that $\tfzt_i+|\zeroSet_i|\leq k-1$ and $f_i=\fZt(\zeroSet_i, \tfzt_i)\in\Skewnk$. The matrix $\bM(f_1,\dots, f_\numRows)$ defined in \eqref{eq:defMmat} has full row rank if and only if, for all nonempty $\Omega\subseteq[\numRows]$,
  \begin{align}
    k-\left|\bigcap_{i\in\Omega}\zeroSet_i\right| - \min_{i\in\Omega} \tfzt_i \geq \sum_{i\in\Omega}(k-\tfzt_i-|\zeroSet_i|)\ .
    \label{eq:matEquiv2}
  \end{align}
\end{theorem}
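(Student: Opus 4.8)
The plan is to derive \cref{thm:sufficientMat} from the equivalent polynomial statement \cref{thm:sufficientCond} and to prove the latter by induction, following the argument of \cite[Theorems~3.A, 3.B]{yildiz2019gabidulin}. The translation is routine bookkeeping: the map $\bS_{1\times k}$ of \eqref{eq:polyMatMap} is injective on skew polynomials of degree $<k$, so by \eqref{eq:defMmat} a nontrivial $\mulVarRng$-relation among the rows of $\bM(f_1,\dots,f_\numRows)$ is the same thing as a tuple $(g_1,\dots,g_\numRows)\neq\0$ with $\deg(g_i f_i)\le k-1$ and $\sum_i g_i f_i=0$; hence ``$\bM$ has full row rank over the fraction field of $\mulVarRng$'' is exactly condition \ref{item:equiv1}. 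Since $f_i=\fZt(\zeroSet_i,\tfzt_i)$ is monic of degree $\tfzt_i+|\zeroSet_i|$ and, by \ref{p:gcrd2polys}, $\deg(\gcrd_{i\in\Omega}f_i)=\min_{i\in\Omega}\tfzt_i+\bigl|\bigcap_{i\in\Omega}\zeroSet_i\bigr|$, the inequality \eqref{eq:equiv2} is exactly \eqref{eq:matEquiv2}. So it suffices to prove the equivalence \ref{item:equiv1}$\Leftrightarrow$\ref{item:equiv2} of \cref{thm:sufficientCond}.

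For the direction $\lnot\ref{item:equiv2}\Rightarrow\lnot\ref{item:equiv1}$: if \eqref{eq:matEquiv2} fails for some nonempty $\Omega$, put $g_\Omega=\gcrd_{i\in\Omega}f_i\in\Skewnk$ (\ref{p:gcrd2polys}) and factor $f_i=h_i g_\Omega$ for $i\in\Omega$, with $\deg h_i=\deg f_i-\deg g_\Omega$ (by \ref{p:noZeroDiv} the ring $\FrobPolysn$ has no zero divisors, so degrees add). Then $\bS(f_i)=\bS(h_i)\,\bS(g_\Omega)$ by \eqref{eq:matOfProduct}, so the $\sum_{i\in\Omega}(k-\deg f_i)$ rows of $\bM$ indexed by $\Omega$ all lie in the row space of $\bS(g_\Omega)$, a space of dimension at most $k-\deg g_\Omega<\sum_{i\in\Omega}(k-\deg f_i)$; those rows are therefore $\mulVarRng$-linearly dependent, so $\bM$ is rank-deficient.

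The substance is $\ref{item:equiv2}\Rightarrow\ref{item:equiv1}$, which I would prove by induction on $(\numMulPolyVar,\,k+\numRows)$ ordered lexicographically, after normalizing (the analogue of \cite[Corollary~3]{yildiz2019gabidulin}) so that $\tfzt_i+|\zeroSet_i|=k-1$ for every $i$ (this preserves \eqref{eq:matEquiv2} and only strengthens the claim) and deleting any locator belonging to no $\zeroSet_i$ (so one may assume $\bigcup_i\zeroSet_i=[\numMulPolyVar]$). The base cases $\numRows=1$ (there $\bS(f_1)$ has full row rank, $f_1$ being monic) and $\numMulPolyVar=0$ (there \eqref{eq:matEquiv2} forces $\numRows=1$) are immediate. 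For the inductive step there are two mechanisms. \emph{Splitting:} if some $\Omega_0$ with $|\Omega_0|\ge 2$ attains equality in \eqref{eq:matEquiv2} with $g_0:=\gcrd_{i\in\Omega_0}f_i$ of positive degree, factor $f_i=h_i g_0$ for $i\in\Omega_0$; one checks via \ref{p:gcrd2polys} that $\{h_i\}_{i\in\Omega_0}$ (in dimension $k-\deg g_0$) and $\{g_0\}\cup\{f_i\}_{i\notin\Omega_0}$ (in dimension $k$) again satisfy \eqref{eq:matEquiv2} and are strictly smaller for the chosen order, and a relation $\sum_i g_i f_i=0$ rewrites as $\bigl(\sum_{i\in\Omega_0}g_i h_i\bigr)g_0+\sum_{i\notin\Omega_0}g_i f_i=0$, which the induction hypothesis applied to the two sub-families in turn forces to be trivial. \emph{Variable elimination:} otherwise one chooses a locator $\beta_{l,t}$ and specializes $\beta_{l,t}\mapsto 0$; by \ref{p:reducevar} each $f_i$ becomes an element of $\cS_{\numMulPolyVar-1,k}$, the code-locator set stays P-independent (\cref{thm:lam-prop-Pind-sets}, \cref{lem:codeLocatorsPind}), and — for a suitably chosen $\beta_{l,t}$ — the slack in \eqref{eq:matEquiv2} absorbs the increase (at most one) of each $\deg(\gcrd_{i\in\Omega}f_i)$, so the specialized family over $R_{\numMulPolyVar-1}[\SkewVar;\Frobaut]$ still satisfies \eqref{eq:matEquiv2}; by induction it has a nonzero maximal minor, which is the value at $\beta_{l,t}=0$ of the corresponding maximal minor of $\bM(f_1,\dots,f_\numRows)$, so that minor is a nonzero polynomial in $\mulVarRng$.

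I expect the main difficulty to be the combinatorial interface between these two mechanisms: showing that whenever splitting makes no progress (notably when $\Omega=[\numRows]$ itself sits at equality with $\gcrd_{i\in[\numRows]}f_i=1$, the regime that specializes to the original \cref{claim}), one can always find a locator whose elimination keeps \eqref{eq:matEquiv2} intact — verified directly in the $\tfzt_i\equiv 0$ case but requiring a careful tracking of how the pairs $(\zeroSet_i,\tfzt_i)$ evolve in general. This is exactly why \cref{thm:sufficientCond} must be stated for the full family $\Skewnk$, with the prefactors $\SkewVar^{\tfzt_i}$ and with $\numRows\le k$, and why the structural properties \ref{p:noZeroDiv}--\ref{p:reducevar} of $\FrobPolysn$ are isolated beforehand; apart from this accounting the proof is the term-by-term skew-polynomial transcription of the Gabidulin-code case in \cite{yildiz2019gabidulin}.
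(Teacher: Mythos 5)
Your translation of the matrix statement to the polynomial statement is exactly the paper's: the bijection $\bS_{1\times k}$ of \eqref{eq:polyMatMap} identifies a nonzero left null vector of $\bM(f_1,\dots,f_\numRows)$ with a nontrivial relation $\sum_i g_i f_i=0$, $\deg(g_i f_i)\le k-1$, giving the equivalence with \ref{item:equiv1}; and \ref{p:gcrd2polys} gives $\deg(\gcrd_{i\in\Omega}f_i)=\bigl|\bigcap_{i\in\Omega}\zeroSet_i\bigr|+\min_{i\in\Omega}\tfzt_i$, giving the equivalence of \eqref{eq:equiv2} with \eqref{eq:matEquiv2}. Your matrix-level proof of $\lnot$\ref{item:equiv2}$\Rightarrow\lnot$\ref{item:equiv1} (the $\Omega$-rows of $\bM$ lie in the row space of $\bS(\gcrd_{i\in\Omega}f_i)$, which is too small) is sound and equivalent to the paper's dimension count. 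Up to here you match the paper.

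The gap is in the normalization at the start of the hard direction, and it is not a cosmetic one. Passing to $f_i^*$ with $\tfzt_i^*+|\zeroSet_i^*|=k-1$ (whether by bumping $\tfzt_i$ or by enlarging $\zeroSet_i$) produces $f_i^*$ with $f_i\divides_r f_i^*$, so the degree constraint $\deg(g_if_i^*)\le k-1$ forces each cofactor $g_i$ to be a \emph{constant}. Condition \ref{item:equiv1} for $\{f_i^*\}$ thus excludes only constant-coefficient relations; it says nothing about a relation $\sum g_if_i=0$ with $\deg g_i\le k-1-\deg f_i>0$ for the original family. The implication you want runs the wrong way: \ref{item:equiv1} for $\{f_i\}$ implies \ref{item:equiv1} for $\{f_i^*\}$, not conversely. (Take $k=3$, $f_1=\fZt(\{1,2\},0)$, $f_2=\fZt(\{3\},0)$, which satisfies \eqref{eq:matEquiv2}; normalizing $f_2^*=\SkewVar f_2$ shrinks the admissible $g_2$ from degree $\le 1$ to degree $0$, so a hypothetical linear $g_2$ is never tested.) The analogue of \cite[Corollary~3]{yildiz2019gabidulin} is used in the paper \emph{before} invoking \cref{claim} — to assume $|\zeroSet_i|=k-1$ when proving \cref{thm:fieldSizeFromGab} — not inside the proof of \cref{thm:sufficientCond}, which must hold for arbitrary $\deg f_i\le k-1$ precisely because the induction keeps generating sub-families of smaller degree. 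So you cannot normalize here; you must carry general degrees through the induction.

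Relatedly, your two mechanisms are not enough. The paper uses a third: factoring $\SkewVar$ out of \emph{every} $f_i$ and passing from $k$ to $k-1$ (Cases~1a/1b, 2a/2b, 3a/3b), applied when $\min_i\tfzt_i\ge 1$, with a hybrid version when exactly one $\tfzt_i=0$. Splitting requires some $\Omega$ of size $\ge 2$ to attain equality in \eqref{eq:matEquiv2}, and locator elimination requires a locator missing from some $\zeroSet_i$ such that the inequality for $\Omega=[\numRows]$ survives the substitution $\beta_{\ell,n_\ell}\mapsto 0$ — that is exactly why Case~1d additionally demands \emph{at least two} indices with $\tfzt_i=0$, and why the degenerate Case~3c must be ruled out as impossible. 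When all $\tfzt_i\ge 1$ and no proper $\Omega$ is tight, neither of your mechanisms applies, and $k$-reduction is what saves the induction. Your lexicographic order $(\numMulPolyVar,k+\numRows)$ does accommodate such a step, so the fix is structural rather than conceptual: drop the normalization, add the $\SkewVar$-factoring step, and then carry out the full case distinction (the paper's Cases~1--3 with sub-cases a--d) that you correctly identified as the crux but left unresolved.
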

\begin{proof}
    For brevity, we write $\bM$ instead of $\bM(f_1,\dots, f_\numRows)$. 
    The logic of the proof is as follows
    \begin{align*}
      \bM \textrm{ has full row rank }\overset{\matProne}{\iff} \ref{item:equiv1} \overset{\textrm{\cref{thm:sufficientCond}}}{\iff} \ref{item:equiv2} \overset{\matPrtwo}{\iff} \eqref{eq:matEquiv2}\textrm{ holds}
    \end{align*}
    where \ref{item:equiv1} and \ref{item:equiv2} are shown to be equivalent in \cref{thm:sufficientCond}.
    We only need to show the equivalence \matProne{} and \matPrtwo.

    \matProne: Assuming $\bM$ has full row rank, it is equivalent to writing
  \begin{equation}
    \begin{split}
      \forall &\bu
      \in\mulVarRng^{1\times \sum_{i=1}^s(k-\tfzt_i-|\zeroSet_i|)},\
      \bu\cdot \bM=\0 \Longrightarrow \bu=\0\ .
  \end{split}
    \label{eq:matEquiv1}
  \end{equation}
  Partition $\bu$ into $s$ blocks $(\bu_1,\dots, \bu_\numRows)$, where $\bu_i\in\mulVarRng^{1\times (k-\tfzt_i-|\zeroSet_i|)}$. Note that $\bu=\0\iff \forall i\in[s], \bu_i =\0$.
  For each $i\in[s]$, the set $\{g_i\ |\ \bS_{1\times (k-\tfzt_i-|\zeroSet_i|)}(g_i)=\bu_i,\ \forall \bu_i\in\mulVarRng^{1\times (k-\tfzt_i-|\zeroSet_i|)}\}$ is $\FrobPolysn_{<(k-\tfzt_i-|\zeroSet_i|)}$, which is the set of skew polynomials of degree less than $ (k-\tfzt_i-|\zeroSet_i|)$, since the map $\bS_{1\times *}(\cdot)$ defined in \eqref{eq:polyMatMap} is bijective.
  Therefore, $\bu_i=\0\iff g_i=0,\ \forall i\in[\numRows]$. It can be further inferred that every $\bu\in \mulVarRng^{1\times \sum_{i=1}^{\numRows}(k-\tfzt_i-|\zeroSet_i|)}$ corresponds to a unique tuple $(g_1, \dots, g_\numRows)\in \FrobPolysn_{<(k-\tfzt_1-|\zeroSet_1|)}\times \dots\times\FrobPolysn_{< (k-\tfzt_\numRows-|\zeroSet_\numRows|)}$. We denote the Cartesian product by $\cG$.
  Since $\deg f_i= \tfzt_i+|\zeroSet_i|,\ \forall i\in[\numRows]$, for any tuple $(g_1,\dots, g_\numRows)\in\cG$, $\deg (g_i\cdot f_i) \leq k-1,\ \forall i\in[\numRows]$.

  By the equality in \eqref{eq:defMmat}, $\bu\cdot \bM=\bS_{1\times k}(\sum_{i=1}^\numRows g_i\cdot f_i)$ and $\bS_{1\times k}(\sum_{i=1}^\numRows g_i\cdot f_i)=\0 \iff \sum_{i=1}^\numRows g_i\cdot f_i=0$.
  Hence \eqref{eq:matEquiv1} can be equivalently written as
  \begin{align*}
    \forall g_1,\dots,g_\numRows\in\FrobPolysn &\text{ such that }\deg (g_i\cdot f_i)\leq k-1\ ,\\
    &  \sum_{i=1}^\numRows g_i\cdot f_i=0 \implies g_i=0,\ \forall i\in[\numRows]\ ,
  \end{align*}
  which is exactly the statement \ref{item:equiv1}.

\matPrtwo:
  It follows from \ref{p:gcrd2polys} that for any nonempty set $\Omega\subseteq [\numRows]$,
  \begin{align*}
    \deg (\gcrd_{i\in\Omega} f_i) = \left|\bigcap_{i\in\Omega}\zeroSet_i\right| + \min_{i\in\Omega} \tfzt_i\ .
  \end{align*}
  Then the left hand side of \eqref{eq:equiv2},
  $k-\deg (\gcrd_{i\in\Omega} f_i)$, is equal to $k- |\bigcap_{i\in\Omega}\zeroSet_i| - \min_{i\in\Omega} \tfzt_i$, which is the left hand side of \eqref{eq:matEquiv2}. By the definition of $f_i=\fZt(\zeroSet_i,\tfzt_i)$ in \eqref{eq:fZt}, the right hand side of \eqref{eq:equiv2},
  $\sum_{i\in\Omega}(k-\deg f_i)$, is equal to $\sum_{i\in\Omega}(k-(|\zeroSet_i|+\tfzt_i))$, which is the right hand side of \eqref{eq:matEquiv2}.
\end{proof}
As a special case, when $s=k$, $\tfzt_i=0$ and $|\zeroSet_i| = k-1,\ \forall i\in[k]$, each block $\bS(f_i)$ becomes a row vector with entries being the coefficients of $f_i=\fZt(\zeroSet_i,0)=\sum_{j=1}^{k} f_{i,j}\SkewVar^{j-1}\in\FrobPolysn$ and
\begin{align}
  \bM(f_1,\dots,f_k) =
  \begin{pmatrix}
    f_{11} & f_{12} & \cdots & f_{1k}\\
    f_{21} & f_{22} & \cdots & f_{2k}\\
    \vdots & \vdots & \ddots & \vdots\\
    f_{k1} & f_{k2} & \cdots & f_{kk}
  \end{pmatrix}\in\mulVarRng^{k\times k}\ .\label{eq:defMk}
\end{align}
Note that $\bM(f_1, \dots,f_k)$ coincides with the matrix $\bT$ in \eqref{eq:defTmat}.
Hence, we have \cref{cor:claim} below, which is exactly \cref{claim}.

\begin{corollary}\label{cor:claim}
For $i\in[k]$, let $\zeroSet_i\subseteq [n]$ with $|\zeroSet_i|=k-1$. Then, $\det \bM(f_1,\dots,f_k)$ is a nonzero polynomial in $\mulVarRng$, if and only if, for all nonempty $\Omega\subseteq[k]$,
  $k-\left|\bigcap_{i\in\Omega} \zeroSet_i \right| \geq |\Omega|. $
\end{corollary}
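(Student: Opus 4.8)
The plan is to obtain \cref{cor:claim} as the special case $\numRows=k$, $\tfzt_i=0$, $|\zeroSet_i|=k-1$ for all $i\in[k]$ of \cref{thm:sufficientMat}. First I would verify that this choice of parameters is admissible for that theorem: indeed $k\geq\numRows=k\geq 1$, $n\geq 0$, and $\tfzt_i+|\zeroSet_i|=k-1\leq k-1$, so each $f_i=\fZt(\zeroSet_i,0)$ lies in $\Skewnk$. With $\tfzt_i=0$ the polynomial $f_i=\fZt(\zeroSet_i,0)$ is just the minimal polynomial $f_{\rootSet_i}$ of the code-locator set $\rootSet_i$ associated to $\zeroSet_i$; it is monic of degree $|\rootSet_i|=|\zeroSet_i|=k-1$ (using that $\rootSet_i\subseteq\locSet$ is P-independent), so its coefficient vector has length exactly $k$.

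Next I would make the collapse to a square matrix explicit. Since $k-\tfzt_i-|\zeroSet_i|=1$, each block $\bS(f_i)=\bS_{1\times k}(f_i)$ is a single row vector listing the $k$ coefficients of $f_i$, and stacking these rows gives precisely the $k\times k$ matrix $\bM(f_1,\dots,f_k)\in\mulVarRng^{k\times k}$ displayed in \eqref{eq:defMk} (which equals the matrix $\bT$ of \eqref{eq:defTmat}). It then remains to translate ``full row rank'' into ``nonzero determinant'': for a square matrix over the integral domain $\mulVarRng=\Fqm[\beta_{1,1},\dots,\beta_{\ell,n_\ell}]$, full row rank is equivalent to the determinant being a nonzero element of $\mulVarRng$. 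I would justify this by embedding $\mulVarRng$ into its field of fractions and invoking the corresponding fact over a field, noting that the determinant polynomial is the same whether computed over $\mulVarRng$ or over its fraction field.

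Finally I would substitute the parameter values into the inequality \eqref{eq:matEquiv2}: for a nonempty $\Omega\subseteq[k]$ its right-hand side $\sum_{i\in\Omega}(k-\tfzt_i-|\zeroSet_i|)$ becomes $\sum_{i\in\Omega}\bigl(k-0-(k-1)\bigr)=|\Omega|$, while its left-hand side $k-\left|\bigcap_{i\in\Omega}\zeroSet_i\right|-\min_{i\in\Omega}\tfzt_i$ becomes $k-\left|\bigcap_{i\in\Omega}\zeroSet_i\right|$. Hence \eqref{eq:matEquiv2} reduces to $k-\left|\bigcap_{i\in\Omega}\zeroSet_i\right|\geq|\Omega|$, exactly the condition in the statement, and chaining this with the two preceding steps yields the claimed equivalence.

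I do not anticipate a real obstacle here, since all of the substantive work is carried by \cref{thm:sufficientMat}; the only points requiring a little care are the linear-algebra equivalence between full row rank and non-vanishing of the determinant for matrices with multivariate-polynomial entries, and the bookkeeping that the specialized $f_i$ really are the monic degree-$(k-1)$ minimal polynomials $f_{\rootSet_i}$, so that $\bM(f_1,\dots,f_k)$ is genuinely a square matrix.
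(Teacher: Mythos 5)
Your proposal is correct and follows the same approach as the paper: specialize \cref{thm:sufficientMat} with $\numRows=k$, $\tfzt_i=0$, and $|\zeroSet_i|=k-1$, observe that each block $\bS(f_i)$ collapses to a single row so $\bM(f_1,\dots,f_k)$ is $k\times k$, and substitute into \eqref{eq:matEquiv2} to get $k-\bigl|\bigcap_{i\in\Omega}\zeroSet_i\bigr|\geq|\Omega|$. The paper leaves the translation from ``full row rank of a square matrix over $\mulVarRng$'' to ``nonzero determinant'' implicit; your fraction-field justification is a fine way to fill that small gap.
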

\subsubsection{Proof of \cref{thm:sufficientCond}}
\label{sec:proofMainResult}

Denote $f_{\Omega}\defeq \gcrd_{i\in\Omega} f_i$. By \ref{p:gcrd}, $f_{\Omega}$ is equal to the minimal polynomial of the set $Z_\Omega \defeq \bigcap_{i\in\Omega}Z_i$.

We first show the direction \ref{item:equiv1}$\implies$\ref{item:equiv2}. Suppose \ref{item:equiv2} does not hold and w.l.o.g., assume that for $\Omega=\{1,2,\dots,\subRows\}\subseteq[k]$, $k-\deg f_\Omega<\sum_{i\in\Omega} (k-\deg f_i)$. For $i\in \Omega$, let $f_i=q_i\cdot f_\Omega$ for some $q_i\in \FrobPolysn$. Then, for $g_1, \dots, g_\subRows\in\FrobPolysn$ such that $\deg(g_i\cdot f_i)\leq k-1$, the equation $\sum_{i\in \Omega} g_i\cdot q_i = 0$ gives a homogeneous linear system of equations in the unknowns which are the coefficients of the $g_i$'s. Since the $g_i$'s are such that $\deg(g_i\cdot f_i)\leq k-1$, the number of unknowns is at least $\sum_{i\in\Omega}(k-\deg f_i)$. The number of equations is at most $k-\deg f_\Omega$, which is smaller than the number of unknowns by the assumption. Therefore, one can find $g_1,\dots, g_\subRows$, not all zero, solving the linear system of equations, which contradicts \ref{item:equiv1}.

  We then show the direction \ref{item:equiv2}$\implies$\ref{item:equiv1} by induction.
  We do induction on the parameters $(k,\numRows,n)$ considered in the lexicographical order $\prec$ (page~\pageref{item:lexicographic-order}).

  For the induction basis, when $(k\geq \numRows=1,n\geq 0)$, \ref{item:equiv1} always holds due to \ref{p:noZeroDiv}, i.e., $g_1\cdot f_1 = 0 $ implies $g_1=0$.

For $(k\geq \numRows\geq 2, n=0)$, both \ref{item:equiv1} and \ref{item:equiv2} never hold therefore they are equivalent. Note that $n=0\implies \rowpoly_i=\SkewVar^{\tfzt_i}$ for all $i\in[k]$. For any $f_i=\SkewVar^{\tfzt_i}$ and $f_j=\SkewVar^{\tfzt_j}$ with $\tfzt_i\neq \tfzt_j$ (w.l.o.g.~assuming $\tfzt_i>\tfzt_j$), there exist $g_i=1$ and $g_j=-\SkewVar^{\tfzt_i-\tfzt_j}$ such that $g_if_i+g_jf_j=0$ and hence \ref{item:equiv1} never holds. Suppose $\tfzt_1\leq\tfzt_2$, then for $\Omega=\{1,2\}$, \eqref{eq:equiv2} becomes $k-\tfzt_1\geq (k-\tfzt_1)+(k-\tfzt_2)$, which contradicts that $\deg \rowpoly_i=|\zeroSet_i|+\tfzt_i\leq k-1$. Hence, \ref{item:equiv2} never holds.

  For $(k\geq\numRows\geq 2, n\geq 1)$, we do the induction with the following hypotheses:
  \begin{enumerate}[label={\bfseries H\arabic*}]
  \item \label{H:fromPre}
    Assume that \ref{item:equiv2}$\implies$\ref{item:equiv1} is true for all parameters $(k',\numRows',n')\prec(k,\numRows,n)$.
  \item \label{H:fromHere}
    Take any $f_1,\dots, f_\numRows\in\Skewnk$ for which \ref{item:equiv2} is true for $(k,\numRows,n)$.
  \end{enumerate}

  The logic of the proof is summarized in Figure \ref{fig:proofLogic}.
  \begin{figure}[h]
    \centering
    \def\x{0.6}
\begin{tikzpicture}
  \node (i) at (0,0) {(i)};
  \node (ii) at ($(i)+(\x*4, 0)$) {(ii)};
  \node (pre) at ($(i)-(\x*4, 0)$) {$(k',s',n')$};
  \node (H1) at ($(i)+(\x*2, \x*0.5)$) {\textcolor{TUMGreenDark}{\ref{H:fromPre}}};

  \node (i2) at ($(i)-(0, \x*3)$) {(i)};
  \node (ii2) at ($(i2)+(\x*4, 0)$) {(ii)};
  \node (here) at ($(i2)-(\x*4, 0)$) {$(k,\ s,\ n)$};
  \node (H2) at ($(ii2)+(\x*1, 0)$) {\textcolor{TUMGreenDark}{\ref{H:fromHere}}};
  \node (Goal) at ($(i2)+(\x*2, -\x*0.5)$) {\textcolor{TUMRed}{Proof Goal}};

  \draw[{Implies}-, thick, double, double distance=\x*3pt, TUMGreenDark] (i.east) -- (ii.west);
  \draw[{Implies}-, thick, double, double distance=\x*3pt] (ii.south) -- (ii2.north);
  \draw[-{Implies}, thick, double, double distance=\x*3pt] (i.south) -- (i2.north);
  \draw[{Implies}-, thick, double, double distance=\x*3pt, TUMRed] (i2.east) -- (ii2.west);

  \node[rotate=-90] (step1) at ($(ii2)+(\x*0.5, \x*1.5)$) {\stepone};
  \node[rotate=90] (step2) at ($(i2)+(-\x*0.5, \x*1.5)$) {\steptwo};
  
\end{tikzpicture}
    \caption{Proof logic for $\ref{item:equiv2}\implies\ref{item:equiv1}$ with initial hypothesis \ref{H:fromPre} and \ref{H:fromHere}.}
    \label{fig:proofLogic}
  \end{figure}
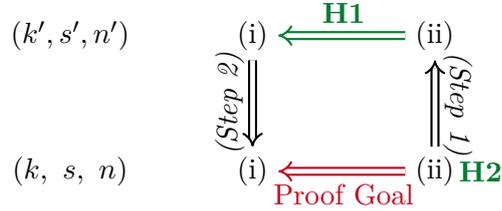

  Starting from \ref{H:fromHere}, we have that for all the subsets $\varnothing\neq \Omega\subseteq[\numRows]$, the inequality \eqref{eq:equiv2} in \ref{item:equiv2} holds.
  We prove that \ref{item:equiv1} is true for $(k,\numRows, n)$ via \stepone $\rightarrow$ \ref{H:fromPre} $\rightarrow$ \steptwo under different cases:

  \begin{enumerate}[leftmargin=3.4em,label={\bfseries \textit{Case \arabic*}}]
  \item \label{case_s3n2} For $\numRows\geq 3$ and $n\geq 2$,
    \begin{enumerate}[leftmargin=2em, label={\bfseries \textit{Case 1\alph*}}]
    \item \label{case_s3n2_a} $\forall i\in[\numRows]$, $\tfzt_i\geq 1$ (i.e., $|Z_i|\leq k-2$).
      \textit{(In this case, we do induction by reducing $k$.)} 
    \item \label{case_s3n2_b} $\exists$ a unique $i\in[\numRows]$ such that $\tfzt_i=0$.
      \textit{(In this case, we do induction by reducing $k$. We may need to reduce $s$ as well.)} 
    \item \label{case_s3n2_c} $\exists\ \Omega \subset [\numRows]$ with $2\leq |\Omega|\leq \numRows-1$ such that \eqref{eq:equiv2} holds with equality.
      \textit{(In this case, we do induction by reducing $s$.)} 
    \item \label{case_s3n2_d} $\forall\ \Omega\subset [\numRows]$ with $2\leq |\Omega|\leq \numRows-1$, \eqref{eq:equiv2} holds strictly and $\exists$ at least two $i\in[\numRows]$ such that $\tfzt_i=0$.
      \textit{(In this case, we do induction by reducing $n$.)} 
    \end{enumerate}
  \item \label{case_s2n2} For $\numRows=2$ and $n\geq 2$,
    \begin{enumerate}[leftmargin=2em, label={\bfseries \textit{Case 2\alph*}}]
    \item \label{case_s2n2_a} $\forall i\in\{1,2\}$, $\tfzt_i\geq 1$ (i.e., $|Z_i|\leq k-2$).
      \textit{(The same as \ref{case_s3n2_a}.)}
    \item \label{case_s2n2_b} $\exists$ a unique $i\in\{1,2\}$ such that $\tfzt_i=0$.
      \textit{(The same as \ref{case_s3n2_b}.)}
    \item \label{case_s2n2_c} $\forall i\in\{1,2\}$, $\tfzt_i=0$.
      \textit{(In this case, we do induction by reducing $n$.)} 
    \end{enumerate}
  \item \label{case_s2n1} For $\numRows\geq 2$ and $n=1$,
    \begin{enumerate}[leftmargin=2em, label={\bfseries \textit{Case 3\alph*}}]
    \item \label{case_s2n1_a} $\forall i\in[\numRows]$, $\tfzt_i\geq 1$ (i.e., $|Z_i|\leq k-2$).
      \textit{(The same as \ref{case_s3n2_a}.)}
    \item \label{case_s2n1_b} $\exists$ a unique $i\in\{1,2\}$ such that $\tfzt_i=0$.
      \textit{(The same as \ref{case_s3n2_b}.)}
    \item \label{case_s2n1_c} $\exists$ at least two $i\in[\numRows]$, $\tfzt_i=0$.
      \textit{(We show that this case cannot happen if \ref{item:equiv2} is true for $(k\geq s\geq 2, n=1)$.)}
    \end{enumerate}
  \end{enumerate}
  We illustrate the reduction of $s$ and $n$ in the induction under these cases in \cref{fig:induction_reducing}. We omitted the parameter $k$ for clarity and simplicity, since only $s,n$ are essential in classifying the different cases. The elaborated proofs for each case are presented in \cref{appendix:induction-proof-general-result}.
  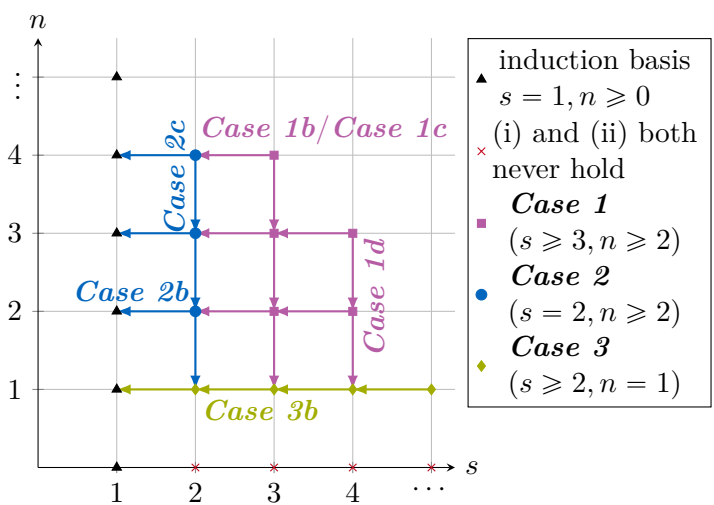
\begin{figure}[htb]
    \centering
    \begin{tikzpicture}

\begin{axis}[
    xmin=0, xmax=5.3,
    ymin=0, ymax=5.5,
    xlabel=$s$,
    ylabel=$n$,
    grid=major,
    xtick={0,1,2,3,4,5},
    xticklabels = {$0$,$1$,$2$, $3$, $4$, $\cdots$},
    ytick={0,1,2,3,4,5,6,7},
    yticklabels={$0$,$1$,$2$, $3$, $4$, $\vdots$},
    unit vector ratio=1 1,
    legend pos=outer north east,
    legend style={cells={align=left}},
    axis lines=middle,
    xlabel style={right},
    ylabel style={above}
]

\addplot[only marks, mark=triangle*, mark size=2pt, color=black] table {
    1 0
    1 1
    1 2
    1 3
    1 4
    1 5
};
\addlegendentry{induction basis\\$s=1,n\geq 0$}

\addplot[only marks, mark=x, mark size=2pt, color=TUMRed] table {
    2 0
    3 0
    4 0
    5 0
};
\addlegendentry{\ref{item:equiv1} and \ref{item:equiv2} both\\never hold}

\addplot[only marks, mark=square*, mark size=1.5pt,color=TUMPink] table {
    3 2
    4 2
    3 3
    4 3
    3 4
};
\addlegendentry{\ref{case_s3n2}\\($s\geq 3,n\geq 2$)}

\addplot[only marks, mark=*, mark size=2pt,color=TUMBlue] table {
    2 2
    2 3
    2 4
};
\addlegendentry{\ref{case_s2n2}\\($s=2, n\geq 2$)}
\addplot[only marks, mark=diamond*, mark size=2pt,color=TUMGreen] table {
    2 1
    3 1
    4 1
    5 1
};
\addlegendentry{\ref{case_s2n1}\\($s\geq 2, n=1$)}

\draw[-latex, thick, color=TUMPink] (3,4) -- (3,3);
\draw[-latex, thick, color=TUMPink] (3,3) -- (3,2);
\draw[-latex, thick, color=TUMPink] (3,2) -- (3,1);
\draw[-latex, thick, color=TUMPink] (4,3) -- node [below, yshift=-2ex, rotate=90] {\ref{case_s3n2_d}} (4,2);
\draw[-latex, thick, color=TUMPink] (4,2) -- (4,1);

\draw[-latex, thick, color=TUMPink] (3,4) -- node [above, xshift=7ex] {\ref{case_s3n2_b}/\ref{case_s3n2_c}} (2,4);
\draw[-latex, thick, color=TUMPink] (4,2) -- (3,2);
\draw[-latex, thick, color=TUMPink] (3,2) -- (2,2);
\draw[-latex, thick, color=TUMPink] (4,3) -- (3,3);
\draw[-latex, thick, color=TUMPink] (3,3) -- (2,3);

\draw[-latex, thick, color=TUMBlue] (2,2) -- node [above, xshift=-2ex] {\ref{case_s2n2_b}} (1,2);
\draw[-latex, thick, color=TUMBlue] (2,3) -- (1,3);
\draw[-latex, thick, color=TUMBlue] (2,4) -- (1,4);
\draw[-latex, thick, color=TUMBlue] (2,4) -- node [above, yshift=1.3ex, rotate=90] {\ref{case_s2n2_c}} (2,3);
\draw[-latex, thick, color=TUMBlue] (2,3) -- (2,2);
\draw[-latex, thick, color=TUMBlue] (2,2) -- (2,1);

\draw[-latex, thick, color=TUMGreen] (2,1) -- (1,1);
\draw[-latex, thick, color=TUMGreen] (3,1) -- node [below, xshift=2ex] {\ref{case_s2n1_b}} (2,1);
\draw[-latex, thick, color=TUMGreen] (4,1) -- (3,1);
\draw[-latex, thick, color=TUMGreen] (5,1) -- (4,1);

\end{axis}

\end{tikzpicture}

    \caption{Illustration of the induction for \ref{item:equiv2}$\implies$\ref{item:equiv1} under difference cases.}
    \label{fig:induction_reducing}
  \end{figure}

\section{Applications in Multi-Source Network Coding}
\label{sec:multi-source-net-cod}
\emph{Distributed multi-source networks} were studied in \cite{halbawi2014distributed, halbawi2014distributedGab}, where supported-constrained error-correcting codes were used to achieve reliable communication against malicious (or failed) nodes in the network. In this section, we introduce a scheme to design \emph{distributed LRS codes} for any such network instance. The scheme illustrates how the necessary and sufficient conditions derived in \cref{sec:support-constrained-LRS} can be used as constraints in a linear programming problem to design the parameters of desired distributed LRS codes.

  Consider a distributed multi-source network as illustrated in \cref{fig:RLNmodel}. The receiver at the sink intends to obtain all the messages in a set $\cM$ by downloading through an $\Fq$-linear network from multiple source nodes. Each source node has access to only a few messages in $\cM$. This access is assumed to have unlimited link capacity (e.g., the source nodes store the subset of $\cM$ locally).
The topology of the $\Fq$-linear network is not known to the source nodes nor to the sink; therefore, it is a \emph{non-coherent} communication scenario.
This model can find its applications in data sharing platforms, sensor networks, satellite communication networks and MIMO (multiple-input multiple-output) antenna communication systems, etc.
  \begin{figure}[h]
    \centering
    \def\x{0.55}

\begin{tikzpicture}
  [font=\normalsize,>=stealth',
  mycircle/.style={circle, draw=TUMGray, very thick, text width=.1em, minimum height=1.5em, text centered},
  myrectangle/.style={rectangle, draw=TUMGray, very thick, text width=\linewidth, minimum height=1.5em, text centered},
  mycircle_small/.style={circle,draw=TUMGray!90,very thick, inner sep=0,minimum size=1em,text centered},
  mylink/.style={color=TUMBlueDark, thick, dashed},
  mylink_net/.style={color=TUMGreen, thick},
  myarrow/.style={->, color=black, thick},
  myarc/.style={color=TUMOrange, thick},
  ]
  \coordinate (Mset) at (0*\x,4*\x);
  {\node[mycircle,label=above:{$\cM=\{\bm_1,\bm_2,\dots,\bm_h\}$}] (Mset) {};}
  \node[mycircle_small,below left = \x*40pt and \x*100pt of Mset] (M0) {$S_{\cJ_1}$};
  \node[mycircle_small,right = \x*20pt of M0] (M1) {$S_{\cJ_2}$};
  \node[mycircle_small,right = \x*20pt of M1] (M2) {$S_{\cJ_3}$};
  \node[draw=none,right = \x*20pt of M2] (M3) {$\dots$};
  \node[mycircle_small,right = \x*20pt of M3] (M4) {$S_{\cJ_s}$};

  \node[draw=none,right= 15pt of M4] (cJ) {$\cJ_i\subset [h], \forall i \in[s]$};
  \node[below left= \x*30pt and \x*10pt of M0] (R0) {};
  \node[below right= \x*100pt and \x*10pt of M4] (R4) {};
  \draw[myrectangle] (R0) rectangle (R4) node[pos=.5, text width=5cm] (RLN) {$\Fq$-Linear network with a $(t,\rho)$-adversary};

  \path[] (Mset) edge[mylink] (M0.north)
  edge[mylink] node [midway] (link1) {} (M1.north)
  edge[mylink] (M2.north)
  edge[mylink] node [midway] (links) {} (M4.north);

  \node[draw=none,above right= -\x*12pt and \x*4em of links, text width=\x*9.5cm] (T1) {access to the messages indexed in $\cJ_s$};
  \path[] (T1) edge[myarrow] (links);

  \path[-latex] (M0.south) edge[mylink_net] node (S1N) {} ($(M0.south)-(0, \x*32pt)$);
  \path[-latex] (M1.south) edge[mylink_net] node (S2N) {} ($(M1.south)-(0, \x*32pt)$);
  \path[-latex] (M2.south) edge[mylink_net] node (S3N) {} ($(M2.south)-(0, \x*32pt)$);
  \path[-latex] (M4.south) edge[mylink_net] node (SsN) {} ($(M4.south)-(0, \x*32pt)$);

  \draw [myarc] ($(S1N)-(\x*10pt,\x*3pt)$) arc (130:70:\x*15pt);
  \node[draw=none, myarc] at ($(M0.south)-(\x*16pt, \x*8pt)$) (c1) {$n_{\cJ_1}$};
  \draw [myarc] ($(S2N)-(\x*10pt,\x*3pt)$) arc (125:65:\x*15pt);
  \node[draw=none, myarc] at ($(M1.south)-(\x*16pt, \x*8pt)$) (c2) {$n_{\cJ_2}$};
  \draw [myarc] ($(S3N)-(\x*10pt,\x*3pt)$) arc (120:60:\x*15pt);
  \node[draw=none, myarc] at ($(M2.south)-(\x*16pt, \x*8pt)$) (c3) {$n_{\cJ_3}$};
  \draw [myarc] ($(SsN)-(\x*10pt,\x*3pt)$) arc (115:55:\x*15pt);
  \node[draw=none, myarc] at ($(M4.south)-(\x*16pt, \x*8pt)$) (cJ) {$n_{\cJ_s}$};

  \node[draw=none] at ($(RLN.south)-(-\x*70pt,\x*50pt)$) (sink1) {};
  \node[draw=none] at ($(RLN.south)-(\x*70pt,\x*80pt)$) (sink2) {};
  \draw[myrectangle] (sink1) rectangle (sink2) node[pos=.5] (sink) {Sink};
  \node[right=-\x*28em of sink,text width=5.7cm] (recover) {Collects $N$ packets to recover all $h$ messages};

  \path[-latex] ($(RLN.south)-(0, \x*10pt)$) edge[mylink_net] (sink.north);
  \draw [myarc] ($(sink.north)-(\x*8pt,-\x*10pt)$) arc (130:70:\x*15pt);
  \node[draw=none, myarc] at ($(sink.north)-(\x*16pt, -\x*10pt)$) (sinkN) {$N$};




\end{tikzpicture}

    \caption{Illustration of the distributed multi-source network model.}
    \label{fig:RLNmodel}
  \end{figure}
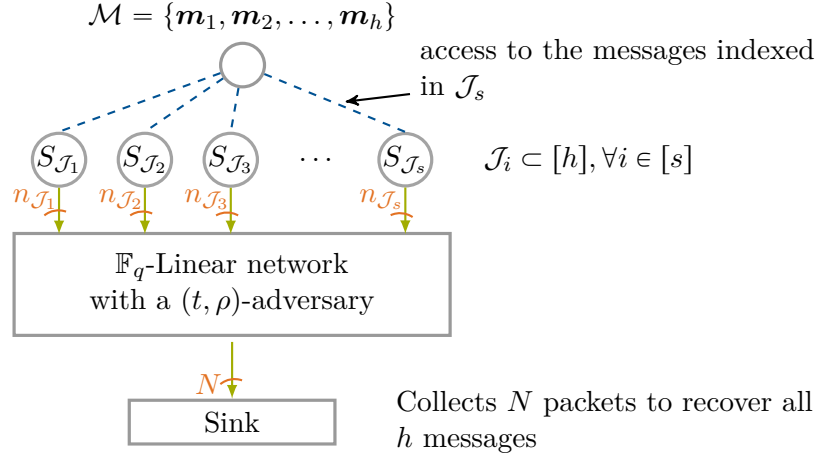

  The set $\cM$ contains $h$ messages. The message $\bm_j, j\in[h]$ is composed of $r_j$ symbols over $\Fqm$, i.e., $\bm_j\in\Fqm^{r_j}$.
  The source node $S_{\cJ_i}, i\in[s]$ has access only to the messages indexed in $\cJ_i$, e.g., if $\cJ_2=\{3,6\}$, then $S_{\cJ_2}$ only has the access to the messages $\bm_3$ and $\bm_6$.
  Let $\cS=\{\cJ_1,\dots, \cJ_s\}$. For any $\cJ\in\cS$, the source node $S_{\cJ}$ encodes the messages $\bm_j, j\in\cJ$, into $n_{\cJ}$ symbols over $\Fqm$, denoted by $\bc_{\cJ}\in\Fqm^{n_\cJ}$. It then extends them to their matrix representation over $\Fq$, denoted by $\bC_{\cJ}\in\Fq^{m\times n_{\cJ}}$, and then generates $\bX_{\cJ}= (\0\ \cdots \ \bI_{n_{\cJ}}\ \0 \ \cdots\ \bC_{\cJ}^{\top})\in\Fq^{n_{\cJ}\times (n+m)}$, where $n=\sum_{\cJ\in\cS}n_{\cJ}$. We call each row of $\bX_{\cJ}$ a \emph{packet}.
  Denote $\bX=
  \begin{pmatrix}
    \bX_{\cJ_1}\\
    \vdots\\
    \bX_{\cJ_s}
  \end{pmatrix}\in\Fq^{n\times (n+m)}
  $, where the rows are the packets transmitted by all the source nodes into the $\Fq$-linear network.
  The task is to design $n_\cJ$ for all $\cJ\in\cS$ such that the sink can recover all the messages $\bm_i$. The goal of the design is to minimize the total number of packets $n$.
  A concrete example is given in \cref{sec:ex-dist-LRS}.

  In the $\Fq$-linear network, whenever there is a transmission opportunity, a relay node in the network produces and sends an arbitrary $\Fq$-linear combination of all the incoming packets they have received.
  Suppose that there are at most $t$ \emph{malicious} nodes that inject erroneous packets and at most $\rho$ \emph{frozen} nodes that do not send any packet, which we refer as a \emph{$(t,\rho)$-adversary}.
 The sink collects $N\geq n-\rho$ packets, which are represented by the rows of $\bY\in\Fq^{N\times (n+m)}$.
 The transmitted packets (rows of $\bX$) and the received packets (rows of $\bY$) can be related via the following network equation:
  \begin{align}
    \label{eq:net-equation}
    \bY=\bA\bX+\bE\ ,
  \end{align}
  where $\bA\in\Fq^{N\times n}$ is the \emph{transfer} matrix of the network and the difference between the number of columns and its row-rank is at most $\rho$. In other words, $n-\rank(\bA)\leq \rho$.
  $\bE\in\Fq^{N\times M}$ is an error matrix of $\rank(\bE)\leq t$.
  Note that the matrices $\bA$ and $\bE$ are not known to any of the source nodes or the sink since we consider a non-coherent communication scenario.

  The \emph{capacity region} of a multi-source network with $h$ messages is a set $\set*{(r_1,\dots, r_h)}\subseteq\bbN^h$ such that the receiver at the sink can recover all the messages $\bm_j\in\Fqm^{r_j}, j\in[h]$.
  The capacity region of a multi-source network against a $(t,\rho)$-adversary has been given in \cite[Theorem 2]{dikaliotis2011multiple} (for $\rho=0$) and \cite[Corollary 66]{ravagnani2018adversarial}.
  To present the result, we require the following definitions of \emph{min-cut}.
  \begin{definition}[Min-cut between a set of nodes and another node]
    For a directed graph $\cG(\cV,\cE)$ composed of a set of nodes $\cV$ and a set of edges $\cE$, a \emph{cut} between a set of nodes $\cV'\subset\cV$ and another node $t\in\cV\setminus\cV'$ is a subset of edges $\cE_{\cV',t}\subseteq\cE$ such that, after removing the edges in $\cE_{\cV',t}$, there is no path from any of the nodes in $\cV'$ to $t$. The \emph{min-cut} between $\cV'$ and $t$ is the smallest cardinality of a cut between $\cV'$ and $t$.
  \end{definition}
  \begin{definition}[Min-cut between a subset of messages and the sink]
    Consider the distributed multi-source network with $h$ messages as above. Given a subset of messages, $\cJ'\subseteq[h]$, consider the set of source nodes $\cV'_{\cJ'}$ that contain messages in $\cJ'$, namely,
    \begin{align*}
      \cV'_{\cJ'} = \set*{\left. S_\cJ \in \cS \ \right|\ \cJ\cap\cJ'\neq\varnothing}\ .
    \end{align*}
    We define the \emph{min-cut between $\cJ'$ and the sink} as the min-cut between $\cV'_{\cJ'}$ and the sink, and denote it by $\mincut_{\cJ'}$.
  \end{definition}
  \begin{theorem}[\hspace{1pt}\cite{dikaliotis2011multiple, ravagnani2018adversarial}]
    \label{thm:capacity-distributed-network}
    Consider a multi-source network with $h$ messages. For any $(r_1,\dots, r_h)\in\bbN^h$ in the capacity region against a $(t,\rho)$-adversary, we have
    \begin{align}
      \label{eq:multi-source-capacity}
      \forall \varnothing \neq \cJ'\subseteq [h],\ \sum_{i\in\cJ'} r_i\leq \mincut_{\cJ'} - 2t-\rho\ ,
    \end{align}
    where $\mincut_{\cJ'}$ is the min-cut between the set ${\cJ'}$ of messages and the sink.
    \end{theorem}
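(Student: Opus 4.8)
The plan is to establish this outer bound by a cut-set argument localized to the min-cut separating the relevant sources from the sink, combined with an adversarial indistinguishability construction on that cut. Fix a nonempty $\cJ'\subseteq[h]$ and write $\mincut=\mincut_{\cJ'}$ for the min-cut between $\cV'_{\cJ'}$ and the sink; let $\cE^*$ be a cut of minimum size $\mincut$. The first, purely combinatorial, observation is that every path from a source node holding a message indexed in $\cJ'$ to the sink uses at least one edge of $\cE^*$, so all information the sink ever receives about $\{\bm_i\}_{i\in\cJ'}$ is a deterministic $\Fq$-linear function of the $\mincut$ packets traversing $\cE^*$. Collapsing the network at $\cE^*$, the effective end-to-end relation restricted to messages in $\cJ'$ has the form of the network equation \eqref{eq:net-equation}, now with the transfer matrix having at most $\mincut$ columns corresponding to the cut packets.

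Next I would bring in the adversary and push its effect onto the cut. A $(t,\rho)$-adversary acting anywhere upstream of the sink is dominated, as far as the cut is concerned, by: (i) at most $\rho$ of the $\mincut$ cut packets being suppressed or forced into linear dependence by the frozen nodes, which drops the rank of the (localized) transfer matrix by up to $\rho$ and behaves like $\rho$ coordinate erasures among the $\mincut$ cut coordinates; and (ii) the $t$ malicious nodes injecting an additive error confined to the cut packets of rank at most $t$. The standard rank-metric channel analysis then applies: treating the $\mincut$ cut packets as a ``cut code'' of length $\mincut$, reliable recovery in the presence of every rank-$t$ error after up to $\rho$ erased coordinates forces the decodable information to be at most $\mincut-\rho-2t$ symbols over $\Fqm$ — this is exactly the Singleton bound in the rank metric applied to the cut code after puncturing $\rho$ coordinates and requiring minimum rank distance at least $2t+1$ on the remainder.

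To make this a rigorous converse I would argue by contradiction: suppose $\sum_{i\in\cJ'}r_i\ge \mincut-2t-\rho+1$. A dimension (pigeonhole) count on the $\mincut$-dimensional cut space, the at-most-$\rho$-dimensional space of erasure patterns, and the at-most-$t$-dimensional space of injectable errors then yields two message assignments $\bm\neq\tilde{\bm}$ that agree on all coordinates outside $\cJ'$ but differ somewhere in $\cJ'$, together with legal adversary behaviours (a choice of which $\le\rho$ nodes freeze, which $\le t$ nodes are malicious, and what they inject) producing the \emph{same} received matrix $\bY$ at the sink. Since the sink's decoder is a deterministic function of $\bY$, it must fail in at least one of the two scenarios, contradicting membership of $(r_1,\dots,r_h)$ in the capacity region. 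Hence $\sum_{i\in\cJ'}r_i\le \mincut_{\cJ'}-2t-\rho$; since $\cJ'$ was arbitrary, the theorem follows.

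The main obstacle I anticipate is making the ``push the adversary onto the cut'' step fully precise: one must show that, for \emph{every} min-cut and \emph{every} upstream topology, the worst-case combined effect of $\rho$ frozen and $t$ malicious nodes is realized (or dominated) by $\rho$ rank losses plus a rank-$t$ additive error supported on the $\mincut$ cut packets, \emph{and} that this worst case is attainable by a single legal adversary. This requires careful bookkeeping with the transfer matrix $\bA$ in \eqref{eq:net-equation} (its rank deficiency bounded by $\rho$) and the rank-$t$ structure of $\bE$, restricted to $\cE^*$ and to the super-source $\cV'_{\cJ'}$. The cited works \cite{dikaliotis2011multiple,ravagnani2018adversarial} carry out exactly this reduction for the single-source adversarial model, and I would follow their treatment, specializing it to the super-source $\cV'_{\cJ'}$ obtained by merging the source nodes whose message index sets intersect $\cJ'$.
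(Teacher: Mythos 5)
The paper itself does not prove this theorem: it is imported as a cited result from \cite{dikaliotis2011multiple} and \cite{ravagnani2018adversarial}, stated only so it can be used as a capacity-region constraint in the distributed-LRS design scheme of \cref{sec:distributed-LRS}. There is therefore no in-paper proof against which to compare yours.

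As a self-contained converse sketch, yours has the right shape (localize to a min-cut $\cE^*$ for $\cJ'$, absorb the adversary's effect onto the cut, then close with a rank-metric Singleton/indistinguishability argument), and you correctly flag the one genuinely hard step — showing that \emph{any} placement of the $t$ malicious and $\rho$ frozen nodes is dominated by a rank-$t$ injection plus a rank-$\rho$ deficiency realised on the cut, and that some legal adversary actually attains it. Two finer points would also need repair before this becomes a proof. First, you conflate two different degradations: the $\rho$ frozen nodes create a \emph{rank deficiency} of the transfer matrix $\bA$ (a ``row-space'' erasure, as in \eqref{eq:sumrank-weight-erasure}), not a puncturing of $\rho$ coordinates of a length-$\mincut$ block code, and the rank-metric Singleton bound behaves differently under these two operations; the correct accounting is via $\rank(\bA)\ge \mincut-\rho$ and $\rank(\bE)\le t$ on the collapsed cut channel, exactly as set up in \eqref{eq:net-equation} and the sum-rank discussion of \cref{sec:sum-rank-weight-rank-error}. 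Second, your indistinguishability pigeonhole must respect the multi-source structure: when you construct $\bm\neq\tilde\bm$ agreeing outside $\cJ'$, the encoders of sources in $\cV'_{\cJ'}$ may also depend on coordinates outside $\cJ'$, so the two scenarios can differ in \emph{all} packets those sources emit; you need the cut argument to neutralise this, which the cited super-source reduction handles but your one-line dimension count glosses over. These are the gaps I would expect \cite[Theorem~2]{dikaliotis2011multiple} and \cite[Corollary~66]{ravagnani2018adversarial} to close, and you rightly defer to them; but as written, your proposal is a plausible outline rather than a proof.
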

  In addition to the general settings, we further assume the following setup of the non-coherent network:
  \begin{itemize}
  \item
    The communication capacity of the non-coherent linear network is large enough so that the min-cut $\mincut_{\cJ'}$ for all $\cJ'\subseteq[h]$ is determined by the number of encoded symbols $n_{\cJ}$ sent by the source node $S_{\cJ}$ for all $\cJ\in\cS$, i.e.,
      \begin{align*}
        \mincut_{\cJ'}= n - \sum_{\substack{\cJ\in\cS\\\cJ\subseteq [h]\setminus\cJ'}}n_{\cJ}\ .
      \end{align*}
      Note that the term $\sum_{\substack{\cJ\in\cS\\\cJ\subseteq [h]\setminus\cJ'}} n_{\cJ}$ is the total number of encoded symbols that do not contain any information about the messages in $\cJ'$.
  \item Although the encoding is distributed (since each source node may access only a few messages), there is a centralized coordination unit designing the overall code, and the sink knows the distributed code.
  \end{itemize}

  \subsection{Sum-Rank Weight of Error and Erasure with Constrained Rank Weight}
  \label{sec:sum-rank-weight-rank-error}
  In the following, we intend to use LRS codes for the distributed multi-source linear network model.
  Note that the errors and erasures in the $(t,\rho)$-adversarial model are measured in the rank metric. However, LRS codes are used to deal with errors and erasures in the sum-rank metric.
  Hence, we first look into the sum-rank deficiency of the network transfer matrix $\bA\in\Fq^{ N\times n}$ and the sum-rank weight of the error matrix $\bE\in\Fq^{N\times M}$.

  Let $\ell\in\bbN$ and $\bn_\ell=(n_1,\dots, n_\ell)$ be an ordered partition of $n$.
  By \cref{lem:rank-leq-sumrank-mat}, we have
  \begin{align}
    \label{eq:sumrank-weight-erasure}
    \wtSR{\bn_\ell}(\bA)\geq \rank(\bA)\geq n-\rho\ .
  \end{align}
  Hence the sum-rank weight of the erasure induced by the rank-deficient $\bA$ is at most $\rho$.

  For the error $\bE$, consider an ordered partition $\bN_{\ell}=(N_1,\dots, N_\ell)$ of $N$ such that

  \pgfkeys{tikz/mymatrixenv/.style={decoration={brace},every left delimiter/.style={xshift=5pt, yshift=-1pt},every right delimiter/.style={xshift=-5pt, yshift=-1pt}}}

\pgfkeys{tikz/mymatrix/.style={matrix of math nodes,nodes in empty cells,left delimiter={(},right delimiter={)},inner sep=2pt,outer sep=2pt,column sep=4pt,row sep=4pt,nodes={minimum width=3pt,minimum height=2pt,anchor=center,inner sep=0pt,outer sep=0pt}}}

\pgfkeys{tikz/mymatrixbrace/.style={decorate,thick}}

\tikzset{style green/.style={
    set fill color=TUMGreenDark!80!lime!20,fill opacity=0.3,
    set border color=TUMGreenDark!60!lime!40,draw opacity=1.0,
  },
  style cyan/.style={
    set fill color=cyan!90!blue!60, draw opacity=0.4,
    set border color=blue!70!cyan!30,fill opacity=0.1,
  },
  style orange/.style={
    set fill color=TUMOrange!80,fill opacity=0.3,
    set border color=TUMOrange!90,  draw opacity=0.8,
  },
  style brown/.style={
    set fill color=brown!70!orange!40, draw opacity=0.4,
    set border color=brown, fill opacity=0.3,
  },
  style purple/.style={
    set fill color=violet!90!pink!20, draw opacity=0.5,
    set border color=violet, fill opacity=0.3,
  },
  style black/.style={
    set fill color=none, fill opacity=0.3,
    set border color=black, draw opacity=0.5,
  },
  kwad/.style={
    above left offset={-0.07,0.23},
    below right offset={0.07,-0.23},
    #1
  },
  pion/.style={
    above left offset={-0.07,0.2},
    below right offset={0.07,-0.32},
    #1
  },
  border/.style={
    above left offset={-0.03,0.18},
    below right offset={0.03,-0.3},
    #1
  },
  set fill color/.code={\pgfkeysalso{fill=#1}},
  set border color/.style={draw=#1}
}

\[  
  \begin{tikzpicture}[baseline={-0.5ex},mymatrixenv]
    \matrix [mymatrix,row sep=7pt] (A) 
    {
      \bA_{1,1} & \bA_{1,2} & \cdots & \bA_{1,\ell} \\
      \bA_{2,1} & \bA_{2,2} & \cdots & \bA_{2,\ell} \\
      \vdots & \vdots & \vdots & \vdots \\
      \bA_{\ell,1} & \bA_{\ell,2} & \cdots & \bA_{\ell,\ell} \\
    };
    \matrix [right= 8pt of A, mymatrix, inner sep=2pt] (X)
    {
      \bX_{1} \\
      \bX_{2} \\
      \vdots \\
      \bX_{\ell} \\
    };
    \node[right =30pt of X] (plus) {$+$};

    \matrix [mymatrix, right = 5pt of plus, row sep=8pt, column sep=6pt] (E)
    {
      \bE_1\\
      \bE_2\\
      \vdots\\
      \bE_{\ell}\\
    };
    \node[right=30pt of E] (period) {.};
    \node[left =30pt of A] (eq) {$\bA\bX+\bE =$};
        \mymatrixbracebottom{A}{1}{1}{$n_1$}
        \mymatrixbracebottom{A}{2}{2}{$n_2$}
        \mymatrixbracebottom{A}{4}{4}{$n_{\ell}$}
        \mymatrixbraceleft{A}{1}{1}{$N_1$}
        \mymatrixbraceleft{A}{2}{2}{$N_2$}
        \mymatrixbraceleft{A}{4}{4}{$N_{\ell}$}

        \mymatrixbraceright{X}{1}{1}{$n_1$}
        \mymatrixbraceright{X}{2}{2}{$n_2$}
        \mymatrixbraceright{X}{4}{4}{$n_4$}
        \mymatrixbracebottom{X}{1}{1}{$M$}

        \mymatrixbracebottom{E}{1}{1}{$M$}
        \mymatrixbraceright{E}{1}{1}{$N_1$}
        \mymatrixbraceright{E}{2}{2}{$N_2$}
        \mymatrixbraceright{E}{4}{4}{$N_{\ell}$}

    \end{tikzpicture}
\]

  Given $\rank(\bE)\leq t$, by \cref{lem:rank-leq-sumrank-mat}, we have
  \begin{align}
    \label{eq:sumrank-weight-E}
    \wtSR{\bN_\ell}(\bE) = \sum_{i=1}^{\ell} \rank(\bE_i)
    \leq \sum_{i=1}^{\ell} \rank(\bE) = \ell t\ .
  \end{align}
  This upper bound holds for any arbitrary ordered partition $\bN_{\ell}$ of $N$.
  A lower bound on $\Pr[\left. \wtSR{\bN_\ell}(\bE)=\ell t\ \right|\ \rank(\bE)=t]$ (i.e., the probability that \eqref{eq:sumrank-weight-E} is tight) for small $t$ ($t\leq N_i, \forall i\in[\ell]$) is given in \cite[Theorem 1]{couvee2023notes}. In particular, if $q\geq \ell+1$, then $\Pr[\wtSR{\bN_\ell}(\bE)=\ell t\ |\ \rank(\bE)=t]>1/4$ \cite[Corollary 1]{couvee2023notes}.

  It can been seen from \eqref{eq:sumrank-weight-erasure} and \eqref{eq:sumrank-weight-E} that the network model in \eqref{eq:net-equation} results in an erasure of sum-rank weight at most $\rho$ and an error of sum-rank weight at most $\ell t$.
  It has been shown in \cite[Theorem 1, Eq.(4), Proposition 2]{martinez2019reliable} that a code with sum-rank distance $d$ can guarantee reliable communication against errors of sum-rank weight at most $\ell t$ and erasures with sum-rank weight at most $\rho$ in the non-coherent communication if $d \geq 2\ell t+\rho+1$.
  Therefore, an LRS code with sum-rank distance $d\geq 2\ell t+\rho+1$ can correct any error and erasure in the $(t,\rho)$-adversarial model.
  \subsection{Example of Distributed LRS codes}
  \label{sec:ex-dist-LRS}
  We first give a toy example to show the usage of LRS codes in a distributed multi-source network, and then provide a general scheme to design \emph{distributed LRS codes} for arbitrary distributed multi-source networks in \cref{sec:distributed-LRS}.

  Consider the following toy example of the network illustrated in \cref{fig:RLNmodel}: There are $h=4$ messages in $\cM$. The lengths of messages are $(r_1, r_2, r_3, r_4)=(1,3,2,3)$. There are $4$ source nodes and each can access to only $3$ messages, i.e., $\cJ_1= \{1,2,3\}, \cJ_2= \{1,2,4\}, \cJ_3 = \{1,3,4\}, \cJ_4 = \{2,3,4\}$.
  Suppose there is a $(2,2)$-adversary in the $\Fq$-linear network.

  The number of encoded packets from each source node is $(n_{\cJ_1}, n_{\cJ_2}, n_{\cJ_3}, n_{\cJ_4})= (6,7,2,8)$ (see \ref{item:dist-LRS-LP} in \cref{sec:distributed-LRS} for the computation of these values) and $n=\sum_{i=1}^4 n_{\cJ_i} = 23$.
  Let $\bm=(\bm_1,\bm_2,\bm_3,\bm_4)$ be a concatenated vector of all the messages. Some entries in a encoding matrix $\bG$ are forced to be $0$, as shown in \cref{fig:eg-support-constrained-LRS}, so that $\bm\cdot \bG$ represents the overall encoding at all source nodes.
  \begin{figure}[h]
    \centering
    \pgfkeys{tikz/mymatrixenv/.style={decoration={brace},every left delimiter/.style={xshift=5pt, yshift=-1pt},every right delimiter/.style={xshift=-5pt, yshift=-1pt}}}

\pgfkeys{tikz/mymatrix/.style={matrix of math nodes,nodes in empty cells,left delimiter={(},right delimiter={)},inner sep=1pt,outer sep=3pt,column sep=4pt,row sep=4pt,nodes={minimum width=9pt,minimum height=6pt,anchor=center,inner sep=0pt,outer sep=0pt}}}

\pgfkeys{tikz/mymatrixbrace/.style={decorate,thick}}

\tikzset{style green/.style={
    set fill color=TUMGreenDark!80!lime!20,fill opacity=0.3,
    set border color=TUMGreenDark!60!lime!40,draw opacity=1.0,
  },
  style cyan/.style={
    set fill color=cyan!90!blue!60, draw opacity=0.4,
    set border color=blue!70!cyan!30,fill opacity=0.1,
  },
  style orange/.style={
    set fill color=TUMOrange!80,fill opacity=0.3,
    set border color=TUMOrange!90,  draw opacity=0.8,
  },
  style brown/.style={
    set fill color=brown!70!orange!40, draw opacity=0.4,
    set border color=brown, fill opacity=0.3,
  },
  style purple/.style={
    set fill color=violet!90!pink!20, draw opacity=0.5,
    set border color=violet, fill opacity=0.3,
  },
  kwad/.style={
    above left offset={-0.07,0.23},
    below right offset={0.07,-0.23},
    #1
  },
  pion/.style={
    above left offset={-0.03,0.18},
    below right offset={0.07,-0.23},
    #1
  },
  poz/.style={
    above left offset={-0.03,0.18},
    below right offset={0.03,-0.3},
    #1
  },set fill color/.code={\pgfkeysalso{fill=#1}},
  set border color/.style={draw=#1}
}

\[
   \bG =
    \begin{tikzpicture}[baseline={-0.5ex},mymatrixenv]
        \matrix [mymatrix,column sep=3pt,inner sep=2pt] (G)
        {
          \tikzmarkin[kwad=style green]{LRS-first} \times  &  \times &  \times&  \times&  \times&  \times&
          \hphantom{\times} &  \times&  \times& \tikzmarkin[kwad=style purple]{LRS-second} \times &  \times&  \times&  \times & \times&
          \hphantom{\times} &  \times&  \times&
          \tikzmarkin[pion=style orange]{LRS-third} \hphantom{\times} & \textcolor{TUMRed}{0} & \textcolor{TUMRed}{0} & \textcolor{TUMRed}{0} & \textcolor{TUMRed}{0} & \textcolor{TUMRed}{0} & \textcolor{TUMRed}{0} & \textcolor{TUMRed}{0} & \textcolor{TUMRed}{0}\\
    \times&  \times& \times&  \times& \times&  \times&  \hphantom{\times} &
    \times&  \times& \times&  \times& \times&  \times&  \times & \hphantom{\times} &
    \textcolor{TUMRed}{0} & \textcolor{TUMRed}{0} & \hphantom{\times}&
    \times&  \times& \times&  \times& \times & \times&  \times& \times\\
    \times&  \times& \times&  \times& \times&  \times&  \hphantom{\times} &
    \times&  \times& \times&  \times& \times&  \times&  \times & \hphantom{\times} &
    \textcolor{TUMRed}{0} & \textcolor{TUMRed}{0} & \hphantom{\times}&
    \times&  \times& \times&  \times& \times & \times&  \times& \times\\
    \times&  \times& \times&  \times& \times&  \times&  \hphantom{\times} &
    \times&  \times& \times&  \times& \times&  \times&  \times & \hphantom{\times} &
    \textcolor{TUMRed}{0} & \textcolor{TUMRed}{0} & \hphantom{\times}&
    \times&  \times& \times&  \times& \times & \times&  \times& \times\\
    \times&  \times& \times&  \times& \times&  \times&  \hphantom{\times} &
    \textcolor{TUMRed}{0} & \textcolor{TUMRed}{0} & \textcolor{TUMRed}{0} & \textcolor{TUMRed}{0} & \textcolor{TUMRed}{0} & \textcolor{TUMRed}{0} & \textcolor{TUMRed}{0} &  \hphantom{\times} &
    \times&  \times& \hphantom{\times} &
    \times&  \times& \times&  \times& \times&  \times&  \times&  \times\\
    \times&  \times& \times&  \times& \times&  \times&  \hphantom{\times} &
    \textcolor{TUMRed}{0} & \textcolor{TUMRed}{0} & \textcolor{TUMRed}{0} & \textcolor{TUMRed}{0} & \textcolor{TUMRed}{0} & \textcolor{TUMRed}{0} & \textcolor{TUMRed}{0} &  \hphantom{\times} &
    \times&  \times& \hphantom{\times} &
    \times&  \times& \times&  \times& \times&  \times&  \times&  \times\\
    \textcolor{TUMRed}{0}  & \textcolor{TUMRed}{0} & \textcolor{TUMRed}{0} & \textcolor{TUMRed}{0} & \textcolor{TUMRed}{0} & \textcolor{TUMRed}{0} & \hphantom{\times} &
    \times & \times & \times & \times & \times & \times & \times &\hphantom{\times} &
    \times & \times &\hphantom{\times} &
    \times  & \times & \times & \times & \times & \times & \times & \times \\
    \textcolor{TUMRed}{0}  & \textcolor{TUMRed}{0} & \textcolor{TUMRed}{0} & \textcolor{TUMRed}{0} & \textcolor{TUMRed}{0} & \textcolor{TUMRed}{0} & \hphantom{\times} &
    \times & \times & \times & \times & \times & \times & \times &\hphantom{\times} &
    \times & \times &\hphantom{\times} &
    \times  & \times & \times & \times & \times & \times & \times & \times \\
    \textcolor{TUMRed}{0}  & \textcolor{TUMRed}{0} & \textcolor{TUMRed}{0} & \textcolor{TUMRed}{0} & \textcolor{TUMRed}{0} & \textcolor{TUMRed}{0} & \hphantom{\times}&
    \times & \times \tikzmarkend{LRS-first} & \times & \times & \times & \times & \times &\hphantom{\times} &
    \times & \times \tikzmarkend{LRS-second}&\hphantom{\times} &
    \times  & \times & \times & \times & \times & \times & \times & \times \tikzmarkend{LRS-third} \\
    };

    \mymatrixbraceright{G}{1}{1}{Encoding for $\bm_1$}
    \mymatrixbraceright{G}{2}{4}{for $\bm_2$}
    \mymatrixbraceright{G}{5}{6}{for $\bm_3$}
    \mymatrixbraceright{G}{7}{9}{for $\bm_4$}
    \mymatrixbracetop{G}{1}{6}{Encoding at $S_{\cJ_1}$}
    \mymatrixbracetop{G}{8}{14}{at $S_{\cJ_2}$}
    \mymatrixbracetop{G}{16}{17}{at $S_{\cJ_3}$}
    \mymatrixbracetop{G}{19}{26}{at $S_{\cJ_4}$}
    \mymatrixbracebottom{G}{1}{9}{First block of LRS code}
    \mymatrixbracebottom{G}{10}{17}{Second block}
    \mymatrixbracebottom{G}{18}{26}{Third block}
  \end{tikzpicture}
\]
    \caption{Illustration of the required encoding matrix for the instance of distributed multi-source networks with $h=4$ messages. This support-constrained matrix is a generator matrix of a $[23,9]$ LRS over $\F_{4^9}$ with $\ell=3$ blocks.}
    \label{fig:eg-support-constrained-LRS}
  \end{figure}
  For example, the first $6$ columns of $\bG$, corresponding to the encoding at $S_{\cJ_1}$, have zero entries in the last $3$ rows. This indicates that $S_{\cJ_1}$ does not encode $\bm_4$ since it does not have the access to $\bm_4$.

  We can obtain the support-constrained encoding matrix $\bG$ from a generator matrix of a $[23, 9]_{4^9}$ LRS code with $\ell=3$ blocks. The lengths of the blocks are $(n_1, n_2, n_3)=(8,7,8)$ (see \ref{item:dist-LRS-LP} in \cref{sec:distributed-LRS} for the computation of these parameters).
  Let $\priEle$ be a primitive element of $\F_{4^{9}}$. The block representatives of the LRS code are $(a_1, a_2,a_3)=(1,\priEle,\priEle^2)$ and the column multipliers are $\bb=( 1,\priEle,\dots,\priEle^7, \priEle, \priEle^2,\dots, \priEle^7,\priEle^2,$ $ \priEle^3,\dots, \priEle^9)$.
  We construct a generator matrix $\GLRS$ of the LRS code according to \eqref{eq:Gevaluation} and find a full-rank matrix $\bT\in\F_{4^9}^{9\times 9}$ such that the support-constrained encoding matrix $\bG$ is given by $\bG=\bT\cdot \GLRS$.
  It can be verified by \cref{thm:fieldSizeFromGab} that such a matrix $\bT$ exists over $\F_{4^9}$ and it can be found by solving a linear system of equations.
  For this example, we found the following $\bT$ as a solution (see \cref{sec:distributed-LRS} \ref{step:dist-LRS-compute-T} for the computation method that we used here).
  \begin{align*}
    \bT=
    \begin{pmatrix}
\priEle^{ 29883 }& \priEle^{       208968 }& \priEle^{         19488  }& \priEle^{        27791    }& \priEle^{   137529   }& \priEle^{    135128   }& \priEle^{ 142532  }& \priEle^{ 123564   }& \priEle^{ 199506 }\\
\priEle^{ 8272  }& \priEle^{      137891  }& \priEle^{        117682  }& \priEle^{        134830   }& \priEle^{    175546  }& \priEle^{     199273  }& \priEle^{  233167 }& \priEle^{  13175   }& \priEle^{ 75587} \\
\priEle^{ 171183}& \priEle^{        60863 }& \priEle^{         88547  }& \priEle^{        152810   }& \priEle^{    183852  }& \priEle^{     129008  }& \priEle^{  223733 }& \priEle^{  220778  }& \priEle^{  215911} \\
\priEle^{ 136657}& \priEle^{        53725 }& \priEle^{         187129 }& \priEle^{         236279  }& \priEle^{     244758 }& \priEle^{      124656 }& \priEle^{   163100}& \priEle^{   222367 }& \priEle^{   245041}\\
\priEle^{ 67172 }& \priEle^{       31331  }& \priEle^{        217264  }& \priEle^{        133630   }& \priEle^{    190037  }& \priEle^{     228340  }& \priEle^{  210873 }& \priEle^{  222699  }& \priEle^{  102082}\\
\priEle^{ 180377}& \priEle^{        78748 }& \priEle^{         71136  }& \priEle^{        170404   }& \priEle^{    251773  }& \priEle^{     44364   }& \priEle^{ 188627  }& \priEle^{ 44347    }& \priEle^{145983}\\
\priEle^{ 82368 }& \priEle^{       167072 }& \priEle^{         210000 }& \priEle^{         110692  }& \priEle^{     24773  }& \priEle^{     69984   }& \priEle^{ 182180  }& \priEle^{ 211569   }& \priEle^{ 24237}\\
\priEle^{ 78461 }& \priEle^{       249391 }& \priEle^{         68483  }& \priEle^{        120459   }& \priEle^{    140206  }& \priEle^{     243029  }& \priEle^{  126875 }& \priEle^{  75641   }& \priEle^{ 12289}\\
\priEle^{ 33368 }& \priEle^{       98307  }& \priEle^{        247550  }& \priEle^{        210053   }& \priEle^{    223247  }& \priEle^{     103052  }& \priEle^{  160318 }& \priEle^{  69947   }& \priEle^{ 42305}
    \end{pmatrix}
  \end{align*}
\begin{remark}
  With the choice of $\ell$ and $(n_1,\dots,n_\ell)$ for this toy LRS code we intend to show that the number of blocks $\ell$ does not need to be the same as the number of source nodes $s$.
  The value of $\ell$ determines the upper bound in \eqref{eq:sumrank-weight-E} on the sum-rank weight of $\bE$.
  We listed several other parameters of the LRS codes in \cref{tab:ell_grows} that can be used for this network example.
  It can be seen that, the larger $\ell$ is, the larger error-correction capability is required, which results in larger sum-rank distance of the LRS code and hence larger total length $n$ and field size $q^m$.
  However, larger $\ell$ may result in a smaller field size. For instance, suppose that the messages $\bm_i$ are over $\F_{3^{11}}$. According to \cref{tab:ell_grows}, setting $\ell=1$ (i.e., using a distribued Gabidulin code \cite{halbawi2014distributedGab}) requires a field size $q^m=3^{15}$ while using the distributed LRS codes with $\ell=2$ requires a field size $q^m=3^{11}$ (note that the field size of the messages is $3^{11}$ hence the code should be over $\F_{3^{11}}$).

  \begin{table}[h!]
    \centering
    \caption{Parameters of distributed LRS codes for the toy network example while increasing $\ell$. The $q$ and $m$ are the minimal parameters of the required field over which the $[n,\widetilde{k},d]$ distributed LRS code can be constructed, where $d=2\ell t+\rho+1$ is the sum-rank distance of the distributed LRS code.}
    \label{tab:ell_grows}
    \begin{tabular}{c|c|c|c|c|c}
      $\ell$ & $q$ & $m$ & $[n,\widetilde{k}, d]$ & $(n_1,\dots, n_{\ell})$ & $(n_{\cJ_1},n_{\cJ_2},n_{\cJ_3},n_{\cJ_4})$ 
      \\
      \hline
      $1$ (Gabidulin code) & $2$ & $15$ & $[15, 9, 7]$ & $(15)$ & $(6,1,0,8)$
      \\
      \hline
      $2$ & $3$ & $10$ & $[19, 9 , 11]$ & $(10, 9)$ & $(6,5,0,8)$
      \\
      \hline
      $3$ (\cref{fig:eg-support-constrained-LRS}) & $4$ & $9$ & $[23,9,15]$ & $(8,7,8)$ & $(6,7,2,8)$\\
      \hline
      $4$ & $5$ & $9$ & $[27, 9, 19]$ & $(7,7,7,6)$ & $(6,7,6,8)$\\
      \hline
      $5$ & $7$ & $11$ & $[33, 11, 23]$ & $(7,6,7,7,6)$ & $(8,9,6,10)$\\
      \hline
      $6$ & $7$ & $12$ & $[38, 12, 27]$ & $(7,6,6,7,6,6)$ & $(9,10,8,11)$\\
      \hline
      $7$ & $8$ & $13$ & $[43, 13, 31]$ & $(6,6,6,7,6,6,6)$ & $(10,11,10,12)$\\
    \end{tabular}
  \end{table}
  In \cref{tab:s_changes}, we list the parameters of LRS codes for several different $\cS=\{\cJ_1,\cJ_2,\cJ_3,\cJ_4\}$. It can be seen that encoding each message independently requires a longer code (hence, a larger alphabet size) than jointly encoding subsets of messages.
  \begin{table}[h!]
    \centering
    \caption{Parameters of distributed LRS codes for the toy example while changing $\cS$. }
    \label{tab:s_changes}
    \begin{tabular}{c|c|c|c|c|c|c}
      $\cS$ &$\ell$ & $q$ & $m$ & $[n,\widetilde{k}, d]$ & $(n_1,\dots, n_{\ell})$ & $(n_{\cJ_1},n_{\cJ_2},n_{\cJ_3},n_{\cJ_4})$
      \\
      \hline
      \multirow{3}{*}{$\{\{1\},\{2\},\{3\},\{4\}\}$} &$1$& $2$ & $33$ & $[33, 27, 7]$ & $(33)$ & $(7,9,8,9)$\\
      \cline{2-7}
            &$2$& $3$ & $39$ & $[49, 39, 11]$ & $(25,24)$ & $(11,13,12,13)$\\
      \cline{2-7}
            &$3$& $4$ & $51$ & $[65, 51, 15]$ & $(22,22,21)$ & $(15,17,16,17)$\\
      \hline
      \multirow{3}{*}{
      \begin{tabular}{@{}c@{}}
        $\{\set*{1,2},\set*{1,3},$\\
        $\set*{2,4},\set*{3,4}\}$
      \end{tabular}}
            & $1$ & $2$ & $17$ & $[17, 11, 7]$ & $(17)$ & $(6,1,3,7)$\\
      \cline{2-7}
            & $2$ & $3$ & $15$ & $[25, 15, 11]$ & $(13,12)$ & $(10,1,3,11)$\\
      \cline{2-7}
            & $3$ & $4$ & $19$ & $[33, 19, 15]$ & $(11,11,11)$ & $(14,1,3,15)$\\
    \end{tabular}
  \end{table}
\end{remark}

Now we proceed to apply the \emph{lifting} technique \cite{silva2008rank} to deal with the non-coherent situation.
Supposing $(\bc_{\cJ_1},\bc_{\cJ_2},\bc_{\cJ_3},\bc_{\cJ_4})= (\bm_1,\bm_2,\bm_3,\bm_4)\cdot \bG$. Each source node $S_{\cJ_i}$ generates $\bC_{\cJ_i}=\extbasis{\bbeta}(\bc_{\cJ_i})\in\Fq^{m\times n_{\cJ_i} }$ by the map defined below
and lifts the $\bC_{\cJ_i}^\top$ by adding the identity and zero matrices as in \eqref{eq:lifting-X} to obtain the transmitted packets (rows of $\bX$).



\renewcommand*\mymatrixbracebottom[4]{
    \draw[mymatrixbrace, decoration={mirror, raise=1pt}] (#1.south-|#1-1-#2.south west) -- node[below=2pt] {#4} (#1.south-|#1-1-#3.north east);
}
\begin{align}
  \label{eq:lifting-X}
  \bX =
  \begin{tikzpicture}[baseline={-0.5ex},mymatrixenv]
    \matrix [mymatrix,inner sep=1pt, ampersand replacement=\&] (LX)
    {
      \bI_{n_{\cJ_1}} \&\&\&\&  \hphantom{holder} \bC_{\cJ_1}^\top \hphantom{holder}  \\
      \& \bI_{n_{\cJ_2}} \&\&\& \hphantom{holder}  \bC_{\cJ_2}^\top \hphantom{holder}  \\
      \&\& \bI_{n_{\cJ_3}} \&\& \hphantom{holder}  \bC_{\cJ_3}^\top \hphantom{holder}  \\
      \&\&\& \bI_{n_{\cJ_4}} \& \hphantom{holder}  \bC_{\cJ_4}^\top \hphantom{holder}  \\
    };
        \mymatrixbraceright{LX}{1}{4}{$n=\sum_{i=1}^4n_{\cJ_i}$}
        \mymatrixbracebottom{LX}{5}{5}{$m$}
        \mymatrixbracebottom{LX}{1}{4}{$n$}
    \end{tikzpicture}
\end{align}
  Each row is a packet of length $n+m$ ($=23+9=31$ for the toy example) over $\Fq$ ($\F_4$) transmitted into the network.
  Note that for the lifting step, the centralized coordination unit is also needed to instruct the source nodes where to put the identity matrix in their packets.
  \subsection{The General Scheme: Distributed LRS Codes}
  \label{sec:distributed-LRS}
  In the following we present the general scheme at the centralized coordination unit to design the overall distributed LRS codes, given:
  \begin{itemize}
  \item the total number of messages $h$ and their lengths $r_1,\dots, r_h$;
  \item the set $\cS=\{\cJ_1, \dots,\cJ_s\}$, where each $\cJ_i\subset [h]$ contains the indices of the messages that the source node $S_{\cJ_i}$ has access to;
  \item the $(t,\rho)$-adversarial model: the maximum number $t$ of malicious nodes and the maximum number $\rho$ of frozen nodes in the network;
  \item the number of blocks $\ell$ of the LRS code.
  \end{itemize}
  The task is to design the $n_{\cJ}$, for all $\cJ\in\cS$, such that the sink can recover all $h$ messages. The goal of the design is to minimize $n$, the total number of the encoded symbols.

  The general scheme contains the following steps:
  \begin{enumerate}
  \item \label{item:dist-LRS-LP}
    Solving the following integer linear programming problem for $(n_{\cJ_1},\dots,n_{\cJ_s})$
    \begingroup
    \setlength\arraycolsep{3pt}
    \allowdisplaybreaks
    \begin{align}
      \text{minimize } \quad & n= n_{\cJ_1}+\dots+n_{\cJ_s}\nonumber \\
      \text{subject to } \quad & \forall \varnothing\neq \cJ'\subseteq[h],\ \sum_{i\in\cJ'} r_i + 2t+\rho\leq n - \sum_{\substack{\cJ\in\cS\\\cJ\subseteq [h]\setminus\cJ'}}n_{\cJ}  \label{eq:capacity-constraints}\ ,\\
                             & \forall \varnothing\neq \Omega\subseteq[h],\ \sum_{\substack{\cJ\in\cS\\ [h]\setminus \cJ\supseteq\Omega} }  n_{\cJ} + \sum_{i\in\Omega} r_i \leq n-2\ell t-\rho \label{eq:dis-zero-constraints} \ ,\\
                             &\forall \cJ\in\cS, \quad\quad\quad n_{\cJ}\geq 0\ .\nonumber
    \end{align}
    \endgroup
    \textit{Remark:
      Recall that we assume that the min-cut $\mincut_{\cJ'}= n - \sum_{\substack{\cJ\in\cS\\\cJ\subseteq [h]\setminus\cJ'}}n_{\cJ}$, for all $\varnothing\neq\cJ'\subseteq[h]$. With the constraints in \eqref{eq:capacity-constraints}, the choice of $(n_{\cJ_1},\dots, n_{\cJ_s})$ guarantees that the message lengths $(r_1,\dots,r_h)$ are in the capacity region given in \cref{thm:capacity-distributed-network}. \newline
    Let $$\widetilde{k}:=\underset{\varnothing\neq\Omega\subseteq[h]}{\max} \sum\limits_{\substack{\cJ\in\cS\\ [h]\setminus \cJ\supseteq\Omega} }  n_{\cJ} + \sum\limits_{i\in\Omega} r_i\ .$$
    By \cref{thm:subcode}, there exist a subcode of an $[n,\widetilde{k}]$ LRS code whose generator matrix fulfills the support constraints of the encoding matrix $\bG$ for the distributed multi-source network.
    The constraints in \eqref{eq:dis-zero-constraints} guarantee that $\widetilde{k}\leq n-2\ell t-\rho$, which ensures that the $[n,\widetilde{k}]$ LRS code can decode the rank-metric errors and erasures induced by the $(t,\rho)$-adversarial model (see \cref{sec:sum-rank-weight-rank-error}).
    }
  \item Determine the field size $q^m$ required for the $[n, \widetilde{k}]$ LRS code with $\ell$ blocks according to \cref{thm:subcode}.

    \noindent
    \textit{Remark: The total length should be distributed as evenly as possible into $\ell$ blocks so that the extension degree $m$ is minimized.}
  \item Construct a generator matrix $\GLRS$ of the $[n, \widetilde{k}]_{q^m}$ LRS code according to \eqref{eq:Gevaluation}.
  \item Find a full-rank matrix $\bT\in\Fqm^{k\times \widetilde{k}}$ (where $k=\sum_{i=1}^h r_i$) such that the support-constrained encoding matrix $\bG$ can be obtained from $\bG=\bT\cdot \GLRS$. \label{step:dist-LRS-compute-T}

    \noindent
    \textit{Remark: This can be done by solving a linear system of equations for the entries of $\bT$. For example, in our implementation, we see the entries of $\bT$ as variables in a multivariate polynomial ring $\polyRing=\Fqm[T_{11},\dots,T_{kk}]$ and translate the constraints (the zero entries in $\bG$) into a system of linear equations. We use the facilities (Gr\"obner bases, variate, etc.) for multivariate polynomials embedded in \href{https://doc.sagemath.org/html/en/reference/index.html}{SageMath} \cite{sagemath2022} to solve the system. Note that the \texttt{variate()} function in SageMath avoids computing the whole solution space when the system is underdetermined. If this is the case, let $\lambda$ be the degree of freedom of the system. We assign random values in $\Fqm$ to $\lambda$ variables, so that it becomes a determined system that is solvable by \texttt{variate()}.}
  \end{enumerate}

\section{Vector Network Coding for Generalized Combination Networks}
\label{sec:vec-net-cod}
A \emph{multicast network} is a network with exactly one source and multiple receivers demanding all the messages from the source. In networks that apply routing, every relay node can only pass on their received data. \emph{Network coding} has been attracting increasing attention since the seminal paper by
Ahlswede, et.~al.~\cite{ahlswede2000network}, which showed that the throughput of the network can be increased significantly by not just forwarding packets but also performing operations on them.
We formulate the \emph{network coding problem} as follows: for each node in the network,
find an encoding function of its incoming messages for each of its outgoing links.
A \emph{solution} is a set of the encoding functions at all the nodes in the network, such that each receiver can recover all (or a predefined subset of all) the messages.
A network is \emph{solvable} if a solution exists. 

Although network coding has the advantage of good throughput, the encoding at relay nodes incurs extra delay and memory occupation than routing.
This section considers these costs from the aspect of the required alphabet size to utilize network coding.
Reducing the alphabet size of the coding operations results in less complexity, hence less delay, and less memory occupation
for practical implementations of network coding~\cite{langberg2006encoding,LS2009,GSRMsep2019}.




In the following, we first formally introduce the concepts that are considered in this section.
\subsubsection{Generalized Combination Networks}


The main object that we study in this section is the class of \emph{generalized combination networks}.
An $(\eps,\ell)-\cN_{h,r,\alpha\ell+\eps}$ generalized combination network is illustrated in Figure~\ref{fig:Network} (see also~\cite{EW18}).
\begin{figure}[!htb]
  \centering
  \def\x{0.7}

\begin{tikzpicture}
  [font=\normalsize,>=stealth',
  mycircle/.style={circle, draw=TUMGray, very thick, text width=.1em, minimum height=1.5em, text centered},
  mycircle_small/.style={circle,draw=TUMGray!90,very thick, inner sep=0,minimum size=1em,text centered},
  mylink/.style={color=TUMBlueLight, thick},
  mylink_dir/.style={color=TUMGreen, thick, dashed}]

  \coordinate (Source) at (0*\x,4*\x);
  {\node[mycircle,label=above right:{$\bx=({\bx}_1,{\bx}_2,\dots,{\bx}_h)$}] (Source) {};}

  \node[mycircle_small,below left = \x*60pt and \x*100pt of Source] (M0) {};
  \node[mycircle_small,right = \x*20pt of M0] (M1) {};
  \node[mycircle_small,right = \x*20pt of M1] (M2) {};
  \node[mycircle_small,right = \x*20pt of M2] (M3) {};
  \node[draw=none,right = \x*20pt of M3] (M4) {$\dots$};
  \node[mycircle_small,right = \x*20pt of M4] (M5) {};
  \node[mycircle_small,right = \x*20pt of M5] (M6) {};

  \node[mycircle,below left = \x*60pt and \x*20pt of M0,label=below:{$\by_1$}] (R0) {};
  \node[mycircle,right = \x*50pt of R0,label=below:{$\by_2$}] (R1) {};
  \node[mycircle,right = \x*50pt of R1,label=below:{$\by_3$}] (R2) {};
  \node[draw=none,right = \x*50pt of R2] (R3) {$\dots$};
  \node[mycircle,right = \x*50pt of R3,label=below:{$\by_{N}$}] (R4) {};

  \path[] (Source) edge[mylink,bend right=15] (M0.north)
  edge[mylink,bend right=15] (M1.north)
  edge[mylink,bend right=15] (M2.north)
  edge[mylink,bend right=15] (M3.north)
  edge[mylink,bend left=15] (M5.north)
  edge[mylink,bend left=15] (M6.north);

  \path[] (R0) edge[mylink,bend left=15] (M0.south)
  edge[mylink,bend left=15] (M1.south);
  \path[] (R1) edge[mylink,bend left=15] (M0.south)
  edge[mylink,bend left=15] (M2.south);
  \path[] (R2) edge[mylink,bend left=15] (M1.south)
  edge[mylink,bend left=15] (M3.south);
  \path[] (R4) edge[mylink,bend right=15] (M5.south)
  edge[mylink,bend right=15] (M6.south);

  \draw [mylink,solid] ($(M0)+(\x*18pt,\x*15pt)$) arc (30:80:\x*15pt); 
  \draw [mylink,-] ($(R0)+(\x*20pt,\x*10pt)$) arc (0:100:\x*15pt); 
  \node[draw=none,above right = \x*-8pt and \x*2pt of M0] (ell) {\textcolor{TUMBlue}{$\ell$}};
  \node[draw=none,right=0pt of M6] (rs) {$r$ middle nodes};
  \node[draw=none,right=0pt of R4] (N) {$N=\binom{r}{\alpha}$};
  \node[draw=none,below right=\x*-5pt and \x*-50pt of N] (recNodes) {receivers};

  \path[] (Source) edge[mylink,bend right=5] (M0.north)
  edge[mylink,bend right=5] (M1.north)
  edge[mylink,bend right=5] (M2.north)
  edge[mylink,bend right=5] (M3.north)
  edge[mylink,bend left=5] (M5.north)
  edge[mylink,bend left=5] (M6.north);

  \path[] (R0) edge[mylink,bend left=5] (M0.south)
  edge[mylink,bend left=5] (M1.south);
  \path[] (R1) edge[mylink,bend left=5] (M0.south)
  edge[mylink,bend left=5] (M2.south);
  \path[] (R2) edge[mylink,bend left=5] (M1.south)
  edge[mylink,bend left=5] (M3.south);
  \path[] (R4) edge[mylink,bend right=5] (M5.south)
  edge[mylink,bend right=5] (M6.south);

  \path[] (Source.west) edge[mylink_dir, bend right=45] (R0.west)
  edge[mylink_dir, bend right=50] (R0.west)
  (Source) edge[mylink_dir,bend left=5] (R1.east)
  edge[mylink_dir,bend left=10] (R1.east)
  (Source) edge[mylink_dir,bend left=15] (R2.east)
  edge[mylink_dir,bend left=20] (R2.east)
  (Source) edge[mylink_dir,bend right=15] (R4.west)
  edge[mylink_dir,bend right=10] (R4.west);

  \draw [mylink_dir,solid] ($(R0)+(-\x*15pt,\x*35pt)$) arc (100:60:\x*15pt); 
  \node[draw=none,above left = \x*10pt and 3pt of R0] (alphal) {\textcolor{TUMGreen}{$\varepsilon$}};
  \node[draw=none,right = 0pt of R0] (alphal) {\textcolor{TUMBlue}{$\alpha\ell$}};

\end{tikzpicture}
  \caption{Illustration of an $(\eps,\ell)-\mathcal{N}_{h,r,\alpha\ell+\eps}$ network.}
  \label{fig:Network}
\end{figure}
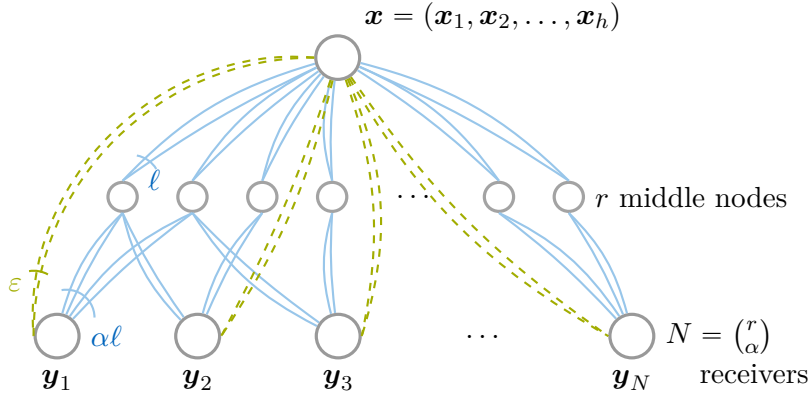

The network has three layers. The first layer consists of a source with $h$ source messages. The source is connected to each of $r$ middle nodes in the second layer
via $\ell$ parallel links (solid lines).
Any $\alpha$ middle nodes are connected to a unique receiver in the third layer, each via $\ell$ parallel links.
This implies that there are $N=\binom{r}{\alpha}$ receivers that receive distinct packets from the second layer.
In addition, each receiver is also connected to the source via $\eps$ direct links (dashed lines).
It was shown in~\cite[Theorem 8]{EW18} that the $(\eps,\ell)-\cN_{h,r,\alpha\ell+\eps}$ network has a trivial solution if $h\leq\ell+\eps$ and it has no solution if $h>\alpha\ell+\eps$.
We only consider non-trivially solvable networks, hence we assume that $\ell+\eps< h \leq \alpha\ell+\eps$ throughout the section.

\subsubsection{Vector/Scalar Linear Solutions of Network Coding}
In linear network coding, the outgoing links of each relay node carry linear functions of the messages from the incoming links. The linear functions are called \emph{coding coefficients}. The set of the coding coefficients is the solution of the linear network.

If the messages
are scalars in $\F_{q}$ and the coding coefficients are vectors over $\F_{q}$, then a solution is called a \emph{scalar linear solution}, denoted by $(q,1)$-{linear solution}.
If the messages are vectors in $\Fq^t$, and the coding coefficients are matrices over $\F_q$, then a solution is called a \emph{vector linear solution}, denoted by $(q,t)$-linear solution.

We now formulate a solution to the $(\eps,\ell)-\cN_{h,r,\alpha\ell+\eps}$ generalized combination network illustrated in \cref{fig:Network}.
W.l.o.g., we only formulate it with the notations for a vector linear solution;
a scalar linear solution can be obtained by simply setting $t=1$.

Denote by $\bx_1,\dots,\bx_h\in\mathbb{F}^t_{q}$ the $h$ source messages and by $\by_1,\dots,\by_N\in\mathbb{F}^{(\eps+\alpha\ell)t}_q$ the packets received by each receiver.
For each $i\in[N]$, $\by_i$
is the concatenation of all the packets that the $i$-th receiver gets from the $\alpha$ middle nodes and the source node.
Since each middle node has $\ell$ incoming links and $\alpha\ell$ outgoing links,
we assume w.l.o.g.~that the middle nodes just forward their incoming packets and the encoding is done at the source node.

We denote by $\bA_1,\dots,\bA_r \in \Fq^{\ell t \times ht}$ the coding coefficients used by the source node for the messages transmitted to the $r$ middle nodes,
and by $\bB_1,\dots,\bB_N\in\F_q^{\eps t\times ht}$ the coding coefficients used by the source node for the messages transmitted directly to the receivers.
Then, for each $i\in[N]$,
\begin{equation*}
\by_i
=\underbrace{\begin{pmatrix}
    \hphantom{w} & \bA_{i_1}& \hphantom{w}\\
    & \vdots&\\
    & \bA_{i_\alpha}& \\
    & \bB_i&
\end{pmatrix}}_{(\eps+\alpha\ell)t\times ht}
\cdot
\underbrace{\begin{pmatrix}
\bx_1\\
\vdots\\
\bx_{h}
\end{pmatrix}}_{ht\times 1}\ , 
\end{equation*}
where $\{\bA_{i_1},\dots, \bA_{i_\alpha}\}\subset \{\bA_1,\dots,\bA_r \}$.

Note that the receivers can recover the $h$ source messages $\bx_1,\dots,\bx_h$ if and only if
\begin{equation}\label{eq:NC_sol}
  \rank
  \begin{pmatrix}
    \hphantom{w} & \bA_{i_1}& \hphantom{w}\\
    & \vdots&\\
    & \bA_{i_\alpha}&
\end{pmatrix}
\geq (h-\eps)t,\ \forall i \in[N]\ .
\end{equation}
Hence a {solution} to the $(\eps,\ell)-\mathcal{N}_{h,r,\alpha\ell+\eps}$ network is a set of the coding coefficients $\set*{\bA_1,\dots,\bA_r }$ such that \eqref{eq:NC_sol} holds. The coding coefficients for the direct links $\bB_1,\dots,\bB_N$ can be determined once $\set*{\bA_1,\dots,\bA_r }$ is given.

\subsubsection{Gap between Required Alphabet Sizes for Scalar  and Vector Solutions}
The goal of this section is to investigate the gap between the minimum required alphabet size for scalar and vector solutions of the generalized combination networks.
This gap
was shown to be positive for generalized combination networks~\cite{EW18}. We further quantify the advantage of vector linear solutions versus scalar linear solutions of the generalized combination networks.
For this purpose, we need to fix a metric.

We follow the notations from~\cite{cai2020network} to distinguish between optimal scalar and vector solutions. Given a generalized combination network $\cN$, let
$$q_s(\cN):=\min\{q\ |\ \cN\textrm{ has a } (q,1)\textrm{-linear solution}\}\ .$$

The $(q_s(\cN),1)$-linear solution is said to be \emph{scalar-optimal}.
Similarly, let
$$q_v(\cN)\defeq\min\{q^t\ |\ \cN\textrm{ has a } (q,t)\textrm{-linear solution}\}\ .$$
Note that $q_v(\cN)$ is defined by the size of the vector space, rather than the field size.
For $q^t = q_v(\cN)$, the $(q,t)$-linear solution is called \emph{vector-optimal}.
We define the \emph{gap} as
\begin{align*}
 \gap (\cN):= \log_2 (q_s(\cN)) -\log_2(q_v(\cN))\ ,
\end{align*}
which intuitively measures the advantage of vector network coding by the amount of extra bits per transmitted symbol that an optimal scalar linear solution has to pay compared to an optimal vector linear solution.

\subsubsection{Overview of Results}
  In \cref{sec:ub_rs}, we give two upper bounds on $r_{\max}$, the maximal number of nodes in the middle layer of a generalized combination network (\cref{cor:imupperbound-N} (valid for $h\geq 2\ell+\eps$) and \cref{cor:imupperbound-2-Network} (a better bound for $\alpha=2$)).
  In \cref{sec:lb_rv}, we give two lower bounds on $r_{\max}$ (\cref{thm:LLL_bound} and \cref{cor:EK19_lb} ($h\leq 2\ell+\eps$)).
  In \cref{sec:bound_gap}, we provide an upper bound on $\gap (\cN)$
  for any fixed generalized combination network $\cN$ (\cref{thm:gap_ub}),
  and a lower bound on $\gap (\cN)$ (\cref{thm:gap_lb}).
  We compare the new bounds with some existing bounds on $r_{\max}$ in \cref{sec:discussion} and summarize the best known bound on $r_{\max}$ in \cref{tab:r_bound_summary}.

\subsection{Upper Bounds on the Maximum Number of Middle Layer Nodes}
\label{sec:ub_rs}

The Grassmannian of dimension $k$ is a set of all $k$-dimensional subspaces of $\mathbb{F}_q^n$. Recall that its cardinality is the well-known $q$-binomial coefficient:
\begin{equation*}
  |\cG_q(n,k)|=\quadbinom{n}{k}_q := \prod\limits_{i=0}^{k-1} \frac{q^n-q^i}{q^k-q^i}=\prod\limits_{i=0}^{k-1} \frac{q^{n-i}-1}{q^{k-i}-1}\ .
\end{equation*}
A good approximation of the $q$-binomial coefficient can be found in~\cite[Lemma 4]{koetter2008coding}:
\begin{equation}
    \label{eq:gauss}
q^{k(n-k)}\leq \quadbinom{n}{k}_q < \gamma \cdot q^{k(n-k)}\ ,
\end{equation}
where $\gamma\approx 3.48$.

\begin{lemma}\label{lm-full-t2}
Let $\alpha\geq 2$, $h,\ell,t\geq 1$, $\eps\geq 0$, $h-\eps \geq 2 \ell$, and let $\cT$ be a collection of subspaces  of $\F_q^{(h-\eps) t}$ such that
\begin{enumerate}
\item[(i)] each subspace has dimension at most $\ell t$, and
\item[(ii)] any subset of $\alpha$ subspaces spans $\F_q^{(h-\eps)t}$.
\end{enumerate}
Then, we have $\alpha \ell  \geq h-\eps$ and
$$|\cT| \leq \parenv*{\floor{ \frac{h-\eps}{\ell} }-2} +\parenv*{\alpha- \floor{ \frac{h-\eps}{\ell} }+1} \quadbinom{\ell t+1}{1}_q\ .$$
\end{lemma}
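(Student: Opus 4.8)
The plan is to prove the two assertions of \cref{lm-full-t2} separately. The inequality $\alpha\ell \geq h-\eps$ is immediate: by (ii), any $\alpha$ of the subspaces span $\F_q^{(h-\eps)t}$, so the sum of their dimensions is at least $(h-\eps)t$; since each has dimension at most $\ell t$ by (i), we get $\alpha \cdot \ell t \geq (h-\eps) t$, hence $\alpha\ell \geq h-\eps$. (If $|\cT| < \alpha$ the spanning hypothesis should be read as vacuous or the statement adjusted, but in the intended regime $|\cT|\geq \alpha$.)

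For the cardinality bound, the approach is a counting/extremal argument on the subspaces in $\cT$. First I would reduce to the case where every subspace in $\cT$ has dimension \emph{exactly} $\ell t$: shrinking a subspace only makes the spanning condition (ii) harder to satisfy, so w.l.o.g.\ we may enlarge each member to dimension exactly $\ell t$ while keeping (ii); if anything this can only increase $|\cT|$ for the worst case, and certainly the bound we prove for the exact-dimension case dominates. Next, fix one subspace $U_0 \in \cT$ of dimension $\ell t$. For any other $U \in \cT$, look at $U + U_0$. Since two subspaces need not span everything but $\alpha$ of them must, I would iterate: pick a chain $U_0, U_1, \dots$ where at each step we add a new member of $\cT$ that increases the dimension of the partial span. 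Because each member contributes dimension at most $\ell t$ and the total must reach $(h-\eps)t$ after at most $\alpha$ members, the partial span $W = U_0 + \dots + U_j$ reaches dimension close to $(h-\eps)t$ after roughly $\floor{(h-\eps)/\ell}$ carefully chosen members. The key point is then: once $W$ has codimension small enough (codimension $\leq \ell t$, say codimension $c \cdot t$ for some small integer $c$, actually codimension strictly less than $\ell t$ forced when we are within one step of spanning), \emph{any} remaining $U\in\cT$ must satisfy $U + W = \F_q^{(h-\eps)t}$ together with the already-chosen members — more precisely, all remaining members, together with the $\alpha - 1$ chosen ones (minus redundancies), must span. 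This pins down each remaining $U$: its image in the quotient $\F_q^{(h-\eps)t}/W'$ (for an appropriate $W'$ of codimension exactly $t$) must be all of that $1$-dimensional-over-$\F_{q^?}$, i.e.\ a nonzero coset direction, and counting such $U$ of dimension $\ell t$ projecting onto a fixed line gives the factor $\quadbinom{\ell t +1}{1}_q$. The $\floor{(h-\eps)/\ell} - 2$ term counts the "structural" members used to build up the span, and the $\alpha - \floor{(h-\eps)/\ell} + 1$ multiplies the per-line count because there are that many "remaining codimension slots."

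Concretely, the key steps in order: (1) derive $\alpha\ell\geq h-\eps$ from dimension addition; (2) reduce to all subspaces having dimension exactly $\ell t$; (3) greedily build a spanning chain, showing it has length at most $\alpha$ and that after $m := \floor{(h-\eps)/\ell}$ suitably chosen members (which may be forced to jump by full $\ell t$ each, except the last which jumps by the remainder $(h-\eps)t - (m-1)\ell t$), we reach a subspace $W$ of codimension $\leq \ell t$; (4) argue that among the at most $m-1$ chain members used before the final codimension drop, plus the requirement that $\alpha$-subsets span, every subspace not in the chain is constrained so that its projection modulo a fixed codimension-$t$ subspace containing $W$-minus-one-step is a nonzero line; (5) count: $\quadbinom{\ell t+1}{1}_q$ is the number of $\ell t$-subspaces of an $(\ell t + 1)$-space surjecting onto a fixed quotient line, and multiply by the number of "free" positions $\alpha - m + 1$; (6) add the $m - 2$ for the structural members and assemble.

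The main obstacle I anticipate is step (4)–(5): making precise exactly which quotient each "remaining" subspace must surject onto, and verifying that the count of admissible $\ell t$-dimensional subspaces projecting onto a fixed line in a $1$-codimensional (over the relevant scalars) quotient is exactly $\quadbinom{\ell t+1}{1}_q$ rather than something larger. This requires carefully choosing the intermediate subspace $W'$ of codimension exactly $t$ and checking that a subspace of dimension $\ell t$ with $\dim(U \cap W') = \ell t - t$ and $U + W' = \F_q^{(h-\eps)t}$ corresponds bijectively to a point of $\cG_q(\ell t + 1, 1)$ after quotienting by $U \cap W'$ appropriately — i.e., identifying the configuration with $\ell t$-subspaces of $\F_q^{\ell t + 1}$ in general position relative to a fixed hyperplane. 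Getting the book-keeping of the floor function and the off-by-one in the exponents $m-2$ versus $m-1$ correct, and handling the boundary case $\ell \mid (h-\eps)$, will also need care, but those are routine once the structural argument is in place.
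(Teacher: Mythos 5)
Your first paragraph (the $\alpha\ell\ge h-\eps$ argument) and your reduction to subspaces of dimension exactly $\ell t$ are both fine, but your main counting argument has a gap that I don't think can be patched as written. The concrete issue is the choice of auxiliary subspace in step~(4): you take a $W'$ of codimension $t$ and try to argue that each remaining member projects onto a ``nonzero line'' in the quotient $\F_q^{(h-\eps)t}/W'\cong\F_q^t$. But a subspace $U$ of dimension $\ell t$ in general projects onto \emph{all} of a $t$-dimensional quotient, not onto a line, so the constraint you want to extract does not materialize; and the identification you assert in step~(5), that $\quadbinom{\ell t+1}{1}_q$ counts $\ell t$-subspaces of an $(\ell t+1)$-space ``surjecting onto a fixed quotient line,'' is simply not what that Gaussian binomial coefficient counts (it is the total number of hyperplanes of an $(\ell t+1)$-space, one of which is the kernel, and the others are the ones that surject). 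So the per-hyperplane count and the role of the quotient are both off by a nontrivial adjustment.

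The idea that actually closes the argument (and that the paper uses) is cleaner and avoids your greedy chain entirely: pick any $\floor{(h-\eps)/\ell}-2$ members of $\cT$; since each has dimension at most $\ell t$, their span fits inside \emph{some} subspace $W$ of dimension exactly $(h-\eps)t-\ell t-1$, i.e.\ of codimension $\ell t+1$, not codimension $t$. Now for \emph{any} $T\in\cT$ one has $\dim(W+T)\le (h-\eps)t-1$, so $W+T$ lies inside some hyperplane of $\F_q^{(h-\eps)t}$ containing $W$. There are exactly $\quadbinom{\ell t+1}{1}_q$ such hyperplanes (hyperplanes in the $(\ell t+1)$-dimensional quotient by $W$), and by hypothesis~(ii) each hyperplane contains at most $\alpha-1$ members of $\cT$; subtracting the $\floor{(h-\eps)/\ell}-2$ pre-chosen members (which lie in every such hyperplane, being inside $W$) gives at most $\alpha-1-(\floor{(h-\eps)/\ell}-2)$ new members per hyperplane, and summing over hyperplanes and adding back the pre-chosen ones yields the claimed bound. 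So: keep your first two steps, discard the greedy-chain/quotient-line machinery, and replace it with the codimension-$(\ell t+1)$ subspace $W$ and the hyperplane count.
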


\begin{proof}
Take arbitrarily $\floor{\frac{h-\eps}{\ell}}-2$ subspaces from $\cT$ and a subspace  $W\subset\F_q^{(h-\eps)t}$ of dimension $(h-\eps)t-\ell t-1$ which contains all these $\lfloor \frac{h-\eps}{\ell} \rfloor-2$  subspaces. Then,
for any subspace $T\in \cT$, there is a \emph{hyperplane} (an $((h-\eps) t-1)$-dimensional subspace) of $\F_q^{(h-\eps)t}$ containing both $W$ and $T$.
Note that there are $\quadbinom{\ell t+1}{\ell t}=\quadbinom{\ell t +1}{1}$ hyperplanes of $\F_q^{(h-\eps)t}$ containing $W$ and each of them contains at most $\alpha-1$ subspaces from $\cT$. Thus,
\begin{align*}
 |\cT| &\leq  \parenv*{\floor{ \frac{h-\eps}{\ell} }-2} + \quadbinom{\ell t+1}{\ell t}_q\parenv*{\alpha-1 - \parenv*{\floor{ \frac{h-\eps}{\ell} }-2}} \\
  &=  \parenv*{\floor{ \frac{h-\eps}{\ell} }-2} +\parenv*{\alpha- \floor{ \frac{h-\eps}{\ell} }+1}  \quadbinom{\ell t+1}{1}_q\ .
\end{align*}
\end{proof}

\begin{theorem}\label{thm-upbound-v2}
Let $\alpha\geq 2$, $h,\ell,t\geq 1$, $\eps\geq 0$, $h-\eps\geq 2\ell$, and let $\cS$ be a collection of  subspaces of $\F_q^{ht}$ such that
\begin{enumerate}
\item[(i)] each subspace has dimension at most $\ell t$, and
\item[(ii)] any subset of $\alpha$ subspaces spans a subspace of dimension at least $(h-\eps)t$.
\end{enumerate}
Then, we have $\alpha \ell  \geq h-\eps$ and
\begin{align}
  |\cS| &\leq \quadbinom{(\eps+\ell)t}{\eps t}_q \parenv*{ \parenv*{\alpha- \floor{ \frac{h-\eps}{\ell} }+1}  \frac{q^{\ell t+1}-1}{q-1}-1} + \floor{\frac{h-\eps}{\ell}}-1\nonumber \\
        &<
          \gamma\parenv*{\alpha- \floor{ \frac{h-\eps}{\ell} }+1} q^{\ell t (\eps t+1)} {+ \floor{ \frac{h-\eps}{\ell}}-1}\label{eq:ub-strict-ineq}\ .
\end{align}
\end{theorem}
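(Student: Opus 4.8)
\textbf{Proof plan for Theorem~\ref{thm-upbound-v2}.}
The plan is to reduce the statement about subspaces of $\F_q^{ht}$ to the cleaner statement of \cref{lm-full-t2} about subspaces of $\F_q^{(h-\eps)t}$ by quotienting out a suitable $\eps t$-dimensional subspace, and then to count. First I would observe that condition (ii) immediately forces $\alpha\ell\ge h-\eps$: if $\alpha\ell<h-\eps$ then $\alpha$ subspaces of dimension at most $\ell t$ span a space of dimension at most $\alpha\ell t<(h-\eps)t$, a contradiction. So assume $\alpha\ell\ge h-\eps\ge 2\ell$ from now on.

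The heart of the argument is a projection step. Partition $\cS$ into two parts according to how each $T\in\cS$ sits relative to a fixed subspace: the idea is to pick a subspace $U\subset\F_q^{ht}$ of dimension $(\eps+\ell)t$ — or rather to work modulo a cleverly chosen $\eps t$-dimensional subspace $U_0$ so that the images of the $T$'s live in $\F_q^{ht}/U_0\cong\F_q^{(h-\eps)t}$ and still satisfy the hypotheses of \cref{lm-full-t2} (each image has dimension at most $\ell t$; any $\alpha$ of them span, because the preimages spanned a space of dimension $\ge(h-\eps)t = \dim\F_q^{ht}-\eps t$, so modulo $U_0$ they are full). The subtlety is that a single $U_0$ need not witness all of $\cS$ at once: a given $T$ may intersect $U_0$ nontrivially and collapse. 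So I would instead fix a subspace $W$ of $\F_q^{ht}$ of dimension $(h-\eps)t-\ell t-1 = (\eps+\ell)(?)$... more precisely, mimic the proof of \cref{lm-full-t2}: take $\floor{(h-\eps)/\ell}-2$ elements of $\cS$, enclose their span (of dimension $\le(\floor{(h-\eps)/\ell}-2)\ell t$, which is $\le(h-\eps-2\ell)t$) together with an auxiliary space inside a subspace $W$ of dimension $(h-\eps)t-\ell t-1$, and then for every $T\in\cS$ find a hyperplane of $\F_q^{ht}$ ... wait, we must be careful that $W$ is inside $\F_q^{ht}$ which has dimension $ht$, not $(h-\eps)t$; the ``hyperplanes through $W$'' now have codimension bigger. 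This is exactly where the factor $\quadbinom{(\eps+\ell)t}{\eps t}_q$ enters: the number of subspaces of $\F_q^{ht}$ of dimension $(h-\eps)t-1$ containing a fixed $W$ of dimension $(h-\eps)t-\ell t-1$ is $\quadbinom{ht-((h-\eps)t-\ell t-1)}{(h-\eps)t-1-((h-\eps)t-\ell t-1)}_q = \quadbinom{(\eps+\ell)t+1}{\ell t+1}_q = \quadbinom{(\eps+\ell)t+1}{\eps t}_q$, and each such subspace, being of codimension bigger than one, contains a bounded number of elements of $\cS$ controlled by applying \cref{lm-full-t2} inside it (or its quotient).

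Concretely, the counting I would carry out is: every $T\in\cS$ not among the chosen $\floor{(h-\eps)/\ell}-2$ lies in some $(h-\eps)t-1$-dimensional subspace $H\supseteq W$; within each such $H$, applying \cref{lm-full-t2} (to $H/W'$ for an appropriate $W'$, a space of the right dimension so the quotient is $\F_q^{(h-\eps)t}$ minus the relevant pieces) bounds the number of $T$'s in $H$ by roughly $\big(\alpha-\floor{(h-\eps)/\ell}+1\big)\quadbinom{\ell t+1}{1}_q$ plus a constant. Multiplying the number of $H$'s, namely $\quadbinom{(\eps+\ell)t+1}{\eps t}_q$... — here I need to match the exact exponent so that the final bound reads $\quadbinom{(\eps+\ell)t}{\eps t}_q\big(\cdots\big)+\floor{(h-\eps)/\ell}-1$; the combinatorial identities for Gaussian binomials (e.g.\ $\quadbinom{n+1}{k}_q = q^k\quadbinom{n}{k}_q+\quadbinom{n}{k-1}_q$) will be used to absorb the ``$+1$'' in $\ell t+1$ and to convert $\quadbinom{(\eps+\ell)t+1}{\ell t+1}_q$ into $\quadbinom{(\eps+\ell)t}{\eps t}_q$ times the linear factor. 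Finally the strict inequality \eqref{eq:ub-strict-ineq} follows by applying the approximation \eqref{eq:gauss}: $\quadbinom{(\eps+\ell)t}{\eps t}_q<\gamma\, q^{\eps t\cdot\ell t}$ and $\frac{q^{\ell t+1}-1}{q-1}<q^{\ell t+1}$, and collecting the lower-order additive terms.

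The main obstacle I anticipate is bookkeeping the dimensions in the projection/quotient step so that the hypotheses of \cref{lm-full-t2} genuinely transfer — in particular choosing the auxiliary subspace $W$ (and the complementary space one quotients by) with exactly the right dimension, and verifying that ``any $\alpha$ subspaces span $\ge(h-\eps)t$'' in the ambient $\F_q^{ht}$ descends to ``any $\alpha$ images span everything'' in the $(h-\eps)t$-dimensional quotient without losing a dimension somewhere. The second delicate point is the Gaussian-binomial algebra needed to land on the precise closed form stated, rather than merely an expression of the same order; that is routine but error-prone, and I would treat \eqref{eq:gauss} as the tool that makes the final strict inequality painless once the exact-form bound is in hand.
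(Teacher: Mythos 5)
Your high-level strategy --- fix a small $W$ containing some members of $\cS$, cover every $S\in\cS$ by a subspace of $\F_q^{ht}$ containing $W$, apply \cref{lm-full-t2} inside each cover element, and correct for overcounting --- is the same one the paper uses, but the dimensions you chose are off by one everywhere in a way that makes \cref{lm-full-t2} inapplicable rather than merely shifting an index. You propose to enclose $\floor{(h-\eps)/\ell}-2$ members of $\cS$ in a $W$ of dimension $(h-\eps)t-\ell t-1$ and then to sweep over subspaces $H\supseteq W$ of dimension $(h-\eps)t-1$. But \cref{lm-full-t2} is a statement about an ambient space of dimension \emph{exactly} $(h-\eps)t$: its hypothesis (ii) requires any $\alpha$ subspaces of the collection to span the whole ambient space. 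An $H$ of dimension $(h-\eps)t-1$ cannot play that role --- in fact any such $H$ contains at most $\alpha-1$ members of $\cS$, since $\alpha$ of them would span at most $\dim H<(h-\eps)t$, contradicting hypothesis (ii) of the theorem --- so there is no recursion to do inside $H$. Quotienting ``$H/W'$'' cannot recover the missing dimension, since any quotient of $H$ has dimension strictly less than $\dim H<(h-\eps)t$. And the Gaussian-binomial identities you hope will repair the prefactor cannot convert $\quadbinom{(\eps+\ell)t+1}{\ell t+1}_q$ into $\quadbinom{(\eps+\ell)t}{\eps t}_q\cdot\frac{q^{\ell t+1}-1}{q-1}$ while also matching the additive constants inside the parentheses.

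The correct choices are one larger throughout: take $\floor{(h-\eps)/\ell}-1$ members of $\cS$, enclose them in a $W$ of dimension $(h-\eps)t-\ell t$ (possible since $(\floor{(h-\eps)/\ell}-1)\ell t\leq(h-\eps)t-\ell t$), and cover $\cS$ by subspaces $W_i$ of dimension exactly $(h-\eps)t$ containing $W$. Every $S\in\cS$ lies in some $W_i$ because $\dim(W+S)\leq(h-\eps)t-\ell t+\ell t=(h-\eps)t$. Inside each $W_i\cong\F_q^{(h-\eps)t}$, any $\alpha$ members of $W_i\cap\cS$ span at least $(h-\eps)t$ dimensions by hypothesis (ii) on $\cS$, hence span all of $W_i$ --- this is exactly condition (ii) of \cref{lm-full-t2}, which now applies to $W_i\cap\cS$ as subspaces of $W_i$ and bounds $|W_i\cap\cS|$. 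The number of such $W_i$ is the number of $\ell t$-dimensional subspaces of $\F_q^{ht}/W\cong\F_q^{(\eps+\ell)t}$, i.e.\ $\quadbinom{(\eps+\ell)t}{\ell t}_q=\quadbinom{(\eps+\ell)t}{\eps t}_q$ exactly, with no identity juggling. Since the $\floor{(h-\eps)/\ell}-1$ pre-chosen subspaces lie in $W\subseteq W_i$ for every $i$, subtracting them from each $|W_i\cap\cS|$ before summing over $i$ and adding them back once gives the stated exact bound, and \eqref{eq:ub-strict-ineq} then follows from \eqref{eq:gauss}.
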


\begin{proof}
Take arbitrarily $\floor{\frac{h-\eps}{\ell}}-1$ subspaces from $\cS$ and a subspace $W \subset \F_q^{ht}$ of dimension $(h-\eps)t-\ell t$ such that $W$ contains all these  $\floor{\frac{h-\eps}{\ell}}-1$ subspaces. Then, for any subspace $S \in \cS$ there is a subspace of dimension $(h-\eps)t$ containing both $W$ and $S$.

Let $m\defeq \quadbinom{(\eps+\ell )t}{\eps t}_q$. Then, there are $m$ subspaces of dimension $(h-\eps)t$ containing $W$, say $W_1, W_2, \ldots, W_m$. Note that every subset of $\alpha$ subspaces in $W_i \cap \cS$ span the subspace $W_i$. According to Lemma~\ref{lm-full-t2}, we have
\begin{align*}
    |W_i \cap \cS| &\leq \parenv*{\floor{ \frac{h-\eps}{\ell} }-2}
    +\parenv*{\alpha- \floor{ \frac{h-\eps}{\ell} }+1}  \quadbinom{\ell t +1}{1}_q\ .
\end{align*}
Hence,
\begin{align*}
|\cS|  &\leq \sum_{i=1}^{m}\parenv*{|W_i\cap \cS|- \parenv*{\floor{ \frac{h-\eps}{\ell}}-1  }} + \floor{ \frac{h-\eps}{\ell}}-1 \\
 &\leq  \quadbinom{(\eps+\ell)t}{\eps t}_q \parenv*{ \parenv*{\alpha- \floor{ \frac{h-\eps}{\ell} }+1}  \frac{q^{\ell t+1}-1}{q-1}  -1   }+ \floor{ \frac{h-\eps}{\ell}}-1 \ .
 \end{align*}
The inequality \eqref{eq:ub-strict-ineq} is derived from \eqref{eq:gauss}.
\end{proof}
The following corollary rephrases Theorem~\ref{thm-upbound-v2} with network parameters.

\begin{corollary}\label{cor:imupperbound-N}
Let $\alpha\geq 2$, $h,\ell,t\geq 1$, $\eps\geq 0$, and $h-\eps \geq 2\ell$. If $(\eps,\ell)-\mathcal{N}_{h,r,\alpha\ell+\eps}$ has a $(q,t)$-linear solution then
\begin{align*}
r \leq r_{\max}
< \gamma\theta q^{\ell t (\eps t+1)} +\alpha-\theta\ ,
\end{align*}
where $\theta\defeq \alpha- \floor{ \frac{h-\eps}{\ell} }+1$ and $\gamma\approx 3.48$.
\end{corollary}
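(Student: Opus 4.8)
\textbf{Plan for the proof of Corollary~\ref{cor:imupperbound-N}.}
The plan is to translate the combinatorial statement of \cref{thm-upbound-v2} into network-coding terms and then invoke the theorem directly. First I would recall the setup: a $(q,t)$-linear solution to the $(\eps,\ell)-\mathcal{N}_{h,r,\alpha\ell+\eps}$ network is a collection of coding matrices $\bA_1,\dots,\bA_r\in\Fq^{\ell t\times ht}$ such that, for every receiver (i.e.~every $\alpha$-subset $\{i_1,\dots,i_\alpha\}\subseteq[r]$), the decoding condition \eqref{eq:NC_sol} holds, namely
\begin{align*}
  \rank
  \begin{pmatrix}
    \bA_{i_1}\\ \vdots\\ \bA_{i_\alpha}
  \end{pmatrix}
  \geq (h-\eps)t\ .
\end{align*}
Associate to each $\bA_j$ its row space $S_j\defeq\myspanRow{\bA_j}\subseteq\Fq^{ht}$. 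Then $\dim S_j\leq \ell t$ (since $\bA_j$ has $\ell t$ rows), and the rank condition above says precisely that any $\alpha$ of the subspaces $S_1,\dots,S_r$ span a subspace of dimension at least $(h-\eps)t$. Hence $\cS\defeq\set*{S_1,\dots,S_r}$ is exactly a collection of subspaces of $\Fq^{ht}$ satisfying hypotheses (i) and (ii) of \cref{thm-upbound-v2}, with $|\cS|\leq r$ (equality unless two coding matrices happen to share a row space).

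The second step is to apply \cref{thm-upbound-v2} with the given parameters. Since we assume $h-\eps\geq 2\ell$, the hypotheses of the theorem are met, so
\begin{align*}
  r \leq |\cS| < \gamma\parenv*{\alpha-\floor*{\tfrac{h-\eps}{\ell}}+1}q^{\ell t(\eps t+1)} + \floor*{\tfrac{h-\eps}{\ell}}-1\ .
\end{align*}
Writing $\theta\defeq\alpha-\floor*{\tfrac{h-\eps}{\ell}}+1$, the second term is $\floor*{\tfrac{h-\eps}{\ell}}-1 = \alpha-\theta$, which gives the claimed bound $r\leq r_{\max} < \gamma\theta\, q^{\ell t(\eps t+1)}+\alpha-\theta$. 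Since this bound holds for \emph{any} $r$ admitting a $(q,t)$-linear solution, it bounds $r_{\max}$, the largest such $r$; a brief remark that $r_{\max}$ is well-defined (finite) because the right-hand side is finite should be included.

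The main obstacle here is essentially bookkeeping rather than a genuine difficulty: I must make sure the dictionary between ``coding coefficient matrix'' and ``row subspace'' is set up correctly, in particular that the relevant rank in \eqref{eq:NC_sol} equals the dimension of the span of the corresponding row spaces (this is immediate since the vertical stacking of matrices has row space equal to the sum of the individual row spaces). One subtlety worth a sentence is the distinction between $|\cS|$ and $r$: if several $\bA_j$ share a row space, $|\cS|<r$, but in that case the network is certainly still solvable with the smaller collection, and moreover one checks that distinct coding matrices can always be chosen with distinct row spaces without loss of generality — or, more simply, one just observes that since \cref{thm-upbound-v2} bounds the size of \emph{any} valid family of subspaces and repeated subspaces would only violate condition (ii) more easily, we may as well take the $S_j$ distinct, so $r=|\cS|$ and the bound applies verbatim. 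The remaining content is the purely arithmetic identification $\floor*{\tfrac{h-\eps}{\ell}}-1=\alpha-\theta$, which requires no work.
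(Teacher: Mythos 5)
Your proof follows the same route as the paper's: identify each coding matrix $\bA_j$ with its row space $S_j\subseteq\Fq^{ht}$, observe $\dim S_j\leq\ell t$, translate the decoding condition \eqref{eq:NC_sol} into "every $\alpha$ of the $S_j$ span dimension $\geq(h-\eps)t$," and apply \cref{thm-upbound-v2} followed by the arithmetic simplification $\floor{\frac{h-\eps}{\ell}}-1=\alpha-\theta$. One small correction to your side remark about repeated row spaces: the suggested fix of "taking the $S_j$ distinct without loss of generality" does not quite work, since deleting duplicates changes $r$, and modifying coding matrices could break the solution. This worry is moot, however, because the proofs of \cref{lm-full-t2} and \cref{thm-upbound-v2} go through verbatim when $\cS$ is a \emph{multiset} of subspaces (i.e.\ a "collection" counted with multiplicity) — the counting argument via hyperplanes through $W$ never uses distinctness — so one simply takes $\cS$ to be the multiset $\{\{S_1,\dots,S_r\}\}$, giving $|\cS|=r$ always and applying the bound directly.
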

\begin{proof}
If a $(q,t)$-linear solution exists, then each of the $r$ nodes in the middle layer gets a subspace of dimension $\ell t$ of the source messages space. Since all receivers are able to recover the entire source message space, every $\alpha$-subset of the middle nodes span a subspace of dimension at least $(h-\eps)t$. The statement then follows from Theorem~\ref{thm-upbound-v2}.
\end{proof}

Theorem~\ref{thm-upbound-v2} and Corollary~\ref{cor:imupperbound-N} are valid for all $\alpha\geq 2$. However, we derive a tighter upper bound for $\alpha = 2$, as shown in the following theorem.

\begin{theorem}\label{thm:imupbound-2}
Let $\alpha=2,h,\ell,t\geq 1$, $\eps\geq 0$, and let $\cS$ be a collection of  subspaces of $\F_q^{ht}$ such that
\begin{enumerate}
\item[(i)] each subspace has dimension at most $\ell t$, and
\item[(ii)] the sum of any two subspaces has dimension at least $(h-\eps)t$.
\end{enumerate}
Then, we have
\begin{align*}
  |\cS|  \leq \frac{\quadbinom{ht}{2\ell t - (h-\eps)t+1}_q}{ \quadbinom{\ell t}{2\ell t - (h-\eps)t+1}_q}
  < \gamma \cdot q^{(h-\ell)(2\ell+\eps-h)t^2+(h-\ell)t}\ .
\end{align*}
\end{theorem}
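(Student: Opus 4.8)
\textbf{Proof plan for Theorem \ref{thm:imupbound-2}.}

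The plan is to translate the two conditions on $\cS$ into a statement about a suitable intersecting family of subspaces and then invoke a Frankl--Wilson / Ray-Chaudhuri--Wilson-type counting bound for subspaces, exactly as in the classical treatment of combination networks. Set $d \defeq 2\ell t - (h-\eps)t + 1$. First I would reformulate condition (ii): for any two subspaces $S_1, S_2\in\cS$, $\dim(S_1+S_2)\geq (h-\eps)t$, which by the dimension formula $\dim(S_1+S_2) = \dim S_1 + \dim S_2 - \dim(S_1\cap S_2)$ together with condition (i) ($\dim S_i\leq \ell t$) forces
\begin{align*}
  \dim(S_1\cap S_2) \leq \dim S_1 + \dim S_2 - (h-\eps)t \leq 2\ell t - (h-\eps)t = d-1\ .
\end{align*}
So $\cS$ is a family of subspaces of $\F_q^{ht}$, each of dimension at most $\ell t$, in which any two distinct members meet in a subspace of dimension strictly less than $d$. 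Without loss of generality I would also reduce to the case where each member of $\cS$ has dimension \emph{exactly} $\ell t$: if some $S$ has smaller dimension, extend it arbitrarily to an $\ell t$-dimensional subspace; the pairwise-intersection bound can only become harder to satisfy, so bounding the extended family suffices (one must check that distinct members stay distinct, which holds because the intersection bound $d-1 < \ell t$ rules out containment).

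The next step is the core counting argument. Consider the map that sends each $S\in\cS$ to the collection of its $d$-dimensional subspaces, i.e. $S\mapsto \cG_q(S,d) = \{U\subseteq S : \dim U = d\}$, a set of size $\quadbinom{\ell t}{d}_q$. Because any two distinct $S, S'\in\cS$ satisfy $\dim(S\cap S') \leq d-1$, no $d$-dimensional subspace $U$ can lie in both $S$ and $S'$; hence the sets $\cG_q(S,d)$ for $S\in\cS$ are pairwise disjoint subsets of $\cG_q(\F_q^{ht}, d)$. Counting $d$-dimensional subspaces of the ambient space two ways then gives
\begin{align*}
  |\cS|\cdot \quadbinom{\ell t}{d}_q \leq \quadbinom{ht}{d}_q\ ,
\end{align*}
which is precisely the claimed bound $|\cS| \leq \quadbinom{ht}{d}_q \big/ \quadbinom{\ell t}{d}_q$ after substituting $d = 2\ell t - (h-\eps)t + 1$.

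Finally I would derive the explicit exponential estimate. Using the approximation \eqref{eq:gauss}, namely $q^{k(n-k)}\leq \quadbinom{n}{k}_q < \gamma q^{k(n-k)}$ with $\gamma\approx 3.48$, the ratio is bounded by
\begin{align*}
  \frac{\quadbinom{ht}{d}_q}{\quadbinom{\ell t}{d}_q} < \frac{\gamma\, q^{d(ht-d)}}{q^{d(\ell t - d)}} = \gamma\, q^{d(ht - \ell t)} = \gamma\, q^{(h-\ell)t\,d}\ .
\end{align*}
Substituting $d = 2\ell t - (h-\eps)t + 1 = (2\ell + \eps - h)t + 1$ yields $(h-\ell)t\cdot d = (h-\ell)(2\ell+\eps-h)t^2 + (h-\ell)t$, which is exactly the exponent in the statement. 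The one point needing care is that $d$ must be a valid parameter, i.e. $1\leq d\leq \ell t$: positivity of $d$ is the standing assumption $\ell + \eps < h$ rearranged (so that $2\ell+\eps-h \geq \ell+\eps-h+ \ell$... actually $d\geq 1$ comes from $2\ell t\geq (h-\eps)t$, i.e. the network being solvable, which is condition (ii) itself being non-vacuous), and $d\leq \ell t$ follows from $d-1 < \ell t$ shown above; I would state these monotonicity/range checks explicitly since the binomial-coefficient identities are only meaningful in that range. The main obstacle is thus not the counting — that is a one-line double count — but making sure the reduction to dimension exactly $\ell t$ and the parameter range for $d$ are handled cleanly, and that the disjointness of the $\cG_q(S,d)$ is argued from the strict inequality $\dim(S\cap S')\leq d-1$ rather than merely $\leq d$.
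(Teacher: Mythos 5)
Your proof is correct and takes essentially the same route as the paper: both reduce to the case $\dim S=\ell t$, note that the intersection of any two members has dimension at most $d-1$ with $d=2\ell t-(h-\eps)t+1$, conclude that each $d$-dimensional subspace lies in at most one member of $\cS$, and then double-count $d$-dimensional subspaces to get $|\cS|\,\quadbinom{\ell t}{d}_q\leq\quadbinom{ht}{d}_q$ before applying \eqref{eq:gauss}. The only difference is that you spell out the WLOG step and the range check $1\leq d\leq\ell t$ (which amounts to $\ell+\eps<h\leq 2\ell+\eps$), points the paper leaves implicit.
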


\begin{proof}
  We may assume that each subspace has dimension  $\ell t$. Since the sum of every two subspaces has dimension at least $(h-\eps)t$, their intersection has dimension at most $2\ell t - (h-\eps)t$. It follows that any subspace of dimension $2\ell t - (h-\eps)t+1$ is contained in at most one subspace of $\cS$. Note that there are $\quadbinom{ht}{2\ell t - (h-\eps)t+1}_q$ subspaces of dimension  $2\ell t - (h-\eps)t+1$  and each   subspace of dimension $\ell t$ contains $\quadbinom{\ell t}{2\ell t - (h-\eps)t+1}_q$ such spaces.
  We then have
  \begin{align*}
  |\cS| \leq \quadbinom{ht}{2\ell t - (h-\eps)t+1}_q \Big/ \quadbinom{\ell t}{2\ell t - (h-\eps)t+1}_q\ .
  \end{align*}
\end{proof}
The following corollary rephrases Theorem~\ref{thm:imupbound-2} with network parameters.

\begin{corollary}\label{cor:imupperbound-2-Network}
Let $\alpha= 2$, $h,\ell,t\geq 1$, $\eps\geq 0$. If $(\eps,\ell)-\mathcal{N}_{h,r,\alpha\ell+\eps}$ has a $(q,t)$-linear solution then
\begin{align*}
r \leq r_{\max}
< \gamma \cdot q^{(h-\ell)(2\ell+\eps-h)t^2+(h-\ell)t}\ ,
\end{align*}
where $\gamma\approx 3.48$.
\end{corollary}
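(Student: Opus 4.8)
The statement is Corollary~\ref{cor:imupperbound-2-Network}, which is meant to be a direct consequence of Theorem~\ref{thm:imupbound-2} once the combinatorial objects are translated into network-coding language. The plan is to argue exactly as in the proof of Corollary~\ref{cor:imupperbound-N}: start from the hypothesis that $(\eps,\ell)-\cN_{h,r,2\ell+\eps}$ has a $(q,t)$-linear solution, extract from it a collection $\cS$ of subspaces of $\F_q^{ht}$ satisfying the two conditions of Theorem~\ref{thm:imupbound-2}, and conclude $r\le|\cS|<\gamma\cdot q^{(h-\ell)(2\ell+\eps-h)t^2+(h-\ell)t}$.

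First I would recall the encoding setup of \cref{sec:vec-net-cod}: in a $(q,t)$-linear solution the source sends to middle node $j\in[r]$ a packet $\bA_j\cdot(\bx_1,\dots,\bx_h)^\top$ with $\bA_j\in\F_q^{\ell t\times ht}$, so the information that reaches node $j$ is the row space $S_j\defeq\myspan{\bA_j}_{\mathsf{row}}\subseteq\F_q^{ht}$, of dimension at most $\ell t$. For $\alpha=2$ every receiver is connected to exactly two middle nodes (and to the source via $\eps$ direct links), and decodability \eqref{eq:NC_sol} forces $\rank\binom{\bA_{i_1}}{\bA_{i_2}}\ge(h-\eps)t$ for every receiver, i.e. $\dim(S_{i_1}+S_{i_2})\ge(h-\eps)t$ for every pair of distinct middle nodes (since the $\binom{r}{2}$ receivers exhaust all pairs). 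Hence $\cS\defeq\{S_1,\dots,S_r\}$ — a multiset a priori, but we may take it as a set since repeated subspaces would violate condition (ii) whenever $2\ell t<(h-\eps)t$, and otherwise the bound is trivially weak — meets conditions (i) and (ii) of Theorem~\ref{thm:imupbound-2}. Applying that theorem gives $r\le|\cS|\le \quadbinom{ht}{2\ell t-(h-\eps)t+1}_q\big/\quadbinom{\ell t}{2\ell t-(h-\eps)t+1}_q<\gamma\cdot q^{(h-\ell)(2\ell+\eps-h)t^2+(h-\ell)t}$, which is the claim; I would also note $r\le r_{\max}$ by definition of $r_{\max}$ as the maximal feasible number of middle nodes.

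One routine point to spell out is that the parameter $j_0\defeq 2\ell t-(h-\eps)t+1$ appearing in the $q$-binomials is a legitimate dimension, i.e. $0\le j_0\le\ell t$: the inequality $j_0\ge 1$ is automatic from $h-\eps\le 2\ell$ (which holds in the non-trivially solvable regime $\ell+\eps<h\le\alpha\ell+\eps=2\ell+\eps$), and $j_0\le\ell t$ is equivalent to $(h-\eps)t\ge(\ell-1)t+1$, which again follows from $h-\eps\ge\ell+1$ when $t\ge1$; if either degenerate case occurs the statement is vacuous or trivial and can be dismissed in one line. The strict asymptotic bound in the displayed inequality is obtained by applying the two-sided estimate \eqref{eq:gauss} of the $q$-binomial to numerator and denominator and computing the exponent $ht\cdot j_0-j_0^2-(\ell t\cdot j_0-j_0^2)=(h-\ell)t\cdot j_0=(h-\ell)t\bigl((2\ell+\eps-h)t+1\bigr)=(h-\ell)(2\ell+\eps-h)t^2+(h-\ell)t$, exactly as in the proof of Theorem~\ref{thm:imupbound-2}.

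\textbf{Main obstacle.} There is essentially no deep obstacle here — the corollary is a rephrasing. The only thing requiring care is the translation step: verifying that conditions (i) and (ii) of Theorem~\ref{thm:imupbound-2} correspond precisely to the rank conditions forced by $(q,t)$-linear decodability, in particular that for $\alpha=2$ \emph{every} pair of middle nodes is indeed the pair feeding some receiver (so the "any two subspaces" quantifier in (ii) is justified) and that condition (i) only needs "dimension at most $\ell t$" rather than exactly $\ell t$, which Theorem~\ref{thm:imupbound-2} already accommodates via the opening remark "we may assume that each subspace has dimension $\ell t$". Mirroring the proof of Corollary~\ref{cor:imupperbound-N} sentence for sentence should make this airtight.
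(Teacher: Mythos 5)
Your proposal is correct and follows essentially the same route as the paper: extract the row-space family $\{S_1,\dots,S_r\}$ from the coding matrices $\bA_j$, verify that a $(q,t)$-linear solution with $\alpha=2$ forces conditions (i) and (ii) of Theorem~\ref{thm:imupbound-2}, and invoke that theorem. The extra checks you add (well-definedness of the $q$-binomial parameters, the multiset-vs-set point) are sound but not needed for the paper's two-line proof, which just mirrors the proof of Corollary~\ref{cor:imupperbound-N}.
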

\begin{proof}
If a $(q,t)$-linear solution exists, then each of the $r$ nodes in the middle layer gets a subspace of dimension $\ell t$ of the source messages space. Since all receivers are able to recover the entire source message space, any two subset of the middle nodes span a subspace of dimension at least $(h-\eps)t$. We then use Theorem~\ref{thm:imupbound-2}.
\end{proof}

\subsection{Lower Bounds on the Maximum Number of Middle Layer Nodes}
\label{sec:lb_rv}
We now turn to study a lower bound on $r_{\max}$ {with the parameters $\alpha,\ell,\eps,h$ being fixed}. The main results are summarized in Theorem~\ref{thm:LLL_bound} and Corollary~\ref{cor:EK19_lb}.

\subsubsection{A Lower Bound by Lov\'asz-Local Lemma}
\begin{lemma}[Lov\'asz-Local-Lemma~{\cite[Ch.~5]{TheProbMethod}}, \cite{LLLbeck1991}]
  \label{lem:LLL}
Let $\mathcal{E}_1, \mathcal{E}_2, \hdots,\mathcal{E}_k$ be a sequence of events. Each event occurs with probability at most $p$ and each event is independent of all the other events except for at most $d$ of them.
If $epd\leq 1$, where $e\approx 2.718$ is the base of natural logarithms, then there is a nonzero probability that none of the events occurs.
\end{lemma}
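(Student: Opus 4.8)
The plan is to prove, by induction on the size of an index subset, a uniform bound on conditional failure probabilities, and then to telescope. Concretely, I would first establish the \emph{inductive claim}: there is a constant $\beta=\beta(p,d)\in(0,1)$ such that for every $i\in[k]$ and every $S\subseteq[k]\setminus\{i\}$ with $|S|<k$,
\begin{equation*}
  \Pr\Bigl[\mathcal{E}_i \,\Bigm|\, \bigcap_{j\in S}\overline{\mathcal{E}_j}\Bigr] \le \beta ,
\end{equation*}
with the understanding — itself part of the induction — that $\bigcap_{j\in S}\overline{\mathcal{E}_j}$ has strictly positive probability, so that all conditionals are well defined. Granting the claim, the lemma follows by the chain rule: ordering the events arbitrarily,
\begin{equation*}
  \Pr\Bigl[\bigcap_{i=1}^{k}\overline{\mathcal{E}_i}\Bigr] = \prod_{i=1}^{k}\Pr\Bigl[\overline{\mathcal{E}_i} \,\Bigm|\, \overline{\mathcal{E}_1}\cap\dots\cap\overline{\mathcal{E}_{i-1}}\Bigr] \ge (1-\beta)^{k} > 0 .
\end{equation*}

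For the induction on $|S|$, the base case $S=\varnothing$ is immediate since $\Pr[\mathcal{E}_i]\le p\le\beta$. For the step with $|S|=s\ge1$, split $S=S_1\sqcup S_2$, where $S_1$ collects the $j\in S$ for which $\mathcal{E}_i$ is \emph{not} independent of $\mathcal{E}_j$ (so $|S_1|\le d$) and $S_2=S\setminus S_1$. If $S_1=\varnothing$, then $\mathcal{E}_i$ is independent of the family indexed by $S_2$ and $\Pr[\mathcal{E}_i\mid\bigcap_{S}\overline{\mathcal{E}_j}]=\Pr[\mathcal{E}_i]\le p\le\beta$. Otherwise $|S_2|<s$, and I would write
\begin{equation*}
  \Pr\Bigl[\mathcal{E}_i \,\Bigm|\, \bigcap_{S}\overline{\mathcal{E}_j}\Bigr]
   = \frac{\Pr\bigl[\mathcal{E}_i \cap \bigcap_{S_1}\overline{\mathcal{E}_j} \bigm| \bigcap_{S_2}\overline{\mathcal{E}_j}\bigr]}{\Pr\bigl[\bigcap_{S_1}\overline{\mathcal{E}_j} \bigm| \bigcap_{S_2}\overline{\mathcal{E}_j}\bigr]} ,
\end{equation*}
bound the numerator from above by $\Pr[\mathcal{E}_i\mid\bigcap_{S_2}\overline{\mathcal{E}_j}]=\Pr[\mathcal{E}_i]\le p$ using independence of $\mathcal{E}_i$ from $\{\mathcal{E}_j:j\in S_2\}$, and bound the denominator from below by expanding $\bigcap_{S_1}\overline{\mathcal{E}_j}$ (with $S_1=\{j_1,\dots,j_r\}$, $r\le d$) as a product of conditionals: each factor conditions on a set of size at most $s-1<s$, so the induction hypothesis makes each factor $\ge 1-\beta$, giving a denominator $\ge(1-\beta)^{r}\ge(1-\beta)^{d}$ — in particular positive, which also re-establishes positivity of the conditioning event. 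Combining yields $\Pr[\mathcal{E}_i\mid\bigcap_{S}\overline{\mathcal{E}_j}]\le p/(1-\beta)^{d}$.

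The main obstacle is the remaining arithmetic: one must choose $\beta\in(0,1)$ with $p\le\beta(1-\beta)^{d}$, and this is exactly where the hypothesis $e\,p\,d\le 1$ is consumed. The map $\beta\mapsto\beta(1-\beta)^{d}$ peaks near $\beta=\tfrac1{d+1}$, and the elementary inequality $(1-\tfrac1{d+1})^{d}\ge e^{-1}$ (from $(1-\tfrac1n)^{n-1}\ge e^{-1}$) shows the peak value is at least $\tfrac1{e(d+1)}$, so feasibility holds under the given smallness of $p$; one can then fix $\beta$ concretely (e.g. $\beta\approx\tfrac1{d+1}$, or $\beta=ep$ after checking $(1-ep)^{d}\ge e^{-1}$). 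The only other care needed is the bookkeeping that every conditional appearing in the argument is on a positive-probability event, which is automatic once the bound $\Pr[\,\cdot\mid\cdot\,]\le\beta<1$ is in force, since it forces $\Pr[\bigcap_{j\in T}\overline{\mathcal{E}_j}]\ge(1-\beta)^{|T|}>0$ for every $T$, built up one index at a time. (An equivalent route is to first prove the asymmetric Lovász local lemma by the same induction with weights $x_i$ in place of the constant $\beta$, and then specialize $x_i=\tfrac1{d+1}$; the bottleneck is the identical inequality check.)
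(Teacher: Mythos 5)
The paper does not actually supply a proof of this lemma: it is quoted as a known result, cited to Alon--Spencer's \emph{The Probabilistic Method} and to Beck (1991), and used as a black box in the proof of Theorem 4.9. So there is no ``paper's own proof'' to compare against; I can only evaluate your argument on its own merits, and it is the standard inductive proof of the symmetric local lemma (the same one appearing in Alon--Spencer), so the structure is the right one: the inductive bound $\Pr[\mathcal{E}_i \mid \bigcap_{j\in S}\overline{\mathcal{E}_j}]\le\beta$, telescoping by the chain rule, the split $S=S_1\sqcup S_2$, and the ratio bound using mutual independence of $\mathcal{E}_i$ from the family $\{\mathcal{E}_j : j\in S_2\}$ (which, as you correctly use it, must be \emph{mutual} independence, not merely pairwise --- the lemma's hypothesis should be read that way).

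There is, however, a genuine small gap in your closing arithmetic that you have glossed over with the phrase ``feasibility holds under the given smallness of $p$.'' Your method requires $p\le\beta(1-\beta)^{d}$ for some $\beta\in(0,1)$, and the maximum of $\beta(1-\beta)^{d}$ over $\beta$ is $d^{d}/(d+1)^{d+1}$, attained at $\beta=1/(d+1)$. By the inequality $(1-\tfrac{1}{d+1})^{d}\ge e^{-1}$ this maximum is at least $\tfrac{1}{e(d+1)}$, so the argument closes whenever $e\,p\,(d+1)\le 1$. But the lemma as stated here (and in several informal references) has the weaker hypothesis $e\,p\,d\le 1$, i.e.\ $p\le\tfrac{1}{ed}$, and in fact $d^{d}/(d+1)^{d+1}<\tfrac{1}{ed}$ for every finite $d$ (since $e(1-\tfrac{1}{d+1})^{d+1}<1$). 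So under the literal hypothesis $epd\le1$ your chosen $\beta$ does not quite satisfy $p\le\beta(1-\beta)^d$, and the parenthetical variant $\beta=ep$ runs into exactly the same $d$-versus-$d{+}1$ discrepancy. In short: your proof establishes the textbook version with $ep(d+1)\le1$; the $epd\le1$ in the statement is a mild imprecision that your argument does not, and cannot with this $\beta$, recover. Either note this explicitly (the standard fix is to state the lemma with $d+1$), or, if you want to match $epd\le1$ verbatim, you would need to reinterpret $d$ as counting the event itself among its neighbors, which is not what the statement says.
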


Recall that a solution to the $(\eps,\ell)-\mathcal{N}_{h,r,\alpha\ell+\eps}$ network is a set of the coding coefficients $\set*{\bA_1,\dots,\bA_r }$ such that~\eqref{eq:NC_sol} holds.
We choose the matrices $\bA_1,\dots,\bA_r \in \Fq^{\ell t \times ht}$ independently and uniformly at random.
For $1 \leq i_1 <  \dots<i_\alpha \leq r$, we define the event
\begin{align*}
\mathcal{E}_{i_1,\dots,i_\alpha} \defeq \set*{\left. (\bA_{i_1},\dots, \bA_{i_\alpha})\ \right|\ \rank \begin{pmatrix}
\bA_{i_1} \\ \vdots \\ \bA_{i_\alpha}
\end{pmatrix} < (h-\eps)t }\ .
\end{align*}

\begin{lemma}
  \label{lem:upper_bound_on_p}
Let $\alpha\geq 2$, $h,\ell,t\geq 1$, $\eps\geq 0$. Fixing $1 \leq i_1 < \dots < i_\alpha \leq r$, we have
\begin{align*}
\Pr(\mathcal{E}_{i_1,\dots,i_{\alpha}}) \leq 2\gamma \cdot q^{(h-\alpha\ell-\eps)\eps t^2+(h-\alpha\ell-2\eps)t-1}\ ,
\end{align*}
where $\gamma\approx3.48$.
\end{lemma}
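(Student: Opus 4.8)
The plan is to bound $\Pr(\cE_{i_1,\dots,i_\alpha})$ by counting: we want the probability that a random $\alpha\ell t \times ht$ matrix $\bA = (\bA_{i_1}^\top \mid \cdots \mid \bA_{i_\alpha}^\top)^\top$, with entries chosen independently and uniformly from $\Fq$, has rank strictly less than $(h-\eps)t$. Write $r_0 \defeq \alpha\ell t$ (the number of rows) and $c_0 \defeq ht$ (the number of columns); note the target rank is $(h-\eps)t = c_0 - \eps t$. Since the rank can only be deficient if $\bA$ fails to have rank $c_0 - \eps t$, and the maximal possible rank is $\min\{r_0, c_0\}$, the event is vacuous unless $c_0 - \eps t \le \min\{r_0,c_0\}$, i.e.\ unless $\alpha\ell \ge h-\eps$; in that regime we proceed with the union-bound count.

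First I would express the bad event via the column span: $\rank(\bA) < c_0-\eps t$ iff the column space of $\bA$ is contained in some subspace $V \subseteq \Fq^{r_0}$ of dimension exactly $c_0-\eps t - 1$ (if $c_0 - \eps t - 1 \le r_0$; otherwise count via row space symmetrically, but in the intended parameter range the column count is the binding one). Then by a union bound over all such $V$,
\begin{align*}
  \Pr(\cE_{i_1,\dots,i_\alpha}) \le \quadbinom{r_0}{c_0-\eps t-1}_q \cdot \parenv*{\frac{q^{c_0-\eps t-1}}{q^{r_0}}}^{c_0},
\end{align*}
since each of the $c_0$ columns independently lands in the fixed subspace $V$ with probability $q^{c_0-\eps t-1}/q^{r_0}$, and $\quadbinom{r_0}{c_0-\eps t-1}_q$ counts the subspaces. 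Next I would apply the approximation $\quadbinom{n}{k}_q < \gamma \cdot q^{k(n-k)}$ from \eqref{eq:gauss} with $n = r_0 = \alpha\ell t$ and $k = c_0 - \eps t - 1 = (h-\eps)t - 1$, so that $n-k = \alpha\ell t - (h-\eps)t + 1 = (\alpha\ell + \eps - h)t + 1$. Plugging in and collecting exponents of $q$: the subspace count contributes exponent $k(n-k) = ((h-\eps)t-1)((\alpha\ell+\eps-h)t+1)$, and the ``all columns in $V$'' factor contributes exponent $c_0\cdot(k - r_0) = ht\cdot((h-\eps)t - 1 - \alpha\ell t)= ht\cdot(-( \alpha\ell + \eps - h)t - 1)$. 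I would add these two exponents; the $t^2$ terms and $t$ terms should combine (after using $h = h$, expanding, and simplifying) to exactly $(h-\alpha\ell-\eps)\eps t^2 + (h-\alpha\ell-2\eps)t - 1$, with leading constant $\gamma$; the extra factor of $2$ absorbs the lower-order slack and the case where one must instead bound via the row space. This is essentially the same computation as in \cref{cor:imupperbound-2-Network}'s proof but run ``in reverse'' for a probability rather than a cardinality.

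The main obstacle I anticipate is purely bookkeeping: making the exponent arithmetic come out to the claimed $(h-\alpha\ell-\eps)\eps t^2 + (h-\alpha\ell-2\eps)t - 1$ exactly, and justifying the constant $2\gamma$ cleanly. In particular one must be careful about which of the two dimensions ($r_0$ versus $c_0$) is the binding one for the rank bound; when $\alpha\ell \ge h - \eps$ we have $c_0 - \eps t \le c_0 \le$ (typically) $r_0$, so containment of the column space in a $(c_0-\eps t-1)$-dimensional subspace of $\Fq^{r_0}$ is the right characterization, and the count above applies verbatim. I would also double-check the edge case $\eps = 0$: then the bad event is that $\bA$ fails to have full column rank $ht$, the bound becomes $\le \quadbinom{\alpha\ell t}{ht-1}_q q^{-(\alpha\ell t - ht + 1)ht} < 2\gamma q^{(h-\alpha\ell)t - 1}$, matching the claimed formula with $\eps=0$. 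Once the single-event probability bound of \cref{lem:upper_bound_on_p} is established, it feeds directly into the Lov\'asz Local Lemma (\cref{lem:LLL}) together with a dependency-degree count to yield the lower bound on $r_{\max}$ in \cref{thm:LLL_bound}.
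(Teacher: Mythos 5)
Your proof is correct, but it takes a genuinely different route from the paper's. The paper counts, exactly, the number of $\alpha\ell t \times ht$ matrices of each rank $i<(h-\eps)t$ via the formula $M(m,n,i)=\prod_{j=0}^{i-1}\frac{(q^m-q^j)(q^n-q^j)}{q^i-q^j}$, bounds each term with $M(m,n,i)\le\gamma q^{(m+n)i-i^2}$, and then controls the resulting sum as a geometric series, picking up a factor $q/(q-1)\le 2$. You instead characterize the rank-deficiency event as containment of the column space in some $\bigl((h-\eps)t-1\bigr)$-dimensional subspace $V\subseteq\Fq^{\alpha\ell t}$, take a union bound over all such $V$, use independence of the $ht$ columns (each lands in $V$ with probability $q^{(h-\eps)t-1-\alpha\ell t}$), and bound the number of subspaces via $\quadbinom{\alpha\ell t}{(h-\eps)t-1}_q<\gamma\, q^{((h-\eps)t-1)((\alpha\ell+\eps-h)t+1)}$. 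The exponent arithmetic does work out exactly as you predicted — the total exponent factors neatly as $\bigl((\alpha\ell+\eps-h)t+1\bigr)\bigl(-\eps t-1\bigr) = (h-\alpha\ell-\eps)\eps t^2+(h-\alpha\ell-2\eps)t-1$ — and your route in fact yields the slightly tighter constant $\gamma$ rather than $2\gamma$, since you never have to sum a geometric series. The paper's direct count has the small advantage of applying verbatim in the degenerate regime $\alpha\ell<h-\eps$ (where your union bound over subspaces of $\Fq^{\alpha\ell t}$ is vacuous, though the claimed inequality is then trivially true because the right-hand side is $\ge 1$); your wording that the event is "vacuous" there is backward — the event always holds, with probability $1$ — but this does not affect the result. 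Otherwise the argument is sound, and your $\eps=0$ sanity check is also correct.
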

\begin{proof}
The number of matrices $\bA\in\mathbb{F}_q^{m\times n}$ of rank $s$ is
\begin{equation}
  M(m,n,s):=\prod\limits^{s-1}_{j=0}\frac{(q^m-q^j)(q^n-q^j)}{q^s-q^j}
  \leq \gamma\cdot q^{(m+n)s-s^2} \label{eq:number_matrices_upper_bound}.
\end{equation}
Then,
\begin{align}
  \Pr(\mathcal{E}_{i_1,\dots,i_{\alpha}}) &=\frac{\sum\limits^{(h-\eps)t-1}_{i=0}M(\alpha \ell t,ht,i)}{q^{\alpha \ell h t^2}}\nonumber\\
    & \leq \frac{\sum\limits^{(h-\eps)t-1}_{i=0}\gamma \cdot q^{(h+\alpha\ell)ti-i^2}}{q^{\alpha\ell h t^2}}\label{eq:NM_upper_bound} \\
    & \leq \gamma \cdot \frac{q}{q-1}\cdot q^{\max_i\{(h+\alpha\ell)ti-i^2\} - \alpha\ell h t^2} \label{eq:summation_upper_bound}\\
    & = \gamma \cdot \frac{q}{q-1}\cdot q^{(h+\alpha\ell)ti-i^2|_{i=(h-\eps)t-1} - \alpha\ell h t^2} \label{eq:quadratic_maximum}\\
    & \leq \gamma\cdot 2\cdot q^{(h-\alpha\ell-\eps)\eps t^2+(h-\alpha\ell-2\eps)t-1} \ ,\nonumber
\end{align}
where (\ref{eq:NM_upper_bound}) holds due to~\eqref{eq:number_matrices_upper_bound}, (\ref{eq:summation_upper_bound}) follows from a geometric sum, and \eqref{eq:quadratic_maximum} follows by maximizing $(h+\alpha\ell)ti-i^2$.
\end{proof}

\begin{lemma}\label{lem:upper_bound_on_d}
Let $\alpha\geq 2$, $h,\ell,t\geq 1$, $\eps\geq 0$. Fixing $1 \leq i_1 <\dots < i_\alpha \leq r$, the event $\mathcal{E}_{i_1,\dots,i_\alpha}$ is statistically independent of all the other events $\mathcal{E}_{i_1',\dots,i_\alpha'}$ ($1 \leq i_1' < \dots < i_\alpha' \leq r$), except for at most $\alpha\binom{r-1}{\alpha-1}$
of them.
\end{lemma}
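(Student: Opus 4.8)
The plan is to prove \cref{lem:upper_bound_on_d} by a direct combinatorial counting argument on $\alpha$-subsets of the index set $[r]$. The key observation is that the events $\mathcal{E}_{i_1,\dots,i_\alpha}$ depend only on the matrices $\bA_{i_1},\dots,\bA_{i_\alpha}$, and since the $\bA_1,\dots,\bA_r$ are chosen independently, two events $\mathcal{E}_{i_1,\dots,i_\alpha}$ and $\mathcal{E}_{i_1',\dots,i_\alpha'}$ are statistically independent whenever the underlying index sets $\{i_1,\dots,i_\alpha\}$ and $\{i_1',\dots,i_\alpha'\}$ are disjoint. Hence the only events that can fail to be independent of a fixed $\mathcal{E}_{i_1,\dots,i_\alpha}$ are those whose index set shares at least one element with $\{i_1,\dots,i_\alpha\}$.

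First I would fix the index set $I=\{i_1,\dots,i_\alpha\}$ and count the $\alpha$-subsets $I'$ of $[r]$ with $I'\cap I\neq\varnothing$ and $I'\neq I$. Each such $I'$ contains at least one element of $I$; choosing that element gives $\alpha$ possibilities, and the remaining $\alpha-1$ elements of $I'$ are chosen from the $r-1$ indices in $[r]\setminus\{$that element$\}$, giving at most $\binom{r-1}{\alpha-1}$ choices. Multiplying yields the bound $\alpha\binom{r-1}{\alpha-1}$ on the number of distinct $\alpha$-subsets $I'\neq I$ meeting $I$. This overcounts (a subset $I'$ meeting $I$ in several indices is counted once per shared index), so the true count is no larger, which is exactly what is needed since we only require an upper bound on the dependency degree $d$.

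The remaining point is to make precise that disjointness of index sets implies statistical independence of the corresponding events. This follows because $\mathcal{E}_{i_1,\dots,i_\alpha}$ is a measurable function of the independent random matrices $(\bA_i)_{i\in I}$ and $\mathcal{E}_{i_1',\dots,i_\alpha'}$ is a measurable function of $(\bA_i)_{i\in I'}$; when $I\cap I'=\varnothing$ these two tuples of matrices are independent, hence so are the two events. Therefore every event not independent of $\mathcal{E}_{i_1,\dots,i_\alpha}$ has index set meeting $I$, and there are at most $\alpha\binom{r-1}{\alpha-1}$ of them (excluding $\mathcal{E}_{i_1,\dots,i_\alpha}$ itself), which completes the proof.

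I do not anticipate a genuine obstacle here: the statement is essentially a bookkeeping lemma setting up the application of the Lov\'asz Local Lemma (\cref{lem:LLL}) in the proof of \cref{thm:LLL_bound}. The only mild subtlety is being careful that the quantity $d$ in the Local Lemma counts the number of \emph{other} events in the dependency neighborhood, so the fixed event $\mathcal{E}_{i_1,\dots,i_\alpha}$ itself must be excluded from the count — the bound $\alpha\binom{r-1}{\alpha-1}$ is written so as to already account for this, since it counts $\alpha$-subsets $I'\neq I$ only.
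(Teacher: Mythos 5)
Your proof is correct and takes essentially the same approach as the paper: disjoint index sets give independent events, and an intersecting $\alpha$-subset is over-counted by choosing one shared index ($\alpha$ ways) and $\alpha-1$ more from the remaining $r-1$. One small inaccuracy: the count $\alpha\binom{r-1}{\alpha-1}$ does in fact include $I'=\{i_1,\dots,i_\alpha\}$ itself (with multiplicity $\alpha$), rather than excluding it as your final remark suggests — the paper says this explicitly — but since you only need an upper bound this slip does not affect the validity of the argument.
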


\begin{proof}
  For $1 \leq i_1 < \dots < i_\alpha \leq r$ and $1 \leq i_1' <\dots < i_\alpha' \leq r$, the events $\mathcal{E}_{i_1,\dots,i_\alpha}$ and $\mathcal{E}_{i_1',\dots,i_\alpha'}$ are statistically independent if and only if $\{i_1,\dots,i_\alpha\} \cap \{i_1',\dots,i_\alpha'\} = \emptyset$. Thus, having chosen $1 \leq i_1 < \dots < i_\alpha \leq r$, there are
  at most
  $\binom{\alpha}{1}\binom{r-1}{\alpha-1}$
  ways of choosing $\{i_1',\dots,i_\alpha'\}$ such that it is not independent from $\{i_1,\dots,i_\alpha\} $ (including the case $\{i_1',\dots,i_\alpha'\}=\{i_1,\dots,i_\alpha\}$).
\end{proof}
\begin{remark}
Lemma~\ref{lem:upper_bound_on_d} is a union-bound argument on the number of dependent events. The exact number is $\binom{r}{\alpha}-\binom{r-\alpha}{\alpha}$.
However the exact expression makes it harder to resolve for $r$ later thus we use the bound instead.
\end{remark}
\begin{theorem}\label{thm:LLL_bound}
  Let $\alpha\geq 2$, $\eps \geq 0$, $\ell,t \geq 1$, and $1\leq h \leq \alpha\ell+\eps$ be fixed integers.
  If
  \begin{equation}\label{eq:r_LLL_lb}
      r \leq \beta \cdot
      q^{\frac{f(t)}{\alpha-1}}\ ,
  \end{equation}
where $\beta := \parenv*{\frac{(\alpha-1)!}{2e\gamma\alpha}}^{\frac{1}{\alpha-1}}, \gamma\approx3.48$ and $f(t):=(\alpha\ell+\eps-h)\eps t^2+(\alpha\ell+2\eps-h)t +{1}$, then $(\eps,\ell)-\mathcal{N}_{h,r,\alpha\ell+\eps}$ has a $(q,t)$-linear solution.

Namely, for an $(\eps,\ell)-\mathcal{N}_{h,r,\alpha\ell+\eps}$ that has a $(q,t)$-linear solution, the maximum number of middle nodes satisfies
\begin{align*}
  r_{\max}\geq \beta \cdot q^{\frac{f(t)}{\alpha-1}}\ .
\end{align*}
\end{theorem}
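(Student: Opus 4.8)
The plan is to apply the Lov\'asz-Local-Lemma (\cref{lem:LLL}) to the events $\cE_{i_1,\dots,i_\alpha}$ defined above, using the probability bound from \cref{lem:upper_bound_on_p} and the dependency bound from \cref{lem:upper_bound_on_d}. Concretely, I would set $p = 2\gamma\cdot q^{(h-\alpha\ell-\eps)\eps t^2+(h-\alpha\ell-2\eps)t-1}$ as the per-event probability bound, and $d = \alpha\binom{r-1}{\alpha-1}$ as the dependency degree. By the Lov\'asz-Local-Lemma, if $e p d \leq 1$ then there is a nonzero probability that none of the events $\cE_{i_1,\dots,i_\alpha}$ occurs, i.e., there is a choice of $\bA_1,\dots,\bA_r$ for which $\rank\begin{pmatrix}\bA_{i_1}\\ \vdots\\ \bA_{i_\alpha}\end{pmatrix}\geq(h-\eps)t$ holds for every $\alpha$-subset, which by \eqref{eq:NC_sol} is exactly a $(q,t)$-linear solution of the network.

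The main computation is to turn the condition $e p d \leq 1$ into the explicit bound on $r$ claimed in \eqref{eq:r_LLL_lb}. First I would use the crude estimate $\binom{r-1}{\alpha-1}\leq \frac{r^{\alpha-1}}{(\alpha-1)!}$ so that $d \leq \frac{\alpha\, r^{\alpha-1}}{(\alpha-1)!}$. Then $epd\leq 1$ is implied by
\begin{align*}
  e\cdot 2\gamma\cdot q^{(h-\alpha\ell-\eps)\eps t^2+(h-\alpha\ell-2\eps)t-1}\cdot \frac{\alpha\,r^{\alpha-1}}{(\alpha-1)!}\leq 1\ ,
\end{align*}
which rearranges to
\begin{align*}
  r^{\alpha-1}\leq \frac{(\alpha-1)!}{2e\gamma\alpha}\cdot q^{-\left((h-\alpha\ell-\eps)\eps t^2+(h-\alpha\ell-2\eps)t-1\right)}
  = \frac{(\alpha-1)!}{2e\gamma\alpha}\cdot q^{f(t)}\ ,
\end{align*}
where $f(t)=(\alpha\ell+\eps-h)\eps t^2+(\alpha\ell+2\eps-h)t+1$ (note the sign flip in the exponent). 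Taking $(\alpha-1)$-th roots gives $r\leq \beta\cdot q^{f(t)/(\alpha-1)}$ with $\beta = \left(\frac{(\alpha-1)!}{2e\gamma\alpha}\right)^{1/(\alpha-1)}$, which is precisely \eqref{eq:r_LLL_lb}. The final sentence of the theorem ($r_{\max}\geq \beta\cdot q^{f(t)/(\alpha-1)}$) is then just a restatement: if the network with $r = \lfloor \beta\cdot q^{f(t)/(\alpha-1)}\rfloor$ middle nodes is solvable, then $r_{\max}$ is at least this quantity.

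The only mild subtlety, rather than a genuine obstacle, is bookkeeping the exponent signs between \cref{lem:upper_bound_on_p} (which is phrased with exponent $(h-\alpha\ell-\eps)\eps t^2 + (h-\alpha\ell-2\eps)t - 1$, typically negative since $h\leq\alpha\ell+\eps$) and the statement of the theorem (phrased with $f(t)$, the negative of that exponent plus the $-1$ absorbed as $+1$), so one must be careful that $q^{f(t)} = q^{-((h-\alpha\ell-\eps)\eps t^2+(h-\alpha\ell-2\eps)t-1)}$ holds as an identity. I would also remark that the union-bound overcount in \cref{lem:upper_bound_on_d} and the binomial estimate $\binom{r-1}{\alpha-1}\leq r^{\alpha-1}/(\alpha-1)!$ are both in the "safe" direction, so the resulting bound on $r$ is a valid sufficient condition; it is not tight, but it is explicit and closed-form in $q$, which is the point. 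No new ingredients beyond the three lemmas and elementary inequalities are needed.
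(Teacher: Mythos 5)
Your proof is correct and follows the same Lov\'asz-Local-Lemma approach as the paper, using \cref{lem:upper_bound_on_p} and \cref{lem:upper_bound_on_d} exactly as intended and rearranging $epd\leq 1$ to the claimed bound. The only cosmetic difference is that you bound $\binom{r-1}{\alpha-1}\leq r^{\alpha-1}/(\alpha-1)!$ directly, which yields the stated inequality cleanly, whereas the paper keeps $(r-1)^{\alpha-1}$, obtains $r\leq \beta\,q^{f(t)/(\alpha-1)}+1$, and then drops the $+1$.
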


\begin{proof}
Let $p=\Pr(\mathcal{E}_{i_1,\dots,i_\alpha})$ and denote by $d$ the number of other events $\mathcal{E}_{i'_1,\dots,i'_\alpha}$ that are dependent on $\mathcal{E}_{i_1,\dots,i_\alpha}$. We have shown that $p \leq 2\gamma \cdot q^{(h-\alpha\ell-\eps)\eps t^2+(h-\alpha\ell-2\eps)t-1}$ in \cref{lem:upper_bound_on_p} and $d\leq \alpha\binom{r-1}{\alpha-1}$ in \cref{lem:upper_bound_on_d}.
By the Lov\'asz Local Lemma, it suffices to show that $epd\leq 1$.
Noting that $d\leq \alpha\binom{r-1}{\alpha-1}\leq \alpha\cdot \frac{(r-1)^{\alpha-1}}{(\alpha-1)!}$, we shall require
\begin{align*}
    e\cdot2\gamma q^{(h-\alpha\ell-\eps)\eps t^2+(h-\alpha\ell-2\eps)t-1}
    \cdot \alpha\frac{(r-1)^{\alpha-1}}{(\alpha-1)!}\leq 1\ .
\end{align*}

Namely, if
$ r \leq {\beta}\cdot
q^{\frac{(\alpha\ell+\eps-h)\eps}{\alpha-1}t^2+\frac{\alpha\ell+2\eps-h}{\alpha-1}t+\frac{1}{\alpha-1}}+1$, then $(\eps,\ell)-\mathcal{N}_{h,r,\alpha\ell+\eps}$ has a $(q,t)$-linear solution.
We omit the plus one for simplicity.
\end{proof}
\begin{remark}
 For any $\alpha\geq 7$,
 \eqref{eq:r_LLL_lb} can be simplified to
 \begin{equation*}
    r \leq
    q^{\frac{f(t)}{\alpha-1}}\ ,
 \end{equation*}
 since the prefactor $\beta>1$ for all $\alpha\geq 7$.
\end{remark}

\begin{remark}
 For $t\geq 3$, $\alpha \geq 5$ or $q\geq 4$, it can be seen from numerical analysis that
 $\beta \cdot q^{\frac{\alpha\ell+2\eps-h}{\alpha-1}t +\frac{1}{\alpha-1}} \geq 1$. Thus, \eqref{eq:r_LLL_lb} can be simplified to a looser upper bound
 \begin{align*}
    r \leq q^{\frac{(\alpha\ell+\eps-h)\eps}{\alpha-1}t^2}\ .
 \end{align*}
 However, omitting the term $\beta \cdot q^{\frac{\alpha\ell+2\eps-h}{\alpha-1}t+\frac{1}{\alpha-1}} $ will cause a loss in estimating the maximum achievable number of middle nodes. Nevertheless, the loss is negligible when $t\to \infty$.
\end{remark}

\subsubsection{A Lower Bound by $\alpha$-Covering Grassmannian Codes}
\begin{definition}[Covering Grassmannian Codes~\cite{EZjul2019}]
  An $\alpha$-$(n,k,\delta)_q^c$ \emph{covering Grassmannian code} $\cC$ is a subset of $\mathcal{G}_q(n,k)$ such that each subset with $\alpha$ codewords of $\cC$ spans a subspace whose dimension is at least $\delta+k$ in $\mathbb{F}_q^n$.
\end{definition}
The following theorem from~\cite{EZjul2019} shows the connection between covering Grassmannian codes and linear network coding solutions.
\begin{theorem}[{\cite[Thm.~4]{EZjul2019}}]\label{thm:cover_scalar_sol}
  The $(\eps,\ell)-\mathcal{N}_{h,r,\alpha\ell+\eps}$ network is solvable with a $(q,t)$-linear solution if and only if there exists an $\alpha$-$(ht,\ell t,ht-\ell t-\eps t)_{q}^c$ code with $r$ codewords.
\end{theorem}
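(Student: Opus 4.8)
The plan is to build a dictionary between the two objects in which a $(q,t)$-linear solution of $(\eps,\ell)-\cN_{h,r,\alpha\ell+\eps}$ corresponds to an assignment of an $\ell t$-dimensional subspace of $\F_q^{ht}$ to each of the $r$ middle nodes, and the decoding requirement \eqref{eq:NC_sol} translates verbatim into the covering property of an $\alpha$-$(ht,\ell t,\,ht-\ell t-\eps t)^c_q$ code. Recall from the formulation leading to \eqref{eq:NC_sol} that, once the source-to-middle coding matrices $\bA_1,\dots,\bA_r\in\F_q^{\ell t\times ht}$ are fixed, suitable direct-link matrices $\bB_1,\dots,\bB_N$ can always be chosen, and the network is solved exactly when for every $\alpha$-subset $\{i_1,\dots,i_\alpha\}\subseteq[r]$ one has $\rk\begin{pmatrix}\bA_{i_1}\\ \vdots\\ \bA_{i_\alpha}\end{pmatrix}\geq (h-\eps)t$. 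So it suffices to match this family of rank inequalities with the defining property of an $\alpha$-$(ht,\ell t,\,ht-\ell t-\eps t)^c_q$ code, whose required span dimension is $\delta+k=(ht-\ell t-\eps t)+\ell t=(h-\eps)t$.

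\textbf{From a solution to a code.}
Assume a $(q,t)$-linear solution with matrices $\bA_1,\dots,\bA_r$. First I would argue that one may take each $\bA_i$ to have full row rank $\ell t$: if $\rk\bA_i<\ell t$, replace $\bA_i$ by any $\widetilde\bA_i$ of rank $\ell t$ whose row span contains $\myspanRow{\bA_i}$; this only increases every rank in \eqref{eq:NC_sol}, so the solution property is preserved (with $\bB$'s re-chosen). Setting $\mathcal{U}_i\defeq\myspanRow{\bA_i}\in\cG_q(ht,\ell t)$, the inequalities \eqref{eq:NC_sol} say precisely that $\dim(\mathcal{U}_{i_1}+\cdots+\mathcal{U}_{i_\alpha})\geq(h-\eps)t=\delta+\ell t$ for every $\alpha$-subset, so $\set*{\mathcal{U}_1,\dots,\mathcal{U}_r}$ satisfies the covering condition of an $\alpha$-$(ht,\ell t,\,ht-\ell t-\eps t)^c_q$ code. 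The remaining point is that this collection has exactly $r$ members, i.e., the $\mathcal{U}_i$ are pairwise distinct: for $\alpha=2$ this is automatic, since $\ell+\eps<h$ forces $\delta=(h-\ell-\eps)t\geq1$ and hence $\mathcal{U}_i=\mathcal{U}_j$ would give $\dim(\mathcal{U}_i+\mathcal{U}_j)=\ell t<(h-\eps)t$; for general $\alpha$ I would show that any solution can be modified to use $r$ distinct subspaces without violating any of the finitely many rank conditions.

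\textbf{From a code to a solution.}
Conversely, let $\cC=\set*{\mathcal{U}_1,\dots,\mathcal{U}_r}$ be an $\alpha$-$(ht,\ell t,\,ht-\ell t-\eps t)^c_q$ code with $r$ codewords. For each $i$, pick $\bA_i\in\F_q^{\ell t\times ht}$ whose rows form a basis of $\mathcal{U}_i$ (so $\rk\bA_i=\ell t$), let the source apply $\bA_1,\dots,\bA_r$ on the links to the $r$ middle nodes, and let the middle nodes forward their incoming packets. For any receiver indexed by $\{i_1,\dots,i_\alpha\}$, the covering property of $\cC$ yields $\rk\begin{pmatrix}\bA_{i_1}\\ \vdots\\ \bA_{i_\alpha}\end{pmatrix}=\dim(\mathcal{U}_{i_1}+\cdots+\mathcal{U}_{i_\alpha})\geq\delta+\ell t=(h-\eps)t$, which is exactly \eqref{eq:NC_sol}; choosing each $\bB_i$ to supply an $\eps t$-dimensional complement then makes every receiver's overall matrix have rank $ht$, i.e., gives the desired $(q,t)$-linear solution.

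\textbf{Main obstacle.}
Both directions are essentially tautological once the correspondence ``middle node $\leftrightarrow$ row space of its coding matrix'' is in place, and the reverse direction in particular is immediate because a code is a set of distinct subspaces by definition. The one genuinely delicate piece of bookkeeping is in the forward direction for $\alpha\geq3$: guaranteeing that the $r$ middle-node subspaces can be taken pairwise distinct (so that the resulting covering code has exactly $r$ — not fewer — codewords), while checking that the two normalizations used above, ``full row rank'' and ``distinct subspaces,'' never destroy solvability. I expect this to be the part requiring the most care, although it remains routine.
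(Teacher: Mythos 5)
The paper does not prove \cref{thm:cover_scalar_sol} at all: the bracketed label in the theorem environment, \texttt{\textbackslash cite[Thm.~4]\{EZjul2019\}}, marks it as a verbatim citation of a result from the reference, and the surrounding text (``The following theorem from~\cite{EZjul2019} shows the connection\dots'') confirms that it is imported without argument. There is therefore no in-paper proof to compare your sketch against.

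That said, your dictionary ``middle node $\leftrightarrow$ row space of its coding matrix'' is exactly the right one, and both normalizations you use are sound: upgrading $\bA_i$ to full row rank $\ell t$ can only increase each rank in \eqref{eq:NC_sol}, and in the reverse direction taking $\bA_i$ to be a basis of $\mathcal{U}_i$ makes the stacked rank literally equal to $\dim(\mathcal{U}_{i_1}+\cdots+\mathcal{U}_{i_\alpha})$. The one point you flag as ``genuinely delicate'' --- forcing the $r$ subspaces to be pairwise distinct when $\alpha\geq 3$ --- is almost certainly a non-issue in the intended framework. Although the definition in the paper reads ``subset of $\cG_q(n,k)$,'' the paper itself treats covering Grassmannian codes as \emph{multisets}: the proof of \cref{thm:EK_1_ext} explicitly forms the multiset union $\cD^{\alpha-1}$ of $\alpha-1$ copies of a lifted-MRD dual and calls the result a covering Grassmannian code, and the bound in \cref{thm:codesize} also makes sense only under that convention. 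Once you allow codeword multiplicities, a $(q,t)$-linear solution corresponds to an ordered (or multi-) family of $r$ subspaces, not a set, and both directions of your correspondence become completely tautological; the ``modify a solution to $r$ distinct subspaces'' step you were bracing yourself to prove is simply not needed. If you instead insist on the strict set reading, you should be aware that the perturbation claim you make for $\alpha\geq 3$ is not obviously true over a fixed small field and would need an actual argument.
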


Let ${\cal B}_q(n,k,\delta;\alpha)$ denote the maximum possible size of an $\alpha$-$(n,k,\delta)_q^c$ covering Grassmannian code.
Let $\bA$ be a $k\times (n-k)$ matrix, and let $\bI_k$ be a $k\times k$ identity matrix. The matrix $[\bI_k\ \bA]$ can be viewed as a generator matrix of a $k$-dimensional subspace of $\Fq^{n}$, and it is called the \emph{lifting} of $\bA$.
When all the codewords of an MRD code $\cC$ are lifted to $k$-dimensional subspaces, the result is
called \emph{lifted MRD code}, denoted by $\cC^{\mathrm{lifted}}$.
\begin{theorem}\label{thm:EK_1_ext}
Let $n,k,\delta$ and $\alpha$ be positive integers such that $1\leq \delta \leq k$, $\delta+k\leq n$ and $\alpha\geq 2$. Then
$${\cal B}_q(n,k,\delta;\alpha) \geq (\alpha -1) q^{\max\{k,n-k\}(\min\{k,n-k\}-\delta+1)}\ .$$
\end{theorem}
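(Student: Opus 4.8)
\textbf{Proof plan for Theorem~\ref{thm:EK_1_ext}.}
The plan is to prove the lower bound ${\cal B}_q(n,k,\delta;\alpha) \geq (\alpha-1)q^{\max\{k,n-k\}(\min\{k,n-k\}-\delta+1)}$ by an explicit construction combining lifted maximum rank distance (MRD) codes with a multiplicity trick. First I would recall that a lifted MRD code $\cC^{\mathrm{lifted}}$ obtained from an MRD code $\cC\subseteq\Fq^{k\times(n-k)}$ with minimum rank distance $d$ is a constant-dimension code in $\cG_q(n,k)$: any two distinct codewords (as $k$-dimensional subspaces) intersect in a subspace of dimension at most $k-d$, equivalently their sum has dimension at least $k+d$. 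By the Singleton-type bound in the rank metric (\cref{thm:singleton-Hamming}'s rank analogue, stated in the excerpt), an MRD code with $d = \delta$ over $\Fq^{k\times(n-k)}$ has size $q^{\max\{k,n-k\}(\min\{k,n-k\}-\delta+1)}$. Such codes exist for all admissible parameters (Gabidulin codes / Delsarte), so $\cC^{\mathrm{lifted}}$ is a $2$-$(n,k,\delta)_q^c$ covering Grassmannian code of that size: every $2$ of its codewords span a subspace of dimension $\geq k+\delta$.

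The key step is to upgrade this from $\alpha=2$ to general $\alpha$, gaining the factor $(\alpha-1)$. Here I would take $\alpha-1$ ``parallel copies'' of a single lifted MRD code, shifted so that codewords from different copies are in very general position while codewords within one copy retain the pairwise MRD guarantee. Concretely, fix a lifted MRD code $\cD\subseteq\cG_q(n,k)$ with pairwise-sum dimension $\geq k+\delta$, and pick $\alpha-1$ distinct scalars or, more robustly, $\alpha-1$ matrices $\bB_1,\dots,\bB_{\alpha-1}$ (e.g.\ $\bB_i = c_i \bI$ for distinct $c_i\in\Fq^*$, using $q$ large enough, or an affine translation argument) so that applying the transformation $\bA\mapsto \bA+\bB_i$ inside the lifting produces copies $\cD_1,\dots,\cD_{\alpha-1}$ with the property that any $\alpha$ codewords chosen among the $\cD_i$'s — which, by pigeonhole, must include two from the same copy $\cD_i$ — already span dimension $\geq k+\delta$ from that repeated pair alone. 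Taking the union $\cC = \bigcup_{i=1}^{\alpha-1}\cD_i$ then yields $|\cC| = (\alpha-1)|\cD| = (\alpha-1)q^{\max\{k,n-k\}(\min\{k,n-k\}-\delta+1)}$, and $\cC$ is an $\alpha$-$(n,k,\delta)_q^c$ code because any $\alpha$-subset of codewords contains a repeated-copy pair whose span already has dimension at least $\delta+k$.

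I expect the main obstacle to be verifying that the union $\cC$ has exactly $(\alpha-1)|\cD|$ codewords, i.e.\ that the copies $\cD_1,\dots,\cD_{\alpha-1}$ are pairwise disjoint as subsets of $\cG_q(n,k)$, and simultaneously that the translates $\bB_i$ can be chosen while keeping every codeword a valid lifting (so the copies really are constant-dimension codes of the same size). This is a counting/field-size argument: one must show that for suitable $\bB_i$ no subspace in the lifting of one copy coincides with a subspace in another, which follows because the lifting map $\bA\mapsto \mathrm{rowspace}[\bI_k\ \bA]$ is injective, so distinct affine shifts of the underlying MRD code give disjoint subspace families. Once disjointness is established the pigeonhole argument is immediate. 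A secondary, routine point is handling the $\delta=k$ boundary and the symmetry between $k$ and $n-k$, both of which are covered by the standard duality/transpose symmetry of MRD codes.

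Finally, I would remark that the bound is stated so as to match, up to the constant $(\alpha-1)$, the size of a single lifted MRD code, so the construction is essentially optimal in exponent; combined with \cref{thm:cover_scalar_sol} it immediately gives the lower bound on $r_{\max}$ for the generalized combination network with a $(q,t)$-linear solution, namely $r_{\max}\geq (\alpha-1)q^{\max\{\ell t,(h-\ell)t\}(\min\{\ell t,(h-\ell)t\}-(h-\ell-\eps)t+1)}$ after substituting $n=ht$, $k=\ell t$, $\delta=(h-\ell-\eps)t$.
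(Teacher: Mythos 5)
Your construction diverges from the paper's in a way that opens a genuine gap. The paper lifts an MRD code to $(n-k)$-dimensional subspaces, dualizes to obtain a family $\cD$ of $k$-dimensional subspaces, and then takes $\alpha-1$ copies of the \emph{same} family $\cD$ as a \emph{multiset}. The pigeonhole argument works because no subspace can be chosen $\alpha$ times from a multiset in which each subspace has multiplicity $\alpha-1$, so any $\alpha$ choices contain two distinct $D_i\neq D_j$, whose preimages under dualization are two distinct codewords of the lifted MRD code; the intersection bound $\dim(C_i\cap C_j)\leq (n-k)-\delta$ then yields $\dim(D_i+D_j)\geq k+\delta$. No disjointness is needed, and indeed none is possible in general.

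You instead insist on $\alpha-1$ pairwise \emph{disjoint} affine translates $\cD_1,\dots,\cD_{\alpha-1}$, writing that ``distinct affine shifts of the underlying MRD code give disjoint subspace families.'' This is not true: the translates $\cC+\bB_i$ are cosets of the linear code $\cC\subseteq\Fq^{k\times(n-k)}$, and two of them coincide whenever $\bB_i-\bB_j\in\cC$. The number of distinct cosets is $q^{k(n-k)-K}=q^{\max\{k,n-k\}(\delta-1)}$, so your construction requires $\alpha-1\leq q^{\max\{k,n-k\}(\delta-1)}$. For $\delta=1$ there is a single coset and the requirement becomes $\alpha\leq 2$, while the theorem claims the bound for all $\alpha\geq 2$. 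In fact for $\delta=1$ the right-hand side $(\alpha-1)q^{k(n-k)}$ exceeds $\left|\cG_q(n,k)\right|<\gamma\,q^{k(n-k)}$ once $\alpha-1>\gamma\approx 3.48$, so no \emph{set} of that size in $\cG_q(n,k)$ can exist at all; the claimed bound can only be realized by a multiset. The missing idea is therefore that $\cal B_q(n,k,\delta;\alpha)$ is defined over multisets of subspaces (which matches the network-coding application, where several middle nodes may carry identical coding vectors), and once this is recognized the shifting machinery becomes unnecessary: take $\alpha-1$ verbatim copies of a single lifted (dual) MRD code.
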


\begin{proof} Let $m=n-k$ and $K=\max\{m,n-m\}(\min\{m,n-m\}-\delta+1)$.
Since $\delta \leq \min\{m,n-m\}$,  an $[m\times (n-m), K, \delta]_q$ MRD code $\cC$ exists.
Let $\cC^{\mathrm{lifted}}$ be the lifted code of $\cC$. Then $\cC^{\mathrm{lifted}}$ is a subspace code of $\Fq^n$, which contains $q^K$ $m$-dimensional subspaces as codewords and its minimum subspace distance is $2\delta$~\cite{silva2008rank}.

Hence, for any two different codewords $C_1,C_2 \in \cC^{\mathrm{lifted}}$ we have
$$\dim (C_1 \cap C_2)\leq m-\delta\ .$$

Now, let $\cD\defeq\set*{C^{\perp} \ |\ C \in \cC^{\mathrm{lifted}}}$. Take $\alpha -1$ copies of $\cD$ and denote their multiset union by $\cD^{\alpha-1}$.
We show that $\cD^{\alpha-1}$ is an $\alpha$-$(n,k,\delta)_q^c$ covering Grassmannian code.
Each subspace $D\in\cD^{\alpha-1}$ has dimension $n-m$, since it is the dual of a codeword in $\cC^{\mathrm{lifted}}$.
For any $\alpha$ subspaces $D_1,D_2,\ldots, D_\alpha\in\cD^{\alpha-1}$, there exist $1\leq i < j\leq \alpha$ such that $D_i \not= D_j$. Let $C_i=D_i^{\perp}$ and $C_j=D_j^\perp$. By definition, $C_i$ and $C_j$ are two distinct codewords of $\cC^{\mathrm{lifted}}$. We then have
\begin{align*}
    \dim\parenv*{\sum_{\ell=1}^\alpha D_\ell} & \geq \dim \parenv*{D_i +D_j }
    = n - \dim \parenv*{D_i^{\perp} \cap D_j^{\perp}}\\
    & = n - \dim \parenv*{C_i \cap C_j}
    \geq n-m+\delta = k+\delta\ .
\end{align*}
So far we have shown that $\cD^{\alpha-1}$ is an $\alpha$-$(n,k,\delta)_q^c$ covering Grassmannian code. Then the statement follows by
\begin{align*}
 {\cal B}_q(n,k,\delta;\alpha)\geq|\cD^{\alpha-1}|=  (\alpha-1)|\cD|=(\alpha-1)|\cC^{\mathrm{lifted}}|
=  (\alpha-1) q^{\max\{k,n-k\}(\min\{k,n-k\}-\delta+1)}\ .
\end{align*}
\end{proof}

The following corollary rephrases \cref{thm:cover_scalar_sol} using the result in \cref{thm:EK_1_ext}.


\begin{corollary}\label{cor:EK19_lb}
Let $\alpha\geq 2$, $h,\ell,t\geq 1$, $\eps\geq 0$, $h\leq 2\ell+\eps$.
For an $(\eps,\ell)-\mathcal{N}_{h,r,\alpha\ell+\eps}$ which has a $(q,t)$-linear solution, the maximum number of middle nodes is
\begin{align*}
  r_{\max}\geq (\alpha-1)q^{g(t)}\ ,
\end{align*}
where
 \begin{align*}
     g(t)&:=\max\{\ell t,(h-\ell)t\}
     \cdot(\min\{\ell t, (h-\ell)t\}-(h-\ell-\eps)t+1) \\
     &=\begin{cases}\ell\eps t^2 +\ell t & h\leq 2\ell \\ (h-\ell)(2\ell+\eps-h)t^2+(h-\ell)t & \text{otherwise} \end{cases}\ .
 \end{align*}
\end{corollary}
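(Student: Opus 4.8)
The statement to prove is Corollary~\ref{cor:EK19_lb}, which is a direct translation of the connection theorem (\cref{thm:cover_scalar_sol}) and the covering-code lower bound (\cref{thm:EK_1_ext}) into the language of network parameters.

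\medskip

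\noindent\textbf{Proof plan.} The plan is to simply chain together the two results already established. First I would invoke \cref{thm:cover_scalar_sol}: the $(\eps,\ell)-\mathcal{N}_{h,r,\alpha\ell+\eps}$ network admits a $(q,t)$-linear solution if and only if there exists an $\alpha$-$(ht,\ell t, ht-\ell t-\eps t)_q^c$ covering Grassmannian code with $r$ codewords. Hence $r_{\max}$, the largest number of middle nodes for which a $(q,t)$-linear solution exists, equals ${\cal B}_q(ht,\ell t, ht-\ell t - \eps t; \alpha)$, the maximum size of such a code. Then I would apply \cref{thm:EK_1_ext} with the substitution $n = ht$, $k = \ell t$, $\delta = ht - \ell t - \eps t = (h-\ell-\eps)t$. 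I must check that the hypotheses of \cref{thm:EK_1_ext} are met: $\alpha\geq 2$ is assumed; $1\leq \delta$ requires $(h-\ell-\eps)t\geq 1$, i.e.\ $h\geq \ell+\eps+1$, which holds in the non-trivially-solvable regime $\ell+\eps< h$; and $\delta+k\leq n$ requires $(h-\ell-\eps)t + \ell t \leq ht$, i.e.\ $-\eps t\leq 0$, which is automatic. This gives
\begin{align*}
  r_{\max} \geq (\alpha-1)\, q^{\max\{\ell t,\, ht-\ell t\}(\min\{\ell t,\, ht - \ell t\} - (h-\ell-\eps)t + 1)}\ .
\end{align*}

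\medskip

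\noindent\textbf{Identifying $g(t)$.} It remains to check that the exponent above is exactly $g(t)$ as defined in the statement. Since $ht - \ell t = (h-\ell)t$, the exponent is $\max\{\ell t, (h-\ell)t\}\cdot(\min\{\ell t, (h-\ell)t\} - (h-\ell-\eps)t + 1)$, which is verbatim the first displayed line in the definition of $g(t)$. For the case distinction: if $h\leq 2\ell$ then $\ell \geq h-\ell$, so $\max = \ell t$, $\min = (h-\ell)t$, and the exponent is $\ell t\cdot((h-\ell)t - (h-\ell-\eps)t + 1) = \ell t\cdot(\eps t + 1) = \ell\eps t^2 + \ell t$. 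Otherwise (i.e.\ $2\ell < h \leq 2\ell+\eps$, the remaining subcase of the hypothesis $h\leq 2\ell+\eps$) we have $h-\ell > \ell$, so $\max = (h-\ell)t$, $\min = \ell t$, and the exponent becomes $(h-\ell)t\cdot(\ell t - (h-\ell-\eps)t + 1) = (h-\ell)t\cdot((2\ell+\eps-h)t + 1) = (h-\ell)(2\ell+\eps-h)t^2 + (h-\ell)t$. Both cases match the stated $g(t)$, completing the proof.

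\medskip

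\noindent The proof is essentially bookkeeping: no step presents a genuine obstacle, since the substantive work (the lifted-MRD-dual construction) is already done in \cref{thm:EK_1_ext}, and the biconditional in \cref{thm:cover_scalar_sol} is assumed. The only point requiring a moment of care is verifying that the parameter hypotheses of \cref{thm:EK_1_ext} are satisfied under the standing assumptions of the section ($\ell+\eps< h$, plus the corollary's own hypothesis $h\leq 2\ell+\eps$), and that the $\max/\min$ expression collapses correctly in each branch of the case distinction — both of which are routine.
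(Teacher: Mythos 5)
Your proof is correct and mirrors exactly what the paper intends: the paper states the corollary as a direct "rephrasing" of \cref{thm:cover_scalar_sol} via \cref{thm:EK_1_ext}, with no further argument, and your chaining-and-substitution is the careful version of that.

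One small omission worth flagging: when you verify the hypotheses of \cref{thm:EK_1_ext} you check $\alpha\geq 2$, $1\leq\delta$, and $\delta+k\leq n$, but the theorem also requires $\delta\leq k$, and you skip it. With $\delta=(h-\ell-\eps)t$ and $k=\ell t$, the condition $\delta\leq k$ is exactly $h\leq 2\ell+\eps$ — that is, the corollary's own standing hypothesis. It is not there merely to drive the $\max/\min$ case split as your write-up suggests; it is what makes \cref{thm:EK_1_ext} applicable at all. Worth saying explicitly, since otherwise a reader might wonder where $h\leq 2\ell+\eps$ is actually used in the logic rather than just in the algebra.
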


\subsection{Bounds on the Gap on Field Size}
\label{sec:bound_gap}
In the last section, we presented bounds on $r_{\max}$. The main results in this section are the upper and lower bounds on $\gap(\cN)$ in \cref{thm:gap_ub} and \cref{thm:gap_lb}, respectively.
To discuss $\gap (\cN)$, we first need the following conditions on the smallest field size $q_s(\cN)$ or $q_v(\cN)$, under which a network $\cN$ is solvable.

\begin{lemma}\label{lem:lb_q}
Let $\alpha\geq 2$, $r,h,\ell,t\geq 1$, $\eps\geq 0$. If $(\eps,\ell)-\mathcal{N}_{h,r,\alpha\ell+\eps}$ has a $(q,t)$-linear solution then
    \begin{align*}
        q^t \geq
        \begin{cases}
        \parenv*{ \frac{r+\theta-\alpha}{\gamma\cdot\theta}}^{\frac{1}{\ell (\eps t+1)}} & h\geq 2\ell+\eps\\
        \parenv*{ \frac{r}{\gamma(\alpha-1)}}^{\frac{1}{\ell(\eps t+1)}} & \text{otherwise}
        \end{cases}\ ,
    \end{align*}
    where $\theta := \alpha-\floor{\frac{h-\eps}{\ell}}+1$ and $\gamma\approx 3.48$.
\end{lemma}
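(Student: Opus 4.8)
The idea is to turn the bounds on $r_{\max}$ from the previous section "upside down": since a $(q,t)$-linear solution exists exactly when the network parameter $r$ does not exceed the maximal achievable number of middle nodes, any \emph{upper} bound on $r_{\max}$ immediately yields a \emph{lower} bound on $q^t$. Concretely, I would start from the characterization that a $(q,t)$-linear solution to $(\eps,\ell)\text{-}\mathcal{N}_{h,r,\alpha\ell+\eps}$ induces a collection $\cS$ of $r$ subspaces of $\Fq^{ht}$, each of dimension $\ell t$, such that every $\alpha$-subset spans a subspace of dimension at least $(h-\eps)t$; this is the setup of \cref{thm-upbound-v2} and was already used in the proof of \cref{cor:imupperbound-N}.

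For the case $h\geq 2\ell+\eps$ (equivalently $h-\eps\geq 2\ell$, matching the hypothesis of \cref{cor:imupperbound-N}), I would invoke the strict inequality
\begin{align*}
  r \leq r_{\max} < \gamma\,\theta\, q^{\ell t(\eps t+1)} + \alpha - \theta\ ,
\end{align*}
where $\theta = \alpha - \floor{(h-\eps)/\ell} + 1$. Rearranging gives $r + \theta - \alpha < \gamma\,\theta\, q^{\ell t(\eps t+1)}$, hence $q^{\ell t(\eps t+1)} > (r+\theta-\alpha)/(\gamma\theta)$, and taking the $\ell(\eps t+1)$-th root of both sides (all quantities positive since $\theta\geq 1$ and $\gamma>0$) yields $q^t > \big((r+\theta-\alpha)/(\gamma\theta)\big)^{1/(\ell(\eps t+1))}$, which is the first branch of the claimed bound (with $\geq$ in place of $>$, which is weaker and hence also valid).

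For the remaining case $h < 2\ell+\eps$ (i.e.\ $h\leq 2\ell+\eps$, noting $h=2\ell+\eps$ can be absorbed into either branch), I would instead use \cref{cor:EK19_lb}, or rather its contrapositive: that corollary states $r_{\max}\geq (\alpha-1)q^{g(t)}$, but to get a \emph{necessary} condition on $q^t$ I need an upper bound on $r_{\max}$ in this regime. Here the cleaner route is to note that a $(q,t)$-linear solution gives, via \cref{thm:cover_scalar_sol}, an $\alpha$-$(ht,\ell t, ht-\ell t-\eps t)_q^c$ covering Grassmannian code with $r$ codewords, and the trivial counting/packing bound on such codes (each $\ell t$-subspace is determined within a bounded "ball", so $r\leq \gamma(\alpha-1)q^{\ell t(\eps t+1)}$ up to the constant $\gamma$ from \eqref{eq:gauss}) gives $r \leq \gamma(\alpha-1)q^{\ell(\eps t+1)t}$; inverting this yields $q^t\geq (r/(\gamma(\alpha-1)))^{1/(\ell(\eps t+1))}$, the second branch. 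The main obstacle I anticipate is getting the exponent and the constant in the second-branch counting bound exactly right — in particular verifying that the relevant quotient of $q$-binomials is bounded by $\gamma(\alpha-1)q^{\ell(\eps t+1)t}$ using the approximation \eqref{eq:gauss}, and checking that the boundary case $h=2\ell+\eps$ is consistently handled by whichever branch one assigns it to. Once the two counting bounds are pinned down, the algebra (rearranging and taking roots) is routine.
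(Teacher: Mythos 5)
Your proposal is correct and follows essentially the same route as the paper's proof: the first branch comes from rearranging the upper bound of \cref{cor:imupperbound-N}, and the second branch comes from rearranging the covering-Grassmannian upper bound (the paper simply cites \cref{cor:EZ19_vector}, which is exactly the combination of \cref{thm:cover_scalar_sol} with the packing bound of \cite{EZjul2019} that you re-derive). Your brief detour through \cref{cor:EK19_lb} in the second case is a dead end that you correctly recognize and abandon, so it does not affect the argument.
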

\begin{proof}
    The first case follows from Corollary~\ref{cor:imupperbound-N} that for $h\geq 2\ell+\eps$,
    $q^t\geq \parenv*{ \frac{r+\theta-\alpha}{\gamma\cdot\theta}}^{\frac{1}{\ell (\eps t+1)}}$. The second case is derived from an upper bound on $r$ in~\cite{EZjul2019} (recalled in Corollary~\ref{cor:EZ19_vector}) in a similar manner.
\end{proof}

\begin{lemma}\label{lem:ub_q}
Let $\alpha\geq 2$, $r,h,\ell,t\geq 1$, $\eps\geq 0$. There exists a $(q,t)$-linear solution to $(\eps,\ell)-\mathcal{N}_{h,r,\alpha\ell+\eps}$ when
\[
    q^t \geq
    \begin{cases}
    \parenv*{\frac{r}{\beta}}^{\frac{(\alpha-1)t}{f(t)}} & h\geq2\ell+\eps \\
    \parenv*{\frac{r}{\alpha-1}}^{\frac{t}{g(t)}}   &\text{otherwise}
    \end{cases}\ ,
    \]
    where $\beta$ and $f(t)$ are defined as in Theorem~\ref{thm:LLL_bound}, and $g(t)$ is defined as in Corollary~\ref{cor:EK19_lb}.
\end{lemma}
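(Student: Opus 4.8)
The plan is to prove \cref{lem:ub_q} by invoking the two lower bounds on $r_{\max}$ that were just established, namely \cref{thm:LLL_bound} for the regime $h\geq 2\ell+\eps$ and \cref{cor:EK19_lb} for the regime $h\leq 2\ell+\eps$. Each of those results asserts: \emph{if} a certain inequality relating $r$ and $q^t$ holds, \emph{then} the network $(\eps,\ell)-\mathcal{N}_{h,r,\alpha\ell+\eps}$ admits a $(q,t)$-linear solution. So the entire content of \cref{lem:ub_q} is to rearrange those hypotheses to isolate $q^t$ on one side. In other words, I only need to do algebra — no new combinatorial or probabilistic input is required.

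First I would treat the case $h\geq 2\ell+\eps$. By \cref{thm:LLL_bound}, the network has a $(q,t)$-linear solution whenever $r \leq \beta \cdot q^{f(t)/(\alpha-1)}$, where $\beta = \bigl(\tfrac{(\alpha-1)!}{2e\gamma\alpha}\bigr)^{1/(\alpha-1)}$ and $f(t) = (\alpha\ell+\eps-h)\eps t^2 + (\alpha\ell+2\eps-h)t + 1$. Note that in this regime $\alpha\ell+\eps-h \geq 0$ and $\alpha\ell+2\eps-h\geq \eps\geq 0$, so $f(t)\geq 1 > 0$; hence the function $q^t \mapsto \beta\cdot q^{f(t)/(\alpha-1)} = \beta\cdot (q^t)^{f(t)/((\alpha-1)t)}$ is increasing in $q^t$, and the inequality $r\leq \beta\cdot (q^t)^{f(t)/((\alpha-1)t)}$ is equivalent to $(q^t)^{f(t)/((\alpha-1)t)} \geq r/\beta$, i.e., $q^t \geq \bigl(r/\beta\bigr)^{(\alpha-1)t/f(t)}$. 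This is exactly the first branch of the claimed bound, so in this regime the condition $q^t \geq (r/\beta)^{(\alpha-1)t/f(t)}$ suffices.

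Next I would treat the case $h\leq 2\ell+\eps$. By \cref{cor:EK19_lb}, for such parameters a network with a $(q,t)$-linear solution satisfies $r_{\max}\geq (\alpha-1)q^{g(t)}$, and examining the proof of that corollary via \cref{thm:cover_scalar_sol} and \cref{thm:EK_1_ext}, the actual statement is constructive: whenever $r \leq (\alpha-1)q^{g(t)}$ there exists an $\alpha$-$(ht,\ell t, ht-\ell t-\eps t)_q^c$ covering Grassmannian code with $r$ codewords, hence a $(q,t)$-linear solution. Here $g(t) = \max\{\ell t,(h-\ell)t\}\cdot(\min\{\ell t,(h-\ell)t\}-(h-\ell-\eps)t+1) > 0$ (the parenthesized factor is at least $\eps t + 1 \geq 1$ when $h\leq 2\ell$, and at least $(2\ell+\eps-h)t + 1 \geq 1$ otherwise, using $h\leq 2\ell+\eps$). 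Since $g(t)>0$, the map $q^t\mapsto (\alpha-1)q^{g(t)} = (\alpha-1)(q^t)^{g(t)/t}$ is increasing in $q^t$, so $r\leq (\alpha-1)(q^t)^{g(t)/t}$ is equivalent to $q^t \geq (r/(\alpha-1))^{t/g(t)}$, which is the second branch of the claimed bound.

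The main (and only real) obstacle is bookkeeping: I must make sure the exponents $f(t)$ and $g(t)$ are strictly positive in their respective parameter regimes so that the rearrangements are genuine equivalences rather than one-directional implications, and I must be careful that \cref{thm:LLL_bound} and the proof of \cref{cor:EK19_lb} are stated (or can be read) as sufficiency results "$r$ small $\Rightarrow$ solvable" rather than merely necessity results. Both readings are available from the text: \cref{thm:LLL_bound} is phrased directly as "if $r\leq\ldots$ then the network has a $(q,t)$-linear solution," and \cref{thm:cover_scalar_sol} gives an if-and-only-if characterization in terms of covering Grassmannian codes whose existence is guaranteed by \cref{thm:EK_1_ext}. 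Once positivity of the exponents is confirmed, the lemma follows by the two-line algebraic inversions above, combined by the case split on whether $h\geq 2\ell+\eps$.
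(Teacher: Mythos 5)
Your proposal is correct and follows the same route as the paper's own (much terser) proof, which simply says the two cases follow from Theorem~\ref{thm:LLL_bound} and Corollary~\ref{cor:EK19_lb} by the same rearrangement used in Lemma~\ref{lem:lb_q}. One small clarification: the positivity of $\alpha\ell+\eps-h$ that you use in the first branch comes from the section-wide standing assumption $h\leq\alpha\ell+\eps$ (without which the network is unsolvable and the statement vacuous), not from $h\geq 2\ell+\eps$ alone; otherwise the argument is exactly right.
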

\begin{proof}
    The proof is similar to that in Lemma~\ref{lem:lb_q} and the cases follow from Theorem~\ref{thm:LLL_bound} and Corollary~\ref{cor:EK19_lb}, respectively.
\end{proof}
\cref{lem:lb_q,lem:ub_q} can be seen as the necessary and the sufficient conditions respectively on the pair $(q,t)$ such that~a $(q,t)$-linear solution exists.

In the following, we use the lemmas above to derive  bounds on the $\gap(\cN)$ for a given network $\cN$. The bounds are determined only by the network parameters.

\begin{theorem}\label{thm:gap_ub}
 Let $\alpha\geq 2$, $r,h,\ell\geq 1$, $\eps\geq 0$. Then for the $(\eps,\ell)-\mathcal{N}_{h,r,\alpha\ell+\eps}$ network,
  \begin{align*}
      \gap(\cN)\leq
      \begin{cases}
      \frac{\alpha-1}{f(1)}\log_2\parenv*{\frac{r}{\beta}}-A & h\geq 2\ell+\eps\\
      \frac{1}{g(1)}\log_2\parenv*{\frac{r}{\alpha-1}}- B & \text{otherwise}
      \end{cases}\ ,
  \end{align*}
where $\theta = \alpha-\floor{\frac{h-\eps}{\ell}}+1$, $\beta$ and $f(t)$ are defined as in Theorem~\ref{thm:LLL_bound}, $g(t)$ is defined as in Corollary~\ref{cor:EK19_lb}, and
\begin{align*}
A &\defeq \min \set*{\left. \log_2\parenv*{q^t} \ \right|\ q^t \geq \parenv*{\tfrac{r+\theta-\alpha}{\gamma \theta}}^{\frac{1}{\ell(\varepsilon t + 1)}} }\ ,\\
B &\defeq \min \set*{\left. \log_2\parenv*{q^t} \ \right|\ q^t \geq \parenv*{\tfrac{r}{\gamma (\alpha-1)}}^{\frac{1}{\ell(\varepsilon t + 1)}} }\ .
\end{align*}
Furthermore, for $t_A \defeq \min \set*{ t \ |\ 2^t \geq \parenv*{\tfrac{r+\theta-\alpha}{\gamma \theta}}^{\frac{1}{\ell(\varepsilon t + 1)}} } > 2$, we have
\begin{align*}
A \geq \min\set*{t_A,\frac{1}{\ell(\varepsilon (t_A-2) + 1)} \log_2\parenv*{\tfrac{r+\theta-\alpha}{\gamma \theta}}} \geq t_A-1\ ,
\end{align*}
and for $t_B \defeq \min \set*{ t \ |\  2^t \geq \parenv*{\tfrac{r}{\gamma (\alpha-1)}}^{\frac{1}{\ell(\varepsilon t + 1)}} } > 2$, we have
\begin{align*}
B \geq \min\set*{t_B,\frac{1}{\ell(\varepsilon (t_B-2) + 1)} \log_2\parenv*{\tfrac{r+\theta-\alpha}{\gamma \theta}}} \geq t_B-1\ .
\end{align*}

\end{theorem}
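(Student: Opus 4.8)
The plan is to prove Theorem~\ref{thm:gap_ub} by combining the two one-sided estimates on the solvability region that were already established: the upper bound on $r_{\max}$ (equivalently, the lower bound on $q_v(\cN)$) from \cref{cor:imupperbound-N} and \cref{cor:EZ19_vector}, encapsulated in \cref{lem:lb_q}, and the lower bound on $r_{\max}$ (equivalently, the upper bound on $q_s(\cN)$) from \cref{thm:LLL_bound} and \cref{cor:EK19_lb}, encapsulated in \cref{lem:ub_q}. Recall that $\gap(\cN) = \log_2 q_s(\cN) - \log_2 q_v(\cN)$, so it suffices to upper bound $\log_2 q_s(\cN)$ and lower bound $\log_2 q_v(\cN)$ and subtract.

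First I would treat the case $h \ge 2\ell + \eps$. For the upper bound on $q_s(\cN)$: \cref{lem:ub_q} with $t=1$ says a $(q,1)$-linear solution exists whenever $q \ge (r/\beta)^{(\alpha-1)/f(1)}$, hence $\log_2 q_s(\cN) \le \frac{\alpha-1}{f(1)}\log_2(r/\beta)$. For the lower bound on $q_v(\cN)$: by definition $q_v(\cN) = \min\{q^t : \cN \text{ has a }(q,t)\text{-linear solution}\}$, and \cref{lem:lb_q} says any such $(q,t)$ must satisfy $q^t \ge \big(\tfrac{r+\theta-\alpha}{\gamma\theta}\big)^{1/(\ell(\eps t+1))}$; therefore $\log_2 q_v(\cN) \ge A$ where $A$ is exactly the minimum defined in the statement. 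Subtracting gives the claimed bound $\gap(\cN) \le \frac{\alpha-1}{f(1)}\log_2(r/\beta) - A$. The case $h < 2\ell+\eps$ is entirely parallel, using the ``otherwise'' branches of \cref{lem:ub_q} and \cref{lem:lb_q}, giving $\log_2 q_s(\cN) \le \frac{1}{g(1)}\log_2(r/(\alpha-1))$ and $\log_2 q_v(\cN) \ge B$.

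Next I would establish the more explicit lower estimates on $A$ and $B$, which is where the real work lies. Consider $A$; write $C \defeq \frac{r+\theta-\alpha}{\gamma\theta}$, so $A = \min\{t\log_2 q : q^t \ge C^{1/(\ell(\eps t+1))}\}$ and $t_A = \min\{t : 2^t \ge C^{1/(\ell(\eps t+1))}\}$. The key observation is that the function $t \mapsto C^{1/(\ell(\eps t+1))}$ is nonincreasing in $t$ (for $C \ge 1$, which holds in the nontrivial regime), while $q^t \ge 2^t$ is increasing, so once $t \ge t_A$ the inequality $2^t \ge C^{1/(\ell(\eps t+1))}$ continues to hold, i.e. $q=2$ already works for every $t \ge t_A$. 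Thus the minimizing pair $(q,t)$ either has $t \ge t_A$ with $q=2$ — contributing value $\ge t_A$ — or has $t < t_A$, in which case $t \le t_A - 1$ and we need $q^t \ge C^{1/(\ell(\eps t+1))} \ge C^{1/(\ell(\eps(t_A-2)+1))}$ (using $t \le t_A-1$ and monotonicity, so $\eps t + 1 \le \eps(t_A-1)+1$; one should double-check whether the cleanest bound uses $t_A-1$ or $t_A-2$ in the exponent — the statement uses $t_A-2$, presumably to absorb an off-by-one coming from the strict/non-strict inequality and from $t \le t_A-1$), giving $t\log_2 q = \log_2(q^t) \ge \frac{1}{\ell(\eps(t_A-2)+1)}\log_2 C$. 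Taking the minimum over the two alternatives yields $A \ge \min\big\{t_A,\ \frac{1}{\ell(\eps(t_A-2)+1)}\log_2 C\big\}$. Finally, to get the cleaner bound $A \ge t_A - 1$: in the first alternative $t_A \ge t_A-1$ trivially, and in the second alternative, by the very definition of $t_A$ as a minimum, $t_A - 1$ fails the defining inequality, i.e. $2^{t_A-1} < C^{1/(\ell(\eps(t_A-1)+1))}$, so $(t_A-1)\log_2 2 < \frac{1}{\ell(\eps(t_A-1)+1)}\log_2 C \le \frac{1}{\ell(\eps(t_A-2)+1)}\log_2 C$; hence that second quantity exceeds $t_A-1$. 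The argument for $B$ is identical with $C$ replaced by $r/(\gamma(\alpha-1))$.

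The main obstacle I expect is bookkeeping the precise integer shifts ($t_A$ versus $t_A-1$ versus $t_A-2$) and verifying the monotonicity hypotheses ($C \ge 1$, the regime $t_A > 2$) actually hold in the stated range of parameters, since the bound on $A$ is asserted only ``for $t_A > 2$''. One must also confirm that $q_v$ and $q_s$ are well-defined (finite), i.e. that a solution exists at all — this follows from \cref{lem:ub_q} since its right-hand side is finite for every fixed network. Beyond that, everything reduces to plugging the cited lemmas into the definition of $\gap$ and carrying out the elementary monotonicity analysis above; no new combinatorial input is needed.
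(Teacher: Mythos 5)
Your first paragraph — bounding $\log_2 q_s(\cN)$ from above via \cref{lem:ub_q} at $t=1$, bounding $\log_2 q_v(\cN)$ from below by $A$ (resp.\ $B$) via \cref{lem:lb_q}, and subtracting — is exactly the paper's argument and is correct. Your derivation of the weakest link $A \geq t_A - 1$ is also valid: in the ``competitor'' case $t \leq t_A-1$ you get $A \geq \tfrac{1}{\ell(\eps(t_A-1)+1)}\log_2 C$, and the defining failure of $t_A-1$, namely $2^{t_A-1} < C^{1/(\ell(\eps(t_A-1)+1))}$, forces that quantity to exceed $t_A-1$, so $A \geq t_A-1$ in both cases.

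The gap is precisely the point you flagged and did not resolve: the intermediate bound $A \geq \min\bigl\{t_A,\ \tfrac{1}{\ell(\eps(t_A-2)+1)}\log_2 C\bigr\}$. Your case split only uses $t < t_A$, which yields $t \leq t_A-1$ and hence the strictly weaker $A \geq \tfrac{1}{\ell(\eps(t_A-1)+1)}\log_2 C$; for $\eps \geq 1$ this does \emph{not} imply the stated $t_A-2$ version (the quantity with $t_A-1$ in the denominator is smaller, so $\min\{t_A,\cdot\}$ can drop). The extra integer shift has nothing to do with ``strict vs.\ non-strict inequality.'' What the paper actually uses is that a competitor must have $q \geq 3$: the only interesting competitor is one with $q^t < 2^{t_A}$ (otherwise $A \geq t_A$ immediately), and then $3^t \leq q^t < 2^{t_A}$ gives $t < t_A\log_3 2 \approx 0.63\,t_A$, which under the hypothesis $t_A \geq 3$ forces $t \leq t_A-2$ (check $t_A = 3,4,5$ by hand; for $t_A \geq 6$ it follows since $0.63\,t_A \leq t_A - 2$). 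With $t \leq t_A-2$ the constraint gives $q^t \geq C^{1/(\ell(\eps(t_A-2)+1))}$, completing the claimed chain. So you need to refine your second alternative into $q \geq 3$ with $q^t \geq 2^{t_A}$ (trivial, $A \geq t_A$) and $q \geq 3$ with $q^t < 2^{t_A}$ (where $q \geq 3$ does the work), instead of splitting only on $t < t_A$.
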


\begin{proof}
We only prove the bound for the case $h\geq 2\ell+\varepsilon$.
The other case follows analogously.
Lemma~\ref{lem:ub_q} implies that
\[q_s(\cN)\leq \parenv*{\frac{r}{\beta}}^{\frac{\alpha-1}{f(1)}}\ .\]
By the definition of $q_v(\cN)$ and Lemma~\ref{lem:lb_q}, $q^t = q_v(\cN)$ must fulfill
\begin{align}
  \label{eq:qt-lb}
q^t \geq \parenv*{\tfrac{r+\theta-\alpha}{\gamma \theta}}^{\frac{1}{\ell(\varepsilon t + 1)}}\ .
\end{align}
Hence, we get a lower bound on $q_v(\cN)$ by determining the smallest $q^t$ that fulfills \eqref{eq:qt-lb}, i.e., the constraint in $A$.
Note that the left-hand side of the inequality is a strictly monotonically increasing function in $t$ (for a fixed prime power $q$), and the right side is monotonically decreasing in $t$, which imply
that $A$ and $t_A$ are well-defined.

For the lower bound on $A$ for $t_A>2$, consider the case that there is a prime power $q>2$ and a positive integer $t$ with $2^{t_A} \geq q^t \geq
\parenv*{\tfrac{r+\theta-\alpha}{\gamma \theta}}^{\frac{1}{\ell(\varepsilon t + 1)}}$.
Then we have $t \leq t_A-2$ since $q \geq 3$ and $t_A \geq 3$.
Hence,
\begin{align*}
q^t \geq \parenv*{\tfrac{r+\theta-\alpha}{\gamma \theta}}^{\frac{1}{\ell(\varepsilon (t_A-2) + 1)}} \geq \parenv*{\tfrac{r+\theta-\alpha}{\gamma \theta}}^{\frac{1}{\ell(\varepsilon (t_A-1) + 1)}} \geq 2^{t_A-1}\ ,
\end{align*}
which proves the claim.
\end{proof}

\begin{corollary}\label{cor:gap_ub}
Let $\alpha\geq 2$, $r,h,\ell\geq 1$, $\eps\geq 1$. Then for the $(\eps,\ell)-\mathcal{N}_{h,r,\alpha\ell+\eps}$ network,
  \begin{align*}
      \gap(\cN)\leq
      \begin{cases}
      \frac{\alpha-1}{f(1)}\log_2\parenv*{\frac{r}{\beta}}- \max\set*{\sqrt{\frac{1}{\ell\varepsilon}\log_2\parenv*{\frac{r+\theta-\alpha}{\gamma\theta}}+\frac{1}{4 \varepsilon^2}}-\frac{2\varepsilon+1}{2 \varepsilon},1} & h\geq 2\ell+\eps\\
      \frac{1}{g(1)}\log_2\parenv*{\frac{r}{\alpha-1}}- \max\set*{\sqrt{\frac{1}{\ell\varepsilon}\log_2\parenv*{\tfrac{r}{\gamma (\alpha-1)}}+\frac{1}{4 \varepsilon^2}}-\frac{2\varepsilon+1}{2 \varepsilon},1} & \text{otherwise}
      \end{cases}\ .
  \end{align*}
In particular, if all parameters are constants except for $r\to\infty$, then $\gap(\cN)\in O(\log r)$.
\end{corollary}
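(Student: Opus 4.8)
The plan is to derive \cref{cor:gap_ub} from \cref{thm:gap_ub} by simplifying the lower bounds on $A$ and $B$ (equivalently on $t_A$ and $t_B$) using an explicit solution of the defining inequality, and then to read off the asymptotic order in $r$. I will handle the case $h\ge 2\ell+\eps$ in detail; the other case follows by the same argument with $t_B$ in place of $t_A$ and $B$ in place of $A$.

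First I would unfold the definition $t_A=\min\{t\mid 2^t\ge (\tfrac{r+\theta-\alpha}{\gamma\theta})^{1/(\ell(\eps t+1))}\}$. Taking $\log_2$ on both sides, the defining inequality is $t\cdot \ell(\eps t+1)\ge \log_2(\tfrac{r+\theta-\alpha}{\gamma\theta})$, i.e. $\ell\eps t^2+\ell t - C\ge 0$ where $C:=\log_2(\tfrac{r+\theta-\alpha}{\gamma\theta})$. Since $\eps\ge 1$, this is a genuine quadratic in $t$ with positive leading coefficient, so the smallest $t\ge 0$ satisfying it is
\begin{align*}
t^\star = \frac{-\ell+\sqrt{\ell^2+4\ell\eps C}}{2\ell\eps}
        = \sqrt{\frac{1}{\ell\eps}\log_2\!\parenv*{\tfrac{r+\theta-\alpha}{\gamma\theta}}+\frac{1}{4\eps^2}}-\frac{1}{2\eps}\ ,
\end{align*}
and $t_A=\lceil t^\star\rceil$. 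From \cref{thm:gap_ub} we already have $A\ge t_A-1\ge t^\star-1$, and $t^\star-1 = \sqrt{\tfrac{1}{\ell\eps}\log_2(\tfrac{r+\theta-\alpha}{\gamma\theta})+\tfrac{1}{4\eps^2}}-\tfrac{2\eps+1}{2\eps}$. Combining this with the trivial bound $A\ge 1$ (which holds because $q^t\ge 2$ always), I obtain $A\ge \max\{\sqrt{\tfrac{1}{\ell\eps}\log_2(\tfrac{r+\theta-\alpha}{\gamma\theta})+\tfrac{1}{4\eps^2}}-\tfrac{2\eps+1}{2\eps},\,1\}$, which is exactly the quantity subtracted in the first branch of \cref{cor:gap_ub}. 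Substituting this into the upper bound $\gap(\cN)\le \tfrac{\alpha-1}{f(1)}\log_2(\tfrac r\beta)-A$ of \cref{thm:gap_ub} yields the stated inequality. For the case $h<2\ell+\eps$ the inequality defining $t_B$ reads $\ell\eps t^2+\ell t\ge \log_2(\tfrac{r}{\gamma(\alpha-1)})$ (using $g(t)=\ell\eps t^2+\ell t$ when $h\le 2\ell$, and the same quadratic structure in the remaining subcase), so the identical computation gives the second branch.

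For the final sentence, I would argue as follows: with all of $\alpha,h,\ell,\eps,t$-independent parameters held constant and only $r\to\infty$, the leading term of the upper bound is $\tfrac{\alpha-1}{f(1)}\log_2 r$ (resp. $\tfrac{1}{g(1)}\log_2 r$), which is $\Theta(\log r)$, while the subtracted term grows only like $\sqrt{\log r}$, hence is $o(\log r)$. Therefore $\gap(\cN)\le O(\log r)$. Since $\gap(\cN)\ge 0$ trivially (a scalar solution is a vector solution with $t=1$, so $q_s(\cN)\le q_v(\cN)$ is false in general — rather one should invoke the lower bound $\gap(\cN)\ge \text{const}$ from \cref{thm:gap_lb}, or simply note the statement claims membership in $O(\log r)$, an upper-bound assertion), the claim $\gap(\cN)\in O(\log r)$ follows.

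The main obstacle I anticipate is purely bookkeeping rather than conceptual: one must be careful that the quadratic-formula manipulation is valid, i.e. that $\eps\ge 1$ (assumed in the corollary, in contrast to $\eps\ge 0$ in \cref{thm:gap_ub}) so that the leading coefficient $\ell\eps$ of the quadratic is nonzero and the square-root expression is the correct closed form; when $\eps=0$ the inequality degenerates to the linear $\ell t\ge C$ and a different (and in fact larger) bound would be needed. A secondary point of care is matching the floor/ceiling slack: \cref{thm:gap_ub} already absorbs the ``$-1$'' coming from $t_A=\lceil t^\star\rceil\ge t^\star$, so I only need to transport that clean lower bound through, and must make sure the $\max\{\cdot,1\}$ is justified (it is, since any field has at least two elements). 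None of these steps requires new ideas beyond what is in \cref{thm:gap_ub}, \cref{thm:LLL_bound}, and \cref{cor:EK19_lb}.
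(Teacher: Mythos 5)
Your proof is correct and follows essentially the same route as the paper: you solve the quadratic $\ell\eps t^2+\ell t-C\geq 0$ to identify $t_A=\lceil t^\star\rceil$, push the lower bound $A\geq t_A-1\geq t^\star-1$ from \cref{thm:gap_ub} through, take the $\max$ with the trivial bound $A\geq 1$, and substitute; the asymptotic statement then follows because the subtracted term is $\Theta(\sqrt{\log r})=o(\log r)$. The only blemish is the muddled aside about $\gap(\cN)\geq 0$ (the correct observation is that any scalar solution is a $(q,1)$-vector solution, so $q_v(\cN)\leq q_s(\cN)$ and hence $\gap(\cN)\geq 0$ is indeed trivial), but since $O(\log r)$ is only an upper-bound assertion that aside is dispensable and does not affect correctness.
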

\begin{proof}
We only prove the bound for the case $h\geq 2\ell+\varepsilon$.
The other case follows analogously.
We determine $t_A$ as defined in Theorem~\ref{thm:gap_ub}.
Note that $2^t$ is strictly monotonically increasing in $t$ and $\parenv*{\tfrac{r+\theta-\alpha}{\gamma \theta}}^{\frac{1}{\ell(\varepsilon t + 1)}}$ is strictly monotonically decreasing.
Hence, we have $t_A = \lceil t' \rceil$, where $t'$ is the unique (positive) solution of
\[2^{t'} = \parenv*{\tfrac{r+\theta-\alpha}{\gamma \theta}}^{\frac{1}{\ell(\varepsilon t' + 1)}}\ .\]
By rewriting this equation into a quadratic equation in $t'$, we obtain the following positive solution for $\varepsilon>0$:
\begin{equation*}
t' = \sqrt{\frac{1}{\ell\varepsilon}\log_2\parenv*{\frac{r+\theta-\alpha}{\gamma\theta}}+\frac{1}{4 \varepsilon^2}}-\frac{1}{2 \varepsilon}.
\end{equation*}
Using the bound $A \geq t_A-1$ for $t_A >2$ (Theorem~\ref{thm:gap_ub}) and the trivial bound $A \geq 1$ otherwise, the claim follows.
The asymptotic statement is an immediate consequence.
\end{proof}

\begin{theorem}
\label{thm:gap_lb}
  Let $\alpha\geq 2$, $r,h,\ell\geq 1$, $\eps\geq 0$. Then for the $(\eps,\ell)-\mathcal{N}_{h,r,\alpha\ell+\eps}$ network,
  \begin{align*}
      \gap(\cN)\geq
      \begin{cases}
      \frac{1}{\ell(\eps+1)}\log_2\parenv*{\frac{r+\theta-\alpha}{\gamma\theta}}-t_{\Delta} & h\geq 2\ell+\eps\\
      \frac{1}{\ell(\eps+1)}\log_2\parenv*{\frac{r}{\gamma(\alpha-1)}}-t_{\star} & \text{otherwise}
      \end{cases}\ ,
  \end{align*}
  where $t_{\Delta}$ is the smallest positive integer such that~$2^{\frac{f(t_{\Delta})}{\alpha-1}}\geq \frac{r}{\beta}$ and $t_{\star}$ is the smallest positive integer such that~$2^{g(t_{\star})}\geq \frac{r}{\alpha-1}$.
  Here, $\beta$ and $f(t)$ are defined as in Theorem~\ref{thm:LLL_bound}, and $g(t)$ is defined as in Corollary~\ref{cor:EK19_lb}.
\end{theorem}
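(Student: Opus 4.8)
The plan is to mirror the structure of the upper bound proof in \cref{thm:gap_ub}, but now using the lower bound on $q_s(\cN)$ and the upper bound on $q_v(\cN)$. Recall that by definition $\gap(\cN) = \log_2 q_s(\cN) - \log_2 q_v(\cN)$, so a lower bound on the gap follows from a lower bound on $q_s(\cN)$ together with an upper bound on $q_v(\cN)$. For the case $h \geq 2\ell+\eps$, the lower bound on $q_s(\cN)$ comes from \cref{cor:imupperbound-N} applied with $t=1$ (a scalar solution): since a scalar-optimal solution is a $(q_s(\cN),1)$-linear solution, we get $r \leq r_{\max} < \gamma\theta q_s(\cN)^{\ell(\eps+1)} + \alpha-\theta$, which rearranges to $q_s(\cN) \geq \parenv*{\frac{r+\theta-\alpha}{\gamma\theta}}^{\frac{1}{\ell(\eps+1)}}$, i.e. $\log_2 q_s(\cN) \geq \frac{1}{\ell(\eps+1)}\log_2\parenv*{\frac{r+\theta-\alpha}{\gamma\theta}}$. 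The ``otherwise'' case ($h \leq 2\ell+\eps$) uses \cref{lem:lb_q}'s second branch in the same way, giving $\log_2 q_s(\cN) \geq \frac{1}{\ell(\eps+1)}\log_2\parenv*{\frac{r}{\gamma(\alpha-1)}}$.

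Next I would bound $q_v(\cN)$ from above. By \cref{lem:ub_q}, a $(q,t)$-linear solution exists as soon as $q^t \geq \parenv*{\frac{r}{\beta}}^{\frac{(\alpha-1)t}{f(t)}}$ (in the case $h\geq 2\ell+\eps$), so in particular taking $q=2$, a $(2,t)$-linear solution exists whenever $2^{f(t)/(\alpha-1)} \geq \frac{r}{\beta}$. Let $t_{\Delta}$ be the smallest positive integer with this property; then $q_v(\cN) \leq 2^{t_{\Delta}}$, hence $\log_2 q_v(\cN) \leq t_{\Delta}$. Combining the two estimates yields
\begin{align*}
  \gap(\cN) = \log_2 q_s(\cN) - \log_2 q_v(\cN) \geq \frac{1}{\ell(\eps+1)}\log_2\parenv*{\frac{r+\theta-\alpha}{\gamma\theta}} - t_{\Delta}\ ,
\end{align*}
which is exactly the claimed bound. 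The ``otherwise'' case is entirely analogous: use the second branch of \cref{lem:ub_q} to see that a $(2,t)$-linear solution exists whenever $2^{g(t)} \geq \frac{r}{\alpha-1}$, define $t_{\star}$ as the smallest such positive integer so that $\log_2 q_v(\cN) \leq t_{\star}$, and conclude $\gap(\cN) \geq \frac{1}{\ell(\eps+1)}\log_2\parenv*{\frac{r}{\gamma(\alpha-1)}} - t_{\star}$.

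The main technical points to check carefully are: first, that $t_{\Delta}$ and $t_{\star}$ are genuinely well-defined, i.e. that $f(t)/(\alpha-1) \to \infty$ and $g(t) \to \infty$ as $t\to\infty$ (so that $2^{f(t)/(\alpha-1)}$ and $2^{g(t)}$ eventually exceed the fixed constants $r/\beta$ and $r/(\alpha-1)$), which holds because $f$ and $g$ are quadratics in $t$ with positive leading coefficient under the standing assumption $\ell+\eps < h \leq \alpha\ell+\eps$; and second, the bookkeeping in specializing \cref{lem:ub_q} and \cref{cor:imupperbound-N} / \cref{lem:lb_q} to the endpoint parameters ($t=1$ for the scalar side and $q=2$ for the vector side). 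I do not expect a serious obstacle here — unlike the GM-MSRD results, this is a short deduction from the already-established bounds on $r_{\max}$, and the only care needed is to make sure the direction of each inequality is preserved when rearranging for $q_s$, $q_v$, and $t_{\Delta}$, $t_{\star}$.
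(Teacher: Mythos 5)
Your proof is correct and follows exactly the same route as the paper's: lower-bound $q_s(\cN)$ by specializing Lemma~\ref{lem:lb_q} (equivalently Corollary~\ref{cor:imupperbound-N}) to $t=1$, upper-bound $q_v(\cN)\leq 2^{t_\Delta}$ (resp.~$2^{t_\star}$) by specializing Lemma~\ref{lem:ub_q} to $q=2$ with the minimal admissible $t$, and subtract. One small inaccuracy in your side remark: $f(t)$ and $g(t)$ are not always quadratics with strictly positive leading coefficient (e.g.\ $\eps=0$, or $h=\alpha\ell+\eps$, kills the $t^2$ term of $f$), so well-definedness of $t_\Delta,t_\star$ is not quite as automatic as you suggest — but this caveat is equally unaddressed in the paper's proof and does not affect the structure of the argument.
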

\begin{proof}
 Let us only consider the first case $h\geq 2\ell+\eps$. The other case can be proved in the same manner. According to Lemma~\ref{lem:lb_q}, we have the lower bound on the smallest field size of a scalar solution,
 \[q_s(\cN)\geq \parenv*{ \frac{r+\theta-\alpha}{\gamma\cdot\theta}}^{\frac{1}{\ell (\eps +1)}}\ .\]
For vector solutions, according to Lemma~\ref{lem:ub_q}, we want to find $(q,t)$ such that~$q^{\frac{f(t)}{\alpha-1}}\geq \frac{r}{\beta}.$
Since $t_{\Delta}$ is the smallest positive integer $t$ such that~$2^{\frac{f(t_{\Delta})}{\alpha-1}}\geq \frac{r}{\beta}$, it is guaranteed that a $(2,t_{\Delta})$-linear solution exists.
Therefore, $q_v(\cN)$ (the smallest value of $q^t$) should be at most
$q_v(\cN)\leq 2^{t_{\Delta}}$.
The lower bound then follows directly from the definition of $\gap(\cN)$.
\end{proof}

By carefully bounding $t_{\star}$ and $t_{\Delta}$, the following result is obtained.

\begin{corollary}
\label{cor:gap}
Let $\alpha\geq 2$, $r,h,\ell,\eps\geq 1$. Then, for the $(\eps,\ell)-\mathcal{N}_{h,r,\alpha\ell+\eps}$ network,
\begin{align*}
\gap(\cN) \geq
\begin{cases}
\frac{\log_2\parenv*{\frac{r+\theta-\alpha}{\gamma \theta}}}{\ell(\eps+1)} - \sqrt{\frac{(\alpha-1)\log_2(\frac{r}{\beta})}{(\alpha\ell+\eps-h)\eps}}  & h\geq 2\ell+\eps,\\
\frac{\log_2\parenv*{\frac{r}{\alpha-1}}-2}{\ell(\eps+1)}-\sqrt{\frac{\log_2(\frac{r}{\alpha-1})}{\ell\eps}}& \text{otherwise.}
\end{cases}
\end{align*}
In particular, if all parameters are constants except for $r\to\infty$, then $\gap(\cN) \in \Omega(\log r)$.
\end{corollary}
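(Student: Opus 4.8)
The plan is to derive \cref{cor:gap} directly from the lower bound of \cref{thm:gap_lb} by (a) replacing the two implicitly defined integers $t_{\Delta}$ and $t_{\star}$ with explicit closed-form upper bounds, and (b) replacing the constant $\gamma$ appearing in the numerator of the second case by the integer $2$, using that $\log_2\gamma<2$ because $\gamma\approx 3.48$. Recall that $t_{\Delta}$ is the least positive integer with $2^{f(t_{\Delta})/(\alpha-1)}\ge r/\beta$ and $t_{\star}$ the least positive integer with $2^{g(t_{\star})}\ge r/(\alpha-1)$, where $f(t)=(\alpha\ell+\eps-h)\eps\,t^{2}+(\alpha\ell+2\eps-h)t+1$ is as in \cref{thm:LLL_bound} and $g(t)$ is as in \cref{cor:EK19_lb}. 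Both $f$ and $g$ are quadratics in $t$ with nonnegative coefficients (for $f$ this uses $h\le\alpha\ell+\eps$, which is forced once a $(q,t)$-linear solution exists), so each is bounded below by its leading term, and the least integer satisfying the exponential inequality can be read off from that pure quadratic, up to the negligible effect of rounding to an integer.

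First I would bound $t_{\Delta}$ in the regime $h\ge 2\ell+\eps$. From $f(t)\ge(\alpha\ell+\eps-h)\eps\,t^{2}$, any positive integer $t\ge\sqrt{(\alpha-1)\log_2(r/\beta)/((\alpha\ell+\eps-h)\eps)}$ already satisfies $f(t)\ge(\alpha-1)\log_2(r/\beta)$, hence $t_{\Delta}\le\sqrt{(\alpha-1)\log_2(r/\beta)/((\alpha\ell+\eps-h)\eps)}$ (again up to rounding). Substituting this into the first case of \cref{thm:gap_lb} and keeping its numerator $\log_2\parenv*{(r+\theta-\alpha)/(\gamma\theta)}$ unchanged yields the first case of \cref{cor:gap}.

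Next I would treat the regime $h<2\ell+\eps$, where one needs an upper bound on $t_{\star}$ via $g$. For $h\le 2\ell$ one has $g(t)=\ell\eps\,t^{2}+\ell t\ge\ell\eps\,t^{2}$, so exactly as above $t_{\star}\le\sqrt{\log_2(r/(\alpha-1))/(\ell\eps)}$. For $2\ell<h<2\ell+\eps$ one has $g(t)=(h-\ell)(2\ell+\eps-h)t^{2}+(h-\ell)t$, whose leading coefficient is at least $h-\ell\ge\ell+1$ since $2\ell+\eps-h\ge 1$, giving an analogous square-root bound; the uniform form $\sqrt{\log_2(r/(\alpha-1))/(\ell\eps)}$ is then the clean common choice. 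Plugging $t_{\star}$ into the second case of \cref{thm:gap_lb} and using $\log_2\parenv*{r/(\gamma(\alpha-1))}=\log_2\parenv*{r/(\alpha-1)}-\log_2\gamma>\log_2\parenv*{r/(\alpha-1)}-2$ on the numerator produces the second case of \cref{cor:gap}. The asymptotic claim then follows at once: with $\alpha,h,\ell,\eps$ fixed and $r\to\infty$, the positive term in either case is $\tfrac{1}{\ell(\eps+1)}\log_2 r+O(1)=\Theta(\log r)$ while the subtracted square-root term is $O(\sqrt{\log r})=o(\log r)$, so $\gap(\cN)\ge(1-o(1))\tfrac{\log_2 r}{\ell(\eps+1)}\in\Omega(\log r)$.

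I expect the main obstacle to be the case-bookkeeping for $g(t)$: its two quadratic forms, and in particular the near-degenerate behavior as $h$ approaches $2\ell+\eps$, where the leading coefficient $(h-\ell)(2\ell+\eps-h)$ shrinks toward $0$ and $t_{\star}$ can in fact grow like $\log r$ rather than $\sqrt{\log r}$, so that the clean bound $\sqrt{\log_2(r/(\alpha-1))/(\ell\eps)}$ must be justified (or the regime of validity pinned down) with some care, along with verifying that the floor/ceiling rounding in the definitions of $t_{\Delta}$ and $t_{\star}$ is absorbed without loss. By contrast, the $t_{\Delta}$ bound via $f$ and the replacement $\log_2\gamma<2$ are routine.
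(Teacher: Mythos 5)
Your proof takes essentially the same route as the paper's own: drop the linear terms of $f$ and $g$, invert the remaining leading quadratics to get explicit square-root upper bounds for $t_{\Delta}$ and $t_{\star}$ from \cref{thm:gap_lb}, and absorb $\gamma$ via $\log_2\gamma < 2$. The reservation you flag about the band $2\ell < h < 2\ell+\eps$ is in fact well founded, and it applies equally to the paper's own proof: the paper asserts $g(t) \geq \ell\eps\, t^2$ for all $h < 2\ell+\eps$, but once $\max(2\ell,\ell+\eps) < h < 2\ell+\eps$ (a nonempty range when $\ell,\eps\geq 2$) the leading coefficient of $g$ is $(h-\ell)(2\ell+\eps-h)$, and writing $a=h-\ell$ one has $a>\max(\ell,\eps)$ so $a(\ell+\eps-a)<\ell\eps$ (the product $a(\ell+\eps-a)$ decreases for $a>(\ell+\eps)/2$ and equals $\ell\eps$ at $a=\max(\ell,\eps)$); consequently $g(t)<\ell\eps\,t^2$ for $t$ large, so the coefficient $1/(\ell\eps)$ under the square root in the second case of \cref{cor:gap} is not justified by this argument in that sub-range — for instance with $\ell=\eps=2$, $h=5$ one gets $g(t)=3t^2+3t<4t^2$ for $t>3$. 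The asymptotic statement $\gap(\cN)\in\Omega(\log r)$ survives, since $t_{\star}$ is still $O(\sqrt{\log r})$ with the larger constant $1/\sqrt{(h-\ell)(2\ell+\eps-h)}$, but the explicit constant in the displayed bound should be adjusted in that regime.
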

\begin{proof}
When $h \geq 2\ell+\eps$, noting that $\alpha \ell+2\eps-h>0$, we may choose
$$t=\parenv*{\frac{(\alpha-1)\log_2(\frac{r}{\beta})}{(\alpha\ell+\eps-h)\eps} }^{1/2}\ ,$$
such that $2^{f(t)}\geq 2^{(\alpha\ell+\eps-h)\eps t^2}=(\frac{r}{\beta})^{\alpha-1}.$
Then we have that
\begin{align*}
\gap(\cN)\geq &
\frac{\log_2\parenv*{\frac{r+\theta-\alpha}{\gamma \theta}}}{\ell(\eps+1)} - \parenv*{\frac{(\alpha-1)\log_2(\frac{r}{\beta})}{(\alpha\ell+\eps-h)\eps} }^{1/2}
\\ \geq & \frac{\log_2 (r+\theta-\alpha)-\log_2\theta-2}{\ell(\eps+1)} -\parenv*{\frac{\log_2 r-\log_2 \beta}{ (\ell-\frac{h-\ell-\eps}{\alpha-1})\eps}}^{1/2}\ .
\end{align*}
Recall that $\beta$ and $\theta$ are determined by $\alpha, h, \eps,$ and $\ell$. Thus, if $\alpha, h, \eps,$ and $\ell$ are fixed,  $\gap(\cN)\in\Omega(\log r)$.

When $h< 2\ell+\eps$, we may choose $$t=\parenv*{\frac{\log_2(\frac{r}{\alpha-1})}{\ell\eps}}^{1/2}\ ,$$
such that $2^{g(t)}\geq 2^{\ell \eps t^2}=\frac{r}{\alpha-1}.$ It follows that
\begin{align*}
\gap(\cN)&\geq \frac{\log_2\parenv*{\frac{r}{\gamma(\alpha-1)}}}{\ell(\eps+1)}-\parenv*{\frac{\log_2(\frac{r}{\alpha-1})}{\ell\eps}}^{1/2}\\
& \geq \frac{\log_2\parenv*{\frac{r}{\alpha-1}}-2}{\ell(\eps+1)}-\parenv*{\frac{\log_2(\frac{r}{\alpha-1})}{\ell\eps}}^{1/2}\ .
\end{align*}
This shows that $\gap(\cN) \in \Omega(\log r)$.
\end{proof}

\cref{cor:gap_ub} and \cref{cor:gap} show that for fixed network parameters,
the gap size grows as
\begin{equation*}
\gap(\cN) = \Theta(\log r) \quad (r \to \infty)\ .
\end{equation*}

\begin{example}
  We illustrate the proof of Theorem~\ref{thm:gap_ub} and Theorem~\ref{thm:gap_lb} by two network examples with $r=8\times 10^5$ in Figure~\ref{fig:q_bound} and $r=8\times 10^6$ in Figure~\ref{fig:q_bound2}. Note that the curves in the figures are not bounds on the gap size.
  They are the necessary (blue curve) and the sufficient (green curve) condition on $q^t$ such that a $(q,t)$-linear solution exists.
  Namely, there is no $(q,t)$-linear solution in the region below the blue curve and there must be a $(q,t)$-linear solution in the region above the green curve.
  Thus, the \emph{minimum} gap of the network $(2,1)-\cN_{12,r,20}$ is determined by the difference between the necessary condition with $t=1$ and the minimum $2^t$ that is in the region above the sufficient condition.
  Similarly, the \emph{maximum} gap of the network is determined by the difference between the sufficient condition with $t=1$ and the minimum $2^t$ that is in the region above the necessary condition.

  By comparing the two plots it can be seen that the gap increases as the number of middle node in the network increases.
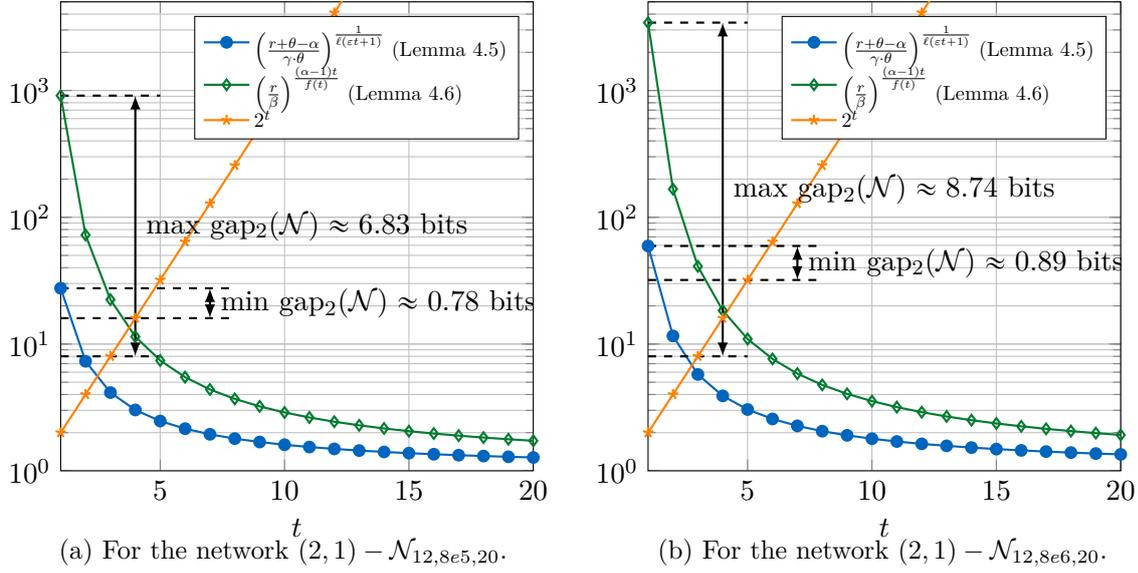
\begin{figure}[!htb]
\begin{subfigure}{.5\textwidth}
    \centering
    \definecolor{darkgreen}{rgb}{0,0.7,0}
\begin{tikzpicture}
\pgfplotsset{compat = 1.3}
\begin{axis}[
	legend style={nodes={scale=0.7, transform shape}},
	width = \linewidth,
	height = \linewidth,
	title = {$h=12$, $\varepsilon=2$, $\ell=1$, $r=8\times 10^5$, $\alpha=20$},
	xlabel = {{$t$}},
	xmin = 1,
	xmax = 20,
	ymin = 1,
	ymax = 5000,
	legend pos = north east,
	legend cell align=left,
	ymode=log,
	grid=both,
	ytick={1,10,100,1000,10000},
	yminorticks=true]

\addplot [solid, color=TUMBlue, thick, mark=*] table[row sep=\\] {
1.000000 27.544680 \\
2.000000 7.311772 \\
3.000000 4.141513 \\
4.000000 3.020039 \\
5.000000 2.470231 \\
6.000000 2.149396 \\
7.000000 1.940920 \\
8.000000 1.795247 \\
9.000000 1.688008 \\
10.000000 1.605904 \\
11.000000 1.541101 \\
12.000000 1.488691 \\
13.000000 1.445454 \\
14.000000 1.409190 \\
15.000000 1.378347 \\
16.000000 1.351800 \\
17.000000 1.328715 \\
18.000000 1.308458 \\
19.000000 1.290542 \\
20.000000 1.274584 \\
};
\addlegendentry{{$\left(\frac{r+\theta-\alpha}{\gamma\cdot\theta}\right)^{\frac{1}{\ell (\varepsilon t+1)}}$ (\cref{lem:lb_q})}};

\addplot [solid, color=TUMGreenDark, thick, mark=diamond] table[row sep=\\] {
1.000000 910.202123 \\
2.000000 72.448517 \\
3.000000 22.388315 \\
4.000000 11.443355 \\
5.000000 7.418817 \\
6.000000 5.480896 \\
7.000000 4.383703 \\
8.000000 3.692574 \\
9.000000 3.223351 \\
10.000000 2.886711 \\
11.000000 2.634796 \\
12.000000 2.439934 \\
13.000000 2.285133 \\
14.000000 2.159441 \\
15.000000 2.055507 \\
16.000000 1.968230 \\
17.000000 1.893969 \\
18.000000 1.830058 \\
19.000000 1.774507 \\
20.000000 1.725796 \\
};
\addlegendentry{{$\left(\frac{r}{\beta}\right)^{\frac{(\alpha-1)t}{f(t)}}$ (\cref{lem:ub_q})}};

\addplot [solid, color=orange, thick, mark=star] table[row sep=\\] {
1.000000 2.000000 \\
2.000000 4.000000 \\
3.000000 8.000000 \\
4.000000 16.000000 \\
5.000000 32.000000 \\
6.000000 64.000000 \\
7.000000 128.000000 \\
8.000000 256.000000 \\
9.000000 512.000000 \\
10.000000 1024.000000 \\
11.000000 2048.000000 \\
12.000000 4096.000000 \\
13.000000 8192.000000 \\
14.000000 16384.000000 \\
15.000000 32768.000000 \\
16.000000 65536.000000 \\
17.000000 131072.000000 \\
18.000000 262144.000000 \\
19.000000 524288.000000 \\
20.000000 1048576.000000 \\
};
\addlegendentry{{$2^t$}};

\addplot [dashed, color=black, thick, forget plot] table[row sep=\\] {
1.000000 910.202123 \\
5.000000 910.202123 \\
};
\addlegendentry{{$2^t$}};

\addplot [dashed, color=black, thick, forget plot] table[row sep=\\] {
1.000000 8.000000 \\
5.000000 8.000000 \\
};
\addlegendentry{{$2^t$}};

\draw[<->,>=latex, thick] (axis cs: 4.000000, 910.202123) -- node[right] {max $\gap(\cN) \approx 6.83$ bits} (axis cs: 4.000000, 8.000000);

\addplot [dashed, color=black, thick, forget plot] table[row sep=\\] {
1.000000 27.544680 \\
8.000000 27.544680 \\
};
\addlegendentry{{$2^t$}};

\addplot [dashed, color=black, thick, forget plot] table[row sep=\\] {
1.000000 16.000000 \\
8.000000 16.000000 \\
};
\addlegendentry{{$2^t$}};

\draw[<->,>=latex, thick] (axis cs: 7.000000, 27.544680) -- node[right] {min $\gap(\cN) \approx 0.78$ bits} (axis cs: 7.000000, 16.000000);

\end{axis}
\end{tikzpicture}

    \caption{For the network $(2,1)-\cN_{12,8e5,20}$.}
    \label{fig:q_bound}
\end{subfigure}
\begin{subfigure}{.5\textwidth}
    \begin{tikzpicture}
\pgfplotsset{compat = 1.3}
\begin{axis}[
	legend style={nodes={scale=0.7, transform shape}},
	width = \linewidth,
	height = \linewidth,
	title = {$h=12$, $\varepsilon=2$, $\ell=1$, $r=8\times 10^6$, $\alpha=20$},
	xlabel = {{$t$}},
	xmin = 1,
	xmax = 20,
	ymin = 1,
	ymax = 5000,
	legend pos = north east,
	legend cell align=left,
	ymode=log,
	grid=both,
	ytick={1,10,100,1000,10000},
	yminorticks=true]

\addplot [solid, color=TUMBlue, thick, mark=*] table[row sep=\\] {
1.000000 59.343413 \\
2.000000 11.588401 \\
3.000000 5.754622 \\
4.000000 3.900535 \\
5.000000 3.045419 \\
6.000000 2.565901 \\
7.000000 2.262948 \\
8.000000 2.055645 \\
9.000000 1.905488 \\
10.000000 1.792004 \\
11.000000 1.703372 \\
12.000000 1.632318 \\
13.000000 1.574133 \\
14.000000 1.525641 \\
15.000000 1.484625 \\
16.000000 1.449492 \\
17.000000 1.419068 \\
18.000000 1.392474 \\
19.000000 1.369031 \\
20.000000 1.348214 \\
};
\addlegendentry{{$\left(\frac{r+\theta-\alpha}{\gamma\cdot\theta}\right)^{\frac{1}{\ell (\varepsilon t+1)}}$ (Lemma~\ref{lem:lb_q})}};

\addplot [solid, color=TUMGreenDark, thick, mark=diamond] table[row sep=\\] {
1.000000 3426.852562 \\
2.000000 166.699487 \\
3.000000 40.991539 \\
4.000000 18.387193 \\
5.000000 10.956583 \\
6.000000 7.631492 \\
7.000000 5.844183 \\
8.000000 4.761175 \\
9.000000 4.047704 \\
10.000000 3.548001 \\
11.000000 3.181350 \\
12.000000 2.902351 \\
13.000000 2.683767 \\
14.000000 2.508383 \\
15.000000 2.364848 \\
16.000000 2.245401 \\
17.000000 2.144574 \\
18.000000 2.058413 \\
19.000000 1.983995 \\
20.000000 1.919113 \\
};
\addlegendentry{{$\left(\frac{r}{\beta}\right)^{\frac{(\alpha-1)t}{f(t)}}$ (Lemma~\ref{lem:ub_q})}};

\addplot [solid, color=orange, thick, mark=star] table[row sep=\\] {
1.000000 2.000000 \\
2.000000 4.000000 \\
3.000000 8.000000 \\
4.000000 16.000000 \\
5.000000 32.000000 \\
6.000000 64.000000 \\
7.000000 128.000000 \\
8.000000 256.000000 \\
9.000000 512.000000 \\
10.000000 1024.000000 \\
11.000000 2048.000000 \\
12.000000 4096.000000 \\
13.000000 8192.000000 \\
14.000000 16384.000000 \\
15.000000 32768.000000 \\
16.000000 65536.000000 \\
17.000000 131072.000000 \\
18.000000 262144.000000 \\
19.000000 524288.000000 \\
20.000000 1048576.000000 \\
};
\addlegendentry{{$2^t$}};

\addplot [dashed, color=black, thick, forget plot] table[row sep=\\] {
1.000000 3426.852562 \\
5.000000 3426.852562 \\
};
\addlegendentry{{$2^t$}};

\addplot [dashed, color=black, thick, forget plot] table[row sep=\\] {
1.000000 8.000000 \\
5.000000 8.000000 \\
};
\addlegendentry{{$2^t$}};

\draw[<->,>=latex, thick] (axis cs: 4.000000, 3426.852562) -- node[right] {max $\gap(\cN) \approx 8.74$ bits} (axis cs: 4.000000, 8.000000);

\addplot [dashed, color=black, thick, forget plot] table[row sep=\\] {
1.000000 59.343413 \\
8.000000 59.343413 \\
};
\addlegendentry{{$2^t$}};

\addplot [dashed, color=black, thick, forget plot] table[row sep=\\] {
1.000000 32.000000 \\
8.000000 32.000000 \\
};
\addlegendentry{{$2^t$}};

\draw[<->,>=latex, thick] (axis cs: 7.000000, 59.343413) -- node[right] {min $\gap(\cN) \approx 0.89$ bits} (axis cs: 7.000000, 32.000000);

\end{axis}
\end{tikzpicture}

    \caption{For the network $(2,1)-\cN_{12,8e6,20}$.}
    \label{fig:q_bound2}
\end{subfigure}
\caption{An illustration of proofs of Theorem~\ref{thm:gap_ub} and Theorem~\ref{thm:gap_lb}.}
\label{fig:gap-bound}
\end{figure}
\end{example}

\subsection{Comparisons of Bounds on $r_{\max}$}\label{sec:discussion}
In the following we compare our upper and lower bound on $r_{\max}$ with previously known bounds.
\subsubsection{Other Upper Bound on $r_{\max}$}
We recall the result from~\cite[Corollary 3]{EZjul2019} and compare it with our upper bound in Corollary~\ref{cor:imupperbound-2-Network}.

\begin{theorem}[{\cite[Corollary~3]{EZjul2019}}]\label{thm:codesize}
  If $n$, $k$, $\delta$, and $\alpha$, are positive integers such that $1<k<n,\ 1\leq\delta\leq n-k$ and $2\leq\alpha\leq\quadbinom{k+\delta-1}{k}_{q}+1$, then for an $\alpha-(n,k,\delta)_q^c$ {covering Grassmannian code} $\cC$, we have
$$|\cC|\leq\floor{(\alpha-1)\frac{\quadbinom{n}{\delta+k-1}_{q}}{{\quadbinom{n-k}{\delta-1}_{q}}}}\ .$$
\end{theorem}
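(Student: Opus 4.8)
The statement is Theorem~\ref{thm:codesize}, a Singleton-type upper bound on the size of an $\alpha$-$(n,k,\delta)_q^c$ covering Grassmannian code. The plan is to reduce the $\alpha$-covering condition to a $2$-covering (i.e., a classical constant-dimension code) condition via a ``multiplicity'' counting argument, then to bound the $2$-covering case by a projection / containment argument in the Grassmannian.

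\textbf{Step 1: From $\alpha$-covering to controlled multiplicity.} Suppose $\cC\subseteq\cG_q(n,k)$ is an $\alpha$-$(n,k,\delta)_q^c$ code, and consider an arbitrary $(\delta+k-1)$-dimensional subspace $U\subseteq\F_q^n$. I would first show that $U$ contains at most $\alpha-1$ codewords of $\cC$: if $U$ contained $\alpha$ codewords $C_1,\dots,C_\alpha$, then $\dim(C_1+\cdots+C_\alpha)\le\dim U=\delta+k-1<\delta+k$, contradicting the covering property. This is the key structural consequence — every ``small'' ambient subspace of dimension exactly $\delta+k-1$ absorbs at most $\alpha-1$ elements of $\cC$.

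\textbf{Step 2: Double counting over $(\delta+k-1)$-spaces containing a fixed codeword.} Now count incidences between codewords $C\in\cC$ and $(\delta+k-1)$-dimensional subspaces $U$ with $C\subseteq U$. For a fixed $C$ (of dimension $k$), the number of such $U$ is the number of $(\delta-1)$-dimensional subspaces of the quotient $\F_q^n/C\cong\F_q^{n-k}$, namely $\quadbinom{n-k}{\delta-1}_q$. On the other hand, the total number of $(\delta+k-1)$-dimensional subspaces of $\F_q^n$ is $\quadbinom{n}{\delta+k-1}_q$, and each is counted at most $\alpha-1$ times by Step 1. Hence
\begin{align*}
  |\cC|\cdot\quadbinom{n-k}{\delta-1}_q \;\le\; (\alpha-1)\quadbinom{n}{\delta+k-1}_q\ ,
\end{align*}
and dividing gives $|\cC|\le (\alpha-1)\quadbinom{n}{\delta+k-1}_q\big/\quadbinom{n-k}{\delta-1}_q$. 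Since $|\cC|$ is an integer, one may take the floor, yielding the claimed bound. (The hypothesis $\alpha\le\quadbinom{k+\delta-1}{k}_q+1$ is what makes the bound nonvacuous, and $1\le\delta\le n-k$, $k<n$ ensure all the $q$-binomials appearing are well-defined and nonzero.)

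\textbf{Main obstacle.} The conceptual content is entirely in Step 1 — recognizing that the right object to count is $(\delta+k-1)$-dimensional ambient spaces and that the $\alpha$-covering hypothesis caps their multiplicity at $\alpha-1$; once that is in hand, Step 2 is routine $q$-binomial bookkeeping (counting subspaces of a quotient). The only mild subtlety is verifying that the floor is legitimate and that one has not lost anything by passing from the exact incidence count to the inequality; but since we only need an upper bound, the inequality direction is exactly what is wanted. I would also double-check the edge case $\delta=n-k$, where $\quadbinom{n-k}{\delta-1}_q=\quadbinom{n-k}{n-k-1}_q=\frac{q^{n-k}-1}{q-1}$ is still positive, so the division is valid throughout the stated parameter range.
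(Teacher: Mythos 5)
Your proof is correct, and it is worth noting that the paper itself does not prove this theorem — it is imported verbatim as a citation to Etzion and Zhang, so there is no in-text proof to compare against. Your argument is the natural anticode/Johnson-type double-counting bound, which is indeed the approach taken in the cited source. Step 1 correctly isolates the structural consequence of the $\alpha$-covering property (any $(\delta+k-1)$-dimensional subspace absorbs at most $\alpha-1$ codewords, else the span of $\alpha$ of them would have dimension $\le\delta+k-1<\delta+k$), and Step 2 is the routine incidence count: each $k$-dimensional codeword lies in exactly $\quadbinom{n-k}{\delta-1}_q$ ambient $(\delta+k-1)$-spaces, while each such space holds at most $\alpha-1$ codewords, giving $|\cC|\quadbinom{n-k}{\delta-1}_q\le(\alpha-1)\quadbinom{n}{\delta+k-1}_q$. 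Your reading of the hypothesis $\alpha\le\quadbinom{k+\delta-1}{k}_q+1$ is also right: by the identity $\quadbinom{n}{k}_q\quadbinom{n-k}{\delta-1}_q=\quadbinom{n}{\delta+k-1}_q\quadbinom{\delta+k-1}{k}_q$, the bound equals $(\alpha-1)\quadbinom{n}{k}_q/\quadbinom{\delta+k-1}{k}_q$, so the upper limit on $\alpha$ is precisely what keeps it from exceeding $\quadbinom{n}{k}_q$ and becoming vacuous. The floor and the edge cases you flag ($\delta=n-k$, and implicitly $\delta=1$ where $U$ and $C$ coincide) are all handled correctly.
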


By combining Theorem~\ref{thm:codesize} and Theorem~\ref{thm:cover_scalar_sol}, the following corollary can be derived.
\begin{corollary}\label{cor:EZ19_vector}
  If the $(\eps,\ell)-\mathcal{N}_{h,r,\alpha\ell+\eps}$ network has a $(q,t)$-linear solution, then
  \begingroup
  \allowdisplaybreaks
  \begin{align*}
    r\leq r_{\max}&\leq\floor{(\alpha-1)\frac{\quadbinom{ht}{ht-\eps t-1}_{q}}{{\quadbinom{ht-\ell t}{ht -\ell t-\eps t -1}_{q}}}}\\
    & < (\alpha-1)\frac{\gamma q^{(\eps t+1)(ht -\eps t-1)}}{q^{(\eps t+1)(ht -\ell t -\eps t-1)}}\\
    & =  \gamma (\alpha-1)q^{\ell t(\eps t+1)}\ ,
  \end{align*}
  \endgroup
  with $1<\ell t<h t$, $0\leq \eps\leq h-\ell-\frac{1}{t}$, $2\leq \alpha\leq\quadbinom{ht-\eps t-1}{\ell t}_{q}+1$.\\
\end{corollary}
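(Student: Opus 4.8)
The final statement is \cref{cor:EZ19_vector}, which rephrases a known bound on $\alpha$-covering Grassmannian codes (\cref{thm:codesize}) in terms of the network parameters of $(\eps,\ell)-\mathcal{N}_{h,r,\alpha\ell+\eps}$, combining it with \cref{thm:cover_scalar_sol}.

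The plan is to proceed in three short steps. First, I would invoke \cref{thm:cover_scalar_sol}: a $(q,t)$-linear solution to $(\eps,\ell)-\mathcal{N}_{h,r,\alpha\ell+\eps}$ exists if and only if there is an $\alpha$-$(ht,\ell t, ht-\ell t-\eps t)_q^c$ covering Grassmannian code with $r$ codewords. So if the network has a $(q,t)$-linear solution, such a code $\cC$ with $|\cC|=r$ exists, and therefore $r \le \cB_q(ht,\ell t, ht-\ell t-\eps t;\alpha)$. Setting $n=ht$, $k=\ell t$, $\delta = ht-\ell t-\eps t$, I should check that the hypotheses of \cref{thm:codesize} are met: $1<k<n$ becomes $1<\ell t<ht$; $1\le \delta\le n-k$ becomes $1\le ht-\ell t-\eps t\le ht-\ell t$, i.e. $0\le \eps t\le ht-\ell t-1$, which is $0\le\eps\le h-\ell-\tfrac1t$; and $2\le\alpha\le \quadbinom{k+\delta-1}{k}_q+1 = \quadbinom{ht-\eps t-1}{\ell t}_q+1$. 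These are exactly the constraints listed in the corollary's statement.

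Second, I would substitute into the bound of \cref{thm:codesize}: $r\le\lfloor(\alpha-1)\quadbinom{ht}{\delta+k-1}_q/\quadbinom{n-k}{\delta-1}_q\rfloor$. Note $\delta+k-1 = ht-\eps t-1$ and $n-k = ht-\ell t$, $\delta-1 = ht-\ell t-\eps t-1$, so this is $r\le\lfloor(\alpha-1)\quadbinom{ht}{ht-\eps t-1}_q/\quadbinom{ht-\ell t}{ht-\ell t-\eps t-1}_q\rfloor$, the first displayed inequality. Third, I would apply the two-sided estimate \eqref{eq:gauss} on the $q$-binomial coefficient: the numerator satisfies $\quadbinom{ht}{ht-\eps t-1}_q < \gamma\, q^{(\eps t+1)(ht-\eps t-1)}$ (here the chosen dimension is $ht-\eps t-1$, whose complement is $\eps t+1$), and the denominator satisfies $\quadbinom{ht-\ell t}{ht-\ell t-\eps t-1}_q \ge q^{(\eps t+1)(ht-\ell t-\eps t-1)}$ (dimension $ht-\ell t-\eps t-1$, complement $\eps t+1$). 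Dividing and simplifying the exponent: $(\eps t+1)(ht-\eps t-1) - (\eps t+1)(ht-\ell t-\eps t-1) = (\eps t+1)\cdot\ell t$, which yields $r < \gamma(\alpha-1) q^{\ell t(\eps t+1)}$, the last displayed inequality; dropping the floor only weakens the bound.

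There is no serious obstacle here — the corollary is a routine repackaging. The only point requiring a little care is the bookkeeping of which dimension/complement pair to feed into \eqref{eq:gauss} in numerator versus denominator so that the $\gamma$-factor appears only once (on the numerator, as an upper bound) and the $q$-powers cancel cleanly in the exponent; and verifying the parameter constraints transfer correctly. I would state these verifications explicitly but not belabor them.
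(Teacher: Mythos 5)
Your proposal matches the paper's own derivation essentially verbatim: the paper introduces the corollary with ``By combining Theorem~\ref{thm:codesize} and Theorem~\ref{thm:cover_scalar_sol}, the following corollary can be derived,'' which is exactly your substitution $n=ht$, $k=\ell t$, $\delta=ht-\ell t-\eps t$ followed by the Gaussian-binomial estimate \eqref{eq:gauss} applied as an upper bound to the numerator and a lower bound to the denominator. The parameter-constraint bookkeeping you describe is also the intended (implicit) check, so there is nothing to flag.
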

\subsubsection{Comparison Between the Upper Bounds}
We first show that for some parameters, the upper bound in Corollary~\ref{cor:imupperbound-N} can be tighter than that in Corollary~\ref{cor:EZ19_vector}.
The upper bounds in Corollary~\ref{cor:imupperbound-N} and Corollary~\ref{cor:EZ19_vector} can be respectively written as
$$U_A := \quadbinom{(\eps+\ell) t}{\eps t}_q \parenv*{\theta \cdot \frac{q^{\ell t+1}-1}{q-1}  -1   }+ \alpha -\theta\ ,$$
where $\theta = \parenv*{\alpha- \floor{ \frac{h-\eps}{\ell} }+1}$,
and
$$U_B:= (\alpha-1)\frac{\quadbinom{ht}{ht-\eps t-1}_{q}}{{\quadbinom{ht-\ell t}{ht -\ell t-\eps t -1}_{q}}}=(\alpha-1)q^{\ell t(\eps t +1)}\prod\limits^{\eps t}_{i=0}\frac{q^{ht-i}-1}{q^{ht-i}-q^{\ell t}}\ .$$

\begin{lemma}\label{lem:compare-2}
 Let $h \geq 2\ell+\eps$ and $2\leq \alpha \leq \quadbinom{ht-\eps t-1}{\ell t}_q+1$. Assume $\quadbinom{\eps+\ell)t}{\eps t}_q\leq \alpha$,
then
\[
  \log_q U_A - \log_q U_B < \log_q\frac{2\theta\alpha}{\alpha-1}- \ell \eps t^2\ .
\]
Particularly, if $\frac{2\theta \alpha}{\alpha-1}\leq q^{\ell\eps t^2}$,
then
$U_A<U_B$ (the upper bound in Corollary~\ref{cor:imupperbound-N} is tighter than that in  Corollary~\ref{cor:EZ19_vector}).
 \end{lemma}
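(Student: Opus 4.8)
\textbf{Proof plan for Lemma~\ref{lem:compare-2}.}
The plan is to bound the two quantities $U_A$ and $U_B$ from the explicit expressions given just above the lemma statement, take $\log_q$ of each, and track the dominant terms in $t$. First I would deal with $U_A$: since $h\geq 2\ell+\eps$ we have $\theta = \alpha-\floor{(h-\eps)/\ell}+1 \leq \alpha-1$ (using $\floor{(h-\eps)/\ell}\geq 2$), so $\alpha-\theta\geq 1$ but more importantly the additive term $\alpha-\theta$ is negligible against the product. Using the Gaussian-coefficient approximation \eqref{eq:gauss}, namely $\quadbinom{(\eps+\ell)t}{\eps t}_q < \gamma\, q^{\ell t\cdot \eps t} = \gamma\, q^{\ell\eps t^2}$, and the crude bound $\frac{q^{\ell t+1}-1}{q-1} < 2q^{\ell t}$ (valid for $q\geq 2$), I would get
\begin{align*}
  U_A &< \gamma q^{\ell\eps t^2}\parenv*{\theta\cdot 2q^{\ell t} - 1} + \alpha-\theta\\
      &< 2\gamma\theta\, q^{\ell\eps t^2 + \ell t} + \alpha\ ,
\end{align*}
and then, since $\alpha \geq \quadbinom{(\eps+\ell)t}{\eps t}_q \geq q^{\ell\eps t^2}$ by hypothesis and $q^{\ell\eps t^2}\leq 2\gamma\theta q^{\ell\eps t^2+\ell t}$, the trailing $+\alpha$ can be absorbed into a slightly larger constant, giving $U_A < C\cdot q^{\ell\eps t^2+\ell t}$ with $C$ a small multiple of $\gamma\theta$. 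I would aim to pin the constant down to exactly $2\gamma\theta\cdot\frac{\alpha}{\alpha-1}$ (or similar) by being careful here.

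Next I would handle $U_B$ from below. Starting from $U_B = (\alpha-1)q^{\ell t(\eps t+1)}\prod_{i=0}^{\eps t}\frac{q^{ht-i}-1}{q^{ht-i}-q^{\ell t}}$, each factor in the product exceeds $1$ (since $q^{\ell t} \geq 1$ forces the denominator strictly smaller than the numerator whenever $q^{\ell t}>1$; the edge case $\ell t = 0$ is excluded as $\ell,t\geq 1$), hence $U_B > (\alpha-1)q^{\ell\eps t^2 + \ell t}$. Combining,
\begin{align*}
  \log_q U_A - \log_q U_B &< \log_q\parenv*{2\gamma\theta\cdot\tfrac{\alpha}{\alpha-1}\cdot q^{\ell\eps t^2+\ell t}} - \log_q\parenv*{(\alpha-1)q^{\ell\eps t^2+\ell t}}\\
  &= \log_q\frac{2\theta\alpha}{\alpha-1} - \ell\eps t^2 + \ell t - \ell t\ ,
\end{align*}
wait — I need to recheck: the $q^{\ell t}$ appears on \emph{both} sides, so it cancels and I am left with $\log_q\frac{2\gamma\theta\alpha}{(\alpha-1)}$ minus nothing from $q^{\ell\eps t^2}$ on the $U_B$ side... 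Actually $U_B$'s lower bound also carries the full $q^{\ell\eps t^2+\ell t}$, so the $\ell\eps t^2$ cancels too, which does not match the claimed $-\ell\eps t^2$. The resolution — and this is where I would be most careful — is that the lemma's right-hand side $\log_q\frac{2\theta\alpha}{\alpha-1} - \ell\eps t^2$ must come from bounding $U_B$ below by something \emph{larger} than $(\alpha-1)q^{\ell\eps t^2+\ell t}$, specifically using that the product $\prod_{i=0}^{\eps t}\frac{q^{ht-i}-1}{q^{ht-i}-q^{\ell t}}$ is bounded below by roughly $q^{\ell\eps t^2}$ (each factor is at least about $q^{\ell t}$ when $ht - i$ is not much larger than $\ell t$; more carefully $\frac{q^{a}-1}{q^a-q^b}\geq \frac{q^a - q^b}{q^a - q^b}\cdot\frac{q^a-1}{q^a-q^b}$, and since $h\geq 2\ell+\eps\geq 2\ell$, we have $a = ht-i \leq ht$ but the relevant lower estimate needs the factor to be close to $q^b/(1-q^{b-a})\approx q^{\ell t}$ only in a regime that does not generally hold).

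The honest accounting is this: the factor $\frac{q^{ht-i}-1}{q^{ht-i}-q^{\ell t}}$ equals $\frac{1-q^{-(ht-i)}}{1-q^{\ell t - ht + i}}$, which for $ht$ large is close to $1$, not $q^{\ell t}$ — so $U_B \approx (\alpha-1)q^{\ell\eps t^2+\ell t}$ and $U_B$ is \emph{not} larger by a factor $q^{\ell\eps t^2}$. Hence the $-\ell\eps t^2$ in the lemma must instead be squeezed out of a \emph{better} upper bound on $U_A$: one replaces $\quadbinom{(\eps+\ell)t}{\eps t}_q$ not by its crude value $q^{\ell\eps t^2}$ but uses the hypothesis $\quadbinom{(\eps+\ell)t}{\eps t}_q \leq \alpha$ directly, so that $U_A < \alpha\cdot 2\theta q^{\ell t} + \alpha = 2\theta\alpha q^{\ell t}(1+\tfrac{1}{2\theta q^{\ell t}}) < \frac{2\theta\alpha^2}{\alpha-1}q^{\ell t}$ roughly, giving $\log_q U_A < \log_q\frac{2\theta\alpha}{\alpha-1} + \ell t + \log_q\alpha$, against $\log_q U_B > \log_q(\alpha-1) + \ell\eps t^2 + \ell t$, whence $\log_q U_A - \log_q U_B < \log_q\frac{2\theta\alpha}{\alpha-1} - \ell\eps t^2$ after the $\log_q\alpha$ is absorbed. \textbf{The main obstacle} is getting this bookkeeping exactly right: deciding precisely where to invoke $\quadbinom{(\eps+\ell)t}{\eps t}_q\leq\alpha$ versus the Gaussian approximation, and confirming the constant resolves to $\frac{2\theta\alpha}{\alpha-1}$; once that is settled, the final sentence ``if $\frac{2\theta\alpha}{\alpha-1}\leq q^{\ell\eps t^2}$ then $U_A < U_B$'' follows immediately by rearranging $\log_q U_A - \log_q U_B < 0$.
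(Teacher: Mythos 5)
Your diagnosis of the failed first attempt is correct, and the route you eventually settle on is the paper's: replace $\quadbinom{(\eps+\ell)t}{\eps t}_q$ by $\alpha$ using the hypothesis (not the Gaussian approximation), bound $\frac{q^{\ell t+1}-1}{q-1}<2q^{\ell t}$, and pair this with $\log_q U_B>\log_q(\alpha-1)+\ell t(\eps t+1)$. The latter is exactly what you obtained by noting that each factor of the product $\prod_{i=0}^{\eps t}\frac{q^{ht-i}-1}{q^{ht-i}-q^{\ell t}}$ exceeds one, so that part is fine.

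The gap is in the $U_A$ bookkeeping, which you flag as ``the main obstacle'' but never resolve. Writing $U_A<\alpha\cdot 2\theta q^{\ell t}+\alpha$ discards the $-1$ inside the factor and keeps the trailing $+\alpha$; the patch $1+\tfrac{1}{2\theta q^{\ell t}}<\tfrac{\alpha}{\alpha-1}$ neither holds unconditionally (it requires $\alpha-1<2\theta q^{\ell t}$) nor, even when it holds, produces the right constant: pushing it through leaves $\log_q U_A-\log_q U_B<\log_q\tfrac{2\theta\alpha}{\alpha-1}+\log_q\tfrac{\alpha}{\alpha-1}-\ell\eps t^2$, strictly weaker than the lemma. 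The missing step is an \emph{exact} cancellation, not an absorption: expanding the expression for $U_A$ with $\quadbinom{(\eps+\ell)t}{\eps t}_q\leq\alpha$ gives
\begin{align*}
  U_A &\leq \alpha\parenv*{\theta\cdot\tfrac{q^{\ell t+1}-1}{q-1}-1}+\alpha-\theta
       = \alpha\theta\tfrac{q^{\ell t+1}-1}{q-1}-\alpha+\alpha-\theta
       = \theta\parenv*{\alpha\tfrac{q^{\ell t+1}-1}{q-1}-1}
       < 2\theta\alpha\,q^{\ell t}\ ,
\end{align*}
so the $-\alpha$ coming from the ``$-1$'' kills the $+\alpha$ from $\alpha-\theta$, and $\log_q U_A<\log_q(2\theta\alpha)+\ell t$ with no stray $\log_q\alpha$. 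Subtracting the lower bound on $\log_q U_B$ then yields exactly $\log_q\tfrac{2\theta\alpha}{\alpha-1}-\ell\eps t^2$, and the final sentence of the lemma follows as you say.
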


\begin{proof}
Under the assumption $\quadbinom{(\eps+\ell)t }{\eps t}_q\leq {\alpha}$, we have
  \begingroup
  \allowdisplaybreaks
\begin{align*}
\log_q U_A & \leq \log_q \parenv*{\alpha \parenv*{ \theta\cdot \frac{q^{\ell t+1}-1}{q-1}  -1   }+ \alpha-\theta } \\
&=\log_q\parenv*{ \alpha \theta \cdot\frac{q^{\ell t+1}-1}{q-1} -\alpha+\alpha-\theta}\\
&=\log_q \theta+\log_q\parenv*{\alpha \cdot \frac{q^{\ell t+1}-1}{q-1} -1}\\
&<\log_q \theta+\log_q\parenv*{\alpha \cdot \frac{q^{\ell t+1}-1}{q-1}}\\
&\overset{(*)}{<}\log_q \theta +\log_q \alpha + \log_q\parenv*{2\cdot q^{\ell t}}\\
& = \log_q \theta +\log_q \alpha +  \ell t + \log_q 2\ .
\end{align*}
\endgroup
The inequality $(*)$ holds because $\frac{q^{\ell t+1}-1}{q-1}=\sum\limits^{\ell t}_{i=0}q^i<2\cdot q^{\ell t}$.
With the bounds on the $q$-binomial coefficient in \eqref{eq:gauss}, we have
\begin{align*}
  \log_q U_B >  \log(\alpha-1) + \ell t (\eps t+1)\ ,
\end{align*}
and therefore
\begin{align*}
  \log_q U_A - \log_q U_B < \log_q\frac{2\theta\alpha}{\alpha-1}- \ell \eps t^2\ .
\end{align*}
Together with the assumption $\frac{2\theta \alpha}{\alpha-1}\leq q^{\ell\eps t^2}$, the statement follows.
 \end{proof}

 \begin{lemma}\label{lem:compare}Let $h \geq 2\ell+\eps$ and $2\leq\alpha \leq \quadbinom{ht-\eps t-1}{\ell t}_q+1$.
Assume $\quadbinom{(\eps+\ell)t}{\eps t}_q\geq \alpha$. If
$h\geq  2\eps$, then
$$\frac{U_A}{U_B}\leq \frac{8\theta}{\alpha-1}\ .$$
Particularly, if ${8\theta} < {\alpha-1}$,
we have
${U_A}<{U_B}$ (the upper bound in Corollary~\ref{cor:imupperbound-N} is tighter than that in  Corollary~\ref{cor:EZ19_vector}).
\end{lemma}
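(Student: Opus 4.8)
The plan is to bound $U_A$ from above using the assumption $\quadbinom{(\eps+\ell)t}{\eps t}_q\geq \alpha$ is \emph{not} available to absorb the $q$-binomial factor into a constant (unlike Lemma~\ref{lem:compare-2}); instead I would keep $\quadbinom{(\eps+\ell)t}{\eps t}_q$ explicitly and use the Gaussian-coefficient estimate \eqref{eq:gauss}, namely $\quadbinom{(\eps+\ell)t}{\eps t}_q < \gamma\cdot q^{\eps t(\ell t+1)} = \gamma\cdot q^{\ell\eps t^2 + \eps t}$. Combining this with the same elementary bound on the geometric sum $\frac{q^{\ell t+1}-1}{q-1}=\sum_{i=0}^{\ell t}q^i < 2q^{\ell t}$ that was used in Lemma~\ref{lem:compare-2}, I would obtain
\begin{align*}
  U_A &= \quadbinom{(\eps+\ell)t}{\eps t}_q\parenv*{\theta\cdot\frac{q^{\ell t+1}-1}{q-1}-1} + \alpha-\theta\\
      &< \gamma\cdot q^{\ell\eps t^2+\eps t}\cdot \theta\cdot 2q^{\ell t} + \alpha
       = 2\gamma\theta\cdot q^{\ell\eps t^2 + \eps t + \ell t} + \alpha\ .
\end{align*}

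Next I would bound $U_B$ from below. From \eqref{eq:gauss} one has $\quadbinom{ht}{ht-\eps t-1}_q \geq q^{(\eps t+1)(ht-\eps t-1)}$ and $\quadbinom{ht-\ell t}{ht-\ell t-\eps t-1}_q < \gamma\cdot q^{(\eps t+1)(ht-\ell t-\eps t-1)}$, so
\begin{align*}
  U_B = (\alpha-1)\frac{\quadbinom{ht}{ht-\eps t-1}_q}{\quadbinom{ht-\ell t}{ht-\ell t-\eps t-1}_q} > \frac{\alpha-1}{\gamma}\cdot q^{(\eps t+1)\ell t} = \frac{\alpha-1}{\gamma}\cdot q^{\ell\eps t^2 + \ell t}\ .
\end{align*}
Taking the ratio, the dominant powers $q^{\ell\eps t^2+\ell t}$ cancel, leaving
\begin{align*}
  \frac{U_A}{U_B} < \frac{2\gamma\theta\cdot q^{\ell\eps t^2+\eps t+\ell t} + \alpha}{\frac{\alpha-1}{\gamma}q^{\ell\eps t^2+\ell t}} \lesssim \frac{2\gamma^2\theta}{\alpha-1}\cdot q^{\eps t}\ .
\end{align*}
So the residual exponent is $\eps t$, which must be controlled by the hypothesis $h\geq 2\eps$. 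The role of $h\geq 2\eps$ is exactly to kill this residual factor: I would revisit the estimate of $U_B$ more carefully, \emph{not} discarding the full product $\prod_{i=0}^{\eps t}\frac{q^{ht-i}-1}{q^{ht-i}-q^{\ell t}}$ by $\gamma$ but instead noting each factor is $\geq 1$ and extracting an extra power of $q$ from it; under $h\geq 2\eps$ (so that $ht-i\geq 2\eps t - i \geq \eps t$ for all $i\leq\eps t$) this product contributes at least $q^{\eps t}$-type growth, which absorbs the stray $q^{\eps t}$ from $U_A$, yielding the clean bound $U_A/U_B \leq 8\theta/(\alpha-1)$ where the constant $8 = 2\cdot 2\cdot 2$ (or similar) collects the factors of $\gamma$ and the factor $2$ from the geometric sum.

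The main obstacle will be getting the constant down to exactly $8$: the crude chain above produces constants involving $\gamma^2\approx 12.1$, which is too large. The fix requires being more economical — either using the two-sided bound $\quadbinom{n}{k}_q < \gamma q^{k(n-k)}$ on only one of the two binomials and the exact factored form $\prod\frac{q^{ht-i}-1}{q^{ht-i}-q^{\ell t}}\geq\prod(1) = 1$ for a lower estimate on $U_B$ combined with a sharper telescoping that recovers a genuine $q^{\eps t}$ factor, or carefully checking that $\frac{q^{\ell t+1}-1}{q-1}\leq 2q^{\ell t}$ can be improved when $\ell t\geq 1$. Once $U_A/U_B\leq 8\theta/(\alpha-1)$ is established, the ``particularly'' clause is immediate: if $8\theta<\alpha-1$ then $U_A/U_B<1$, i.e.\ $U_A<U_B$, so the upper bound from Corollary~\ref{cor:imupperbound-N} is tighter than that from Corollary~\ref{cor:EZ19_vector}. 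I would finish by writing out the inequality chain cleanly in a single \texttt{align*} block, invoking \eqref{eq:gauss}, the assumption $\quadbinom{(\eps+\ell)t}{\eps t}_q\geq\alpha$, and $h\geq 2\eps$ at the marked steps.
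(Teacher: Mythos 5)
Your route — bounding $U_A$ and $U_B$ separately via the Gaussian estimate $\quadbinom{n}{k}_q < \gamma q^{k(n-k)}$ — is genuinely different from the paper's proof, and it cannot deliver the constant $8$. There are several concrete problems. First, an arithmetic slip: with $n=(\eps+\ell)t$ and $k=\eps t$ the exponent $k(n-k)$ equals $\eps t\cdot\ell t=\ell\eps t^2$, not $\eps t(\ell t+1)$; after correcting this, the stray $q^{\eps t}$ you worry about disappears altogether (both $U_A$ and $U_B$ are sandwiched between constants times $q^{\ell\eps t^2+\ell t}$), leaving $U_A/U_B\lesssim 2\gamma^2\theta/(\alpha-1)$ with $2\gamma^2\approx 24.2$ — the right order, but about three times too big, with no room to improve since $\gamma$ is essentially tight as an absolute constant at $q=2$. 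Second, the fix you reach for does not exist: $\prod_{i=0}^{\eps t}\frac{q^{ht-i}-1}{q^{ht-i}-q^{\ell t}}$ is bounded by an absolute constant (each factor equals $(1-q^{-m})^{-1}$ with $m\geq\ell t$ and $m$ strictly increasing, so the whole product is at most $\prod_{m\geq \ell t}(1-q^{-m})^{-1}$); it never produces $q^{\eps t}$-type growth, so $h\geq 2\eps$ cannot serve the role you assign it. Third, you explicitly decline to use the hypothesis $\quadbinom{(\eps+\ell)t}{\eps t}_q\geq\alpha$, but the paper uses it as a clean first step: since $\alpha-\theta-\quadbinom{(\eps+\ell)t}{\eps t}_q\leq 0$, one gets $U_A\leq\theta\quadbinom{(\eps+\ell)t}{\eps t}_q\frac{q^{\ell t+1}-1}{q-1}$ exactly, at no cost.

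The paper's proof avoids the Gaussian estimate entirely. It writes $\frac{q^{\ell t+1}-1}{q-1}\cdot\quadbinom{(\eps+\ell)t}{\eps t}_q\cdot\quadbinom{(h-\ell)t}{\eps t+1}_q\cdot\quadbinom{ht}{\eps t+1}_q^{-1}$ fully in the factored form $\prod(q^a-1)/(q^b-1)$, bounds each numerator factor $q^a-1<q^a$, and then the pure powers of $q$ cancel exactly, leaving $\frac{q}{q-1}\cdot\prod_{i=1}^{\eps t}(1-q^{-i})^{-1}\cdot\prod_{i=ht-\eps t}^{ht}(1-q^{-i})^{-1}$. This is where $h\geq 2\eps$ actually enters: it makes the index sets $\{1,\dots,\eps t\}$ and $\{ht-\eps t,\dots,ht\}$ disjoint, so the two finite products merge into a subproduct of $\prod_{i=1}^{ht}(1-q^{-i})^{-1}$, which together with $q/(q-1)\leq 2$ gives at most $2\cdot\prod_{i\geq 1}(1-2^{-i})^{-1}\approx 6.93<8$. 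The constant is won not by sharpening the Gaussian estimate but by never invoking it and letting the exact cancellation between $U_A$ and $U_B$ do the work; your proposal, as stated, has no path to that cancellation.
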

\begin{proof}
  Since $\quadbinom{\eps+\ell)t}{\eps t}_q\geq \alpha$,  we have that
  $$ U_A \leq  \theta\cdot \quadbinom{(\eps+\ell) t}{\eps t}_q \frac{q^{\ell t+1}-1}{q-1}\ .$$
  Then,
  \begingroup
  \allowdisplaybreaks
\begin{align*}
  \frac{U_A}{U_B}  &\leq  \frac{\theta}{\alpha-1}\cdot \frac{q^{\ell t+1}-1}{q-1}\quadbinom{(\eps+\ell)t}{\eps t}_q
  \cdot\quadbinom{(h-\ell)t}{(h-\ell-\eps)t-1}_q\quadbinom{ht}{(h-\eps)t-1}_q^{-1}  \\
  &= \frac{\theta}{\alpha-1}\cdot\frac{q^{\ell t+1}-1}{q-1}\quadbinom{(\eps+\ell)t}{\eps t}_q\quadbinom{(h-\ell)t}{\eps t+1}_q\quadbinom{ht}{\eps t+1}_q^{-1}    \\
  &= \frac{\theta}{\alpha-1}\cdot \frac{q^{\ell t+1}-1}{q-1}\cdot \frac{(q^{(\eps+\ell)t}-1)\cdots (q^{\ell t+1}-1)}{(q^{\eps t}-1)\cdots(q-1)}
  \cdot \frac{(q^{(h-\ell)t}-1)\cdots(q^{(h-\ell-\eps)t}-1)}{(q^{ht}-1)\cdots (q^{(h-\eps)t}-1)}\\
  &< \frac{\theta}{\alpha-1}\cdot \frac{q^{\ell t+1}}{q-1}\cdot \frac{q^{(\eps+\ell)t}\cdots q^{\ell t+1}}{(q^{\eps t}-1)\cdots(q-1)}
  \cdot \frac{q^{(h-\ell)t}\cdots q^{(h-\ell-\eps)t}}{(q^{ht}-1)\cdots (q^{(h-\eps)t}-1)}\\
  &=\frac{\theta}{\alpha-1}\cdot \frac{q}{q-1}\cdot
  \prod_{i=1}^{\eps t}  \parenv*{1-\frac{1}{q^i}}^{-1}  \cdot\prod_{i=h t -\eps t}^{h t}  \parenv*{1-\frac{1}{q^i}}^{-1}\\
  &\leq \frac{\theta}{\alpha-1}\cdot\parenv*{1+\frac{1}{q-1}}
  \prod_{i=1}^{h t}  \parenv*{1-\frac{1}{q^i}}^{-1}\quad \textrm{(assume $2\eps\leq h$)}\\
  &<  \frac{8\cdot \theta }{\alpha-1}\ ,
\end{align*}
\endgroup
and the statement follows.
\end{proof}

Now,  we compare the upper bound  in Corollary~\ref{cor:imupperbound-2-Network} with that in Corollary~\ref{cor:EZ19_vector} for $\alpha=2$.
\begin{lemma}\label{lem:compare-alpha2}
Denote $U_C:= \gamma q^{(h-\ell)(2\ell+\eps-h)t^2+(h-\ell)t}$
and $U_D:= \gamma q^{\ell t(\eps t+1)}$. Then,
\begin{align*}
  \log_q U_C -\log_q U_D  =   [(h-\ell)(2\ell+\eps-h)-\eps \ell]t^2 +(h-2\ell)t\ .
\end{align*}

Particularly, if one of the following three conditions is satisfied,
\begin{itemize}
    \item $\eps t +1<\ell t$, and either $h>2\ell$ or $h<\ell+\eps+\frac{1}{t}$;
    \item $\eps t +1> \ell t$, and either $h>\ell+\eps+\frac{1}{t}$ or $h<2\ell$;
    \item $\eps t +1=\ell t$ and $h\not=2\ell$,
\end{itemize}
then,
\begin{align*}
   \log_q U_C -\log_q U_D   <  0\ .
\end{align*}
In other words, the upper bound in Corollary~\ref{cor:imupperbound-2-Network} is tighter than the upper bound in Corollary~\ref{cor:EZ19_vector} for $\alpha=2$, if one of the conditions above holds.
\end{lemma}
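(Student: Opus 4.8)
\textbf{Proof plan for Lemma~\ref{lem:compare-alpha2}.}
The plan is to compute the difference $\log_q U_C - \log_q U_D$ directly from the closed-form exponents and then analyze the sign of the resulting quadratic-in-$t$ expression under the three stated cases. First I would write out the exponents: from the definitions $U_C = \gamma q^{(h-\ell)(2\ell+\eps-h)t^2+(h-\ell)t}$ and $U_D = \gamma q^{\ell t(\eps t+1)} = \gamma q^{\ell\eps t^2 + \ell t}$, so the prefactor $\gamma$ cancels and
\begin{align*}
  \log_q U_C - \log_q U_D &= \bigl[(h-\ell)(2\ell+\eps-h) - \eps\ell\bigr]t^2 + (h-\ell-\ell)t \\
  &= \bigl[(h-\ell)(2\ell+\eps-h) - \eps\ell\bigr]t^2 + (h-2\ell)t\ ,
\end{align*}
which is exactly the claimed identity. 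This part is purely a matter of expanding and collecting terms, so it is routine.

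The substance is the sign analysis. I would factor the coefficient of $t^2$. A short computation gives
\begin{align*}
  (h-\ell)(2\ell+\eps-h) - \eps\ell &= -h^2 + (3\ell+\eps)h - 2\ell^2 - 2\eps\ell \\
  &= -(h - \ell - \eps)(h - 2\ell)\ ,
\end{align*}
which I would verify by expanding the right-hand side. Hence
\begin{align*}
  \log_q U_C - \log_q U_D = -(h-\ell-\eps)(h-2\ell)t^2 + (h-2\ell)t = (h-2\ell)t\bigl[1 - (h-\ell-\eps)t\bigr]\ .
\end{align*}
Now the sign is transparent: the factor $(h-2\ell)$ and the factor $\bigl[1-(h-\ell-\eps)t\bigr] = \bigl[1-(h-\ell-\eps)t\bigr]$ (equivalently, comparing $\eps t + 1$ with $(h-\ell)t$, since $1-(h-\ell-\eps)t < 0 \iff \eps t + 1 < (h-\ell)t$) each control one of the two hypotheses in the lemma, and the product being negative means the two factors have opposite signs. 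I would then just go through the three bullet cases: (i) $\eps t + 1 < \ell t$ forces $(h-\ell)t > \ell t > \eps t + 1$, so the bracket is negative; then $h > 2\ell$ makes $(h-2\ell)$ positive (product negative) and $h < \ell+\eps+\tfrac1t$ makes the bracket instead positive while $h<2\ell$... wait—more carefully, in case (i) the bracket $1-(h-\ell-\eps)t$ has a fixed sign determined by $\eps t+1$ vs $\ell t$ only through the chain $\ell t \le (h-\ell)t$, so I need the hypothesis $h>2\ell$ OR $h < \ell+\eps+\tfrac1t$ to pin the sign of the \emph{other} factor; I would check each sub-branch and confirm the product is strictly negative. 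Cases (ii) and (iii) are handled identically, with $\eps t+1 = \ell t$ making the bracket exactly... no: if $\eps t + 1 = \ell t$ then $(h-\ell)t$ vs $\eps t+1$ still depends on $h-\ell$ vs $\ell$, i.e. on $h$ vs $2\ell$, so the product becomes $-(h-2\ell)^2 t \cdot (\text{something})$; with $h\neq 2\ell$ this is strictly negative. So for case (iii) the conclusion is immediate from the factored form once $h\neq 2\ell$.

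The main obstacle I anticipate is purely bookkeeping: making sure the six sub-branches (two per bullet) are exhaustively and correctly matched to the sign of each of the two factors, and in particular being careful that the conditions are stated in terms of $\eps t + 1$ versus $\ell t$ (the intersection-dimension comparison) rather than $\eps$ versus $\ell$, since the $+1$ matters when $t$ is small. Once the coefficient of $t^2$ is factored as $-(h-\ell-\eps)(h-2\ell)$, everything reduces to a transparent sign check, so there is no analytic difficulty — the only risk is a sign slip in the case enumeration, which I would guard against by writing the difference in the fully factored form $(h-2\ell)t[1-(h-\ell-\eps)t]$ before branching.
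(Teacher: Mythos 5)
Your identity check and the factorization of the $t^2$ coefficient as $-(h-\ell-\eps)(h-2\ell)$ are both correct, and the fully factored form
\begin{equation*}
\log_q U_C - \log_q U_D = (h-2\ell)\,t\,\bigl[1 - (h-\ell-\eps)t\bigr]
\end{equation*}
is a genuine improvement in presentation over the paper's argument. The paper instead views $C-D$ as a downward-opening quadratic in $h$, computes the axis of symmetry $h^* = \tfrac{(3\ell+\eps)t+1}{2t}$, checks directly that $h=2\ell$ is a root, and then uses symmetry to locate the second root at $\ell+\eps+\tfrac1t$; the case split is phrased by comparing $h^*$ with $2\ell$. Your factorization exhibits both roots at once and replaces the symmetry argument with a plain opposite-signs check on two linear factors, which is cleaner and less error-prone.

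That said, your sketch of the sign analysis contains a real error that would need fixing before it is a proof. You write that in case (i) ``$\eps t + 1 < \ell t$ forces $(h-\ell)t > \ell t > \eps t + 1$'' — this is false. The hypothesis $\eps t + 1 < \ell t$ involves only $\ell,\eps,t$ and says nothing about $h$; it cannot force any inequality on $(h-\ell)t$. What $\eps t + 1 < \ell t$ actually gives you is $\ell+\eps+\tfrac1t < 2\ell$, i.e., the two roots of the parabola are ordered with $\ell+\eps+\tfrac1t$ to the left of $2\ell$. The correct observation is then that the product $(h-2\ell)\bigl[1-(h-\ell-\eps)t\bigr]$ is negative precisely when $h$ lies strictly outside the closed interval between the two roots — equivalently, when $h > \max\{2\ell,\ \ell+\eps+\tfrac1t\}$ or $h < \min\{2\ell,\ \ell+\eps+\tfrac1t\}$. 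Each bullet of the lemma pins down the ordering of the two roots (via the sign of $\eps t + 1 - \ell t$) and then requires $h$ to be outside the interval, so all cases fall out immediately from the factored form. You do eventually retreat to this position at the end of your writeup, but the intermediate reasoning as written would mislead a reader; I would cut the faulty chain and state the outside-the-roots criterion directly.
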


\begin{proof}
  Denote $C=(h-\ell)(2\ell+\eps-h)t+(h-\eps)$ and $D=\ell(\eps t+1)$. Then $\log_q U_C -\log_q U_D=Ct-Dt$.
  It suffices to show that $C < D$.
Note that $C=-th^2 +{3\ell +\eps }t h + h + \cdots$ is a quadratic function in $h$ which is symmetric about $h=\frac{(3\ell+\eps)t+1}{2t}$. We proceed in three cases, according to the position of the axis of symmetry.

\begin{enumerate}
    \item If $\eps t+1 < \ell t$, then $\frac{(3\ell+\eps)t+1}{2t}< 2\ell$, i.e., the axis of symmetry is on the left of $h=2\ell$. In this case, $C$ is decreasing when $h \geq 2\ell$. It follows that $C < D$ for $h> 2\ell$ as $C=D$ when $h=2\ell$.
    Furthermore, according to the symmetry, $C < D$ also holds for $h<\ell +\eps +\frac{1}{t}$.
    \item If $\eps t+1 > \ell t$,  then $\frac{(3\ell+\eps)t+1}{2t}> 2\ell$. Using the same argument, we can see that $C < D$ holds for $h< 2\ell$ and $h>\ell+\eps +\frac{1}{t}$.
    \item If $\eps t+1=\ell t$,  then $\frac{(3\ell+\eps)t+1}{2t}= 2\ell$. The maximal value of $C-D$ is taken at $h=2\ell$, which is $0$. So $C<D$ for all $h\not=2\ell$.
\end{enumerate}
\end{proof}

The following example shows that, in some cases, the upper bound in Corollary~\ref{cor:imupperbound-2-Network} matches a lower bound from~\cite{EKOOfeb2020} within a factor of $\gamma\approx 3.48$.
\begin{example}
Let $\alpha=2$, $\eps=\ell$, and $h=2\ell+1$.
A lower bound from \cite{EKOOfeb2020} is
$$q^{(\ell^2-1)t^2+(\ell+1) t} \leq r\ .$$
For the upper bound,
Corollary~\ref{cor:imupperbound-2-Network} shows that
$$r \leq \gamma q^{(\ell^2-1)t^2+(\ell+1)t}\ ,$$
agreeing with the lower bound up to a factor of $\gamma$. In contrast, Corollary~\ref{cor:EZ19_vector} shows that
$$r \leq \gamma q^{\ell^2 t^2+\ell t}\ ,$$
which differs from the lower bound by a factor of $\gamma q^{t^2-t}$.
\end{example}

\subsubsection{Other Lower Bounds on $r_{\max}$}
Let ${\cB}_q(n,k,\delta;\alpha)$ denote the maximum possible size of an $\alpha$-$(n,k,\delta)_q^c$ covering Grassmannian code. The following lower bounds were proposed on ${\cB}_q(n,k,\delta;\alpha)$ for $\delta \leq k$ in~\cite{EKOOfeb2020}.

\begin{theorem}[{\cite[Theorem 21]{EKOOfeb2020}}]
  \label{thm:EK_lb}
Let $1\leq \delta \leq k$, $k+\delta \leq n$ and $2\leq \alpha \leq q^k+1$ be integers.
\begin{enumerate}
    \item If $n<k+2\delta$, then
    $${\cal B}_q(n,k,\delta;\alpha) \geq (\alpha -1) q^{\max\{k,n-k\}(\min\{k,n-k\}-\delta+1)}.$$

    \item If $n\geq k+2\delta$, then for each $t$ such that $\delta\leq t\leq n-k-\delta$, we have \label{case:EK_2}
    \begin{enumerate}
        \item If $t<k$, then
        $${\cal B}_q(n,k,\delta;\alpha) \geq (\alpha-1)q^{k(t-\delta+1)}{\cal B}_q(n-t,k,\delta;\alpha).$$
        \item If $t\geq k$, then
        \begin{align*}{\cal B}_q(n,k,\delta;\alpha) \geq & \ (\alpha-1)q^{t(k-\delta+1)}{\cal B}_q(n-t,k,\delta;\alpha)
         +{\cal B}_q(t+k-\delta,k,\delta;\alpha).
        \end{align*}
    \end{enumerate}
\end{enumerate}
\end{theorem}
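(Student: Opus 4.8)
This is nothing new: part (i) is precisely the conclusion of \cref{thm:EK_1_ext} (the extra hypothesis $n<k+2\delta$, and $\alpha\le q^k+1$, are simply discarded). So I would just invoke the lifted‑MRD construction already carried out: take an $[(n-k)\times k,\,\max\{k,n-k\}(\min\{k,n-k\}-\delta+1),\,\delta]_q$ MRD code, lift it, dualize, and take $\alpha-1$ copies, exactly as in the proof of \cref{thm:EK_1_ext}. Nothing further is needed here.

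\textbf{Plan for part (ii).} The plan is a recursive construction. Fix a decomposition $\mathbb{F}_q^n = U \oplus W$ with $\dim U = n-t$ and $\dim W = t$, and view an optimal $\alpha$-$(n-t,k,\delta)_q^c$ code $\mathcal{C}_1$ (of size $\mathcal{B}_q(n-t,k,\delta;\alpha)$) as a family of $k$-subspaces of $U$. For each $X\in\mathcal{C}_1$ I would attach the subspaces $\{x+\phi(x):x\in X\}\subseteq\mathbb{F}_q^n$, where $\phi$ ranges over $\alpha-1$ cosets of a linear MRD code $\mathcal{M}$ of linear maps $X\to W$ of minimum rank distance $\delta$; the size of such an $\mathcal{M}$ is $q^{\max\{k,t\}(\min\{k,t\}-\delta+1)}$, which equals $q^{k(t-\delta+1)}$ when $t<k$ and $q^{t(k-\delta+1)}$ when $t\ge k$. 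Taking $\alpha-1$ cosets is the analogue of the "$\alpha-1$ copies / multiset union" device from the proof of \cref{thm:EK_1_ext}, and it already produces $(\alpha-1)\,q^{\max\{k,t\}(\min\{k,t\}-\delta+1)}\,\mathcal{B}_q(n-t,k,\delta;\alpha)$ codewords. In case (b), where $t\ge k$, I would also add a second family: an optimal $\alpha$-$(t+k-\delta,k,\delta)_q^c$ code placed inside a $(t+k-\delta)$-dimensional subspace of $\mathbb{F}_q^n$ chosen transverse enough to the first family, which contributes the summand $\mathcal{B}_q(t+k-\delta,k,\delta;\alpha)$. Checking that the two families are disjoint and counting the codewords is routine.

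\textbf{Verification of the covering property.} The real work is showing that any $\alpha$ of the constructed codewords span a subspace of dimension at least $k+\delta$, and I would do this by a case analysis driven by projection onto $U$. If the $\alpha$ chosen codewords have pairwise distinct images in $U$, those images are $\alpha$ distinct codewords of $\mathcal{C}_1$, hence span $\ge k+\delta$ inside $U$, and the claim follows since projection does not increase dimension. If two of them share the same base $X\in\mathcal{C}_1$ and the same coset of $\mathcal{M}$ — which a pigeonhole over the $\alpha-1$ cosets forces once enough $U$-images collide — then the difference of the two "slope" maps has rank $\ge\delta$, so these two alone already span $\ge k+\delta$. The mixed configurations (a partial collision of $U$-images, possibly mingled with codewords of the second family in case (b)) would be settled by the bookkeeping identity $\dim\bigl(\sum_i X_i'\bigr)=\dim\bigl(\sum_i \pi_U(X_i')\bigr)+\dim\bigl(\bigl(\sum_i X_i'\bigr)\cap W\bigr)$ together with the transversality of the second family and, again, the MRD distance property.

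\textbf{Expected main obstacle.} The delicate point is exactly this case analysis: organizing the mixed‑collision cases so that genuinely every $\alpha$-subset is covered, while keeping the count tight enough to extract the constant $\alpha-1$ (and, in case (b), the additive term $\mathcal{B}_q(t+k-\delta,k,\delta;\alpha)$) rather than something weaker. This is where the coset/multiset‑union trick must be applied with care and where the interaction between the recursive family and the additional family in case (b) has to be controlled; once the covering property is secured, the size estimates are immediate from the construction.
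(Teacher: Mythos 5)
The paper does not actually prove this theorem: it is quoted as \cite[Theorem~21]{EKOOfeb2020} for comparison and never re-derived, so there is no in-paper proof to compare against. Assessing your proposal on its own terms: your treatment of part~(i) is correct, since it is literally the conclusion of \cref{thm:EK_1_ext} (whose lifted-MRD-dual-plus-$(\alpha-1)$-copies argument the paper does give), with the extra hypotheses $n<k+2\delta$ and $\alpha\le q^k+1$ simply unused.

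Part~(ii), however, has a genuine gap in the covering verification, and it is precisely at the point you flag as "the delicate point." Your ``$\alpha-1$ cosets'' device does not transfer the rank-distance guarantee: if $\phi_1,\phi_2$ lie in \emph{different} additive cosets of the MRD code $\mathcal M$, then $\phi_1-\phi_2\notin\mathcal M$ and can have rank as low as~$1$, so $S_{X,\phi_1}$ and $S_{X,\phi_2}$ need not span $\ge k+\delta$. The concrete failing configuration is $\alpha-1$ lifts of a single base codeword $X$ (one per coset) together with one lift of a distinct base codeword $X'$: the pigeonhole over $\alpha-1$ cosets does not fire (you only have $\alpha-1$ maps over $X$), and projecting to $U$ gives only two distinct elements $X,X'$ of $\mathcal C_1$, about which the $\alpha$-covering property of $\mathcal C_1$ says nothing when $\alpha>2$, so all you get is $\dim(X+X')\ge k+1$, which is $\ge k+\delta$ only when $\delta=1$. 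Replacing ``cosets'' with the multiset ``$\alpha-1$ copies'' device used in \cref{thm:EK_1_ext} does not repair this: that device works there because the lifted-MRD-dual family is already \emph{2}-covering (any two distinct codewords span $\ge k+\delta$), whereas your recursive lift is not, since two lifts over distinct bases $X\ne X'$ only guarantee a span of $k+1$. There is also a parameter range where the coset count itself breaks: for $\delta=1$ the code $\mathcal M$ is the full map space, so there is exactly one coset, yet the theorem is asserted for all $1\le\delta\le k$ and $\alpha$ up to $q^k+1$. Finally, in case~(b) the added family inside a $(t+k-\delta)$-dimensional ``transverse enough'' subspace is not specified precisely enough to check that mixed selections (some codewords from each family) still span $\ge k+\delta$. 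Whatever construction \cite{EKOOfeb2020} uses to extract both the factor $\alpha-1$ and, in case~(b), the additive $\mathcal B_q(t+k-\delta,k,\delta;\alpha)$ term must be arranged so that the mixed-collision cases are handled; as written, your plan does not close them.
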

\subsubsection{Discussion of Lower Bounds}
Theorem~\ref{thm:EK_1_ext} improves the lower bounds in \cref{thm:EK_lb} \cite{EKOOfeb2020} by removing the conditions $\alpha \leq q^k+1$ and $n<k+2\delta$.
For $n\geq k+2\delta$, the numerical results show that either could be tighter, depending on the parameters. The theoretical comparison between the two lower bounds is complicated due to the recursive function.

In the following, we compare the lower bound on $r_{\max}$ in Corollary~\ref{cor:EK19_lb} with the upper bounds in the previous sections.
\begin{itemize}
\item When $h \leq 2\ell$, Corollary~\ref{cor:EK19_lb} gives
$$r_{\max}\geq (\alpha-1)q^{\ell t(\eps t+1)}\ ,$$
which coincides with the upper bound (up to a constant factor of $\gamma\approx 3.48$) in \cref{cor:EZ19_vector}, $r_{\max}< \gamma(\alpha-1)q^{\ell t (\eps t+1)}$.
\item When $h \geq 2\ell$ and $\alpha =2$,
Corollary~\ref{cor:EK19_lb} gives
$$r_{\max}\geq q^{(h-\ell)(2\ell+\eps-h)t^2 +(h-\ell)t}\ ,$$
which coincides with the upper bound (up to a constant factor of $\gamma$) in
\cref{cor:imupperbound-2-Network},
$r_{\max}< \gamma q^{(h-\ell)(2h+\eps-h)t^2 +(h-\ell)t}$.
\item The upper bound in Corollary~\ref{cor:imupperbound-N} cannot be applied here as $(h-\eps)/\ell \leq 2$.
\end{itemize}

Based on the comparisons above, we summarize the best known bounds on $r_{\max}$ for different parameter ranges, in Table~\ref{tab:r_bound_summary}.

\begin{table*}[h!]
  \caption{Upper bounds (UBs) and lower bounds (LBs) on $r_{\max}$ of the $(\eps,\ell)-\cN_{\alpha,r,\alpha\ell+\eps}$ network with $(q,t)$-linear solutions.
  The bounds are valid for $\alpha\geq 2,h,\ell\geq 1,\eps\geq 0$. For non-trivially solvable generalized combination networks, one should consider $\ell+\eps\leq h\leq \alpha\ell+\eps$. The other parameters are $\gamma\approx3.48$, $\beta = \parenv*{(\alpha-1)!/(2e\gamma\alpha)}^{1/(\alpha-1)}$, $f(t)=(\alpha\ell+\eps-h)\eps t^2+(\alpha\ell+2\eps-h)t +{1}$, $\theta = \alpha-\floor{(h-\eps)/\ell}+1$, and $g(t)=\max\{\ell t,(h-\ell)t\}\cdot(\min\{\ell t, (h-\ell)t\}-(h-\ell-\eps)t+1)$.}
  \label{tab:r_bound_summary}
  \begin{center}
  \begin{tabular}[l]{|l|l|l|l|l|}
    \hline
    UB & $h< 2\ell+\eps$& Reference &$h\geq 2\ell+\eps$ & Reference\\
    \hline
    $\alpha> 2$&
                 \begin{tabular}{@{}r@{}}
                   $r_{\max}<\gamma(\alpha-1)$\\
                   $\cdot q^{\ell t(\eps t+1)}$
                 \end{tabular}
       &
         \begin{tabular}{@{}l@{}}
           \cite{EZjul2019} (cf.\\
           \cref{cor:EZ19_vector})
         \end{tabular}
       &
         \begin{tabular}{@{}r@{}}
           $r_{\max}<\gamma\theta q^{\ell t(\eps t+1)}$\\
           $+\alpha-\theta$
         \end{tabular}
       &\cref{cor:imupperbound-N}\\
    \hline
    $\alpha=2$
       &\multicolumn{3}{c|}{$r_{\max}<\gamma q^{\min \{\ell t(\eps t+1),(h-\ell)(2\ell+\eps-h)t^2+(h-\ell)t\}}$}&
         \begin{tabular}{@{}l@{}}
           \cite{EZjul2019} \& \cref{cor:imupperbound-2-Network}\\
           (Comparison in \\
           \cref{lem:compare-alpha2})
         \end{tabular}
    \\
    \hline
    \hline
    LB& $h< 2\ell+\eps$& Reference & $h\geq 2\ell+\eps$ & Reference\\
    \hline
    $\alpha\geq 2$ & $r_{\max}\geq (\alpha-1)q^{g(t)}$& Corollary~\ref{cor:EK19_lb}& $r_{\max}\geq \beta \cdot q^{\frac{f(t)}{\alpha-1}}$ &  Theorem~\ref{thm:LLL_bound} \\
    \hline
  \end{tabular}
  \end{center}
\end{table*}

\section{Summary and Outlooks}
The contributions in this chapter are of two-fold.
We first investigated the minimum required field size to construct an MSRD code (in particular, LRS codes) from a support-constrained generator matrix. For this purpose, we proved that the condition on the support constraints such that a support-constrained MDS/MRD code exists is also a necessary and sufficient condition for a support-constrained MSRD code, via the framework of skew polynomials.
Given support constraints fulfilling this condition, an $[n,k]_{q^m}$ support-constrained LRS code exists for any prime power $q\geq \ell+1$ and integer $m\geq \max_{l\in[\ell]}\{k-1+\log_qk, n_l\}$, where $\ell$ is the number of blocks and $n_l$ is the length of the $l$-th block of the LRS code.
If the desired support constraints do not fulfill the necessary condition, the maximum sum-rank distance of a code fulfilling these constraints is given.
With these results, we proposed a network coding scheme using support-constrained LRS codes for the distributed multi-source networks.
The key of the scheme is to formulate all the technical requirements into an integer linear programming (ILP) problem.
However, the ILP problem has $\Omega(2^h)$ constraints, where $h$ is the number of messages to be cast. For large $h$, solving (even constructing the constraints) the ILP problem is computationally expensive.
In future research, more specific distributed networks should be investigated so that the scheme may become more practical while considering other properties of the networks.

In the second part of this chapter, we quantified the advantage of vector network coding compared to scalar network coding in a family of multicast networks -- generalized combination networks.
By studying necessary and sufficient conditions for the existence of $(q,t)$-linear solutions to the generalized combination network
$(\eps,\ell)$-$\mathcal{N}_{h,r,\alpha\ell+\eps}$.
We derived upper and lower bounds on $r_{\max}$, the maximum number of nodes in the middle layer.
The lower bounds coincide (up to a constant factor of $\gamma\approx 3.48$) with the upper bounds for $h\leq 2\ell$ or $h\geq 2\ell,\alpha=2$.
With these results, we obtained upper and lower bounds on $\gap(\cN)$,
which is the number of extra bits that a scalar solution has to pay compared to a vector solution of a generalized combination network $\cN$.
The asymptotic behavior of the upper and lower bound shows that $\gap(\cN) = \Theta\parenv*{\log(r)}$.
Namely, for large generalized combination networks, using a scalar linear solution over-pays an order of $\log(r)$ extra bits per symbol, than using a vector linear solution.
A notable observation is, the novel upper and lower bounds on $\gap(\cN)$ holds for all parameters range of the generalized combination network, except $\eps=0$.
This may imply that the direct links between the source and the receivers are crucial for vector network coding to have an advantage in generalized combination networks.
For future research, the role of the direct links for a nonzero $\gap(\cN)$ can be further investigated.


\chapter{Locally Recoverable Evaluation Codes based on Multivariate Polynomials}
\label{chap:eva_multivar}
\noindent
Reed-Muller (RM) codes are a class of well-studied evaluation codes of low-degree multivariate polynomials.
The restrictions to the evaluation points that fall on one line in the evaluation space can be readily seen to be equivalent to the evaluation of a low-degree univariate polynomial.
This property gives RM codes the desired properties of being \emph{locally testable}\footnote{Locally testable codes are codes where the membership in the code is verifiable with a constant number of queries (independent from the code length).} \cite{rubinfeld1996robust} and \emph{locally decodable}\footnote{
Locally decodable codes are codes that allow a single bit of the original message to be decoded with high probability by only examining (or querying) a small number of bits of a possible corrupted codeword.
} \cite{lipton1990efficient,babai1991checking}, which have been a subject of extensive studies, e.g.,~\cite{arora1997improved,alon2005testing,bhattacharyya2010optimal,ron2012new,minzer2023optimal}, in the last years.
However, the obvious drawback of RM codes with the nice local properties is their rather low rate. Concretely, for an RM code based on $m$-variate polynomials, the rate is $\leq\frac{1}{m!}$.
In recent years, several new families of codes were proposed to overcome the rate bottleneck of RM codes while preserving the local properties, such as,
multiplicity codes \cite{kopparty2014high,kopparty2015list}, lifted codes \cite{guo2013new,kopparty2014high}, expander codes \cite{hemenway2015local,hemenway2018linear} and tensor product codes \cite{viderman2010note,kopparty2020list}. Among these new code constructions, multiplicity codes and lifted codes are also evaluation codes based on multivariate polynomials.
This chapter concerns constructions of codes with local properties based on evaluation codes.
We give a brief introduction of lifted codes based on multivariate polynomials in \cref{sec:lift-multivariate}.
In \cref{sec:QCLRS}, we introduce a new class bivariate evaluation codes, the \emph{quadratic-lifted Reed-Solomon} (QLRS) codes and study their dimension, distance and local recovery property.
Motivated by a class codes with local properties -- \emph{batch codes}, we introduce a family of subspaces that can be used to construct such codes and propose a construction of such subspaces based on
the best-known evaluation codes -- Reed-Solomon (RS) codes, in \cref{sec:AAD-subspaces}.
Aside from the explicit construction, we give an upper and a lower bound on the growth of the cardinally of such family of subspaces.

\emph{The results in \cref{sec:QCLRS} have been published in the proceedings of 12th International Workshop on Coding and Cryptography (WCC)~\cite{liu2022quadratic} and the results in \cref{sec:AAD-subspaces} have been published in Finite Fields and Their Applications~\cite{liu2021almost}.}

\section{Lifted Codes on Multivariate Polynomials}
\label{sec:lift-multivariate}
Lifted codes were introduced by Guo, Kopparty and Sudan~\cite{guo2013new} as evaluation codes obtained from multivariate polynomials over finite fields. It is required that the restrictions of every codeword on subsets of coordinates are codewords of a \emph{base} code, e.g., an RS code.
By construction, lifted codes are equipped with local properties.
A setting of particular interest, the \emph{lifted Reed-Solomon} codes \cite{guo2013new}, is the case where the restrictions of every codeword on all the lines in the evaluation space form codewords of an RS code.
A surprising advantage of lifted Reed-Solomon codes is that they achieve much larger asymptotic rate compared to RM codes as the field size grows.
The work \cite{guo2013new} derived good bounds on the rate of lifted Reed-Solomon for the bi-variate case.
Tight asymptotic bounds for general cases were derived in \cite{polyanskii2019trivariate,holzbaur2020lifted}.


\emph{Degree-lifted} codes are a class of evaluation codes introduced in \cite{eli2013degree-lifted}. The codes are composed of low-weighted-degree polynomials and codewords are evaluations of such polynomials at the rational points of an algebraic curve.
A class of lifted codes based on code automorphisms was introduced in \cite{guo2016high-rate}.
The works \cite{Li2019lifted-multiplicity,wu2015revisiting,holzbaur2021lifted} studied lifted multiplicity codes.
\emph{Hermitian-lifted codes}, proposed in \cite{lopez2021hermitian}, are constructed from evaluating bivariate polynomials at points on Hermitian curves. The restriction of the codewords on any line for a codeword of an RS code.
\emph{Wedge-lifted codes}~\cite{hastings2021wedge} utilize the trace operation to obtain binary codes with good locality properties.
\emph{Weighted $\eta$-lifted codes} introduced in
\cite{lavauzelle2020weighted} are a class of multivariate evaluation codes where the restriction on the points on a higher degree curve (instead of a line) forms a codeword of an RS code. This gives a more general definition of QLRS codes investigated in \cref{sec:QCLRS}.

\section{Quadratic-Lifted Reed-Solomon Codes}
\label{sec:QCLRS}
We first give some notations and preliminaries needed to define the code and to present the results.
Our goal is to study the properties of QLRS codes, specifically, dimension, minimum Hamming distance and local recovery capability, which are respectively investigated in \cref{QCLRS:dimension}, \cref{QCLRS:distance} and \cref{QCLRS:local-recovery}.

For non-negative integers $a,b\in\bbN$ with \emph{binary representations}
$a=(a_1,\dots, a_{\ell})_2$, $b=(b_1,\dots, b_{\ell})_2$,
we say that \emph{$a$ lies in the 2-shadow of $b$}, denoted by $a\leq_2 b$, if $a_i\leq b_i,\ \forall i \in
[\ell]$.
The bit
$a_{\ell}$
is the most significant bit in the binary representation of $a$, i.e., $a=a_1+a_2\cdot 2+a_3\cdot 2^2+\cdots +a_{\ell}\cdot 2^{\ell-1}$.

  We call a \emph{quadratic curve} on $\Fq^2$ the set of zeros of a bivariate polynomial $p(x,y)=y-\phi(x)$, where $\phi\in\Fq[x]$ with $\deg \phi\leq 2$ is a \emph{quadratic function}.
  We denote by $\Phi$ the set of all quadratic functions over $\Fq$,
  \begin{align}\label{eq:Phi}
  \Phi\defeq \set*{\left.\phi(x)=\alpha x^2+\beta x+\gamma\ \right|\ \forall \alpha,\beta,\gamma \in \Fq}\ .
  \end{align}
For a bivariate polynomial $f\in\Fq[x,y]$ and a quadratic function $\phi\in\Fq[x]$, we define the \emph{restriction} of $f$ on $\phi$ as
\begin{align*}
  f|_{\phi}\defeq f(x, \phi(x))\in\Fq[x]\ .
\end{align*}
Note that the definition of the \emph{restriction} here can be also written as the restriction of the vector $\parenv*{f(a)}_{a\in\Fq}\in\Fq^q$ on the quadratic curve corresponding to $\phi$, so that this is the same underlying technique as used for the other lifted codes mentioned in \cref{sec:lift-multivariate}.

Define an operation $\modstarq$ that takes a non-negative integer and maps it to an element in $[0, q-1]$ as follows:
\begin{align*}
  a\modstarq :=
  \begin{cases}
    a, & \text{if } a\leq q-1\\
    q-1, & \text{if } a\ (\modns\ q-1) =0,\ a\neq 0\\
    a \ (\modns\ q-1), & \text{else}
  \end{cases}
\end{align*}
It can be readily seen that if $a\modstarq=b$, then $x^a=x^b\ (\modns\ x^q-x)$ in $\Fq[x]$.

\begin{lemma}[Lucas' Theorem~\cite{LucasTheorem}]
  \label{lem:lucasThm}
  Let $p$ be a prime and $a,b\in\bbN$ be written in $p$-ary representations
  $a=(a_1,\dots, a_{\ell})_p$, $b=(b_1,\dots, b_{\ell})_p$.
  Then
  \begin{align*}
    \binom{a}{b}=\prod_{i=1}^\ell\binom{a_i}{b_i}\Mod p\ .
  \end{align*}
  If $p=2$, then $\binom{a}{b}=1$ if and only if $b\leq_2 a$.
\end{lemma}


QLRS codes generalize lifted Reed-Solomon codes by considering the restriction of bivariate polynomials to quadratic functions rather than only linear functions.
This definition is coincidentally identical to the weighted $\eta$-lifted Reed-Solomon codes \cite[Def.~IV.1]{lavauzelle2020weighted} with $\eta=2$.
The formal definition is the following.
\begin{definition}[Quadratic-lifted Reed-Solomon (QLRS) codes]
  \label{def:curve-liftedRScodes}
   Let $q$ be a power of $2$, $r\in[q-1]$ and $\Phi$ be a set of quadratic functions as given in \cref{eq:Phi}. A quadratic-lifted Reed-Solomon (QLRS) code is defined as
  \begin{align*}
    \cC_{q}(\Phi,q-r):=\{f\in\Fq[x,y]\ |\ \deg (f|_{\phi})<q-r,\ \forall \phi\in\Phi \}\ .
  \end{align*}
\end{definition}
The integer $r$ can be seen as the \emph{local redundancy} since it is the redundancy of the Reed-Solomon code $\RS_q(q-r)$ such that $f|_\phi$ is a codeword polynomial in $\RS_q(q-r)$.
\subsection{Dimension of Quadratic-Lifted Reed-Solomon Codes}
\label{QCLRS:dimension}

The dimension of the lifted Reed-Solomon codes is analyzed via the number of \emph{good monomials}, which are the multivariate monomials whose restriction to a line results in a codeword of an RS code.
The linear span of these good monomials is shown to generate the lifted Reed-Solomon code \cite{guo2013new,holzbaur2020lifted}.
Similarly, we investigate the dimension of QLRS codes using the \emph{good monomials} as the tool.
We first derive a necessary and sufficient condition (\cref{lem:goodCond}) on the good monomials for any QLRS code via similar approaches as in~\cite{hastings2021wedge}. Then we show in \cref{thm:dim_good} that these good monomials form a basis of the QLRS code as in \cite{Li2019lifted-multiplicity}.
\begin{definition}[{$(\Phi,q-r)^*$}-good monomial]
  \label{def:goodMonomials}
  Given a set $\Phi$ of quadratic functions, a monomial $m(x,y)=x^ay^b$ is \emph{$(\Phi,q-r)^*$-good} if $\deg (m|_{\phi})<q-r, \forall \phi \in\Phi$. The monomial is \emph{$(\Phi,q-r)^*$-bad} otherwise.
\end{definition}

\begin{lemma}
  \label{lem:goodCond}
  A monomial $m(x,y)=x^ay^b$ is $(\Phi,q-r)^*$-good if and only if
  \begin{align}\label{eq:good_mono_suf_nec_cond}
    2i+j+a \ (\modstar\ q) < q-r,\ \forall i\leq_2 b,\ j\leq_2 b-i\ .
  \end{align}
\end{lemma}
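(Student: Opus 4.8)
The plan is to compute the restriction $m|_{\phi}$ explicitly for a general quadratic function $\phi(x) = \alpha x^2 + \beta x + \gamma$ and read off exactly which monomials appear with nonzero coefficient (as a reduced polynomial modulo $x^q - x$). First I would write $m|_{\phi} = x^a \cdot (\phi(x))^b = x^a(\alpha x^2 + \beta x + \gamma)^b$. Expanding the power by the multinomial theorem over $\Fq$ (recall $q$ is a power of $2$), the term with exponents $(i_2, i_1, i_0)$ on $(\alpha x^2, \beta x, \gamma)$ with $i_2 + i_1 + i_0 = b$ contributes a monomial $x^{a + 2i_2 + i_1}$ with coefficient a multinomial coefficient $\binom{b}{i_2, i_1, i_0}$ times $\alpha^{i_2}\beta^{i_1}\gamma^{i_0}$. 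Setting $i = i_2$ and $j = i_1$, the constraint becomes $i + j \le b$, i.e. $j \le b - i$, and the monomial exponent of $x$ is $a + 2i + j$, which must be reduced via $\Mod^* q$ since we work modulo $x^q - x$.

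The key step is to argue that the multinomial coefficient $\binom{b}{i,j,b-i-j}$ is nonzero in $\Fq$ (equivalently, odd) \emph{precisely} when $i \le_2 b$ and $j \le_2 b - i$. This I would get by factoring $\binom{b}{i,j,b-i-j} = \binom{b}{i}\binom{b-i}{j}$ and applying Lucas' theorem (\cref{lem:lucasThm}) with $p = 2$: $\binom{b}{i}$ is odd iff $i \le_2 b$, and then $\binom{b-i}{j}$ is odd iff $j \le_2 b - i$. So the set of exponents $e$ such that $x^e$ appears with odd coefficient in the multinomial expansion of $(\alpha x^2+\beta x+\gamma)^b$ — before reduction and before multiplying by $x^a$ — is exactly $\{2i + j : i \le_2 b,\ j \le_2 b-i\}$. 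Multiplying by $x^a$ shifts this to $\{a + 2i + j\}$, and reducing modulo $x^q - x$ replaces each exponent $e$ by $e \Mod^* q$ (as noted in the excerpt, $x^e = x^{e \Mod^* q} \pmod{x^q - x}$).

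The one subtlety I anticipate is a potential \emph{cancellation}: two different triples $(i,j)$ could produce the same reduced exponent $(a + 2i + j) \Mod^* q$, and their coefficients — which also involve the factors $\alpha^i \beta^j \gamma^{b-i-j}$ — might sum to zero for some choices of $\alpha, \beta, \gamma$. To handle this I would observe that the coefficient of $x^{a+2i+j}$ in $m|_\phi$, viewed as a polynomial in the formal variables $\alpha, \beta, \gamma$, is a sum of distinct monomials $\alpha^i\beta^j\gamma^{b-i-j}$ (distinct because $(i,j,b-i-j)$ determines and is determined by the triple), each with odd coefficient; hence it is not the zero polynomial in $\Fq[\alpha,\beta,\gamma]$. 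Therefore $\deg(m|_\phi)$ can exceed $q - r - 1$ for \emph{some} $\phi \in \Phi$ iff there exists a triple $(i,j)$ with $i \le_2 b$, $j \le_2 b-i$ and $(a + 2i + j)\Mod^* q \ge q - r$ such that the corresponding exponent is not killed by cancellation in the reduction — and the argument above shows that for any such large exponent, one can pick $\alpha,\beta,\gamma$ making its coefficient nonzero. This gives the contrapositive: $m$ is $(\Phi, q-r)^*$-good iff for \emph{every} triple the reduced exponent stays below $q - r$, which is exactly \eqref{eq:good_mono_suf_nec_cond}. The direction "\eqref{eq:good_mono_suf_nec_cond} $\Rightarrow$ good" is then immediate since every monomial in $m|_\phi$ (after reduction, for every $\phi$) has $x$-degree of the form $(a + 2i + j)\Mod^* q < q - r$. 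I would also double-check the degenerate cases $b = 0$ (then only $i = j = 0$, condition reduces to $a \Mod^* q < q - r$) and $\deg\phi < 2$ to confirm the formula still reads correctly, though these are subsumed by allowing $\alpha$ or $\beta$ to be zero.
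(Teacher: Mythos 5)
Your computation of $m|_\phi$, the Lucas-theorem reduction of the multinomial coefficients (using $\binom{b}{i,j,b-i-j}=\binom{b}{i}\binom{b-i}{j}$), and the reformulation of $(\Phi,q-r)^*$-goodness in terms of the reduced exponents $(a+2i+j)\Mod^* q$ all match the paper's proof; the sufficiency direction is handled identically.

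The gap is in the necessity direction. You correctly observe that the coefficient of each offending $x^s$ in $m|_\phi^*$, viewed as a trivariate polynomial $p(\alpha,\beta,\gamma)\in\Fq[\alpha,\beta,\gamma]$, is nonzero as a polynomial (the contributing monomials $\alpha^i\beta^j\gamma^{b-i-j}$ are pairwise distinct), but you then assert without justification that ``one can pick $\alpha,\beta,\gamma$ making its coefficient nonzero.'' Over the finite field $\Fq$ this inference is false in general: a nonzero polynomial such as $\alpha^q-\alpha$ vanishes identically on all of $\Fq^3$. The paper closes exactly this gap by noting that the individual exponents $i$, $j$, $b-i-j$ appearing in $p$ are each at most $b\le q-1$, i.e.\ strictly less than $q=\lvert\Fq\rvert$, and then invoking the Combinatorial Nullstellensatz (\cref{lem:nullstellensatz}) to produce a point $(\alpha_0,\beta_0,\gamma_0)\in\Fq^3$ with $p(\alpha_0,\beta_0,\gamma_0)\neq 0$. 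This degree bound and the appeal to the Nullstellensatz are essential; as written your argument skips the step that actually makes the contradiction go through.
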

\begin{proof}
  Let $\phi(x)=\alpha x^2+\beta x+\gamma\in\Phi$. The restriction of the monomial $m(x,y)=x^ay^b$ on $\phi$ is
  \begingroup
  \allowdisplaybreaks
  \begin{align*}
    m|_{\phi}(x)&=x^a(\alpha x^2+\beta x+\gamma)^b\\
             &=x^a\sum_{i=0}^b\binom{b}{i}\alpha^ix^{2i}\cdot (\beta x+\gamma)^{b-i}\\
             &=\sum_{i=0}^b\binom{b}{i}\alpha^ix^{2i+a}\cdot \sum_{j=0}^{b-i}\binom{b-i}{j} \beta^jx^j\cdot \gamma^{b-i-j}\\
                &\overset{(*)}{=} \sum_{i\leq_2 b}\alpha^ix^{2i+a}\cdot \sum_{j\leq_2 b-i} \beta^jx^j\cdot \gamma^{b-i-j}\\
    &=\sum_{i\leq_2 b}\ \sum_{j\leq_2 b-i}\alpha^i\cdot \beta^j\cdot \gamma^{b-i-j}\cdot x^{2i+j+a}\ ,
  \end{align*}
  \endgroup
  where the equality $(*)$ follows from the Lucas' Theorem (\cref{lem:lucasThm}). If the condition~\eqref{eq:good_mono_suf_nec_cond} in the statement holds, then $\deg m_\phi(x)<q-r$. Therefore the sufficiency is proven.

  Let $m|_\phi^*(x)\defeq m|_{\phi}(x)\ \modns\ x^q-x$. The coefficient of $x^s$ in $m|_\phi^*(x)$ is
  \begin{align*}
    [x^s]m|_\phi^* = \sum_{\substack{i\leq_2 b,\ j\leq_2 b-i\\2i+j+a\modstarq =s}}\alpha^i\cdot \beta^j\cdot \gamma^{b-i-j}\ ,
  \end{align*}
  which can be seen as a multivariate polynomial in $\Fq[\alpha,\beta,\gamma]$.
  Assume the condition~\eqref{eq:good_mono_suf_nec_cond} does not hold but $m(x,y)$ is $(\Phi,q-r)^*$-good, i.e., for any $s\geq q-r$, $[x^s]m|_\phi^*$ is not a zero polynomial but the evaluations at all $(\alpha,\beta,\gamma)\in\Fq^3$ equal to $0$.
  However, by~\cref{lem:nullstellensatz}, since the exponents $i,j,b-i-j<q$, there exists some $(\alpha_0,\beta_0,\gamma_0)\in\Fq^3$ such that the evaluation of $[x^s]m|_\phi^*$ at $(\alpha_0,\beta_0,\gamma_0)$ is nonzero.
  By contradiction we have proven that the condition~\eqref{eq:good_mono_suf_nec_cond} is also a necessary condition.
\end{proof}

The first important result is that the dimension of the code $\cC_q(\Phi,q-r)$ is exactly the number of $(\Phi,q-r)^*$-good monomials, which is presented in~\cref{thm:dim_good}.
In order to prove this, we first discuss in the following lemma a special case that will be excluded in the proof of~\cref{thm:dim_good}.
\begin{lemma}\label{lem:difference_q-1}
  Consider two monomials $m_1(x,y)=x^{q-1}y^b$ and $m_2(x,y)=y^b$ with $b\in[0, q-1]$ and a polynomial $P(x,y)$ containing $m_1$ and $m_2$, i.e.,
  \begin{align*}
    P(x,y) &= (\xi_1 x^{q-1}y^{b}+ \xi_2 y^b) + P'(x,y)
  \end{align*}
  where $\xi_1,\xi_2\neq 0$ and $P'(x,y)$ does not contain $m_1$ or $m_2$. Then, $P$ is $(\Phi,q-r)^*$-bad for any $r\in[q-1]$. 
\end{lemma}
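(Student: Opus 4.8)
The plan is to exhibit a single quadratic function $\phi_0\in\Phi$ for which $\deg(P|_{\phi_0})\ge q-1$, where — as in the proof of \cref{lem:goodCond} — the degree of a restriction is taken after reduction modulo $x^q-x$. Since $q-1\ge q-r$ for every $r\in[q-1]$, such a $\phi_0$ immediately yields $P\notin\cC_q(\Phi,q-r)$, i.e.\ $P$ is $(\Phi,q-r)^*$-bad, for all these $r$. First I would observe that we may assume $P$ is \emph{reduced}, meaning every monomial of $P$ has $x$-degree and $y$-degree at most $q-1$: replacing $P$ by its remainder modulo $\myspan{x^q-x,\,y^q-y}$ changes neither the value of any restriction $P|_\phi$ modulo $x^q-x$ nor the coefficient $\xi_1$ of $m_1$, because $q$ is a power of $2$, so Frobenius fixes $\Fq$ and $x^q\equiv x$, $x^{2q}\equiv x^2\pmod{x^q-x}$, whence $\phi(x)^q\equiv\phi(x)\pmod{x^q-x}$ for every $\phi\in\Phi$. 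Note that $m_1$ and $m_2$ are already reduced, since $b\in[0,q-1]$.

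Next I would restrict $P$ to the \emph{constant} quadratic functions $\phi(x)=\gamma$, $\gamma\in\Fq$. Then $P|_{\phi}=P(x,\gamma)\in\Fq[x]$ is already reduced modulo $x^q-x$ (because $P$ is reduced in $x$), and its $x^{q-1}$-coefficient equals
\[
  Q(\gamma)\defeq\sum_{b'=0}^{q-1} c_{b'}\,\gamma^{b'},
\]
where $c_{b'}$ is the coefficient of $x^{q-1}y^{b'}$ in $P$; indeed, after the substitution $y=\gamma$ only monomials of $x$-degree exactly $q-1$ can contribute to the $x^{q-1}$-term. Because $P'$ contains neither $m_1=x^{q-1}y^b$ nor $m_2=y^b$, the unique monomial of $P$ of the form $x^{q-1}y^b$ is $m_1$ itself, so $c_b=\xi_1\neq 0$. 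Hence $Q$ is a \emph{nonzero} polynomial in $\gamma$ of degree at most $q-1$. (The hypothesis $\xi_2\neq 0$ on $m_2$ enters only through a parallel statement forcing the constant coefficient of $P(x,\gamma)$ to be a nonzero polynomial in $\gamma$; it is not needed for the degree bound used here, and merely records the complementary ``low-degree'' obstruction.)

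Finally, a nonzero univariate polynomial of degree at most $q-1$ over the $q$-element field $\Fq$ cannot vanish on all of $\Fq$ — this is the one-variable instance of the Combinatorial Nullstellensatz (\cref{lem:nullstellensatz}), equivalently the bound on the number of roots. So there is $\gamma_0\in\Fq$ with $Q(\gamma_0)\neq 0$. Taking $\phi_0(x)=\gamma_0\in\Phi$, the reduced polynomial $P|_{\phi_0}=P(x,\gamma_0)$ has a nonzero $x^{q-1}$-term, hence $\deg(P|_{\phi_0})=q-1\ge q-r$ for all $r\in[q-1]$, which finishes the argument.

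The only point that requires care is excluding cancellations in the $x^{q-1}$-coefficient coming from the (possibly many) other monomials of $P$; passing to the reduced form of $P$ confines all such contributions to monomials $x^{q-1}y^{b'}$, and the hypothesis that $P'$ omits $m_1$ then guarantees that the critical coefficient $c_b=\xi_1$ survives. Beyond this bookkeeping I do not expect any substantial obstacle.
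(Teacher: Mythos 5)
Your argument is the same as the paper's: restrict $P$ to the constant quadratics $\phi(x)=\gamma$, note that the $x^{q-1}$-coefficient of $P(x,\gamma)$ is a nonzero polynomial of degree at most $q-1$ in $\gamma$ (nonzero because $P'$ omits $m_1$, so the $\gamma^b$-coefficient is $\xi_1\neq 0$), and invoke the Combinatorial Nullstellensatz; your correct observation that $\xi_2\neq 0$ is never actually used also matches the paper's reasoning. One small caution about your preliminary step: reducing $P$ modulo $\myspan{x^q-x,\,y^q-y}$ is not guaranteed to preserve the coefficient $\xi_1$ of $m_1$ --- $x$-reduction can fold a term such as $x^{2q-2}y^b$ from $P'$ onto $m_1$, and your justification via $\phi(x)^q\equiv\phi(x)$ only covers $y$-reduction --- but this is immaterial, since the lemma is implicitly stated for, and only applied to, polynomials with all exponents already in $[0,q-1]$ (as the paper's proof also tacitly assumes when it asserts $\deg P|_\phi\leq q-1$).
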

\begin{proof}
  Consider the restriction of $P$ on the function $\phi(x)=\gamma$ for some $\gamma\in \Fq$ and $\alpha=\beta=0$,
\begin{align*}
    P|_\phi(x) &=P(x,y=\gamma)= \gamma^b (\xi_1 x^{q-1}+ \xi_2) + P'(x,y=\gamma) \ .
\end{align*}
First, observe that for this choice of $\alpha,\beta$, $P|_\phi(x)$ is of degree at most $q-1$ and we are only interested in the coefficient of $x^{q-1}$. Further, the only monomials of $P'(x,y=\gamma)$ that contribute to this coefficient are of the form $\xi' x^{q-1} \gamma^{b'}$ with $\xi'\neq 0$.
Since $P'(x,y)$ does not contain the monomials $m_1, m_2$ by definition, we conclude that $b' \neq b$.
Now consider the coefficient of $x^{q-1}$ in $P|_\phi(x)$:
\begin{align*}
    [x^{q-1}]P|_\phi = \underbrace{\gamma^b \xi_1}_{\text{from }m_1+m_2} + \underbrace{\gamma^{b'} \xi' + \ldots}_{\text{from } P'(x,y)} \ .
\end{align*}
We view this as a polynomial in $\Fq[\gamma]$.
Since $b\neq b'$ and $\xi_1,\xi'\neq 0$, this is not a zero polynomial. Also, as $b,b'\in[0,q-1]$ this is a polynomial in $\Fq[\gamma]$ of degree $\leq q-1$.
By~\cref{lem:nullstellensatz}, there exists $\gamma\in\Fq$ such that $[x^{q-1}]P|_{\phi}\neq 0$, which means $P|_{\phi}(x)$ is of degree $q-1$ for some $\gamma$.
Therefore, $P$ is $(\Phi,q-r)^*$-bad according to~\cref{def:goodMonomials} for any $q-r\leq q-1$ (equivalently, for any $r\in[q-1]$).
\end{proof}
\begin{theorem}
  \label{thm:dim_good}
  Let $q$ be a power of $2$, $r\in[q-1]$ and $\Phi$ be the set of all quadratic functions.
  The dimension of the QLRS code $\cC_{q}(\Phi,q-r)$ is the number of $(\Phi,q-r)^*$-good monomials.
\end{theorem}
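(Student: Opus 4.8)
The plan is to show that $\cC_q(\Phi,q-r)$ coincides with the $\Fq$-span of its $(\Phi,q-r)^*$-good monomials, and then to count. Represent each codeword by its reduced form $\sum_{0\le a,b\le q-1}c_{a,b}\,x^ay^b$, and read the condition $\deg(f|_\phi)<q-r$ after reduction modulo $x^q-x$, consistently with \cref{lem:goodCond}. Since the $q^2$ monomials $x^ay^b$ with $0\le a,b\le q-1$ are $\Fq$-linearly independent as functions on $\Fq^2$, it suffices to establish: (a) every $(\Phi,q-r)^*$-good monomial lies in $\cC_q(\Phi,q-r)$; and (b) no reduced codeword of $\cC_q(\Phi,q-r)$ involves a $(\Phi,q-r)^*$-bad monomial. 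Together with linear independence, (a) and (b) make the good monomials a basis, so $\dim\cC_q(\Phi,q-r)$ equals their number.

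Part (a) is immediate from \cref{def:goodMonomials}: a good monomial $m$ satisfies $\deg(m|_\phi)<q-r$ for all $\phi\in\Phi$, hence $m\in\cC_q(\Phi,q-r)$, and since the (reduced) degree constraint is $\Fq$-linear the whole span of the good monomials lies in the code. Here \cref{lem:goodCond} pins down which monomials are good, but it is not needed for this direction.

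Part (b) is the heart of the argument. Assume for contradiction that $f=\sum c_{a,b}x^ay^b\in\cC_q(\Phi,q-r)$ has $c_{a^\star,b^\star}\neq 0$ for some $(\Phi,q-r)^*$-bad monomial $x^{a^\star}y^{b^\star}$, and among all bad monomials occurring in $f$ take the one maximal for a monomial order on the exponent pairs $(a,b)$ adapted to the restriction map (in the spirit of \cite{Li2019lifted-multiplicity,holzbaur2020lifted}). Restricting $f$ to $\phi(x)=\alpha x^2+\beta x+\gamma$ and reducing modulo $x^q-x$, the coefficient of $x^s$ is a polynomial $[x^s]f|_\phi^*\in\Fq[\alpha,\beta,\gamma]$ obtained, exactly as in the proof of \cref{lem:goodCond}, by summing $c_{a,b}\,\alpha^i\beta^j\gamma^{b-i-j}$ over all $(a,b)$ and all $i\leq_2 b$, $j\leq_2 b-i$ with $(2i+j+a)\modstarq=s$. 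The aim is to select an exponent $s\ge q-r$ at which the distinguished pair $(i,j)$ coming from $x^{a^\star}y^{b^\star}$ produces a monomial in $\alpha,\beta,\gamma$ that no other monomial of $f$ can cancel: good monomials contribute only to positions $s<q-r$, and monomials below $x^{a^\star}y^{b^\star}$ in the chosen order cannot reach position $s$ with the same $(\alpha,\beta,\gamma)$-monomial. Then $[x^s]f|_\phi^*$ is a nonzero polynomial of degree $<q$ in each variable, so the Combinatorial Nullstellensatz (\cref{lem:nullstellensatz}) yields $(\alpha_0,\beta_0,\gamma_0)\in\Fq^3$, hence $\phi_0\in\Phi$, with $\deg(f|_{\phi_0})\ge s\ge q-r$ --- contradicting $f\in\cC_q(\Phi,q-r)$. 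The single configuration in which this collision-avoidance breaks down is when the maximal bad monomial has the form $x^{q-1}y^b$; this degenerate case is precisely what \cref{lem:difference_q-1} settles directly, by using a constant curve $\phi=\gamma$ on which the coefficient of $x^{q-1}$ is a nonzero polynomial in $\gamma$ of degree $\le q-1$.

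I expect the main obstacle to be the bookkeeping inside Part (b): choosing the monomial order together with the distinguished exponent $s$ so that the contribution of the maximal bad monomial genuinely survives, and isolating the $x^{q-1}y^b$ family cleanly so that \cref{lem:difference_q-1} covers precisely the cases the Nullstellensatz argument cannot reach. The remaining ingredients --- linear independence of the good monomials, Part (a), and combining the two inclusions into the dimension count --- are routine.
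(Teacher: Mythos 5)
Your plan follows the same strategy as the paper: show $\cC_q(\Phi,q-r)$ equals the $\Fq$-span of the $(\Phi,q-r)^*$-good monomials, with the easy inclusion coming from \cref{def:goodMonomials} and the hard inclusion coming from a nonvanishing argument for $[x^s]f|_\phi^*\in\Fq[\alpha,\beta,\gamma]$ followed by \cref{lem:nullstellensatz}, with \cref{lem:difference_q-1} catching the degenerate $x^{q-1}y^b$ vs.~$y^b$ pair. Up to that point you are aligned with the paper.

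The gap is in the mechanism you propose for nonvanishing. You want to fix a monomial order, take a $\prec$-maximal bad monomial $x^{a^\star}y^{b^\star}$, and argue that ``monomials below $x^{a^\star}y^{b^\star}$ cannot reach position $s$ with the same $(\alpha,\beta,\gamma)$-monomial.'' This claim does not follow from maximality under any monomial order, and in fact the order is a red herring. The map sending a bad monomial $x^{a}y^{b}$ and a pair $(i,j)$ with $i\le_2 b$, $j\le_2 b-i$ to the monomial $\alpha^i\beta^j\gamma^{b-i-j}$ is not monotone in any useful sense relative to $(a,b)$, so ``smaller'' bad monomials are not prevented from producing the same $(\alpha,\beta,\gamma)$-monomial at position $s$. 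What actually makes the argument work --- and what the paper's proof establishes directly --- is a purely combinatorial no-cancellation statement: if two triples $(a_c,b_c,i_c,j_c)$ and $(a_d,b_d,i_d,j_d)$ contribute the same $(\alpha,\beta,\gamma)$-monomial to $[x^s]f|_\phi^*$, then $i_c=i_d$, $j_c=j_d$, $b_c=b_d$, and $|a_c-a_d|\in\{0,q-1\}$. The case $|a_c-a_d|=q-1$ is exactly the configuration \cref{lem:difference_q-1} rules out at the level of $f$ (if both monomials appear with nonzero coefficient, $f$ is already $(\Phi,q-r)^*$-bad, contradicting the assumption). Otherwise there is \emph{no cancellation at all} in $[x^s]f|_\phi^*$: every $(\alpha,\beta,\gamma)$-monomial receives a contribution from at most one $(c,i,j)$. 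Consequently you may pick \emph{any} bad monomial $x^{a_c}y^{b_c}$ with nonzero coefficient $\xi_c$, choose $s\ge q-r$ and $(i_0,j_0)$ certifying its badness, and conclude that the coefficient of $\alpha^{i_0}\beta^{j_0}\gamma^{b_c-i_0-j_0}$ in $[x^s]f|_\phi^*$ is exactly $\xi_c\ne 0$. Maximality buys you nothing. You should replace the monomial-order framing by proving the four-equation system $i_c=i_d$, $j_c=j_d$, $b_c-i_c-j_c=b_d-i_d-j_d$, $2i_c+j_c+a_c\equiv 2i_d+j_d+a_d\pmod*{q}$ implies $b_c=b_d$ and $|a_c-a_d|\in\{0,q-1\}$, and then the remaining steps you describe (degree bound $<q$ in each of $\alpha,\beta,\gamma$, Nullstellensatz, contradiction) go through exactly as you wrote them.
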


\begin{proof}
Assume a polynomial $P\in\Fq[x,y]$ containing $(\Phi,q-r)^*$-bad monomials is $(\Phi,q-r)^*$-good.
Let $\cG$ and $\cB$ be subsets of indices of all $(\Phi,q-r)^*$-good and -bad monomials, respectively (assuming the monomials are ordered according to some order).
We can write $P$ as
\begin{align*}
    P=\sum_{c\in\cG} \xi_c x^{a_c}y^{b_c}+\sum_{c\in\cB} \xi_c x^{a_c}y^{b_c}\ ,
\end{align*}
with $\xi_c\in \F_q^*$. Restricting $P$ on the quadratic function $\phi(x)=\alpha x^2+\beta x +\gamma$ gives the following univariate polynomial
\begin{align*}
    P|_{\phi}&=\sum_{c\in\cG\cup\cB} \xi_c x^{a_c}(\alpha x^2+\beta x+\gamma)^{b_c}\\
    &=\sum_{c\in\cG\cup\cB} \xi_c \sum_{i=0}^{b_c}\sum_{j=0}^{b_c-i}\binom{b_c}{i}\binom{b_c-i}{j}\alpha^i\cdot \beta^j\cdot \gamma^{b_c-i-j}\cdot x^{2i+j+a_c}\ .
\end{align*}

Let $P|_{\phi}^*\defeq P|_{\phi} \Mod (x^{q}-x)$. Denote by $[x^s]P|_{\phi}^*$ the coefficient of $x^s$ in $P|_{\phi}^*$. By \cref{lem:lucasThm}, we have
\begin{align}
    [x^{s}]P|_{\phi}^*&=\sum_{c\in\cG\cup\cB}\sum_{\substack{i\leq_2 b_c,\ j\leq_2 b_c-i\\2i+j+a_c\modstarq = s}}\xi_c\cdot  \alpha^i\cdot \beta^j\cdot \gamma^{b_c-i-j}\nonumber\ .
\end{align}
The $(\Phi,q-r)^*$-good monomials do not contribute to the coefficients for $s\geq q-r$ (see~\cref{def:goodMonomials}), therefore,
\begin{align}
    [x^{s}]P|_{\phi}^*&=\sum_{c\in\cB} \sum_{\substack{i\leq_2 b_c,j\leq_2 b_c-i\\2i+j+a_c\modstarq = s}}\xi_c\cdot \alpha^i\cdot \beta^j\cdot \gamma^{b_c-i-j}\quad \text{for }s\geq q-r\ .\label{eq:poly_alpha_beta_gamma}
\end{align}

We view $[x^{s}]P|_{\phi}^*$ as a trivariate polynomial in $\F_q[\alpha,\beta,\gamma]$.
Note that $P$ is $(\Phi,q-r)^*$-good only if
\begin{align}
\label{eq:P-good-cond}
    [x^{s}]P|_{\phi}^*\ (\alpha,\beta,\gamma)&=0\ ,\quad \forall \alpha,\beta,\gamma\in\Fq, \forall s\geq q-r\ .
\end{align}

Now consider two bad monomials $x^{a_c}y^{b_c}$ and $x^{a_d}y^{b_d}$ with $c,d\in\cB$. 
Then the corresponding terms in~\eqref{eq:poly_alpha_beta_gamma} contributed by them can be added up
only if $\alpha^{i_c}\beta^{j_c}\gamma^{b_c-i_c-j_c}=\alpha^{i_d}\beta^{j_d}\gamma^{b_d-i_d-j_d}$, which is true if and only if
\begin{align*}
    \iff &\begin{cases}
      &i_c=i_d\\
      &j_c=j_d\\
      &b_c-i_c-j_c=b_d-i_d-j_d\\
      &2i_c+j_c+a_c \modstarq = 2i_d+j_d+a_d\modstarq
    \end{cases}\\
     \Longrightarrow &\begin{cases}
      &b_c=b_d\\
      &|a_c-a_d|=0 \textrm{ or } q-1\ .
    \end{cases}
\end{align*}
For the case $|a_c-a_d|=q-1$, such polynomials are bad according to~\cref{lem:difference_q-1}.
For the case $|a_c-a_d|=0$, we can conclude that
\eqref{eq:poly_alpha_beta_gamma} is in its simplest form\footnote{A polynomial is in its simplest form if no terms can be further combined.}.

Assume $\cB$ is non-empty. Since $\xi_c\neq 0$ for all $c$, \eqref{eq:poly_alpha_beta_gamma} is a nonzero polynomial in $\Fq[\alpha, \beta,\gamma]$.
By~\cref{lem:nullstellensatz}, since
the exponents $i,j,b_c-i-j<q$, there exists some $\alpha_0,\beta_0,\gamma_0\in\Fq$, such that $[x^{s}]P|_{\phi}^*(\alpha_0,\beta_0,\gamma_0)\neq 0$. This contradicts the assumption that $P$ is $(\Phi,q-r)^*$-good and implies that \eqref{eq:P-good-cond} can be fulfilled only if $[x^{s}]P|_{\phi}^*$ is a zero polynomial, i.e., $\cB$ is empty. Hence, a polynomial $P$ is $(\Phi,q-r)^*$-good if and only if it only consists of good monomials.
\end{proof}

\subsubsection{Counting $(\Phi,q-r)^*$-Bad Monomials}
By \cref{thm:dim_good}, we can calculate the dimension by
\begin{align*}
  k&=\textrm{the number of }(\Phi,q-r)^*\textrm{-good monomials}\\
   &=q^2-\textrm{the number of }(\Phi,q-r)^*\textrm{-bad monomials.}
\end{align*}
Since it is hard to directly analyze the $(\Phi,q-r)^*$-bad monomials,
we first consider a slightly different notion of \emph{$(\Phi,q-r)$-bad monomials} as given in \cref{def:bad-monomials-exponents}.
Then, we derive upper and lower bounds on the number of $(\Phi,q-r)^*$-bad monomials in \cref{thm:exact_bounds_bad} and finally establish the results on the rate of QLRS codes in \cref{thm:asym_rate}.


\begin{definition}[$(\Phi,q-r)$-bad monomials]
  \label{def:bad-monomials-exponents}
  Let $q=2^\ell$ and $r\in[q-1]$.
  A monomial $m(x,y)=x^ay^b$ (or the exponents $(a,b)$) is \emph{$(\Phi,q-r)$-bad} if
  there exist $i\leq_2 b$ and $j\leq_2 b-i$ such that $2i+j+a \pmod{q} \geq q-r$.
  For an integer $t\geq 0$, we define
  \begin{align}
    S_t(\ell)&\defeq \bigg\{\left. (a,b)\in \mathbb{Z}_q^2\quad \right|\hspace{-3em}
               \begin{split}
                 &\exists\ i\leq_2 b,\ j\leq_2 b-i ,\ \textrm{such that}\\
                 &2i+j+a=q-r'+tq, \textrm{ for some }r'\in[r]
               \end{split}\ \bigg\}\ .
                   \label{eq:def_S_t}
  \end{align}
\end{definition}
For $r\in[q-1]$ and $t\geq 3$, the set $S_t(\ell)$ is empty as $2i+j+a\leq i+b+a\leq 2b+a\leq 3(q-1)< q-r+tq$. Hence, if $x^ay^b$ is $(\Phi,q-r)$-bad, then $(a,b)\in S_0(\ell)\cup S_1(\ell)\cup S_2(\ell)$.

In the following,
we attempt to derive some recursive relations on $S_0(\ell)$, $S_1(\ell)$ and $S_2(\ell)$ for $r\in[q-1]$.
We have two observations in \cref{lem:go-down} and \cref{lem:down-by-one}.
\begin{lemma}\label{lem:go-down}
  Consider $q=2^\ell$, $r<\frac{q}{2}$,
  $a=(a_1,\ldots,a_{\ell})_2$ and $b=(b_1,\ldots, b_{\ell})_2$.
  Let $a'\defeq(a_1,\ldots,a_{\ell-1})_2$ and $b'\defeq(b_1,\ldots, b_{\ell-1})_2$.
  If $(a,b)\in S_0(\ell)\cup S_{1}(\ell)\cup S_{2}(\ell)$, then $(a',b')\in S_0(\ell-1)\cup S_{1}(\ell-1)\cup S_2(\ell-1)$.
\end{lemma}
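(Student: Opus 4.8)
The plan is to track what happens to the ``witness'' exponents $i,j$ and the target remainder $r'$ when the most significant bit is deleted from both $a$ and $b$. Write $q' \defeq 2^{\ell-1} = q/2$, so that $a = a' + a_\ell q'$ and $b = b' + b_\ell q'$ with $a_\ell,b_\ell\in\{0,1\}$, and decompose the witnesses the same way, $i = i' + i_\ell q'$ and $j = j' + j_\ell q'$.

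First I would check that the $\leq_2$ relations descend to the truncated parts. If $i\leq_2 b$ then $i_\ell\leq b_\ell$ and $i'\leq_2 b'$; and since $i\leq_2 b$ the subtraction $b-i$ is carry-free (its $k$-th bit is $b_k-i_k$), so $j\leq_2 b-i$ gives $j_\ell\leq b_\ell-i_\ell$ and $j'\leq_2 b'-i'$. In particular $(a',b')\in\mathbb{Z}_{q'}^2$ and $i',j'$ are admissible witnesses for membership of $(a',b')$ in some $S_{t'}(\ell-1)$.

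Next, substitute these decompositions into the defining equation $2i+j+a = q - r' + tq$ of $S_t(\ell)$. Using $q = 2q'$ and cancelling the contribution of the top bits, this reduces to
\[
  2i'+j'+a' \;=\; \bigl(2 + 2t - 2i_\ell - j_\ell - a_\ell\bigr)q' - r' \;=\; (m-1)q' + (q'-r')\,,
\]
where $m \defeq 2 + 2t - 2i_\ell - j_\ell - a_\ell$. Since $r'\in[r]$ and, by hypothesis, $r < q/2 = q'$, the number $q'-r'$ lies in $[1,q'-1]$, hence is a genuine remainder modulo $q'$; combined with $0 \le 2i'+j'+a' \le 2b'+a' \le 3(q'-1) < 3q'$ this forces $m-1 \in \{0,1,2\}$. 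Reading the displayed identity as a witness (with $t' = m-1$ and $r'' = r' \in [r]$) then yields $(a',b') \in S_{m-1}(\ell-1) \subseteq S_0(\ell-1)\cup S_1(\ell-1)\cup S_2(\ell-1)$, as required.

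The routine parts are the bit-level claims about $\leq_2$ and the carry-free subtraction, which just need to be stated carefully. The step I expect to matter most is the last one: one must argue that $2i'+j'+a'$ genuinely decomposes as $t'q' + (q'-r')$ with $0 \le q'-r' < q'$, and this is exactly where the hypothesis $r < q/2$ is used --- without it, $q'-r'$ could fail to be a valid residue modulo $q'$ and the reduction would not land in the union of $S_0,S_1,S_2$.
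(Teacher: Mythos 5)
Your proof is correct and follows essentially the same route as the paper: truncate the most significant bit of $a,b,i,j$, observe that the $\leq_2$ relations descend, and note that the defining identity $2i+j+a=q-r'+tq$ reduces to one modulo $q'=q/2$. You spell out explicitly the top-bit contribution $(2i_\ell+j_\ell+a_\ell)q'$ and the range argument that pins $t'\in\{0,1,2\}$, which the paper compresses to a single modular congruence followed by ``clearly''; the extra care is welcome but not a different method.
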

\begin{proof}
  The condition $(a,b)\in S_0(\ell)\cup S_1(\ell)\cup S_{2}(\ell)$ implies that there exist
  $i=(i_1,\ldots,i_{\ell})_2\le_2 b$ and $j=(j_1,\ldots,j_{\ell})_2\le_2 b-i$
  such that $2i+j+a= q-r' \pmod{q}$ for some $r'\in[r]$.
  Let $i':=(i_1,\ldots, i_{\ell-1})_2$ and $j':=(j_1,\ldots,j_{\ell-1})_2$.
  Clearly $i'\le_2 b'$ and $j'\le_2 b'-i'$ and  $2i'+j'+a' = \frac{q}{2}-r' \pmod{\frac{q}{2}}$.
\end{proof}

\begin{lemma}\label{lem:down-by-one}
  For $t=1,2$, if $(a,b)\in S_t(\ell)$, then $(a,b)\in S_{t-1}(\ell)$. 
\end{lemma}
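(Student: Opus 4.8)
\textbf{Proof plan for Lemma \ref{lem:down-by-one}.}

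The statement to prove is: for $t=1,2$, if $(a,b)\in S_t(\ell)$ then $(a,b)\in S_{t-1}(\ell)$. By the definition \eqref{eq:def_S_t}, membership in $S_t(\ell)$ means there exist $i\leq_2 b$ and $j\leq_2 b-i$ with $2i+j+a = q-r'+tq$ for some $r'\in[r]$. The natural approach is to exhibit, from such a witness $(i,j,r')$ for level $t$, a new witness $(\tilde{i},\tilde{j},\tilde{r})$ at level $t-1$; that is, I want to ``subtract a $q$'' from the quantity $2i+j+a$ by lowering $i$ and/or $j$ appropriately while preserving the shadow constraints $\tilde{i}\leq_2 b$, $\tilde{j}\leq_2 b-\tilde{i}$ and keeping $\tilde{r}\in[r]$.

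First I would reduce to the case $t=2\Rightarrow t=1$ and $t=1\Rightarrow t=0$ separately and note they are handled uniformly: in both cases we have $2i+j+a = q-r' + tq \geq q - r' + q > q$, so the value $2i+j$ is strictly positive and in fact large. The key arithmetic observation is that $2i+j \geq (t-1)q - a + (q - r') \geq (t-1)q$ (using $a \leq q-1$ and $r' \leq r \leq q-1$), hence $2i+j$ is itself at least $q$; I would then argue we can peel off a summand of size exactly $q$ from the contribution of $i$ and $j$ without violating the $\leq_2$ constraints. Concretely, one writes the binary digits of $i$ and $j$; since $2i+j \geq q = 2^\ell$, the combined weighted digit sum forces a ``carry past position $\ell-1$'', and I would identify either a bit of $i$ in position $\ell-1$ (contributing $2\cdot 2^{\ell-1} = q$ to $2i$) or a suitable bit of $j$ to zero out, producing $\tilde{i}\leq_2 i \leq_2 b$ or $\tilde{j}\leq_2 j \leq_2 b-i = b-\tilde i$, and $2\tilde{i}+\tilde{j}+a = q - r' + (t-1)q$, so $(\tilde i,\tilde j, r')$ witnesses $(a,b)\in S_{t-1}(\ell)$ with the same $\tilde r = r' \in [r]$.

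The main obstacle I anticipate is the bookkeeping when no single bit of $i$ or $j$ can be removed to subtract exactly $q$ — for instance, if the needed reduction is spread across a high bit of $i$ (which contributes $2^k$ to $i$, hence $2^{k+1}$ to $2i$) together with bits of $j$, and one must ensure the resulting $\tilde j$ still satisfies $\tilde j \leq_2 b - \tilde i$ after $\tilde i$ has changed. I would handle this by a careful case split on whether $i_{\ell-1}=1$ (clean case: drop that bit, subtracting exactly $q$ from $2i$) versus $i_{\ell-1}=0$ (then the excess $q$ must come from $j$'s high bits together with lower bits of $i$; here I would use that $j\leq_2 b-i$ and that dropping a bit of $i$ only enlarges the allowed shadow $b-i$ for $j$, so the constraint on $\tilde j$ can only relax). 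A clean alternative worth trying first is a direct ``$i \mapsto i - q/2$ if $i \geq q/2$, else $j \mapsto j - (\text{appropriate chunk})$'' argument, reducing everything to the single inequality $2i+j+a > q$ which we have established; this should make the proof short, with the residual work being the verification of the $\leq_2$ relations in each branch.
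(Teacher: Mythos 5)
Your plan takes the same route as the paper: produce a witness $(i',j')$ for $S_{t-1}(\ell)$-membership by zeroing bits of $i$ and $j$ so that $2i+j$ decreases by exactly $q$, relying on $i'\leq_2 i\leq_2 b$ and $j'\leq_2 j\leq_2 b-i\leq_2 b-i'$ (the last inclusion because $i'\leq_2 i$ forces $b-i\leq_2 b-i'$). The paper makes the bit-zeroing concrete via the explicit routine in \cref{algo:ij-reduction}; you leave it at the level of a sketch, which is acceptable for a plan.

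The genuine gap is the claim that ``$2i+j$ is itself at least $q$'' for $t=1$. From $2i+j+a = 2q-r'$ with $r'\in[r]$ and $a\leq q-1$ one only gets $2i+j = 2q-r'-a\geq q-r+1$, which is strictly below $q$ whenever $r\geq 2$; and the chain $2i+j\geq (t-1)q - a + (q-r')\geq (t-1)q$ that you write needs $a+r'\leq q$, which is also not guaranteed, and in any case gives only the useless bound $2i+j\geq 0$ when $t=1$. So for $t=1$ the ``subtract exactly $q$'' step can have nothing to subtract. What saves the statement is a separate trivial case, which your proposal omits: whenever $2i+j<q$, one automatically has $a=2q-r'-(2i+j)>q-r'\geq q-r$, hence $a\in[q-r,q-1]$ and $(a,b)\in S_0(\ell)$ directly with $i'=j'=0$. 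The paper phrases this as first disposing of $a\geq q-r$ with $i'=j'=0$ and running the peel-off argument only when $a<q-r$, in which case one really does have $2i+j>q$. For $t=2$ your bound $2i+j\geq q$ holds unconditionally (since $2i+j=3q-r'-a\geq 2q-r'+1$) and no extra case is needed. Adding the trivial case for $t=1$ and making the greedy bit-zeroing precise would complete the argument.
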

\begin{proof}
  We first prove for $t=1$. The condition $(a,b)\in S_1(\ell)$ implies that there exists an $i\leq_2 b$ and an $j\leq_2 b-i$ with $2i+j+a = 2q-r'$ for some $r'\in[r]$. The statement $(a,b)\in S_0(\ell)$ means that there exists $i'\leq_2 i$ and $j'\leq_2 j$ such that $2i'+j'+a \in[ q-r,q-1]$.
  Note that for $q-r\leq a\leq q-1$ the statement holds with $i'=j'=0$. Assuming $a< q-r$, we claim that there exists of a pair $(i'\leq_2 i, j'\leq_2 j)$ such that $2i'+j'+a = q-r'$. This claim would imply the required statement.
  Such $i',j'$ can be found by \cref{algo:ij-reduction} where we replace some ones in the binary representations of $i$ and $j$ by zeros to get $i'$ and $j'$ so that $(2i+j)-(2i'+j')=q$.
  Note that the algorithm outputs the correct $i',j'$ for $2i+j>q$ if it enters the $\delta\leq 0$ else-part (\cref{line:deltaleq0}) at some point. Assume the contrary that this does not happen, resulting in that the algorithm output the all-zero $i',j'$ at the end. However, this contrary implies that $\delta=q-(2i+j)>0$ which contradicts the condition that $2i+j+a\geq 2q-r$ while $a<q-r$.

  For $t=2$, given $i,j$ such that $2i+j+a=3q-r'$, which implies that $2i+j>q$, we can find $i',j'$ by \cref{algo:ij-reduction} such that $2i'+j'+a=2q-r'$. This completes the proof.
\end{proof}
\begin{algorithm}
\caption{Deduct $q=2^\ell$ from $2i+j$}\label{algo:ij-reduction}
\DontPrintSemicolon
\SetKw{Init}{Init:}
\KwIn{$\ell\geq 1, i,j$}
\KwOut{$i',j'$ such that $i'\leq_2,j'\leq_2$}
\Init{$i'\leftarrow i,j'\leftarrow j, h\leftarrow \ell,\Delta\leftarrow 1$}\;
\uIf{$h=1$\label{line:h_equal_1}}{ \Return{$i',j'$}}
Let $h\leftarrow h-1$ and $\Delta\leftarrow 2\Delta$\label{line:Delta}\;
Compute $\delta\leftarrow \Delta-i'_{h}-j'_{h+1}$\label{line:compute-delta}\;
\uIf{$\delta>0$}{$i'_{h}\leftarrow 0,\ j'_{h+1}\leftarrow 0$\;
\textbf{Go back to Line}~\ref{line:h_equal_1}}
\uElse{\label{line:deltaleq0}
Let $
                \begin{cases}
                    i'_{h}\leftarrow 0,\quad & \textbf{if }\Delta-i'_{h}=0\\
                    j'_{h+1}\leftarrow 0,\quad & \textbf{if }\Delta-j'_{h+1}=0\\
                    i'_{h}\leftarrow 0,j'_{h+1}\leftarrow 0, \quad & \textbf{if }\Delta-i'_{h}-j_{h+1}=0
                \end{cases}
$\label{line:assign-i-j}\;
\Return $i',j'$}
\end{algorithm}
\begin{example}[A toy example of \cref{lem:down-by-one}]
  Consider the parameters $q=2^\ell=2^4, r=2$.
  It can be seen that $(a,b)=(12,14)=((0011)_2,(0111)_2)$ is in $S_1(4)$, since with $i=2=(0010)_2,j=5=(0101)_2$ we have $i\leq_2 b,j\leq_2 b-i$ and $2i+j+a=2q-r'=30=(01111)_2$ with $r'=2\in[r]$.
  We now apply \cref{algo:ij-reduction} to find $i'\leq_2 i,j'\leq_2 j$ such that $2i'+j'+a=q-r'$ for some $r'\in[r]$:
\begin{enumerate}
    \item[]\textbf{Init:} $i'\leftarrow (0010)_2,j'\leftarrow (0101)_2$, $\Delta=1$ and $h\leftarrow 4$.
    \item[]\textbf{\cref{line:Delta}}: Let $h\leftarrow 3, \Delta\leftarrow 2$.
    \item[]\textbf{\cref{line:compute-delta}}: Compute $\delta\leftarrow \Delta-i'_3-j'_4=2-1-1=0$.
    \item[]\textbf{\cref{line:deltaleq0}}: Since $\delta\not>0$ and $\Delta-i'_3-j'_4=0$,
    \item[]\textbf{\cref{line:assign-i-j}}: $i'_3\leftarrow 0$, $j'_4\leftarrow 0$ and output $i'=(0000)_2=0,j'=(0100)_2=2$.
\end{enumerate}
Note that $i'\leq_2i\leq_2 b$, $j'\leq_2 j\leq_2 b-i$ and $2i'+j'+a=(0111)_2=14=q-r'$ with $r'=2\in[r]$. Hence, $(a,b)$ is also in $S_0(4)$.
\end{example}
It follows from Lemma~\ref{lem:down-by-one} that $x^ay^b$ is $(\Phi,q-r)$-bad if and only if $(a,b)\in S_0(\ell)$.
Together with the observation in \cref{lem:go-down}, we can provide a recursive formula for computing the size of $S_t(\ell)$ for $t=0,1,2$.
\begin{lemma}\label{lem: key ingredient}
For $1\leq r< \frac{q}{2}$, it holds that
\begin{align*}
    |S_0(\ell)|&=3|S_0(\ell-1)|+|S_1(\ell-1)|,\\
    |S_1(\ell)|&=|S_0(\ell-1)|+|S_1(\ell-1)| + |S_2(\ell-1)|,\\
    |S_2(\ell)|&=|S_2(\ell-1)|.
\end{align*}
\end{lemma}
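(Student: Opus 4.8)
The plan is to establish a bijective/partition argument: for $q = 2^\ell$ and a fixed $1 \le r < \tfrac{q}{2}$, I would split each pair $(a,b) \in S_t(\ell)$ according to its most significant bits $a_\ell$ and $b_\ell$, and relate the "reduced" pair $(a',b') = ((a_1,\dots,a_{\ell-1})_2, (b_1,\dots,b_{\ell-1})_2)$ to membership in $S_{t'}(\ell-1)$ for an appropriate $t'$. The key observation, already packaged in \cref{lem:go-down}, is that stripping the top bit always drops a pair in $S_0(\ell)\cup S_1(\ell)\cup S_2(\ell)$ into $S_0(\ell-1)\cup S_1(\ell-1)\cup S_2(\ell-1)$; what remains is to track \emph{exactly which} $S_{t'}(\ell-1)$ it lands in as a function of $(a_\ell, b_\ell)$, and conversely to count how many top-bit extensions of a given $(a',b') \in S_{t'}(\ell-1)$ lie in $S_t(\ell)$.

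First I would set up the bookkeeping. Suppose $(a,b)\in S_t(\ell)$ witnessed by $i \le_2 b$, $j \le_2 b-i$ with $2i+j+a = q - r' + tq$ for some $r'\in[r]$. Writing $i = (i', i_\ell\cdot 2^{\ell-1})$, etc., and using that $r' < q/2 = 2^{\ell-1}$ so that the bottom $\ell-1$ bits of $q-r'$ are unconstrained by carries from the top, I would compute $2i' + j' + a' \bmod 2^{\ell-1}$ and read off which $S_{t'}(\ell-1)$ the pair $(a',b')$ belongs to; the top bits $i_\ell, j_\ell, a_\ell$ contribute to $2i_\ell + j_\ell + a_\ell \in \{0,1,2,3,4\}$ plus a carry-in from the low part, and comparing this against the top bit pattern of $q-r'+tq$ determines the relation between $t$ and $t'$. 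Conversely, for each $(a',b')\in S_{t'}(\ell-1)$ I would enumerate the admissible choices of $(a_\ell, b_\ell, i_\ell, j_\ell)$ — constrained by $i_\ell \le_2 b_\ell$ and $j_\ell \le_2 b_\ell - i_\ell$, i.e. $(i_\ell, j_\ell) \in \{(0,0)\}$ if $b_\ell = 0$ and $(i_\ell, j_\ell) \in \{(0,0),(0,1),(1,0)\}$ if $b_\ell = 1$ — and count how many extensions produce an element of $S_t(\ell)$. Summing these multiplicities across the cases $b_\ell = 0$ and $b_\ell = 1$, and $a_\ell = 0,1$, should yield the coefficients $3,1,1$ for $|S_0(\ell)|$ (three extensions feeding from $S_0$, one from $S_1$), $1,1,1$ for $|S_1(\ell)|$, and the single trivial extension $1$ for $|S_2(\ell)|$, where I would also invoke \cref{lem:down-by-one} (equivalently $S_2(\ell)\subseteq S_1(\ell)\subseteq S_0(\ell)$) to handle any overcounting or to simplify which inclusions are proper.

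The main obstacle I anticipate is the carry analysis: deciding, for each of the roughly ten $(a_\ell,b_\ell,i_\ell,j_\ell)$ patterns, whether a carry out of bit position $\ell-1$ occurs and how it interacts with the "$+tq$" on the right-hand side — this is where the hypothesis $r < q/2$ is essential (it prevents $q - r'$ from having a top bit that muddies the reduction), and it is the step where a careless case could flip a $3$ into a $2$ or miscount $|S_1(\ell)|$. I would organize this as a small finite table rather than prose. A secondary subtlety is ensuring the correspondence is genuinely a bijection on witnessed pairs and not merely a surjection — since a pair may have multiple witnesses $(i,j)$, I would phrase the count in terms of \emph{membership} in each $S_t$ (using \cref{lem:down-by-one} to collapse the witness ambiguity) rather than in terms of witness triples, so that $|S_t(\ell)|$ is counted cleanly once per pair.
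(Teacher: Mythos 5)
Your plan matches the paper's proof: both strip the top bits $(a_\ell,b_\ell)$, count how many of the at most four extensions of a given $(a',b')\in S_{t'}(\ell-1)$ land in $S_t(\ell)$, and use the nested inclusions $S_2(\ell-1)\subset S_1(\ell-1)\subset S_0(\ell-1)$ from \cref{lem:down-by-one} to stratify and avoid double-counting. The paper sidesteps any carry bookkeeping by observing that $2i+j+a-(2i'+j'+a')=(2i_\ell+j_\ell+a_\ell)\cdot\tfrac{q}{2}$ is always a multiple of $\tfrac{q}{2}$, which forces $r''=r'$ and makes the ``which $S_{t'}$'' question purely about the offset $2i_\ell+j_\ell+a_\ell\in\{0,\dots,3\}$; your framing in terms of carries from the low part is equivalent but more cumbersome. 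One slip to fix in the write-up: the recurrence for $|S_0(\ell)|$ has coefficients $(3,1,0)$, not ``$3,1,1$'' as stated (your own parenthetical ``three extensions feeding from $S_0$, one from $S_1$'' is the correct reading).
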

\begin{proof}
  We follow the notations in \cref{lem:down-by-one}. To obtain a valid $S_t(\ell-1)$, we require $r<q^{\ell-1}=\frac{q}{2}$ due to the condition in \cref{lem:go-down}.
 According to \cref{lem:go-down} and \cref{lem:down-by-one}, we know that if $(a,b)\in S_0(\ell)$, then $(a',b')\in S_0(\ell-1)\cup S_1(\ell-1)\cup S_2(\ell-1)$. The statement can be proven by counting how many ways to add the most significant bits $a_\ell$ and $b_\ell$ for $a'$ and $b'$ to obtain $a$ and $b$. Denote them by $a=[a',a_\ell],b=[b',b_\ell]$. Recall the definition in~\eqref{eq:def_S_t}, given $(a',b')\in S_t(\ell-1)$, there exist $i'\leq_2 b',j'\leq_2 b'-i'$ such that $2i'+j'+a'=\frac{q}{2}-r'+t\frac{q}{2}$ with some $r'\in[r]$.
 Construct $i,j$ by appending one most significant bit to $i',j'$, i.e., $i=[i',i_\ell]$ and $j=[j',j_\ell]$ with $i_\ell\leq_2b_\ell$ and $j_\ell\leq_2b_\ell$. To obtain $(a,b)\in S_t(\ell)$, we require $2i+j+a=q-r'' + tq$ with some $r''\in[r]$.
We can write
\begin{align*}
  2i+j+a=2i'+j'+a' + (2i_\ell+j_\ell+a_\ell)\frac{q}{2}\ .
\end{align*}
Since the difference between $2i+j+a$ and $2i'+j'+a'$ is always some multiple of $\frac{q}{2}$, we obtain $r''=r'$.
By \cref{lem:down-by-one}, we have $S_2(\ell-1)\subset S_1(\ell-1)\subset S_0(\ell-1)$.

We first prove $|S_0(\ell)|$. To have $(a,b)\in S_0(\ell)$, we require $2i+j+a=q-r'$. Consider three cases,
\begin{itemize}
\item
  Given $(a',b')\in S_0(\ell-1) \setminus S_1(\ell-1)$, it means 
  $2i'+j'+a'=\frac{q}{2}-r'$. To obtain $2i+j+a=q-r'$, we require $2i_\ell+j_\ell+a_\ell=1$. There are three options of $(a_\ell,b_\ell)$ that this can be fulfilled, i.e., $(a_\ell,b_\ell)=(1,0), (0,1)$ or $(1,1)$.
\item
  Given $(a',b')\in S_1(\ell-1)\setminus S_2(\ell-1)$, we have $2i'+j'+a'=q-r'$. The option $(a_\ell,b_\ell)=(0,0)$
  makes $(a,b)\in S_0(\ell)$. Since $S_1(\ell-1)\subset S_0(\ell-1)$, we can find $i''\leq_2 i'\leq_2 b'$ and $j''\leq_2 j'\leq_2 b-i$ such that $2i''+j''+a'=\frac{q}{2}-r'$ (by~\cref{algo:ij-reduction}). So all the other three options in the first case are also valid for this case.
\item
  Given $(a',b')\in S_2(\ell-1)$, we have $2i'+j'+a'=\frac{3}{2}q-r'$. Since $S_2(\ell-1)\subset S_1(\ell-1)$, all four options of $(a_\ell,b_\ell)$ allow to get $(a,b)\in S_0(\ell)$.
\end{itemize}

Then we show $|S_1(\ell)|$. We require $2i+j+a=2q-r'$. Again, consider the three cases,
\begin{itemize}
\item
  Given $(a',b')\in S_0(\ell-1) \setminus S_1(\ell-1)$, we have $2i'+j'+a'=\frac{q}{2}-r'$. This means that $2i_\ell+j_\ell+a_\ell=3$ is required. $(a_\ell,b_\ell)=(1,1)$ is the only way to add the most significant bits. 
\item
  Given $(a',b')\in S_1(\ell-1)\setminus S_2(\ell-1)$, we have $2i'+j'+a'=q-r'$. We need $2i_\ell+ j_\ell+ a_\ell=2$ to obtain $2i+j+a=2q-r'$. The two options $(0,1)$ and $(1,1)$ allow this.
\item
  Given $(a',b')\in S_2(\ell-1)$, we have $2i'+j'+a'=\frac{3}{2}q-r'$. We require $2i_\ell+j_\ell+a_\ell=1$ to obtain $2i+j+a=2q-r'$. The three options $(a_\ell,b_\ell)=(1,0), (0,1)$ or $(1,1)$ can fulfill this.
\end{itemize}
Now we show $|S_2(\ell)|$. We require $2i+j+a=3q-r'$. Consider the three cases,
\begin{itemize}
\item
  Given $(a',b')\in S_0(\ell-1) \setminus S_1(\ell-1)$, we have $2i'+j'+a'=\frac{q}{2}-r'$. This means that $2i_\ell+j_\ell+a_\ell=5$ is required. However,
  this cannot happen since all $i_\ell,j_\ell,a_\ell$ are in $\F_2$.
\item
  Given $(a',b')\in S_1(\ell-1)\setminus S_2(\ell-1)$, we have $2i'+j'+a'=q-r'$. We need $2i_\ell+j_\ell+a_\ell=4$. However, due to $j_\ell\leq_2 b_\ell -i_\ell$, this cannot happen since $i_\ell$ and $j_\ell$ cannot be one at the same time.
\item
  Given $(a',b')\in S_2(\ell-1)$, we have $2i'+j'+a'=\frac{3}{2}q-r'$. We require $2i_\ell+j_\ell+a_\ell=3$. $(a_\ell,b_\ell)=(1,1)$ is the only option.
\end{itemize}
To sum up, the statements follow from
\begin{align*}
  |S_0(\ell)|&=3(|S_0(\ell-1)\setminus S_1(\ell-1)|)+4(|S_1(\ell-1)\setminus S_2(\ell-1)|)+ 4|S_2(\ell-1)|\\
  |S_1(\ell)|&=|S_0(\ell-1)\setminus S_1(\ell-1)|+2|S_1(\ell-1)\setminus S_2(\ell-1)| +3|S_2(\ell-1)|\\
  |S_2(\ell)|&=|S_2(\ell-1)|\ .
\end{align*}
\end{proof}
Lemma~\ref{lem: key ingredient} yields a recurrence relation for $|S_0(\ell)|$, $|S_1(\ell)|$ and $|S_2(\ell)|$. For a given $r$, the initial value $\ell_0$ should be chosen such that $S_i(\ell_0), i=0,1,2$ is a valid set according to the definition in~\eqref{eq:def_S_t}.
Denote by $\bs(\ell) = (|S_0(\ell)|,|S_1(\ell)|,|S_2(\ell)|)^\top$.
We then have
\begin{align}\label{eq:matA}
  \bs(\ell)= \bA^{\ell-\ell_0}\cdot \bs(\ell_0),\ \text{ where } \bA =
  \begin{pmatrix}
    3 & 1 & 0\\
    1 & 1 & 1\\
    0 & 0 & 1
  \end{pmatrix}\ .
\end{align}
The recursion enables us to find the asymptotic behavior of the number of $(\Phi,q-r)$-bad monomials, which is exactly $|S_0(\ell)|$.
Note that the order of $|S_j(\ell)|, j=0,1,2$ is controlled by $\lambda_1^\ell$, where $\lambda_1=2+\sqrt{2}$ is the largest eigenvalue of $\bA$ in~\eqref{eq:matA}.
Hence,
\begin{align}\label{eq:S0_asym}
  |S_0(\ell)|=\Theta ((2+\sqrt{2})^{\ell})\ .
\end{align}
For different $r$, the exact values of $|S_0(\ell)|$ can be different, since the initial value $|S_0(\ell_0)|$ depends on $r$. However, the asymptotic behavior is the same for any fixed $r$.

We provide the exact expressions of $|S_0(\ell)|$ for $r=1$ and $r=3$, denoted by $|S_0^{(1)}(\ell)|$ and $|S_0^{(3)}(\ell)|$ respectively, which we use later to derive upper and lower bound on the number of $(\Phi,q-r)^*$-bad monomials:
\begin{align}
    |S_0^{(1)}(\ell)|= & \frac{5\sqrt{2}+7}{2(3\sqrt{2}+4)} \cdot \lambda_1^\ell + \frac{5\sqrt{2}-7}{2(3\sqrt{2}-4)} \cdot \lambda_2^\ell \nonumber \\
    \approx & 0.8536 \cdot \lambda_1^\ell +  0.1464 \cdot \lambda_2^\ell \label{eq:S0r1}\\
    |S_0^{(3)}(\ell)|= & \frac{65\sqrt{2}+92}{4(12\sqrt{2}+17)}\cdot \lambda_1^\ell + \frac{65\sqrt{2}-92}{4(12\sqrt{2}-17)} \cdot \lambda_2^\ell - \lambda_3^\ell \nonumber \\
    \approx & 1.3536 \cdot \lambda_1^\ell +  0.6465\cdot \lambda_2^\ell -1 \label{eq:S0r3}
\end{align}
where $\lambda_1=2+\sqrt{2},\lambda_2=2-\sqrt{2},\lambda_3=1$ are the three distinct eigenvalues of the matrix $\bA$.


Recall from Lemma~\ref{lem:goodCond} that a monomial $m(x,y)=x^ay^b$ is $(\Phi,q-r)^*$-bad if and only if there exist $i\leq_2 b$ and $j\leq_2 b-i$ such that $2i+j+a \ (\modstar\ q)\geq q-r$.
For $q=2^\ell$ and $r\in[q-1]$, we define the following set 
\begin{align}
  \label{eq:star-bad-set}
  S^*(\ell):=\bigg\{\left.\vphantom{S^{q^9}}(a,b)\in \mathbb{Z}_q^2\ \right|\ \begin{split}
        \exists\ i\leq_2 b, j\leq_2 b-i ,
        \textrm{ s.t.}&\ 2i+j+a=q-r'+t(q-1),\\& \textrm{ for some }r'\in[r], t\geq 0
    \end{split}\ \bigg\}\ .
\end{align}
It is clear that $(a,b)\in S^*(\ell)$ if and only if $x^ay^b$ is $(\Phi,q-r)^*$-bad.

We first relate $|S^*(\ell)|$ with $|S_0(\ell)|$ in Lemma~\ref{lem:upper_bad} and Lemma~\ref{lem:lower_bad}.
\begin{lemma}\label{lem:upper_bad}
  Let $\ell\geq 2, q=2^\ell$,
  $1\leq r\leq \frac{q}{4}$,
  $s=\ceil{\log_2 (r)}$ and $q'=2^{\ell-s}$. Denote by $S^{(3)}_0(\ell-s)$  the set of $(a,b)$ such that $x^ay^b$ is $(\Phi,q'-3)$-bad. Then
    \begin{align}
      |S^*(\ell)|< 4r^2\cdot |S_0^{(3)}(\ell-s)|\ . \nonumber
    \end{align}
    If $r$ is a power of $2$, then
    \begin{align}
    |S^*(\ell)|\leq r^2\cdot |S_0^{(3)}(\ell-s)| \ . \nonumber
  \end{align}
\end{lemma}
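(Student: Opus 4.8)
The plan is to reduce a $(\Phi,q-r)^*$-bad monomial over $\mathbb{F}_q$ to a $(\Phi,q'-3)$-bad monomial over $\mathbb{F}_{q'}$ with $q'=2^{\ell-s}$, where $s=\lceil\log_2 r\rceil$, by truncating the $s$ least significant bits of the exponents $(a,b)$. Write $a=(a',a'')$ and $b=(b',b'')$, where $a''=(a_1,\dots,a_s)_2$, $b''=(b_1,\dots,b_s)_2$ collect the lowest $s$ bits and $a'=(a_{s+1},\dots,a_\ell)_2$, $b'=(b_{s+1},\dots,b_\ell)_2$ the top $\ell-s$ bits. First I would show the forward implication: if $(a,b)\in S^*(\ell)$, so there exist $i\leq_2 b$, $j\leq_2 b-i$ with $2i+j+a=q-r'+t(q-1)$ for some $r'\in[r]$ and $t\geq 0$, then splitting $i=(i',i'')$, $j=(j',j'')$ along the same bit boundary and reducing modulo $2^s$ should force the top parts to satisfy $2i'+j'+a' = q'-3+\tilde t(q'-1)$ (or an analogous relation with local redundancy at most $3$), so that $(a',b')$ lies in a $(\Phi,q'-3)$-bad set $S_0(\ell-s)$ after possibly invoking Lemma \ref{lem:down-by-one} to push $t$ down to $0$. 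The point is that $r\leq 2^s$ forces $q-r'$ to differ from the ``aligned'' value $q-1$ or $q - 2^s + \text{something}$ by a carry that is absorbed into a local-redundancy window of width $3$ rather than $r$; I expect to spend care on exactly which of $q'-1, q'-2, q'-3$ is the right threshold, which is why the statement uses $q'-3$.

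Second, I would count fibers: for each top part $(a',b')\in S_0^{(3)}(\ell-s)$, how many low parts $(a'',b'')\in\mathbb{Z}_{2^s}^2$ can extend it to an $(a,b)\in S^*(\ell)$? A crude bound is that $(a'',b'')$ ranges over at most $2^s\times 2^s = 4^s$ choices, but the constraint $2i+j+a \equiv q-r' \pmod{q-1}$ with $r'\in[r]$ together with $i''\leq_2 b''$, $j''\leq_2 b''-i''$ cuts this down. I expect the honest count of admissible low-bit extensions to be at most $(2r)^2 = 4r^2$ in general, because the residue $q-r'$ pins down $2i''+j''+a'' \pmod{2^s}$ to within a range of size $O(r)$ once $b''$ (hence the allowed $i'',j''$) is fixed, and $b''$ itself then contributes another factor of $O(r)$ — matching the claimed bound $|S^*(\ell)| < 4r^2\,|S_0^{(3)}(\ell-s)|$.

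Third, the sharper bound for $r$ a power of $2$: when $r=2^s$ exactly, $s=\log_2 r$, and the alignment of the bit boundary with $r$ is perfect — there is no ``spill'' carry — so the local-redundancy window at the truncated level is exactly of size $r$ at the bottom and the counting becomes exact rather than wasteful. I would argue that in this case each $(a',b')\in S_0^{(3)}(\ell-s)$ has precisely (at most) $r^2$ low-bit extensions, giving $|S^*(\ell)|\leq r^2\,|S_0^{(3)}(\ell-s)|$; the factor $4$ in the general bound is the price of rounding $r$ up to the next power of $2$ via $s=\lceil\log_2 r\rceil$, i.e. $2^s < 2r$.

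\textbf{Main obstacle.} The delicate part is pinning down the threshold constant ($q'-3$ versus $q'-1$ or $q'-2$) in the reduced condition and verifying that Lemma \ref{lem:down-by-one} (which requires $r<q/2$) can indeed be applied at the truncated scale $q'=2^{\ell-s}$ with the local redundancy $3<q'/2$ — this needs $\ell-s\geq 3$, i.e. some smallness hypothesis on $r$ relative to $q$, which is exactly the role of the assumption $r\leq q/4$ (together with $\ell\geq 2$). I would track the carry produced when the low-order sum $2i''+j''+a''$ exceeds $2^s$ and feeds a $+1$ into the top sum; bounding how this carry interacts with the $(q-1)$-periodicity (the $\modstar$ reduction) is where the bookkeeping is heaviest, and getting the two inequalities (strict for general $r$, non-strict for $r$ a power of $2$) to come out with the stated constants is the crux.
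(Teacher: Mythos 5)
Your proposal follows the same overall route as the paper: truncate the $s=\lceil\log_2 r\rceil$ least significant bits of $(a,b)$, show the truncated pair is $(\Phi,q'-3)$-bad, and count the fibers of the projection. The delicate point you flag --- where the constant $3$ comes from --- is handled in the paper by a direct inequality: writing $i = i'2^s + r_i$, $j = j'2^s + r_j$, $a = a'2^s + r_a$ and using $i\leq_2 b$, $j\leq_2 b-i$ to get $2r_i + r_j \leq 2(2^s-1)$, hence $2r_i + r_j + r_a \leq 3(2^s-1)$, combined with $r'+t\leq r+2\leq 2^s + 2$, yields $q'(t+1)-4 < 2i'+j'+a' \leq q'(t+1)-1/2^s$, which forces $2i'+j'+a' \in [q'(t+1)-3,\, q'(t+1)-1]$. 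That window of width $3$ is the source of the constant. Also, the hypothesis needed is $\ell-s\geq 2$ (not $\geq 3$), so that $q'\geq 4>3$ and $S_0^{(3)}(\ell-s)$ is a valid set; this follows from $r\leq q/4$ with $\ell\geq 2$.

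Where you overshoot is the fiber count. You sketch a ``refined'' argument constraining $(a'',b'')$ via residue classes modulo $2^s$ and $(q-1)$, expecting to ``cut down'' from $4^s$ to $O(r^2)$. No refinement is needed or used: the projection $(a,b)\mapsto(a',b')$ is trivially $2^s\times 2^s$-to-one, so $|S^*(\ell)| \leq 4^s\,|S_0^{(3)}(\ell-s)|$, and $4^s = (2^{\lceil\log_2 r\rceil})^2 < (2r)^2 = 4r^2$ is automatic from the definition of $s$. When $r$ is a power of two, $2^s = r$ exactly, giving $|S^*(\ell)|\leq r^2\,|S_0^{(3)}(\ell-s)|$. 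The finer residue argument you outline is both unnecessary and not quite the right shape, since the bound is an upper bound by surjectivity-onto-a-subset, not a tight fiber count for each $(a',b')$; spending effort there would slow you down without buying anything.
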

\begin{proof}
By definition, we require $\ell-s\geq 2$ to have a valid $S_0^{(3)}(\ell-s)$. Therefore, we require $\ell\geq 2$ and $r\leq \frac{q}{4}$.
Let $x^a y^b$ be an arbitrary $(\Phi,q-r)^*$-bad monomial.
By definition, this means that there exist $i\le_2 b$ and $j\le_2 b-i$ such that $2i+j+a = q-r' + (q-1)t$ for some $r'\in[r]$ and $t\in[0,2]$.\footnote{Note that $2i+j+a\leq 2b+a\leq 3(q-1)<q-r+(q-1)t$ for any $t\geq 3$ and $r<q$.} We drop $s=\lceil \log (r)\rceil$ least significant bits in $a$, $b$, $i$ and $j$ to obtain $a'$, $b'$, $i'$ and $j'$,
and write
\begin{align*}
    i&=i'\cdot 2^s+r_{i}\\
    2i&=2i'\cdot 2^s + 2r_i\\
    j&=j'\cdot 2^s + r_j\\
    a&=a'\cdot 2^s + r_a
\end{align*}
where the remainders $0\leq r_{i},r_j,r_a<2^s$.
Recall that $q'=q/{2^s}=2^{\ell-s}$, it is clear that
\begin{align}
    2i'+j'+a'&= \frac{2i+j+a}{2^s}-\frac{2r_{i}+r_j+r_a}{2^s}\nonumber\\
    &=\frac{q-r'+(q-1)t}{2^s}-\frac{2r_{i}+r_j+r_a}{2^s}\nonumber \\
    &=q'(t+1)-\frac{r'+t}{2^s}-\frac{2r_{i}+r_j+r_a}{2^s}\nonumber\ .
\end{align}
Since the bits in $i,j$ cannot be both one at the same position, $2r_i+r_j\leq 2(2^s-1)$ and therefore $0\leq 2r_i+r_j+r_a\leq 3(2^s-1)$. In addition, since $1\leq r'+t\leq r+2\leq 2^s+2$, we have
$$
q'(t+1)-4< 2i'+j'+a'\leq q'(t+1) - \frac{1}{2^s}\ .
$$
As $2i'+j'+a'$ can only be an integer, we have
\begin{align*}
    q'(t+1)-3\leq 2i'+j'+a'\leq q'(t+1) - 1\ .
\end{align*}
This implies that $(a',b')$ is $(\Phi,q'-3)$-bad.
Therefore, adding arbitrary $s$ least significant bits to a pair $(a',b')\in  S_0^{(3)}(\ell-s)$ results in a pair $(a,b)$ that may be $(\Phi,q-r)^*$-bad. The number of $(\Phi,q-r)^*$-bad monomials is therefore bounded from above by
\begin{align*}
2^s\cdot  2^s\cdot |S_0^{(3)}(\ell-s)|&= (2^{\ceil{\log_2(r)}})^2  \cdot |S_0^{(3)}(\ell-s)|< (2r)^2 \cdot |S_0^{(3)}(\ell-s)|\ .
\end{align*}

If $r$ is a power of $2$, we can set $s=\log_2r$ and obtain the tighter bound.
\end{proof}
\begin{lemma}\label{lem:lower_bad}
  Let $\ell\geq 1, q=2^\ell$,
  $1\leq r\leq \frac{q}{2}$,
  $s=\floor{\log_2 r}$ and $q'=2^{\ell-s}$. Denote by $S_0^{(1)}(\ell-s)$ the set of $(a,b)$ such that $x^ay^b$ is $(\Phi,q'-1)$-bad. Then
  \begin{align}
    |S^*(\ell)|> \frac{r^2}{4}\cdot |S_0^{(1)}(\ell-s)|\ . \nonumber
  \end{align}
  If $r$ is a power of $2$, then
  \begin{align}
    |S^*(\ell)|\geq  r^2\cdot |S_0^{(1)}(\ell-s)|\ . \nonumber
  \end{align}
\end{lemma}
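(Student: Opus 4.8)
The plan is to run the bit-manipulation argument of \cref{lem:upper_bad} in reverse: rather than dropping the $s$ least significant bits to push a $(\Phi,q-r)^*$-bad pair down to a $(\Phi,q'-3)$-bad pair, I would start from a $(\Phi,q'-1)$-bad pair in $S_0^{(1)}(\ell-s)$ and \emph{append} $s$ least significant bits in every admissible way, producing many distinct $(\Phi,q-r)^*$-bad pairs. Since $s=\floor{\log_2 r}$ we have $2^s\leq r<2^{s+1}$, and $r\leq q/2$ forces $s\leq \ell-1$, so $q'=2^{\ell-s}\geq 2$ and $S_0^{(1)}(\ell-s)$ is a legitimate set (if it is empty the bound is trivial). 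First I would fix an arbitrary $(a',b')\in S_0^{(1)}(\ell-s)$; by the definition \eqref{eq:def_S_t} (with $t=0$ and $r'=1$) there exist $i'\leq_2 b'$ and $j'\leq_2 b'-i'$ with $2i'+j'+a'=q'-1$.

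For arbitrary $r_a,r_b\in[0,2^s-1]$, set $a\defeq a'\cdot 2^s+r_a$ and $b\defeq b'\cdot 2^s+r_b$, and take $i\defeq i'\cdot 2^s$, $j\defeq j'\cdot 2^s$ (i.e., zero low bits). Then $i\leq_2 b$ and $j\leq_2 b-i$: the $s$ low bits of $i$ and $j$ vanish, the high bits inherit $i'\leq_2 b'$ and $j'\leq_2 b'-i'$, and no borrow occurs in $b-i$ because $i'\leq b'$. Moreover
\[
2i+j+a=(2i'+j'+a')\,2^s+r_a=(q'-1)2^s+r_a=q-(2^s-r_a),
\]
and $r_a\in[0,2^s-1]$ gives $2^s-r_a\in[1,2^s]\subseteq[1,r]$ (using $2^s\leq r$). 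Hence $(a,b)$ satisfies \eqref{eq:star-bad-set} with $t=0$ and $r'=2^s-r_a$, i.e., $(a,b)\in S^*(\ell)$, so $x^ay^b$ is $(\Phi,q-r)^*$-bad by \cref{lem:goodCond}.

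Finally I would count. For a fixed $(a',b')$ the $2^{2s}$ pairs $(r_a,r_b)\in[0,2^s-1]^2$ produce $2^{2s}$ distinct $(a,b)$, and the families attached to distinct $(a',b')$ are pairwise disjoint, since $(a',b')=(\floor{a/2^s},\floor{b/2^s})$ is recovered from $(a,b)$. Therefore $|S^*(\ell)|\geq 2^{2s}\,|S_0^{(1)}(\ell-s)|$. If $r$ is a power of $2$, then $2^s=r$ and this is exactly $r^2\,|S_0^{(1)}(\ell-s)|$; otherwise $2^s>r/2$, so $2^{2s}>r^2/4$, which is the stated strict bound.

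The argument is essentially routine given \cref{lem:upper_bad}; the only points requiring care are (i) checking that the $2$-shadow relations $i\leq_2 b$ and $j\leq_2 b-i$ genuinely survive the concatenation of low bits (in particular the absence of a borrow in $b-i$), and (ii) matching the preimage count $2^{2s}$ to the claimed $r^2/4$, which relies on the elementary inequality $2^s>r/2$ when $r$ is not a power of $2$ and on $2^s=r$ in the power-of-$2$ case. I do not expect a genuine structural obstacle here.
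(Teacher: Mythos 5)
Your proposal is correct and follows essentially the same approach as the paper's proof: fix $(a',b')\in S_0^{(1)}(\ell-s)$ with witnesses $i',j'$, append $s$ low bits $a''=r_a$, $b''=r_b$ to $a'$, $b'$ and $s$ zero low bits to $i'$, $j'$, check the $2$-shadow relations survive and $2i+j+a=(q'-1)2^s+r_a\in[q-r,q-1]$, then count $2^{2s}$ choices of low bits per $(a',b')$ with injectivity giving $|S^*(\ell)|\geq 2^{2s}|S_0^{(1)}(\ell-s)|$. The only differences are cosmetic — you spell out the no-borrow check on $b-i$ and the injectivity via $(\floor{a/2^s},\floor{b/2^s})$, both of which the paper leaves implicit.
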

\begin{proof}
  By definition, we require $\ell-s\geq 1$ to have a valid $S_0^{(1)}(\ell-s)$. Therefore, we require $\ell\geq 1$ and $r\leq \frac{q}{2}$. Consider a pair $(a',b')\in S_0^{(1)}(\ell-s)$.
  According to the definition in~\eqref{eq:def_S_t}, there exist $i'\leq_2 b,j'\leq_2 b-i$ such that $2i'+j'+a'=q'-1$.

  Construct integers $a,b,i,j$ as
  $a=a''+2^s\cdot a'$, $b=b''+2^s\cdot b'$, $i=2^s\cdot i'$ and $j=2^s\cdot j'$, where $a'',b''\in[0, 2^s-1]$.
  Note that this is equivalent to appending
  the binary representation of $a''$ (resp.~$b''$) on the left to the binary representations of $a'$ (resp.~$b'$), and appending $s$ zeros on the left to the binary representations of $i'$ and $j'$.
  It can be seen that $i\leq_2 b$, $j\leq_2 b-i$ by construction, and
  \begingroup
  \allowdisplaybreaks
\begin{align}
    2i+j+a&=(2i'+j'+a')2^s + a'' \nonumber \\
    &=(q'-1)\cdot 2^s+a''\nonumber\\
    &= q- 2^s + a''\ . \label{eq:2i+j+a}
\end{align}
\endgroup
Since $s=\floor{\log_2 r}$, $q-r\leq\eqref{eq:2i+j+a}\leq q-1$. Hence any choice of $a'',b''\in[0,2^s-1]$ results in a pair $(a,b)$ such that $x^ay^b$ is $(\Phi,q-r)^*$-bad.
In total there are $(2^s)^2 > \left(\frac{r}{2}\right)^2$ ways of choosing $a'',b''$.
If $r$ is a power of $2$, we can set $s=\log_2r$ and obtain a tighter lower bound.
\end{proof}
In the following theorem we provide
upper and lower bounds on the number $|S^*(\ell)|$ of $(\Phi,q-r)^*$-bad monomials in terms of $\ell$ and $r$.
\begin{theorem}\label{thm:exact_bounds_bad}
  Let $\ell\geq 2, q=2^\ell$,
  $1\leq r\leq \frac{q}{4}$,
  $s = \log_2r$ and $S^*(\ell)$ be the set of
  $(\Phi,q-r)^*$-bad monomials as in \eqref{eq:star-bad-set}. Then
    \begin{align}
        \frac{0.8536 \cdot \lambda_1^{\ell-\floor{s}} +  0.1464 \cdot \lambda_2^{\ell-\floor{s}}}{4}< \frac{|S^*(\ell)|}{r^2}< 4(1.3536 \cdot \lambda_1^{\ell-\ceil{s}} +  0.6465\cdot \lambda_2^{\ell-\ceil{s}} -1)\ , \nonumber
    \end{align}
    where $\lambda_1 = 2+\sqrt{2}$ and $\lambda_2 = 2-\sqrt{2}$.

    If $r$ is a power of $2$, we obtain
    \begin{align}
        0.8536 \cdot \lambda_1^{\ell-s} +  0.1464 \cdot \lambda_2^{\ell-s}
        \leq \frac{|S^*(\ell)|}{r^2}\leq 1.3536 \cdot \lambda_1^{\ell-s} +  0.6465\cdot \lambda_2^{\ell-s} -1\ .
        \nonumber
    \end{align}
\end{theorem}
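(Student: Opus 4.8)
The plan is to obtain both inequalities by sandwiching $|S^*(\ell)|$ between the two reductions already established in \cref{lem:lower_bad} and \cref{lem:upper_bad}, and then substituting the closed-form expressions \eqref{eq:S0r1} and \eqref{eq:S0r3} for the counts $|S_0^{(1)}|$ and $|S_0^{(3)}|$. Throughout I would write $s=\log_2 r$ (as in the statement), so that $\ceil{s}=\ceil{\log_2 r}$ and $\floor{s}=\floor{\log_2 r}$ are exactly the shift parameters called $s$ in \cref{lem:upper_bad} and \cref{lem:lower_bad}, respectively; keeping this renaming explicit avoids the notational clash.

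First I would verify that the hypotheses of both lemmas are met for $\ell\geq 2$ and $1\leq r\leq q/4$: the bound $r\leq q/4$ forces $\ell-\ceil{\log_2 r}\geq 2$, so that $S_0^{(3)}(\ell-\ceil{s})$ is a valid set in the sense of \eqref{eq:def_S_t} and the recursion \eqref{eq:matA} applies at that level, and a fortiori $\ell-\floor{\log_2 r}\geq 2\geq 1$, so \cref{lem:lower_bad} applies as well. For the upper bound, \cref{lem:upper_bad} gives $|S^*(\ell)|<4r^2\,|S_0^{(3)}(\ell-\ceil{s})|$; substituting \eqref{eq:S0r3} with $\ell$ replaced by $\ell-\ceil{s}$ (and recalling the third eigenvalue $\lambda_3=1$) and dividing by $r^2$ yields the right-hand inequality. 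For the lower bound, \cref{lem:lower_bad} gives $|S^*(\ell)|>\tfrac{r^2}{4}\,|S_0^{(1)}(\ell-\floor{s})|$; substituting \eqref{eq:S0r1} with $\ell$ replaced by $\ell-\floor{s}$ and dividing by $r^2$ yields the left-hand inequality. Since the coefficients in \eqref{eq:S0r1} and \eqref{eq:S0r3} are positive and $\lambda_1=2+\sqrt{2}$, $\lambda_2=2-\sqrt{2}$ are positive, the substitutions preserve the direction of the inequalities; the displayed decimals are the exact algebraic coefficients of \eqref{eq:S0r1}–\eqref{eq:S0r3} rounded in the safe direction (up on the right, down on the left).

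For the case where $r$ is a power of $2$ I would repeat the argument with the sharper conclusions of the two lemmas, namely $|S^*(\ell)|\geq r^2\,|S_0^{(1)}(\ell-s)|$ and $|S^*(\ell)|\leq r^2\,|S_0^{(3)}(\ell-s)|$, using the integer shift $s=\log_2 r=\floor{s}=\ceil{s}$; dividing by $r^2$ and inserting \eqref{eq:S0r1} and \eqref{eq:S0r3} gives the two-sided estimate without the factor $4$.

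The only genuinely delicate points here are bookkeeping rather than mathematics: one must keep straight that the \emph{upper} bound is built from the $(\Phi,q'-3)$-bad count (dropping $\ceil{\log_2 r}$ least significant bits leaves the weakest degree constraint, hence the largest family, and $S_2(\ell-1)\subset S_1(\ell-1)\subset S_0(\ell-1)$ by \cref{lem:down-by-one}), while the \emph{lower} bound is built from the $(\Phi,q'-1)$-bad count (dropping only $\floor{\log_2 r}$ bits), and that the two lemmas use the ceiling and the floor of $\log_2 r$ respectively. The one place where the theorem actually rests on substantive earlier work is the explicit solution of the linear recursion \eqref{eq:matA} producing \eqref{eq:S0r1} and \eqref{eq:S0r3}; establishing those formulas requires enumerating the base vectors $\bs(\ell_0)$ for $r=1$ and $r=3$ and diagonalizing $\bA$ (eigenvalues $2+\sqrt{2},\,2-\sqrt{2},\,1$), but for the present statement these are invoked as given.
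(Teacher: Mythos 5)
Your proposal is correct and follows the same route as the paper's own (very terse) proof: combine the reductions in \cref{lem:upper_bad} and \cref{lem:lower_bad} with the closed-form expressions \eqref{eq:S0r1}--\eqref{eq:S0r3}, taking care to match the ceiling/floor of $\log_2 r$ to the upper/lower reduction respectively, and invoke the sharper lemma conclusions in the power-of-two case. The only small inaccuracy is the aside about the decimals being ``rounded in the safe direction''---e.g.\ $0.8536$ actually rounds $\frac{5\sqrt{2}+7}{2(3\sqrt{2}+4)}\approx 0.85355$ \emph{up}, which is the unsafe direction for the lower bound---but the paper itself treats these as $\approx$-approximations to the exact algebraic coefficients, so this does not change the argument.
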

\begin{proof}
It follows directly from the estimation of $|S_0(\ell)|$ in \eqref{eq:S0r1}--\eqref{eq:S0r3} and the bounds in \cref{lem:upper_bad} and \cref{lem:lower_bad}.
\end{proof}

For an illustration, we plot in~\cref{fig:dimPlot1} the rate of the code $\cC_q(\Phi,q-r)$ with $q=2^5$, which is done by computer-search according to the necessary and sufficient condition in~\cref{lem:goodCond}. The lower and upper bounds on the rates for $r\in[1,\tfrac{q}{4}]$ are calculated from the bounds on $|S^*(\ell)|$ in \cref{thm:exact_bounds_bad}.
\begin{figure}[h]
  \centering
      \begin{tikzpicture}[font=\normalsize,scale = 0.8]
    \pgfplotsset{compat = 1.3}
    \begin{axis}[
    legend style={nodes={scale=0.9, transform shape}},
    cycle list name = {bounds_list},
    width = 0.9\columnwidth,
    height = 0.5\columnwidth,
    xlabel = {{Local Redundancy $r^*$}},
    ylabel = {{Rate of QC-LRS codes}},
    ymode=log,
    log basis y={32},
    xmin = 1,
    xmax = 16,
    ymin = 0,
    ymax = 1,
    legend pos = north east,
    legend cell align=left,
    grid=both]

\addplot table[x=r, y=rate] {./figs/QC-LRS/dim_l=5_CL.dat};

\addlegendentry{{rate $\ell=5$}};



\addplot table[x=r, y=rate_ub] {./figs/QC-LRS/dimBounds_l=5_CL.dat};

\addlegendentry{{rate (ub) $\ell=5$}};

\addplot table[x=r, y=rate_lb] {./figs/QC-LRS/dimBounds_l=5_CL.dat};

\addlegendentry{{rate (lb) $\ell=5$}};
    \end{axis}
    \end{tikzpicture}
  \caption{The dimension of QLRS code $\cC_q(\Phi,q-r)$ with $q=2^5$ along with the corresponding upper bound (ub) and lower bound (lb) for $r\in[1, \tfrac{q}{4}]$ calculated by $1-|S^*(\ell)|/q^2$. The lower and upper bound on $|S^*(\ell)|$ are given in \cref{thm:exact_bounds_bad}.}
  \label{fig:dimPlot1}
\end{figure}
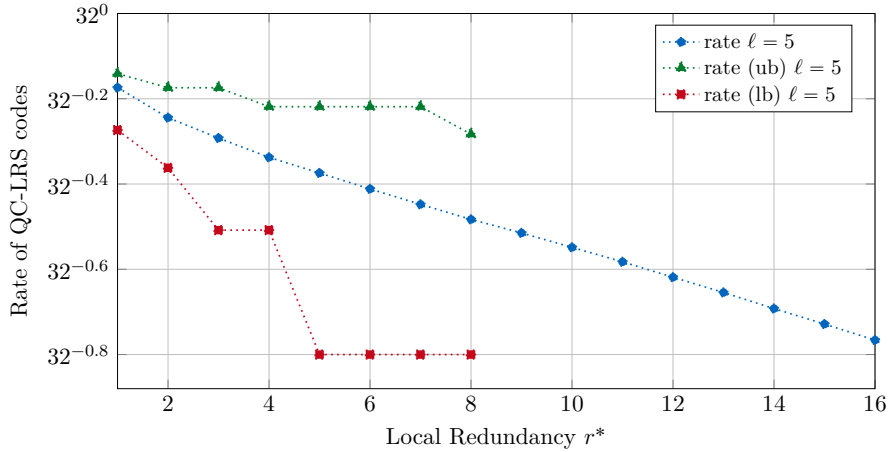

\begin{corollary}\label{thm:asym_rate}
  Let $\mu = \log_2(2+\sqrt{2})$. For $q\to\infty$ and
  $1\leq r\leq \frac{q}{4}$,
  the number of $(\Phi,q-r)^*$-bad monomials is 
  \begin{align}
  |S^*(\ell)|=\Theta(r^{2-\mu} q^{\mu})\ .\nonumber
  \end{align}
  Moreover, the QLRS code $\cC_q(\Phi,q-r)$ has rate
  \begin{align*}
    R= 1-\Theta\left((q/r)^{\mu-2}\right)=1-\Theta\left((q/r)^{-0.2284}\right)\ .
  \end{align*}
\end{corollary}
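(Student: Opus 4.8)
The plan is to obtain \cref{thm:asym_rate} directly from the explicit two-sided estimates in \cref{thm:exact_bounds_bad} together with the dimension formula of \cref{thm:dim_good}; the remaining work is just to read off the asymptotic order of those estimates.

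First I would record two elementary facts about the eigenvalues $\lambda_1=2+\sqrt{2}$ and $\lambda_2=2-\sqrt{2}$ of the matrix $\bA$ in \eqref{eq:matA}: $\lambda_1>1$, while $0<\lambda_2<1$. Writing $\ell=\log_2 q$ and $s=\log_2 r$, the hypothesis $1\leq r\leq q/4$ forces $0\leq s\leq \ell-2$, so in particular $\ell-s\geq 2$ and both $\ell-\lfloor s\rfloor$ and $\ell-\lceil s\rceil$ differ from $\ell-s$ by at most $1$. Consequently $\lambda_1^{\ell-\lfloor s\rfloor}$ and $\lambda_1^{\ell-\lceil s\rceil}$ both equal $\lambda_1^{\ell-s}$ up to an absolute multiplicative constant, whereas the $\lambda_2$-powers appearing in \cref{thm:exact_bounds_bad} lie in $(0,1)$ and the additive $-1$ in the upper bound is $O(1)$; since $\lambda_1^{\ell-s}\geq\lambda_1^2$, all of these contributions are of lower order. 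Hence both sides of the sandwich in \cref{thm:exact_bounds_bad} are $\Theta(\lambda_1^{\ell-s})$ with constants independent of $q$ and $r$, giving
\[
 |S^*(\ell)| \;=\; \Theta\!\left(r^2\,\lambda_1^{\ell-s}\right)
\]
uniformly over the range $1\leq r\leq q/4$.

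The second step is the bookkeeping substitution $\lambda_1^{\ell-s}=\lambda_1^{\log_2 q-\log_2 r}=(q/r)^{\log_2\lambda_1}=(q/r)^{\mu}$, with $\mu=\log_2(2+\sqrt{2})$, which rewrites the display as $|S^*(\ell)|=\Theta\!\big(r^2(q/r)^{\mu}\big)=\Theta\!\big(r^{2-\mu}q^{\mu}\big)$, the first claim. For the rate, \cref{thm:dim_good} gives $\dim\cC_q(\Phi,q-r)=q^2-|S^*(\ell)|$, because the $q^2$ monomials $x^ay^b$ with $a,b\in\{0,\dots,q-1\}$ split into $(\Phi,q-r)^*$-good and $(\Phi,q-r)^*$-bad ones, the latter being precisely the set $S^*(\ell)$ of \eqref{eq:star-bad-set}. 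Dividing by $q^2$,
\[
 R \;=\; 1-\frac{|S^*(\ell)|}{q^2} \;=\; 1-\Theta\!\left(r^{2-\mu}q^{\mu-2}\right) \;=\; 1-\Theta\!\left((q/r)^{\mu-2}\right),
\]
and since numerically $\mu-2\approx-0.2284$ this is the stated form.

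No single step is a real obstacle, as \cref{thm:exact_bounds_bad} already does the heavy lifting (which in turn rests on the recursion \cref{lem: key ingredient} and the bad-monomial estimates of \cref{lem:upper_bad,lem:lower_bad}). The one point that deserves a line of care is checking that the $\Theta$ is \emph{uniform} over all of $1\leq r\leq q/4$, including the boundary regime $r=\Theta(q)$, where $\ell-s$ stays bounded: there $|S^*(\ell)|=\Theta(r^2)=\Theta(q^2)$ and $R$ is bounded away from $1$, which is consistent with $(q/r)^{\mu-2}=\Theta(1)$.
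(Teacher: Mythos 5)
Your proposal is correct and follows essentially the same route as the paper: read off the dominant $\lambda_1$-term from the two-sided bounds in \cref{thm:exact_bounds_bad}, rewrite $\lambda_1^{\ell-s}$ as $(q/r)^{\mu}$, and divide by $q^2$ using \cref{thm:dim_good}. You simply make explicit the bookkeeping that the paper leaves implicit (the $\lambda_2$-terms and the $-1$ are lower order; floor/ceiling of $s$ only cost a bounded factor; the $\Theta$ is uniform over $1\leq r\leq q/4$), which is a reasonable and welcome elaboration but not a different argument.
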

\begin{proof}
It can be seen from \cref{thm:exact_bounds_bad} that the order of $|S^*(\ell)|$ is controlled by $\lambda_1^\ell$. Its asymptotic estimation is obtained by neglecting the other terms and the constant coefficients.
The rate is calculated by dividing the number of good monomials, $q^2-|S^*(\ell)|$ by the number of all bi-variate monomials, $q^2$.
\end{proof}
\begin{remark}\label{rem:LRSasymRate}
  Recall that the rate of bivariate lifted Reed-Solomon codes given in~\cite{holzbaur2020lifted} is
  $$R= 1- \Theta((q/r)^{\log_2 3 -2}=1-\Theta((q/r)^{-0.4150})\ .$$
  We compare the performance of QLRS codes with lifted Reed-Solomon codes in terms of local recovery in~\cref{QCLRS:local-recovery}.
\end{remark}
\subsection{Minimum Hamming Distance of Quadratic-Lifted Reed-Solomon Codes}
\label{QCLRS:distance}
Similar to RS codes and RM codes, a QLRS code $\cC_{q}(\Phi,q-r)$ can be written as a block code with length $n=q^2$ and dimension $k=q^2-|S^*(\ell)|$, where $\ell=\log_2q$, via evaluations:
\begin{align*}
  \QC_q[n,k]\defeq \set*{\left.\parenv*{f(\bv)}_{\bv\in\Fq^2}\ \right|\ f\in \cC_q(\Phi,q-r)}\ .
\end{align*}
The minimum Hamming distance of a block code is given in \cref{def:Hamming-metric}.
We define the minimum Hamming distance of $\cC_{q}(\Phi,q-r)$ as the minimum Hamming distance of $\QC_q[n,k]$, i.e.,
\begin{align*}
  \dH(\cC_{q}(\Phi,q-r))\defeq \dH(\QC_q[n,k])\ .
\end{align*}
We provide upper and lower bounds on the minimum Hamming distance of QLRS codes 
in the following theorem.
\begin{theorem}[Bounds on minimum Hamming distance]
  Let $q$ be a power of $2$, $r\in[q-1]$ and $\Phi$ be the set of all quadratic functions. The QLRS code $\cC_{q}(\Phi,q-r)$ has minimum Hamming distance
  \begin{align*}
    qr+1\leq\dH(\cC_{q}(\Phi,q-r))\leq qr+q\ .
  \end{align*}
\end{theorem}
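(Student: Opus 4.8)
The plan is to prove the two inequalities separately.

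\textbf{Upper bound, $d_H\le qr+q$.} It is enough to exhibit one nonzero codeword of weight $qr+q$. Fix $A\subseteq\mathbb{F}_q$ with $|A|=q-r-1$ and take $f(x,y)=\prod_{a\in A}(x-a)$. This has $x$-degree $q-r-1<q$ and $y$-degree $0$, so it is already reduced modulo $x^q-x$ and $y^q-y$ and is nonzero, and for every quadratic function $\phi$ one has $f|_\phi=f(x,\phi(x))=\prod_{a\in A}(x-a)$, of degree $q-r-1<q-r$; hence $f\in\cC_{q}(\Phi,q-r)$ by Definition~\ref{def:curve-liftedRScodes}. Since $f(\hat a,\hat b)=\prod_{a\in A}(\hat a-a)$ vanishes exactly when $\hat a\in A$, it is nonzero precisely for the $r+1$ values $\hat a\in\mathbb{F}_q\setminus A$ and then for all $q$ values of $\hat b$, so $\wt(f)=(r+1)q=qr+q$.

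\textbf{Lower bound, $d_H\ge qr+1$.} I would work with the $q\times q$ evaluation matrix $M_{\hat a,\hat b}=f(\hat a,\hat b)$ of a codeword $f$. First, $\deg_x f\le q-r-1$ for every codeword: taking $i=j=0$ in Lemma~\ref{lem:goodCond} shows every good monomial has $x$-exponent at most $q-r-1$, and codewords are spanned by good monomials by Theorem~\ref{thm:dim_good} (equivalently, this is the restriction of $f$ to the constant curves $\phi\equiv\gamma$). Thus every column of $M$ lies in $\RS_q(q-r)$, which has minimum distance $r+1$ by Theorem~\ref{thm:singleton-Hamming}, so each nonzero column of $M$ carries at least $r+1$ nonzeros. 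If $f$ vanishes on no horizontal line $y=\hat b_0$, all $q$ columns are nonzero and $\wt(f)\ge q(r+1)=qr+q\ge qr+1$.

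It remains to treat the case where $f$ vanishes on some horizontal line $y=\hat b_0$, which is the main obstacle. Then $f=(y-\hat b_0)\,g$, and from the identity $f|_\phi=(\phi(x)-\hat b_0)\,g|_\phi$ one wants to deduce that $g$ satisfies strictly stronger degree bounds along the curves (for a curve of degree $d\in\{1,2\}$, $\deg(g|_\phi)<q-r-(d-1)$), i.e.\ that $g$ lies in a nested lifted code with more local redundancy; one then induces (on $r$, or on the number of horizontal lines where $f$ vanishes) using $\wt(f)=\wt(g)-\wt\bigl(g|_{y=\hat b_0}\bigr)$, where $g|_{y=\hat b_0}\in\RS_q(q-r)$ has weight at most $q$. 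The crux is to choose an induction invariant strong enough to absorb this loss of at most $q$: the bare bound $\wt(g)\ge q(r-1)+1$ is insufficient, so one must also track how many columns of $g$ vanish, or exploit that in the extremal configurations (e.g.\ $\prod_{a\in A}(x-a)$) the nonzero columns are nearly full. The most delicate technical point is the degree transfer from $f$ to $g$ in the presence of the reduction modulo $x^q-x$, worst when $\phi$ is genuinely quadratic and $g|_\phi$ has large degree; an alternative is to apply the Combinatorial Nullstellensatz (Theorem~\ref{lem:nullstellensatz}) to the cofactor of $\prod_{\hat b\in B}(y-\hat b)\prod_{\hat a\in A}(x-\hat a)$ in $f$ (with $B,A$ indexing the zero columns and rows), using that on a generic linear curve the roots coming from $\prod_{\hat b\in B}(\phi(x)-\hat b)$ are disjoint from $A$, which pushes the reduced degree of $f|_\phi$ above $q-r$ unless that cofactor has very few roots on vertical lines. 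Before starting one may translate a chosen non-vanishing horizontal or vertical line to an axis; exploiting the vertical symmetry in addition requires the companion bound $\deg_y f\le q-r-1$, which is a separate combinatorial lemma about sums $2i+j$ over disjoint binary submasks of $b$.
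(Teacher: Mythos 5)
Your upper bound argument is essentially the paper's: take $f(x,y)=\prod_{a\in\cA}(x-a)$ with $|\cA|=q-r-1$, observe $\deg(f|_\phi)=q-r-1<q-r$ for every $\phi\in\Phi$, and count $q^2-q(q-r-1)=qr+q$ nonzero evaluations. No issues there.

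Your lower bound, however, has a genuine gap that you yourself flag. You slice by columns (fixed $y$), observe that each column is an $\RS_q(q-r)$ codeword, and conclude $\wtH(f)\geq q(r+1)$ when no column vanishes. But some column of a codeword can certainly vanish (e.g.\ $f=y-b_0$ lies in $\cC_q(\Phi,q-r)$ whenever $q-r>2$, and its $\hat b=b_0$ column is zero), and for this case you only sketch two repair strategies---a factorization-and-induction argument whose invariant you admit is "insufficient" and an unfinished Combinatorial Nullstellensatz variant---neither of which is carried to completion. As written, the lower bound is not proven. The paper avoids the whole problem by slicing differently: fix a point $\bp=(a,b)$ with $f(\bp)\neq 0$ (which exists because $f\neq 0$), and consider the $q$ non-vertical lines through $\bp$, i.e.\ the affine functions $\phi_\beta(x)=\beta x+(b-\beta a)$ for $\beta\in\Fq$. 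Each $\phi_\beta\in\Phi$, so $f|_{\phi_\beta}\in\RS_q(q-r)$; each restriction is nonzero because $f|_{\phi_\beta}(a)=f(\bp)\neq 0$, hence has weight at least $r+1$; and distinct lines meet only at $\bp$. Summing the weights and correcting for the single shared point gives
\begin{align*}
\wtH(f)\;\geq\;\sum_{\beta\in\Fq}\bigl(\wtH(f|_{\phi_\beta})-1\bigr)+1\;\geq\; q\cdot r+1\ .
\end{align*}
Because the nonvanishing anchor $\bp$ is built into the choice of lines, there is no analogue of your zero-column case, no induction on $r$, and no need for a $\deg_y$ bound. This is exactly the Guo--Kopparty--Sudan argument for lifted Reed--Solomon codes (their Theorem 5.1), and it transfers verbatim here since lines are degenerate members of $\Phi$. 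The missing idea in your proposal is precisely this change of slicing: pass through a nonzero point along lines instead of along constant-$y$ slices.
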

\begin{proof}
  We first show the upper bound. Let $\cA\subset \Fq$ be a subset with $|\cA|=q-r-1$. Consider a bivariate polynomial $f(x,y)=\prod_{\alpha\in\cA} (x-\alpha)\in\Fq[x,y]$. It can be seen that $\deg(f|_\phi)=q-r-1$ for any $\phi\in\Phi$ therefore $f(x,y)$ is in the code $\cC_{q}(\Phi,q-r)$.
  The zeros of $f(x,y)$ in $\Fq^2$ are $\{(x,y) \ |\ x\in\cA,y\in\Fq\}$. Therefore, the evaluations of $f(x,y)$ in $\Fq^2$ are of weight $q^2-q(q-r-1)=qr+q$. Due to the linearity of the code, the upper bound on the minimum distance is proven.\\
  Now we prove the lower bound. For any nonzero $f\in\cC_{q}(\Phi,q-r)$ consider a point $\bp=(a,b)\in\Fq^2$ such that $f(a,b)\neq 0$. Denote by $\cL_{\bp}\subset\Phi$ the set of lines (i.e., quadratic functions with $\alpha=0$) in $\Phi$ intersecting with each other only at $\bp$. It can be seen that $|\cL_{\bp}|=q$. By definition, for any line $L\in\cL_{\bp}$, $\deg \parenv*{f|_L}<q-r$. Therefore there are at least $r+1$ nonzero
  evaluations of $f$ at the points on $L$.
  Denote by $\wtH(f)$ the number of nonzero evaluations of $f$ on $\Fq^2$ and by $\wtH(f|_L)$ the number of nonzero evaluations of $f$ on $L$, then
\begin{align*}
    \wtH(f)&\geq \sum_{L\in\cL_{\bp}}\big(\wtH(f|_L)\underbrace{-1}_{\textrm{excluding }f(\bp)}\big) \underbrace{+ 1}_{\textrm{including }f(\bp)} \geq qr+1
\end{align*}
Note that the bounds are derived in a similar manner as for lifted Reed-Solomon codes in \cite[Theorem 5.1]{guo2013new}.
\end{proof}

\subsection{Local Recovery of Erasures}
\label{QCLRS:local-recovery}
Consider an erasure channel with erasure probability $\tau$.
A \emph{local recovery set} of an erasure (i.e., a missing codeword symbol) in a codeword is
a set of indices where the erasure can be recovered by accessing only the other codeword symbols whose indices are in the set.
Given a QLRS code over $\Fq$, the number of local recovery sets of any codeword symbol is the number of quadratic functions over $\Fq$ passing through the evaluation point of that codeword symbol, which is $q^2$. For a lifted Reed-Solomon code, the number of local recovery sets of any codeword symbol is $q+1$.

In this section, we are interested in correcting a certain erasure within any local recovery set by QLRS codes.
We say a local recovery fails if the erasure cannot be recovered from any of its local recovery sets.
By the structure of a QLRS code $\cC_q(\Phi,q-1)$, this happens if and only if
there are at least $r$ other erasures in each local recovery set of the erased symbol.

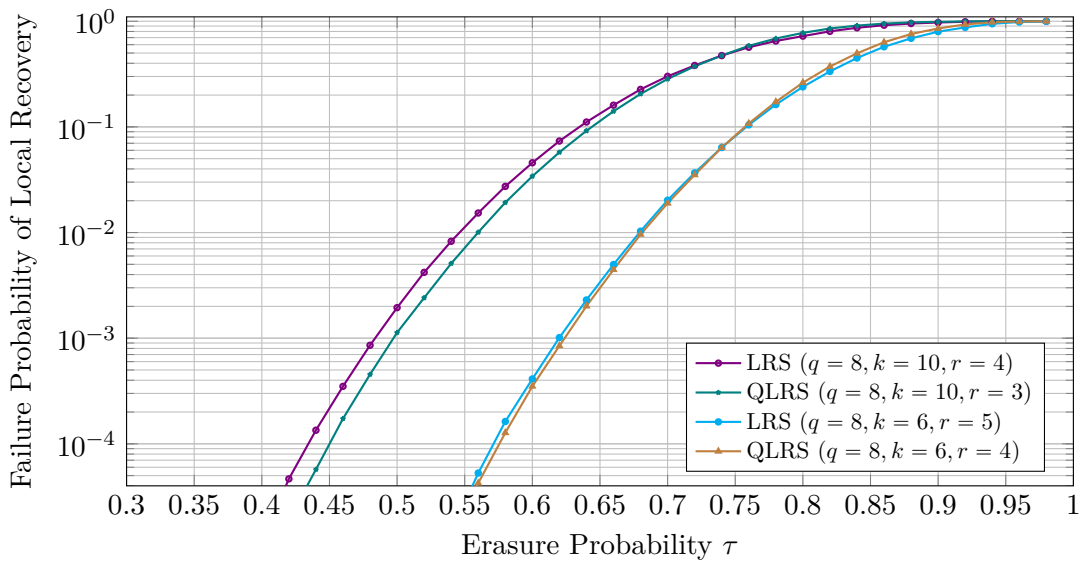
\begin{figure}[bh]
  \centering
        \begin{tikzpicture}
    \pgfplotsset{compat = 1.3}
    \begin{semilogyaxis}[
    legend style={nodes={scale=0.8, transform shape}},
    cycle list name = {sims_list},
    width = 0.9\columnwidth,
    height = 0.5\columnwidth,
    xlabel = {{Erasure Probability $\tau$}},
    ylabel = {{Failure Probability of Local Recovery}},
    xmin = 0.3,
    xmax = 1,
    ymin = 0.00004,
    ymax = 1.1,
    legend pos = south east,
    legend cell align=left,
    grid=both]
\addplot table[x=tau,y=fail_rate] {./figs/QC-LRS/local/fail_rate_l=3_r=4_step=50_LRS.dat};

\addlegendentry{{LRS $(q=8, k=10, r=4)$}};

\addplot table[x=tau,y=fail_rate] {./figs/QC-LRS/local/fail_rate_l=3_r=3_step=50_CL.dat};

\addlegendentry{{QLRS $(q=8, k=10, r=3)$}};

\addplot table[x=tau,y=fail_rate] {./figs/QC-LRS/local/fail_rate_l=3_r=5_step=50_LRS.dat};

\addlegendentry{{LRS $(q=8, k=6, r=5)$}};

\addplot table[x=tau,y=fail_rate] {./figs/QC-LRS/local/fail_rate_l=3_r=4_step=50_CL.dat};

\addlegendentry{{QLRS $(q=8, k=6, r=4)$}};




    \end{semilogyaxis}
    \end{tikzpicture}
  \caption{Local recovery performance of lifted Reed-Solomon (LRS) codes and QLRS codes of length $n=q^2=64$ and dimension $k=10$ (rate $=k/n=0.15625$) or $k=6$ (rate $=k/n=0.09375$).}
  \label{fig:dim10}
\end{figure}
For a lifted Reed-Solomon code, since all the local recovery sets of a certain codeword symbol are disjoint, the failure probability of a local recovery for an erasure, denoted by $P_{f,LRS}$, is
$$P_{f,LRS}=\left(\sum_{i=r}^{q-1} \binom{q-1}{i} \tau^i(1-\tau)^{q-1-i} \right)^{q+1}\ .$$
For QC-LRS codes, since the local recovery sets may intersect with each other, a closed form of the failure probability is still an open problem.

In order to compare the performance of these two codes, we run simulations with both codes of length $n=64$, dimension $k=10$ and $k=6$, respectively.
It can be seen from \cref{thm:asym_rate} and \cref{rem:LRSasymRate} that for the same local redundancy $r$, the dimension of the lifted Reed-Solomon code is larger than that of QLRS. To have a fair comparison, we choose different local redundancy $r$ for lifted Reed-Solomon codes and QLRS codes such that their dimensions are the same.
For instance, to obtain the same dimension $k=10$, we need to set the local redundancy $r=4$ for the lifted Reed-Solomon code and $r=3$ for the QLRS code.

The simulation results are presented in \cref{fig:dim10}. We can see that for both $k=10$ and $k=6$, the failure probability of a local recovery with QLRS is smaller than or similar to that with lifted Reed-Solomon codes for $\tau\leq 0.7$.
\section{Almost Affinely Disjoint Subspace Design based on Reed-Solomon Codes}
\label{sec:AAD-subspaces}
In this section we show an application of the most well-known evaluation codes -- Reed-Solomon codes -- in constructing a family of $k$-dimensional subspaces of $\Fq^n$. This family of subspaces was motivated by \emph{batch codes}, which is a class of local recovery codes with \emph{availability} for distributed storage systems. We point out the connection after introducing the necessary notations.

\begin{definition}[Almost affinely disjoint (AAD) subspace family]\label{def:AAD-family}
  Given positive integers $k$ and $n$ such that $n>2k$, let $\cF$ be a family of $k$-dimensional linear subspaces in $\F_q^{n}$. This family is said to be \emph{$L$-almost affinely disjoint}, denoted by \emph{$[n,k,L]_q$-AAD}, if the following two properties hold:
  \begin{enumerate}
  \item The family $\cF$ is a partial $k$-spread of $\F_q^n$, i.e., a collection of $k$-dimensional subspaces with pairwise trivial intersection.
  \item \label{item:AAD-2}
    For any $\cS\in \cF$ and $\bu\in\F_q^n\setminus \cS$, the affine subspace of $\cS$ w.r.t.~$\bu$,
    $$\bu+\cS\defeq\{\bu+\bv\ |\ \bv\in S\}$$
    intersects at most $L$ subspaces from the family $\cF$.
  \end{enumerate}
\end{definition}

We denote the \textit{maximal size} of an $[n,k,L]_q$-AAD
family by $m^{AAD}_q(n,k,L)$ and
define the \emph{polynomial growth} of the maximal size of an AAD family as
\begin{align}\label{eq:polynomial-growth}
p^{AAD}(n,k,L)&\defeq\limsup_{q\to\infty} \log_q(m^{AAD}_q(n,k,L))\ .
\end{align}

AAD subspace families with $n=2k+1$ were first introduced by Polyanskii and Vorobyev in \cite{polyanskii2019constructions} to construct \emph{primitive batch codes}~\cite[Definition 2.3]{ishai2004batch}. A binary primitive $[N,K,s]_q$-batch code encodes an information vector $\bx\in\F_2^K$
into a codeword $\bc\in\F_2^N$,
such that for any multiset $\set*{\set*{i_1,\dots, i_s}}\subseteq[K]$, there exists $s$ mutually disjoint sets $\cS_1,\dots,\cS_s\subseteq[N]$ (referred to as \emph{recovering sets}) such that each $x_{i_j},j\in[s]$ can be recovered by the bits in $\bc$ whose indices are in $\cS_j$.
The parameter $s$ is usually called \emph{availability} and it plays an important role in supporting high throughput of the distributed storage system.

It has been shown in \cite[Lemma 2]{polyanskii2019constructions} that a systematic $[N,K,s]_q$-batch code can be constructed from an $[n,k,L]_q$-AAD subspace family $\cF$, where $N=q^n+|\cF|q^{n-k}$, $K=q^n$ and $s=\floor{|\cF|/{L}}$.
The explicit encoding procedure is as follows:
we associate $K=q^n$ information bits with $K$ points in $\F_q^n$ and let the first $K$ bits in $\bc$ equal to $\bx$.
For every affine subspace $\bu+\cS$ with $\cS\in \cF$ and $\bu\in\F_q^n$, we compute a parity-check bit as a sum of information bits associated with the points lying in this affine subspace and append it to $\bc$.
As the number of distinct affine subspaces of such a form is $|\cF|q^{n-k}$, the constructed systematic code has length $N=q^n+|\cF|q^{n-k}$.
By the definition of the $[n,k,L]_q$-AAD family, it can be seen that for every bit in $\bx$, each of its recovery sets (composed of a parity-check bit $c_i$ for some $i>K$ and the information bits that are the other sumands of $c_i$) intersects with at most $L$ recovery sets of any other bit. Hence, $s=\floor{|\cF|/{L}}$.

A naive way to construct AAD families is by exploiting constructions of long linear codes $\cC$ with fixed $\dH(\cC)$.
Let $\bH$ be a parity-check matrix of a linear $[N,K]_q$ code $\cC$ with $\dH(\cC)=3k+1$.
Let the subspace $\cS_i$ be the $\Fq$-linear span of $k$ consecutive columns, from the $(ik+1)$-th to the $(i+1)k$-th column, of $\bH$. Then $\cF=\{\cS_1,\ldots,\cS_{\floor{N/k}}\}$ is an $[N-K,k,1]_q$-AAD family.
Thus, for a fixed minimum Hamming distance, the longer the code $\cC$, the larger the constructed AAD family.
Yekhanin and Dumer have developed a class of long non-binary codes with fixed Hamming distance in~\cite{yekhanin2004long}.
For $k=1$, linear $[N,K]_q$ codes with $\dH=3k+1=4$ are known to be equivalent to caps in projective geometries and have been studied extensively under this name, e.g., in~\cite{mukhopadhyay1978lower,edel1999recursive,hirschfeld2001packing}. From the results in~\cite{edel1999recursive,yekhanin2004long}, for fixed $k$ and large enough $n$, it holds that $p^{ADD}(n,k,1)\ge (3k-1)(n+1)/(9k^2-9k+1)$.

In the rest of the section, we present a construction of $[n,k,L]_q$-AAD families based on Reed-Solomon codes for $k=1,2$ in \cref{sec:AAD-RS-construction}, and new upper and lower bounds on the polynomial growth $p^{ADD}(n,k,L)$ of the maximal size of an AAD family for general $L\geq 1$ in \cref{sec:AAD-upper-bound}.

\subsection{Explicit Constructions}\label{sec:AAD-RS-construction}
\begin{construction}\label{construction:AAD-RS}
  Let $q\ge nk$, $m=q^{n-2k}$ and $\gamma$ be a primitive element of $\F_q$. For $i\in[m]$, let $\cS_i$ be a subspace spanned by the vectors $\{\bv_{i,1},\ldots, \bv_{i,k}\}$ with
  \begin{equation}
    \label{eq:def-v-vec}
    \bv_{i,t} \defeq \begin{pmatrix} \be_t & \Gamma_{t}(\bc_{i}) &  h_t(\bc_i)\end{pmatrix}\in \Fq^n, \ t\in[k]\ ,
  \end{equation}
where $\be_t$ is a unit vector $\in\Fq^k$ with a one at the $t$-th position, $\bc_i$ is a codeword of an $\RS_q[n-k-1,n-2k]$ code having a parity-check matrix as the following
\begin{equation}\label{eq::parity-check matrix}
  \bH_{\RS}:=\begin{pmatrix}
    1 & 1 & 1 & \cdots & 1\\
    1 & \gamma & \gamma^2 & \cdots & \gamma^{n-k-2}\\
    \vdots & \vdots & \vdots & \ddots & \vdots \\
    1 & \gamma^{k-2} & \gamma^{2(k-2)} & \cdots & \gamma^{(n-k-2)(k-2)}
  \end{pmatrix}\ ,
\end{equation}
$\Gamma_t(\cdot)$ is a map
\begin{align*}
  \Gamma_t(\cdot): \F_q^{n-k-1}&\to \F_q^{n-k-1}\\
  \bx & \mapsto
        \begin{pmatrix}
          \gamma^{t-1}x_1 &\gamma^{2(t-1)}x_2 &\gamma^{3(t-1)}x_3 & \cdots & \gamma^{(n-k-1)(t-1)}x_{n-k-1}
        \end{pmatrix}\ ,
\end{align*}
and $h_t(\cdot)$ is a function $h_t(\bx)\defeq \sum_{p=1}^{n-k-1}x_{p}^{(t-1)(n-k-1)+p+1}$.
Then let $\cF_{n,k}\defeq \set*{\cS_1,\dots,\cS_m}$.
\end{construction}
\begin{theorem}\label{th:general construction}
The family $\cF_{n,k}$ from Construction~\ref{construction:AAD-RS} is a partial $k$-spread in $\F_q^n$. Moreover, for $k=1$ and $k=2$, $\cF_{n,k}$ is $[n,k,L({n,k})]_q$-AAD with $L({n,1})=n-1$ and $L({n,2})=1+2(n-2)(2n-6)$.
\end{theorem}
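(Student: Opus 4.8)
## Proof Strategy

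The plan is to establish the two separate assertions in turn. First I would verify that $\cF_{n,k}$ is a partial $k$-spread, i.e., that distinct subspaces $\cS_i, \cS_j$ intersect trivially. The key observation is that the first $k$ coordinates of each generator $\bv_{i,t}$ form the unit vector $\be_t$, so the matrix $\bV_i$ whose rows are $\bv_{i,1},\dots,\bv_{i,k}$ has the $k\times k$ identity as its leftmost block; hence each $\cS_i$ has a unique representative of the form $(\bx, \star, \star)$ for each $\bx\in\Fq^k$, and the map $\bx\mapsto$ (last $n-k$ coordinates) is $\Fq$-linear. A nonzero vector in $\cS_i\cap\cS_j$ would have the same first $k$ coordinates $\bx\neq\0$ in both representations, which forces $\Gamma_t(\bc_i)$-combinations to agree with $\Gamma_t(\bc_j)$-combinations and likewise for the $h_t$ part. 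Since $\bc_i$ and $\bc_j$ are distinct codewords of the RS code $\RS_q[n-k-1,n-2k]$ (distinct because the $m = q^{n-2k}$ subspaces are indexed by distinct codewords), I would show that the "middle block" constraint already fails: the combination $\sum_t x_t \Gamma_t(\bc_i) = \sum_t x_t \Gamma_t(\bc_j)$ unpacks coordinate-wise to $(\sum_t x_t \gamma^{p(t-1)})(c_{i,p} - c_{j,p}) = 0$ for each $p$. Because $\bx\neq\0$ and $\gamma$ is primitive with $q\geq nk$, the Vandermonde-type scalar $\sum_t x_t\gamma^{p(t-1)}$ can vanish for at most $k-1$ values of $p$, so $c_{i,p}=c_{j,p}$ on all but $\leq k-1$ positions; combined with $\bc_i,\bc_j$ both lying in a code of minimum distance $\geq k+2$, this forces $\bc_i=\bc_j$, a contradiction.

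Next I would handle the almost-affine-disjointness for $k=1$ and $k=2$. Fix $\cS_i\in\cF_{n,k}$ and $\bu\in\Fq^n\setminus\cS_i$; I need to bound the number of indices $j\neq i$ with $(\bu+\cS_i)\cap\cS_j\neq\varnothing$. A point of $\bu+\cS_i$ has the form $\bu + \sum_t \lambda_t \bv_{i,t}$; for it to lie in $\cS_j$, reading off the first $k$ coordinates determines which representative of $\cS_j$ it must equal, and then the middle-block and last-coordinate equations become polynomial conditions relating $\bc_i,\bc_j$ and the free parameters. For $k=1$ the count is cleanest: the middle block gives $n-2$ linear equations (over the one free parameter $\lambda_1$ after the first coordinate fixes things), and I would argue that the "tail" function $h_1$ — a sum of distinct monomial powers $x_p^{p+1}$ chosen precisely to have pairwise distinct exponents below $q$ — makes the residual equation a nonzero univariate polynomial in the parameter of degree bounded by the top exponent, so it has at most $n-1$ roots, giving $L(n,1)=n-1$. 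For $k=2$, the analogous bookkeeping produces the bound $1 + 2(n-2)(2n-6)$: the leading "$1$" accounts for a degenerate/exceptional stratum, and the product $2(n-2)(2n-6)$ counts intersections stratified by how the two free parameters $\lambda_1,\lambda_2$ are constrained by the middle-block Vandermonde equations (each of degree-type $(n-2)$ in one parameter, with a factor $2n-6$ and a factor $2$ from the two-dimensional affine fibre and the two generators). I would make the monomial exponents in $h_t$, namely $(t-1)(n-k-1)+p+1$, explicit to confirm they are all distinct and $< q$ (using $q\geq nk$), which is what guarantees the relevant polynomials are nonzero.

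The main obstacle I anticipate is the $k=2$ AAD count: tracking exactly which combinations of the parameters $\lambda_1,\lambda_2$ can solve the coupled system of middle-block equations $\big(\sum_{t=1}^2 \lambda_t\gamma^{p(t-1)}\big)\cdot(\text{stuff}_p) + (\text{affine shift from }\bu)_p = 0$ simultaneously with the tail equation, without double-counting and while correctly isolating the single exceptional case that produces the additive "$+1$". This is a careful case analysis on the rank of the $2\times(n-3)$ coefficient matrix arising from the Vandermonde structure, and the bound $2(n-2)(2n-6)$ presumably comes from summing, over each pair of "bad" positions where a Vandermonde scalar vanishes, the number of codewords $\bc_j$ consistent with the remaining equations, then multiplying by the size of the affine fibre. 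I would set up the generic case first (full rank, forcing $\bc_i=\bc_j$ so no new intersection), then enumerate the degenerate strata; the routine-but-delicate part is verifying the arithmetic of the exponent ranges so that every polynomial invoked is genuinely nonzero over $\Fq$, which is where the hypothesis $q\geq nk$ is consumed.
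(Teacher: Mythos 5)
The partial-spread argument you give is sound, modulo a small mistake: the RS code $\RS_q[n-k-1,n-2k]$ has minimum Hamming distance exactly $k$ (not $\geq k+2$ as you write). Your argument only needs that a nonzero codeword has weight $\geq k$, so the conclusion still follows; and your polynomial-root-counting phrasing (bounding how many positions the Vandermonde scalar $\sum_t x_t\gamma^{(t-1)p}$ can vanish at) is an equivalent, slightly more elementary reading of the paper's Vandermonde-determinant argument. That part is essentially right.

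The AAD bounds are where there is a genuine gap. The paper's proof rests on a separate counting lemma (\cref{lem:given-beta-num-of-solutions}): for each fixed nonzero $\bbeta\in\Fq^k$, at most $k(n-k)$ codewords $\bc_\ell$ make the relevant $2\times(n-k)$ matrix $\bR_{\bbeta,\ell}$ have rank one. That lemma is proved by showing the middle-block collinearity equations together with the $k-1$ parity-check equations determine every $c_{\ell,p}$ affinely in a single free coordinate $c_{\ell,p_*}$, and then the \emph{tail} column — using precisely that the exponents $(t-1)(n-k-1)+p+1$ in $h_t$ are distinct and below $q$ — becomes a nonzero univariate polynomial in $c_{\ell,p_*}$ of degree $\leq k(n-k)$. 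You gesture at the role of the distinct exponents, but you never state or establish this lemma; yet it is the crux from which both $n-1$ and $2(n-2)(2n-6)$ flow, so the sketch does not actually deliver the AAD property.

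Moreover, your decomposition of the $k=2$ bound is not the right one. In the paper the factor $2(n-2)$ is exactly $k(n-k)$ with $k=2$ (the lemma's codeword count per $\bbeta$), and the factor $2n-6$ bounds $|\cD|$, the number of candidate direction vectors: at most $n-4$ ``suspicious'' $\widetilde\bbeta$'s coming from a nonzero polynomial of degree $n-4$ in $\widetilde\beta_1$, plus the set $\cB$ of at most $n-3$ vectors making some Vandermonde scalar vanish, plus the single vector $(1,q-1)$. Reading the factor $2$ as arising from ``the two-dimensional affine fibre and the two generators'' is not correct — it is just $k$. The ``$+1$'' is simply $\cS_j$ itself, the subspace $\bv$ was chosen from, not a degenerate stratum. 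So while your strategy (fix $\bv\in\cS_j$, count other $\cS_\ell$ meeting $\bv+\cS_i$, stratify by $\bbeta$) matches the paper's, the actual arithmetic is not yet carried out and the explanation of the resulting bound is off.
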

\begin{proof}
  The vectors $\bv_{i,1},\ldots, \bv_{i,k}\in\Fq^n$ are linearly independent as the restriction to the first $k$ coordinates are unit vectors. Hence, their span defines a $k$-dimensional subspace in $\F_q^n$.
  Suppose that $\cS_i$ and $\cS_j$ have a non-trivial intersection.
  Then, 
  \begin{align*}
    \rk
    \begin{pmatrix}
      \bv_{i_1}\\
      \vdots\\
      \bv_{i,k}\\
      \bv_{j,1}\\
      \vdots\\
      \bv_{j,k}
    \end{pmatrix}=\rk
    \begin{pmatrix}
      \be_1 & \Gamma_{1}(\bc_{i}) &  h_1(\bc_i)\\
      \vdots\\
      \be_k & \Gamma_{k}(\bc_{i}) &  h_k(\bc_i)\\
      \be_1 & \Gamma_{1}(\bc_{j}) &  h_1(\bc_j)\\
      \vdots\\
      \be_k & \Gamma_{k}(\bc_{j}) &  h_k(\bc_j)
    \end{pmatrix}\leq 2k-1\ ,
  \end{align*}
  which yields that 
  \begin{align}
    \label{eq:Gamma-two-cws}
    \rk
    \begin{pmatrix}
      \Gamma_1(\bc_i-\bc_j)\\
      \Gamma_2(\bc_i-\bc_j)\\
      \vdots\\
      \Gamma_{k}(\bc_i-\bc_j)
    \end{pmatrix}<k\ .
  \end{align}
  Denote by $c_{i,j}$ the $j$th entry of $\bc_i$. Since $\bc_i-\bc_j$ is a nonzero codeword of the $\RS_q[n-k-1,n-2k]$ code with $\dH=k$, there exist $k$ coordinates $p_1,\ldots,p_{k}\in[n-k-1]$ such that $u_{t}\defeq c_{i,p_t}-c_{j,p_t}\neq 0$ for $t\in[k]$. Thus, after restricting each row of the matrix in \eqref{eq:Gamma-two-cws} to coordinates $p_1,\ldots, p_k$, the rank deficiency of the matrix in \eqref{eq:Gamma-two-cws} is equivalent to
    $$
    \det \begin{pmatrix}
    u_1 & u_2 & \cdots & u_k\\
    \gamma^{p_1}u_1 & \gamma^{p_2} u_2& \cdots & \gamma^{p_k} u_k \\
    \vdots & \vdots & \ddots & \vdots \\
    \gamma^{p_1(k-1)}u_1 & \gamma^{p_2(k-1)} u_2& \cdots & \gamma^{p_k(k-1)} u_k
    \end{pmatrix}
    = \prod_{t=1}^{k} u_t \prod_{1\le s<r\le k} (\gamma^{p_r}-\gamma^{p_s})=0\ .
    $$
    However, since $\gamma$ is a primitive element of the field $\Fq$ with $q\geq nk$ and all $u_t$'s for $t\in[k]$ are nonzero, the determinant cannot be zero, which contradicts the assumption that $S_i$ and $S_j$ intersect non-trivially. Hence, the family $\cF_{n,k}$ is a partial $k$-spread.

    Suppose that for the $\cS_i\in\cF_{n,k}$ and some $\bv\in\F_q^n\not\in \cS_i$, the affine space $\cV\defeq\bv+\cS_i$
    intersects more than $L_{n,k}$ subspaces from the family $\cF_{n,k}$.
    Let $\bv$ be a nonzero vector from
    one of the $L_{n,k}+1$ subspaces intersecting $\cV$. E.g., assume $\bv\in\cS_j\in\cF_{n,k}$, for some $j\in[m]\setminus\set*{i}$. Then we can write $\bv$ as a linear combination of the basis of $\cS_j$. Namely,
    $$\bv=\bv_j(\balpha):=\sum_{t=1}^k \alpha_t \bv_{j,t} \textrm{ for some } \balpha\in\F_q^{k}\setminus\{\0\}\ ,$$
    where $\bv_{j,t}$'s are given in \eqref{eq:def-v-vec}.
    In what follows, we estimate
    the number of other subspaces in $\cF_{n,k}$ such that there exists a linear combination of their basis in $\cV$, i.e.,
    \begin{align}
      \label{eq:num-bad-ell}
      \left|\set*{\ell\in[m]\setminus\{i,j\}\ |\  \bv_{\ell}(\bbeta)\in\cV, \textrm{ for some }\0\neq\bbeta\in\Fq^k}\right|\ .
    \end{align}
    It can be seen that this is the number of subspaces in $\cF_{n,k}$ that intersect $\cV$.
    Note that $\bv_{\ell}(\bbeta)\in\cV$ is is equivalent to the property that
    \begin{align*}
      \rk
      \begin{pmatrix}
        \bv_{i,1}\\
        \vdots\\
        \bv_{i,k}\\
        \bv_j(\balpha)\\
        \bv_\ell(\bbeta)
      \end{pmatrix}
      \leq k+1\ .
    \end{align*}
    By the structure of the vectors $\bv_{i,t}$'s in \eqref{eq:def-v-vec}, the rank deficiency above implies that 
    \begin{align*}
      \rk
      \underbrace{\begin{pmatrix}\sum_{t=1}^{k}\alpha_t \Gamma_t(\bc_j-\bc_i) & \sum_{t=1}^{k}\alpha_t (h_t(\bc_j)-h_t(\bc_i))\\
    \sum_{t=1}^{k}\beta_t \Gamma_t(\bc_\ell-\bc_i) & \sum_{t=1}^{k}\beta_t(h_t(\bc_\ell)-h_t(\bc_i))
    \end{pmatrix}}_{\defeqrev \bR_{\bbeta,\ell}\in\Fq^{2\times (n-k)}}
                                                     =1\ .
    \end{align*}
    Denote
    \begin{align}
      \label{eq:set-bad-cw}
      \cT\defeq\set*{\left.\bc_{\ell}\in\RS_q[n-k-1,n-2k]\ \right|\ \rk(\bR_{\bbeta,\ell})=1 \textrm{ for some }\0\neq\bbeta\in\Fq^k}\ .
    \end{align}
    Then, \eqref{eq:num-bad-ell}$=|\cT|$.

    Now we show that at least one entry in the first $n-k-1$ entries of the first row of $\bR_{\bbeta,\ell}$
    is nonzero.
    Observe that for each $p\in[n-k-1]$, the $p$-th entry in $\sum_{t=1}^{k}\alpha_t \Gamma_t(\bc_j-\bc_i)$ has the form $(c_{j,p}-c_{i,p})\sum_{t=1}^{k}\alpha_t\gamma^{(t-1)p}$.
    Since $\bc_i$ and $\bc_j$ are codewords of the $\RS_q[n-k-1,n-2k]$ code with minimum Hamming distance $k$, there are at least $k$ indices $p\in[n-k-1]$ such that $c_{j,p}-c_{i,p}\neq 0$.
    We see $\sum_{t=1}^{k}\alpha_t\gamma^{(t-1)p}$ as a polynomial of degree $k-1$
    in the unknown $x=\gamma^p$.
    Then the polynomial has at most $k-1$ distinct roots in $\F_q$.
    For each root $x_0$, there is at most one $p\in[n-k-1]$ such that $\gamma^p=x_0$, since $\gamma$ is a primitive element in $\Fq$ with $q\geq nk$.
    Then for any nonzero $\balpha$, there are at most $k-1$ distinct $p$ so that $\sum_{t=1}^{k}\alpha_t\gamma^{(t-1)p}=0$.
    Hence, there is at least one entry $(c_{j,p_*}-c_{i,p_*})\sum_{t=1}^{k}\alpha_t\gamma^{(t-1)p_*}$ in the vector $\sum_{t=1}^{k}\alpha_t \Gamma_t(\bc_j-\bc_i)$ being nonzero.

    To continue, we need the following lemma, whose proof is given later.
    \begin{lemma}\label{lem:given-beta-num-of-solutions}
      Given a nonzero vector $\bbeta\in\F_q^k$, there are at most $k(n-k)$ distinct codeword $\bc_\ell\in\RS_q[n-k-1, n-2k]$
      such that $\rk(\bR_{\bbeta,\ell})=1$. 
    \end{lemma}
    Let us proceed with proving the remaining statement of this theorem, i.e., the value of $L_{n,k}$ for $k=1,2$.
    For this purpose, we estimate the number of possible $\bbeta$'s such that the first $n-k-1$ columns of $\bR_{\beta,\ell}$,
    are collinear to the $p_*$-th column and then apply Lemma~\ref{lem:given-beta-num-of-solutions}.

    For $k=1$, note that for any multiple of $\bbeta=(1)$, the set of $\bc_{\ell}$ such that $\rk(\bR_{\bbeta,\ell})=1$ is the same.
    By Lemma~\ref{lem:given-beta-num-of-solutions}, the number of distinct appropriate $\ell$'s is at most $n-1$. Thus, $\cF_{n,1}$ is an $[n,1,n-1]_q$-AAD family.

    Now we discuss the case $k=2$.
    Note that any nonzero vector $\bbeta \in \Fq^2$ is either collinear to $(1,q-1)$ or $(\beta_1,1-\beta_1)$, where $\beta_1\in\Fq$.
    Define a set
    \begin{align*}
      \cB\defeq \set*{(\beta_1,1-\beta_1)\in\Fq^2\ |\ \beta_1+(1-\beta_1) \gamma^{p}=0, \textrm{ for some }p\in[n-3]}\ ,
    \end{align*}
    and it can be readily seen that $|\cB|\leq n-3$.
    We assume that for $\bbeta=(1,q-1)$ or $\bbeta\in\cB$, the set $\cT$ in \eqref{eq:set-bad-cw} is not empty.
    We now estimate the number of \emph{suspicious} $\widetilde{\bbeta}=(\widetilde{\beta}_1,1-\widetilde{\beta}_1)\not\in \cB$ such that the set $\cT$ may not be empty.
    If $\bR_{\widetilde{\bbeta},\ell}$ has rank $1$, then two rows of  $\bR_{\widetilde{\bbeta},\ell}$
    are collinear and there exists some $\lambda\in\F_q^*$ such that
    \begin{align}\label{eq:colinear-constraint}
      \parenv*{\widetilde{\beta}_1+(1-\widetilde{\beta}_1)\gamma^p}\parenv*{c_{\ell,p}-c_{i,p}}=\underbrace{\lambda(\alpha_1+\alpha_2\gamma^{p})(c_{j,p}-c_{i,p})}_{\defeqrev w_p},\ \forall p\in[n-3]\ .
    \end{align}
    From the parity-check equation $\sum_{p=1}^{n-3}(c_{\ell,p}-c_{i,p})=0$ imposed by the first column of~\eqref{eq::parity-check matrix}, we have
    $$
    \sum_{p=1}^{n-3} \frac{w_p}{\widetilde{\beta}_1+(1-\widetilde{\beta}_1)\gamma^p} = 0\quad \iff  \quad \sum_{p=1}^{n-3}w_p \prod_{\substack{t=1\\t\neq p}}^{n-3}(\widetilde{\beta}_1+(1-\widetilde{\beta}_1)\gamma^t)=0\ .
    $$
    Since there is some $p\in[n-3]$ such that $w_{p}\neq 0$ and $\widetilde{\beta}_1 +(1-\widetilde{\beta}_1)\gamma^{p}\neq 0$ for all $p\in[n-3]$,
    the left-hand side of the above equation is a nonzero polynomial in $\Fq[\widetilde{\beta}_1]$ of degree at most $n-4$.
    Therefore, there are at most $n-4$ suspicious $\widetilde{\bbeta}$'s
    such that the vector $\sum_{t=1}^{2}\widetilde{\beta}_t \Gamma_t(\bc_\ell-\bc_i)$ is collinear to $\sum_{t=1}^{2}\alpha_t \Gamma_t(\bc_j-\bc_i)$.
    Let $\cD$ be the union of the suspicious $\widetilde{\bbeta}$'s, the set $\cB$ and the set $\set*{(1,q-1)}$.
    It can be seen that
    \begin{align*}
      |\cD|&=\left|\set*{\widetilde{\bbeta}}\right|+|\cB|+|\set*{(1,q-1)}|\\
           &\leq (n-4) + (n-3) + 1 \leq 2n-6\ .
    \end{align*}
    Hence, by Lemma~\ref{lem:given-beta-num-of-solutions}, $\cF_{n,2}$ is an $[n,2,L({n,2})]_q$-AAD family with $L({n,2})=1+2(n-2)(2n-6)$.
    \end{proof}
    \begin{proof}[Proof of \cref{lem:given-beta-num-of-solutions}]
      Recall that the entry at the first row, $p_*$-th column of $\bR_{\bbeta,\ell}$ is nonzero.
    If $\bR_{\bbeta,\ell}$ has rank $1$, then each of the first $n-k-1$ columns of $\bR_{\bbeta,\ell}$
    is linearly dependent on the $p_*$-th column. Moreover, the dependency is determined by the first row of $\bR_{\bbeta,\ell}$.
    Fix a $c_{\ell,*}\in\Fq$. For any $p\in[n-k-1]$, let
    $$
    \phi_p= \frac{(c_{j,p} - c_{i,p})\sum_{t=1}^{k}\alpha_t \gamma^{(t-1)p}}{(c_{j,p_*} - c_{i,p_*})\sum_{t=1}^{k}\alpha_t \gamma^{(t-1)p_*}}\ .
    $$
    Then, having the $p$-th column collinear to the $p_*$-th column gives the following system of equations on the unknowns $c_{\ell,p},p\in[n-k-1]\setminus\{p_*\}$:
    \begin{align}\label{eq:first-system}
      (c_{\ell,p} - c_{i,p})\sum_{t=1}^{k}\beta_t \gamma^{(t-1)p} =  \phi_p(c_{\ell,p_*} - c_{i,p_*})\sum_{t=1}^{k}\beta_t \gamma^{(t-1)p_*}\ .
    \end{align}
    Note that $\sum_{t=1}^{k}\beta_t \gamma^{(t-1)p}=0$ for at most $k-1$ distinct $p\in[n-k-1]\setminus\set*{p_*}$. Therefore, \eqref{eq:first-system} provides at least $n-k-2-(k-1)=n-2k-1$ equations on the unknown $c_{\ell,p}$'s.

    Since $c_{\ell,p}$ are entries of a codeword of the $\RS_q[n-k-1,n-2k]$ code with a parity-check matrix \eqref{eq::parity-check matrix}, we also have the following equations:
    \begin{equation}\label{eq:second-system}
      \sum_{p=1}^{n-k-1} \gamma^{(p-1)(t-1)}(c_{\ell,p} - c_{i,p}) = 0,\ \forall t\in[k-1]\ .
    \end{equation}
    Thus, the system of equations~\eqref{eq:first-system}-\eqref{eq:second-system} gives at least $n-k-2$ linearly independent equations on the $n-k-2$ unknowns $\{c_{\ell,p},p\in[n-k-1]\setminus\{p_*\}\}$.
    This system of equations has at most one solution. 
    W.l.o.g., for any $p\in[n-k-1]\setminus\set*{p_*}$, we write $c_{\ell,p}=a_p c_{\ell,p_*}+b_p$ with some $a_p,b_p\in\F_q$ for later use.
    So far, we have shown that given the $c_{\ell,p_*}$, the $\bc_{\ell}$ is uniquely determined if $\rk(\bR_{\bbeta,\ell})=1$.

    To have $\bR_{\bbeta,\ell}$ has rank $1$, we also require that the last column of $\bR_{\bbeta,\ell}$ is collinear to the $p_*$-th column, which implies that 
    $$
    \det
    \begin{pmatrix}\sum_{t=1}^k \alpha_t \gamma^{(t-1)p_*}(c_{j,p_*}-c_{i,p_*}) & \sum_{t=1}^{k}\alpha_t\parenv*{h(\bc_j)-h(\bc_i)}\\
      \sum_{t=1}^k \beta_t \gamma^{(t-1)p_*}(c_{\ell,p_*}-c_{i,p_*}) & \sum_{t=1}^{k}\beta_t(h(\bc_\ell)-h(\bc_i))
    \end{pmatrix}=0\ .
    $$
    Note that the right-bottom entry
    \begin{align*}
      \sum_{t=1}^{k}\beta_t(h(\bc_\ell)-h(\bc_i))
                  &= \sum_{t=1}^{k}\beta_t \sum_{p=1}^{n-k-1}\left((a_p c_{\ell,p_*}+b_p)^{(t-1)(n-k-1)+p+1}-c_{i,p}^{(t-1)(n-k-1)+p+1}\right)
    \end{align*}
    is a polynomial in $c_{\ell,p_*}$ of degree
    at least $p_*+1$ and
    at most $k(n-k-1)+1\le k(n-k)$.
    Therefore, the determinant is a nonzero polynomial in $c_{\ell,p_*}$ of degree
    at least $p_*+1$ and
    at most $k(n-k)$.
    Since $q\ge nk$, there are at most $k(n-k)$ solutions for $c_{\ell,p_*}$ resulting in a zero determinant.
    \end{proof}
    \begin{corollary}
      For $k=1,2, n>2k$ and $L=L(n,k)$, the polynomial growth of the maximum cardinality of an $[n,k,L]_q$-AAD family is
      $$p^{AAD}(n,k,L)\geq n-2k\ .$$
    \end{corollary}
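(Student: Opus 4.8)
The plan is to read off the bound directly from Construction~\ref{construction:AAD-RS} together with Theorem~\ref{th:general construction}, since all the real work has already been done there. First I would fix $k\in\{1,2\}$ and $n>2k$, and take any prime power $q\geq nk$. For such $q$, Construction~\ref{construction:AAD-RS} produces the family $\cF_{n,k}=\{\cS_1,\dots,\cS_m\}$ with $m=q^{n-2k}$, and Theorem~\ref{th:general construction} asserts that $\cF_{n,k}$ is an $[n,k,L(n,k)]_q$-AAD family (for $k=1$ with $L(n,1)=n-1$, for $k=2$ with $L(n,2)=1+2(n-2)(2n-6)$). Consequently, by the definition of the maximal size of an AAD family,
\begin{align*}
  m^{AAD}_q(n,k,L(n,k)) \;\geq\; |\cF_{n,k}| \;=\; q^{n-2k}\ .
\end{align*}

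Next I would take logarithms base $q$ and pass to the limit. Since $\log_q\!\big(m^{AAD}_q(n,k,L(n,k))\big)\geq n-2k$ holds for every prime power $q\geq nk$, and there are infinitely many such prime powers, the definition of polynomial growth in \eqref{eq:polynomial-growth} gives
\begin{align*}
  p^{AAD}(n,k,L(n,k)) \;=\; \limsup_{q\to\infty} \log_q\!\big(m^{AAD}_q(n,k,L(n,k))\big) \;\geq\; n-2k\ ,
\end{align*}
which is exactly the claimed inequality.

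There is essentially no obstacle here: the corollary is a direct packaging of Theorem~\ref{th:general construction}, and the only point that needs a moment's care is that Construction~\ref{construction:AAD-RS} is only valid for $q\geq nk$, so the lower bound on $m^{AAD}_q$ holds only for infinitely many $q$ rather than all $q$. This is harmless because $p^{AAD}$ is defined via a $\limsup$ as $q\to\infty$ through prime powers, so the subfamily of admissible $q$ suffices to conclude. (If one wished, one could also remark that the RS-code parameters $[\,n-k-1,\,n-2k\,]$ and the minimum-distance bound $\dH\geq k$ used implicitly in Theorem~\ref{th:general construction} are valid precisely under $q\geq nk$, but this is already built into the cited statement.)
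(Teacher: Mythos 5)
Your proof is correct and follows the same approach as the paper: appeal to Construction~\ref{construction:AAD-RS}, which yields $|\cF_{n,k}|=q^{n-2k}$, and Theorem~\ref{th:general construction}, which certifies $\cF_{n,k}$ as an $[n,k,L(n,k)]_q$-AAD family for $k\in\{1,2\}$, then pass to the $\limsup$ in the definition of $p^{AAD}$. (The paper's one-line proof actually cites \cref{thm:AAD-upper-bound} instead of \cref{th:general construction}, which appears to be a slip, since the relevant input is the AAD property of the constructed family, not the upper bound; your citation is the correct one, and your remark that $q\geq nk$ is enough because the $\limsup$ runs over prime powers is a worthwhile clarification the paper leaves implicit.)
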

    \begin{proof}
      The statement follows from \cref{thm:AAD-upper-bound} and the cardinality of the family $\cF_{n,k}$ given in \cref{construction:AAD-RS}.
    \end{proof}
\subsection{Bounds on Polynomial Growth of the Cardinality}
\label{sec:AAD-upper-bound}
In this section, we give an upper bound and a lower bound on the polynomial growth $p^{AAD}(n,k,L)$ of the maximal size of an AAD family, which is defined as in \eqref{eq:polynomial-growth}.
\subsubsection{An Upper Bound}
\begin{theorem}[Upper bound]
  \label{thm:AAD-upper-bound}
Fix arbitrary positive integers $L,k,n$ such that $2k< n$ and a prime power $q$. Let $\cF$ be an $[n,k,L]_q$-AAD family. Then
\begin{equation}\label{eq:AAD-upper-bound}
|\cF|\le 1+  L\frac{q^{n-k}-1}{q^{k}-1}\ .
\end{equation}
For $L=q^{o(1)}$, it follows that $p^{AAD}(n,k,L)\le n- 2k$.
\end{theorem}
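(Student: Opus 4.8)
The plan is to fix one subspace of the family as a reference and double-count incidences between the remaining subspaces and the nontrivial cosets of the reference. Concretely, fix $\cS_0\in\cF$ and consider the $q^{n-k}$ cosets of $\cS_0$ in $\Fq^n$. Exactly one of them, the coset containing $\0$, equals $\cS_0$, so there are $q^{n-k}-1$ \emph{nontrivial} cosets, each of the form $\bu+\cS_0$ with $\bu\notin\cS_0$. I would then count the pairs $(\cS_i,\bu+\cS_0)$ with $\cS_i\in\cF\setminus\{\cS_0\}$, with $\bu+\cS_0$ a nontrivial coset of $\cS_0$, and with $\cS_i\cap(\bu+\cS_0)\neq\varnothing$, in two different ways.

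For the first count I would use the affine-intersection property in \cref{def:AAD-family}: since $\bu\notin\cS_0$, the affine subspace $\bu+\cS_0$ meets at most $L$ members of $\cF$, and since $\bu+\cS_0$ is disjoint from the linear subspace $\cS_0$, none of those members is $\cS_0$ itself. Hence each nontrivial coset contributes at most $L$ pairs, for a total of at most $L(q^{n-k}-1)$. For the second count I would use that $\cF$ is a partial $k$-spread: for $i\neq 0$ we have $\cS_0\cap\cS_i=\{\0\}$, so two vectors of $\cS_i$ lying in the same coset of $\cS_0$ must be equal; thus $\cS_i$ meets exactly $q^k$ cosets of $\cS_0$, one of which is $\cS_0$ itself (the coset of $\0$, since $\0\in\cS_i$). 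Therefore $\cS_i$ meets exactly $q^k-1$ nontrivial cosets, and the total number of pairs equals $(|\cF|-1)(q^k-1)$. Combining the two counts yields $(|\cF|-1)(q^k-1)\le L(q^{n-k}-1)$, which rearranges to the claimed bound $|\cF|\le 1+L\frac{q^{n-k}-1}{q^k-1}$.

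For the asymptotic statement I would note $\frac{q^{n-k}-1}{q^k-1}\le 2q^{n-2k}$ for $q\ge 2$, so that $m^{AAD}_q(n,k,L)\le 1+2Lq^{n-2k}$; taking $\log_q$ and letting $q\to\infty$ with $L=q^{o(1)}$ gives $\log_q(1+2q^{n-2k+o(1)})=n-2k+o(1)$, hence $p^{AAD}(n,k,L)\le n-2k$. The argument is short and purely combinatorial, so there is no real obstacle; the only point that needs care is the coset bookkeeping — in particular checking that a subspace disjoint from $\cS_0$ meets exactly $q^k$ of its cosets with exactly one of them being $\cS_0$, and observing that the affine-intersection property applies precisely to the nontrivial cosets, since those are exactly the sets $\bu+\cS_0$ with $\bu\notin\cS_0$.
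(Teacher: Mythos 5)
Your proof is correct and gives exactly the inequality in \eqref{eq:AAD-upper-bound}. Both properties of \cref{def:AAD-family} are used precisely where they are needed: the partial-spread property to show that each $\cS_i$ with $i\neq 0$ meets exactly $q^k-1$ nontrivial cosets of $\cS_0$ (the map $\bv\mapsto\bv+\cS_0$ is injective on $\cS_i$ because $\cS_0\cap\cS_i=\{\0\}$), and the affine-disjointness property to show that each nontrivial coset meets at most $L$ members of $\cF$ (none of them $\cS_0$, since a coset $\bu+\cS_0$ with $\bu\notin\cS_0$ is disjoint from $\cS_0$). The resulting $(|\cF|-1)(q^k-1)\le L(q^{n-k}-1)$ rearranges to the claim, and the asymptotic estimate is handled correctly.

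The underlying combinatorial object you double-count is the same as in the paper's proof, but your route to it is more elementary. The paper realizes each nontrivial coset of the reference subspace $\cS_m$ as a nonzero vector $\bw\in\Fq^{n-k}$ via a parity-check matrix $\bH_m$, introduces auxiliary matrices $\hat{\bG}_j$ with $\hat{\bG}_j(\bH_m\bG_j^\top)=\0$ so that $\hat{\bG}_j\bw^\top=\0$ exactly when $\cS_j$ meets the coset encoded by $\bw$, and then runs an averaging argument (the expected number of $j$'s hitting a uniformly random nonzero $\bw$ is $(m-1)\frac{q^k-1}{q^{n-k}-1}$, which must not exceed $L$). Your version dispenses with the generator/parity-check apparatus and the probabilistic phrasing and counts the coset incidences directly, which makes the two applications of \cref{def:AAD-family} more transparent and makes it clear that both counts are exact rather than merely bounded. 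The trade-off is that the paper's matrix formulation sits naturally alongside the rest of \cref{sec:AAD-RS-construction}, where the subspaces are already given by explicit generator matrices; but as a standalone proof of the bound, yours is cleaner.
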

\begin{proof}
  Let $m\defeq|\cF|$
  and $\cS_i$ be the $i$-th subspace in $\cF$.
  For all $i\in[m]$, let $\bG_i\in \F^{k\times n}_q$ be a matrix such that $\myspan{\bG_i}_r=\cS_i$ and let $\bH_i\in \F^{(n-k)\times n}_q$ be a matrix such that $\myspan{\bH_i}_r=\cS_i^{\perp}$, where $\myspan{\cdot}_r$ denotes the row span.
  Hence, for all $i\in[m]$, $\bH_i\bG_i^\top=\0$.
  Note that for any $j\in[m-1]$, $\bH_m\bG_j^\top\in \F^{(n-k)\times k}_q$ has full column rank because $\cS_m$ and $\cS_j$ have only trivial intersection by \cref{def:AAD-family} of the AAD family $\cF$. 
  For any $j\in[m-1]$, let $\hat{\bG}_j\in\Fq^{(n-2k)\times(n-k)}$ be a full-rank matrix
  such that
  \begin{align}
    \label{eq:def-hat-G}
    \hat{\bG}_j\cdot(\bH_m\bG_j^\top)=\0\ .
  \end{align}

  Now, we prove via contradiction that for any nonzero vector $\bw^\top\in \F^{(n-k)\times 1}_q$, $\hat{\bG}_j\bw^\top=\0$ holds for at most $L$ different $j\in[m-1]$.
  Suppose that for some set $\cJ\defeq\{j_1,\ldots, j_{L+1}\}\subset [m-1]$, we have $\hat \bG_{j_t} \bw^\top = \0$ for every $j_t \in \cJ$.
  This implies that $\bw^\top$ is in the column span of $\bH_m\bG_{j_t}^\top$, i.e., $\bw^\top = \bH_m\bG_{j_t}^\top\by_{j_t}^\top$ for some nonzero $\by_{j_t}^\top\in \F^{k\times 1}_q$.

  Let $\bv^\top\defeq\bG^\top_{j_1} \by^\top_{j_1}\in\Fq^{n\times 1}$. Then,
  $$\forall j_t\in\cJ,\ \bH_m\bG_{j_t}^\top\by_{j_t}^\top=\bw^\top=\bH_m\bG_{j_1}^\top\by_{j_1}^\top = \bH_m\bv^\top\ ,$$
  which means
  $$\forall j_t\in\cJ,\ \bG_{j_t}^\top\by_{j_t}^\top = \bv^\top + \bG_m^\top\bx_{j_t}^\top, \textrm{ for some }\bx_{j_t}^\top\in\F_q^{k\times 1}\ .$$
  But this implies that $\bv+\cS_m$ and $\cS_{j_t}$ intersect non-trivially, for all $j_t\in\cJ$. By \cref{def:AAD-family}, there are at most $L$ different $j$'s so that $\bv+\cS_m$ and $\cS_j$ intersect. This leads to a contradiction.

  We have derived above a necessary condition for
  a collection of
  subspaces to form an $[n,k,L]_q$-AAD family $\cF$, that is,
  for any nonzero vector $\bw^\top\in \F^{(n-k)\times 1}_q$, $\hat{\bG}_j\bw^\top=\0$ holds for at most $L$ different $j\in[m-1]$, where the $\hat{\bG}_j$'s are defined in \eqref{eq:def-hat-G} and $m=|\cF|$.
  It can be seen that the expectation
  \begin{align*}
    \mathbb{E}\left|\set*{\left. j\in[m-1]\ \right|\ \hat \bG_j \bw^\top = \0}\right| =(m-1) \Pr \left\{\hat \bG_j\bw^\top = \0\right\} =(m-1)\frac{q^{k}-1}{q^{n-k}-1}\ .
  \end{align*}
  If the condition in \eqref{eq:AAD-upper-bound} is not fulfilled, i.e., $m > L \frac{q^{n-k}-1}{q^k-1} + 1$, then the above expectation is at least $L+1$, violating the necessary condition.
\end{proof}
\begin{remark}
  Note that if we change the definition of an AAD subspace family by dropping the first property (being a partial spread) in \cref{def:AAD-family}, then the matrices $\bH_m \bG_j^\top$ would have full rank for at least $m-L-1$ different $j$'s. This results in the bound $m \leq  L + 1 +L \frac{q^{n-k}-1}{q^k-1}.$
\end{remark}

\subsubsection{A Lower Bound by Random Construction}
The construction proposed in \cref{sec:AAD-RS-construction} gives a lower bound for $k=1,2$.
For general $k\geq q$, we give a lower bound
via another subspace family, the \emph{almost sparse subspace family} defined below.
\begin{definition}[Almost sparse subspace family]\label{def:AS-family}
  Given positive integers $k$ and $n$ such that $2k<n$, let $\cF$ be a family of $k$-dimensional linear subspaces in $\F_q^{n}$. This family is said to be \emph{$L$-almost sparse}, denoted by \emph{$[n,k,L]_q$-AS}, if the two properties hold:
  \begin{enumerate}
  \item The family $\cF$ is a partial $k$-spread of $\F_q^n$.
  \item Any $(k+1)$-dimensional subspace in $\F_q^n$ intersects non-trivially at most $L$ subspaces from the family $\cF$.
  \end{enumerate}
\end{definition}
It can be readily seen that an
$[n,k,L]_q$-AS family is also
an $[n,k,L-1]_q$-AAD family.
Hence the following lower bound on the cardinality of AS families holds naturally for AAD families.
Similar to \eqref{eq:polynomial-growth},
we denote the \emph{maximal size} of an $[n,k,L]_q$-AS family by $m^{AS}_q(n,k,L)$ and define the \emph{polynomial growth} of $m^{AS}_q(n,k,L)$ as
\begin{align*}
p^{AS}(n,k,L)&\defeq\limsup_{q\to\infty} \log_q(m^{AS}_q(n,k,L))\ .
\end{align*}
\begin{theorem}[Lower bound for AS families]
  \label{thm:AAD-probabilistic-lower-bound}
  For arbitrary positive integers $L,n,k$ such that $2k<n$, and a prime power $q$, there exists an $[n,k,L]_q$-AS family $\cF$ of size
  $$
  |\cF| \leq m^*_q(n,k,L)\defeq q^{n-2k-\frac{(n-k)(k+1)}{(L+1)}}\parenv*{1-o(1)}\ .
  $$
  For fixed $L$, it follows that $p^{AS}(n,k,L)\ge n-2k - \frac{(n-k)(k+1)}{(L+1)}$.
\end{theorem}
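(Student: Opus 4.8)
The plan is to obtain such a family by the probabilistic deletion method. Write $e \defeq n - 2k - \frac{(n-k)(k+1)}{L+1}$; if $e \le 0$ the statement is trivial (a single $k$-dimensional subspace is always $[n,k,L]_q$-AS), so assume $e>0$ and set $N \defeq \floor{q^{e}}$. First I would sample $\cS_1,\dots,\cS_N$ independently and uniformly at random from the Grassmannian $\cG_q(n,k)$. Using the $q$-binomial estimates \eqref{eq:gauss}, a routine count — in which the dominant term comes from an intersection of dimension exactly one — gives, for every \emph{fixed} subspace $U$ of dimension $k$ (resp.\ $k+1$), a bound $\Pr[\dim(\cS_i\cap U)\ge 1] \le c\, q^{2k-1-n}$ (resp.\ $\le c\, q^{2k-n}$) with an absolute constant $c$; in particular two of the random subspaces meet nontrivially with probability at most $c\, q^{2k-1-n}$.

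Next I would introduce two defect counters,
\[
X_1 \defeq \#\bigl\{\,\{i,j\}\subseteq[N] : \cS_i\cap\cS_j \neq \{\0\}\,\bigr\},
\]
\[
X_2 \defeq \#\bigl\{\,(T,W) : T\subseteq[N],\ |T|=L+1,\ W\in\cG_q(n,k+1),\ \cS_i\cap W\neq\{\0\}\ \text{for all } i\in T\,\bigr\}.
\]
Since the $\cS_i$ are independent, and for a fixed pair $(T,W)$ the events $\{\cS_i\cap W\neq\{\0\}\}$, $i\in T$, are independent, I get $\mathbb{E}[X_1]\le \binom{N}{2}\,c\,q^{2k-1-n}$ and $\mathbb{E}[X_2]\le \binom{N}{L+1}\,\quadbinom{n}{k+1}_q\,\bigl(c\,q^{2k-n}\bigr)^{L+1}$. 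Substituting $N=\floor{q^{e}}$ and $\quadbinom{n}{k+1}_q < \gamma\, q^{(k+1)(n-k-1)}$ from \eqref{eq:gauss}, the exponents balance: one checks $N q^{2k-1-n} = q^{-(n-k)(k+1)/(L+1)-1}$, so $\mathbb{E}[X_1]=O(N\,q^{-1})=o(N)$, and $\bigl(Nq^{2k-n}\bigr)^{L+1}\,q^{(k+1)(n-k-1)} = q^{-(k+1)}$ up to constants, so $\mathbb{E}[X_2]=O(q^{-(k+1)})=o(1)$. By Markov's inequality there is a realization with $X_1+X_2 \le 2(\mathbb{E}[X_1]+\mathbb{E}[X_2]) = o(N)$, and (by a similar computation, using $e<k(n-k)$) with $o(N)$ coincidences among the $\cS_i$, which I remove.

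From this realization I would delete one subspace from each intersecting pair counted by $X_1$ and, for each bad tuple $(T,W)$ counted by $X_2$, one member of $\{\cS_i : i\in T\}$; this removes at most $X_1+X_2=o(N)$ subspaces. Let $\cF$ be the surviving collection. No two members of $\cF$ meet nontrivially, so $\cF$ is a partial $k$-spread; and if some $(k+1)$-dimensional $W$ met $L+1$ members of $\cF$, any $L+1$ of them together with $W$ would form a bad tuple none of whose members were deleted, a contradiction — hence every $(k+1)$-dimensional subspace meets at most $L$ members of $\cF$. Thus $\cF$ is an $[n,k,L]_q$-AS family, and $|\cF|\ge N-X_1-X_2 = q^{e}(1-o(1)) = m^*_q(n,k,L)$, which yields $p^{AS}(n,k,L)\ge e$.

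The hard part will be the estimate of $\mathbb{E}[X_2]$: one must first pin down $\Pr[\dim(\cS\cap W)\ge1]=\Theta(q^{2k-n})$ for a fixed $(k+1)$-dimensional $W$, and then verify that the three factors $\binom{N}{L+1}$, $\quadbinom{n}{k+1}_q$ and $q^{(2k-n)(L+1)}$ cancel to leave something that vanishes precisely at the threshold $N\approx q^{e}$. It is equally crucial that the deletion be performed per bad $(L+1)$-tuple rather than per bad $(k+1)$-subspace; deleting \emph{all} subspaces met by each overloaded $(k+1)$-subspace would cost $\Theta(q^{(k+1)(n-k-1)})$ deletions and destroy the bound.
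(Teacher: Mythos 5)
Your proposal is correct and follows essentially the same probabilistic deletion strategy as the paper: sample $\Theta(q^{n-2k-(n-k)(k+1)/(L+1)})$ random $k$-dimensional subspaces, show the expected number of intersecting pairs is $o(m)$ (deleting one from each such pair to restore the partial-spread property), and bound the expected number of overloaded $(k+1)$-subspaces by $o(1)$ via a union bound. The only cosmetic deviation is at the last step: you set up an explicit deletion per bad $(L+1)$-tuple and note $\mathbb{E}[X_2]=O(q^{-(k+1)})$, whereas the paper phrases the same computation as ``$\Pr[\exists$ violating $(k+1)$-subspace$]<q^{-1-k}(1+o(1))=o(1)$'' and concludes no deletion is needed for the second property; both are the same first-moment bound, and your exponent bookkeeping $(L+1)e+(k+1)(n-k-1)+(2k-n)(L+1)=-(k+1)$ matches the paper's.
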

\begin{proof}
	The number of $k$-dimensional subspaces in $\F_q^n$ is
    \begin{equation*}
      \quadbinom{n}{k}_q = \prod\limits_{i=0}^{k-1} \frac{q^n-q^i}{q^k-q^i}=\prod\limits_{i=0}^{k-1} \frac{q^{n-i}-1}{q^{k-i}-1}= \Theta(q^{k(n-k)})\ .
    \end{equation*}
	We form a family of $k$-dimensional subspaces, $\cF=\{\cS_1,\dots, \cS_m\}$, of size
    $$m=q^{n-2k-\frac{(n-k)(k+1)}{(L+1)}}$$
    by choosing each subspace $\cS_i$ independently and uniformly with probability $1/\quadbinom{n}{k}_q$. Note that it is possible that $\cS_i=\cS_j$ for some $i\neq j$.

	Define $\xi\defeq|\{(i,j)\ |\ i,j\in[m],\ i<j, \ |\cS_i\cap \cS_j|\neq 1\}|$ as the number of pairs $(i,j)$ such that $\cS_{i}$ and $\cS_{j}$ intersect non-trivially.
    We now estimate the expectation of $\xi$.
    The number of $k$-dimensional subspaces that only trivially intersect with a fixed $k$-dimensional subspace
    is equal to
	$$
	g_k\defeq \prod_{i=1}^{k-1}\frac{q^n-q^{k+i}}{q^k-q^i}\ .
	$$
	Thus, the probability of two random $k$-dimensional subspaces having only the trivial intersection is $g_k/\quadbinom{n}{k}_q$. The expectation of $\xi$ is then upper bounded as follows:
\begingroup
\allowdisplaybreaks
	\begin{align*}
          \mathbb{E}(\xi) &\leq \sum_{1\leq i<j\leq m}\Pr\{|\cS_i\cap \cS_j|\neq 1\}\hphantom{1\leq} \\
          &=  \sum_{1\leq i<j\leq m}\left(1-\frac{g_k}{\quadbinom{n}{k}_q}\right)\\
	&\leq\binom{m}{2}\left(1-\prod_{i=0}^{k-1}\frac{q^n-q^{k+i}}{q^n-q^i}\right)
	\\
	&<m^2\left.\left(1-\left(q^{nk}-q^{n(k-1)}\sum\limits_{i=k}^{2k-1}q^i\right)\right/q^{nk}\right)
	\\
	&=m^2\parenv*{q^{2k-1-n}+o\parenv*{q^{2k-1-n}}}<m\parenv*{q^{-1}+o\parenv*{q^{-1}}}\ .
	\end{align*}
    \endgroup
    By the Markov inequality, we have
	$$
	\Pr\set*{\xi>q^{0.5}\mathbb{E}\parenv*{\xi}}<\frac{\mathbb{E}(\xi)}{q^{0.5}\mathbb{E}(\xi)}=o(1)\ .
	$$
    Since $q^{0.5}\mathbb{E}(\xi)<m\parenv*{q^{-0.5}+o\parenv*{q^{-0.5}}}=o(m)$, we obtain that
    with probability at least $1-o(1)$, there exists a family $\cF$ of size $m$, which contains at most $o(m)$ pairs of subspaces with non-trivial intersections.
    If we delete one of the intersecting subspaces for each pair, then we obtain a family $\cF'\subset \cF$ of subspaces of size at least $m-o(m)$ satisfying the first property in \cref{def:AS-family} for AS subspace families.

	Now we compute the probability of the second property in \cref{def:AS-family} being violated.
    For a fixed  $(k+1)$-dimensional subspace $\cV$, the number of $k$-dimensional subspaces that only trivially intersect with $\cV$ is equal to
	$$
	u_k\defeq\prod_{i=0}^{k-1}\frac{q^n-q^{k+1+i}}{q^k-q^i}\ .
	$$
	Thus, the probability that $\cS_i$ only trivially intersects $\cV$ is
	$u_k/\quadbinom{n}{k}_q$.
    Let $\cF_\cV\subseteq\cF$ be the set of subspaces in $\cF$ that non-trivially intersect $\cV$, i.e., $\cF_{\cV}:=\{\cS\in\cF\ |\ \cS\cap \cV|\neq 1\}$.
  	Applying the union bound, we can bound from above the probability that $\cV$ non-trivially intersects at least $L+1$ subspaces in $\cF$ by
    \begingroup
    \allowdisplaybreaks
	\begin{align*}
	\Pr[|\cF_\cV|\geq L+1]&\leq \binom{m}{L+1}\left(1-\frac{u_k}{\quadbinom{n}{k}_q}\right)^{L+1}\\
	&=\binom{m}{L+1}\left(1-\prod_{i=0}^{k-1}\frac{q^n-q^{k+1+i}}{q^n-q^i}\right)^{L+1}\\
	&<m^{L+1}\left(1-\left.\left(q^{nk}-q^{n(k-1)}\sum_{i=k+1}^{2k}q^{i}\right)\right/q^{nk}\right)^{L+1}
	\\
	&= m^{L+1}\parenv*{q^{2k-n}+o\parenv*{q^{2k-n}}}^{L+1}\\
	&<q^{-(n-k)(k+1)}(1+o(1))\ .
	\end{align*}
    \endgroup
	Recall that the total number of $(k+1)$-dimensional subspaces is $\quadbinom{n}{k+1}_q=\Theta(q^{(k+1)(n-k-1)})$. By the union bound,
    $$\quadbinom{n}{k+1}_q\cdot \Pr[|\cF_\cV\geq L+1|]<q^{-1-k}+o\parenv*{q^{-1-k}}\ .$$
	Hence, the probability that the second property is violated is $o(1)$. This completes the proof of the existence of an $[n,k,L]$-AS family of size $m-o(m)$.
\end{proof}
\begin{corollary}[Lower bound for AAD families]
  For arbitrary positive integers $L,n,k$ such that $2k<n$, and $q\to \infty$, there exists an $[n,k,L]_q$-AAD family of size
  $$
  |\cF| \leq q^{n-2k-\frac{(n-k)(k+1)}{(L+2)}}(1+o(1))\ .
  $$
  For fixed $L$, $p^{ADD}(n,k,L)\ge n-2k - \frac{(n-k)(k+1)}{(L+2)}$.
\end{corollary}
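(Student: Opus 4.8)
The plan is to obtain the corollary as an immediate consequence of Theorem~\ref{thm:AAD-probabilistic-lower-bound} together with the elementary containment between almost sparse and almost affinely disjoint families noted just before that theorem; the only real content is getting the off-by-one in the parameter right.

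First I would spell out carefully the claim that every $[n,k,L']_q$-AS family is an $[n,k,L'-1]_q$-AAD family. Let $\cF$ be an $[n,k,L']_q$-AS family (Definition~\ref{def:AS-family}), fix $\cS\in\cF$ and $\bu\in\F_q^n\setminus\cS$, and set $\cV\defeq\myspan{\bu}+\cS$. Since $\bu\notin\cS$, $\cV$ has dimension exactly $k+1$, and the affine subspace $\bu+\cS$ is contained in $\cV$ and does not contain $\0$ (otherwise $\bu\in\cS$). Hence any $\cS'\in\cF$ meeting $\bu+\cS$ shares a nonzero vector with $\cV$, i.e.\ intersects $\cV$ non-trivially. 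By the AS property, at most $L'$ members of $\cF$ intersect $\cV$ non-trivially; one of these is $\cS$ itself (as $\cS\subseteq\cV$), and $\cS$ is disjoint from the affine set $\bu+\cS$. Therefore $\bu+\cS$ meets at most $L'-1$ subspaces of $\cF$, which is exactly property~(ii) of Definition~\ref{def:AAD-family} with parameter $L'-1$; property~(i) (partial $k$-spread) is common to both definitions. In particular $m^{AAD}_q(n,k,L)\ge m^{AS}_q(n,k,L+1)$ and $p^{AAD}(n,k,L)\ge p^{AS}(n,k,L+1)$.

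Next I would invoke Theorem~\ref{thm:AAD-probabilistic-lower-bound} with the positive integer $L+1$ substituted for its parameter $L$: there exists an $[n,k,L+1]_q$-AS family $\cF$ of size $|\cF|=q^{\,n-2k-\frac{(n-k)(k+1)}{(L+2)}}(1-o(1))$. By the previous paragraph, this $\cF$ is an $[n,k,L]_q$-AAD family, which yields the stated cardinality bound. Taking $\limsup_{q\to\infty}\log_q(\cdot)$ as in \eqref{eq:polynomial-growth} and using $p^{AAD}(n,k,L)\ge p^{AS}(n,k,L+1)$ together with the lower bound on $p^{AS}$ from Theorem~\ref{thm:AAD-probabilistic-lower-bound} gives $p^{AAD}(n,k,L)\ge n-2k-\frac{(n-k)(k+1)}{(L+2)}$ for fixed $L$.

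I expect no genuine obstacle: everything reduces to Theorem~\ref{thm:AAD-probabilistic-lower-bound}. The points that need a line of care are (a) plugging $L+1$ rather than $L$ into the AS bound, so that the denominator becomes $L+2$, and (b) verifying that $\bu+\cS$ never contains $\0$ and is always disjoint from $\cS$, which is what makes the intersection count drop by exactly one when passing from AS to AAD.
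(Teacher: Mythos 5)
Your proof is correct and follows the same route as the paper: apply Theorem~\ref{thm:AAD-probabilistic-lower-bound} with parameter $L+1$ and use the fact that an $[n,k,L+1]_q$-AS family is an $[n,k,L]_q$-AAD family. The only difference is that you spell out the proof of that containment (via the $(k{+}1)$-dimensional subspace $\myspan{\bu}+\cS$, its dimension count, and the disjointness of $\cS$ from $\bu+\cS$), whereas the paper states it without proof in the remark preceding Theorem~\ref{thm:AAD-probabilistic-lower-bound}; your verification of the off-by-one, giving the $L+2$ in the denominator, is exactly right.
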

\begin{proof}
  The statements follows from the fact that an $[n,k,L+1]_q$-AS family is also an $[n,k,L]_q$-AAD family.
\end{proof}
The relation between an $[n,k,L+1]_q$-AS family and an $[n,k,L]_q$-AAD family holds only in one direction in general. Nevertheless, for $k=1$, an $[n,1,L]_q$-AAD family is also an $[n,1,L+1]_q$-AS family.
To see this,
note that any $2$-dimension subspace (a plane) that intersects with a $1$-dimensional subspace (a line) from an $[n,k=1,L]_q$-ADD family, must contain the line. Therefore, any $[n,1,L]_q$-AAD family is also an $[n,1,L+1]_q$-AS family and \cref{construction:AAD-RS} with $k=1$ also gives an $[n,1,L+1]_q$-AS family. Hence, $p^{AS}(n,1,n)=p^{AAD}(n,1,n-1)=n-2$.
\begin{remark}[Related work on the AS families]
  The concept of almost sparse subspace families is closely related to the \emph{weak subspace design} introduced by Guruswami and Xing in~\cite{guruswami2013list} and further studied in~\cite{guruswami2016explicit,guruswami2018subspace}. A collection $\mathcal{F}$ of subspaces in $\F_q^n$ is an $[n,k,L]_q$-\emph{weak subspace design}
  if every $k$-dimensional subspace in $\F_q^n$ intersects non-trivially at most $L$ subspaces from $\mathcal{F}$. Despite not being required by definition, many known constructions of weak subspace design contain subspaces with a \textit{fixed} co-dimension at least $k$.  Weak subspace design and AS families
  have the following (trivial) relations:
\begin{itemize}
\item An $[n,k,L]_q$-AS family is also an $[n,k+1,L]_q$-weak subspace design.
\item For $n\geq 2k+1$, a partial $k$-spread of $\F_q^n$ is an $[n,k+1,L]_q$-weak subspace family if and only if it is also an $[n,k,L]_q$-AS family.
\end{itemize}
By the explicit constructions presented in~\cite{guruswami2016explicit}, we can derive that $p^{AS}(n,k,L)\geq\floor{\frac{n-k}{k+1}}$ for $L\geq\frac{(n-1)(k+1)}{\floor{(n-k)/(k+1)}}$.
\end{remark}

\section{Summary and Outlooks}
The first part of this chapter introduces a new class bivariate evaluation codes, QLRS codes, which have the local property that the codeword symbols, whose coordinates are lying on a quadratic curve, form a codeword of an RS code. Hence, for any coordinate in a codeword, every quadratic curve passing through this coordinate gives a local recovery set of it.
For a QLRS code over $\Fq$, there are $q^2$ local recovery sets for each codeword symbol.
We have presented a necessary and sufficient condition on the monomials which form a basis of the code.
Based on the condition, we give upper and lower bounds on the dimension and show that the asymptotic rate of a QLRS code over $\mathbb{F}_q$ with local redundancy $r$ is $1-\Theta(q/r)^{-0.2284}$.
Moreover, we have provided lower and upper bounds on the minimum distance of this class of codes and compared QLRS codes with lifted Reed-Solomon codes by simulations in terms of the probability that a certain erasure cannot be recovered locally.
The simulation results showed that for short block lengths (e.g., $n=64$) and under the same code dimension, QLRS codes have better performance than lifted Reed-Solomon codes when the erasure probability $\tau\leq 0.7$.

For future research, error-correcting algorithms of QLRS codes can be developed and analyzed.
A promising candidate is the \emph{randomized list decoding algorithm} which has been used for RM codes~\cite{arora1997improved,sudan1999pseudorandom,goldreich2000learning,kim2016decoding}, lifted Reed-Solomon codes~\cite{guo2016list} and lifted affine-invariant codes~\cite{holzbaur2021decoding}. The algorithm uses list decoding in local correction and aggregates the local decoding results with appropriate weight to make a final decision on a symbol.
Moreover, QLRS codes were originally motivated by potential applications in \emph{coded caching}.
The task of coded caching problems is to find \emph{caching} and \emph{delivery} strategies to minimize delay in a communication system, where $K$ users, each has a cache capability for $M$ files, demand some of $N>M$ files stored at a server.
QLRS codes have the property that any two local recovery sets for one coordinate $(a_1,b_1)$ intersect at another coordinate $(a_2,b_2)$.
If we see the symbol at the $(a_1,b_1)$ as a segment of the file that a user demands from the server, then symbol at the intersection $(a_2,b_2)$ can be placed as the cached content and the rest of symbols in one of the recovery sets should be downloaded to reconstruct the file.
New coded caching schemes can be developed and analyzed based on QLRS codes, or the generalization -- the weighted $\eta$-lifted codes \cite{lavauzelle2020weighted}.

The second part studies the
AAD family of $k$-dimensional subspaces, motivated by constructions for batch codes -- a class of local recoverable codes with availability.
The subspaces in an AAD family form a partial spread and any $(k+1)$-dimensional subspace containing a subspace from the family non-trivially intersects with at most $L$ subspaces from the family.
We have presented an explicit construction of the AAD family for $k=1,2$ based on RS codes.
The construction gives $[n,k,L]_q$-AAD families with cardinality $q^{n-2k}$ and $L$ being polynomial in $n$ and $k$.
Another question discussed is the polynomial growth $p^{AAD}(n,k,L)$ (with $q\to \infty$) of the maximal cardinality of these families.
The upper and lower bounds on this quantity derived in this section show that
$$
n-2k-\frac{(k+1)(n-k)}{L+2}\leq p^{AAD}(n,k,L)\leq n-2k,\ \textrm{ for all } 2k<n\ .
$$
With the explicit constructions based on RS codes, we have $p^{ADD}(n,1,L)=n-2$ for $L\geq n-1$ and $p^{AAD}(n,2,L)=n-4$ for $L\geq 1+2(n-2)(2n-6)$, respectively.
For future work, the analysis on the explicit construction (\cref{construction:AAD-RS}) can be done for $k\geq 3$ to show whether the upper bound on $p^{AAD}(n,k,L)$ is also tight for $k\geq 3$.


\chapter{Joint Decoding of Interleaved Evaluation Codes}
\label{chap:dec_eva}

\noindent

An $\intOrd$-interleaved code is the direct sum of $\intOrd$ possible different codes (called \emph{constituent codes}) of the same length $n$, and its codewords can be represented as $\intOrd \times n$ matrices.
A common error model for these codes are \emph{burst errors}~\cite{owsley1988burst,chen1992burst}, where the errors are concentrated in several columns. As a distance metric, the Hamming weight of such an $\intOrd \times n$ matrix is defined as the number of nonzero columns of the matrix.

To decode an interleaved code, a naive approach is to simply decode each constituent codeword, i.e., each row of the $\intOrd \times n$ matrix independently. The error-correction capability of the interleaved code is then $\floor{(\dH-1)/2}$, where $\dH$ is the smallest minimum Hamming distance of the constituent codes.
However, for various algebraic interleaved codes, it is possible to correct a larger fraction of errors by adopting a joint approach.
For this reason, interleaved codes have many applications in which burst errors occur naturally or artificially, for instance, replicated file disagreement location \cite{metzner1990general},
correcting burst errors in data-storage applications \cite{krachkovsky1997decoding,holzbaur2019error,hamburg2024unraveling},
outer codes in concatenated codes \cite{metzner1990general,krachkovsky1998decoding,haslach1999decoding,justesen2004decoding,schmidt2005interleaved,schmidt2009collaborative},
ALOHA-like random-access schemes \cite{haslach1999decoding},
decoding non-interleaved codes beyond half-the-minimum distance by power decoding \cite{schmidt2010syndrome,kampf2014bounds,rosenkilde2018power,puchinger2019improved,couvreur2020power,puchinger2021improved},
distributed computing \cite{subramaniam2019collaborative} and
code-based cryptography \cite{elleuch2018interleaved,holzbaur2019decoding,renner2021liga,porwal2022interleaved,aguilar2022lrpc,aragon2023lowms}.

Generalized Reed--Solomon (GRS) codes are among the most-studied classes of constituent codes for interleaved codes.
There are several decoders for interleaved GRS codes \cite{krachkovsky1997decoding,bleichenbacher2003decoding,brown2004probabilistic,schmidt2009collaborative,nielsen2013generalised,yu2018simultaneous} that decode up to $\tInt\defeq\tfrac{\intOrd}{\intOrd+1}(n-\bar{k})$ errors, where
$\bar{k}$ is the mean dimension of the constituent codes.
All of these decoders fail for \emph{some} error patterns of weight larger than the unique decoding radius of the constituent code with the smallest minimum Hamming distance.
For errors of a given weight $t$, the fraction of errors leading to a unsuccessful decoding is roughly $q^{-m}$ at the maximal decoding radius $\tInt$ (where $q^m$ is the field size of the GRS code), and decreases exponentially in $\tInt-t$, the difference between the maximal decoding radius and the actual error weight.

There are also other decoding algorithms for interleaved GRS codes that decode beyond the radius $\tfrac{\intOrd}{\intOrd+1}(n-\bar{k})$, and even beyond the Johnson radius: \cite{coppersmith2003reconstructing,parvaresh2004multivariate,parvaresh2007algebraic,schmidt2007enhancing,cohn2013approximate,wachterzeh2014decoding,puchinger2017irs,huang2022list}.
For some of these decoders, simulation results suggest that these decoders can successfully decode a large fraction of error matrices of weight up to the claimed maximal radius, and in some very special cases, it is possible to derive bounds on this fraction.
However, in general, only little is known about the fraction of decodable errors by these decoders, which are therefore not considered in this chapter. Other code classes that have been considered as constituent codes of interleaved codes are one-point Hermitian codes \cite{kampf2014bounds,puchinger2019improved,matthews2021fractional} and, more generally, algebraic-geometry codes~\cite{brown2005improved,nakashima2010ag}.

For interleaved decoders of high order, i.e., where $\intOrd$ is at least the number of corrupted columns in received codewords, a simple linear-algebraic decoder was proposed in \cite{metzner1990general,haslach1999decoding} and generalized in \cite{haslach2000efficient,roth2014coding}.
Unlike all decoders mentioned above, this decoder works for interleaved codes with an arbitrary linear constituent code and guarantees to correct any error of weight up to $d-2$ that has full rank, where $d$ is the minimum distance of the constituent code.
This generic decoding approach has been applied to interleaved codes with arbitrary constituent codes, whose parity-check matrix is known, in the Hamming metric~\cite{sendrier2011decoding,porwal2022interleaved}, the rank metric~\cite{chabaud1996cryptographic,ourivski2002new,gaborit2015complexity,aragon2018new,bardet2020algebraic} and the sum-rank metric~\cite{puchinger2022generic,song2023blockwise,aragon2023blockwise}, respectively, to enhance the security of the proposed post-quantum cryptosystems based on codes in these metrics.

This chapter is devoted to the joint decoding approaches applied to interleaved codes with two classes of evaluation codes as constituent codes.
In \cref{sec:decoding-algorithm-interleaved}, we present a syndrome-based joint decoding algorithm for interleaved RS codes
and a necessary and sufficient condition on a successful decoding.
In \cref{sec:int-alternant}, we apply the joint decoding algorithm to interleaved \emph{alternant} codes, where the constituent codes are subfield subcodes of GRS codes
and give a condition of successful decoding tailored for alternant codes.
In \cref{sec:bounds-succ-IA}, we present lower and upper bounds on the probability of successful decoding of interleaved alternant codes by the joint decoding algorithm.
Finally, we give brief summaries on the other results on joint decoding interleaved evaluation codes in \cref{sec:other-joint-decoding}. 

\emph{The results in \cref{sec:decoding-algorithm-interleaved} and \cref{sec:int-alternant} were published in IEEE TIT~\cite{holzbaur2021decodingIA} and partly in the proceeding of 2021 Information Theory Workshop~\cite{holzbaur2021success}.}

\section{Joint Decoding of Interleaved Reed-Solomon Codes}
\label{sec:decoding-algorithm-interleaved}

We first formally introduce the concept of interleaved codes and briefly recap the joint decoding algorithm for interleaved RS codes from \cite{feng1991generalization,schmidt2009collaborative}.

\begin{definition}[Interleaved codes]
  The $\intOrd$-interleaved code $\cI\code^{(\intOrd)}$ with constituent
  code $\code$ is defined as
  \begin{align*}
    \cI\code^{(\intOrd)}\defeq \left\{
 \left.   \begin{pmatrix}
      \bc^{(1)}\\
      \vdots\\
      \bc^{(\intOrd)}
    \end{pmatrix}\ \right| \ \bc^{(i)}\in \code,\ i\in[\intOrd]
\right\}\ .
  \end{align*}
  The parameter $\intOrd$ is referred to as the \emph{interleaving order} of the interleaved code.
\end{definition}
Consider a channel where burst errors of Hamming weight at most $t$ occur. We transmit a codeword $\bC$ of an $\intOrd$-interleaved code $ \cI\code^{(\intOrd)}$. The received word is given by
\begin{align*}
  \bR = \bC + \widetilde{\bE} \in \Fq^{\intOrd \times n} \ ,
\end{align*}
where each row of $\bC \in \Fq^{\intOrd \times n}$ is a codeword of $\code$ and $\widetilde{\bE} \in \Fq^{\intOrd \times n}$ has at most $t$ nonzero columns.
An illustration of a corrupted codeword of $\cI\cC$ is given in \cref{fig:int-code}.
\begin{figure}[htb]
  \centering
\def\x{0.9}

\tikzset{
	ncbar angle/.initial=90,
	ncbar/.style={
		to path=(\tikztostart)
		-- ($(\tikztostart)!#1!\pgfkeysvalueof{/tikz/ncbar angle}:(\tikztotarget)$)
		-- ($(\tikztotarget)!($(\tikztostart)!#1!\pgfkeysvalueof{/tikz/ncbar angle}:(\tikztotarget)$)!\pgfkeysvalueof{/tikz/ncbar angle}:(\tikztostart)$)
		-- (\tikztotarget)
	},
	ncbar/.default=0.5cm,
}
\tikzset{round left paren/.style={ncbar=0.5cm,out=100,in=-100}}
\tikzset{round right paren/.style={ncbar=0.5cm,out=80,in=-80}}

\tikzset{square left brace/.style={ncbar=0.1cm}}
\tikzset{square right brace/.style={ncbar=-0.1cm}}
\begin{tikzpicture}

\node (S1) at (0,0) [draw=none] {};
\node (S2)  [right=\x*0.5cm of S1, draw=none] {};
\node (S3)  [right=\x*0.5cm of S2, draw=none] {};
\node (S4)  [right=\x*0.5cm of S3, draw=none] {};


\node (S5)  [right=\x*0.5cm of S4, draw=none] {};
\node (S6)  [right=\x*0.5cm of S5, draw=none] {};
\node (S7)  [right=\x*0.5cm of S6, draw=none] {};
\node (S8)  [right=\x*0.5cm of S7, draw=none] {};


\node (S9)  [right=\x*0.5cm of S8, draw=none] {};
\node (S10)  [right=\x*0.5cm of S9, draw=none] {};
\node (S11)  [right=\x*0.5cm of S10, draw=none] {};
\node (S12)  [right=\x*0.5cm of S11, draw=none] {};


\foreach \i in {1,2,3,4,5,6,7,8,10,11,12}{
  \node (C\i) at ($(S\i)+(0,\x*2)$) [draw,minimum width=\x*0.7cm,minimum height=\x*0.45cm] {};
  \node (C\i) at ($(S\i)+(0,\x*1.4)$) [draw,minimum width=\x*0.7cm,minimum height=\x*0.45cm] {};
  \node (C\i) at ($(S\i)+(0,\x*0.8)$) [draw,minimum width=\x*0.7cm,minimum height=\x*0.45cm] {};
  \node (C\i) at ($(S\i)+(0,\x*0.3)$) [minimum width=\x*0.7cm,minimum height=\x*0.45cm] {$\vdots$};
  \node (C\i) at ($(S\i)+(0,\x*-0.4)$) [draw,minimum width=\x*0.7cm,minimum height=\x*0.45cm] {};
}


  \node (C9) at ($(S9)+(0,\x*2)$) [draw=none,minimum width=\x*0.7cm,minimum height=\x*0.45cm,rounded corners=0pt] {$\hdots$};
  \node (C9) at ($(S9)+(0,\x*1.4)$) [draw=none,minimum width=\x*0.7cm,minimum height=\x*0.45cm] {$\hdots$};
  \node (C9) at ($(S9)+(0,\x*0.8)$) [draw=none,minimum width=\x*0.7cm,minimum height=\x*0.45cm] {$\hdots$};
  \node (C9) at ($(S9)+(0,\x*0.2)$) [draw=none,minimum width=\x*0.7cm,minimum height=\x*0.45cm] {$\hdots$};
  \node (C9) at ($(S9)+(0,\x*-0.4)$) [minimum width=\x*0.7cm,minimum height=\x*0.45cm] {$\hdots$};


\draw[dashed, rounded corners = 1pt, TUMBlue, very thick] ($(S1)+(-\x*.45,\x*2.3)$) rectangle ($(S12)+(\x*.45,\x*1.7)$);


\node[anchor = west] (L1) at ($(S12.east)+(\x*1.0,\x*2.1)$) {\color{TUMBlue}$\bc^{(1)} \in \cC$};

\node[anchor = east] (L1) at ($(S1.west)+(-\x*1.2,\x*0.7)$) {$\bC\quad\;\;=$};

\draw [] ($(S1) + (-\x*0.5,-\x*0.8)$) to [round left paren ] ($(S1) + (-\x*0.5,\x*2.4)$);
\draw [] ($(S12) + (\x*0.5,-\x*0.8)$) to [round right paren ] ($(S12) + (\x*0.5,\x*2.4)$);

\draw [decorate,decoration={brace,amplitude=7pt}]  ($(S12)+(\x*0.8,\x*2.4)$) -- ++(\x*0,-\x*3.2) node [black,midway,xshift=0.6cm] {$\ell$};

\draw [decorate,decoration={brace,amplitude=7pt}]  ($(S1)+(-\x*0.4,\x*2.5)$) -- ($(S12)+(\x*0.4,\x*2.5)$) node [black,midway,yshift=0.6cm] {$n$};


\draw[thick,-{Latex[length=3mm,width=2mm]},TUMRot] ($(S2)+(0,\x*2)$) -- ($(S2)+(-\x*0.25,\x*0.4)$) -- ($(S2)+ (\x*0.25,\x*0.9)$) -- ($(S2)+(0,-\x*0.5)$);
\draw[thick,-{Latex[length=3mm,width=2mm]},TUMRot] ($(S4)+(0,\x*2)$) -- ($(S4)+(-\x*0.25,\x*0.4)$) -- ($(S4)+ (\x*0.25,\x*0.9)$) -- ($(S4)+(0,-\x*0.5)$);
\draw[thick,-{Latex[length=3mm,width=2mm]},TUMRot] ($(S7)+(0,\x*2)$) -- ($(S7)+(-\x*0.25,\x*0.4)$) -- ($(S7)+ (\x*0.25,\x*0.9)$) -- ($(S7)+(0,-\x*0.5)$);
\draw[thick,-{Latex[length=3mm,width=2mm]},TUMRot] ($(S12)+(0,\x*2)$) -- ($(S12)+(-\x*0.25,\x*0.4)$) -- ($(S12)+ (\x*0.25,\x*0.9)$) -- ($(S12)+(0,-\x*0.5)$);


\node (El) at ($(S6)+(\x*0.6,-\x*2.2)$) {\color{TUMRot} Nonzero columns of the burst error $\widetilde{\bE}$};
\path ($(El.north)+(-\x*0.9,0)$) edge[bend left=5, -{Latex[length=2mm,width=1.5mm]}]  ($(S2)+(\x*0.1,-\x*0.8)$) ;
\path ($(El.north)+ (-\x*0.5,0)$) edge[bend left=5, -{Latex[length=2mm,width=1.5mm]}]  ($(S4)+(\x*0.1,-\x*0.8)$) ;
\path ($(El.north) + (\x*0.5,0)$) edge[bend right=5, -{Latex[length=2mm,width=1.5mm]}]  ($(S7)+(-\x*0,-\x*0.8)$) ;
\path ($(El.north)+(\x*0.9,0)$) edge[bend right=5, -{Latex[length=2mm,width=1.5mm]}]  ($(S12)+(-\x*0.1,-\x*0.8)$) ;

\end{tikzpicture}

  \caption{An illustration of a corrupted codeword of an $\intOrd$-code by a burst error $\widetilde{\bE}$.} 
  \label{fig:int-code}
\end{figure}

In the following let $\cC$ be a (generalized) RS code $\RS_q[n,k]$ with nonzero code locators $\alpha_1,\dots,\alpha_n$. 
The joint decoding algorithm that we present is also known as 
a \emph{syndrome-based joint decoding algorithm} for interleaved RS codes. Such algorithms, to name a few, can be found in~\cite{feng1991generalization} for BCH codes and \cite{krachkovsky1997decoding,bleichenbacher2003decoding,schmidt2009collaborative} for interleaved RS code.
We briefly recapitulate the decoding method below and summarize a naive version of~\cite[Algorithm 2]{schmidt2009collaborative} in~\cref{algo:SyndromeDecoder}.

Let $\bH$ be a parity-check matrix of the constituent code $\RS_q[n,k]$ and $d=n-k+1$.
From the received matrix $\bR$, we are able to calculate the syndromes of each row of $\bR$ by
\begin{align}\label{eq:Syndromes}
  \begin{pmatrix}
    \bs_{1}\\
    \bs_{2}\\
    \vdots\\
    \bs_{\intOrd}
  \end{pmatrix}
  \ =\ \bR \cdot \bH^\top =  \widetilde{\bE} \cdot \bH^\top \ ,
\end{align}
where $\bs_{i}=(s_{i,1},\dots,s_{i,d-1})\in\Fqm^{d-1},$ for each $i\in[\intOrd]$. 


Assuming that there are exactly $t$ nonzero columns in $\widetilde{\bE}$, we define the \emph{error locator polynomial} as\footnote{Since $\alpha_i \neq 0$, the error locator polynomial is well-defined.}
%
\begin{align}\label{eq:ELP}
  \Lambda(x)\defeq\prod_{i=1}^{t}(1-\alpha^{-1}_{j_i}x)=1+\Lambda_1 x+\dots+\Lambda_t x^t\ ,
\end{align}
where the $t$ roots $\alpha_{j_1},\dots,\alpha_{j_t}$ of $\Lambda(x)$ are the code locators corresponding to the error positions. The coefficients of $\Lambda(x)$ fulfill the following linear equations (cf.~\cite{peterson1960encoding}),
\begingroup
\allowdisplaybreaks
\begin{align}\label{eq:STmatrix}
  &\underbrace{\begin{pmatrix}
      s_{i,1} & s_{i,2} & \dots & s_{i,t}\\
      s_{i,2} & s_{i,3} & \dots & s_{i,t+1}\\
      \vdots & \vdots & & \vdots \\
      s_{i,d-1-t} & s_{i,d-1-t+1} & \dots &s_{i,d-2}
  \end{pmatrix}}_{\bS^{(i)}(t)}%
  \begin{pmatrix}
    \Lambda_t\\
    \Lambda_{t-1}\\
    \vdots\\
    \Lambda_1
  \end{pmatrix}
               =
                 \underbrace{\begin{pmatrix}
                   -s_{i,t+1}\\
                   -s_{i,t+2}\\
                   \vdots\\
                   -s_{i,d-1}
                 \end{pmatrix}}_{\supbrac{\bT}{i}(t)}
\ ,\ \forall i\in[\intOrd]\ .
\end{align}
\endgroup
Thus, determining the error positions in $\widetilde{\bE}$ is equivalent to solving the following linear system of equations $\mathfrak{S}(t)$ for $t$ unknowns, 
\begingroup
\allowdisplaybreaks
\begin{align}\label{eq:KeyEquationLSE}
  \underbrace{\begin{pmatrix}
    \supbrac{\bS}{1}(t)\\
    \supbrac{\bS}{2}(t)\\
    \vdots\\
    \supbrac{\bS}{\intOrd}(t)
  \end{pmatrix}}_{\bS(t)}%
\underbrace{  \begin{pmatrix}
    \Lambda_t\\
    \Lambda_{t-1}\\
    \vdots\\
    \Lambda_1
  \end{pmatrix}
}_{\vLambda}
  &=
                 \underbrace{\begin{pmatrix}
                   \supbrac{\bT}{1}(t)\\
                   \supbrac{\bT}{2}(t)\\
                   \vdots\\
                   \supbrac{\bT}{\intOrd}(t)
                 \end{pmatrix}}_{\bT(t)}\ .
\end{align}
\endgroup

After determining $\vLambda$ from~\eqref{eq:KeyEquationLSE}, we may use a standard method for error evaluation such as \emph{Forney's algorithm}~\cite{forney1965decoding} (cf.~\cite[Section~6.6]{roth2006introduction}) to calculate the error values $\hat{\bE}$. 
Then, by subtracting the calculated error $\hat{\bE}$ from $\bR$, we obtain the estimated codeword $\hat{\bC}=\bR-\hat{\bE}$.

\begin{algorithm}[htb!]
  \caption{Syndrome-based Joint Decoding Algorithm}\label{algo:SyndromeDecoder}
  \SetAlgoLined
  \DontPrintSemicolon
  \KwIn{received word $\bR$}
  \KwOut{$\hat{\bC}$ or \texttt{decoding failure}}
  Calculate the syndromes $\bs_{i,:},\forall i\in[\intOrd]$ \tcp*[r]{See~\eqref{eq:Syndromes}} 
  \lIf{$\bs_{i,:} = \0$ for all $i$}{\Return{$\hat{\bC}=\bR$}\label{step:zeroSyndromes}}
  Find minimal $t^\star$ for which $\bS(t^\star) \cdot \vLambda^\star = \bT(t^\star)$ has a solution
  $\vLambda^\star$ \label{step:solveLSE} \tcp*[r]{See~\eqref{eq:KeyEquationLSE}}
  \textbf{if} the solution $\vLambda^\star$ is not unique \textbf{then} \Return{} \texttt{decoding failure}
  \label{step:uniqueSolution}\;
    \uIf{$\Lambda^\star(x)$ \emph{has} $t^\star$ distinct \emph{roots in} $\Fqm$}{
      Evaluate the errors $\hat{\bE}$ by Forney's algorithm~\cite{forney1965decoding}\cite[Section~6.6]{roth2006introduction}\;
      \Return $\hat{\bC}=\bR-\hat{\bE}$ \label{step:codewordEstimate}}
    \Else{\Return{} \texttt{decoding failure}\label{step:distinctRoots}}
  \SetAlgoLined
\end{algorithm}

For the channel where the burst errors occur, the joint decoding algorithm given in \cref{algo:SyndromeDecoder} may yield three different results:
\begin{itemize}
\item The algorithm returns the correct result, i.e., $\hat{\bC}=\bC$, with \emph{success probability} $\Psuc$.
\item The algorithm returns an erroneous result, i.e., $\hat{\bC}\neq \bC$, with \emph{miscorrection probability} $\Pmisc$.
\item The algorithm returns a \texttt{decoding failure}, with \emph{failure probability}~$\Pfail$.
\end{itemize}

\begin{remark}[Practical Implementations]
  \cref{algo:SyndromeDecoder} is a naive approach.
  It is mainly meant for the proof of the successful probability, instead of for an efficient implementation.


  For practical implementations, one can use some fast algorithm for~\cref{step:solveLSE}, for instance, 1)~\cite[Algorithm 3]{sidorenko2011linear} with the complexity of $O(\intOrd d^2)$ operations in $\Fqm$, 2) the currently fastest algorithm~\cite{rosenkilde2021algorithms} with complexity $O^{\sim}(\intOrd^{\omega-1} d)$ where $O^{\sim}$ omits the $\log$-factors in $d$ and $\omega$ is the matrix multiplication exponent, for which the best algorithm allow $\omega<2.38$~\cite{coppersmith1990matrix,le2014powers}. 
\end{remark}

\cref{algo:SyndromeDecoder} yields a \emph{bounded distance} decoder which can decode beyond half of the minimum distance
with high probability.
Clearly, the solution $\vLambda^\star$ cannot be unique if the number of equations in~\eqref{eq:KeyEquationLSE} is less than the number of unknowns. Thus,
the following maximum decoding radius of~\cref{algo:SyndromeDecoder} can be derived.

\begin{theorem}[{\cite[Theorem~3]{schmidt2009collaborative}}]
  \label{thm:IRS-max-decoding-radius}
  Let $\cI\code^{(\intOrd)}$ be an $\intOrd$-interleaved code with $\cC=\RS_q[n,k]$.
  For a received word $\bR=\bC+\widetilde{\bE}$, where $\bC\in\cI\code^{(\intOrd)}$ and the error $\widetilde{\bE}$ has $t$ nonzero columns, \cref{algo:SyndromeDecoder} may only succeed, i.e., return $\hat \bC = \bC$, if 
\begin{align}
  t\leq t_{\max,\RS} &\defeq \frac{\intOrd}{\intOrd+1}(d-1) \ . \label{eq:tmaxGRS}
\end{align}
\end{theorem}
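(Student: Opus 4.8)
The plan is to prove the contrapositive: if the number of corrupted columns satisfies $t > t_{\max,\RS} = \frac{\intOrd}{\intOrd+1}(d-1)$, then \cref{algo:SyndromeDecoder} cannot output $\hat\bC = \bC$. The engine of the argument is a dimension count on the linear system $\mathfrak{S}(t)$ of \eqref{eq:KeyEquationLSE}: its coefficient matrix $\bS(t)$ is built from $\intOrd$ stacked block-Hankel blocks $\bS^{(i)}(t)$, each of size $(d-1-t)\times t$, so $\bS(t)$ has $\intOrd(d-1-t)$ rows and $t$ columns. The elementary observation is that $\bS(t)$ can have full column rank $t$ only if $\intOrd(d-1-t)\ge t$, and this inequality rearranges exactly to $t\le t_{\max,\RS}$.

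First I would pin down the behaviour of the true error-locator polynomial. Since the code locators $\alpha_1,\dots,\alpha_n$ are nonzero and $\widetilde\bE$ has exactly $t$ nonzero columns, $\Lambda(x)$ in \eqref{eq:ELP} has degree exactly $t$, and by the Peterson-type identities \eqref{eq:STmatrix} its coefficient vector $\vLambda$ solves $\bS(t)\vLambda = \bT(t)$. Consequently the minimal $t^\star$ located in \cref{step:solveLSE} obeys $t^\star \le t$. I would also dispose of the shortcut in \cref{step:zeroSyndromes}: if it is taken, all row syndromes vanish, so every row of $\widetilde\bE$ is a codeword of $\cC$ and $\hat\bC=\bR=\bC+\widetilde\bE$; this equals $\bC$ only if $\widetilde\bE=\0$, which is impossible for $t\ge 1$.

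Next I would argue that a successful run forces $t^\star = t$. If instead $t^\star < t$, then whatever polynomial $\Lambda^\star(x)$ the algorithm produces has degree at most $t^\star$ and hence at most $t^\star$ roots in $\Fqm$; either \cref{step:distinctRoots} is reached and the algorithm fails, or Forney's algorithm yields an error estimate $\hat\bE$ supported on at most $t^\star < t$ columns. Since $\widetilde\bE$ has $t$ nonzero columns, $\hat\bE\ne\widetilde\bE$ and therefore $\hat\bC = \bR-\hat\bE \ne \bC$. Together with $t^\star\le t$ this shows that success implies $t^\star = t$.

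Finally, I would combine the pieces: assume a successful run with $t>t_{\max,\RS}$. By the above, $t^\star = t > t_{\max,\RS}$, so $\bS(t)$ has strictly fewer rows than columns, hence $\rank\bS(t)\le \intOrd(d-1-t) < t$, and the solution set of $\bS(t)\vLambda^\star=\bT(t)$ — which is nonempty because $\vLambda$ lies in it — is an affine subspace of positive dimension. Thus the solution is not unique and \cref{algo:SyndromeDecoder} halts with a decoding failure at \cref{step:uniqueSolution}, a contradiction. The main obstacle I anticipate is not the rank count, which is routine, but making the implication ``success $\Rightarrow t^\star=t$'' airtight: one has to check that a spuriously small $t^\star$ whose locator roots happen to sit among the error positions still cannot reproduce $\bC$, and that the zero-syndrome branch and the ``no distinct roots'' branch are genuinely ruled out. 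The column-count comparison between $\hat\bE$ and $\widetilde\bE$ is what makes this clean.
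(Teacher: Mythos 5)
Your proof is correct and takes essentially the same approach as the paper, namely a row-versus-column dimension count on the stacked syndrome matrix $\bS(t)$: the system \eqref{eq:KeyEquationLSE} has $\intOrd(d-1-t)$ equations and $t$ unknowns, and $t>\tfrac{\intOrd}{\intOrd+1}(d-1)$ is exactly the condition $\intOrd(d-1-t)<t$, which forces nonuniqueness. The paper states this in one sentence and then cites \cite[Theorem~3]{schmidt2009collaborative} for the formal statement, deferring the case analysis you carry out (ruling out the zero-syndrome shortcut, the no-distinct-roots branch, and the possibility of a spuriously small $t^\star$) to \cref{lem:FailRankCond}, which appears \emph{after} the theorem. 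Your ``success $\Rightarrow t^\star = t$'' step is precisely the content of the necessity direction of that lemma's proof, and your observation that $\wtH(\hat\bE)\le t^\star < t = \wtH(\widetilde\bE)$ is the same argument used there. So you have independently reconstructed the chain (dimension count) $+$ (relevant direction of \cref{lem:FailRankCond}) that underlies the cited result; the self-identified ``obstacle'' is real but you already resolve it correctly via the column-count comparison.
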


By the nature of a bounded distance decoder, where the correction balls of each codeword inevitably overlap for some error patterns of weight $t>\floor{\frac{\dH-1}{2}}$, \cref{algo:SyndromeDecoder} is unsuccessful with some probability when $t>\floor{\frac{\dH-1}{2}}$.
The following lemma gives a necessary and sufficient condition such that \cref{algo:SyndromeDecoder} is unsuccessful, i.e., returning an erroneous result or a \texttt{decoding failure}. This will be the foundation to bound the success probability of \cref{algo:SyndromeDecoder} in decoding interleaved alternant codes in \cref{sec:int-alternant}. 
The sufficiency has been shown in the proof of \cite[Lemma~2]{schmidt2009collaborative}.
The proof below completes the necessity.
 \begin{lemma}[Necessary and sufficient condition on unsuccessful decoding]\label{lem:FailRankCond}
   Let $\cI\code^{(\intOrd)}$ be an $\intOrd$-interleaved code with $\cC=\RS_q[n,k]$.
   For a received word $\bR=\bC+\widetilde{\bE}$, where $\bC\in\cI\code^{(\intOrd)}$ and the error $\widetilde{\bE}$ has $t>0$ nonzero columns, \cref{algo:SyndromeDecoder} is not \emph{successful}, i.e., returns $\hat \bC \ne \bC$ or a {\normalfont \texttt{decoding failure}}, if and only if $\rank\parenv*{\bS(t)}<t$.
\end{lemma}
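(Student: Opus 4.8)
The plan is to characterize exactly when \cref{algo:SyndromeDecoder} succeeds, via the structure of the linear system $\mathfrak{S}(t)$ in \eqref{eq:KeyEquationLSE}, and then negate that characterization. Recall from \eqref{eq:ELP} that the true error locator polynomial $\Lambda(x)$ has degree exactly $t$ (since there are exactly $t$ nonzero columns), and its coefficient vector $\vLambda$ is a solution of $\bS(t)\vLambda=\bT(t)$ by \eqref{eq:STmatrix}--\eqref{eq:KeyEquationLSE}. So the system at the true error weight $t$ is consistent. First I would establish the key dichotomy: the algorithm succeeds if and only if (a) the minimal $t^\star$ found in \cref{step:solveLSE} equals $t$, (b) the solution $\vLambda^\star$ at $t^\star=t$ is unique, and (c) $\Lambda^\star(x)$ splits into $t$ distinct roots in $\Fqm$. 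The plan is to show that (a) and (b) together are equivalent to $\rank(\bS(t))=t$, and that whenever (a) and (b) hold, (c) is automatic — because the unique solution must then coincide with the true $\vLambda$, which by construction has $t$ distinct roots among the (distinct, nonzero) code locators.

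For the forward direction (success $\Rightarrow$ $\rank(\bS(t))=t$): suppose the algorithm succeeds. Then at the weight $t^\star$ it used, the solution $\vLambda^\star$ is unique, which forces $\bS(t^\star)$ to have full column rank $t^\star$. If $t^\star<t$, one argues a contradiction: a consistent system $\bS(t^\star)\vLambda^\star=\bT(t^\star)$ with a solution whose associated polynomial $\Lambda^\star$ has $t^\star$ distinct roots would produce, after Forney evaluation, an error $\hat\bE$ of weight $t^\star<t$ with $\bR-\hat\bE\in\cI\code^{(\intOrd)}$; but then $\widetilde\bE-\hat\bE$ is a nonzero interleaved codeword of Hamming weight $\le t+t^\star < \dH$ (using $t\le t_{\max,\RS}$ and the minimum distance $\dH=d=n-k+1$, noting $2t_{\max,\RS}=\tfrac{2\intOrd}{\intOrd+1}(d-1)<2(d-1)$ so $t+t^\star<2(d-1)$ — here I'd need to be a bit careful and instead invoke that the smaller-weight decoding would itself have to be the unique minimal one and use the syndrome-consistency directly), a contradiction. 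Hence $t^\star=t$, uniqueness of $\vLambda^\star$ gives $\rank(\bS(t))=t$.

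For the converse ($\rank(\bS(t))=t$ $\Rightarrow$ success): since $\bS(t)\vLambda=\bT(t)$ is consistent (the true $\vLambda$ solves it) and $\bS(t)$ has full column rank $t$, the solution at weight $t$ is unique and equals $\vLambda$. It remains to rule out that the algorithm stops at some smaller $t'<t$: for such $t'$, consistency of $\bS(t')\vLambda'=\bT(t')$ with a solution whose polynomial has $t'$ distinct roots in $\Fqm$ would, combined with the genuine syndrome relations, force the two candidate error locator polynomials to agree (both divide a common locator consistent with all $\intOrd$ syndrome sequences), but the true one has degree $t>t'$ — contradiction, so no valid stop occurs before $t$. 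Therefore $t^\star=t$, $\vLambda^\star=\vLambda$, and $\Lambda^\star=\Lambda$ has exactly $t$ distinct roots $\alpha_{j_1},\dots,\alpha_{j_t}$ in $\Fqm$ by \eqref{eq:ELP}; Forney's algorithm then recovers $\widetilde\bE$ exactly and $\hat\bC=\bC$. Finally, I would state that the sufficiency (i.e., $\rank(\bS(t))<t\Rightarrow$ unsuccessful) is the contrapositive of the converse together with the already-cited argument from the proof of \cite[Lemma~2]{schmidt2009collaborative}, which handles the case where a spurious lower-weight or non-unique solution leads to failure or miscorrection.

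\textbf{Main obstacle.} The delicate point is the "no early stop" argument: showing that $\rank(\bS(t))=t$ genuinely prevents \cref{step:solveLSE} from terminating at some $t'<t$ with a spurious consistent, uniquely-solvable, fully-splitting system. One must argue that any such spurious solution would, via the shared syndrome sequences across all $\intOrd$ rows, be forced to be a right divisor of (or outright equal to a restriction of) the true locator — contradicting $\deg\Lambda=t>t'$. Pinning this down cleanly, without circularity against \cref{thm:IRS-max-decoding-radius}, is where the real work lies; the rank computations in \eqref{eq:STmatrix} and the root-counting in \eqref{eq:ELP} are routine by comparison.
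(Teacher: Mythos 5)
Your high-level plan — proving the contrapositive ``\cref{algo:SyndromeDecoder} succeeds if and only if $\rank(\bS(t))=t$'' — is logically sound and mirrors what the paper does in reverse. But you leave the two load-bearing steps open, and in both places you reach for the wrong tool. The missing observation, which does almost all the work and which the paper uses explicitly, is the \emph{padding fact}: if $\vLambda^{\star}\in\Fqm^{t^{\star}}$ solves $\mathfrak{S}(t^{\star})$ for some $t^{\star}<t$, then the vector $\tilde{\vLambda}=(0,\dots,0,\Lambda^{\star}_{t^{\star}},\dots,\Lambda^{\star}_{1})^{\top}\in\Fqm^{t}$ (prepend $t-t^{\star}$ zeros, i.e.\ set $\Lambda_{t}=\cdots=\Lambda_{t^{\star}+1}=0$) solves $\mathfrak{S}(t)$, since the rows of $\bS^{(i)}(t)$ with the zero-padded unknown reproduce a subset of the equations of $\mathfrak{S}(t^{\star})$. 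Moreover $\tilde{\vLambda}\neq\vLambda$, because the true error-locator polynomial in \eqref{eq:ELP} has leading coefficient $\Lambda_{t}=\prod_{i=1}^{t}(-\alpha_{j_i}^{-1})\neq 0$ while the padded vector has top entry $0$. Hence any early stop $t^{\star}<t$ (or any non-unique or non-splitting solution at $t^{\star}=t$) instantly produces two distinct solutions of $\mathfrak{S}(t)$ and forces $\rank(\bS(t))<t$. This replaces your vague ``force the two candidate error locator polynomials to agree (both divide a common locator \dots)'' — divisibility of locators is not the mechanism here, and the minimum-distance detour you sketch in the forward direction (bounding $\wtH(\widetilde{\bE}-\hat{\bE})$) is both unnecessary and delicate, as you yourself flag.

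The forward direction is also much simpler than you make it: success means $\hat{\bE}=\widetilde{\bE}$, which has $t$ nonzero columns; since $\supp(\hat{\bE})$ is contained in the $t^{\star}$ positions output by Forney, $t\le t^{\star}$, while minimality of $t^{\star}$ together with consistency of $\mathfrak{S}(t)$ (the true $\vLambda$ solves it) gives $t^{\star}\le t$, so $t^{\star}=t$; then \cref{step:uniqueSolution} being passed means $\mathfrak{S}(t)$ has a unique solution, i.e.\ $\rank(\bS(t))=t$. For the converse, $\rank(\bS(t))=t$ plus the padding fact rules out any $t^{\star}<t$, so the algorithm reaches $t^{\star}=t$ with the unique solution $\vLambda$, which has $t$ distinct roots by construction, and Forney recovers $\widetilde{\bE}$. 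In short: your skeleton is right, but the proposal is incomplete exactly where you say ``the real work lies,'' and the fix is the padding-plus-$\Lambda_{t}\neq 0$ argument, not the divisibility or distance arguments you gesture at.
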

\begin{proof}
  Denote by $\Lambda(x)$ the \emph{true} error locator polynomial corresponding to the $t$ error positions (indices of nonzero columns) in $\widetilde{\bE}$. Then $\Lambda(x)$ has $t$ distinct roots in $\Fqm$ and $\vLambda$ is a solution of the linear system of equations ${\frak S}(t)$ as in~\eqref{eq:KeyEquationLSE}.

  \emph{Sufficiency:} We show that $\rank(\bS(t))<t$ implies unsuccessful decoding.
  Consider two cases, $\rank(\bS(t))=0$ and $0<\rank(\bS(t))<t$.
  For $\rank(\bS(t))=0$, the algorithm outputs $\hat\bC=\bR$ at \cref{step:zeroSyndromes}. However, since $t>0$, this is apparently a miscorrection.
  For the latter case, assume $\rank(\bS(t))=t^{\star}$, which is found at \cref{step:solveLSE}. Then $t^{\star}<t$ by assumption. Suppose the algorithm runs until \cref{step:codewordEstimate}, then $\wtH(\hat\bE)=t^{\star}<t=\wtH(\widetilde{\bE})$ and hence $\hat\bE\neq \widetilde{\bE}$. Then the resulting $\hat\bC$ is not the sent codeword $\bC$. Other termination (\cref{step:uniqueSolution} or \cref{step:distinctRoots}) of the algorithm results in a \texttt{decoding failure}.

  \emph{Necessity:}
  We show that unsuccessful decoding implies $\rank(\bS(t))<t$.
The algorithm returns \texttt{decoding failure} only on \cref{step:uniqueSolution} or \ref{step:distinctRoots}. \cref{step:solveLSE} determines the \emph{minimal} $t^{\star}$ such that ${\frak S}(t^{\star})$ has at least one solution $\vLambda^{\star}$, hence $t^{\star} \leq t$. Note that a solution to ${\frak S}(t^{\star})$ is also a solution to ${\frak S}(t)$. 
 If the algorithm fails on \cref{step:uniqueSolution}, i.e., the system ${\frak S}(t^\star)$ has many distinct solutions, then ${\frak S}(t)$ also has many solutions and therefore $\rank(\bS(t))<t$.
The failure occurs on \cref{step:distinctRoots} if $\Lambda^{\star}(x)$ does not have $t^{\star}$ different roots, which implies $\Lambda^{\star}(x)\ne \Lambda(x)$. This means that the system ${\frak S}(t)$ has at least two solutions $\vLambda$ and $\vLambda^{\star}$. Hence $\rank(\bS(t))<t$.

The algorithm returns a miscorrected codeword only at \cref{step:zeroSyndromes} or \ref{step:codewordEstimate}.
If the decoder outputs $\hat \bC$ on \cref{step:zeroSyndromes}, we have $\hat \bC \ne \bC$ as $t>0$. In this case $\bS(t) = \0$, so $\rank(\bS(t))= 0 <t$.
Note that $\bS(t)=\0$ will not occur if $0<t<\dH(\cC)$, since the error $\widetilde{\bE}$ with $\wtH(\widetilde{\bE})=t$ cannot result in $\bR$ being another codeword in $\cC$. 
Assume the algorithm runs to \cref{step:codewordEstimate} and returns a $\hat \bC\neq\bC$. This implies that either $t^{\star}<t$ or, $t^{\star}=t$ and \cref{step:solveLSE} has two distinct solutions $\Lambda^{\star}$. The former directly implies $\rank(\bS(t))<t$ and the latter will cause a \texttt{decoding failure} at \cref{step:uniqueSolution}, contradicting the assumption.
\end{proof}

\begin{remark}
  \label{rem:applicationToRS}
  In the proof of \cite[Theorem~7]{schmidt2009collaborative}, $\Pr[\rank(\bS(t))<t]$ is used to bound the probability of a \texttt{decoding failure}. Since we have shown in \cref{lem:FailRankCond} that $\rank(\bS(t))<t$ is not only a sufficient condition but also a necessary condition for an unsuccessful decoding (either a miscorrection or a \texttt{decoding failure}), the upper bound \cite[Theorem~7]{schmidt2009collaborative} 
  is in fact an upper bound on
  $1-\Psuc=\Pmisc+\Pfail$.
\end{remark}

\section{Joint Decoding of Interleaved Alternant Codes}
\label{sec:int-alternant}

GRS codes are generalizations of Reed-Solomon codes $\RS_q[n,k]$ with nonzero \emph{column multipliers}. 
\emph{Alternant code} are \emph{subfield subcodes} of GRS codes.
This code family contains some of the best-known and most-often used algebraic codes over small fields, including the BCH~\cite{Hocquenghem_1959,Bose_RayChaudhuri_1960} and the Goppa codes~\cite{goppa1970new,berlekamp1973goppa,sugiyama1976further}.



\subsection{Generalized Reed-Solomon Codes and Alternant Codes}
We begin by formally defining the GRS codes and the alternant codes, then discuss some applications and special cases of alternant codes.
\begin{definition}[Generalized Reed-Solomon codes]\label{def:GRScodes}
  For positive integers $d$ and $n$, let $\balpha=(\alpha_1, \alpha_2, \dots, \alpha_n) \in (\Fqm^*)^n$ be a vector of distinct \emph{code locators} and $\bv \in (\Fqm^*)^n$ be a vector of \emph{column multipliers}.
  A \emph{generalized Reed-Solomon} (GRS) code $\GRSp$ of length $n=|\balpha|$, dimension $k=n-d+1$ and minimum Hamming distance $d$ is defined as\footnote{The results in this section are mostly dependent the minimum Hamming distance, code locators and column multipliers of GRS codes. Hence, we use the notation depend on $d$, $\balpha$ and $\bv$.}
    \begin{align*}
      \GRSp &= \{\bc \in\Fqm^n \ |\  \bH\cdot \diag(\bv)\cdot \bc=\0\}\ ,
    \end{align*}
    with
  \begin{align*}
    \bH =
    \begin{pmatrix}
      1 & 1 & \dots & 1\\
      \alpha_1 & \alpha_2 & \dots & \alpha_n \\
      \vdots & \vdots &  & \vdots\\
      \alpha_1^{d-2} & \alpha_2^{d-2} & \dots & \alpha_n^{d-2}
    \end{pmatrix} \ \in \ \Fqm^{(d-1)\times n}\ .
  \end{align*}
  For a fixed $\balpha$, denote by $\mathbb{G}_{\balpha}^d$ the multiset of GRS codes with different column multipliers, i.e.,
  \begin{equation*}
    \mathbb{G}_{\balpha}^d \defeq \{\{ \GRSp \ | \ \bv \in (\Fqm^*)^n \}\} \ .
  \end{equation*}
\end{definition}
Note that the most general definitions of GRS codes allow for the $\alpha_i = 0$ to be element of $\balpha$, but for consistency with \cite{schmidt2009collaborative} and as this complicates the decoding process, we restrict ourselves to $\alpha_i \neq 0$ here.
GRS codes are well-known to be
MDS codes, i.e., they achieve $\dH=n-k+1$, where $k$ is the dimension of the code.

By design, GRS codes must be defined over finite fields $\Fqm$ with $q^m-1\geq n$ (or $q^m\geq n$ if $\alpha_i=0$ is allowed as a code locator). In many applications it is desirable to work with codes of smaller field size, which can be obtained, e.g., by taking subfield subcodes of codes with good minimum distance.

\begin{definition}[Subfield subcode]\label{def:subfieldSubcode}
  Let $\code$ be an $[n,k]_{q^m}$ code. We define the $\F_q$-subfield subcode of $\code$ as
  \begin{equation*}
    \code \cap \Fq^n = \{\bc\in\Fq^n\ | \ \bc\in \code\} \ .
  \end{equation*}
  Equivalently, let $\bH \in \F_{q^m}^{(n-k) \times n}$ be a parity-check matrix of $\code$. Then $\code \cap \Fq^n$ is given by the $\F_q$-kernel of $\bH$, i.e.,
  \begin{align*}
     \code \cap \Fq^n = \{\bc\in \F_q^n\ | \ \bH \cdot \bc= \0\} \ .
  \end{align*}
\end{definition}


  The subfield subcode of a GRS code is referred to as an \emph{alternant code}~\cite[Ch.~12.2]{macwilliams1977theory}.
  For a fixed $\balpha$ and
  a \emph{designed} distance $d$, we denote by $\mathbb{A}_{\balpha}^d$ the multiset of alternant codes, i.e.,
  \begin{equation}\label{eq:multiset-alternant}
    \mathbb{A}_{\balpha}^d \defeq \{\{ \code \cap \Fq^n \ | \ \code \in \mathbb{G}_{\balpha}^d \}\} \ .
  \end{equation}
We define $\mathbb{A}_{\balpha}^d$ as a multiset, as the multiplicities will be important for the bounds on the probability of successful decoding. An additional advantage is that for a given $\balpha$ of length $n=|\balpha|$, we know the cardinality of $\mathbb{A}_{\balpha}^d$ is
\begin{align}
  |\mathbb{A}_{\balpha}^d| = (q^m-1)^{n} \ . \label{eq:cardinalityAall}
\end{align}

For GRS codes it is known (cf.~\cite{delsarte1975subfield}) that for a fixed vector $\balpha$ of code locators, it holds that $\GRS_{\balpha,\bv}^d = \GRS_{\balpha, \bu}^d$ if and only if $\bv$ is an $\Fqm$-multiple of $\bu$, i.e., any code $\code \in \GRSallp$ occurs with multiplicity exactly $\delta^{\code}_{\GRSallp} = q^m-1$ in $\GRSallp$. This gives a lower bound on the multiplicity of alternant codes as
\begin{align}
  \delta^{\cA}_{\ALTallp} \geq q^m-1, \ \forall  \cA \in \ALTallp  \ . \label{eq:multiplicityAlternant}
\end{align}

 We give some general well-known bounds on the dimension of the $\Fq$-subcode of an $\Fqm$-linear code $\code$ in terms of the parameters of~$\code$.

\begin{lemma}\label{lem:AlternantDimBounds}
  Let $\code$ be an $[n,k]_{q^m}$ code with minimum Hamming distance $d$. Then,
\begin{equation*}
    \max\{n-m(n-k),0\} \leq \dim_q(\code \cap \Fq^n) \leq \min\{k,k_q^{\mathsf{opt.}}(n,d)\} \ ,
\end{equation*}
 where $k_q^{\mathsf{opt.}}(n,d)$ is an upper bound on the dimension of a $q$-ary linear code given the length $n$ and the minimum Hamming distance $d$ (e.g., Singleton, Griesmer, Hamming bounds, etc.)
\end{lemma}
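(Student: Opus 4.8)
\textbf{Proof plan for \cref{lem:AlternantDimBounds}.}

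The statement collects two standard facts, and the plan is to prove them separately: the lower bound $\dim_q(\code\cap\Fq^n)\ge\max\{n-m(n-k),0\}$, and the upper bound $\dim_q(\code\cap\Fq^n)\le\min\{k,\kopt(n,d)\}$.

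First I would establish the lower bound. Let $\bH\in\Fqm^{(n-k)\times n}$ be a parity-check matrix of $\code$, so that $\code\cap\Fq^n=\{\bc\in\Fq^n\mid \bH\bc=\0\}$ by \cref{def:subfieldSubcode}. Fix a basis $\bbeta=(\beta_1,\dots,\beta_m)$ of $\Fqm$ over $\Fq$ and expand each entry of $\bH$ in this basis: each scalar equation $\sum_j H_{ij}c_j=0$ over $\Fqm$ becomes $m$ scalar equations over $\Fq$ (one for each coordinate in the $\bbeta$-expansion), since the $c_j$ lie in $\Fq$. This yields a matrix $\bH'\in\Fq^{m(n-k)\times n}$ whose $\Fq$-kernel is exactly $\code\cap\Fq^n$. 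The solution space of a homogeneous linear system in $n$ unknowns with $m(n-k)$ equations has dimension at least $n-m(n-k)$, and dimension is trivially nonnegative, giving the claimed lower bound.

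Next the upper bound. The inequality $\dim_q(\code\cap\Fq^n)\le k$ is immediate: $\code\cap\Fq^n$ is an $\Fq$-subspace of $\code$, and an $\Fq$-linearly independent set in $\code$ is also $\Fqm$-linearly independent, so $\dim_q(\code\cap\Fq^n)\le\dim_{q^m}(\code)=k$. For $\dim_q(\code\cap\Fq^n)\le\kopt(n,d)$, the key observation is that $\code\cap\Fq^n$ is a $q$-ary linear code of length $n$ whose minimum Hamming distance is at least $d$: any nonzero $\bc\in\code\cap\Fq^n$ is a nonzero codeword of $\code$, hence has Hamming weight $\ge d$. By definition $\kopt(n,d)$ is an upper bound on the dimension of any $q$-ary linear code of length $n$ with minimum distance $\ge d$, so $\dim_q(\code\cap\Fq^n)\le\kopt(n,d)$. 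Combining the two upper estimates gives $\dim_q(\code\cap\Fq^n)\le\min\{k,\kopt(n,d)\}$, which completes the proof.

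There is essentially no hard obstacle here — the lemma is a bookkeeping exercise. The only mild subtlety worth stating carefully is the equivalence of the two descriptions of $\code\cap\Fq^n$ in \cref{def:subfieldSubcode} and the passage from the $\Fqm$-parity-check matrix $\bH$ to its $\Fq$-expansion $\bH'$; one should note explicitly that the expansion is valid precisely because we restrict to codewords with entries in $\Fq$, so that the scalars being expanded are the fixed field elements $H_{ij}$ rather than the variables.
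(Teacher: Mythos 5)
Your proof is correct and follows the same route as the paper's: expand the $\Fqm$-parity-check matrix over $\Fq$ for the lower bound, and observe that the subfield subcode inherits minimum distance $\ge d$ for the $\kopt$ bound. You are actually a bit more careful than the paper on the $\dim_q(\code\cap\Fq^n)\le k$ step: the paper writes the shorthand $\dim_q(\code\cap\Fq^n)\le\dim_q(\code)=k$, which conflates $\Fq$- and $\Fqm$-dimensions (as an $\Fq$-space, $\code$ has dimension $mk$); your argument via $\Fq$-linear independence of vectors in $\Fq^n$ implying $\Fqm$-linear independence is the precise version of what the paper intends.
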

\begin{proof}
  The lower bound $0$ is trivial. The lower bound $n-m(n-k)$ follows from expanding the $n-k$ rows of any parity-check matrix of $\code$ via some basis of $\Fqm$ over $\Fq$. The resulting $m(n-k) \times n$ matrix is a parity-check matrix of the $\F_q$-subcode of $\code \cap \Fq^n$ and the bound follows.
  The upper bound $k$ is trivial because $\dim_q(\code \cap \Fq^n)\leq \dim_q(\code)=k$
  and $k_q^{\mathsf{opt.}}(n,d)$ is an upper bound by definition.
\end{proof}
\subsubsection{Other Applications of Alternant Codes}
In principle, alternant codes can be used as constituent codes in any of the applications of interleaved codes mentioned at the beginning of this chapter.
Several concrete reasons to specifically consider interleaved alternant codes are also worthy mentioning: 
\begin{itemize}
\item Alternant codes (especially BCH codes) are widely used in practice, including data storage and communications. Any system that already uses these codes and is prone to burst errors may be retroactively upgraded to enable a larger error-correction capability.
For instance, in NOR and NAND flash memory, Hamming and BCH codes are considered as the standard error-correction approach~\cite{liu2006low,choi2009vlsi,wang2011error}.
Traditionally, Hamming codes are used in single-level flash memories to correct single errors as they have a simple decoding algorithm and use only a small circuit area. For multi-level flash memories, however, single-error correction is not sufficient and BCH codes with larger distance are employed. In \cite{sun2006use}, the scenario of more than four levels (i.e., storing more than two bits per flash memory cell) was investigated and it was shown that BCH codes of larger correction capability are needed.
To address the fact that errors in flash memories might occur over whole bit or word lines, in \cite{yang2011product} product codes with BCH codes were used. This motivates the use of \emph{interleaved} alternant and in particular interleaved BCH codes.

\item In applications where the cost of \emph{encoding} is dominant (e.g., in storage systems where writing occurs more often than reading an erroneous codeword), encoding in a subfield reduces the complexity. Hence, it might be advantageous to use alternant codes instead of GRS codes in some of the above mentioned applications of interleaved codes. Note that \emph{decoding} is usually done in the field of the corresponding GRS code, so the reduction in complexity is less significant.
\item In some applications, such as code-based cryptography, GRS and algebraic-geometry codes cannot be used due to their vast structure, which can be turned into structural attacks on the cryptosystem.
However, their subfield subcodes are in many cases unbroken, e.g., see in~\cite[Conclusion]{couvreur2017cryptanalysis} and \cite[Section~7.5.3]{couvreur2020algebraic}.
In particular, the codes proposed in McEliece's original paper \cite{mceliece1978public}, binary Goppa codes, have withstood efficient attacks for more than $40$ years.
In a McEliece-type system, the ciphertext is the sum of a codeword of a public code and a randomly chosen ``error'' which hides the codeword from the attacker.
If we encrypt multiple codewords in parallel, we may consider them as an interleaved code and align the errors in bursts of larger weight.
This approach has the potential to increase the designed security parameter, or in turn reduce the key size~\cite{elleuch2018interleaved,holzbaur2019decoding}.
This comes at the cost of a (hopefully very small) probability of unsuccessful decryption/decoding, which corresponds to the probability of unsuccessful decoding of the interleaved decoder.
\end{itemize}

\subsubsection{Dimension vs.~Hamming Distance of Binary BCH and Wild Goppa Codes}
\label{rem:BCHgoppaCodes}
\emph{Wild Goppa codes}~\cite{sugiyama1976further,wirtz1988parameters}, which include \emph{binary square-free Goppa codes}~\cite{goppa1970new,goppa1971rational,berlekamp1973goppa}, are a subclass of Goppa Codes. Along with BCH codes~\cite{Hocquenghem_1959,Bose_RayChaudhuri_1960}, Goppa codes are the best known class of alternant codes, due to their good distance properties in the Hamming metric.
Binary BCH and $q$-ary wild Goppa codes have been shown to be subfield subcodes of GRS codes in $\mathbb{G}^d_{\balpha}$ for some $\balpha$ and $d$.

 Consider a binary BCH code that is a subfield subcode of some GRS code in~$\mathbb{G}^d_{\balpha}$ with length $n=|\alpha|$ and dimension $k=n-d+1$ over $\F_{2^m}$.
It is well-known (cf.~\cite[Ch.~7]{macwilliams1977theory}) that the dimension of the binary BCH code is $k_\mathsf{BCH} \geq n-m\frac{n-k}{2}$, which exceeds the generic lower bound in \cref{lem:AlternantDimBounds}.

Wild Goppa codes are often considered as subclasses of alternant codes of $\mathbb{A}_{\balpha}^{d}$, but with an increased lower bound on the distance $\dGoppa \geq \frac{q}{q-1} d$.
However, the bounds on the probability of decoding success that are studied in \cref{sec:bounds-succ-IA} depend only on the properties of the corresponding GRS and, in particular, its distance $d$, but not on the \emph{actual} dimension or distance of the alternant code itself. Therefore, instead of viewing wild Goppa codes as alternant codes in $\mathbb{A}^{d}_{\balpha}$ with increased distance, it is convenient to view them as alternant codes of $\mathbb{A}_{\balpha}^{\dGoppa}$ with a larger \emph{dimension} than guaranteed by the lower bound in \cref{lem:AlternantDimBounds}. This is possible because the general improvements of wild Goppa codes compared to alternant codes were shown by an equivalence between the Goppa codes obtained from different Goppa polynomials (cf.~\cite{sugiyama1976further}, \cite[Theorem 4.1]{bernstein2011wild}).
In other words, the wild Goppa code is in 
$\ALTallp \cap \mathbb{A}^{\dGoppa}_{\balpha}$. 
It can be shown that the wild Goppa codes with distance $d$ have an increased lower bound on the dimension compared to the generic lower bound in \cref{lem:AlternantDimBounds}, i.e., $\kGoppa\geq n-m\frac{q-1}{q}(d-1)=n-m\frac{q-1}{q}(n-k)$.

\subsection{Condition on Successful Decoding of Interleaved Alternant Codes}
The joint decoder given in \cref{algo:SyndromeDecoder} immediately applies to interleaved alternant codes as well, i.e.,~the subfield subcodes of interleaved Reed--Solomon codes, but the fraction of decodable error matrices differs, since the error is now over the subfield.
Due to this, the bounds on the probability of unsuccessful decoding of interleaved GRS codes does not hold for interleaved alternant codes, which has been shown in the simulation results in \cite{holzbaur2019decoding}.
  
With the help of~\cref{lem:FailRankCond}, we now present the crux in bounding the success probability of decoding interleaved alternant codes by \cref{algo:SyndromeDecoder}, which is the basis of the bounds presented in \cref{sec:bounds-succ-IA}.

In the rest of this chapter, we denote by $\EB{q}{a}{b}$ the set of matrices $\bE$ without zero columns in $ \Fq^{a\times b}$. 
\begin{lemma}\label{lem:FailureCrux}
  Let $\mathcal{IC}^{(\intOrd)}$ be an $\intOrd$-interleaved alternant code with $\code \in \mathbb{A}^d_{\balpha}$, $n=|\balpha|$ and $\cE=\{j_1,j_2,\dots,j_t\}\subset [n]$ be a set of $|\cE|=t$ error positions. For a codeword $\bC\in \mathcal{IC}^{(\intOrd)}$, an error matrix $\widetilde{\bE} \in \Fq^{\intOrd \times n}$ with $\supp(\widetilde{\bE}) = \cE$ and $\bE \defeq \widetilde{\bE}|_{\cE} \in \EB{q}{\intOrd}{t}$, and a received word $\bR = \bC + \widetilde{\bE}$, \cref{algo:SyndromeDecoder} \emph{succeeds}, i.e., returns $\hat \bC = \bC$, if and only if
  \begin{equation}
    \label{eqn:failure_cond}
    \nexists \bv \in \Fqm^t \setminus \{\0\} \text{ such that } \bH \cdot \diag(\bv) \cdot \bE^\top = \0 \ , 
  \end{equation}
  where $\bH\in \F_{q^m}^{d-t-1 \times t}$ is a parity-check matrix of the $[t, 2t-d+1, d-t]_{q^m}$ code $\GRS_{\balpha|_\cE,\1}^{d-t}$.
\end{lemma}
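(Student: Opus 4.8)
\textbf{Proof plan for Lemma~\ref{lem:FailureCrux}.}
The plan is to reduce the claim to \cref{lem:FailRankCond} by analyzing the relevant syndrome matrix. First I would set up notation: the error is supported exactly on $\cE$, so only the columns of the parity-check matrix of $\code$ indexed by $\cE$ matter. Recall that $\code\in\mathbb{A}^d_{\balpha}$ is the $\Fq$-subfield subcode of some GRS code in $\mathbb{G}^d_{\balpha}$, so $\code$ has a parity-check matrix over $\Fqm$ of the form $\bH_{\RS}\cdot\diag(\bw)$, where $\bH_{\RS}$ is the $(d-1)\times n$ Vandermonde-type matrix in $\balpha$ and $\bw\in(\Fqm^*)^n$ is the column-multiplier vector. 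Since the column multipliers $w_{j}$ for $j\in\cE$ are nonzero, absorbing them into the (unknown) error values does not change whether a nonzero kernel vector exists; hence it suffices to analyze the syndrome matrix $\bS(t)$ built from $\bE^\top\in\Fqm^{t\times\ell}$ against the reduced parity-check matrix $\bH_{\RS}|_{\cE}$, i.e.~the $(d-1)\times t$ Vandermonde matrix in $\balpha|_{\cE}$.

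Second, I would recall from \cref{sec:decoding-algorithm-interleaved} that the matrix $\bS(t)$ in \eqref{eq:KeyEquationLSE} is, up to the splitting into constituent blocks, exactly the matrix whose rank governs success via \cref{lem:FailRankCond}: \cref{algo:SyndromeDecoder} succeeds iff $\rank(\bS(t))=t$. The key computation is to identify $\rank(\bS(t))$ with the quantity appearing in \eqref{eqn:failure_cond}. The vector $\vLambda$ encoding the true error-locator polynomial always lies in the right kernel of $\bS(t)$; the decoder succeeds precisely when this kernel is one-dimensional. I would show that a second, linearly independent kernel vector $\vLambda'$ exists iff there is a nonzero $\bv\in\Fqm^t$ with $\bH\cdot\diag(\bv)\cdot\bE^\top=\0$, where $\bH$ is a parity-check matrix of $\GRS_{\balpha|_\cE,\1}^{d-t}$. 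The mechanism: a kernel vector of $\bS(t)$ corresponds to a polynomial of degree $\le t$ whose multiplication against each syndrome row $\bs_i$ (a length-$(d-1)$ truncated power-sum sequence) annihilates the last $d-1-t$ entries; equivalently, treating $\diag(\bv)$ as reweighting the $t$ error coordinates, the reweighted error must be orthogonal to the Vandermonde rows of degrees $0,\dots,d-t-2$ — which is exactly membership in the dual of $\GRS_{\balpha|_\cE,\1}^{d-t}$. The dimension bookkeeping $[t,2t-d+1,d-t]_{q^m}$ comes from the fact that this GRS code has length $t$ and its parity-check matrix has $d-t-1$ rows, so $\bH\in\Fqm^{(d-t-1)\times t}$ as stated; this requires $t\le d-1$, consistent with the maximal decoding radius \eqref{eq:tmaxGRS}.

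Third, I would make the correspondence between "extra kernel vector of $\bS(t)$" and "nonzero $\bv$ in \eqref{eqn:failure_cond}" precise and bidirectional. Given a genuine second solution $\Lambda^\star(x)\ne\Lambda(x)$, the difference of the two locator relations yields a nontrivial linear dependency among the weighted error columns, which after rewriting in terms of the roots (the $\alpha_{j_i}$) produces the desired $\bv$; conversely, any nonzero $\bv$ with $\bH\diag(\bv)\bE^\top=\0$ can be lifted to a polynomial of degree $\le t$ distinct from $\Lambda$ that still solves $\mathfrak{S}(t)$, so $\rank(\bS(t))<t$. Combining this with \cref{lem:FailRankCond} gives: success iff no such $\bv$ exists, which is the statement. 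I would also note the boundary case $t=0$ is excluded (the lemma assumes nonzero error weight implicitly via \cref{lem:FailRankCond}) and the case $2t-d+1<0$, where the GRS code degenerates but the orthogonality condition still makes sense as "$\bE$ has full column rank $t$ after no reweighting is possible", handled uniformly by the parity-check formulation.

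The main obstacle I anticipate is the careful translation between the "Hankel/striped syndrome matrix" picture of $\bS(t)$ — which is how \cref{lem:FailRankCond} is phrased — and the "weighted-codeword-orthogonality" picture of \eqref{eqn:failure_cond}; in particular keeping track of the correct degree ranges (why $d-t-1$ rows, not $d-t$ or $d-1-t$ shifted) and verifying that the column multipliers $\bw|_\cE$ of the alternant code genuinely drop out because they can be absorbed into the unknown error values $\bE$ without affecting the existence of a nonzero kernel vector. This absorption step is where the hypothesis $\alpha_i\ne 0$ (equivalently, the GRS structure with invertible column multipliers) is essential, and I would state it as an explicit sub-claim before proceeding.
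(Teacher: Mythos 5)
Your plan matches the paper's own proof: both reduce to \cref{lem:FailRankCond} to obtain ``unsuccessful decoding iff $\rank(\bS(t))<t$'' (i.e.\ iff the homogeneous system $\bS(t)\bu=\0$ has a nonzero solution), and both then translate this kernel condition into \eqref{eqn:failure_cond} by factoring each syndrome block $\bS^{(i)}(t)$ into a product of the GRS parity-check matrix, $\diag(\be_i)$, the column-multiplier diagonal, and a Vandermonde, so that the invertible factors can be absorbed into a change of variable from $\bu$ to $\bv$. The paper simply cites the Hankel-syndrome factorization $\bS^{(i)}(t)=\bH\cdot\diag(\be_i)\cdot\bD\cdot\bV$ directly (via Peterson~\cite{peterson1972error} and Schmidt--Sidorenko--Bossert~\cite{schmidt2009collaborative}), rather than re-deriving it polynomially as you sketch, and then notes that $\bv=\bD\bV\bu$ is a bijection fixing $\0$; this is a cleaner route to the same bidirectional equivalence. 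One small slip in your write-up: $\bH\cdot\diag(\bv)\cdot\bE^\top=\0$ means each $\diag(\bv)\,\be_i^\top$ lies in the right kernel of $\bH$, i.e.\ in $\GRS_{\balpha|_\cE,\1}^{d-t}$ \emph{itself} (under the paper's convention where $\bH$ is a parity-check matrix of that GRS code), not in its dual.
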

\begin{proof}
  We extend and adapt the proof for interleaved GRS codes from~\cite{schmidt2009collaborative}.

  According to~\cref{lem:FailRankCond}, \cref{algo:SyndromeDecoder} yields a \texttt{decoding failure} or a miscorrection $\hat \bC \neq \bC$ if and only if $\rank(\bS(t))<t$, with $\bS(t)$ as in \eqref{eq:KeyEquationLSE}. In other words, the decoding may only be unsuccessful, if there exists a nonzero vector $\bu\in \Fqm^t$ such that $\bS(t) \cdot \bu=\0$, i.e.,
  \begin{align}\label{eq:ustatement}
    \exists \bu\in \Fqm^t\setminus\{\0\} \text{ such that }\supbrac{\bS}{i}(t)\cdot \bu=\0\ ,\ \forall i\in[\intOrd]\ .
  \end{align}
  It is known (cf.~\cite[Theorem 9.9]{peterson1972error}\cite{schmidt2009collaborative}) that a syndrome matrix $\supbrac{\bS}{i}(t)$ can be decomposed into
  \begin{align*}
    \supbrac{\bS}{i}(t)=\bH \cdot \supbrac{\bF}{i}\cdot \bD\cdot \bV\ ,
  \end{align*}
  where $\bH$ is a parity-check matrix of the $[t, 2t-d+1, d-t]_{q^m}$ code $\GRS_{\balpha|_\cE,\1}^{d-t}$
  as in \cref{def:GRScodes},
  \begin{align*}
    &\bV=
    \begin{pmatrix}
        1 & 1 & \dots & 1\\
      \alpha_{j_1} & \alpha_{j_2} & \dots & \alpha_{j_t}\\
      \alpha_{j_1}^2 & \alpha_{j_2}^2 & \dots & \alpha_{j_t}^2\\
      \vdots & \vdots &  & \vdots\\
      \alpha_{j_1}^{t-1} & \alpha_{j_2}^{t-1} & \dots & \alpha_{j_t}^{t-1}
    \end{pmatrix}^\top\in\Fqm^{t\times t} \ , \\
    &\supbrac{\bF}{i} =\diag(\be_{i}) \in \Fq^{t\times t}\ ,\\
    &\bD = \diag(\bv'|_{\cE})\in \Fqm^{t \times t}\ ,
  \end{align*}
 and $\bv'\in\parenv*{\Fqm^*}^n$ is the vector of column multipliers of the GRS code corresponding to the alternant code $\code$, i.e., $\GRS^d_{\balpha,\bv'}\cap\Fq=\code$.

  We observe that the matrices $\bD$ and $\bV$ are both square and of full rank. Therefore, the product $\bv\defeq\bD\cdot\bV\cdot\bu$ defines a one-to-one mapping $\bu\to\bv$, such that $\0\to\0$. Consequently, the statement~\eqref{eq:ustatement} is equivalent to 
  \begin{align*}
    \exists \bv \in \Fqm^t \setminus \{\0\} \text{ such that } \bH&\cdot \diag(\be_{i})\cdot \bv =\0\ ,\ \forall i\in[\intOrd]\\
    &\Updownarrow\\
    \exists \bv \in \Fqm^t \setminus \{\0\} \text{ such that } \bH&\cdot \diag(\bv)\cdot \be_{i} =\0\ ,\ \forall i\in[\intOrd]\ ,
  \end{align*}
  and the statement follows.
\end{proof}

Note that the upper bound on the probability of unsuccessful decoding interleaved GRS codes from~\cite{schmidt2009collaborative} applies to error matrices $\widetilde{\bE}$ over $\Fqm$ (the field of the GRS code). 
However, for interleaved alternant codes, $\widetilde{\bE}$ is over $\Fq$ (the \emph{subfield} of RS codes) and the bound from~\cite{schmidt2009collaborative} is not valid in this case.


\section{Bounds on Success Probability of Decoding Interleaved Alternant Codes}\label{sec:bounds-succ-IA}
In this section we present lower and upper bounds on the probability of successful decoding of interleaved alternant codes by~\cref{algo:SyndromeDecoder}.
\cref{lem:FailureCrux} gives a necessary and sufficient condition for~\cref{algo:SyndromeDecoder} to succeed for an error $\widetilde{\bE}$ with fixed $\cE=\supp(\widetilde{\bE})$ and $\widetilde{\bE}|_{\cE}\in \EB{q}{\intOrd}{t}$.
With this as the basis, 
we bound the probability of successful decoding for a random error matrix $\widetilde{\bE}$ where $\widetilde{\bE}|_{\cE}$ is i.i.d.~in $\EB{q}{\intOrd}{t}$. 

\subsection{Technical Preliminary Results}\label{sec:TechnicalPreliminaries}
Before deriving the bounds,
we establish some technical preliminary results which are needed to prove the bounds.
\subsubsection{Maximization of Integer Distributions}

To begin, we derive a simple upper bound on the maximization of a sum of integer powers, under a restriction on the base of the power.

\begin{definition}[Majorization relation]\label{def:majorization}
  Let
  $\cM=\{\{m_1, m_2,\dots,m_c\}\}$
  and
  $\cK=\{\{k_1, k_2,\dots,k_c\}\}$
  be two (finite) multisets of real numbers with the same cardinality. We say that the set $\cM$ \emph{majorizes} the set $\cK$ and write
\begin{equation*}
\cM\succ\cK\quad\mathrm{or}\quad \cK\prec\cM
\end{equation*}
if, after a possible renumeration, $\cM$ and $\cK$
satisfy the following conditions:
\begin{enumerate}
    \item[(1)] $m_1\geq m_2 \geq \cdots \geq m_c$ and $k_1\geq k_2 \geq \cdots \geq k_c$;
    \item[(2)] $\sum_{i=1}^jm_i \geq \sum_{i=1}^j k_i, \ \forall \, 1\leq j \leq c$.
\end{enumerate}
\end{definition}
We recap the following well-known result on multisets with this majorization relation.
\begin{lemma}[Karamata's inequality~{\cite[Theorem~1]{kadelburg2005inequalities}}] \label{lem:karamata}
  Let
  $\cM=\{\{m_1, m_2,\dots,m_c\}\}$
  and
  $\cK=\{\{k_1, k_2,\dots,k_c\}\}$
  be two multisets of real numbers from an interval $[a,b]$. If the set $\cM\succ\cK$, and if $f:\bbR\to \bbR$ is a convex and non-decreasing function in the range $[a,b]$, then it holds that
\begin{align}
    \sum\limits_{i=1}^c f(m_i)\geq \sum\limits_{i=1}^c f(k_i) \ .
\end{align}
\end{lemma}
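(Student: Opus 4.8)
The plan is to give the standard Abel summation (summation-by-parts) argument. First I would set up notation: having renumerated so that $m_1 \geq m_2 \geq \cdots \geq m_c$ and $k_1 \geq k_2 \geq \cdots \geq k_c$, write $M_j = \sum_{i=1}^j m_i$ and $K_j = \sum_{i=1}^j k_i$ for the partial sums, with the convention $M_0 = K_0 = 0$. The majorization hypothesis $\cM \succ \cK$ says precisely that $M_j \geq K_j$ for all $1 \leq j \leq c$ (note that unlike the usual statement of majorization, we do not even need $M_c = K_c$ here, only $M_c \geq K_c$, which is the $j=c$ case).

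Next I would introduce the difference quotients. Since $f$ is convex on $[a,b]$, for each index $i$ with $1 \leq i \leq c-1$ the slopes
\begin{align*}
  p_i \defeq \frac{f(m_i) - f(k_i) - (f(m_{i+1}) - f(k_{i+1}))}{\text{suitable denominator}}
\end{align*}
are the wrong way to organize this; instead the cleaner route is to define for each $i$ the secant slope $c_i$ of $f$ between $m_i$ and $k_i$ (or to work directly with the one-sided derivatives of $f$, which exist everywhere on $(a,b)$ by convexity and are non-decreasing). Concretely, convexity gives $f(m_i) - f(k_i) \geq c_i (m_i - k_i)$ where $c_i$ can be taken as any subgradient at $k_i$ when $m_i \geq k_i$, and monotonicity of subgradients together with $m_1 \geq \cdots \geq m_c$, $k_1 \geq \cdots \geq k_c$ yields $c_1 \geq c_2 \geq \cdots \geq c_c \geq 0$, the last inequality using that $f$ is non-decreasing. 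Then I would compute
\begin{align*}
  \sum_{i=1}^c f(m_i) - \sum_{i=1}^c f(k_i) \geq \sum_{i=1}^c c_i (m_i - k_i) = \sum_{i=1}^c c_i \big( (M_i - M_{i-1}) - (K_i - K_{i-1}) \big),
\end{align*}
and apply Abel summation to rewrite the right-hand side as $\sum_{i=1}^{c-1} (c_i - c_{i+1})(M_i - K_i) + c_c (M_c - K_c)$. Each term is non-negative: $c_i - c_{i+1} \geq 0$ by monotonicity of the slopes, $M_i - K_i \geq 0$ by majorization, $c_c \geq 0$ since $f$ is non-decreasing, and $M_c - K_c \geq 0$. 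Hence the whole sum is $\geq 0$, which is the claim.

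The main obstacle, and the only place requiring care, is the handling of the slopes $c_i$ when $m_i = k_i$ (the difference quotient is undefined) and making sure the chosen subgradients are genuinely monotone across different base points. The clean fix is to use the right-hand derivative $f'_+$, which is defined and non-decreasing on all of $[a,b)$ by convexity: set $c_i = f'_+(k_i)$, so that $f(m_i) - f(k_i) \geq c_i(m_i - k_i)$ holds for $m_i \geq k_i$ (and trivially with equality when $m_i = k_i$), and $c_i \geq c_{i+1}$ follows from $k_i \geq k_{i+1}$ and monotonicity of $f'_+$; finally $c_c = f'_+(k_c) \geq 0$ because $f$ is non-decreasing. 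With this substitution the Abel-summation computation above goes through verbatim. I would also remark that $\cM$ being a multiset rather than a set causes no issue, since the renumeration step lists elements with multiplicity.
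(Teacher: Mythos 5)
The paper does not prove this lemma: it is stated as a citation to {\cite[Theorem~1]{kadelburg2005inequalities}} and used as a black box in the proof of \cref{lem:maximization}, so there is no paper proof to compare against.

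Your argument is the standard Abel-summation proof of Karamata's inequality, and it is essentially correct. You are also right to flag that the paper's version of majorization only requires $M_c \geq K_c$ rather than $M_c = K_c$, and to observe that this is exactly where the non-decreasing hypothesis on $f$ is used (the terminal term $c_c(M_c-K_c)$ needs $c_c \geq 0$); in the classical version with $M_c = K_c$ that term vanishes and $f$ need not be monotone. One small technical wrinkle: you define $c_i = f'_+(k_i)$ and yourself note that $f'_+$ is only defined on $[a,b)$, so this breaks if some $k_i = b$. This is not a real gap, since if $k_i = b$ then the majorization hypothesis together with the bound $m_j \leq b$ forces $m_j = k_j = b$ for all $j \leq i$, so those terms contribute zero to both sides and can be deleted before running the argument; alternatively, set $c_i = f'_-(b)$ when $k_i = b$, which preserves the monotone chain $c_1 \geq c_2 \geq \cdots \geq c_c$ because one-sided derivatives of a convex function satisfy $f'_+(x) \leq f'_-(y)$ for $x < y$. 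Also, as you note in passing, the subgradient inequality $f(x) \geq f(y) + f'_+(y)(x-y)$ holds for all $x,y$ in the domain (not just $x \geq y$), so the restriction you mention is unnecessary. Finally, the detour through the ill-defined quantity $p_i$ at the start of the argument should simply be cut; it adds nothing.
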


For convenience of notation, we define a fixed notation for the set over which we maximize in the following.
\begin{definition} \label{def:multiset}
  Denote by $\bbM_{c,B}^{[a,b]} = \{\cM,\ldots\}$ the set of all multisets
  $\cM = \{\{m_1,\ldots, m_c\}\}$
  of cardinality $c$ with $b\geq m_1 \geq \ldots \geq m_c\geq a $ and $\sum_{m\in \cM} m = B$.
\end{definition}

With these definitions established, we are now ready to give an upper bound on the sum over the results of a convex non-decreasing function evaluated on the elements of any multiset in $\bbM_{c,B}^{[a,b]}$.

\begin{lemma}\label{lem:maximization}
  Let $a,c\geq 1$, $b\geq a$, $ca\leq B \leq c b$, and $\bbM_{c,B}^{[a,b]}$ be as in \cref{def:multiset}. For any function $f(x)$ that is convex and non-decreasing in the interval $a\leq x \leq b$, it holds that
  \begin{align*}
    \max\limits_{\cM\in\bbM_{c,B}^{[a,b]}} \sum_{m\in \cM} f(M) \!\leq\! \parenv*{\frac{B-ca}{b-a}+1}\parenv*{ f(b) - f(a)} + cf(a) \ .
  \end{align*}
\end{lemma}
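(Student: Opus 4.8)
\textbf{Proof plan for Lemma~\ref{lem:maximization}.}

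The plan is to reduce the optimization over the whole family $\bbM_{c,B}^{[a,b]}$ to a single ``extremal'' multiset, and then evaluate $f$ on that multiset directly. First I would argue that the maximum of $\sum_{m\in\cM} f(m)$ over $\cM\in\bbM_{c,B}^{[a,b]}$ is attained at the multiset $\cM^\star$ which is ``pushed to the corners'' as far as possible: that is, some number $p$ of the entries equal $b$, one entry equals a residual value $s$ with $a\le s\le b$, and the remaining $c-1-p$ entries equal $a$. The number of copies of $b$ is $p=\lfloor (B-ca)/(b-a)\rfloor$ and the residual entry is $s=B-ca-p(b-a)+a$, chosen so the sum is exactly $B$. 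The justification is \cref{lem:karamata} (Karamata's inequality): I would show $\cM^\star\succ\cM$ for every $\cM\in\bbM_{c,B}^{[a,b]}$ with respect to the majorization relation of \cref{def:majorization}. Concretely, sorting any $\cM$ in non-increasing order, the partial sums $\sum_{i=1}^j m_i$ are bounded above by $\min\{jb,\ B-(c-j)a\}$ (no entry exceeds $b$, and the remaining $c-j$ entries are each at least $a$), and a direct check shows $\cM^\star$ attains exactly this upper envelope at every $j$. Since $f$ is convex and non-decreasing on $[a,b]$, \cref{lem:karamata} then gives $\sum_{m\in\cM^\star} f(m)\ge \sum_{m\in\cM} f(m)$ for all admissible $\cM$.

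Next I would evaluate $\sum_{m\in\cM^\star} f(m)$ explicitly. By construction,
\begin{align*}
  \sum_{m\in\cM^\star} f(m) = p\, f(b) + f(s) + (c-1-p)\, f(a)\ .
\end{align*}
Two bounds are now applied. The term $f(s)$ with $s\in[a,b]$ is bounded using monotonicity by $f(s)\le f(b)$; alternatively, to keep the estimate tight one can write $f(s)\le f(a)+\frac{s-a}{b-a}\bigl(f(b)-f(a)\bigr)$ using convexity (a chord lies above the graph... actually for convex $f$ the chord lies above, so $f(s)\le$ the chord value — this is exactly the direction we want). Using the convexity bound and collecting the coefficients of $f(a)$ and $f(b)$, the right-hand side becomes
\begin{align*}
  \Bigl(p + \tfrac{s-a}{b-a}\Bigr) \bigl(f(b)-f(a)\bigr) + c\, f(a)\ ,
\end{align*}
and since $p(b-a) + (s-a) = B-ca$ by the defining relation for $p$ and $s$, we have $p + \frac{s-a}{b-a} = \frac{B-ca}{b-a}$, which already gives $\frac{B-ca}{b-a}\bigl(f(b)-f(a)\bigr) + c f(a)$, slightly stronger than the claimed bound. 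The claimed ``$+1$'' in $\frac{B-ca}{b-a}+1$ then follows a fortiori (it is simply a looser but cleaner form, useful because it matches the bound obtained from the cruder estimate $f(s)\le f(b)$, which yields coefficient $p+1\le \frac{B-ca}{b-a}+1$ on $f(b)-f(a)$). I would present whichever of the two estimates the downstream application in \cref{sec:bounds-succ-IA} actually needs; the ``$+1$'' version is safe.

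The main obstacle I anticipate is the bookkeeping in verifying the majorization $\cM^\star\succ\cM$ rigorously — specifically, checking that $\cM^\star$ really does saturate the upper envelope $\sum_{i=1}^j m^\star_i = \min\{jb,\ B-(c-j)a\}$ for \emph{every} $j\in\{1,\dots,c\}$, handling the boundary index $j=p$ and $j=p+1$ where the residual entry $s$ sits, and the degenerate cases $B=ca$ (so $p=0$, $s=a$, $\cM^\star$ is all $a$'s) and $B=cb$ (so $p=c$, $\cM^\star$ is all $b$'s). None of this is deep, but it requires care to state cleanly; I would isolate the envelope computation as a short sub-claim before invoking \cref{lem:karamata}. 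A secondary minor point is that \cref{lem:karamata} as quoted requires both multisets to lie in a common interval $[a,b]$, which is automatic here since every entry of every $\cM\in\bbM_{c,B}^{[a,b]}$ and of $\cM^\star$ lies in $[a,b]$ by definition, so the hypothesis is met without extra work.
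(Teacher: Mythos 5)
Your proposal is correct and uses the same key idea as the paper: build an extremal multiset concentrating entries at $b$ and $a$, show it majorizes every $\cM\in\bbM_{c,B}^{[a,b]}$, and apply Karamata (\cref{lem:karamata}). The only difference is bookkeeping: the paper takes $\cM_{\max}$ with $\lceil (B-ca)/(b-a)\rceil$ copies of $b$ and the rest $a$, which may overshoot $B$ (this is fine since \cref{def:majorization} is weak submajorization and does not require equal sums), and then bounds $\lceil\cdot\rceil\le\cdot+1$; you instead take the genuine in-class maximizer with $\lfloor\cdot\rfloor$ copies of $b$ plus one residual entry $s$, then bound $f(s)$ by the chord inequality — a slightly tighter intermediate estimate that you then relax to the claimed $+1$ form.
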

\begin{proof}
  Let $\delta^m_{\cM}$ be the multiplicity of an element $m\in\cM$. Denote by $\widetilde{\cM}$ the set of distinct elements in $\cM$.
  By definition, 
  \begin{align*}
    \sum\limits_{M\in\cM}M=\sum\limits_{m \in \widetilde{\cM}}\delta^m_{\cM}\cdot m=B, \ \forall  \cM\in\bbM_{c,B}^{[a,b]}
  \end{align*}
  and it follows that for all $\cM\in\bbM_{c,B}^{[a,b]}$ we have
  \begin{align*}
    \delta^b_{\cM}  &= \frac{1}{b}\bigg({B - \sum\limits_{m \in \widetilde{\cM}\setminus\{b\}}\delta^m_{\cM}\cdot m}\bigg) \leq \frac{B-(c-\delta^b_{\cM})a }{b}\ , \textrm{ and then,}\\
    \delta^b_{\cM} &\leq \frac{B-ca}{b-a}\ .
  \end{align*}

  Let $\cM_{\max} = \{b,\dots,b,a,\dots,a\}$ be a multiset with $\delta^b_{\cM_{\max}}=\ceil{\frac{B-ca}{b-a}}$ and $\delta^a_{\cM_{\max}} = c-\delta^b_{\cM_{\max}}$.
  It can readily be seen that $\cM_{\max}\succ\cM, \ \forall \ \cM\in\bbM_{c,B}^{[a,b]}$ (note that $\cM_{\max}\in \bbM_{c,B}^{[a,b]}$ if $(b-a)|(B-ca)$).
Since $f(x)$ is a convex non-decreasing function for $a\leq x \leq b$, it follows from \cref{lem:karamata} that
\begin{equation}\label{eq:sum_inequality_karamata}
\sum\limits_{m\in \cM_{\max}} f(m) \geq \sum\limits_{m\in \cM} f(m) \ , \ \forall\ \cM \in \bbM_{c,B}^{[a,b]}\ .
\end{equation}
Hence,
\begin{align*}
  \max_{\cM \in \bbM_{c,B}^{[a,b]}} \sum_{m\in \cM} f(m) &\leq \sum\limits_{m\in \cM_{\max}} f(m) \\
  &= \delta^b_{\cM_{\max}} f(b) + (c-\delta^b_{\cM_{\max}})f(a) \\
  &=  \ceil{\frac{B-ca}{b-a}}\parenv*{ f(b) - f(a)} + cf(a)
\end{align*}
and the statement follows.
\end{proof}

\subsubsection{Sum of Cardinalities of Alternant Codes}

Specific subclasses of alternant codes, such as some BCH and Goppa codes, are known to have larger dimension \cite{macwilliams1977theory} than the lower bound given in \cref{lem:AlternantDimBounds}.
However, in general it is a difficult and open problem to predict the dimension of an alternant code for arbitrary column multipliers $\bv$.
Nevertheless, the sum of the cardinalities of subfield subcodes over all nonzero column multipliers can be determined, given the weight enumerator of the code, as we show in the following.
This approach works not only for alternant codes, but also for any linear codes with known weight enumerator.

For a linear $[n,k,d]_{q^m}$ code $\code$, denote by $B_{n,d,w}(\code)$
the sum of the number of codewords of weight $w$ in the $\Fq$-subfield subcodes of $\cC$ over all nonzero column multipliers, i.e.,
\begin{align*}
  B_{n,d,w}(\code) &\defeq \sum_{\bv \in (\Fqm^{*})^n} \Big\lvert \{ \bc \cdot \diag(\bv) \ | \ \bc \in \code, \wtH(\bc) = w\} \cap \Fq^n \Big\rvert \ .
\end{align*}
Since every linear code contains the zero codeword and there is no codeword of Hamming weight $<d$ in the $[n,k,d]_{q^m}$ code, the sum of the cardinalities of the $\Fq$-subfield subcodes over all nonzero column multipliers is given by
\begin{align*}
  B_{n,d}(\code) &\defeq \sum_{\bv \in (\Fqm^{*})^n} \Big\lvert \{ \bc \cdot \diag(\bv) \ | \ \bc \in \code\} \cap \Fq^n\Big\rvert = (q^m-1)^n + \sum_{w=d}^{n} B_{n,d,w}(\code)  \ .
\end{align*}

If $\code$ is a $\GRS_{\balpha,\bv'}^d$ code as defined in \cref{def:GRScodes} for some $\bv' \in (\Fqm^{*})^n$, then $B_{n,d,w}$ is the sum of the number of codewords of weight $w$ in all alternant codes $\ALTallp$, and $B_{n,d}(\code)$ is the sum of the cardinalities of all $\ALTallp$.
Interestingly, while the weight enumerator and cardinality of a specific subfield subcode depend on $\bv$, the sum of these values over all $\bv$ only depends on the weight enumerators of $\code$.

\begin{lemma}\label{lem:sumCardinalitiesAlternant}
Let $\code$ be an $[n,k,d]_{q^m}$ code and denote by $A_w^{\code}$ the $w$-th weight enumerator of $\code$. Then,
\begin{align*}
  B_{n,d,w}(\code) = A^{\code}_{w}\cdot (q^m-1)^{n-w}(q-1)^w \ .
\end{align*}
\end{lemma}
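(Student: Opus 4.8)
The statement is about counting, over all nonzero column multipliers $\bv\in(\Fqm^*)^n$, how many weight-$w$ codewords of the scaled code $\code\cdot\diag(\bv)$ happen to lie in $\Fq^n$. The plan is to swap the order of summation: instead of iterating over $\bv$ and then over the weight-$w$ codewords of $\code\cdot\diag(\bv)$, iterate over the weight-$w$ codewords $\bc$ of $\code$ itself and count for how many $\bv$ the scaled vector $\bc\star\bv$ (where $\star$ is entrywise multiplication) lands in $\Fq^n$. Concretely,
\begin{align*}
  B_{n,d,w}(\code) = \sum_{\bv\in(\Fqm^*)^n}\abs*{\set*{\bc\star\bv\ |\ \bc\in\code,\ \wtH(\bc)=w}\cap\Fq^n}
  = \sum_{\substack{\bc\in\code\\\wtH(\bc)=w}}\abs*{\set*{\bv\in(\Fqm^*)^n\ |\ \bc\star\bv\in\Fq^n}}\ ,
\end{align*}
where the exchange is justified because the map $\bv\mapsto\bc\star\bv$ is a bijection on $(\Fqm^*)^n$ for each fixed $\bc$ with no zero entries is not quite what we need — rather, I would argue that each pair $(\bv,\bc)$ with $\wtH(\bc)=w$ and $\bc\star\bv\in\Fq^n$ is counted exactly once on each side, so the double sum can be reorganized by the $\bc$-index. (One subtlety: the left-hand multiset $\set*{\bc\star\bv\ |\ \dots}\cap\Fq^n$ collapses equal vectors, but two distinct weight-$w$ codewords $\bc\ne\bc'$ give distinct $\bc\star\bv,\bc'\star\bv$ for fixed invertible $\bv$, so no collapse actually occurs — I would state this explicitly.)

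The second step is the local count: for a fixed $\bc$ of Hamming weight exactly $w$, determine $\abs*{\set*{\bv\in(\Fqm^*)^n\ |\ \bc\star\bv\in\Fq^n}}$. Decompose coordinates into the support $\supp(\bc)$ (of size $w$) and its complement (of size $n-w$). On a coordinate $j\notin\supp(\bc)$ we have $c_j=0$, so $(\bc\star\bv)_j=0\in\Fq$ automatically, and $v_j$ ranges freely over $\Fqm^*$, contributing a factor $(q^m-1)$ each, hence $(q^m-1)^{n-w}$ in total. On a coordinate $j\in\supp(\bc)$ we have $c_j\ne 0$, and $(\bc\star\bv)_j=c_jv_j\in\Fq$ holds if and only if $v_j\in c_j^{-1}\Fq$; intersecting with $\Fqm^*$ this leaves exactly $q-1$ choices for $v_j$, hence $(q-1)^w$ in total. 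Multiplying, the local count is $(q^m-1)^{n-w}(q-1)^w$, independent of which weight-$w$ codeword $\bc$ was chosen.

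Combining, $B_{n,d,w}(\code)=\sum_{\bc\in\code,\ \wtH(\bc)=w}(q^m-1)^{n-w}(q-1)^w = A_w^{\code}\cdot(q^m-1)^{n-w}(q-1)^w$, since $A_w^{\code}$ is by definition the number of weight-$w$ codewords of $\code$. There is no real obstacle here — the only point requiring care is the summation-order exchange and the observation that no multiset collapse occurs, which I would handle by noting that for fixed invertible $\bv$ the scaling map $\bc\mapsto\bc\star\bv$ is injective on $\code$. A minor remark worth adding is that the $d$ in the notation $B_{n,d,w}$ plays no role in the formula (it only reflects that $\code$ has minimum distance $d$, so $A_w^{\code}=0$ for $0<w<d$); the identity holds for all $w$ with $0\le w\le n$.
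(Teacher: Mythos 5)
Your proof is correct and follows essentially the same approach as the paper: swap the order of summation, then for each fixed weight-$w$ codeword count the $\bv$'s coordinate by coordinate according to whether the coordinate lies in $\supp(\bc)$. Your explicit remark that no multiset collapse occurs (because $\bc\mapsto\bc\star\bv$ is injective for invertible $\bv$) makes the summation swap more rigorous than the paper's terse presentation, but it is the same argument.
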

\begin{proof}
Let $\bc$ be a codeword of $\code$. We have $\bc \cdot \diag(\bv) \in \Fq^n$ if and only if $c_iv_i \in \Fq$ for all $i \in [n]$. If $i \in \supp(\bc)$, then there are exactly $q-1$ choices of $v_i$ for which $c_iv_i \in \Fq$. Else, any of the $q^m-1$ possible values of $v_i$ give $c_iv_i=0 \in \Fq$. Hence, we have
\begin{align*}
B_{n,d,w}(\code)\hphantom{\sum} &=  \sum_{\bv \in (\Fqm^{*})^n} \left| \set*{ \bc \cdot \diag(\bv) \ | \ \bc \in \code, \wtH(\bc) = w} \cap \Fq^n \right| \\
 &= \sum_{\substack{\bc \in \code \\ \wtH(\bc) = w}} \left| \set*{ \bv \in (\Fqm^{*})^n \ | \ c_iv_i \in \Fq, \ \forall i \in [n] }  \right| \\
  &= A^{\code}_{w}\cdot (q^m-1)^{n-w}(q-1)^w \ .
\end{align*}
\end{proof}

If $\cC$ is an MDS code, 
then its weight enumerators $A_{w}^\code$, as given in~\cref{thm:MDSWeightEnumerator},
is completely determined by the code parameters (length, dimension/distance) and independent from the specific code constructions.
\begin{theorem}[Weight enumerators of MDS codes {\cite[Ch.~11, Theorem~6]{macwilliams1977theory}}]\label{thm:MDSWeightEnumerator}
  Let $\code$ be an $[n,k,d]_{q^m}$ MDS code. The $w$-th weight enumerator $A_w^{\mathsf{MDS}}$ of $\code$ is $A_0^{\mathsf{MDS}}=1$ and for $w\neq 0$,
  \begin{align*}
    A_w^{\mathsf{MDS}} &\defeq \left|\set*{ \bc\in \code \ | \ \wtH(\bc) = w} \right| 
    = \binom{n}{w} \sum_{j=0}^{w-d} (-1)^j \binom{w}{j} (q^{m(w-d+1-j)}-1)  \ .
  \end{align*}
\end{theorem}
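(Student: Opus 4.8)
The plan is to prove the formula by a standard inclusion--exclusion (Möbius inversion) argument over the subsets of the coordinate set $[n]$, exploiting the MDS property of $\code$. The case $w=0$ is immediate ($A_0^{\mathsf{MDS}}=1$), so throughout assume $w\geq 1$.

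First I would establish the key counting lemma: for any subset $S\subseteq[n]$ with $|S|=i$, the number of codewords of $\code$ whose support is contained in $S$ depends only on $i$ and equals $f_i\defeq q^{m\max\{i-d+1,\,0\}}$. To see this, fix a parity-check matrix $\bH\in\Fqm^{(d-1)\times n}$ of the MDS code $\code$ (recall $n-k=d-1$). A codeword is supported on $S$ precisely when the corresponding coordinate vector lies in the $\Fqm$-kernel of the submatrix $\bH|_S$ formed by the columns indexed by $S$. Since $\code$ has minimum Hamming distance $d$, every $d-1$ columns of $\bH$ are $\Fqm$-linearly independent, so $\rank(\bH|_S)=\min\{i,d-1\}$ and the kernel has dimension $\max\{i-d+1,0\}$ over $\Fqm$; the claimed value of $f_i$ follows. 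In particular $f_i=1$ for $i\leq d-1$ (only the zero codeword) and $f_i=q^{m(i-d+1)}$ for $i\geq d-1$.

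Next I would apply Möbius inversion on the Boolean lattice. Writing $N(T)$ for the number of codewords with support \emph{exactly} $T$, we have $f_{|T|}=\sum_{S\subseteq T}N(S)$, whence $N(T)=\sum_{S\subseteq T}(-1)^{|T|-|S|}f_{|S|}$. Since $f_{|S|}$ depends only on $|S|$, the quantity $N(T)$ depends only on $w\defeq|T|$; denote it $N_w=\sum_{j=0}^{w}(-1)^j\binom{w}{j}f_{w-j}$. Summing over the $\binom{n}{w}$ subsets $T$ of size $w$ gives $A_w^{\mathsf{MDS}}=\binom{n}{w}N_w$.

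The final step is the algebraic simplification of $N_w$. Splitting the sum at the index where $w-j=d-1$ and using $f_{w-j}=1$ for $w-j\leq d-2$ yields $N_w=\sum_{j=0}^{w-d+1}(-1)^j\binom{w}{j}q^{m(w-d+1-j)}+\sum_{j=w-d+2}^{w}(-1)^j\binom{w}{j}$. Invoking the identity $\sum_{j=0}^{w}(-1)^j\binom{w}{j}=0$ (valid for $w\geq1$) to rewrite the second sum as $-\sum_{j=0}^{w-d+1}(-1)^j\binom{w}{j}$ and combining gives $N_w=\sum_{j=0}^{w-d+1}(-1)^j\binom{w}{j}\bigl(q^{m(w-d+1-j)}-1\bigr)$; the term $j=w-d+1$ vanishes, so the sum may be truncated at $j=w-d$. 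Multiplying by $\binom{n}{w}$ yields exactly the asserted formula, and it is the empty sum (hence $0$) whenever $w<d$, as it must be. I expect the only mildly delicate point to be the bookkeeping of this split together with the binomial identity; the counting lemma, while the conceptual heart, is routine once the MDS characterization of $\bH$ is in hand.
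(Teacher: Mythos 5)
The paper gives no proof of this statement; it is cited verbatim from MacWilliams and Sloane (Ch.~11, Thm.~6). Your argument is correct and is the standard inclusion--exclusion proof of the MDS weight enumerator: the counting lemma that the number of codewords with support contained in a size-$i$ set equals $q^{m\max\{i-d+1,0\}}$ follows from $\rank(\bH|_S)=\min\{i,d-1\}$ (every $d-1$ columns of $\bH$ are independent for an MDS code), and the remaining steps --- M\"obius inversion on the Boolean lattice, the split of the sum at $w-j=d-1$, and the identity $\sum_{j=0}^w(-1)^j\binom{w}{j}=0$ --- are all carried out correctly, including the observation that the $j=w-d+1$ term vanishes so the upper limit can be lowered to $w-d$. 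This is essentially the argument found in the cited source.
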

Hence, for an MDS code $\code$, we can write the sum of the cardinalities of the $\Fq$-subfield subcodes of $\cC$ without dependence on $\code$ as
\begin{equation}\label{eq:BforMDS}
  B^{\mathsf{MDS}}_{n,d,w} \defeq B_{n,d,w}(\code)  \ \text{and} \ B^{\mathsf{MDS}}_{n,d} \defeq B_{n,d}(\code)\ .
\end{equation}

\subsubsection{Probability of a Code Containing a Random Matrix}

We begin by proving a technical lemma that bounds the probability that all rows of a randomly chosen matrix with nonzero columns are in a code of a certain dimension. This is a refined version of \cite[Lemma~3]{schmidt2009collaborative}. 
Recall that $\EB{q}{\intOrd}{n}$ is the set of matrix without zero columns in $\Fq^{\intOrd\times n}$.
\begin{lemma}\label{lem:Pwk}
For some integers $\intOrd>0,n\geq k\geq 0$, let $\cA$ be an $[n,k]_q$ code and denote by $A_w^{\cA}$ its $w$-th weight enumerator.
Then, for a matrix $\bE$ taken independently from $\EB{q}{\intOrd}{n}$, we have
\begin{align*}
  \Pr_{\bE}\set*{\be_{i} \in\cA, \ \forall i \in [\intOrd]} \leq \frac{q^{k\intOrd}(q-1)- (q^\intOrd-1)(q^k-1-A_n^{\cA}) -(q-1) }{(q-1)(q^\intOrd-1)^n} \ ,
\end{align*}
where $\be_i$ is the $i$-th row of $\bE$.
\end{lemma}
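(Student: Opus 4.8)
The plan is to count, for each row, how many vectors of $\Fq^n$ lie in $\cA$ and are "safe" in the sense that they can be completed into a matrix with no zero columns, and then convert this into a probability. Concretely, I would condition on which columns of $\bE$ are "problematic," where a column $j$ is problematic if $e_{i,j}=0$ for all rows we have so far committed to. Since $\bE$ ranges over $\EB{q}{\intOrd}{n}$, every column of $\bE$ is a nonzero vector of $\Fq^\intOrd$, so the columns are chosen independently and uniformly from the $(q^\intOrd-1)$ nonzero vectors. The denominator of the claimed bound, $(q-1)(q^\intOrd-1)^n$, already hints at this: $(q^\intOrd-1)^n$ is $|\EB{q}{\intOrd}{n}|$, and the extra $(q-1)$ is a normalization coming from a scaling argument on the rows.

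First I would set up the event $\set*{\be_i\in\cA,\ \forall i}$ as follows. Fix an ordered tuple $(\be_1,\dots,\be_\intOrd)$ of codewords of $\cA$. The number of such tuples is $q^{k\intOrd}$. Among these, I must count those for which the resulting matrix has no zero column, i.e.\ for every $j\in[n]$ there is some $i$ with $e_{i,j}\neq 0$; the number of valid tuples is $\Pr_{\bE}\set*{\be_i\in\cA,\ \forall i}\cdot(q^\intOrd-1)^n$ after dividing by $|\EB{q}{\intOrd}{n}|$. The complementary (bad) tuples are those where some column is entirely zero. I would bound the number of bad tuples using inclusion–exclusion, or more crudely a union bound, over the $n$ possible fully-zero columns. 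If column $j$ is zero, then all $\intOrd$ rows lie in the shortened code $\cA$ punctured-to-zero at position $j$, which is an $[n,k']_q$ code with $k'\le k$; in particular the number of codewords of $\cA$ with a zero in position $j$ is at most $q^{k-1}$ if position $j$ is not identically zero on $\cA$, and exactly $q^k$ only in the degenerate case that $\cA$ has an all-zero coordinate — a case handled separately and which forces $A_n^{\cA}$ bookkeeping. This is where the weight enumerator $A_n^{\cA}$ (number of full-weight codewords) enters: a codeword contributes to "no fully-zero column is forced by this row alone" exactly when it has full weight, so the refined count over all $\intOrd$ rows jointly brings in the factor $(q^\intOrd-1)$ multiplying $(q^k-1-A_n^{\cA})$, the count of nonzero non-full-weight codewords.

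The key computation is to show that the number of good tuples is at most
$$
\frac{q^{k\intOrd}(q-1)-(q^\intOrd-1)(q^k-1-A_n^{\cA})-(q-1)}{q-1}\ ,
$$
after which dividing by $(q^\intOrd-1)^n$ gives the claim. I would obtain this by the following accounting: start from all $q^{k\intOrd}$ tuples; subtract tuples that are "bad in row $i$ at some coordinate," being careful with the $(q-1)$-scaling that identifies tuples which are scalar multiples along a coordinate (this is the analogue of the GRS multiplicity $q^m-1$ used elsewhere in the paper, here playing out over $\Fq$), and then add back the all-zero tuple and the full-weight contributions to avoid over/under-counting. Comparing with the structure of \cite[Lemma~3]{schmidt2009collaborative}, which gives a coarser bound not involving $A_n^{\cA}$, the refinement is precisely to track full-weight codewords separately since they never force a zero column.

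The main obstacle I anticipate is getting the inclusion–exclusion bookkeeping exactly right so that it yields the stated closed form rather than just an order-of-magnitude bound: one must argue that the union bound over the $n$ coordinates, combined with the $(q-1)$-scaling identification of rows and the separate treatment of full-weight codewords, collapses to exactly $(q^\intOrd-1)(q^k-1-A_n^{\cA})+(q-1)$ subtracted terms in the numerator, with no cross-terms surviving. I would handle this by first doing the case $\intOrd=1$ carefully (where the bound should reduce to counting full-weight codewords of $\cA$ plus the zero codeword, divided appropriately), and then lifting to general $\intOrd$ by noting that a column $j$ is fully zero in the matrix iff it is zero in every row, which multiplies the "bad" probability per coordinate but, crucially, the "safe because full weight" event only needs one row to witness it — this asymmetry is exactly what produces the $(q^\intOrd-1)$ versus $(q-1)$ split in the numerator.
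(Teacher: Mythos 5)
Your high-level target is correct — you correctly identify that one needs a lower bound on the set of ``matrices with at least one all-zero column'' (call it $\cL_0$) inside $\cL$, so that $|\cL| - |\cL_0|$ gives an upper bound on the good count — and you also correctly intuit the two ingredients (the $(q-1)$-scaling of rows and the role of $A_n^{\cA}$). But the mechanism you propose for bounding $|\cL_0|$ is aimed in the wrong direction. A union bound over the $n$ columns gives an \emph{upper} bound on $|\cL_0| = |\bigcup_j \{\text{column $j$ is zero}\}|$, which is useless here; inclusion--exclusion gives an exact count but involves the dimensions of all shortened codes of $\cA$ and does not collapse to the stated closed form; and the first Bonferroni truncation again goes the wrong way. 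The ``getting the bookkeeping right so cross-terms vanish'' obstacle you anticipate is a symptom that the framework itself is off, not a detail that careful accounting will fix.

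The missing idea is much simpler: rather than trying to count $\cL_0$ by columns, exhibit a concrete subfamily of $\cL_0$ that you can count exactly. The paper takes the set of matrices whose rows are \emph{all} $\Fq$-scalar multiples of one fixed codeword $\be$ of weight $<n$. Any such matrix has an all-zero column (wherever $\be$ vanishes), so it lies in $\cL_0$. Fixing a normalized representative $\be$ (first nonzero entry $1$), the tuple of scalars ranges over $\Fq^{\intOrd}$, giving $q^{\intOrd}-1$ nonzero matrices per choice of $\be$, plus the single all-zero matrix; distinct normalized $\be$'s give disjoint nonzero matrices, and the number of normalized non-full-weight codewords is $\frac{q^k-1-A_n^{\cA}}{q-1}$. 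That is exactly where the $(q^\intOrd-1)$, $(q-1)$, and $A_n^{\cA}$ factors come from — not from shortened codes or column-wise scaling, but from row-wise collinearity to a single non-full-weight codeword. With that subfamily in hand, the bound falls out immediately from $|\cL|=q^{k\intOrd}$, $|\EB{q}{\intOrd}{n}|=(q^\intOrd-1)^n$, and $\cL_0\cap\EB{q}{\intOrd}{n}=\emptyset$, with no inclusion--exclusion at all.
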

\begin{proof}
  Let $\cL \subset \F_q^{\intOrd\times n}$ the set of matrices whose rows are codewords of $\cA$ and by $\cL_0\subset \cL$ the subset of all matrices in $\cL$ with at least one all-zero column.
  Denote by $\bar{\cA}\subset \cA$ the set of
  codewords of $\cA$ whose first nonzero entry is $1$.
  Then $\bar{\cA}$ has cardinality $|\bar{\cA}|=\frac{q^k-1}{q-1}$.
  It can be seen that
\begin{equation*}
  \left\{\bE \ \left| \
    \be_{1},\ldots,\be_{\intOrd} \ \text{are $\Fq$-scalar multiples of } 
    \be\in\bar{\cA}\cup\{0\},\wtH(\be)<n
  \right.
\right\} \subseteq \cL_{0}\ .
\end{equation*}
If $\be=\0$ there is only the zero matrix in this set. For all the nonzero $\be$ with $\wtH(\be)<n$, each row $\be_i$ of $\bE$ can be an $\Fq$-multiple of $\be$ and all such matrices $\bE$ are unique, if at least one row is not $\0$. The number of such choices is $q^\intOrd-1$, so 
\begin{align*}
    |\cL_0|&\geq (q^\intOrd-1) (|\bar{\cA}|-\underbrace{|\{\bc\in\bar{\cA} \ | \ \wtH(\bc)=n\}|}_{=\frac{A_n^{\cA}}{(q-1)}}) + 1 
    = \frac{(q^\intOrd-1)}{(q-1)}(q^k-1-A_n^{\cA}) +1 \ .
\end{align*}
Recall that $\EB{q}{\intOrd}{n}$ does not contain any matrices with all-zero columns by definition, so $\cL_0\cap \EB{q}{\intOrd}{n}= \emptyset$. As $\cL_0 \subset \cL$, it follows that
\begin{align*}
  \Pr_{\bE}\set*{\be_{i} \in\cA, \ \forall i=[\intOrd]}&=\frac{|\cL\cap\EB{q}{\intOrd}{n}|}{|\EB{q}{\intOrd}{n}|} = \frac{|\cL\setminus\cL_{0}|}{|\EB{q}{\intOrd}{n}|} = \frac{|\cL|-|\cL_0|}{|\EB{q}{\intOrd}{n}|} \ .
\end{align*}
The statement follows from $|\cL|=|\cA|^\intOrd=q^{k\intOrd}$ and $|\EB{q}{\intOrd}{n}|=(q^\intOrd-1)^n$.

\end{proof}

If $|\cL_0|$ is large, it is worthy to deduct it from $|\cL|$ as in \cref{lem:Pwk}. However, for some parameters, (our best lower bound on) $|\cL_0|$ becomes negligible compared to $|\cL|$. Therefore, we also define a simplified version of this upper bound, where we only exclude the zero matrix from $\cL$.
The difference between $\labelMain$ and $\labelLz$ in the \cref{fig:plots1,fig:plots2} reflects the difference between \cref{lem:Pwk} and \cref{cor:PwkL01}.
\begin{corollary}\label{cor:PwkL01}
For some integers $\intOrd>0,n\geq k\geq 0$, let $\cA$ be an $[n,k]_q$ code.
Then, for $\bE$ that is i.i.d.~in $\EB{q}{\intOrd}{n}$, we have
\begin{equation*}
    \Pr_{\bE}\set*{\be_{i} \in\cA, \ \forall i\in[\intOrd]} \leq  \frac{|\cL\setminus\{\0_{\intOrd\times n}\}|}{|\EB{q}{\intOrd}{n}|} = \frac{q^{k\intOrd} -1 }{(q^\intOrd-1)^n} \ .
\end{equation*}
\end{corollary}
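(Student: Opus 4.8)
\textbf{Proof proposal for \cref{cor:PwkL01}.}

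The plan is to follow essentially the same counting argument as in the proof of \cref{lem:Pwk}, but replacing the careful lower bound on $|\cL_0|$ (the set of row-code matrices with at least one all-zero column) by the trivial statement that the all-zero matrix $\0_{\intOrd\times n}$ is one such matrix. First I would set up the same notation: let $\cL\subseteq\Fq^{\intOrd\times n}$ be the set of matrices all of whose rows lie in the code $\cA$, so that $|\cL|=|\cA|^\intOrd=q^{k\intOrd}$, and let $\EB{q}{\intOrd}{n}$ be the set of matrices in $\Fq^{\intOrd\times n}$ with no all-zero column, so that $|\EB{q}{\intOrd}{n}|=(q^\intOrd-1)^n$ (this cardinality is already computed and used in the proof of \cref{lem:Pwk}).

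The key observation is that if $\bE$ has all its rows in $\cA$ \emph{and} no all-zero column, then in particular $\bE\neq\0_{\intOrd\times n}$, because the all-zero matrix has every column equal to $\0$. Hence
\begin{align*}
  \set*{\bE\in\EB{q}{\intOrd}{n}\ |\ \be_i\in\cA,\ \forall i\in[\intOrd]} = \parenv*{\cL\cap\EB{q}{\intOrd}{n}}\setminus\set*{\0_{\intOrd\times n}} \subseteq \cL\setminus\set*{\0_{\intOrd\times n}}\ ,
\end{align*}
since $\0_{\intOrd\times n}\in\cL$ (every code contains the zero codeword, so every row of $\0_{\intOrd\times n}$ is in $\cA$). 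Dividing by $|\EB{q}{\intOrd}{n}|$ gives
\begin{align*}
  \Pr_{\bE}\set*{\be_i\in\cA,\ \forall i\in[\intOrd]} = \frac{\abs*{\parenv*{\cL\cap\EB{q}{\intOrd}{n}}\setminus\set*{\0_{\intOrd\times n}}}}{\abs*{\EB{q}{\intOrd}{n}}} \leq \frac{\abs*{\cL\setminus\set*{\0_{\intOrd\times n}}}}{\abs*{\EB{q}{\intOrd}{n}}} = \frac{q^{k\intOrd}-1}{(q^\intOrd-1)^n}\ ,
\end{align*}
which is exactly the claimed bound.

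There is no real obstacle here — this is a one-line weakening of \cref{lem:Pwk}, and the only thing to be careful about is the inclusion $\cL_0\cap\EB{q}{\intOrd}{n}=\emptyset$ (used implicitly), which holds because by definition $\EB{q}{\intOrd}{n}$ contains no matrix with an all-zero column while $\cL_0$ consists precisely of such matrices; combined with $\0_{\intOrd\times n}\in\cL_0$ this justifies subtracting at least the single element $\0_{\intOrd\times n}$ from $\cL$. If one wants, the corollary can simply be stated as an immediate consequence of the proof of \cref{lem:Pwk} by noting $|\cL_0|\geq 1$, which is the shortest route and the one I would take in the write-up.
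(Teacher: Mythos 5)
Your proof is correct and takes exactly the route the paper intends: the paper states this corollary as an immediate simplification of \cref{lem:Pwk} where only the zero matrix is subtracted from $\cL$, which is precisely your observation that $\0_{\intOrd\times n}\in\cL\setminus\EB{q}{\intOrd}{n}$ combined with $|\cL|=q^{k\intOrd}$ and $|\EB{q}{\intOrd}{n}|=(q^\intOrd-1)^n$. Nothing to change.
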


With all the technical tools established, we are now ready to present the bounds on the success probability of decoding interleaved alternant codes using the decoder from \cite{feng1991generalization,schmidt2009collaborative} (see also~\cref{algo:SyndromeDecoder}).

Recall that the success probability is given by
\begin{align*}
  \Psuc = 1- \Pfail - \Pmisc \ ,
\end{align*}
where $\Pfail$ and $\Pmisc$ are the probability of a decoding failure and a miscorrection, respectively. 


\subsection{A Lower Bound on Success Probability}
\label{sec:lb-succ-IA}
In order to derive a lower bound on the success probability, we first establish a connection between the multisets $\mathbb{A}_{\balpha|_{\cV}}^{d-t}$ as defined in \eqref{eq:multiset-alternant} and the probability
of successful decoding.
\begin{theorem}\label{thm:failureProbNewBoundGeneral}
  Let $\mathcal{IC}^{(\intOrd)}$ be an $\intOrd$-interleaved alternant code with $\code \in \mathbb{A}^d_{\balpha}$, $n= |\balpha|$ and $\cE=\set*{j_1,j_2, \dots,j_t}\subset [n]$ be a set of $|\cE|=t$ error positions.
  For a codeword $\bC\in \mathcal{IC}^{(\intOrd)}$, an error matrix $\widetilde{\bE} \in \Fq^{\intOrd \times n}$ with $\supp(\widetilde{\bE}) =\cE$ and $\bE \defeq \widetilde{\bE}|_{\cE}$ i.i.d.~in $\EB{q}{\intOrd}{t}$, and a received word $\bR = \bC + \widetilde{\bE}$, \cref{algo:SyndromeDecoder} \emph{succeeds}, i.e., returns $\hat \bC = \bC$, with probability
\begin{align*}
  \Psuc(\mathcal{IC}^{(\intOrd)},\cE) &\geq 1- \sum_{w=d-t}^t \sum_{\substack{\cV \subseteq [\cE]\\ |\cV|=w}} \sum_{\cA \in \mathbb{A}_{\balpha|_{\cV}}^{d-t}} \!\! \Big(\delta_{\mathbb{A}_{\balpha|_{\cV}}^{d-t}}^{\cA}\Big)^{-1} \underset{\bE}{\Pr}\{ \be_{i}|_{\cV} \in \cA, \ \forall  i\in [\intOrd]  \}\ ,
\end{align*}
where $\be_{i}|_{\cV}$ is the $i$-th row of $\bE$ restricted to the entries indexed in $\cV$ and $\delta_{\mathbb{A}_{\balpha|_{\cV}}^{d-t}}^{\cA}$ is the multiplicity of $\cA$ in the multiset $\mathbb{A}_{\balpha|_{\cV}}^{d-t}$.
\end{theorem}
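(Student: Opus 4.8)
The plan is to turn the exact characterization in \cref{lem:FailureCrux} into a probability bound via a union bound over the ``bad'' events, and then re-index those events in terms of alternant codes so that \cref{lem:Pwk} (or \cref{cor:PwkL01}) can be applied block by block. Recall that \cref{lem:FailureCrux} says that decoding fails (i.e., is unsuccessful) if and only if there exists a nonzero $\bv\in\Fqm^t$ with $\bH\cdot\diag(\bv)\cdot\bE^\top=\0$, where $\bH$ is a parity-check matrix of $\GRS_{\balpha|_\cE,\1}^{d-t}$. First I would observe that each such $\bv$ has some support $\cV'\defeq\supp(\bv)\subseteq\cE$ of size $w$, say, and that the condition $\bH\diag(\bv)\bE^\top=\0$ with $\bv$ supported on $\cV'$ only involves the columns of $\bH$ and $\bE$ indexed by $\cV'$. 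Restricting $\bH$ to those columns yields a parity-check matrix of $\GRS_{\balpha|_{\cV'},\1}^{d-t}$, and the equation $\bH|_{\cV'}\diag(\bv|_{\cV'})(\bE|_{\cV'})^\top=\0$ with $\bv|_{\cV'}$ having full support $w$ says precisely that each row $\be_i|_{\cV'}$ lies in the alternant code obtained from $\GRS_{\balpha|_{\cV'},\1}^{d-t}$ with column multipliers $\bv|_{\cV'}$. So the bad event decomposes as a union, over $w$, over subsets $\cV'\subseteq\cE$ of size $w$, and over alternant codes $\cA$ in the multiset $\mathbb{A}_{\balpha|_{\cV'}}^{d-t}$, of the events $\{\be_i|_{\cV'}\in\cA,\ \forall i\in[\intOrd]\}$.

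Next I would pin down the range of $w$. Since $\bv\neq\0$, we have $w\geq 1$; but more importantly, for the alternant code $\cA\in\mathbb{A}_{\balpha|_{\cV'}}^{d-t}$ to contain any nonzero vector supported on all of $\cV'$ (so that $\bv|_{\cV'}$ exists with full support), the underlying GRS code $\GRS_{\balpha|_{\cV'},\1}^{d-t}$ over the length-$w$ index set must be nontrivial, which forces $w\geq d-t$ (its minimum distance is $\geq d-t-w+1$ when $w<d-t$ there is only the zero codeword, as in the dimension bookkeeping for $\GRS$ codes). Also $w\leq|\cE|=t$. This gives the summation range $w\in\{d-t,\dots,t\}$ appearing in the statement. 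I would remark that implicitly we need $t\geq d/2$, i.e., the error weight exceeds the unique-decoding radius, for the bad event to be nonempty at all; for smaller $t$ the sum is empty and $\Psuc=1$, consistent with \cref{lem:FailRankCond}.

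Then I would handle the multiplicity factor. Fixing $w$ and $\cV'$ of size $w$, the quantity we need is the probability that there exists \emph{some} nonzero $\bv$ with $\supp(\bv)=\cV'$ realizing the failure condition, which is the probability that $\be_i|_{\cV'}$ all lie in at least one of the codes in the multiset $\mathbb{A}_{\balpha|_{\cV'}}^{d-t}$. A clean union bound over the \emph{distinct} codes in this multiset gives $\sum_{\cA}\bigl(\delta^{\cA}_{\mathbb{A}_{\balpha|_{\cV'}}^{d-t}}\bigr)^{-1}\Pr_{\bE}\{\be_i|_{\cV'}\in\cA,\ \forall i\}$, because summing over the multiset with each distinct code $\cA$ appearing $\delta^{\cA}$ times and dividing by $\delta^{\cA}$ recovers exactly one term per distinct code. (One must be careful here: $\bv$ and any $\Fqm^*$-scalar multiple give the same code and the same event, and $\bv$, $\bv'$ with the same support can give the same alternant code — the multiset/multiplicity formalism in \eqref{eq:multiset-alternant} and \eqref{eq:multiplicityAlternant} is precisely designed to absorb this, so dividing by $\delta^{\cA}$ avoids overcounting.) Finally, applying the union bound over all $w$ and all $\binom{|\cE|}{w}$ choices of $\cV'\subseteq\cE$, and subtracting from $1$, yields exactly the claimed inequality for $\Psuc(\mathcal{IC}^{(\intOrd)},\cE)$.

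The main obstacle I anticipate is the bookkeeping around the multiplicities $\delta^{\cA}$ and making sure the union bound over distinct codes is tight enough to be useful while still being a valid upper bound on the bad-event probability — in particular justifying that every witness $\bv$ for the failure condition corresponds to at least one code in the multiset and that summing $\delta^{-1}\Pr\{\cdots\}$ over the multiset does not under-count. Once that correspondence is nailed down, the rest is a routine union bound and the identification of the restricted GRS parity-check matrix with that of $\GRS_{\balpha|_{\cV'},\1}^{d-t}$; the actual estimation of $\Pr_{\bE}\{\be_i|_{\cV'}\in\cA,\ \forall i\}$ is deferred to \cref{lem:Pwk}/\cref{cor:PwkL01} and would be used in a subsequent corollary rather than in this theorem.
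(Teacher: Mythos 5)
Your plan reproduces the paper's proof essentially step for step: invoke \cref{lem:FailureCrux}, split the failure event by the weight $w$ and support $\cV'$ of the witness $\bv$, recognize that the restricted condition says each row of $\bE|_{\cV'}$ lies in an alternant code in $\mathbb{A}^{d-t}_{\balpha|_{\cV'}}$, and then do a union bound over the multiset with the $\delta^{-1}$ correction to recover exactly one term per distinct code. The identification of $\bH|_{\cV'}$ with a parity-check matrix of $\GRS^{d-t}_{\balpha|_{\cV'},\1}$, the multiset bookkeeping, and the subsequent application of \cref{lem:Pwk}/\cref{cor:PwkL01} are all as the paper does.

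One small correction on the truncation of the sum to $w\geq d-t$. The parenthetical ``(so that $\bv|_{\cV'}$ exists with full support)'' does no work here: $\bv|_{\cV'}$ always has full support because $\cV'=\supp(\bv)$, and it plays the role of column multipliers, not of a codeword. The real reason the terms with $w<d-t$ vanish is that $\bE\in\EB{q}{\intOrd}{t}$ has no zero columns, so at least one row $\be_i|_{\cV'}$ is nonzero. When $w<d-t$ the GRS code of length $w$ cut out by $\bH|_{\cV'}$ has dimension $w-(d-t-1)\leq 0$, hence $\cA=\set*{\0}$, which cannot contain that nonzero row. Equivalently, as the paper phrases it, any $d-t-1$ columns of $\bH$ are linearly independent, so $\bH\diag(\bv)\bE^\top=\0$ with $\wtH(\bv)<d-t$ forces $\diag(\bv)\bE^\top=\0$ and therefore zero columns in $\bE$, a contradiction. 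Once you replace the parenthetical with this observation, the argument is complete and matches the paper's.
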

\begin{proof}
  By \cref{lem:FailureCrux} the decoding of $\widetilde{\bE}$ is unsuccessful if and only if
  \begin{equation*}
    \exists \bv \in \Fqm^t \setminus \{\0\} \text{ such that } \bH \cdot \diag(\bv) \cdot \bE^\top = \0 \ ,
  \end{equation*}
  where $\bH\in\Fqm^{(d-t-1)\times t}$ is a parity-check matrix of the $[t,2t-d+1, d-t]_{q^m}$ code $\GRS_{\balpha|_\cE, \boldsymbol{1}}^{d-t}$.

Therefore, the probability of unsuccessful decoding is upper bounded by
\begingroup
\allowdisplaybreaks
\begin{align}
  1&-\Psuc(\mathcal{IC}^{(\intOrd)},\mathcal{E}) \nonumber \\
  &= \underset{\bE}{\Pr}\{\exists  \bv \in \Fqm^{t} \setminus \{\0\} \textrm{ s.t.}\ \bH \cdot \diag(\bv) \cdot \bE^\top = \0 \}\nonumber \\
                     &= \sum_{w=1}^{t} \underset{\bE}{\Pr}\{\exists  \bv \in \Fqm^{t}\textrm{ with } \wtH(\bv) = w \textrm{ s.t.}\ \bH \cdot \diag(\bv) \cdot \bE^\top = \0 \}\nonumber\\
                     &{=} \! \sum_{w=d-t}^t \! \underset{\bE}{\Pr}\{ \exists \bv \in \Fqm^{t}\textrm{ with } \wtH(\bv) \!=\! w \textrm{ s.t.} \ \bH \cdot \diag(\bv) \cdot \bE^\top \! = \0 \}\label{eq:ieq1}\\
                     &=  \sum_{w=d-t}^t \sum_{\substack{\cV \subseteq [\cE]\\ |\cV|=w}} \underset{\bE}{\Pr}\{ \exists  \cA \in \mathbb{A}_{\balpha|_\cV}^{d-t} \ \textrm{s.t.}\  \be_{i}|_{\cV} \in \cA, \ \forall i\in [\intOrd]  \}\nonumber\\
                     &\leq  \sum_{w=d-t}^t \sum_{\substack{\cV \subseteq [\cE]\\ |\cV|=w}} \sum_{\cA \in \mathbb{A}_{\balpha|_{\cV}}^{d-t}}  \parenv*{\delta_{\mathbb{A}_{\balpha|_{\cV}}^{d-t}}^{\cA}}^{-1} \underset{\bE}{\Pr}\{ \be_{i}|_{\cV}  \in \cA,\  \forall i\in [\intOrd]  \} \ , \nonumber
\end{align}
\endgroup
where \eqref{eq:ieq1} holds because any $d-t-1$ columns of $\bH$ are linearly independent.
\end{proof}

With this connection between the multisets $\mathbb{A}_{\balpha|_{\cV}}^{d-t}$ and the probability of successful decoding $\Psuc(\mathcal{IC}^{(\intOrd)},\cE)$ established, we now apply the technical results of \cref{sec:TechnicalPreliminaries} to obtain a lower bound.

\begin{theorem}[Lower bound on $\Psuc$]\label{thm:failureProbNewBound}
The probability of successful decoding $\Psuc(\cI\code^{(\intOrd)},\cE)$ as in \cref{thm:failureProbNewBoundGeneral} is lower bounded by
\begin{align*}
  \Psuc(\mathcal{IC}^{(\intOrd)},\cE) \geq 1-\sum_{w=d-t}^t \frac{\binom{t}{w}}{(q^m-1)(q^\intOrd-1)^w}
  \cdot &\Bigg( \frac{(q^\intOrd-1)}{(q-1)}\Big(c_w+B_{w,d-t,w}^{\mathsf{MDS}}-B_{w,d-t}^{\mathsf{MDS}}\Big) -c_w \\
  & + \bigg(\frac{B_{w,d-t}^{\mathsf{MDS}}-c_w a_w}{b_w-a_w}+1\bigg)( b_w^\intOrd - a_w^\intOrd) + c_w a_w^\intOrd \Bigg) \ ,
\end{align*}
with
\begin{align*}
  a_w &= \max\{1,q^{w-(d-t-1)m}\} , \quad
        b_w = q^{k_q^{\mathsf{opt.}}(w,d-t)} , \quad \textrm{and} \quad
    c_w = (q^m-1)^{w}\ ,
\end{align*}
where $B_{w,d-t}^{\mathsf{MDS}}$ and $B_{w,d-t,w}^{\mathsf{MDS}}$ are given in \eqref{eq:BforMDS} and $k_q^{\mathsf{opt.}}(w,d-t)$ is an upper bound on the dimension of a $q$-ary code of length $w$ and minimum Hamming distance $d-t$.
\end{theorem}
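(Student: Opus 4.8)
\textbf{Proof proposal for \cref{thm:failureProbNewBound}.}

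The plan is to start from the bound in \cref{thm:failureProbNewBoundGeneral} and bound each of the three summations in turn, using the technical preliminaries of \cref{sec:TechnicalPreliminaries}. Recall that \cref{thm:failureProbNewBoundGeneral} gives
\begin{align*}
  \Psuc(\mathcal{IC}^{(\intOrd)},\cE) \geq 1- \sum_{w=d-t}^t \sum_{\substack{\cV \subseteq [\cE]\\ |\cV|=w}} \sum_{\cA \in \mathbb{A}_{\balpha|_{\cV}}^{d-t}} \Big(\delta_{\mathbb{A}_{\balpha|_{\cV}}^{d-t}}^{\cA}\Big)^{-1} \underset{\bE}{\Pr}\{ \be_{i}|_{\cV} \in \cA, \ \forall  i\in [\intOrd]  \}\ .
\end{align*}
For the innermost sum, I would first apply \cref{lem:Pwk} (with $n$ replaced by $w$, the constituent code being $\cA$ of some dimension $k_\cA$ and length $w$): since every $\cA\in\mathbb{A}_{\balpha|_{\cV}}^{d-t}$ has multiplicity at least $q^m-1$ by \eqref{eq:multiplicityAlternant}, the factor $(\delta^{\cA})^{-1}$ is at most $(q^m-1)^{-1}$, and $|\mathbb{A}_{\balpha|_{\cV}}^{d-t}| = (q^m-1)^w$ by \eqref{eq:cardinalityAall}. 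Summing the bound from \cref{lem:Pwk} over $\cA\in\mathbb{A}_{\balpha|_{\cV}}^{d-t}$, the terms $A_w^{\cA}$ (the number of full-weight codewords) and $q^{k_\cA\intOrd}$ both need to be controlled by summing over the multiset of alternant codes.

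The key observation is that $\sum_{\cA\in\mathbb{A}_{\balpha|_{\cV}}^{d-t}} A_w^{\cA}$ is exactly $B_{w,d-t,w}$ evaluated at the GRS code corresponding to $\balpha|_\cV$, and since that GRS code is MDS, by \cref{lem:sumCardinalitiesAlternant} and \cref{thm:MDSWeightEnumerator} this equals $B_{w,d-t,w}^{\mathsf{MDS}}$, which is independent of $\cV$. Similarly $\sum_{\cA} q^{k_\cA} = B_{w,d-t}^{\mathsf{MDS}} - c_w$ where $c_w = (q^m-1)^w$ accounts for the all-zero codewords (each of the $|\mathbb{A}_{\balpha|_\cV}^{d-t}| = c_w$ subfield subcodes contributes its zero codeword to the cardinality sum, but \cref{lem:Pwk} requires $|\cA| = q^{k_\cA}$, so I must subtract the $c_w$ once to convert between $\sum|\cA|$ and $\sum q^{k_\cA}$ — actually $\sum q^{k_\cA} = \sum |\cA| = B_{w,d-t}^{\mathsf{MDS}}$ directly, so the bookkeeping of $c_w$ arises when applying \cref{lem:Pwk}'s formula term-by-term). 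The remaining term $\sum_\cA q^{k_\cA \intOrd}$ is the genuine difficulty: it is a sum of $\intOrd$-th powers of $q^{k_\cA}$, and we do not know the individual dimensions $k_\cA$. This is where \cref{lem:maximization} enters: I would view $\{q^{k_\cA}\}_{\cA\in\mathbb{A}_{\balpha|_\cV}^{d-t}}$ as a multiset of cardinality $c_w$, with each element in the interval $[a_w, b_w]$ by \cref{lem:AlternantDimBounds} (the lower bound $a_w = \max\{1, q^{w-(d-t-1)m}\}$ comes from $\dim_q \geq w - m(w-k_{\mathsf{GRS}})$ with $w - k_{\mathsf{GRS}} = d-t-1$, and the upper bound $b_w = q^{k_q^{\mathsf{opt.}}(w,d-t)}$), and with prescribed sum $\sum q^{k_\cA} = B_{w,d-t}^{\mathsf{MDS}}$. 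Applying \cref{lem:maximization} to the convex non-decreasing function $f(x) = x^\intOrd$ on $[a_w,b_w]$ with $c = c_w$ and $B = B_{w,d-t}^{\mathsf{MDS}}$ gives
\begin{align*}
  \sum_{\cA} q^{k_\cA \intOrd} \leq \bigg(\frac{B_{w,d-t}^{\mathsf{MDS}} - c_w a_w}{b_w - a_w} + 1\bigg)(b_w^\intOrd - a_w^\intOrd) + c_w a_w^\intOrd\ .
\end{align*}

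Finally I would assemble the pieces: the middle sum over $\cV$ simply contributes the factor $\binom{t}{w}$ since all the inner bounds are independent of the particular $w$-subset $\cV$, and the prefactor $\tfrac{1}{(q^m-1)(q^\intOrd-1)^w}$ collects the $(\delta^\cA)^{-1}\le (q^m-1)^{-1}$ factor and the denominator $(q^\intOrd-1)^w$ from $|\EB{q}{\intOrd}{w}|$ in \cref{lem:Pwk}. Collecting the numerator terms $\tfrac{q^\intOrd-1}{q-1}(c_w + B_{w,d-t,w}^{\mathsf{MDS}} - B_{w,d-t}^{\mathsf{MDS}})$ (from the $-(q^\intOrd-1)(q^{k_\cA}-1-A_w^{\cA})$ part of \cref{lem:Pwk}, summed over $\cA$), the $-c_w$ term (from $-(q-1)$ summed and normalized), and the \cref{lem:maximization} bound on $\sum q^{k_\cA\intOrd}$ yields exactly the claimed expression. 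The main obstacle is getting the constant-bookkeeping right when summing the three separate terms of \cref{lem:Pwk}'s numerator over the multiset $\mathbb{A}_{\balpha|_\cV}^{d-t}$ — in particular matching $\sum_\cA(q^{k_\cA}-1-A_w^\cA)$ against $B_{w,d-t}^{\mathsf{MDS}} - c_w - B_{w,d-t,w}^{\mathsf{MDS}}$ and verifying the sign is consistent with the statement — but this is routine algebra once the three named lemmas are in hand.
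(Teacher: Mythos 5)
Your proposal is correct and follows the paper's own route essentially step-for-step: bound $(\delta^\cA)^{-1}$ by $(q^m-1)^{-1}$, apply \cref{lem:Pwk} row-by-row, convert the sums over the multiset $\mathbb{A}_{\balpha|_{\cV}}^{d-t}$ via \cref{lem:sumCardinalitiesAlternant} into $B_{w,d-t,w}^{\mathsf{MDS}}$, $B_{w,d-t}^{\mathsf{MDS}}$, and $c_w$, then apply \cref{lem:maximization} with $f(x)=x^\intOrd$ (valid since $a_w\geq 1$) over $\bbM_{c_w,B_{w,d-t}^{\mathsf{MDS}}}^{[a_w,b_w]}$ to bound $\sum_\cA q^{k_\cA\intOrd}$. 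The only cosmetic difference is that the paper first applies the multiset maximization to the combined convex quantity $M^\intOrd-\tfrac{q^\intOrd-1}{q-1}M$ and then uses the sum constraint to peel off the linear part, whereas you keep $\sum_\cA q^{k_\cA}=B_{w,d-t}^{\mathsf{MDS}}$ exact from the start and apply the maximization to $\sum M^\intOrd$ alone — both give the identical final expression because the constraint fixes $\sum M$.
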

\begin{proof}
For a $q$-ary code $\cA$ denote $ k_{\cA} \defeq \dim_q(\cA)$.
\newcounter{storeeqcounter}
\newcounter{tempeqcounter}
Starting from \cref{thm:failureProbNewBoundGeneral}, we obtain
  \begin{align*}
  &1-\Psuc(\mathcal{IC}^{(\intOrd)}, \mathcal{E}) \leq  \sum_{w=d-t}^t \sum_{\substack{\cV \subseteq [\cE]\\ |\cV|=w}} \sum_{\cA \in \mathbb{A}_{\balpha|_{\cV}}^{d-t}} (\delta_{\mathbb{A}_{\balpha|_{\cV}}^{d-t}}^{\cA})^{-1} \ \underset{\bE}{\Pr}\{ (\bE|_{\cV})_{i,:} \in \cA \ \forall \ i\in [\intOrd]  \}  \\
                     &\stackrel{\mathsf{(a)}}{\leq}  \sum_{w=d-t}^t \sum_{\substack{\cV \subseteq [t]\\ |\cV|=w}} \sum_{\cA \in \mathbb{A}_{\balpha|_{\cV}}^{d-t}} (q^m-1)^{-1} \frac{(q-1)q^{\intOrd k_\cA}-(q^\intOrd-1)(q^{k_\cA}-1-A^\cA_w)-(q-1) }{(q-1)(q^\intOrd-1)^w}\nonumber\\
    &\stackrel{\mathsf{(b)}}{=} {\footnotesize \sum_{w=d-t}^t \sum_{\substack{\cV \subseteq [t]\\ |\cV|=w}}}\frac{(q^m-1)^{-1}}{(q^\intOrd-1)^w} \Bigg(\frac{(q^\intOrd-1)}{(q-1)}(c_w + B_{w,d-t,w}^{\mathsf{MDS}})-c_w + \sum_{\cA \in \mathbb{A}_{\balpha|_{\cV}}^{d-t}} \bigg(q^{\intOrd k_\cA}-\frac{(q^\intOrd-1)}{(q-1)}q^{k_\cA} \bigg)\Bigg)\nonumber\\
                     &\stackrel{\mathsf{(c)}}{\leq} \sum_{w=d-t}^t \hspace{-1ex}\frac{\binom{t}{w}(q^m-1)^{-1}}{(q^\intOrd-1)^w} \Bigg(\frac{(q^\intOrd-1)}{(q-1)}(c_w + B_{w,d-t,w}^{\mathsf{MDS}})-c_w + \hspace{-3.5ex}{\footnotesize \max_{\cM \in \bbM_{c_w,B^{\mathsf{MDS}}_{w,d-t}}^{[a_w,b_w]} }\! \sum_{M \in \cM}} \bigg(M^\intOrd-\frac{(q^\intOrd-1)}{(q-1)}M\bigg)\Bigg)\nonumber\\
                     &= \sum_{w=d-t}^t \frac{\binom{t}{w}(q^m-1)^{-1}}{(q^\intOrd-1)^w} \left(\frac{(q^\intOrd-1)}{(q-1)}(c_w + B_{w,d-t,w}^{\mathsf{MDS}}- B^{\mathsf{MDS}}_{w,d-t})-c_w +{\footnotesize\max_{ \mathcal{M} \in \bbM_{c_w,B_{w,d-t}}^{[a_w,b_w]}} \sum_{M \in \cM} }M^\intOrd\right) \nonumber
\end{align*}
where $\mathsf{(a)}$ holds by \eqref{eq:multiplicityAlternant} and \cref{lem:Pwk}, $(\mathsf{b})$ holds as $\sum_{\cA \in \mathbb{A}_{\balpha|_{\cV}}^{d-t}}A^\cA_w=B_{w,d-t,w}^{\mathsf{MDS}}$ (see~\eqref{eq:BforMDS}) and $|\mathbb{A}_{\balpha|_{\cV}}^{d-t}|=c_w$ (see~\eqref{eq:cardinalityAall}), and $\mathsf{(c)}$ holds as $a_w$ and $b_w$ are lower and upper bounds on the cardinality of all codes $\cA \in \mathbb{A}_{\balpha|_{\cV}}^{d-t}$ (see~\cref{lem:AlternantDimBounds}) and $\sum_{\cA \in \mathbb{A}_{\balpha|_{\cV}}^{d-t}} q^{k_\cA} = B_{w,d-t}^{\mathsf{MDS}}$ by \cref{lem:sumCardinalitiesAlternant}. The theorem statement follows by \cref{lem:maximization}.
\end{proof}

With the use of \cref{cor:PwkL01} instead of \cref{lem:Pwk} for the inequality at $\mathsf{(a)}$ in the proof we get a slightly simplified (though worse) lower bound.
\begin{corollary}[Simplified Lower Bound on $\Psuc$]\label{cor:failureProbNewBoundL0}
The probability of successful decoding $\Psuc(\cI\code^{(\intOrd)},\cE)$ as in \cref{thm:failureProbNewBoundGeneral} is lower bounded by
\begin{align*}
  \Psuc(&\mathcal{IC}^{(\intOrd)},\cE) \geq 1-\sum_{w=d-t}^t \frac{\binom{t}{w}(q^m-1)^{-1}}{(q^\intOrd-1)^w}
   \cdot \left(\bigg(\frac{B_{w,d-t}^{\mathsf{MDS}}-c_w a_w}{b_w-a_w}+1\bigg)( b_w^\intOrd - a_w^\intOrd) + c_w (a_w^\intOrd-1) \right)\ ,
\end{align*}
with
\begin{align*}
  a_w &= \max\{1,q^{w-(d-t-1)m}\} , \quad
        b_w = q^{k_q^{\mathsf{opt.}}(w,d-t)} , \quad \textrm{and} \quad
    c_w = (q^m-1)^{w}\ ,
\end{align*}
where $B_{w,d-t}^{\mathsf{MDS}}$ is given in \eqref{eq:BforMDS} and $\kopt(w,d-t)$ is an upper bound on the dimension of a $q$-ary code of length $w$ and minimum Hamming distance $d-t$.

\end{corollary}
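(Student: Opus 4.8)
The plan is to obtain \cref{cor:failureProbNewBoundL0} as a corollary of \cref{thm:failureProbNewBoundGeneral} by running exactly the same chain of inequalities used in the proof of \cref{thm:failureProbNewBound}, but replacing the application of \cref{lem:Pwk} at step $\mathsf{(a)}$ by the simpler upper bound of \cref{cor:PwkL01}. Concretely, I would start from the bound
\[
  1-\Psuc(\mathcal{IC}^{(\intOrd)},\cE) \leq \sum_{w=d-t}^t \sum_{\substack{\cV \subseteq [\cE]\\ |\cV|=w}} \sum_{\cA \in \mathbb{A}_{\balpha|_{\cV}}^{d-t}} \parenv*{\delta_{\mathbb{A}_{\balpha|_{\cV}}^{d-t}}^{\cA}}^{-1} \underset{\bE}{\Pr}\{ \be_{i}|_{\cV} \in \cA,\ \forall i\in[\intOrd]\}
\]
from \cref{thm:failureProbNewBoundGeneral}, use $\delta_{\mathbb{A}_{\balpha|_{\cV}}^{d-t}}^{\cA} \geq q^m-1$ from \eqref{eq:multiplicityAlternant}, and then bound each inner probability via \cref{cor:PwkL01} by $\frac{q^{k_\cA \intOrd}-1}{(q^\intOrd-1)^w}$, where $k_\cA = \dim_q(\cA)$ and $w=|\cV|$.

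The next step is to collapse the sum over $\cA \in \mathbb{A}_{\balpha|_{\cV}}^{d-t}$ using the two facts already available: the multiset $\mathbb{A}_{\balpha|_{\cV}}^{d-t}$ has cardinality $c_w = (q^m-1)^w$ by \eqref{eq:cardinalityAall}, and $\sum_{\cA \in \mathbb{A}_{\balpha|_{\cV}}^{d-t}} q^{k_\cA} = B_{w,d-t}^{\mathsf{MDS}}$ by \cref{lem:sumCardinalitiesAlternant} (combined with \eqref{eq:BforMDS}, since the relevant GRS code is MDS). Thus, after distributing,
\[
  \sum_{\cA}\parenv*{q^{k_\cA\intOrd}-1} = \parenv*{\sum_{\cA} q^{k_\cA\intOrd}} - c_w\ ,
\]
and I would bound $\sum_{\cA} q^{k_\cA\intOrd} = \sum_{\cA} (q^{k_\cA})^\intOrd \leq \max_{\cM\in\bbM_{c_w,B_{w,d-t}^{\mathsf{MDS}}}^{[a_w,b_w]}} \sum_{M\in\cM} M^\intOrd$ using that $a_w = \max\{1,q^{w-(d-t-1)m}\}$ and $b_w = q^{\kopt(w,d-t)}$ are lower and upper bounds on each $q^{k_\cA}$ by \cref{lem:AlternantDimBounds}, and that $x \mapsto x^\intOrd$ is convex and non-decreasing on $[a_w,b_w]$. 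Applying \cref{lem:maximization} with $f(x)=x^\intOrd$, $a=a_w$, $b=b_w$, $c=c_w$, $B=B_{w,d-t}^{\mathsf{MDS}}$ gives
\[
  \max_{\cM} \sum_{M\in\cM} M^\intOrd \leq \parenv*{\frac{B_{w,d-t}^{\mathsf{MDS}}-c_w a_w}{b_w-a_w}+1}(b_w^\intOrd-a_w^\intOrd) + c_w a_w^\intOrd\ .
\]
Collecting the remaining $\binom{t}{w}$ from the sum over $\cV$, the prefactor $(q^m-1)^{-1}$ from the multiplicity bound, and $(q^\intOrd-1)^{-w}$, and recognizing that $\sum_\cA(q^{k_\cA\intOrd}-1) - c_w \cdot 0 \cdots$ assembles into $\big(\tfrac{B_{w,d-t}^{\mathsf{MDS}}-c_w a_w}{b_w-a_w}+1\big)(b_w^\intOrd-a_w^\intOrd) + c_w(a_w^\intOrd-1)$, yields precisely the claimed bound.

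There is essentially no hard obstacle here; the corollary is a bookkeeping exercise once \cref{thm:failureProbNewBound} is proven, the only subtlety being to track the constant terms carefully — in particular, confirming that replacing \cref{lem:Pwk} by \cref{cor:PwkL01} removes exactly the terms $\tfrac{(q^\intOrd-1)}{(q-1)}\big(c_w + B_{w,d-t,w}^{\mathsf{MDS}} - B_{w,d-t}^{\mathsf{MDS}}\big)$ and $-c_w$ from the inner expression (which came from the $|\cL_0|$ correction and the $A_w^\cA$ weight-enumerator term) while shifting $c_w a_w^\intOrd$ to $c_w(a_w^\intOrd-1)$ because \cref{cor:PwkL01} subtracts only the zero matrix. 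I would double-check the arithmetic of this simplification by substituting $|\cL_0| \geq 1$ (the weakest possible bound) into the \cref{lem:Pwk} formula and verifying it reduces to the \cref{cor:PwkL01} formula, which guarantees the resulting lower bound on $\Psuc$ is valid (albeit weaker, as noted in the corollary statement).
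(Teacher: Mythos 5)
Your proposal is correct and matches the paper's approach exactly: the paper obtains \cref{cor:failureProbNewBoundL0} precisely by repeating the chain of inequalities in the proof of \cref{thm:failureProbNewBound} with \cref{cor:PwkL01} substituted for \cref{lem:Pwk} at step $\mathsf{(a)}$, and your bookkeeping of the resulting simplification (dropping the $A_w^\cA$ and $|\cL_0|$-dependent terms, collapsing $\sum_\cA(q^{k_\cA\intOrd}-1)$ via \cref{lem:maximization}, and shifting $c_w a_w^\intOrd$ to $c_w(a_w^\intOrd-1)$) is accurate.
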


\subsubsection[A Lower Bound on Success Probability for Large Interleaving Order]{A Lower Bound on Success Probability for Large Interleaving Order  $\intOrd\geq t$}
\label{sec:largeEll}

For large interleaving order $\intOrd \geq t$, the Metzner-Kapturowski generic decoder~\cite{metzner1990general} guarantees to decode any $1\leq t\leq d-2$ errors if $\rank(\bE)=t$ in an $\intOrd$-interleaved code with \emph{any} $[n,k,d]_q$ constituent code. The decoder has been generalized in~\cite{haslach2000efficient} for the case of rank deficiency when $2t-d+2 \leq \rank(\bE)< t$. However, if the structure of the constituent code is unknown, determining the error positions in a rank-deficient error matrix $\bE$ where $\rank(\bE)=\mu<t$ is equivalent to finding a subset $\cU$ of columns of a parity-check matrix $\bH\in\Fqm^{(d-1-\mu)\times n}$ with $\rank(\bH|_{\cU})=t-\mu$. This is known to be a hard problem and no polynomial-time algorithm is known if the rank deficiency $t-\mu$ becomes large~\cite{roth2014coding}. If the code structure is given, efficient syndrome-based algorithms are proposed in~\cite{roth2014coding} and~\cite{YuLoeliger16} to correct linearly dependent error patterns with $\rank(\bE)\geq 2t-d+2$ by interleaved RS codes over $\Fqm$.
These decoders also apply to the class of alternant codes over $\Fq$.
Consider an $\intOrd$-interleaved alternant code $\cI\code^{(\intOrd)}$ where $\code \in \mathbb{A}_{\balpha}^d$, $n= |\balpha|$ and any set $\cE\subset [n]$ of $|\cE|=t$ error positions. A lower bound on the success probability is given in~\cite[Section II.C]{roth2014coding} as
\begin{equation}
  \begin{aligned}
    \Psuc(\cI\code^{(\intOrd)}, \cE)&\geq 1-\Pr\{\rank(\bE)< 2t-d+2\}\\
    & = 1-q^{-(\intOrd+d-1-2t)(d-1-t)}(1+o(1))\\
    & = 1-q^{-2(t-\frac{3(d-1)+\intOrd}{4})^2+\frac{(d-1-\intOrd)^2}{8}}(1+o(1))\ ,
    \label{eq:Pf_large_ell_Roth}
  \end{aligned}
\end{equation}
where $o(1)$ is an expression that goes to $0$ as  $q\to \infty$.

Note that though the decoder in~\cite{roth2014coding} can be applied to interleaved alternant codes, the above lower bound is an asymptotic result. For some applications of alternant codes that we are interested in, e.g., Goppa codes in the McEliece system, the field size $q$ is required to be finite or rather small. Therefore, in order to be self-contained and have a general expression on the failure probability, we prove in~\cref{lem:largeEllNoFailCond} that~\cref{algo:SyndromeDecoder} will always succeed in decoding linearly dependent error patterns if $\rank(\bE)\geq 2t-d+2$ and we then give a lower bound in~\cref{thm:LargeEll} on the success probability for $\intOrd\geq t$.

\begin{lemma}\label{lem:largeEllNoFailCond}
  Assume $\intOrd\geq t$. Let $\mathcal{IC}^{(\intOrd)}$ be an $\intOrd$-interleaved alternant code with $\code \in \mathbb{A}^d_{\balpha}$, $n= |\balpha|$ and $\cE=\{j_1,j_2,\dots,j_t\}\subset [n]$ be a set of $|\cE|=t$ error positions. For a codeword $\bC\in \mathcal{IC}^{(\intOrd)}$, an error matrix $\widetilde{\bE} \in \Fq^{\intOrd \times n}$ with $\supp(\widetilde{\bE}) = \cE$ and $\bE\defeq\widetilde{\bE}|_{\cE} $ i.i.d.~in $\EB{q}{\intOrd}{t}$, and a received word $\bR = \bC + \widetilde{\bE}$, \cref{algo:SyndromeDecoder} \emph{succeeds}, i.e., returns $\hat \bC = \bC$, if
  \begin{align*}
\rank(\bE)\geq 2t-d+2\ .
  \end{align*}
\end{lemma}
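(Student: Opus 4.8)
The plan is to reduce the claim to \cref{lem:FailRankCond}: it suffices to show that $\rank(\bS(t)) = t$ whenever $\rank(\bE) \geq 2t-d+2$, since then \cref{lem:FailRankCond} guarantees that \cref{algo:SyndromeDecoder} succeeds. First I would recall the factorization of the syndrome matrix established in the proof of \cref{lem:FailureCrux}: writing $\mathfrak{S}(t)$ with $\bS(t)$ as in \eqref{eq:KeyEquationLSE}, the condition $\rank(\bS(t)) < t$ is equivalent to the existence of a nonzero $\bv \in \Fqm^t$ with $\bH \cdot \diag(\bv) \cdot \bE^\top = \0$, where $\bH \in \Fqm^{(d-t-1)\times t}$ is a parity-check matrix of the $[t, 2t-d+1, d-t]_{q^m}$ GRS code $\GRS_{\balpha|_\cE,\1}^{d-t}$. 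Crucially, any $d-t-1$ columns of $\bH$ are linearly independent, because $\GRS_{\balpha|_\cE,\1}^{d-t}$ is an MDS code of minimum distance $d-t$.

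The main step is then a counting/linear-algebra argument. Suppose for contradiction that such a nonzero $\bv$ exists, and let $\cW = \supp(\bv) \subseteq \cE$ with $|\cW| = w \geq 1$. The equation $\bH \cdot \diag(\bv) \cdot \bE^\top = \0$ says that each column of $\bE^\top|_{\cW}$ (equivalently, each row $\be_i|_{\cW}$ of $\bE|_{\cW}$) lies in the kernel of $\bH|_{\cW} \cdot \diag(\bv|_{\cW})$, hence in a space of dimension $w - \rank(\bH|_{\cW})$. Since any $d-t-1$ columns of $\bH$ are independent, $\rank(\bH|_{\cW}) = \min\{w, d-t-1\}$. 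If $w \leq d-t-1$ this forces $\bE|_{\cW} = \0$, contradicting that $\bE$ has no zero columns; so $w \geq d-t$ and every row of $\bE|_{\cW}$ lies in a space of dimension $w - (d-t-1)$. But then $\rank(\bE|_{\cW}) \leq w - (d-t-1)$, and since $\bE$ has no zero columns (so the $t-w$ remaining columns each contribute at most $1$ more to the rank), $\rank(\bE) \leq \rank(\bE|_{\cW}) + (t - w) \leq t - (d-t-1) = 2t - d + 1 < 2t - d + 2$, contradicting the hypothesis $\rank(\bE) \geq 2t-d+2$. Hence no such $\bv$ exists, $\rank(\bS(t)) = t$, and by \cref{lem:FailRankCond} the algorithm returns $\hat\bC = \bC$.

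I expect the bookkeeping in the rank inequality to be the only delicate point: one must be careful that splitting the columns of $\bE$ into those indexed by $\cW$ and those outside, together with the no-zero-column property, indeed yields $\rank(\bE) \leq \rank(\bE|_\cW) + (t-w)$, and that the kernel dimension bound is applied to the correct side (rows versus columns) consistent with the transpose conventions in \cref{lem:FailureCrux}. Everything else is a direct appeal to the MDS property of the auxiliary GRS code and to \cref{lem:FailRankCond}; no new machinery is needed beyond what is already in \cref{sec:decoding-algorithm-interleaved,sec:int-alternant}. A remark could be added afterwards relating this deterministic guarantee to the asymptotic bound \eqref{eq:Pf_large_ell_Roth} of \cite{roth2014coding}, but that is not needed for the proof of \cref{lem:largeEllNoFailCond} itself.
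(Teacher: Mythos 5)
Your proof is correct and follows essentially the same route as the paper's: the paper starts from the refined condition \eqref{eq:ieq1} (which already packages the constraint $\wtH(\bv)\geq d-t$) rather than from \cref{lem:FailRankCond} directly, writes the key rank inequality in the form $\rank(\bar{\bE})\geq\rank(\bE)-(t-w)$, and derives the contradiction by viewing $\myspan{\bar{\bE}}$ as a subcode of the $(w-(d-t)+1)$-dimensional GRS code $\GRS_{\balpha|_{\cL},\bar{\bv}}^{d-t}$---but these are cosmetic reformulations of exactly your steps. One small wording point: the inequality $\rank(\bE)\leq\rank(\bE|_{\cW})+(t-w)$ holds unconditionally, since each additional column raises the rank by at most one; the no-zero-column property is needed only to rule out $w\leq d-t-1$, not for that inequality as your parenthetical suggests.
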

\begin{proof}
  Recall from \eqref{eq:ieq1} in the proof of \cref{thm:failureProbNewBoundGeneral} that the decoding does not succeed if and only if
  \begin{equation}
    \exists \bv \in \Fqm^t\setminus\{\0\} \textrm{ with } \wtH(\bv)\geq d-t\ \text{s.t.}\ \bH \cdot \diag(\bv) \cdot \bE^\top = \0 \ , \label{eq:condition-large-ell}
  \end{equation}
  where $\bH$ is a parity-check matrix of the $[t, 2t-d+1, d-t]_{q^m}$ $\GRS_{\balpha|_{\cE},\1}^{d-t}$ code.

  We show that this condition cannot be fulfilled \emph{if} $\rank(\bE)\geq 2t-d+2$.

  Assume $\rank(\bE) \geq 2t-d+2$. Denote
  $\wtH(\bv)=w$ and
  $\cL\defeq\supp(\bv)$.
  Let $\bar{\bH} = \bH|_{\cL}$, $\bar{\bv}=\bv|_{\cL}$, and $\bar{\bE}= \bE|_{\cL}$.
  Observe the equivalence
\begin{equation}\label{eq:failConditionLargeEll}
  \bH|_{\cE} \cdot \diag(\bv) \cdot \bE^\top = \0 \quad \iff \quad \bar{\bH} \cdot \diag(\bar{\bv}) \cdot \bar{\bE}^\top = \0 \ .
\end{equation}
Note that
\begin{align*}
  \rank(\bar{\bE}) &\geq \rank(\bE) - (t-w) 
  \geq 2t-d+2-(t-w) = w-(d-t)+2 
\end{align*}
and $\bar{\bH} \cdot \diag(\bar{\bv})$ is a parity-check matrix of the $[w,w-(d-t)+1,d-t]_{q^m}$ $\GRS_{\balpha|_{\cL},\bar{\bv}}^{d-t}$ code.

Assume for some $\bv$ \eqref{eq:failConditionLargeEll} is fulfilled. Then 
all rows of $\bar{\bE}$ are codewords of the GRS code.
In other words, the code spanned by $\bar{\bE}$ is a subcode of the GRS code,
i.e., $\myspan{\bar{\bE}} \subseteq \GRS_{\balpha|_{\cL},\bar{\bv}}^{d-t}$.
However, since $\dim(\myspan{\bar{\bE}}) = \rank(\bar{\bE}) \geq w-(d-t)+2> w-(d-t)+1 =\dim(\GRS_{\balpha|_{\cL},\bar{\bv}}^{d-t})$,
this is a contradiction.
\end{proof}

\begin{theorem}[Lower bound on $\Psuc$ for $\intOrd\geq t$]
  \label{thm:LargeEll}
  Assume $\intOrd\geq t$. Let $\mathcal{IC}^{(\intOrd)}$ be an $\intOrd$-interleaved alternant code with $\code \in \mathbb{A}^d_{\balpha}$, $n= |\balpha|$ and $\cE=\{j_1,j_2,\dots,j_t\}\subset [n]$ be a set of $|\cE|=t$ error positions. For a codeword $\bC\in \mathcal{IC}^{(\intOrd)}$, an error matrix $\widetilde{\bE} \in \Fq^{\intOrd \times n}$ with $\supp(\widetilde{\bE}) = \cE$ and $\bE \defeq \widetilde{\bE}|_{\cE}$ i.i.d.~in $\EB{q}{\intOrd}{t}$, and a received word $\bR = \bC + \widetilde{\bE}$, \cref{algo:SyndromeDecoder} \emph{succeeds}, i.e., returns $\hat \bC = \bC$, with probability
  \begin{align*}
    \Psuc(\cI\code^{(\intOrd)},\cE)&\geq 
                       \frac{\sum\limits_{s=2t-d+2}^{t}N(\intOrd,t,s)}{(q^\intOrd-1)^t}\ ,
  \end{align*}
  where
  \begin{align*}
    N(\intOrd,t,s)\defeq  |\{\bE \in \EB{q}{\intOrd}{t} \ | \ \rank(\bE)=s\}|
    =  \sum_{j=0}^{t-s}(-1)^j\binom{t}{j}\prod_{i=0}^{s-1}\frac{(q^\intOrd-q^i)(q^{t-j}-q^i)}{q^s-q^i}\ .
  \end{align*}
\end{theorem}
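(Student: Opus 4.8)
\textbf{Proof proposal for \cref{thm:LargeEll}.}
The plan is to combine the sufficiency condition from \cref{lem:largeEllNoFailCond} with a counting argument for the number of matrices in $\EB{q}{\intOrd}{t}$ of each rank. First I would invoke \cref{lem:largeEllNoFailCond}: since $\intOrd \geq t$, the decoder succeeds whenever $\rank(\bE) \geq 2t-d+2$. Hence the set of error patterns $\bE \in \EB{q}{\intOrd}{t}$ for which \cref{algo:SyndromeDecoder} is guaranteed to succeed contains $\bigcup_{s=2t-d+2}^{t}\{\bE \in \EB{q}{\intOrd}{t} \mid \rank(\bE)=s\}$. Since $\bE$ is drawn i.i.d.~uniformly in $\EB{q}{\intOrd}{t}$, the probability of success is at least the ratio of the cardinality of this union to $|\EB{q}{\intOrd}{t}| = (q^\intOrd-1)^t$. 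Because the rank-$s$ sets are disjoint for distinct $s$, this ratio is exactly $\sum_{s=2t-d+2}^{t} N(\intOrd,t,s)\big/(q^\intOrd-1)^t$ where $N(\intOrd,t,s)$ is defined as the number of full-column-support matrices in $\Fq^{\intOrd \times t}$ of rank $s$.

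The remaining task is to derive the closed-form expression for $N(\intOrd,t,s)$. The approach I would take is inclusion–exclusion over the number of zero columns. Let $R(a,b,s)$ denote the number of matrices in $\Fq^{a\times b}$ of rank $s$, which is the well-known Gaussian-type count $R(a,b,s) = \prod_{i=0}^{s-1}\frac{(q^a-q^i)(q^b-q^i)}{q^s-q^i}$ (this is exactly the quantity $M(a,b,s)$ appearing in \eqref{eq:number_matrices_upper_bound} in the earlier chapter, with $m,n$ replaced by $a,b$). A matrix of rank $s$ over $\Fq^{\intOrd\times t}$ with \emph{exactly} $j$ zero columns is obtained by choosing the $j$ zero columns ($\binom{t}{j}$ ways) and filling the remaining $t-j$ columns with a rank-$s$ matrix in $\Fq^{\intOrd \times (t-j)}$. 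By inclusion–exclusion on "at least $j$ zero columns,"
\begin{align*}
  N(\intOrd,t,s) = \sum_{j=0}^{t-s}(-1)^j\binom{t}{j} R(\intOrd, t-j, s) = \sum_{j=0}^{t-s}(-1)^j\binom{t}{j}\prod_{i=0}^{s-1}\frac{(q^\intOrd-q^i)(q^{t-j}-q^i)}{q^s-q^i}\ ,
\end{align*}
where the summation range stops at $j = t-s$ because a matrix of rank $s$ must have at least $s$ nonzero columns, i.e.~at most $t-s$ zero columns. I would verify the standard sanity checks: $N(\intOrd,t,t)$ should agree with the count of rank-$t$ matrices with no zero columns (the dominant term), and summing $N(\intOrd,t,s)$ over all $0\le s\le t$ should recover $(q^\intOrd-1)^t$.

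I do not expect a serious obstacle here; the only subtlety is getting the inclusion–exclusion bookkeeping right (ensuring we are counting matrices with \emph{no} zero columns, not merely of a given rank) and confirming that the union over ranks $s \ge 2t-d+2$ is indeed a lower bound rather than an equality — it is only a lower bound because the decoder may also succeed on some rank-deficient patterns with $\rank(\bE) < 2t-d+2$, which \cref{lem:largeEllNoFailCond} simply does not certify. I would state this explicitly so the bound is not mistaken for exact. A final remark I would add is that this combinatorial bound is the finite-$q$ analogue of the asymptotic estimate \eqref{eq:Pf_large_ell_Roth} from \cite{roth2014coding}, and that it is self-contained, requiring no assumption that $q\to\infty$.
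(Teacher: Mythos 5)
Your proposal is correct and follows essentially the same route as the paper: invoke \cref{lem:largeEllNoFailCond} to lower-bound $\Psuc$ by the fraction of $\bE\in\EB{q}{\intOrd}{t}$ with $\rank(\bE)\geq 2t-d+2$, then compute $N(\intOrd,t,s)$ by inclusion--exclusion over zero columns using the classical rank-$s$ count $M(\intOrd,t,s)$ over $\Fq^{\intOrd\times t}$. One small prose slip worth flagging: choosing $j$ zero columns and filling the remaining $t-j$ with a rank-$s$ matrix does not produce a matrix with \emph{exactly} $j$ zero columns (the smaller matrix may itself contain zero columns), so $\binom{t}{j}M(\intOrd,t-j,s)$ is an ``at least $j$ zero columns'' count --- which you do say a clause later, and the alternating sum you write is the correct inclusion--exclusion, so the final formula is right.
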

\begin{proof}

  By~\cref{lem:largeEllNoFailCond}, it can be readily seen that the success probability is bounded from below by
   \begin{align*}
     \Psuc(\cI\code^{(\intOrd)},\cE)& \geq \frac{|\{\bE\in\EB{q}{\intOrd}{t} \ | \ \rank(\bE)\geq 2t-d+2 \}|}{|\EB{q}{\intOrd}{t}|}\\
     & = \frac{\sum\limits_{s=2t-d+2}^{t}|\{\bE\in\EB{q}{\intOrd}{t} \ | \ \rank(\bE)=s\}|}{(q^\intOrd-1)^t}\ .
   \end{align*}

   It remains to determine
   \begin{align*}
     N(\intOrd,t,s)=|\{\bE\in\EB{q}{\intOrd}{t}\ | \ \rank(\bE)=s\}| \ ,
   \end{align*}
   the number of matrices of $\F_q^{\intOrd \times t}$ \emph{without} zero column and of a given rank. The number of matrices, including those \emph{with} zero columns, of certain rank is given in~\cite{Landsberg1893}\cite[Theorem~2]{fisher1966matrices}: 
   \begin{align*}
     M(\intOrd,t,s) &\defeq |\{\bE\in\Fq^{\intOrd\times t} \ | \ \rank(\bE)=s\}|
     =\prod_{i=0}^{s-1}\frac{(q^\intOrd-q^i)(q^t-q^i)}{q^s-q^i}\ .
   \end{align*}
   To obtain $N(\intOrd, t,s)$, we need to exclude the matrices with zero columns from $M(\intOrd,t,s)$.
   By the inclusion-exclusion principle, we have
   \begin{align*}
     N(\intOrd,t,s) &= \sum_{j=0}^{t-s}(-1)^j\binom{t}{j}M(\intOrd,t-j,s)\ .
   \end{align*}

\end{proof}
Comparisons between \cref{thm:LargeEll} and \cref{thm:failureProbNewBound} for some parameters can be found in \cref{fig:plots2} with the labels $\labelLarge$ and $\labelMain$ respectively.

\begin{remark}[Upper bound on the miscorrection probability $\Pmisc$]
  An upper bound on $\Pmisc$ of decoding interleaved alternant code by \cref{algo:SyndromeDecoder}
  is given  
  in \cite[Appendix A]{holzbaur2021decodingIA}.
  We expect the bound to be a rather rough upper bound, as it does not depend on the specific alternant code, nor the dimension of the alternant code.
  Nevertheless, we only intend to show that
  the probability of unsuccessful decoding of interleaved alternant codes is dominated by the failure probability,
  and the bound is sufficient for this purpose,
  as evident from the numerical results in \cref{fig:plots1,fig:plots2} under the label $\labelMisc$.
\end{remark}

\subsection[An Upper Bound on Success Probability]{An Upper Bound on Success Probability\footnote{The main contribution of this upper bound is by the co-author L.~Holzbaur of the work \cite{holzbaur2021decodingIA}, we include it here for completeness.}}
\label{sec:ub-succ-IA}
To evaluate the performance of the lower bounds of \cref{sec:lb-succ-IA}, we derive an upper bound on the probability of a decoding success.
The approach is to show that for a certain set of error matrices,
the decoder given in \cref{algo:SyndromeDecoder} is \emph{never} successful, i.e., the condition in \cref{lem:FailureCrux} never holds.

Recall from the proof of \cref{thm:failureProbNewBoundGeneral} that
\begin{align*}
  \Psuc(\mathcal{IC}^{(\intOrd)},\mathcal{E}) 
  = 1- \sum_{w=d-t}^t \underset{\bE}{\Pr}\{ \exists \bv \in \Fqm^{t}\textrm{ with } \wtH(\bv) = w \textrm{ s.t.} \ \bH \cdot \diag(\bv) \cdot \bE^\top  = \0 \}\tag{\textrm{Recall }\eqref{eq:ieq1}}\ ,
\end{align*}
where $\bH\in\Fqm^{(d-t-1)\times t}$ is a parity-check matrix of an $\RS_{q^m}[t, 2t-d+1]$ code and any $d-t-1$ columns of $\bH$ are linearly independent.
It can be readily seen that
\begin{align}
  \Psuc(\mathcal{IC}^{(\intOrd)},\mathcal{E}) 
  \leq 1-\underset{\bE}{\Pr}\{ \exists \bv \in \Fqm^{t}\textrm{ with } \wtH(\bv) = d-t \textrm{ s.t.} \ \bH \cdot \diag(\bv) \cdot \bE^\top  = \0 \}\ . \label{eq:psuc-lb-1}
\end{align}

\begin{lemma}
  \label{lem:psuc-up-by-Ebbad}
  Denote by $\Ebbad{d-t}$ the set of matrices $\bE \in \EB{q}{\intOrd}{t}$ for which there exists a vector $\be\in\Fq^{\intOrd}\setminus\set*{\0}$ that is collinear (i.e., a $\Fq$-scalar multiple) to at least $d-t$ columns of $\bE$.
  Then,
  \begin{align*}
    \bE\in\Ebbad{d-t}\iff \exists \bv \in \Fqm^{t}\textrm{ with } \wtH(\bv) = d-t \textrm{ s.t.} \ \bH \cdot \diag(\bv) \cdot \bE^\top  = \0 \ . 
  \end{align*}
\end{lemma}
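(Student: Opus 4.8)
\textbf{Proof plan for \cref{lem:psuc-up-by-Ebbad}.}
The goal is to translate the algebraic condition on the right into the combinatorial condition defining $\Ebbad{d-t}$. The key observation is that $\bH$ is a parity-check matrix of the $[t, 2t-d+1, d-t]_{q^m}$ Reed--Solomon code $\GRS_{\balpha|_\cE,\1}^{d-t}$, so by the MDS property any $d-t-1$ columns of $\bH$ are linearly independent, while no codeword of this RS code has Hamming weight in $[1,d-t-1]$. I would first rewrite $\bH \cdot \diag(\bv)\cdot\bE^\top=\0$ row-wise: it holds if and only if every row $\be_i|_\cE$ of $\bE$, after scaling its entries by the corresponding entries of $\bv$, lies in the right-kernel of $\bH$, i.e.\ $\be_i \star \bv \in \GRS_{\balpha|_\cE,\1}^{d-t}$ for all $i\in[\intOrd]$ (here I am being loose with notation; I would spell out the Schur product carefully). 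Since $\bv$ has Hamming weight exactly $d-t$, the scaled rows are supported on the set $\cL=\supp(\bv)$ of size $d-t$, and a nonzero RS codeword of weight $\le d-t$ must have weight exactly $d-t$ with support precisely $\cL$.

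Next I would analyze what it means for the restricted matrix $\bE|_{\cL}\in\Fq^{\intOrd\times(d-t)}$. For each $i$, the row $\be_i|_{\cL}$ scaled entrywise by $\bv|_{\cL}$ is either $\0$ or a nonzero RS codeword whose support is all of $\cL$; in the latter case every entry of $\be_i|_{\cL}$ is nonzero. The crucial step is that the vector $\bv|_{\cL}\in(\Fqm^*)^{d-t}$ is the \emph{same} for all rows, and, because $\GRS_{\balpha|_{\cL},\1}^{d-t}$ restricted to the coordinate set $\cL$ is a $1$-dimensional code (its length $d-t$ equals its minimum distance plus dimension minus one, so $\dim = 1$), any nonzero codeword is an $\Fqm$-multiple of a single fixed codeword $\bg\in(\Fqm^*)^{d-t}$. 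Hence $\be_i\star\bv|_{\cL} = \lambda_i \bg$ for some $\lambda_i\in\Fqm$, which forces $\be_i|_{\cL} = \lambda_i\,(\bg \star (\bv|_{\cL})^{-1})$. Writing $\bc\defeq \bg\star(\bv|_{\cL})^{-1}\in(\Fqm^*)^{d-t}$, we get that every column of $\bE|_{\cL}$ is an $\Fqm$-multiple of the vector $\boldsymbol\lambda=(\lambda_1,\dots,\lambda_\intOrd)^\top$. But the entries of $\bE|_{\cL}$ lie in $\Fq$; I would use this to argue that the columns of $\bE|_{\cL}$ are in fact $\Fq$-multiples of a common nonzero vector $\be\in\Fq^\intOrd$ (take $\be$ proportional to any nonzero column, and note that two $\Fq$-vectors that are $\Fqm$-collinear are already $\Fq$-collinear). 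Since $\cL$ has $d-t$ elements and none of these columns is zero (the $\bc$-entries and $\bv$-entries are all nonzero, while the $\lambda_i$ are not all zero), this $\be$ is collinear to at least $d-t$ columns of $\bE$, i.e.\ $\bE\in\Ebbad{d-t}$.

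For the converse direction I would start from $\bE\in\Ebbad{d-t}$: there is $\be\in\Fq^\intOrd\setminus\{\0\}$ and a set $\cL\subseteq[t]$ with $|\cL|\ge d-t$ such that each column $\bE|_{\{j\}}$ for $j\in\cL$ equals $\mu_j\be$ for some $\mu_j\in\Fq^*$. Shrinking $\cL$ to size exactly $d-t$ if necessary, I would construct $\bv$ supported on $\cL$ as follows: pick any nonzero RS codeword $\bg$ of $\GRS_{\balpha|_\cE,\1}^{d-t}$ with support $\cL$ (one exists by the MDS property, taking the annihilator polynomial of the $t-(d-t)$ complementary locators), and set $v_j = g_j/\mu_j$ for $j\in\cL$, $v_j=0$ otherwise. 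Then for each row $i$, $\be_i\star\bv = \be_i^{(\be)}\cdot(\text{entrywise }g_j/\mu_j\text{ on }\cL)$ where the $i$-th entry of $\be$ scales things appropriately, giving a scalar multiple of $\bg$, hence $\bH\cdot\diag(\bv)\cdot\bE^\top=\0$, and $\wtH(\bv)=d-t$.

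\textbf{Main obstacle.} The delicate point is the descent from ``$\Fqm$-collinear'' to ``$\Fq$-collinear'' for the columns of $\bE|_{\cL}$, and making the bookkeeping with the two different scaling vectors (the fixed $\bv|_{\cL}$ across rows, versus the fixed $\be$ across columns) fully rigorous. I would be careful to track that $\GRS_{\balpha|_{\cL},\1}^{d-t}$ really is one-dimensional on a coordinate set of size exactly $d-t$ — that is what collapses the freedom in each row to a single scalar $\lambda_i$ — and that the resulting rank-one structure of $\bE|_{\cL}$ over $\Fqm$, combined with $\bE|_{\cL}$ having $\Fq$-entries, yields a genuine $\Fq$-rank-one factorization. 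The rest is a matter of assembling the pieces in both directions.
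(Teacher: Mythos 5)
Your proposal is correct and follows essentially the same route as the paper's proof: restrict to the support $\cL=\supp(\bv)$ of size $d-t$, use the MDS structure of $\bH$ to reduce to a one-dimensional kernel (equivalently, a $[d-t,1,d-t]$ RS code on $\cL$), deduce that $\bE|_\cL$ has $\Fq$-rank one, and, for the converse, build $\bv$ on $\cL$ by scaling a full-weight codeword by the inverse entries of the rank-one factorization of $\bE|_\cL$. The one cosmetic difference: where you pass through the $\Fqm$-kernel and then perform an $\Fqm\to\Fq$ descent (the point you flag as delicate, and which does go through via the ratio-of-$\Fq$-entries argument), the paper works directly with the right $\Fq$-kernel of $\bar{\bH}\cdot\diag(\bar{\bv})$ and shows it is one-dimensional over $\Fq$, which sidesteps the descent step entirely.
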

\begin{proof}

  We first show the sufficiency $\Longleftarrow$.
Given a matrix $\bE\in\EB{q}{\intOrd}{t}$, let $\cL=\supp(\bv)$ and $|\cL|= d-t$ such that $\bH \cdot \diag(\bv) \cdot \bE^\top  = \0$. 
Let $\bar{\bH} \defeq \bH|_{\cL}\in\Fqm^{(d-t-1)\times (d-t)}$,
$\bar{\bv}\defeq \bv|_{\cL}\in\Fqm^{d-t}$, and $\bar{\bE}\defeq \bE|_{\cL}\in\Fq^{s\times (d-t)}$. We have
the equivalence
\begin{equation}\label{eq:equivalenceSupp}
  \bH \cdot \diag(\bv) \cdot \bE^\top = \0 \quad \iff \quad \bar{\bH} \cdot \diag(\bar{\bv}) \cdot \bar{\bE}^\top = \0 \ .
\end{equation}
As any $d-t-1$ columns of
$\bar{\bH}\cdot \diag(\bar{\bv}) \in \Fqm^{(d-t-1) \times d-t}$ are $\Fqm$-linearly independent (therefore $\Fq$-linearly independent), the right $\Fq$-kernel of $\bar{\bH}\cdot \diag(\bar{\bv})$ is of dimension at most $1$. Since \eqref{eq:equivalenceSupp} holds by assumption and there is at least one nonzero row in $\bar{\bE}$, the dimension of the right $\Fq$-kernel of $\bar{\bH}\cdot \diag(\bar{\bv})$ is at least $1$.
Together with the last argument, the right $\Fq$-kernel of $\bar{\bH}\cdot \diag(\bar{\bv})$ is of dimension exactly $1$ and generated by $\ba\in\parenv*{\Fq^*}^{d-t}$.
It then follows from \eqref{eq:equivalenceSupp} that all the nonzero rows of $\bar{\bE}$ are collinear to the vector $\ba$. Hence, $\rank(\bar{\bE})=1$ and we conclude that there exists a vector $\be\in\Fq^s\setminus\set*{\0}$ that is collinear to all the nonzero columns of $\bar{\bE}$, and $\bE\in\Ebbad{d-t}$ by definition.

Now we show the necessity $\implies$. 
Given a matrix $\bE\in\Ebbad{d-t}$, let $\cL\subset[t]$ with $|\cL|=d-t$ be some set of columns of $\bE$ that are collinear to a vector $\be\in\Fq^s\setminus\set*{\0}$.
Let 
$\bar{\bH} \defeq \bH|_{\cL}\in\Fqm^{(d-t-1)\times (d-t)}$ and
$\bar{\bE}\defeq  \bE|_{\cL}\in\Fq^{s\times (d-t)}$.
By assumption $\rank(\bar{\bE})=1$ and at least one row $\bar{\be}_{i}$ of $\bar{\bE}$ must be a nonzero scalar multiple of some vector $\be\in (\Fq^{*})^{d-t}$, since $\bE$ does not have any zero column.
Note that any $d-t$ columns of $\bH$ form a parity-check matrix of an $\RS_{q^m}[d-t,1]$ code (different sets of columns correspond to different code locators) and so does $\bar{\bH}$. Denote by $\RS_{\bar{\bH}}$ the RS code defined by $\bar{\bH}$.
Since $\wtH(\be)=d-t=\dH(\RS_{\bar{\bH}})$, we can always find a $\bar{\bv}\in\parenv*{\Fqm^*}^{d-t}$ such that the entry-wise multiplication of $\bar{\bv}$ and $\be$ is a codeword of $\RS_{\bar{\bH}}$, i.e., 
\begin{align*}
  \bar{\bH}\cdot (\bar{v}_1e_1, \bar{v}_2e_2,\dots, \bar{v}_{d-t}e_{d-t})^\top=\0\ .
\end{align*}
The $\bv\in\Fqm^{t}$ with $\bv|_{\cL}=\bar{\bv}$ and $0$ entries elsewhere results in $\bH\cdot \diag(\bv)\cdot \bE^\top=\0$ and hence the necessity is proven.

\end{proof}

Similar to $\Ebbad{d-t}$, we defined $\Ebbad{w}$ with $w\leq t$ to be the set of matrices $\bE \in \EB{q}{\intOrd}{t}$ for which there exists a vector $\be\in\Fq^{\intOrd}\setminus\set*{\0}$ that is a scalar multiple to at least $w$ columns of $\bE$.
We give a general quantification on the cardinality of the set $\Ebbad{w}$.
\begin{lemma}\label{lem:rankOneSubmatrices}
  For $w\leq t$, the cardinality of $\Ebbad{w}$ is bounded by
  \begin{align*}
    \max_{w\leq \xi \leq t} \{Z^\xi\} &\leq |\Ebbad{w}| \leq (t-w+1) \max_{w\leq \xi\leq t} \{Z^\xi\} \ , \textrm{ where }\\
    Z^\xi &= \sum_{j=1}^{\floor{\frac{t}{\intOrd}}} (-1)^{j-1} \binom{\frac{q^\intOrd-1}{q-1}}{j} D^\xi_{j}\ , \textrm{ and } \\
    D^\xi_{j} &= \left( \prod_{z=0}^{j-1} \binom{t-z\xi}{\xi}\right) (q-1)^{j\xi} (q^\intOrd - q^{j})^{t-j \xi} \ .
  \end{align*}
\end{lemma}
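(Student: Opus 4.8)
\textbf{Plan of proof for Lemma~\ref{lem:rankOneSubmatrices}.}
The plan is to count, with inclusion--exclusion, the matrices $\bE\in\EB{q}{\intOrd}{t}$ that contain a rank-one submatrix on a fixed number of columns, and then stitch together the counts for the different possible ``widths'' of that rank-one block by a union bound. First I would fix an integer $\xi$ with $w\le \xi\le t$ and count the number $N^{\xi}$ of matrices $\bE\in\EB{q}{\intOrd}{t}$ for which there is \emph{at least one} direction $\be\in\Fq^{\intOrd}\setminus\{\0\}$ to which exactly $\xi$ columns are $\Fq$-collinear --- or, more conveniently, I would count the auxiliary quantity $Z^{\xi}$, the number of pairs ``(choice of a set of $\ge \xi$ mutually collinear columns, rest of the matrix)'' that arise from a nonzero direction. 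Since two directions $\be,\be'$ give the same collinearity class iff they are $\Fq$-multiples of each other, the number of distinct directions is $\tfrac{q^{\intOrd}-1}{q-1}$; I would normalise each direction to a fixed representative so as to count collinearity classes, not vectors. Having $j$ pairwise ``parallel'' column-classes simultaneously forces $j\xi$ of the $t$ columns to be collinear to $j$ fixed (distinct, pairwise non-collinear) directions $\be_1,\dots,\be_j$; choosing which $j\xi$ columns and which $\xi$ of them belong to which direction gives the factor $\prod_{z=0}^{j-1}\binom{t-z\xi}{\xi}$, each such column has $q-1$ nonzero scalar multiples (giving $(q-1)^{j\xi}$), and each of the remaining $t-j\xi$ columns must avoid being in the $j$-dimensional span of $\be_1,\dots,\be_j$ \emph{and} be nonzero, giving $(q^{\intOrd}-q^{j})$ choices each. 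This yields exactly $D^{\xi}_{j}=\bigl(\prod_{z=0}^{j-1}\binom{t-z\xi}{\xi}\bigr)(q-1)^{j\xi}(q^{\intOrd}-q^{j})^{t-j\xi}$, and inclusion--exclusion over the number of simultaneously chosen direction-classes (at most $\lfloor t/\xi\rfloor$ of width $\xi$ can fit, but the bound $\lfloor t/\intOrd\rfloor$ as written is the crude uniform one; I would have to check whether the intended summation range is $\lfloor t/\intOrd\rfloor$ or $\lfloor t/\xi\rfloor$ and argue that extra vanishing terms do not harm the stated bound) gives $Z^{\xi}=\sum_{j\ge 1}(-1)^{j-1}\binom{(q^{\intOrd}-1)/(q-1)}{j}D^{\xi}_{j}$.

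Next I would relate $|\Ebbad{w}|$ to the $Z^{\xi}$'s. By definition $\bE\in\Ebbad{w}$ iff there is a direction with at least $w$ collinear columns; partitioning according to the \emph{maximal} width $\xi\in\{w,w+1,\dots,t\}$ of such a collinear block, we get $|\Ebbad{w}|\le\sum_{\xi=w}^{t}(\text{count of matrices whose maximal collinear block has width }\xi)\le\sum_{\xi=w}^{t}Z^{\xi}\le (t-w+1)\max_{w\le\xi\le t}Z^{\xi}$, which is the claimed upper bound. For the lower bound, I would note that the set of matrices having \emph{some} block of collinear columns of a fixed width $\xi^{\star}$ (the maximiser) is contained in $\Ebbad{w}$ once $\xi^{\star}\ge w$, and that $Z^{\xi^{\star}}$ overcounts this set by at most the inclusion--exclusion corrections already built into the alternating sum, so $|\Ebbad{w}|\ge \max_{w\le\xi\le t}Z^{\xi}$; here I would have to be slightly careful that $Z^{\xi}$ as an alternating sum is a genuine lower bound on the corresponding true count (Bonferroni-type inequality), which holds because the partial inclusion--exclusion sums alternate around the true value and $Z^{\xi}$ is taken to full depth.

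The main obstacle I expect is bookkeeping precision in two places: (i) making sure the inclusion--exclusion over direction-classes is set up so that $Z^{\xi}$ really equals (or validly bounds) the number of matrices possessing a width-$\xi$ collinear block, as opposed to merely an upper estimate --- in particular handling the overlap when one matrix has two collinear blocks of the same width sharing no columns versus sharing some, and verifying that the binomial coefficient $\binom{(q^{\intOrd}-1)/(q-1)}{j}$ correctly counts unordered $j$-tuples of distinct direction-classes; and (ii) reconciling the summation bound $\lfloor t/\intOrd\rfloor$ appearing in the statement with the geometric constraint that at most $\lfloor t/\xi\rfloor$ blocks of width $\xi\ge w$ fit in $t$ columns --- since $\xi\ge w$ is not directly comparable to $\intOrd$, I would argue that for $j>\lfloor t/\xi\rfloor$ the factor $\prod_{z=0}^{j-1}\binom{t-z\xi}{\xi}$ vanishes anyway, so extending the sum up to $\lfloor t/\intOrd\rfloor$ changes nothing, which reconciles the two. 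Everything else --- the per-column factors $(q-1)$, $(q^{\intOrd}-q^{j})$, and the choice factors --- is a routine counting exercise once the combinatorial skeleton is fixed, and the final two-sided estimate then follows by comparing the sum $\sum_{\xi=w}^{t}$ with $(t-w+1)$ copies of its largest term.
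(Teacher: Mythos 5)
Your proposal follows the paper's own proof essentially verbatim: inclusion--exclusion over the $\tfrac{q^{\intOrd}-1}{q-1}$ collinearity classes yields $Z^{\xi}$ as the cardinality of $\cZ^{\xi}$ (matrices with some direction having exactly $\xi$ equivalent columns), the factors in $D^{\xi}_{j}$ arise precisely as you describe them, and $\Ebbad{w}=\bigcup_{j=w}^{t}\cZ^{j}$ with $\cZ^{j}\subseteq\Ebbad{w}$ for every $j\ge w$ gives the two-sided bound by taking the largest term and $(t-w+1)$ times the largest term. On the two points you flag: the paper's proof indeed writes the inner sum up to $\lfloor t/\xi\rfloor$, with the vanishing of $\binom{t-z\xi}{\xi}$ for larger $j$ doing exactly the reconciliation work you anticipated; and your appeal to a Bonferroni-type inequality for the lower bound is unnecessary once $Z^{\xi}$ is accepted as an exact inclusion--exclusion count, since $Z^{\xi^{\star}}=|\cZ^{\xi^{\star}}|\le|\Ebbad{w}|$ then follows immediately from the containment.
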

\begin{proof}
  Consider the equivalence relation $\equiv_q$ on $\Fq^\intOrd \setminus \{\0\}$ defined by $\bv \equiv_{q} \bu$ if  there exists $\lambda \in \Fq^*$ such that $\bv = \lambda \bu$.
  For a fixed vector $\be \in \Fq^\intOrd\setminus \{\0\}$ and a matrix $\bE\in \EB{q}{\intOrd}{w}$, denote by $\delta^{\be}_{\bE} \defeq |\{ i \ | \ \bE_{[:,i]} \equiv_q \be\}|$
  the number of columns of $\bE$ that are equivalent to $\be$ under $\equiv_q$.
  For a set of representatives $\cS \subset \Fq^\intOrd\setminus \{\0\}$ under the given equivalence relation, we have
  \begin{align*}
    D^\xi_{|\cS|} &\defeq |\{ \bE \in \EB{q}{\intOrd}{w} \ | \ \delta^{\be}_{\bE} = \xi, \  \forall  \be \in \cS \}| 
    = \left( \prod_{z=0}^{|\cS|-1} \binom{t-z\xi}{\xi}\right) (q-1)^{|\cS|\xi} (q^{\intOrd} - q^{|\cS|})^{t-|\cS| \xi} \ ,
  \end{align*}
  where the first term counts the ways to position the equivalent (under $\equiv_q$) vectors to $\be\in\cS$ into $\bE$, the second term is the number of choices for the scalar coefficients of these positions, and the third term is the number of choices for the remaining columns, namely any nonzero vector that is not equivalent to any vector in $\cS$.
  By the principle of inclusion-exclusion we get that the size of 
  \begin{align*}
    \cZ^{\xi} &\defeq \set*{ \bE \in \EB{q}{\intOrd}{w} \ | \ \exists \be \in \Fq^{\intOrd} \setminus \set*{\0} \textrm{ s.t.~} \delta^{\be}_{\bE} = \xi  } 
  \end{align*}
  is given by
  \begin{align*}
    Z^{\xi} &\defeq |\cZ^{\xi}| = \sum_{j=1}^{\floor{{t}/{\xi}}} (-1)^{j-1} \binom{\frac{q^{\intOrd}-1}{q-1}}{j} D^{\xi}_{j} \ .
  \end{align*}
  The statement follows from the observation that
  \begin{align*}
     \Ebbad{w} = \bigcup_{j=w}^{t} \cZ^{j} \ .
  \end{align*}
\end{proof}

Using the lower bound on the cardinality of $\Ebbad{w}$, we now derive an upper bound on the probability of successful decoding.

\begin{theorem}[Upper bound on $\Psuc$]\label{thm:lowerBound}
Let $\mathcal{IC}^{(\intOrd)}$ be an $\intOrd$-interleaved alternant code with $\code \in \mathbb{A}^d_{\balpha}$, $n= |\balpha|$ and $\cE=\{j_1,j_2,\dots,j_t\}\subset [n]$ be a set of $|\cE|=t\geq \frac{d}{2}$ error positions. For a codeword $\bC\in \mathcal{IC}^{(\intOrd)}$, an error matrix $\widetilde{\bE} \in \Fq^{\intOrd \times n}$ with $\supp(\widetilde{\bE}) = \cE$ and $\bE \defeq \widetilde{\bE}|_{\cE}$ i.i.d.~in $\EB{q}{\intOrd}{t}$, and a received word $\bR = \bC + \widetilde{\bE}$, \cref{algo:SyndromeDecoder} \emph{succeeds}, i.e., returns $\hat \bC = \bC$, with probability
\begin{align*}
  \Psuc(\mathcal{IC}^{(\intOrd)},\mathcal{E}) 
  &\leq 1-\frac{|\Ebbad{d-t}|}{(q^{\intOrd}-1)^t}
  \leq  1-\frac{\max_{(d-t) \leq \xi \leq t} \{Z^{\xi}\} }{(q^{\intOrd}-1)^t} \ ,
\end{align*}
  where $Z^\xi$ is given in \cref{lem:rankOneSubmatrices}.
\end{theorem}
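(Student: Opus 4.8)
The plan is to combine the chain of inequalities that the excerpt has already set up with the counting result for $|\Ebbad{d-t}|$. First I would recall, from the passage leading up to \eqref{eq:psuc-lb-1}, the inequality
\[
  \Psuc(\mathcal{IC}^{(\intOrd)},\mathcal{E})
  \leq 1-\underset{\bE}{\Pr}\{ \exists \bv \in \Fqm^{t}\textrm{ with } \wtH(\bv) = d-t \textrm{ s.t.} \ \bH \cdot \diag(\bv) \cdot \bE^\top  = \0 \}\ ,
\]
which holds because the probability of an unsuccessful decoding is bounded below by the $w=d-t$ summand in \eqref{eq:ieq1}, and this in turn rests on the fact that any $d-t-1$ columns of the parity-check matrix $\bH$ of the $\RS_{q^m}[t,2t-d+1]$ code are linearly independent (so that the events for different $\cV$ of size $w$ are exhaustive). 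The condition $t\geq d/2$ guarantees $d-t\leq t$, so that weight-$(d-t)$ vectors $\bv\in\Fqm^t$ exist and the statement is non-vacuous.

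The key step is then to invoke \cref{lem:psuc-up-by-Ebbad}, which establishes the equivalence
\[
  \bE\in\Ebbad{d-t}\iff \exists \bv \in \Fqm^{t}\textrm{ with } \wtH(\bv) = d-t \textrm{ s.t.} \ \bH \cdot \diag(\bv) \cdot \bE^\top  = \0 \ .
\]
Since $\bE$ is drawn i.i.d.\ uniformly from $\EB{q}{\intOrd}{t}$, and $|\EB{q}{\intOrd}{t}|=(q^\intOrd-1)^t$, the probability on the right-hand side of the displayed bound is exactly $|\Ebbad{d-t}|/(q^\intOrd-1)^t$. Substituting gives
\[
  \Psuc(\mathcal{IC}^{(\intOrd)},\mathcal{E})\leq 1-\frac{|\Ebbad{d-t}|}{(q^\intOrd-1)^t}\ .
\]
Finally I would apply the lower bound on $|\Ebbad{d-t}|$ from \cref{lem:rankOneSubmatrices} with $w=d-t$, namely $|\Ebbad{d-t}|\geq \max_{d-t\leq\xi\leq t}\{Z^\xi\}$, to conclude
\[
  \Psuc(\mathcal{IC}^{(\intOrd)},\mathcal{E})\leq 1-\frac{\max_{(d-t)\leq\xi\leq t}\{Z^\xi\}}{(q^\intOrd-1)^t}\ ,
\]
which is the claimed statement.

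\textbf{Where the work really lies.}
The theorem itself is essentially a bookkeeping corollary of three earlier results, so the proof proper is short. The genuine obstacles are hidden in those earlier lemmas: the main one is \cref{lem:psuc-up-by-Ebbad}, whose ``$\impliedby$'' direction requires arguing that the restricted matrix $\bar{\bE}=\bE|_{\cL}$ has rank one because the restricted parity-check matrix $\bar{\bH}\cdot\diag(\bar{\bv})$ (of the $\GRS_{\balpha|_{\cL},\bar{\bv}}^{d-t}$ code, any $d-t-1$ of whose columns are independent) has a one-dimensional right $\Fq$-kernel generated by a vector with all nonzero entries; and whose ``$\implies$'' direction needs the choice of a suitable column multiplier $\bar{\bv}$ making the entrywise product with the common column direction $\be$ a codeword of the relevant $\RS_{q^m}[d-t,1]$ code, which is possible precisely because $\wtH(\be)=d-t$ equals that code's minimum distance. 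The second obstacle is the inclusion–exclusion count in \cref{lem:rankOneSubmatrices}, counting matrices in $\EB{q}{\intOrd}{t}$ having a fixed column direction repeated exactly $\xi$ times, and then taking the union over $\xi\in\{w,\dots,t\}$; care is needed that the inclusion–exclusion over representatives of the equivalence relation $\equiv_q$ on $\Fq^\intOrd\setminus\{\0\}$ is carried out correctly and that the union bound $\Ebbad{w}=\bigcup_{j=w}^t\cZ^j$ yields the stated two-sided estimate. Both of these are assumed as already proved in the excerpt, so in the write-up I would simply cite them and perform the substitution above.
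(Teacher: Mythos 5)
Your proposal matches the paper's proof exactly: the paper's own proof is the one-line citation of \eqref{eq:psuc-lb-1}, \cref{lem:psuc-up-by-Ebbad}, \cref{lem:rankOneSubmatrices}, and $|\EB{q}{\intOrd}{t}|=(q^{\intOrd}-1)^t$, and you have just spelled out the same substitutions. Your added remark that $t\geq d/2$ ensures $d-t\leq t$ (so that the relevant weight-$(d-t)$ vectors and the maximization range are non-vacuous) is a correct observation the paper leaves implicit.
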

\begin{proof}
  The statement follows directly from \eqref{eq:psuc-lb-1}, \cref{lem:psuc-up-by-Ebbad}, \cref{lem:rankOneSubmatrices}, and $|\EB{q}{\intOrd}{t}|=\parenv*{q^{\intOrd}-1}^t$.
\end{proof}

\subsection{Discussion and Numerical Results}\label{sec:numericalResults}
\begin{table*}[ht!]
  \centering
  \caption{Overview of the bounds shown in \cref{fig:plots1,fig:plots2}}
  \begin{tabularx}{\linewidth}{llX}
    \textbf{Label} & \textbf{Defined in} & \textbf{Description} \\ \hline \\
    $\labelRS$ & \cref{thm:boundIRS} & Lower bound on the probability of successful decoding for interleaved RS codes\\[.5em]
    $\labelMain$ & \cref{thm:failureProbNewBound} & Lower bound on the probability of successful decoding for interleaved alternant codes where the minimum of the Singleton, Griesmer, Hamming, Plotkin, Elias, and Linear Programming bound is used for $\kopt$. \\
    $\labelSingleton$ & \cref{thm:failureProbNewBound} & Lower bound on the probability of successful decoding for interleaved alternant codes, where the Singleton bound is used for $\kopt$. \\
    $\labelLz$ & \cref{cor:failureProbNewBoundL0} & Simplified version of \cref{thm:failureProbNewBound}. The minimum of the Singleton, Griesmer, Hamming, Plotkin, Elias, and Linear Programming bound is used for $\kopt$. \\[.5em]
    $\labelLarge$ & \cref{thm:LargeEll} & Lower bound on the probability of successful decoding for interleaved alternant codes with $\intOrd \geq t$ \\[.5em]
    $\labelMisc$ &
    \cite[Appendix A]{holzbaur2021decodingIA}
    & Upper bound on the probability of a miscorrection for interleaved alternant codes. We assume that the decoding radius of the interleaved decoder is $\floor{\frac{\intOrd}{\intOrd+1}(d-1)}$, i.e., the largest number of errors for which the \emph{RS interleaved decoder}, given in \cref{algo:SyndromeDecoder}, would succeed (see~\cref{rem:alternantMaxRadius}).\\[.5em]
    $\labelLower$ & \cref{thm:lowerBound} & Upper bound on the probability of successful decoding for interleaved alternant codes.\\[.5em]
    $\labelSim$ & \cref{rem:alternantMaxRadius} & Threshold number of errors such that for all numbers of errors left of the indicated line, the interleaved alternant decoder succeeds with a probability of $\Psuc > 0.9$ obtained by simulation with $100$ decoding iterations per parameter set.\\[.5em]
   \hline 
  \end{tabularx}
  \label{tab:boundsInPlots}
\end{table*}

In \cref{sec:lb-succ-IA,sec:ub-succ-IA} we have established lower and upper bounds on the probability \begin{align*}
  \Psuc = 1-\Pfail-\Pmisc
\end{align*}
of successful decoding of interleaved alternant codes by the decoding algorithm from \cite{feng1991generalization,schmidt2009collaborative} (see also \cref{algo:SyndromeDecoder}),
assuming uniformly distributed burst errors of a given weight.
In the following we present and discuss some numerical results, where we compare these upper and lower bounds\footnote{For better presentation, we plot the respective bounds on the probability $1-\Psuc$ of \emph{unsuccessful} decoding instead of the bounds on $\Psuc$.}.
In order to better emphasize the individual contributions of failures and miscorrections, we further include an upper bound on the probability of miscorrection $\Pmisc$ from \cite[Appendix A]{holzbaur2019decoding}, in the plots of \cref{fig:plots1,fig:plots2}. We label, summarize, and describe the different bounds and versions thereof in \cref{tab:boundsInPlots} and, for convenience and clarity, refer to them by their respective label for the remainder of this section. Further, we fix the code length to be $n = q^m-1$, i.e., given the base field size $q$ and extension degree $m$ we construct the longest possible RS/alternant codes, while excluding $\alpha_i = 0$ as a code locator (see~\cref{def:GRScodes}).

Aside from the comparison of the lower and upper bounds on the success probability, it is also interesting to see how the probability of successful decoding of an interleaved alternant code compares to that of the corresponding interleaved GRS code over $\Fqm$.
Such a bound was derived\footnote{The bound in \cite{schmidt2009collaborative} is presented as a bound on the probability of failure, but it is in fact a bound on the probability of unsuccessful decoding (see~\cref{rem:applicationToRS}).} and shown to be close to the probability of successful decoding obtained from simulation in~\cite{schmidt2009collaborative}.
For the readers' convenience we restate it in~\cref{thm:boundIRS} and assign it the label $\labelRS$. Note that the decoder employed in~\cite{schmidt2009collaborative} is equivalent to the decoder presented in \cref{algo:SyndromeDecoder}. The difference is that the error matrix $\widetilde{\bE}$ is assumed to be over $\Fqm$ (the field of the RS code) in~\cref{thm:boundIRS}.
\begin{theorem}[Probability of successful decoding for interleaved RS codes {\cite[Theorem~7]{schmidt2009collaborative}}]\label{thm:boundIRS}
  Let $\mathcal{IC}^{(\intOrd)}$ be an $\intOrd$-interleaved GRS code with $\GRSp \in \mathbb{G}^d_{\balpha}$ as in~\cref{def:GRScodes}, $n= |\balpha|$ and $\cE=\{j_1,j_2,\dots,j_t\}\subset [n]$ be a set of $|\cE|=t$ error positions.
  For a codeword $\bC\in \mathcal{IC}^{(\intOrd)}$, an error matrix $\widetilde{\bE} \in \Fqm^{\intOrd \times n}$ with $\supp(\widetilde{\bE}) = \cE$ and $\bE = \widetilde{\bE}|_{\cE}$ i.i.d.~$\EB{q^m}{\intOrd}{t}$, and a received word $\bR = \bC + \widetilde{\bE}$, \cref{algo:SyndromeDecoder} \emph{succeeds}, i.e., returns $\hat \bC = \bC$, with probability
\begin{align}\label{eq:boundIRS}
  \Psuc(\mathcal{IC}^{(\intOrd)},\cE)\geq 1-\left(\frac{q^{m\intOrd}-\frac{1}{q^m}}{q^{m\intOrd}-1}  \right)^t\cdot \frac{q^{-m(\intOrd+1)(t_{\max,\RS}-t)}}{q^m-1}\ ,
\end{align}
where $t_{\max,\RS}=\frac{\intOrd}{\intOrd+1}(d-1)$ as given in \cref{thm:IRS-max-decoding-radius}.
\end{theorem}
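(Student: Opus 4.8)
The plan is to follow the structure that was already established for the interleaved alternant case in Theorem~\ref{thm:failureProbNewBoundGeneral}, but with two simplifications coming from working over $\Fqm$ directly rather than over the subfield $\Fq$. First I would invoke Lemma~\ref{lem:FailRankCond} (or more conveniently Lemma~\ref{lem:FailureCrux}, which is already phrased in the general GRS setting): for a fixed support $\cE$ of size $t$, \cref{algo:SyndromeDecoder} fails to return $\hat\bC = \bC$ if and only if there exists $\bv \in \Fqm^t \setminus \{\0\}$ with $\bH \cdot \diag(\bv) \cdot \bE^\top = \0$, where $\bH$ is a parity-check matrix of the $[t, 2t-d+1, d-t]_{q^m}$ code $\GRS_{\balpha|_\cE,\1}^{d-t}$ and $\bE = \widetilde\bE|_\cE \in \EB{q^m}{\intOrd}{t}$. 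This is the crux, and it is already available in the excerpt, so the main obstacle is not establishing the failure condition but bounding its probability sharply.

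Next I would decompose by the Hamming weight $w$ of the candidate vector $\bv$. As in \eqref{eq:ieq1}, because any $d-t-1$ columns of $\bH$ are linearly independent, only $w \geq d-t$ can contribute. For each such $w$ and each subset $\cV \subseteq \cE$ with $|\cV| = w$, the condition $\bH\cdot\diag(\bv)\cdot\bE^\top=\0$ restricted to $\cV$ says precisely that every row of $\bE|_{\cV}$ lies in the GRS code $\GRS_{\balpha|_\cV,\bv|_\cV}^{d-t}$, which has dimension $w - (d-t) + 1$. Here the simplification over the alternant case is decisive: the ambient code is an honest $[w, w-(d-t)+1]_{q^m}$ GRS code whose dimension is known exactly (MDS), rather than an alternant code whose dimension varies with the column multiplier. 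So instead of summing over a multiset of alternant codes with the $(q^m-1)$-multiplicity bookkeeping and the cardinality-maximization machinery of \cref{lem:maximization}, I can directly apply a union bound over $\bv$ of each fixed weight $w$ together with a direct count of how many matrices in $\EB{q^m}{\intOrd}{w}$ have all rows in a fixed $w-(d-t)+1$ dimensional code. The latter is exactly what \cref{cor:PwkL01} (and its refinement \cref{lem:Pwk}) provides, specialized to the field $\Fqm$. Carrying this out gives a sum over $w$ of terms of the shape $\binom{t}{w}(q^m-1)^{w-1}\cdot(\text{number of } \Fqm\text{-codes of that dimension through a fixed }\bv)^{-1}$-free expressions; after collecting the geometric factor in $t-t_{\max,\RS}$ and bounding $\binom{t}{w}$ crudely, the sum telescopes/dominates into the single closed form \eqref{eq:boundIRS}. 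The prefactor $\bigl(\tfrac{q^{m\intOrd}-q^{-m}}{q^{m\intOrd}-1}\bigr)^t$ comes from the ratio $|\EB{q^m}{\intOrd}{t}|^{-1}$ versus $q^{-m\intOrd t}$, i.e., from the fact that error columns are conditioned to be nonzero.

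The step I expect to require the most care is the passage from the weight-$w$ union bound to the clean single-term bound in the theorem, i.e., showing that the $w = d-t$ term dominates and that the remaining terms, together with the $\binom{t}{w}$ factors, can be absorbed into the stated constant and the exponent $-m(\intOrd+1)(t_{\max,\RS}-t)$. This is a routine but slightly delicate estimate of a geometric-type sum; one wants to verify that $q^{m\intOrd(w-(d-t))}$ growing in $w$ is always beaten by the $(q^m-1)^{-w}$ and $q^{-m(\intOrd+1)w}$-type decay, which holds precisely because $t \leq t_{\max,\RS}$. Since the statement here is quoted directly from \cite[Theorem~7]{schmidt2009collaborative} and the necessary ingredients (Lemma~\ref{lem:FailureCrux}, \cref{cor:PwkL01}, and the MDS weight structure) are all present in the excerpt, I would not reprove it in full detail but rather indicate these steps and refer to \cite{schmidt2009collaborative} for the arithmetic, emphasizing in \cref{rem:applicationToRS} that by Lemma~\ref{lem:FailRankCond} the bound is in fact on $1-\Psuc = \Pmisc + \Pfail$ and not merely on $\Pfail$.
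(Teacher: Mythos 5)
The paper does not prove this theorem --- it is quoted verbatim from \cite[Theorem~7]{schmidt2009collaborative} ``for the readers' convenience,'' so there is no internal proof to compare against, and a citation is exactly what the paper does. Your proposal is nevertheless a correct rederivation within the paper's own framework, and it essentially reconstructs the original argument: over $\Fqm$ the restriction of $\bE$ to any weight-$w$ support $\cV$ is forced into a genuine $[w,\,w-(d-t)+1]_{q^m}$ GRS code with known dimension, so the alternant-specific machinery (Lemma~\ref{lem:maximization}, Lemma~\ref{lem:sumCardinalitiesAlternant}, the multiplicity bookkeeping) disappears and only the counting in Corollary~\ref{cor:PwkL01} over $\Fqm$ survives.

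One correction to the final step, though. The closed form \eqref{eq:boundIRS} is not obtained by arguing that the $w=d-t$ term dominates, nor by ``bounding $\binom{t}{w}$ crudely''; the binomial coefficients are kept and the whole sum factors exactly. Concretely, the union bound gives
\[
1-\Psuc \;\leq\; \sum_{w=d-t}^{t}\binom{t}{w}(q^m-1)^{w-1}\,\frac{q^{m\intOrd(w-d+t+1)}}{(q^{m\intOrd}-1)^w}
\;=\;\frac{q^{m\intOrd(t-d+1)}}{q^m-1}\sum_{w=d-t}^{t}\binom{t}{w}\left(\frac{(q^m-1)\,q^{m\intOrd}}{q^{m\intOrd}-1}\right)^{w}.
\]
Extending the range of $w$ to $[0,t]$ only adds nonnegative terms, and the binomial theorem then collapses the sum to
\[
\frac{q^{m\intOrd(t-d+1)}}{q^m-1}\left(\frac{q^{m(\intOrd+1)}-1}{q^{m\intOrd}-1}\right)^{t},
\]
which is precisely \eqref{eq:boundIRS} after using
$\tfrac{q^{m\intOrd}-q^{-m}}{q^{m\intOrd}-1}=\tfrac{q^{m(\intOrd+1)}-1}{q^m\,(q^{m\intOrd}-1)}$
together with $(\intOrd+1)\cdot\tfrac{\intOrd}{\intOrd+1}(d-1)=\intOrd(d-1)$. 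So the clean form is an identity, not a dominant-term estimate; otherwise your sketch is on target.
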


Before we discuss the numerical evaluations of the bounds, we make an important observation based on the \emph{simulation} results.
\begin{remark}\label{rem:alternantMaxRadius}
  For most parameters, the provided lower bounds on the success probability (i.e., upper bounds on $1-\Psuc$) of decoding interleaved alternant codes do not provide a non-trivial bound for the same decoding radius as the bounds for interleaved RS codes in \cite{schmidt2009collaborative}. To determine the real decoding threshold, i.e., the smallest number of errors for which the decoder succeeds with non-negligible probability\footnote{We arbitrarily choose this probability to be $\Psuc > 0.9$ and run $100$ decoding iterations for each parameter set to determine the decoding threshold.}, we rely on simulation results. This threshold is indicated in the plots and labeled $\labelSim$. Notably, for all tested parameters, \emph{the threshold for interleaved alternant codes is the same as for interleaved RS codes}, i.e., the simulation results imply that the joint decoding of interleaved alternant codes succeeds w.h.p.~in presence of burst errors of weight $t$ with
  \begin{align*}
    t \leq  \frac{\intOrd}{\intOrd+1}(d-1) = t_{\max, \mathsf{RS}} \ .
  \end{align*}
\end{remark}

The numerical evaluations of the bounds are given in \cref{fig:plots1,fig:plots2} for different base field size $q$, extension degree $m$, and distance $d$, each for varying interleaving order $\intOrd$:
\begin{itemize}
\item $\mathbf{q=2, m=10, d=51}$:
  The parameters are chosen such that the rate of the constituent alternant code is $\approx 0.5$, assuming the dimension is $k=n-(d-1)m$ (which tends to be true for most alternant codes). For wild Goppa and BCH codes the rate is
  $\approx 0.75$ (see Page~\pageref{rem:BCHgoppaCodes}).
  The bounds for codes with these parameters are compared for different interleaving order $\intOrd$ in Figs.~\ref{subfig:q=2_m=10_r=50_l=2} ($\intOrd=2$), \ref{subfig:q=2_m=10_r=50_l=5} ($\intOrd=5$), \ref{subfig:q=2_m=10_r=50_l=10} ($\intOrd=10$) and \ref{subfig:q=2_m=10_r=50_l=25} ($\intOrd=25$).
  \cref{subfig:q=2_m=10_r=50_l=50,subfig:q=2_m=10_r=50_l=80}
  are included to show the comparison between $\labelMain$ and $\labelLarge$.
\item $\mathbf{q=2, m=11, d=101}$: We compare to the parameters above by changing the extension degrees $m$ 
  in Figs.~\ref{subfig:q=2_m=11_r=100_l=2}, \ref{subfig:q=2_m=11_r=100_l=5}, \ref{subfig:q=2_m=11_r=100_l=10} and \ref{subfig:q=2_m=11_r=100_l=25}. The designed distance $d$ changes accordingly such that the rate of the constituent alternant code is $\approx 0.5$ (for wilde Goppa and BCH codes the rate is $\approx 0.75$).
\item $\mathbf{q=32, m=2, d=51}$: To illustrate the influence of the base field size $q$, we show some evaluations of the bounds for $q=32$ in Figs.~\ref{subfig:q=32_m=2_r=50_l=2} and \ref{subfig:q=32_m=2_r=50_l=10}. 
\end{itemize}

\begin{figure*}
\begin{subfigure}{.5\textwidth}
\centering
\begin{tikzpicture}
\pgfplotsset{compat = 1.3}
\begin{axis}[
	legend style={nodes={scale=0.5, transform shape}},
	width = 0.97\columnwidth,
	height = 0.8\columnwidth,
	xlabel = {{Number of errors $t$}},
	ylabel = {{Probability $1-\Psuc$ or $\Pmisc$}},
	xmin = 25,
	xmax = 34,
	ymin = 1.271205e-10,
	ymax = 10,
	legend pos = south east,
	legend cell align=left,
	ymode=log,
	grid=both]

\addplot [dotted, color=violet, mark=*, mark size=1pt, thick] table[x=t,y=RS] {figs/Dec-Int/data/boundsData_q=2_m=10_r=50_l=2.dat};

\addlegendentry{{$1-\labelRS$}};

\addplot [solid, color=blue, mark=triangle, thick, mark size=1pt] table[x=t,y=Thm1] {figs/Dec-Int/data/boundsData_q=2_m=10_r=50_l=2.dat};

\addlegendentry{{$1-\labelMain$}};

\addplot [solid, color=cyan, mark=triangle, thick, mark size=1pt] table[x=t,y=WoKopt] {figs/Dec-Int/data/boundsData_q=2_m=10_r=50_l=2.dat};

\addlegendentry{{$1-\labelSingleton$}};

\addplot [solid, color=pink, mark=x, thick, mark size=1pt] table[x=t,y=L01] {figs/Dec-Int/data/boundsData_q=2_m=10_r=50_l=2.dat};

\addlegendentry{{$1-\labelLz$}};

\addplot [solid, color=lightgray, mark=none, thick, mark size=1pt] table[x=t,y=LowerIE] {figs/Dec-Int/data/boundsData_q=2_m=10_r=50_l=2.dat};

\addlegendentry{{$1-\labelLower$}};

\addplot [solid, color=green, mark=none, thick, mark size=1pt] table[x=t,y=Miscorrection] {figs/Dec-Int/data/boundsData_q=2_m=10_r=50_l=2.dat};

\addlegendentry{{$\labelMisc$}};

\addplot [solid, color=red, dashed, mark=none, thick, mark size=1pt] table[x=t,y=Sim] {figs/Dec-Int/data/boundsData_q=2_m=10_r=50_l=2.dat};

\addlegendentry{{$\labelSim$}};

\end{axis}
\end{tikzpicture}
\caption{$q=2, m=10, d=51, \intOrd=2$}
\label{subfig:q=2_m=10_r=50_l=2}
\end{subfigure}
\begin{subfigure}{.5\textwidth}
\centering
\begin{tikzpicture}
\pgfplotsset{compat = 1.3}
\begin{axis}[
	legend style={nodes={scale=0.5, transform shape}},
	width = 0.97\columnwidth,
	height = 0.8\columnwidth,
	xlabel = {{Number of errors $t$}},
	ylabel = {{Probability $1-\Psuc$ or $\Pmisc$}},
	xmin = 50,
	xmax = 67,
	ymin = 2.996352e-21,
	ymax = 10,
	legend pos = south east,
	legend cell align=left,
	ymode=log,
	grid=both]

\addplot [dotted, color=violet, mark=*, mark size=1pt, thick] table[x=t,y=RS] {figs/Dec-Int/data/boundsData_q=2_m=11_r=100_l=2.dat};

\addlegendentry{{$1-\labelRS$}};

\addplot [solid, color=blue, mark=triangle, thick, mark size=1pt] table[x=t,y=Thm1] {figs/Dec-Int/data/boundsData_q=2_m=11_r=100_l=2.dat};

\addlegendentry{{$1-\labelMain$}};

\addplot [solid, color=cyan, mark=triangle, thick, mark size=1pt] table[x=t,y=WoKopt] {figs/Dec-Int/data/boundsData_q=2_m=11_r=100_l=2.dat};

\addlegendentry{{$1-\labelSingleton$}};

\addplot [solid, color=pink, mark=x, thick, mark size=1pt] table[x=t,y=L01] {figs/Dec-Int/data/boundsData_q=2_m=11_r=100_l=2.dat};

\addlegendentry{{$1-\labelLz$}};

\addplot [solid, color=lightgray, mark=none, thick, mark size=1pt] table[x=t,y=LowerIE] {figs/Dec-Int/data/boundsData_q=2_m=11_r=100_l=2.dat};

\addlegendentry{{$1-\labelLower$}};

\addplot [solid, color=green, mark=none, thick, mark size=1pt] table[x=t,y=Miscorrection] {figs/Dec-Int/data/boundsData_q=2_m=11_r=100_l=2.dat};

\addlegendentry{{$\labelMisc$}};

\addplot [solid, color=red, dashed, mark=none, thick, mark size=1pt] table[x=t,y=Sim] {figs/Dec-Int/data/boundsData_q=2_m=11_r=100_l=2.dat};

\addlegendentry{{$\labelSim$}};

\end{axis}
\end{tikzpicture}
\caption{$q=2, m=11, d=101, \intOrd=2$}
\label{subfig:q=2_m=11_r=100_l=2}
\end{subfigure}

\begin{subfigure}{.5\textwidth}
\centering
\begin{tikzpicture}
\pgfplotsset{compat = 1.3}
\begin{axis}[
	legend style={nodes={scale=0.5, transform shape}},
	width = 0.97\columnwidth,
	height = 0.8\columnwidth,
	xlabel = {{Number of errors $t$}},
	ylabel = {{Probability $1-\Psuc$ or $\Pmisc$}},
	xmin = 25,
	xmax = 42,
	ymin = 9.389114e-35,
	ymax = 10,
	legend pos = south east,
	legend cell align=left,
	ymode=log,
	grid=both]

\addplot [dotted, color=violet, mark=*, mark size=1pt, thick] table[x=t,y=RS] {figs/Dec-Int/data/boundsData_q=2_m=10_r=50_l=5.dat};

\addlegendentry{{$1-\labelRS$}};

\addplot [solid, color=blue, mark=triangle, thick, mark size=1pt] table[x=t,y=Thm1] {figs/Dec-Int/data/boundsData_q=2_m=10_r=50_l=5.dat};

\addlegendentry{{$1-\labelMain$}};

\addplot [solid, color=cyan, mark=triangle, thick, mark size=1pt] table[x=t,y=WoKopt] {figs/Dec-Int/data/boundsData_q=2_m=10_r=50_l=5.dat};

\addlegendentry{{$1-\labelSingleton$}};

\addplot [solid, color=pink, mark=x, thick, mark size=1pt] table[x=t,y=L01] {figs/Dec-Int/data/boundsData_q=2_m=10_r=50_l=5.dat};

\addlegendentry{{$1-\labelLz$}};

\addplot [solid, color=lightgray, mark=none, thick, mark size=1pt] table[x=t,y=LowerIE] {figs/Dec-Int/data/boundsData_q=2_m=10_r=50_l=5.dat};

\addlegendentry{{$1-\labelLower$}};

\addplot [solid, color=green, mark=none, thick, mark size=1pt] table[x=t,y=Miscorrection] {figs/Dec-Int/data/boundsData_q=2_m=10_r=50_l=5.dat};

\addlegendentry{{$\labelMisc$}};

\addplot [solid, color=red, dashed, mark=none, thick, mark size=1pt] table[x=t,y=Sim] {figs/Dec-Int/data/boundsData_q=2_m=10_r=50_l=5.dat};

\addlegendentry{{$\labelSim$}};

\end{axis}
\end{tikzpicture}
\caption{$q=2, m=10, d=51, \intOrd=5$}
\label{subfig:q=2_m=10_r=50_l=5}
\end{subfigure}%
\begin{subfigure}{.5\textwidth}
\centering
\begin{tikzpicture}
\pgfplotsset{compat = 1.3}
\begin{axis}[
	legend style={nodes={scale=0.5, transform shape}},
	width = 0.97\columnwidth,
	height = 0.8\columnwidth,
	xlabel = {{Number of errors $t$}},
	ylabel = {{Probability $1-\Psuc$ or $\Pmisc$}},
	xmin = 50,
	xmax = 84,
	ymin = 9.677669e-72,
	ymax = 10,
	legend pos = south east,
	legend cell align=left,
	ymode=log,
	grid=both]

\addplot [dotted, color=violet, mark=*, mark size=1pt, thick] table[x=t,y=RS] {figs/Dec-Int/data/boundsData_q=2_m=11_r=100_l=5.dat};

\addlegendentry{{$1-\labelRS$}};

\addplot [solid, color=blue, mark=triangle, thick, mark size=1pt] table[x=t,y=Thm1] {figs/Dec-Int/data/boundsData_q=2_m=11_r=100_l=5.dat};

\addlegendentry{{$1-\labelMain$}};

\addplot [solid, color=cyan, mark=triangle, thick, mark size=1pt] table[x=t,y=WoKopt] {figs/Dec-Int/data/boundsData_q=2_m=11_r=100_l=5.dat};

\addlegendentry{{$1-\labelSingleton$}};

\addplot [solid, color=pink, mark=x, thick, mark size=1pt] table[x=t,y=L01] {figs/Dec-Int/data/boundsData_q=2_m=11_r=100_l=5.dat};

\addlegendentry{{$1-\labelLz$}};

\addplot [solid, color=lightgray, mark=none, thick, mark size=1pt] table[x=t,y=LowerIE] {figs/Dec-Int/data/boundsData_q=2_m=11_r=100_l=5.dat};

\addlegendentry{{$1-\labelLower$}};

\addplot [solid, color=green, mark=none, thick, mark size=1pt] table[x=t,y=Miscorrection] {figs/Dec-Int/data/boundsData_q=2_m=11_r=100_l=5.dat};

\addlegendentry{{$\labelMisc$}};

\addplot [solid, color=red, dashed, mark=none, thick, mark size=1pt] table[x=t,y=Sim] {figs/Dec-Int/data/boundsData_q=2_m=11_r=100_l=5.dat};

\addlegendentry{{$\labelSim$}};

\end{axis}
\end{tikzpicture}
\caption{$q=2, m=11, d=101, \intOrd=5$}
\label{subfig:q=2_m=11_r=100_l=5}
\end{subfigure}

\begin{subfigure}{.5\textwidth}
\centering
\begin{tikzpicture}
\pgfplotsset{compat = 1.3}
\begin{axis}[
	legend style={nodes={scale=0.5, transform shape}},
	width = 0.97\columnwidth,
	height = 0.8\columnwidth,
	xlabel = {{Number of errors $t$}},
	ylabel = {{Probability $1-\Psuc$ or $\Pmisc$}},
	xmin = 25,
	xmax = 46,
	ymin = 1.564516e-71,
	ymax = 10,
	legend pos = south east,
	legend cell align=left,
	ymode=log,
	grid=both]

\addplot [dotted, color=violet, mark=*, mark size=1pt, thick] table[x=t,y=RS] {figs/Dec-Int/data/boundsData_q=2_m=10_r=50_l=10.dat};

\addlegendentry{{$1-\labelRS$}};

\addplot [solid, color=blue, mark=triangle, thick, mark size=1pt] table[x=t,y=Thm1] {figs/Dec-Int/data/boundsData_q=2_m=10_r=50_l=10.dat};

\addlegendentry{{$1-\labelMain$}};

\addplot [solid, color=cyan, mark=triangle, thick, mark size=1pt] table[x=t,y=WoKopt] {figs/Dec-Int/data/boundsData_q=2_m=10_r=50_l=10.dat};

\addlegendentry{{$1-\labelSingleton$}};

\addplot [solid, color=pink, mark=x, thick, mark size=1pt] table[x=t,y=L01] {figs/Dec-Int/data/boundsData_q=2_m=10_r=50_l=10.dat};

\addlegendentry{{$1-\labelLz$}};

\addplot [solid, color=lightgray, mark=none, thick, mark size=1pt] table[x=t,y=LowerIE] {figs/Dec-Int/data/boundsData_q=2_m=10_r=50_l=10.dat};

\addlegendentry{{$1-\labelLower$}};

\addplot [solid, color=green, mark=none, thick, mark size=1pt] table[x=t,y=Miscorrection] {figs/Dec-Int/data/boundsData_q=2_m=10_r=50_l=10.dat};

\addlegendentry{{$\labelMisc$}};

\addplot [solid, color=red, dashed, mark=none, thick, mark size=1pt] table[x=t,y=Sim] {figs/Dec-Int/data/boundsData_q=2_m=10_r=50_l=10.dat};

\addlegendentry{{$\labelSim$}};

\end{axis}
\end{tikzpicture}
\caption{$q=2, m=10, d=51, \intOrd=10$}
\label{subfig:q=2_m=10_r=50_l=10}
\end{subfigure}%
\begin{subfigure}{.5\textwidth}
\centering
\begin{tikzpicture}
\pgfplotsset{compat = 1.3}
\begin{axis}[
	legend style={nodes={scale=0.5, transform shape}},
	width = 0.97\columnwidth,
	height = 0.8\columnwidth,
	xlabel = {{Number of errors $t$}},
	ylabel = {{Probability $1-\Psuc$ or $\Pmisc$}},
	xmin = 50,
	xmax = 91,
	ymin = 1.738559e-146,
	ymax = 10,
	legend pos = south east,
	legend cell align=left,
	ymode=log,
	grid=both]

\addplot [dotted, color=violet, mark=*, mark size=1pt, thick] table[x=t,y=RS] {figs/Dec-Int/data/boundsData_q=2_m=11_r=100_l=10.dat};

\addlegendentry{{$1-\labelRS$}};

\addplot [solid, color=blue, mark=triangle, thick, mark size=1pt] table[x=t,y=Thm1] {figs/Dec-Int/data/boundsData_q=2_m=11_r=100_l=10.dat};

\addlegendentry{{$1-\labelMain$}};

\addplot [solid, color=cyan, mark=triangle, thick, mark size=1pt] table[x=t,y=WoKopt] {figs/Dec-Int/data/boundsData_q=2_m=11_r=100_l=10.dat};

\addlegendentry{{$1-\labelSingleton$}};

\addplot [solid, color=pink, mark=x, thick, mark size=1pt] table[x=t,y=L01] {figs/Dec-Int/data/boundsData_q=2_m=11_r=100_l=10.dat};

\addlegendentry{{$1-\labelLz$}};

\addplot [solid, color=lightgray, mark=none, thick, mark size=1pt] table[x=t,y=LowerIE] {figs/Dec-Int/data/boundsData_q=2_m=11_r=100_l=10.dat};

\addlegendentry{{$1-\labelLower$}};

\addplot [solid, color=green, mark=none, thick, mark size=1pt] table[x=t,y=Miscorrection] {figs/Dec-Int/data/boundsData_q=2_m=11_r=100_l=10.dat};

\addlegendentry{{$\labelMisc$}};

\addplot [solid, color=red, dashed, mark=none, thick, mark size=1pt] table[x=t,y=Sim] {figs/Dec-Int/data/boundsData_q=2_m=11_r=100_l=10.dat};

\addlegendentry{{$\labelSim$}};

\end{axis}
\end{tikzpicture}
\caption{$q=2, m=11, d=101, \intOrd=10$}
\label{subfig:q=2_m=11_r=100_l=10}
\end{subfigure}
\caption{Comparison of the bounds for different interleaving order $\intOrd$ and extension degree $m$. Rows are with different $\intOrd$ while the two columns are with different $m$. For the bounds $\labelRS, \labelMain, \labelSingleton, \labelLz, \labelLarge,$ and $\labelLower$ on the success probability we show the respective probabilities of unsuccessful decoding $1-\Psuc$. The references of the bounds can be found in \cref{tab:boundsInPlots}.}
\label{fig:plots1}
\end{figure*}
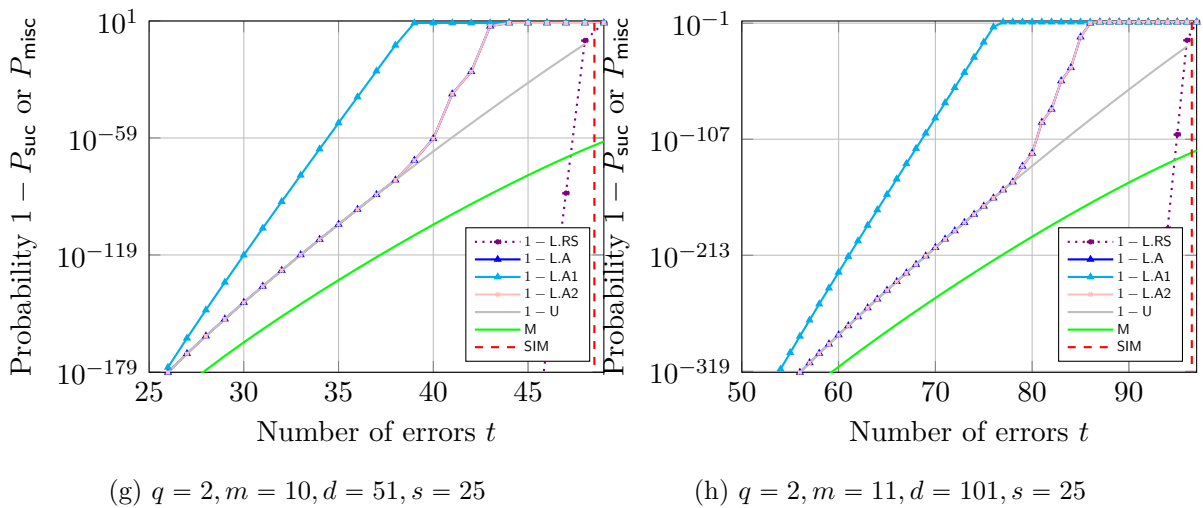
\begin{figure*}
  \ContinuedFloat
\begin{subfigure}{.5\textwidth}
\centering
\begin{tikzpicture}
\pgfplotsset{compat = 1.3}
\begin{axis}[
	legend style={nodes={scale=0.5, transform shape}},
	width = 0.97\columnwidth,
	height = 0.8\columnwidth,
	xlabel = {{Number of errors $t$}},
	ylabel = {{Probability $1-\Psuc$ or $\Pmisc$}},
	xmin = 25,
	xmax = 49,
	ymin = 6.271928e-180,
	ymax = 10,
	legend pos = south east,
	legend cell align=left,
	ymode=log,
	grid=both]

\addplot [dotted, color=violet, mark=*, mark size=1pt, thick] table[x=t,y=RS] {figs/Dec-Int/data/boundsData_q=2_m=10_r=50_l=25.dat};

\addlegendentry{{$1-\labelRS$}};

\addplot [solid, color=blue, mark=triangle, thick, mark size=1pt] table[x=t,y=Thm1] {figs/Dec-Int/data/boundsData_q=2_m=10_r=50_l=25.dat};

\addlegendentry{{$1-\labelMain$}};

\addplot [solid, color=cyan, mark=triangle, thick, mark size=1pt] table[x=t,y=WoKopt] {figs/Dec-Int/data/boundsData_q=2_m=10_r=50_l=25.dat};

\addlegendentry{{$1-\labelSingleton$}};

\addplot [solid, color=pink, mark=x, thick, mark size=1pt] table[x=t,y=L01] {figs/Dec-Int/data/boundsData_q=2_m=10_r=50_l=25.dat};

\addlegendentry{{$1-\labelLz$}};

\addplot [solid, color=lightgray, mark=none, thick, mark size=1pt] table[x=t,y=LowerIE] {figs/Dec-Int/data/boundsData_q=2_m=10_r=50_l=25.dat};

\addlegendentry{{$1-\labelLower$}};

\addplot [solid, color=green, mark=none, thick, mark size=1pt] table[x=t,y=Miscorrection] {figs/Dec-Int/data/boundsData_q=2_m=10_r=50_l=25.dat};

\addlegendentry{{$\labelMisc$}};

\addplot [solid, color=red, dashed, mark=none, thick, mark size=1pt] table[x=t,y=Sim] {figs/Dec-Int/data/boundsData_q=2_m=10_r=50_l=25.dat};

\addlegendentry{{$\labelSim$}};

\end{axis}
\end{tikzpicture}
\caption{$q=2, m=10, d=51, \intOrd=25$}
\label{subfig:q=2_m=10_r=50_l=25}
\end{subfigure}%
\begin{subfigure}{.5\textwidth}
\centering
\begin{tikzpicture}
\pgfplotsset{compat = 1.3}
\begin{axis}[
	legend style={nodes={scale=0.5, transform shape}},
	width = 0.97\columnwidth,
	height = 0.8\columnwidth,
	xlabel = {{Number of errors $t$}},
	ylabel = {{Probability $1-\Psuc$ or $\Pmisc$}},
	xmin = 50,
	xmax = 97,
	ymin = 1.096826e-320,
	ymax = 10,
	legend pos = south east,
	legend cell align=left,
	ymode=log,
	grid=both]

\addplot [dotted, color=violet, mark=*, mark size=1pt, thick] table[x=t,y=RS] {figs/Dec-Int/data/boundsData_q=2_m=11_r=100_l=25.dat};

\addlegendentry{{$1-\labelRS$}};

\addplot [solid, color=blue, mark=triangle, thick, mark size=1pt] table[x=t,y=Thm1] {figs/Dec-Int/data/boundsData_q=2_m=11_r=100_l=25.dat};

\addlegendentry{{$1-\labelMain$}};

\addplot [solid, color=cyan, mark=triangle, thick, mark size=1pt] table[x=t,y=WoKopt] {figs/Dec-Int/data/boundsData_q=2_m=11_r=100_l=25.dat};

\addlegendentry{{$1-\labelSingleton$}};

\addplot [solid, color=pink, mark=x, thick, mark size=1pt] table[x=t,y=L01] {figs/Dec-Int/data/boundsData_q=2_m=11_r=100_l=25.dat};

\addlegendentry{{$1-\labelLz$}};

\addplot [solid, color=lightgray, mark=none, thick, mark size=1pt] table[x=t,y=LowerIE] {figs/Dec-Int/data/boundsData_q=2_m=11_r=100_l=25.dat};

\addlegendentry{{$1-\labelLower$}};

\addplot [solid, color=green, mark=none, thick, mark size=1pt] table[x=t,y=Miscorrection] {figs/Dec-Int/data/boundsData_q=2_m=11_r=100_l=25.dat};

\addlegendentry{{$\labelMisc$}};

\addplot [solid, color=red, dashed, mark=none, thick, mark size=1pt] table[x=t,y=Sim] {figs/Dec-Int/data/boundsData_q=2_m=11_r=100_l=25.dat};

\addlegendentry{{$\labelSim$}};

\end{axis}
\end{tikzpicture}
\caption{$q=2, m=11, d=101, \intOrd=25$}
\label{subfig:q=2_m=11_r=100_l=25}
\end{subfigure}
\caption{(Cont'd.) Comparison of the bounds for different $\intOrd$ and $m$.}
\end{figure*}
\begin{figure}[h!]
\begin{subfigure}{.5\textwidth}
\centering
\begin{tikzpicture}
\pgfplotsset{compat = 1.3}
\begin{axis}[
	legend style={nodes={scale=0.5, transform shape}},
	width = 0.97\columnwidth,
	height = 0.8\columnwidth,
	xlabel = {{Number of errors $t$}},
	ylabel = {{Probability $1-\Psuc$ or $\Pmisc$}},
	xmin = 25,
	xmax = 50,
	ymin = 1.295002e-310,
	ymax = 10,
	legend pos = north west,
	legend cell align=left,
	ymode=log,
	grid=both]

\addplot [dotted, color=violet, mark=*, mark size=1pt, thick] table[x=t,y=RS] {figs/Dec-Int/data/boundsData_q=2_m=10_r=50_l=50.dat};

\addlegendentry{{$1-\labelRS$}};

\addplot [solid, color=blue, mark=triangle, thick, mark size=1pt] table[x=t,y=Thm1] {figs/Dec-Int/data/boundsData_q=2_m=10_r=50_l=50.dat};

\addlegendentry{{$1-\labelMain$}};

\addplot [solid, color=cyan, mark=triangle, thick, mark size=1pt] table[x=t,y=WoKopt] {figs/Dec-Int/data/boundsData_q=2_m=10_r=50_l=50.dat};

\addlegendentry{{$1-\labelSingleton$}};

\addplot [solid, color=pink, mark=x, thick, mark size=1pt] table[x=t,y=L01] {figs/Dec-Int/data/boundsData_q=2_m=10_r=50_l=50.dat};

\addlegendentry{{$1-\labelLz$}};

\addplot [solid, color=orange, mark=triangle, thick, mark size=1pt] table[x=t,y=LargeEll] {figs/Dec-Int/data/boundsData_q=2_m=10_r=50_l=50.dat};

\addlegendentry{{$1-\labelLarge$}};

\addplot [solid, color=lightgray, mark=none, thick, mark size=1pt] table[x=t,y=LowerIE] {figs/Dec-Int/data/boundsData_q=2_m=10_r=50_l=50.dat};

\addlegendentry{{$1-\labelLower$}};

\addplot [solid, color=green, mark=none, thick, mark size=1pt] table[x=t,y=Miscorrection] {figs/Dec-Int/data/boundsData_q=2_m=10_r=50_l=50.dat};

\addlegendentry{{$\labelMisc$}};

\addplot [solid, color=red, dashed, mark=none, thick, mark size=1pt] table[x=t,y=Sim] {figs/Dec-Int/data/boundsData_q=2_m=10_r=50_l=50.dat};

\addlegendentry{{$\labelSim$}};

\end{axis}
\end{tikzpicture}
\caption{$q=2, m=10, d=51, \intOrd=50$}
\label{subfig:q=2_m=10_r=50_l=50}
\end{subfigure}%
\begin{subfigure}{.5\textwidth}
\centering
\begin{tikzpicture}
\pgfplotsset{compat = 1.3}
\begin{axis}[
	legend style={nodes={scale=0.5, transform shape}},
	width = 0.97\columnwidth,
	height = 0.8\columnwidth,
	xlabel = {{Number of errors $t$}},
	ylabel = {{Probability $1-\Psuc$ or $\Pmisc$}},
	xmin = 25,
	xmax = 50,
	ymin = 5.188752e-304,
	ymax = 10,
	legend pos = north west,
	legend cell align=left,
	ymode=log,
	grid=both]

\addplot [dotted, color=violet, mark=*, mark size=1pt, thick] table[x=t,y=RS] {figs/Dec-Int/data/boundsData_q=2_m=10_r=50_l=80.dat};

\addlegendentry{{$1-\labelRS$}};

\addplot [solid, color=blue, mark=triangle, thick, mark size=1pt] table[x=t,y=Thm1] {figs/Dec-Int/data/boundsData_q=2_m=10_r=50_l=80.dat};

\addlegendentry{{$1-\labelMain$}};

\addplot [solid, color=cyan, mark=triangle, thick, mark size=1pt] table[x=t,y=WoKopt] {figs/Dec-Int/data/boundsData_q=2_m=10_r=50_l=80.dat};

\addlegendentry{{$1-\labelSingleton$}};

\addplot [solid, color=pink, mark=x, thick, mark size=1pt] table[x=t,y=L01] {figs/Dec-Int/data/boundsData_q=2_m=10_r=50_l=80.dat};

\addlegendentry{{$1-\labelLz$}};

\addplot [solid, color=orange, mark=triangle, thick, mark size=1pt] table[x=t,y=LargeEll] {figs/Dec-Int/data/boundsData_q=2_m=10_r=50_l=80.dat};

\addlegendentry{{$1-\labelLarge$}};

\addplot [solid, color=lightgray, mark=none, thick, mark size=1pt] table[x=t,y=LowerIE] {figs/Dec-Int/data/boundsData_q=2_m=10_r=50_l=80.dat};

\addlegendentry{{$1-\labelLower$}};

\addplot [solid, color=green, mark=none, thick, mark size=1pt] table[x=t,y=Miscorrection] {figs/Dec-Int/data/boundsData_q=2_m=10_r=50_l=80.dat};

\addlegendentry{{$\labelMisc$}};

\addplot [solid, color=red, dashed, mark=none, thick, mark size=1pt] table[x=t,y=Sim] {figs/Dec-Int/data/boundsData_q=2_m=10_r=50_l=80.dat};

\addlegendentry{{$\labelSim$}};

\end{axis}
\end{tikzpicture}
\caption{$q=2, m=10, d=51, \intOrd=80$}
\label{subfig:q=2_m=10_r=50_l=80}
\end{subfigure}%

\begin{subfigure}{.5\textwidth}
\centering
\begin{tikzpicture}
\pgfplotsset{compat = 1.3}
\begin{axis}[
	legend style={nodes={scale=0.5, transform shape}},
	width = 0.97\columnwidth,
	height = 0.8\columnwidth,
	xlabel = {{Number of errors $t$}},
	ylabel = {{Probability $1-\Psuc$ or $\Pmisc$}},
	xmin = 25,
	xmax = 34,
	ymin = 6.283091e-34,
	ymax = 10,
	legend pos = north west,
	legend cell align=left,
	ymode=log,
	grid=both]

\addplot [dotted, color=violet, mark=*, mark size=1pt, thick] table[x=t,y=RS] {figs/Dec-Int/data/boundsData_q=32_m=2_r=50_l=2.dat};

\addlegendentry{{$1-\labelRS$}};

\addplot [solid, color=blue, mark=triangle, thick, mark size=1pt] table[x=t,y=Thm1] {figs/Dec-Int/data/boundsData_q=32_m=2_r=50_l=2.dat};

\addlegendentry{{$1-\labelMain$}};

\addplot [solid, color=cyan, mark=triangle, thick, mark size=1pt] table[x=t,y=WoKopt] {figs/Dec-Int/data/boundsData_q=32_m=2_r=50_l=2.dat};

\addlegendentry{{$1-\labelSingleton$}};

\addplot [solid, color=pink, mark=x, thick, mark size=1pt] table[x=t,y=L01] {figs/Dec-Int/data/boundsData_q=32_m=2_r=50_l=2.dat};

\addlegendentry{{$1-\labelLz$}};

\addplot [solid, color=lightgray, mark=none, thick, mark size=1pt] table[x=t,y=LowerIE] {figs/Dec-Int/data/boundsData_q=32_m=2_r=50_l=2.dat};

\addlegendentry{{$1-\labelLower$}};

\addplot [solid, color=green, mark=none, thick, mark size=1pt] table[x=t,y=Miscorrection] {figs/Dec-Int/data/boundsData_q=32_m=2_r=50_l=2.dat};

\addlegendentry{{$\labelMisc$}};

\addplot [solid, color=red, dashed, mark=none, thick, mark size=1pt] table[x=t,y=Sim] {figs/Dec-Int/data/boundsData_q=32_m=2_r=50_l=2.dat};

\addlegendentry{{$\labelSim$}};

\end{axis}
\end{tikzpicture}
\caption{$q=32, m=2, d=51, \intOrd=2$}
\label{subfig:q=32_m=2_r=50_l=2}
\end{subfigure}%
\begin{subfigure}{.5\textwidth}
\centering
\begin{tikzpicture}
\pgfplotsset{compat = 1.3}
\begin{axis}[
	legend style={nodes={scale=0.5, transform shape}},
	width = 0.97\columnwidth,
	height = 0.8\columnwidth,
	xlabel = {{Number of errors $t$}},
	ylabel = {{Probability $1-\Psuc$ or $\Pmisc$}},
	xmin = 25,
	xmax = 46,
	ymin = 6.422853e-323,
	ymax = 10,
	legend pos = north west,
	legend cell align=left,
	ymode=log,
	grid=both]

\addplot [dotted, color=violet, mark=*, mark size=1pt, thick] table[x=t,y=RS] {figs/Dec-Int/data/boundsData_q=32_m=2_r=50_l=10.dat};

\addlegendentry{{$1-\labelRS$}};

\addplot [solid, color=blue, mark=triangle, thick, mark size=1pt] table[x=t,y=Thm1] {figs/Dec-Int/data/boundsData_q=32_m=2_r=50_l=10.dat};

\addlegendentry{{$1-\labelMain$}};

\addplot [solid, color=cyan, mark=triangle, thick, mark size=1pt] table[x=t,y=WoKopt] {figs/Dec-Int/data/boundsData_q=32_m=2_r=50_l=10.dat};

\addlegendentry{{$1-\labelSingleton$}};

\addplot [solid, color=pink, mark=x, thick, mark size=1pt] table[x=t,y=L01] {figs/Dec-Int/data/boundsData_q=32_m=2_r=50_l=10.dat};

\addlegendentry{{$1-\labelLz$}};

\addplot [solid, color=lightgray, mark=none, thick, mark size=1pt] table[x=t,y=LowerIE] {figs/Dec-Int/data/boundsData_q=32_m=2_r=50_l=10.dat};

\addlegendentry{{$1-\labelLower$}};

\addplot [solid, color=green, mark=none, thick, mark size=1pt] table[x=t,y=Miscorrection] {figs/Dec-Int/data/boundsData_q=32_m=2_r=50_l=10.dat};

\addlegendentry{{$\labelMisc$}};

\addplot [solid, color=red, dashed, mark=none, thick, mark size=1pt] table[x=t,y=Sim] {figs/Dec-Int/data/boundsData_q=32_m=2_r=50_l=10.dat};

\addlegendentry{{$\labelSim$}};

\end{axis}
\end{tikzpicture}
\caption{$q=32, m=2, d=51, \intOrd=10$}
\label{subfig:q=32_m=2_r=50_l=10}
\end{subfigure}
\caption{Comparison of the bounds for large interleaving order $\intOrd\geq t$ (a and b) and different base field size $q$ (c and d). For the bounds $\labelRS, \labelMain, \labelSingleton, \labelLz, \labelLarge,$ and $\labelLower$ on the success probability we show the respective probabilities of unsuccessful decoding $1-\Psuc$. The references of the bounds can be found in \cref{tab:boundsInPlots}.}
\label{fig:plots2}
\end{figure}

We now briefly discuss the main observations we have taken from the numerical results. As $\labelSingleton$ and $\labelLz$ are simplifications of $\labelMain$ and therefore strictly worse, we leave their comparisons to a later point in the section, and begin by only comparing $\labelRS, \labelMain, \labelLarge, \labelMisc$, and $\labelLower$. All statements on the decoding failure, miscorrection, and success probability refer to the syndrome-based joint decoder of \cite{feng1991generalization,schmidt2009collaborative} given in \cref{algo:SyndromeDecoder}.

\begin{itemize}
  \item For fixed $q,m$ and $\intOrd$, the success probability for interleaved ($q$-ary) alternant codes is significantly smaller than that for interleaved ($q^m$-ary) RS codes, since even the \emph{upper} bound $\labelLower$ on the success probability for interleaved alternant codes is in most cases smaller than the \emph{lower} bound $\labelRS$ on the success probability for interleaved RS codes.

  \item The probability of unsuccessful decoding interleaved alternant codes $1-\Psuc$ is dominated by the probability of failure $\Pfail$, as $\Pmisc \ll 1-\Psuc$, i.e., the bound on the probability of a miscorrection $\Pmisc$, labeled $\labelMisc$, is multiple orders of magnitude smaller than $1-\Psuc=\Pmisc+\Pfail$ for the best bound on $\Psuc$ among $\labelMain$ and $\labelLarge$. This is consistent with the numerical results from \cite{schmidt2009collaborative} for the case of decoding interleaved RS codes.

  \item For most parameters, $\labelMain$ provides the best lower bound on the success probability $\Psuc$. In particular, for higher interleaving order $\intOrd$ ($=10,25$ for example) and relatively small number of errors $t$, it essentially matches the upper bound of \cref{thm:lowerBound} (see~Figs. \ref{subfig:q=2_m=10_r=50_l=10}, \ref{subfig:q=2_m=11_r=100_l=10}, \ref{subfig:q=2_m=10_r=50_l=25}, and \ref{subfig:q=2_m=11_r=100_l=25} for example).

  \item For fixed $q,m$, and $d$, the relative gap between the number of errors for which the lower bounds on the success probability become non-trivial, i.e., give $\Psuc >0$, and the simulated decoding threshold decreases for increasing interleaving order~$\intOrd$ (compare \cref{subfig:q=2_m=10_r=50_l=2}, \ref{subfig:q=2_m=10_r=50_l=5}, \ref{subfig:q=2_m=10_r=50_l=10} and \ref{subfig:q=2_m=10_r=50_l=25} or \cref{subfig:q=2_m=11_r=100_l=2}, \ref{subfig:q=2_m=11_r=100_l=5}, \ref{subfig:q=2_m=11_r=100_l=10} and \ref{subfig:q=2_m=11_r=100_l=25}).

  \item The lower bound $\labelLarge$ on the success probability for $\intOrd > t$ improves upon the bound of $\labelMain$ for large number of errors that is close to the maximum decoding radius (see~\cref{subfig:q=2_m=10_r=50_l=50,subfig:q=2_m=10_r=50_l=80}).
\end{itemize}

Now consider the different versions of the lower bound on $\Psuc$ in \cref{thm:failureProbNewBound} labeled by $\labelMain$ (using the best upper bound on dimension of linear codes for $\kopt$), $\labelSingleton$ (using the Singleton bound for $\kopt$), and $\labelLz$ (\cref{cor:failureProbNewBoundL0}, a simplified version of $\labelMain$).
\begin{itemize}
\item For small $q$, the performance of \cref{thm:failureProbNewBound} is significantly worse when using the field-size-independent Singleton bound for $\kopt$, as evident from comparing $\labelMain$ and $\labelSingleton$ in Figs.~\ref{subfig:q=2_m=10_r=50_l=2} to \ref{subfig:q=2_m=11_r=100_l=25}, \ref{subfig:q=2_m=10_r=50_l=50} and \ref{subfig:q=2_m=10_r=50_l=80}. This can be expected due to the large gap between the field-size-dependent bounds and the Singleton bound for $\kopt$ when $q$ is small.

  \item For larger interleaving order $\intOrd$ ($\geq 10$ for example), the simplified lower bound $\labelLz$ approaches the most accurate version of the bound $\labelMain$ (see Figs.~\ref{subfig:q=2_m=10_r=50_l=10} to~\ref{subfig:q=2_m=11_r=100_l=25}, \ref{subfig:q=2_m=10_r=50_l=50} and \ref{subfig:q=2_m=10_r=50_l=80}).

\end{itemize}


\section{Other Results on Joint Decoding of Interleaved Codes}
\label{sec:other-joint-decoding}
This section briefly summarizes a selection of other works on joint decoding of interleaved codes. For more details the interested reader is referred to the respective publications.

\subsection{Joint Decoding of Generalized Goppa Codes}
\emph{This abstract summarizes the results of the work~\cite{liu2021decoding} published in the proceeding of 2021 IEEE International Symposium on Information Theory (ISIT).}

\emph{Generalized Goppa codes} (GGCs) are an extension of Goppa codes, which are defined by a set of \textit{code locator polynomials} and a \emph{Goppa polynomial}~\cite{shekhunova1981cyclic,bezzateev1997one}.
In~\cite{noskov2020effective}, a code-based cryptosystem 
using binary GGCs with code locator polynomials of degree $1$ and $2$ is proposed.
A special class of binary GGCs which is perfect in the weighted Hamming metric was introduced in~\cite{BezzateevShekhunova2013Perfect} and cyclic GGCs were investigated in \cite{bezzateev2014one,bezzateev2015cyclic}.

  In this work, basic properties, decoding and potential cryptographic applications of binary GGCs are investigated.
  First, we derive a parity-check matrix for GGCs with code locators of any degree (an instance for GGCs with code locator polynomials of degree $2$ was presented in \cite{noskov2020one},\cite{noskov2020effective}).
  We provide a formal proof for the lower bound on the minimum Hamming distance of binary GGCs, which was stated in \cite{noskov2020one},\cite{noskov2020effective},
  and we show that the lower bound for GGCs with even-degree code locator polynomials is improved compared to the general lower bound.
  Then, a quadratic-time decoding algorithm that can decode errors up to half of the minimum Hamming distance is presented.
  We further consider GGCs as the constituent code of interleaved codes. An explicit decoding algorithm based on \cref{algo:SyndromeDecoder} and extended Euclidean algorithm is presented, and new maximum decoding radius for interleaved GGCs are derived.
  Finally, we list some code parameters of GGCs and discuss their applicability to the McEliece cryptosystem. By comparing the public key sizes
  for several code parameters, it can be observed that, under the same security level, the GGCs with degree-$2$ code locator polynomials provides smaller public key size than the binary Goppa codes.


\subsection{List Decoding of $2$-Interleaved Binary Alternant Codes}
\emph{This abstract summarizes the results of the work~\cite{huang2022list} published in the proceeding of 2022 IEEE International Symposium on Information Theory (ISIT).}


Parvaresh \cite{parvaresh2007algebraic} combined list and interleaved decoding by adapting the Guruswami-Sudan algorithm to the decoding of $2$-interleaved GRS codes.
Trivariate polynomials are used to set up the interpolation constraints and \emph{resultants} of polynomials are used to recover the codeword. By combining the approaches of interleaved decoding and the Guruswami-Sudan algorithm, this decoder achieves a larger decoding radius than the Guruswami-Sudan algorithm, however, at the cost of a small probability of failure. 

In this work, a list decoding algorithm for $2$-interleaved binary alternant codes is proposed. The new algorithm combines the approach in \cite{augot2011list} that applies the Koetter-Vardy list decoding algorithm \cite{koetter2003algebraicsoft} to alternant codes, with the Parvaresh's algorithm for interleaved GRS codes \cite{parvaresh2007algebraic}. Similar to Parvareh's algorithm, it is difficult to make a precise statement on the decoding radius of this code. Instead, we present an upper bound on the decoding radius, along with simulation results showing that the decoding radius of the algorithm exceeds the decoding radii of all other algorithms known in literature for the chosen parameters. The drawback of the presented algorithm is that decoding is not guaranteed to succeed (similar to \cite{parvaresh2007algebraic}). However, the simulation results indicate that this probability of failure is small, if the parameters of the algorithm are chosen suitably.



\section{Summary and Outlooks}


In this chapter, we investigated joint decoding of interleaved evaluation codes, in particular,
of interleaved alternant codes by the decoder from \cite{feng1991generalization,schmidt2009collaborative} in occurrence of burst errors.
We first recapped the syndrome-based joint decoding algorithm for interleaved Reed-Solomon from \cite{feng1991generalization,schmidt2009collaborative}
and showed that a sufficient condition on decoding success given in \cite{schmidt2009collaborative} for interleaved Reed-Solomon codes is also a necessary condition.
After adapting the
condition to the crux of jointly decoding interleaved alternant codes,
we provided a framework for characterizing the probability of decoding success.
Within this framework,
we derived a lower bound and an upper bound on the success probability that holds for any interleaving order. Inspired by a generic decoding method from \cite{metzner1990general,roth2014coding}, we derived another lower bound that works for the interleaving orders that are larger than the number of error positions.
Moreover, we numerically evaluated the obtained bounds for different code parameters, which show that one of the new lower bounds is tight for some parameters, as it matches the corresponding newly derived upper bound.
Finally, other two works related to joint decoding interleaved evaluation codes are summarized.
It can be seen from the plots of the bounds that there is a gap between the number of errors $t$, where the upper bound $1-\labelMain$ on $1-\Psuc$
provides a non-trivial value, and the $\tmax$, where simulated decoding succeeds with high probability.
For future research, the most apparent open problem is to improve the general lower bound $\labelMain$, in particular for smaller interleaving order, to close this gap.
On the other hand, the current upper bound $\labelLower$ on decoding success is derived by considering one out of $2t-d+1$ cases that the decoding never succeeds, where $d$ is the designed distance of the alternant code. Therefore improvements upon the upper bound should be further investigated.
Another closely related question, out of purely theoretical interest, is to determine the distribution of the dimensions of all alternant codes for a given set of code locators. For specific applications, such as code-based cryptography, improvements of the bounds for specific error distributions, e.g., full-rank errors, could be of practical relevance.

Although we only briefly summarized the recent results on list decoding of interleaved alternant codes, many open problems are left to be investigated. For example, 
the bottle neck of extending the algorithm to larger interleaving order $\intOrd\geq 3$ is the recovery step (root-finding) step. Efficient algorithms in eliminating variables in $\Fq[x][y_1,\dots, y_s]$ are needed. 
Moreover, the upper bound given in \cite{huang2022list} on the list decoding radius is quite far above the simulated number of decodable errors. Tighter upper bound or estimation by simulating on more code parameters should be further studied.

\hphantom{\cite{holzbaur2019decoding,bartz2022rank,maringer2022analysis,huang2022list,porwal2022interleaved,ott2022covering,ott2023geometrical}}
\chapter{Concluding Remarks}

This dissertations concerns new code constructions with properties that are desired in quantum error-correction, distributed storage system and network coding based on non-conventional classes of polynomials, and joint decoding on evaluation codes to decode beyond half the minimum distance.


\textbf{\cref{chap:mod_ring}} is devoted to constructing Euclidean and $\sigma$-Hermitian dual-containing $(\Endom, \Deriv)$-polycyclic codes over finite commutative Frobenius rings
from skew polynomials.
We have developed an algorithm to find all dual-containing codes
by transforming the problem of searching for dual-containing codes into a system of polynomial equations and using Gr\"obner bases to solve it.
By applying this algorithm to several rings of order 4, the results show that there are dual-containing $(\Endom,\Deriv)$-codes that can only be constructed from skew polynomials with non-trivial endomorphisms $\Endom$ (not automorphisms) or nonzero derivations $\Deriv$.
Moreover, we have presented another algorithm with the usage of Gr\"obner basis to test whether the dual code is also a $(\Endom,\Deriv)$-polycyclic code.
Applying this algorithm to the resulting dual-containing codes found by the previous algorithm, we find some of those dual-containing codes whose dual is not a $(\Endom,\Deriv)$-code.

\textbf{\cref{chap:eva_skew}} focuses on the condition of constructing support-constrained codes from evaluation codes based on skew polynomials (i.e., the linearlized Reed-Solomon (LRS) codes) and network coding. 
We have derived a necessary and sufficient condition on the existence of an LRS code fulfilling given support constraints and give an upper bound on the field size to construct such a code.
With the help of the condition, we have proposed a scheme to design \emph{distributed LRS codes}
for multi-source unicast networks via solving an integer linear programming problem.
For a class of multicast networks, the generalized combination networks, we have derived upper and lower bounds on the gap between the minimum required alphabet size of scalar solutions and vector solutions.
The asymptotic behavior of the newly derived upper and lower bounds on the gap show that the number of bits that scalar solutions overpay is increasing sub-linearly with the size of the network. 

\textbf{\cref{chap:eva_multivar}}
contains two new results on locally recoverable codes from evaluation codes based on multivariate polynomials.
The first is on the $[q^2, 1-\Theta\left((q/r)^{-0.2284}\right)]_q$ quadratic lifted Reed-Solomon codes (QLRS), where each codeword symbol has $q^2$ local recovery sets and within each local set $r$ erasures can be corrected locally. We have compared its local recovery performance with the $[q^2, 1-\Theta\left((q/r)^{-0.4150}\right)]_q$ lifted Reed-Solomon codes which has $q$ local recovery sets for each codeword symbol. Simulation results showed that, for a fixed dimension, QLRS codes are more likely to locally recover an erasure at a certain position in the presence of other erasures, if the erasure probability is $\leq 0.7$.
The second result is on the almost affinely disjoint (AAD) subspace family motivated by batch codes. We have given a construction based on Reed-Solomon codes of such family for $k=1,2$. The newly derived upper bound on the asymptotic growth of the cardinality of the family shows the optimality of this construction.

\textbf{\cref{chap:dec_eva}} is dedicated to joint decoding of interleaved evaluation codes.
We have derived a necessary and sufficient condition on decoding success by the Schimidt-Sidorenko-Bossert joint decoding algorithm for interleaved alternant codes and lower and upper bounds on the probability of decoding success based on this condition.
Numerical evaluations show that one of the provided lower bounds is tight for some parameters, as it matches the corresponding newly derived upper bound. 
The short summary on utilizing list-decoding for interleaved codes has shown the potential of this approach to further increase the decoding radius of evaluation codes in the presence of burst errors.

Various future research directions are presented in the outlooks at the end of each chapter.


\appendix
\renewcommand*{\chapterformat}{}
\renewcommand*{\chapterheadendvskip}{%
  {%
    \setlength{\parskip}{0pt}%
    \noindent\rule{\linewidth}{1pt}\par%
		}%
		\ORIGchapterheadendvskip%
              }%

\renewcommand*{\chaptermarkformat}{}
\chapter{Appendix}

\section{Derivation of $\deg_{\betalt}P_{\bT}$}
\label{apendix:fieldSizeSkewPoly}
This proof is an extension of the analysis on linearized polynomials for Gabidulin codes in \cite[Section II.F]{yildiz2019gabidulin} to skew polynomials.

From \eqref{eq:skewPolyEachRow}, the entry $T_{i,j}$ in $\bT$ is the coefficient of $\SkewVar^{j-1}$ in $\rowpoly_i(\SkewVar)$. Note that we have set $T_{i,k}=1$ in order to have the other entries in $\bT$ uniquely determined given the roots of $f_i$'s.
For $h\in[k-1]$, $T_{i,h}$ is a commutative multivariate polynomial in $R_{\numMulPolyVar}$ (see \eqref{eq:multiVarRing})
and
\begin{align*}
  \deg_{\betalt}T_{i,h}\leq \deg_{\betalt} \rowpoly_i(\SkewVar)\ . 
\end{align*}
For any $l\in[\ell]$ and $t\in[n_l]$, to find 
$\deg_{\betalt}\rowpoly_i(\SkewVar)$, consider the definition of $\rowpoly_i(\SkewVar)$ in \eqref{eq:lclmMinPoly}.
Suppose that $j=\indMap(l,t)\in\zeroSet_i$, otherwise $\deg_{\betalt}\rowpoly_i(\SkewVar)=0$.
Recall that $\alpha_j=a_l\betalt^{q-1}$ and $\rootSet_i$'s, $i\in[k]$ are as defined in \eqref{eq:rootSet}.
Let $\rowpoly_i'\in\FrobSkewPolys$ be the minimal polynomial of $\rootSet_i'\defeq \rootSet_i\setminus{\set*{\alpha_j}}$,
i.e.,
\begin{align*}
  \rowpoly_i'(\SkewVar) = \underset{\alpha\in\rootSet_i'}{\lclm}\{\SkewVar-\alpha\}\ ,
\end{align*}
whose $\SkewVar$-degree is $\deg_{\SkewVar}\rowpoly_i'(\SkewVar)=|\rootSet_i'|=k-2$.
Since $j=\indMap(l,t)\not\in\zeroSet_i'$, the coefficients of $\rowpoly_i'(\SkewVar)$ are independent of $\betalt$, i.e., $\deg_{\betalt}\rowpoly_i'(\SkewVar)=0$.

By the remainder evaluation of skew polynomials in \eqref{eq:Frob-remainder-evaluation},
\begin{align*}
  \rowpoly_i'(a_l\betalt^{q-1}) &= \sum_{h=1}^{k-1} f'_{i,h}N_{h-1}(a_l\betalt^{q-1})\ ,
\end{align*}
and
\begin{align*}
  \deg_{\betalt}\rowpoly_i'(a_l\betalt^{q-1}) &= \deg_{\betalt} N_{k-2}(a_l\betalt^{q-1})\\
                                       &= \deg_{\betalt}(a_l\betalt^{q-1})^{(q^{k-2}-1)/\parenv*{q-1}}\\
                                       &= q^{k-2}-1\ .
\end{align*}
By the Newton interpolation in \eqref{eq:newtonInterpolation}, we can write
\begingroup
\setlength\arraycolsep{3pt}
\allowdisplaybreaks
\begin{align*}
  \rowpoly_i(\SkewVar) &= \left( \SkewVar - \Frobaut( \rowpoly_i'(a_l\betalt^{q-1}))\cdot a_l\betalt^{q-1}\cdot \left(\rowpoly_i'(a_l\betalt^{q-1})\right)^{-1}\right)\cdot \rowpoly_i'(\SkewVar)\\
   &= \left( \SkewVar - \left( \rowpoly_i'(a_l\betalt^{q-1})\right)^{q-1}\cdot a_l\betalt^{q-1}\right)\cdot \rowpoly_i'(\SkewVar)\\
  &= \SkewVar\cdot  \rowpoly_i'(\SkewVar) -  \left( \rowpoly_i'(a_l\betalt^{q-1})\right)^{q-1}\cdot a_l\betalt^{q-1}\cdot \rowpoly_i'(\SkewVar)\ .
\end{align*}
\endgroup
Since $\deg_{\betalt}\rowpoly_i'(\SkewVar)=0$ and so is $\deg_{\betalt}(\SkewVar\cdot \rowpoly_i'(\SkewVar))$, we have
\begin{align*}
  \deg_{\betalt} \rowpoly_i(\SkewVar) &= (q-1)\cdot \deg_{\betalt} \rowpoly_i'(a_l\betalt^{q-1}) + \deg_{\betalt}(a_l\betalt^{q-1})\\
                               &= (q-1)\cdot (q^{k-2}-1)+(q-1)\\
  &=(q-1)\cdot q^{k-2}\ ,
\end{align*}
for all $l,t$ such that $\indMap(l,t)\in\zeroSet_i$.
Hence, $\deg_{\betalt}T_{i,h}\leq \deg_{\betalt}\rowpoly_i(\SkewVar)=(q-1)q^{k-2}, \forall h\in[k-1]$. Then,
\begin{align}
  \deg_{\betalt}P_{\bT} &= \deg_{\betalt} \det \bT \nonumber\\
                        &\leq \max_{\pi\in\xi_k}\sum_{h=1}^{k} \deg_{\betalt}T_{\pi(h),h} \nonumber \\
                        &\leq(k-1)(q-1)\cdot q^{k-2}\ ,\label{eq:degbetapt}
\end{align}
where $\xi_k$ denotes the set of permutations of $[k]$ and the $(k-1)$ in \eqref{eq:degbetapt} is because $T_{ik}=1$ and hence $\deg_{\betalt}T_{ik}=0$.
\section{Proofs of Properties of Skew Polynomials over $R_n$}
\label{appendix:SkewPolyProperty}

\ref{p:noZeroDiv}:
  $\FrobPolysn$ is a ring without zero divisors.
\begin{proof}
The ring properties of $\FrobPolysn$ are trivial, we only need to show that it has no zero divisors.

Note that for any $a,b\in R_\numMulPolyVar$, $\Frobaut(a+b)=\Frobaut(a) + \Frobaut(b)$.
It can be seen from \eqref{eq:prodRn} that if $f,g\neq 0$, then $f\cdot g\neq 0$ since the leading coefficients of $f_{d_f}, g_{d_g}$ are nonzero and therefore $f_{d_f}\Frobaut^{d_f}(g_{d_g})$ is nonzero. Hence, $\FrobPolysn$ does not have zero divisors.
\end{proof}

\ref{p:gcrd}:
  For any sets 
  $\rootSet_1, \rootSet_2\subseteq \mulVarRng$ s.t.~$\rootSet_1\cup\rootSet_2$ is P-independent,
  $\gcrd(f_{\rootSet_1},f_{\rootSet_2})=f_{\rootSet_1\cap\rootSet_2}$.
  In particular, $\rootSet_1\cap\rootSet_2=\varnothing \iff \gcrd(f_{\rootSet_1},f_{\rootSet_2})=1$.
\begin{proof}
  This property has been proven as a part of \cite[Theorem 7]{liu2017matroidal} (cf..~\cref{thm:prop-minimal-skew-poly}) for $\FrobSkewPolys$. For completeness, we also include our proof here.
  We can write the minimal polynomial of the set $\rootSet_1\cap\rootSet_2$ by the least common left multiplier as in \eqref{eq:lclm}, i.e., $f_{\rootSet_1\cap\rootSet_2} = \underset{\alpha\in\rootSet_1\cap\rootSet_2}{\lclm}\{\SkewVar-\alpha\}$. Then we can write
  $f_{\rootSet_1} = g_1\cdot \underset{\alpha\in\rootSet_1\cap\rootSet_2}{\lclm}\{\SkewVar-\alpha\}$
  and
  $f_{\rootSet_2} = g_2 \cdot\underset{\alpha\in\rootSet_1\cap\rootSet_2}{\lclm}\{\SkewVar-\alpha\},$ for some $g_1,g_2\in\FrobPolysn$.
  Therefore, it is clear that $f_{\rootSet_1\cap\rootSet_2}\mid \gcrd(f_{\rootSet_1},f_{\rootSet_2})$.

  Now we only need to show that $\deg f_{\rootSet_1\cap\rootSet_2}=\deg \gcrd(f_{\rootSet_1},f_{\rootSet_2})$.
  Since $\rootSet_1\cup\rootSet_2$ is P-independent, $\rootSet_1,\rootSet_2$ and $\rootSet_1\cap\rootSet_2$ are also P-independent. Then $\deg f_{\rootSet_1\cup\rootSet_2} = |\rootSet_1\cup\rootSet_2|$, $\deg f_{\rootSet_1}=|\rootSet_1|$, $\deg_{\rootSet_2}=|\rootSet_2|$ and $\deg f_{\rootSet_1\cap\rootSet_2} = |\rootSet_1\cap\rootSet_2|$.
  It follows from \cite[Proposition 5.12]{gluesing2021introduction} that the minimal polynomial of $\rootSet_1\cup\rootSet_2$ is
  \begin{align*}
    f_{\rootSet_1\cup\rootSet_2} = \lclm(f_{\rootSet_1},f_{\rootSet_2})
  \end{align*}
  and from \cite[Eq.(24)]{ore1933theory} that
  \begin{align*}
    \deg\gcrd(f_{\rootSet_1},f_{\rootSet_2})=&\deg f_1+\deg f_2-\deg \lclm(f_{\rootSet_1},f_{\rootSet_2}) \\
    =& |\rootSet_1|+|\rootSet_2|-|\rootSet_1\cup\rootSet_2|\\
    =& |\rootSet_1\cap\rootSet_2|= \deg f_{\rootSet_1\cap\rootSet_2}\ .
  \end{align*}
  Together with $f_{\rootSet_1\cap\rootSet_2}\mid \gcrd(f_{\rootSet_1},f_{\rootSet_2})$, the property is proven.
\end{proof}

\ref{p:xlrdivision}:
For $t\in\bbN$ and any $f\in\FrobPolysn$, $\SkewVar^t |_l f \iff \SkewVar^t|_r f$.
In this case, we write $\SkewVar^t| f$.
\begin{proof}
  If $\SkewVar^t|_l f$, then with some $g\in\FrobPolysn$ we can write $f= \SkewVar^t\cdot g = \FrobautPolyt{g}{t}\cdot \SkewVar^t$, where $\FrobautPolyt{g}{t}=\sum_i \Frobaut^t(g_i)\SkewVar^i$. Then it is obvious that $X^t|_rf$. Similarly, if $\SkewVar^t |_r f$, we can write $f = g\cdot \SkewVar^t = \SkewVar^t\cdot \FrobautPolyt{g}{-t}$ and it is obvious that $X^t|_l f$. This property has been also shown in \cite[Theorem 7]{ore1933theory}.
\end{proof}

\ref{p:xdivision}:
  For $t\in\bbN$ and any $f_1, f_2 \in\FrobPolysn$ such that $\SkewVar\not\divides f_2$, then $\SkewVar^t| (f_1\cdot f_2)\iff\SkewVar^t\divides f_1$.
\begin{proof}
  We first show $\SkewVar^t\divides (f_1\cdot f_2)\Longleftarrow \SkewVar^t\divides f_1$. Suppose $\SkewVar^t\divides f_1$, then we can write $f_1 = \SkewVar^t\cdot f_1'$ with some $f_1'\in\FrobPolysn$. Then $f_1\cdot f_2= \SkewVar^t\cdot f_1'\cdot f_2$ and it can be seen that $\SkewVar^t\divides_l (f_1\cdot f_2)$. By \ref{p:xlrdivision}, we have $\SkewVar^t\divides(f_1\cdot f_2)$.

  For the other direction, we first show that $\SkewVar\divides (f_1\cdot f_2)\implies\SkewVar\divides f_1$ by contradiction.
  Assume $\SkewVar\not\divides f_1$, then we can write $f_1= f_1'+a$ with some $f_1'\in\FrobPolysn$ such that $\SkewVar\divides f_1'$ and $a\in R_\numMulPolyVar \setminus\{0\}$.
  Since $\SkewVar\not\divides f_2$, we can write $f_2 =f_2'+b$, with some $f_2'\in\FrobPolysn$ such that $\SkewVar\divides f_2'$ and $b\in R_\numMulPolyVar \setminus\{0\}$. Then,
  \begin{align*}
    f_1\cdot f_2 &= (f_1'+a)(f_2'+b)\\
                 &= f_1'\cdot f_2' +a\cdot f_2' + f_1'\cdot b+a\cdot b
  \end{align*}
  where the first three summands are all divisible by $\SkewVar$ but $a\cdot b\neq 0$ (since $R_\numMulPolyVar $ is a ring without zero divisor) and $\SkewVar\not\divides a\cdot b$. This implies $\SkewVar\not\divides (f_1\cdot f_2)$, which is a contradiction.
  Note that $\SkewVar^2\divides(f_1\cdot f_2) \implies \SkewVar \divides (f_1\cdot f_2) \implies \SkewVar\divides f_1$. Write $f_1=\SkewVar\cdot g$ with some $g\in\FrobPolysn$, then
  \begin{align*}
    \SkewVar^2\divides(f_1\cdot f_2) &\implies \SkewVar \divides (g\cdot f_2) \underset{\SkewVar\not\divides f_2}{\implies} \SkewVar\divides g \\
    &\implies (\SkewVar\cdot \SkewVar)\divides(\SkewVar\cdot g)\implies \SkewVar^2\divides f_1\ .
  \end{align*}
  We can extend steps above $t$ times and the property is proven.
\end{proof}
For any $\zeroSet_i\subseteq[n],i\in[k], l\in[\ell]$, we
denote $\zeroSet_i^{(l)}\coloneq\{t\ |\ \indMap(l,t) \in \zeroSet_i\}$ and $\rootSet_i^{(l)}=\{ a_l \betalt^{q-1} \ |\ t \in \zeroSet_i^{(l)}\}$,
where $\indMap(l,t)$ is defined in \eqref{eq:indMap}.
We need the following results on the set of roots of skew polynomials in order to prove \ref{p:gcrd2polys}.
It follows from \cref{lem:noMoreZero} that $f_{\rootSet_i}$ only vanishes on $\rootSet_i$ while evaluating on $\locSet$. The following lemma gives the structure of the roots of $f_{\rootSet_i}$ while evaluating on $\mulVarRng$.
\begin{lemma}[{\cite[Theorem 4]{liu2015construction}}]
  \label{lem:structureRootsBlock}
  For $l=1,\dots, \ell$, let $f_i^{(l)}$ be the minimal polynomial of $\rootSet_i^{(l)}$ and
  $\overline{\rootSet_i^{(l)}}\coloneq\{\alpha\in\mulVarRng ~|~ f_i^{(l)}(\alpha)=0\}$.
  Then, for all $l=1,\dots, \ell$,
  \begin{align}
      \overline{\rootSet_i^{(l)}} &= \{a_l\beta^{q-1}\ |\ \beta\in \myspan{\betalt}_{t\in\zeroSet_i^{(l)}}\setminus \{0\}\}\subseteq C_\Frobaut(a_l)\label{eq:rootSetBarl}\\
      |\overline{\rootSet_i^{(l)}}|&=q^{|\zeroSet_i^{(l)}|}-1 \label{eq:rootSetBarlSize}
  \end{align}
  where $C_\Frobaut(a_l)$ is the $\Frobaut$-conjugacy class of $a_l$ as defined in \eqref{eq:Frob-conj-classes}.
  \end{lemma}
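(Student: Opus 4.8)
The plan is to transfer everything to the single‑variable ``$\Frobaut$‑operator'' picture, where the root set of a skew polynomial becomes the kernel of an ordinary $\Fq$‑linearized polynomial, and then to re‑run, over the polynomial ring $\mulVarRng=\Fqm[\beta_{1,1},\dots,\beta_{\ell,n_\ell}]$, the argument used for the finite‑field analogue (\cite[Lemma~1, Theorem~10]{liu2017matroidal}, quoted just above). For $f=\sum_i f_i\SkewVar^i\in\FrobPolysn$ and the block representative $a_l\in\Fqm$, I would introduce the operator evaluation $f[\beta]_{a_l}\defeq\sum_i f_iN_i(a_l)\beta^{q^i}\in\mulVarRng$. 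Since $N_i(\alpha)=\alpha^{(q^i-1)/(q-1)}$ in the Frobenius/zero‑derivation case, one has $N_i(a_l\beta^{q-1})=N_i(a_l)\,\beta^{q^i-1}$, and hence the identity $\beta\cdot f(a_l\beta^{q-1})=f[\beta]_{a_l}$ for every $\beta\in\mulVarRng$ (with $f(\cdot)$ the remainder evaluation, \eqref{eq:Frob-remainder-evaluation}). As $\mulVarRng$ is an integral domain, this already gives: for $\beta\neq 0$, $f(a_l\beta^{q-1})=0$ iff $f[\beta]_{a_l}=0$. A short computation using \eqref{eq:prod-skew-polys} and $N_i(a_l)\,N_j(a_l)^{q^i}=N_{i+j}(a_l)$ shows that $f\mapsto f[\cdot]_{a_l}$ is a degree‑preserving $\Fqm$‑algebra homomorphism from $\FrobPolysn$ to the ring of $\Fq$‑linearized ($q$‑)polynomials over $\mulVarRng$ under composition, sending $\SkewVar$ to $\beta\mapsto a_l\beta^q$; in particular a polynomial of degree $d$ is sent to a $q$‑polynomial of $q$‑degree $d$.

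First I would compute the root space of $f_i^{(l)}[\cdot]_{a_l}$. By the definition of the minimal polynomial and \eqref{eq:lclmMinPoly}, $f_i^{(l)}=\lclm_{t\in\Zil}\{\SkewVar-a_l\betalt^{q-1}\}$, so $f_i^{(l)}$ is right‑divisible by each $\SkewVar-a_l\betalt^{q-1}$; applying the homomorphism, $f_i^{(l)}[\cdot]_{a_l}$ is divisible, under composition on the right, by $\beta\mapsto a_l(\beta^q-\betalt^{q-1}\beta)$, whose root set in $\mulVarRng$ is exactly $\Fq\betalt$. (Here I use that the $(q-1)$‑th roots of unity in the fraction field $\Fqm(\beta_{1,1},\dots,\beta_{\ell,n_\ell})$ are precisely those in $\Fqm$, i.e.\ $\Fq^*$, because $\Fqm$ is relatively algebraically closed in a purely transcendental extension.) Hence the $\Fq$‑root space of $f_i^{(l)}[\cdot]_{a_l}$ contains $V\defeq\myspan{\betalt}_{t\in\Zil}$, an $\Fq$‑space of dimension $|\Zil|$ since the $\betalt$ are distinct indeterminates. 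On the other hand $\deg f_i^{(l)}=|\Zil|$ because $\rootSet_i^{(l)}$ is P‑independent as a subset of the P‑independent set $\locSet\subseteq\mulVarRng$ (\cref{prop:blockPind}, \cref{lem:codeLocatorsPind}, \cref{def:PindSet}), so $f_i^{(l)}[\cdot]_{a_l}$ has $q$‑degree $|\Zil|$, i.e.\ degree $q^{|\Zil|}$ as an ordinary polynomial over $\mulVarRng$, so its root set (an $\Fq$‑subspace) has $\Fq$‑dimension at most $|\Zil|$. Therefore this root space equals $V$.

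Combining the two steps, for $0\neq\alpha=a_lc^{q-1}$ in the $\Frobaut$‑conjugacy class of $a_l$ (see \eqref{eq:Frob-conj-classes}, \cref{def:generalConjugate}) one gets $f_i^{(l)}(\alpha)=0$ exactly when $c\in V$, which yields $\{\alpha:f_i^{(l)}(\alpha)=0\}\cap C_{\Frobaut}(a_l)=\{a_l\beta^{q-1}\mid\beta\in V\setminus\{0\}\}$, and then \eqref{eq:rootSetBarl}--\eqref{eq:rootSetBarlSize} follow by counting $V\setminus\{0\}$ against the fibres of $\beta\mapsto a_l\beta^{q-1}$. It remains to show that every root of $f_i^{(l)}$ lies in $C_{\Frobaut}(a_l)$ (which also forces $0$ not to be a root). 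I would do this by induction on $|\Zil|$: using the Newton recursion \eqref{eq:newtonInterpolation} write $f_i^{(l)}=(\SkewVar-\alpha')\cdot g$, where $g$ is the minimal polynomial of $\rootSet_i^{(l)}$ with one locator removed (so the inductive hypothesis applies, and $g$ does not vanish at that locator by \cref{lem:noMoreZero}) and $\alpha'$ is a $\Frobaut$‑conjugate of that locator, hence still in $C_{\Frobaut}(a_l)$ since $\Frobaut$‑conjugacy is an equivalence relation. For a root $\gamma$ of $f_i^{(l)}$ the product‑evaluation formula \cite[Theorem~2.7]{lam1988vandermonde} (valid over $\mulVarRng$) gives either $g(\gamma)=0$, and then $\gamma\in C_{\Frobaut}(a_l)$ by the hypothesis, or $g(\gamma)\neq 0$ and $\gamma^{g(\gamma)}=\alpha'$, so $\gamma$ is $\Frobaut$‑conjugate to an element of $C_{\Frobaut}(a_l)$, hence in it.

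The main obstacle is not any individual step but making the finite‑field toolkit---the skew‑to‑linearized‑polynomial dictionary, the bound ``$q$‑degree $d$ $\Rightarrow$ root space of $\Fq$‑dimension $\le d$'', and the conjugacy‑class bookkeeping---run over the ring $\mulVarRng$ rather than a field: one must pass to $\mathrm{Frac}(\mulVarRng)$, verify that $\Fqm$ is relatively algebraically closed there, and keep roots inside $\mulVarRng$. A cleaner‑but‑coarser alternative is a specialization argument: choose an $\Fqm$‑algebra map from $\mulVarRng$ into a large enough extension field sending $\beta_{1,1},\dots,\beta_{\ell,n_\ell}$ to elements that stay $\Fq$‑linearly independent (and keeping $a_1,\dots,a_\ell$ in distinct $\Frobaut$‑conjugacy classes), invoke \cite[Theorem~10]{liu2017matroidal} there, and lift back; I would nonetheless prefer the direct route, since it avoids quantifying how large that extension must be.
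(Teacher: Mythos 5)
The paper does not prove this lemma; it only cites \cite[Theorem~4]{liu2015construction}, so there is no internal proof to compare against, and your proposal can only be assessed on its own merits. Your linearized-polynomial transfer $\beta\cdot f(a_l\beta^{q-1})=f[\beta]_{a_l}$, the degree-preserving composition homomorphism, and the root-space computation are all sound, and the specialization fallback is a reasonable hedge for the ring-vs-field subtleties.

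There is, however, a concrete problem with the final counting step: you correctly identify that the map $\beta\mapsto a_l\beta^{q-1}$ is $(q-1)$-to-$1$ on $V\setminus\{0\}$ (two $\beta$'s give the same $\alpha$ iff they differ by an $\Fq^*$-scalar, since $\Fqm$ is relatively algebraically closed in $\mathrm{Frac}(\mulVarRng)$), so ``counting $V\setminus\{0\}$ against the fibres'' yields $|\overline{\rootSet_i^{(l)}}| = (q^{|\Zil|}-1)/(q-1)$, \emph{not} the $q^{|\Zil|}-1$ asserted in \eqref{eq:rootSetBarlSize}. The two are incompatible for $q>2$, and you silently treat them as agreeing. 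What your argument actually proves is that the kernel of the associated $q$-polynomial $f_i^{(l)}[\cdot]_{a_l}$ has $q^{|\Zil|}-1$ nonzero elements, while the set $\overline{\rootSet_i^{(l)}}$ of distinct $\alpha$-roots has $(q^{|\Zil|}-1)/(q-1)$ elements; the formula \eqref{eq:rootSetBarlSize} is therefore either a misquotation of the cited result (the cardinality of $\beta$-roots rather than $\alpha$-roots) or a typo, and a careful proof must flag and reconcile this rather than claim both.

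A second, smaller gap is the repeated appeal to tools that the paper defines only assuming invertibility, applied to non-units of $\mulVarRng$. The product-evaluation formula \cite[Theorem~2.7]{lam1988vandermonde} is stated with $a^{g(a)}=\Endom(g(a))a\,g(a)^{-1}+\Deriv(g(a))g(a)^{-1}$, so as cited it requires $g(\gamma)\in\mulVarRng^*=\Fqm^*$, which will generically fail; similarly, the lemma's ``$\subseteq C_{\Frobaut}(a_l)$'' makes literal sense only if $C_{\Frobaut}(a_l)$ is redefined over $\mulVarRng$ as $\{a_lc^{q-1}\mid c\in\mulVarRng\setminus\{0\}\}$ rather than with $c\in\Fqm^*$ as in \eqref{eq:Frob-conj-classes} (otherwise $a_l\betalt^{q-1}\notin\Fqm$ is excluded). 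In the commutative Frobenius/zero-derivation case the identity $(f\cdot g)(a)=f(g(a)^{q-1}a)\cdot g(a)$ does hold without inversion and is a direct computation from $N_j(a)^{q^i}N_i(a)=N_{i+j}(a)$, but your proposal asserts ``valid over $\mulVarRng$'' without this verification, and the same caveat should be made explicit for the conjugacy-class bookkeeping in the inductive step (including the final argument that a root $\gamma=a_lc^{q-1}$ with $c\in\mathrm{Frac}(\mulVarRng)^*$ in fact has $c\in\mulVarRng$, which uses that $a_l$ is a unit).
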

  \begin{theorem}
  \label{thm:rootsStructure}
  Let $\rowpoly_{i}$ be the minimal polynomial of $\rootSet_i$. Denote the set of roots of $f_i$ while evaluating on $\mulVarRng$ by $\overline{\rootSet_i}\coloneq\{\alpha\in\mulVarRng ~|~ f_i(\alpha)=0\}$. Then
\begin{align}
\label{eq:rootsZi}
  \overline{\rootSet_i} &=  \bigcup_{l=1}^{\ell} \overline{\rootSet_i^{(l)}}, \text{ where }\overline{\rootSet_i^{(l)}} \text{ is as in }\eqref{eq:rootSetBarl}\\
  |\overline{\rootSet_i}| &= \sum_{l=1}^{\ell}|\overline{\rootSet_i^{(l)}}| = \sum_{l=1}^{\ell}q^{|\zeroSet_i^{(l)}|}-\ell\ .
\end{align}
\end{theorem}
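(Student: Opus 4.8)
The plan is to prove \cref{thm:rootsStructure} by combining \cref{lem:structureRootsBlock} with the factorization structure of $f_i = f_{\rootSet_i}$ coming from the least common left multiple representation in \eqref{eq:lclmMinPoly}. Recall that $\rootSet_i = \bigcup_{l=1}^\ell \rootSet_i^{(l)}$, where the $\rootSet_i^{(l)}$ are subsets of distinct $\Frobaut$-conjugacy classes $C_\Frobaut(a_l) \subseteq \mulVarRng$, and since $\locSet$ is P-independent (and hence so is $\rootSet_i \subseteq \locSet$), \cref{thm:prop-minimal-skew-poly} gives $f_{\rootSet_i} = \lclm(f_{\rootSet_i^{(1)}}, \dots, f_{\rootSet_i^{(\ell)}})$ with $\deg f_{\rootSet_i} = \sum_l \deg f_{\rootSet_i^{(l)}} = \sum_l |\zeroSet_i^{(l)}|$.

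First I would establish the inclusion $\bigcup_l \overline{\rootSet_i^{(l)}} \subseteq \overline{\rootSet_i}$. Since $f_i^{(l)} \divides_r f_{\rootSet_i}$ in $\FrobPolysn$ (each $f_i^{(l)}$ right-divides the lclm), for any $\alpha \in \overline{\rootSet_i^{(l)}}$ we have $f_i^{(l)}(\alpha) = 0$, and by the product evaluation formula (\cite[Theorem 2.7]{lam1988vandermonde}, applied over $\mulVarRng$) $g(\alpha) = 0$ whenever $g = h \cdot f_i^{(l)}$ and $f_i^{(l)}(\alpha) = 0$; hence $f_{\rootSet_i}(\alpha) = 0$, i.e.\ $\alpha \in \overline{\rootSet_i}$. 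This gives one direction and also shows $|\overline{\rootSet_i}| \geq \sum_l |\overline{\rootSet_i^{(l)}}| = \sum_l (q^{|\zeroSet_i^{(l)}|} - 1)$ once I argue the union is disjoint — which it is, since the $\overline{\rootSet_i^{(l)}}$ lie in pairwise distinct conjugacy classes $C_\Frobaut(a_l)$ by \eqref{eq:rootSetBarl}.

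For the reverse inclusion, the natural route is a counting/degree argument. I would show $|\overline{\rootSet_i}| \le \sum_l (q^{|\zeroSet_i^{(l)}|}-1)$, which combined with the above forces equality of both the sets and the cardinalities. Over $\FrobSkewPolys$ one knows that the set of roots of a skew polynomial of degree $d$ is a union of "conjugacy-class components," each component contained in a single $C_\Frobaut(a)$ and of size at most $q^{(\text{local degree})}-1$ where the local degrees sum to at most $d$; this is exactly the structure recorded in the last theorem of \cref{sec:Frob-zero-deri} and in \cite[Theorem 4]{liu2015construction}. Concretely, restricting $f_{\rootSet_i}$ to evaluations in a fixed class $C_\Frobaut(a_l)$, the roots in that class correspond to the roots of the "class-$l$ part" of $f_{\rootSet_i}$, whose degree is $\deg f_i^{(l)} = |\zeroSet_i^{(l)}|$; by \eqref{eq:rootSetBarlSize} this contributes at most $q^{|\zeroSet_i^{(l)}|}-1$ roots, and roots in classes other than $C_\Frobaut(a_1),\dots,C_\Frobaut(a_\ell)$ cannot occur because $f_{\rootSet_i}$ is a product of the $f_i^{(l)}$ up to lclm and each $f_i^{(l)}$ vanishes only on $C_\Frobaut(a_l)$. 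Summing over $l$ gives the desired upper bound, and equality throughout yields \eqref{eq:rootsZi} together with the cardinality formula $|\overline{\rootSet_i}| = \sum_{l=1}^\ell q^{|\zeroSet_i^{(l)}|} - \ell$.

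The main obstacle I anticipate is making the "roots localize to conjugacy classes" argument fully rigorous over the multivariate coefficient ring $\mulVarRng$ rather than over a field: statements like \cite[Theorem 4]{liu2015construction} are typically stated for finite fields, and I need the analogues of the conjugacy-class partition and the bound $|\overline{\rootSet_i^{(l)}}| = q^{|\zeroSet_i^{(l)}|}-1$ to hold over $\mulVarRng$. However, \cref{lem:structureRootsBlock} is precisely the per-block version stated over $\mulVarRng$, so the work is just to patch the blocks together using \ref{p:noZeroDiv} (no zero divisors in $\FrobPolysn$), \ref{p:gcrd} (gcrd/lclm behave as for field coefficients on P-independent sets), and the product evaluation formula. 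An alternative, cleaner path that avoids the upper-bound argument entirely: induct on $\ell$, writing $f_{\rootSet_i} = \lclm(f_{\rootSet_i^{(1)}}, f_{\rootSet_i \setminus \rootSet_i^{(1)}})$ and using that the two factors are relatively prime (disjoint roots in disjoint classes, so $\gcrd = 1$ by \ref{p:gcrd}), which should let me read off that $\alpha$ is a root of the lclm iff it is a root of one of the two factors — I expect this inductive version to be the safest to write up.
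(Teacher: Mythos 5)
Your proposal takes a genuinely different and more ambitious route than the paper. The paper's proof is a one-liner: it notes that the $\rootSet_i^{(l)}$ are P-independent and lie in pairwise distinct $\Frobaut$-conjugacy classes, then cites \cite[Corollary 4.4]{LamLer2004} directly to conclude $\overline{\bigcup_l \rootSet_i^{(l)}} = \bigcup_l \overline{\rootSet_i^{(l)}}$; the cardinality formula then follows immediately from \cref{lem:structureRootsBlock} and disjointness of the classes.

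Your easy inclusion $\bigcup_l \overline{\rootSet_i^{(l)}} \subseteq \overline{\rootSet_i}$ (via $f_i^{(l)} \divides_r f_i$ and the product evaluation formula) and your observation that the union is disjoint (distinct conjugacy classes) are both correct and match what the theorem needs. The problem is the reverse inclusion, which is exactly the non-trivial content of the theorem, and neither of your two proposed routes closes it. The counting argument needs an upper bound $|\overline{\rootSet_i}| \leq \sum_l (q^{|\zeroSet_i^{(l)}|}-1)$, which would follow from a ``roots localize to conjugacy classes and the per-class contribution is bounded by the local degree'' principle over $\mulVarRng$ --- but \cref{lem:structureRootsBlock} only gives the per-block closure, not the global localization, and the last theorem of \cref{sec:Frob-zero-deri} is stated over $\Fqm$, not $\mulVarRng$. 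You flag this obstacle yourself, but then your ``cleaner'' fallback quietly reintroduces it: the inductive step requires that for coprime $g,h$ (meaning $\gcrd(g,h)=1$), every root of $\lclm(g,h)$ is already a root of $g$ or of $h$. That claim is precisely what \cite[Corollary 4.4]{LamLer2004} provides and it is not automatic from the Bezout identity or \ref{p:noZeroDiv}/\ref{p:gcrd}: the product evaluation formula applied to $\lclm(g,h) = a\cdot g = b\cdot h$ gives $a(\alpha^{g(\alpha)}) = 0$ when $g(\alpha) \neq 0$, which is a statement about $a$, not $h$, and converting it into a statement about $h$ needs the full theory of algebraic sets for skew polynomials. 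So as written, your proof would be circular: both routes reduce to the very result the paper imports from Lam--Leroy. To make this self-contained you would have to actually prove the closure-of-union identity for P-independent sets in distinct conjugacy classes over $\mulVarRng$; otherwise, citing \cite[Corollary 4.4]{LamLer2004} as the paper does is the right move, and your divisibility and disjointness observations then become a clean derivation of the cardinality formula.
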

\begin{proof}
Note that for all $l\in[\ell]$, $\rootSet_i^{(l)}$ are P-independent and they are from different conjugacy classes. It follows from \cite[Corollary 4.4]{LamLer2004} that for such sets,
$\overline{\bigcup_{l=1}^{\ell} \rootSet_i^{(l)}}=\bigcup_{l=1}^{\ell} \overline{\rootSet_i^{(l)}}$.
\end{proof}
It is clear that the $\alpha$'s in \eqref{eq:fZt} are P-independent. It follows from \cref{def:PindSet} that $\deg \fZt(\zeroSet,\tfzt) = |\zeroSet|+\tfzt$. By \cref{thm:rootsStructure}, the set of roots of $\fZt(\zeroSet,\tfzt)$ is
\begin{align}
\label{eq:fZtRoots}
\{0\}^{\tfzt}\cup\bigcup_{l=1}^{\ell}\set*{a_l\beta^{q-1}\ |\ \beta\in \myspan{\betalt}_{t\in\zeroSet^{(l)}}\setminus\{0\}}
\end{align}
where $\zeroSet^{(l)}=\{t\ |\ \indMap(l,t)\in\zeroSet\}$. The notation $\{0\}^{\tfzt}$ is to imply that $\SkewVar^\tfzt\divides \fZt(\zeroSet, \tfzt)$ and $X^{\tau+1}\nmid \fZt(\zeroSet,\tfzt)$.

\ref{p:gcrd2polys}:
  For any $f_1 = \fZt(\zeroSet_1,\tfzt_1), f_2 = \fZt(\zeroSet_2, \tfzt_2)\in\Skewnk$, we have
  \begin{align*}
    \gcrd(f_1,f_2)=\fZt(\zeroSet_1\cap \zeroSet_2, \min\{\tfzt_1,\tfzt_2\})\in\Skewnk\ .
  \end{align*}
\begin{proof}
  We prove the property by showing that the skew polynomials on both side have the same set of roots.
  Denote by $\widebar{\rootSet_1}, \widebar{\rootSet_2}, \widebar{\rootSet_{1,2}}\subseteq \mulVarRng$ the set of all roots in $\mulVarRng$ of $f_1,f_2, \fZt(\zeroSet_1\cap \zeroSet_2, \min\{\tfzt_1,\tfzt_2\})$, respectively.
  By the structure of roots of $\fZt(\zeroSet,t)$ given in \eqref{eq:fZtRoots},
  \begin{align*}
    \widebar{\rootSet_i} &=\{0\}^{\tfzt_i}\cup\bigcup_{l=1}^{\ell}\set*{a_l\beta^{q-1}\ |\ \beta\in \myspan{\betalt}_{t\in\zeroSet_i^{(l)}}\setminus\{0\}},\quad i=1,2\\
    \widebar{\rootSet_{1,2}} &= \{0\}^{\min\{\tfzt_1,\tfzt_2\}}\cup\bigcup_{l=1}^{\ell}\set*{a_l\beta^{q-1}\ |\ \beta\in \myspan{\betalt}_{t\in\zeroSet_{1,2}^{(l)}}\setminus\{0\}}
  \end{align*}
  where $\zeroSet_i^{(l)}\coloneq\{t\ |\ \indMap(l,t)\in\zeroSet_i\}$ and $\zeroSet_{1,2}^{(l)}\coloneq\set*{t\ |\ \indMap(l,t)\in\zeroSet_{1}\cap\zeroSet_2}$.
  The set of roots of $\gcrd(f_1,f_2)$ is
  \begin{align*}
    \widebar{\rootSet_1}\cap \widebar{\rootSet_2} &= \{0\}^{\min\{\tfzt_1,\tfzt_2\}} \cup\bigcup_{l=1}^{\ell}\set*{a_l\beta^{q-1}\ |\ \beta\in \myspan{\betalt}_{t\in\zeroSet_1^{(l)}\cap\zeroSet_2^{(l)}}\setminus\{0\}}\ .
  \end{align*}
  It can be seen that $\zeroSet_1^{(l)}\cap\zeroSet_2^{(l)} = \zeroSet_{1,2}^{(l)}, \forall l\in[\ell]$. Hence, $\widebar{\rootSet_1}\cap \widebar{\rootSet_2}=\widebar{\rootSet_{1,2}}$.
\end{proof}

\ref{p:reducevar}:
  Let $f=\fZt(\zeroSet,\tfzt)\in\Skewnk$ and let $f'=f|_{\beta_{\ell, n_\ell}=0}\in R_{\numMulPolyVar-1}[\SkewVar;\Frobaut]$ (namely, we substitute $\beta_{\ell, n_\ell}=0$ in each coefficient of $f$). Then $f'\in\cS_{n-1,k}$ and
\begin{align*}
    f'= \begin{cases}
    \fZt(\zeroSet,\tfzt) & n\not\in \zeroSet\\
    \fZt(\zeroSet\setminus\{n\}, \tfzt+1) & n\in \zeroSet
    \end{cases}\ .
\end{align*}
\begin{proof}
  Denote by $\rootSet$ the subset of $\locSet$ corresponding to $\zeroSet$ as in \eqref{eq:rootSet}.
  It is trivial that $f'\in \cS_{n-1,k}$ and $f'=\fZt(\zeroSet,\tfzt)$ when $n\not\in \zeroSet$. Suppose $n\in \zeroSet$, then $a_\ell\beta_{\ell,n_\ell}^{q-1}\in\rootSet$.
  Let $g = \underset{\alpha\in\rootSet\setminus\{ a_\ell\beta_{\ell,n_\ell}\}}{\lclm}\{\SkewVar-\alpha\}$, then
  \begin{align*}
      f'&=\SkewVar^\tfzt \cdot \left.\parenv*{\underset{\alpha\in\rootSet}{\lclm}\{\SkewVar-\alpha\}} \right|_{\beta_{\ell,n_\ell}=0}\\
      &=\SkewVar^\tfzt \cdot\left.\left(\left(\SkewVar-(a_\ell\beta_{\ell,n_\ell}^{q-1})^{g(a_\ell\beta_{\ell,n_\ell}^{q-1})}\right)\cdot g \right)\right|_{\beta_{\ell,n_\ell}=0}\\
      &=\SkewVar^\tfzt \cdot \SkewVar\cdot g \\
      &=\SkewVar^{\tfzt+1} \cdot g \\
      &=\SkewVar^{\tfzt+1} \cdot\parenv*{\underset{\alpha\in\rootSet\setminus\{a_\ell\beta_{\ell,n_\ell}^{q-1} \}}{\lclm}\{\SkewVar-\alpha\}} \\
      &=\fZt(\zeroSet\setminus\{n\}, \tfzt+1)\in\cS_{n-1,k}\ ,
  \end{align*}
  where the second line holds by the Newton interpolation in \eqref{eq:newtonInterpolation}.
\end{proof}
\section{Induction Proof of \cref{thm:sufficientCond}}
\label{appendix:induction-proof-general-result}
In the part we elaborate the induction proof of \cref{thm:sufficientCond} for all the cases on page \pageref{case_s3n2_a}.\\
\ref{case_s3n2} For $\numRows\geq 3$ and $n\geq 2$,

\ref{case_s3n2_a} $\forall i\in[\numRows]$, $\tfzt_i\geq 1$ (i.e., $|Z_i|\leq k-2$).
\begin{proof}
For convenience we denote $k'=k-1$. For all $i\in[\numRows]$, we can write $f_i=\SkewVar\cdot f_i'$, where $f_i'=\fZt(Z_i,\tfzt_i-1)\in \cS_{n,k-1}=\cS_{n,k'}$. Note that since $\min_{i\in[\numRows]} \tfzt_i \geq 1$, we have $\deg f_\Omega\geq 1$ for any $\Omega\subseteq [\numRows]$. For $\Omega = [\numRows]$, \ref{item:equiv2} implies $k-1\geq k-\deg f_{[\numRows]}\geq \sum_{i\in[\numRows]}(k-\deg f_i)\geq s$.

\stepone \ref{item:equiv2} holds for $(f_1',\dots,f_{\numRows}')$ because for any nonempty $\Omega\subseteq [\numRows]$,
\begin{align}
  k'-\deg f_{\Omega}' &= k-\deg f_{\Omega}\nonumber\\
  &{\geq} \sum_{i\in\Omega}(k-\deg f_i)\label{eq:case2astep1}\\
  &=\sum_{i\in\Omega}(k'-\deg f_i')\nonumber
\end{align}
where \eqref{eq:case2astep1} holds because \ref{item:equiv2} holds for $(f_1,\dots, f_\numRows)$ by \ref{H:fromHere}.
By \ref{H:fromPre}, \ref{item:equiv1} then holds for $(f_1',\dots, f_\numRows')\in \cS_{n,k'}$. Note here that we used the induction hypothesis by reducing $k$ to $k'$.

\steptwo We then show that \ref{item:equiv1} also holds for $(f_1,\dots, f_\numRows)$. Suppose that for $g_1,\dots, g_\numRows\in \FrobPolysn$ with $\deg (g_i\cdot f_i)\leq k-1$, we have $\sum_{i=1}^s g_i\cdot f_i=0=\sum_{i=1}^s g_i\cdot (\SkewVar\cdot f_i')\overset{\text{\ref{p:noZeroDiv},\ref{p:xlrdivision}}}{\implies}\sum_{i=1}^{\numRows} g_i f'_i=0$, which implies that $g_1=\cdots=g_\numRows=0$ since \ref{item:equiv1} holds for $(f_1',\dots, f_\numRows')\in\cS^s_{n,k'}$.
\end{proof}

\ref{case_s3n2_b} $\exists$ a unique $i\in[\numRows]$ such that $\tfzt_i=0$.
\begin{proof}
Suppose w.l.o.g.~$\tfzt_\numRows=0$ and write $f_\numRows'=f_\numRows\in\cS_{n,k}$.
For $i\in[\numRows-1], \tfzt_i\geq 1$, then we can write $f_i=\SkewVar\cdot f_i'$, where $f_i'=\fZt(Z_i, \tfzt_i-1)\in\cS_{n,k-1}$. Note that $f_\numRows'=f_\numRows\in\cS_{n,k-1}$ if and only if $\deg f_\numRows\leq k-2$, in which case for $\Omega=[s]$, \ref{H:fromHere} implies
\begin{align*}
  k\geq k-\deg f_{\Omega}\geq \sum_{i\in\Omega} (k-\deg f_i) \geq s+1.
\end{align*}

\stepone We show that \ref{item:equiv2} holds for $(f_1',\dots, f_\numRows')$ when $k$ is replaced by $k'=k-1$. First consider the case of $\Omega\subseteq[s-1]$. Since $\forall i\in [s-1],  \tfzt_i\geq 1$, the claim follows similarly to \ref{case_s3n2_a}. Additionally, by the induction hypothesis for $(k'=k-1,s-1,n)$ we get that \ref{item:equiv1} is true for $(f_1',\dots, f_{\numRows-1}')$.
Then consider the case of $\Omega$ such that $s\in\Omega$. Since $f_\numRows=\lclm_{\alpha\in\{a_l\betalt^{q-1} | \indMap(l,t)\in \zeroSet_\numRows\}}\{(\SkewVar-\alpha)\}$ has no factor $\SkewVar$, we have $\gcrd\{f_\numRows, f_i\}=\gcrd\{f_\numRows', f_i'\}, \forall i\in[s-1]$, hence $f_\Omega=f_\Omega'$ where we define $f'_\Omega=\gcrd_{i\in\Omega}\{f'_i\}$. Then
\begingroup
\setlength\arraycolsep{3pt}
\allowdisplaybreaks
\begin{align}
  k-1-\deg f_{\Omega}'&=-1+k-\deg f_{\Omega}\nonumber\\
&\geq -1+\sum_{i\in\Omega}\deg (k-\deg f_i)\label{eq:case2bstep1}\\
&= k-1-\deg f_\numRows+\sum_{i\in\Omega\setminus\{s\}}(k-\deg f_i)\nonumber\\
&= k-1-\deg f_\numRows'+\sum_{i\in\Omega\setminus\{s\}}(k-1-\deg f_i')\nonumber\\
&=\sum_{i\in\Omega}(k-1-\deg f_i')\ ,\nonumber
\end{align}
\endgroup
where \eqref{eq:case2bstep1} holds from \ref{H:fromHere}.
By \ref{H:fromPre}, \ref{item:equiv2} $\implies$ \ref{item:equiv1} is true for $(f_1',\dots, f_\numRows')$ with parameters $(k'=k-1,s,n)$ if $\deg f_\numRows'\leq k-2$., which implies $k\geq s+1$.

\steptwo Suppose that for some $g_1,\dots, g_s\in\FrobPolysn$ with $\deg(g_i\cdot f_i)\leq k-1$ we have $\sum_{i=1}^s g_i\cdot f_i=0$.
Then $0 = \sum_{i=1}^s g_i\cdot f_i = g_s\cdot f_s + \sum_{i=1}^{s-1}g_i\cdot (\SkewVar \cdot f_i' )$, which implies $\SkewVar\divides (g_s\cdot f_s)$. However, since $\SkewVar\not\divides f_s$, by \ref{p:xdivision}, $\SkewVar\divides g_s$. Then we can write $g_s=g_s'\cdot \SkewVar$ for some $g_s'\in\FrobPolysn$ with $\deg g_s'=\deg g_s-1$.

If $\deg f_s=k-1$, then $\deg g_s'=-1$, implying $g_s=0$. Since \ref{item:equiv1} holds for $(f_1',\dots,f_{s-1}')\in \cS_{n,k-1}^{s-1}$ with the parameter tuple $(k-1,s-1,n)$, $g_1,\dots, g_{s-1}$ are also zero.
Note that here we used the induction hypothesis by reducing $k$ to $k-1$ and $s$ to $s-1$.

If $\deg f_s \leq k-2$, we have
\begingroup
\allowdisplaybreaks
\begin{align*}
  0 &=    \sum_{i=1}^s g_i\cdot f_i\\
  &=  (g_s'\cdot \SkewVar)\cdot f_s+\sum_{i=1}^{s-1}g_i\cdot (\SkewVar\cdot f_i')\\
  &= (g_s'\cdot \SkewVar)\cdot f_s' +\sum_{i=1}^{s-1} (g_i\cdot \SkewVar)\cdot f_i'\ .
\end{align*}
\endgroup
Then $g_1 = \cdots = g_{s-1}=g_s' =0 $ since \ref{item:equiv1} holds for $(f_1',\dots, f_s')\in \cS_{n,k-1}^{s}$ with the parameter tuple $(k-1,s,n)$.
Hence, all $g_1=\cdots = g_s = 0$.
Note that here we used the induction hypothesis by reducing $k$ to $k-1$.
\end{proof}

\ref{case_s3n2_c} $\exists\ \Omega \subset [\numRows]$ with $2\leq |\Omega|\leq \numRows-1$ such that \eqref{eq:equiv2} holds with equality.
\begin{proof}
W.l.o.g., assume that \eqref{eq:equiv2} holds with equality for $\Omega'=\{1,\dots,\subRows\}$, $1< \subRows<\numRows$, i.e.,
\begin{align}
  \label{eq:case1-assumption}
  k-\deg f_0 = \sum_{i\in\Omega'}(k-\deg f_i)\ ,
\end{align}
where $f_0=f_{\Omega'}=\gcrd_{i\in\Omega'}f_i$. Since $f_0\divides_r f_i, \forall i\in\Omega'$, there exists $f_i'\in\FrobPolysn$ such that $f_i=f_i'\cdot f_0$.
Since $\subRows<\numRows$ and $\numRows-\subRows +1 <\numRows$, we split $(f_1,\dots, f_\numRows)\in \Skewnk^{\numRows}$ into two smaller problems $(f_1,\dots,f_\subRows)\in\Skewnk^{\subRows}$ with the parameter tuple $(k,\subRows<\numRows,n)$ and $(f_0, f_{\subRows+1},\dots, f_{\numRows})\in\Skewnk^{\numRows-\subRows+1}$ with the tuple $(k,\numRows-\subRows+1<\numRows, n)$.

\stepone Note that by \ref{H:fromHere}, \ref{item:equiv2} is true for $(f_1,\dots,f_{\subRows})$ and for $(f_0, f_{\subRows+1},\dots, f_{\numRows})$ when $0\not\in\Omega''\subseteq\{0,\subRows+1,\dots,\numRows\}$.
We show in the following that \ref{item:equiv2} is also true for $(f_0, f_{\subRows+1},\dots, f_{\numRows})$ with $0\in\Omega''$:
\begingroup
\allowdisplaybreaks
\begin{align}
  k-\deg f_{\Omega''} \nonumber
  &= k- \deg \gcrd\{f_0, f_{\Omega''\setminus\{0\}}\}\nonumber\\
  &= k- \deg \gcrd\{f_{\Omega'}, f_{\Omega''\setminus\{0\}}\}\nonumber\\
  &= k- \deg \gcrd_{i\in\Omega'\cup \Omega''\setminus\{0\}} f_i\nonumber\\
  &{\geq}  \sum_{i\in\Omega'\cup \Omega''\setminus\{0\}} (k-\deg f_i)\label{eq:case1step1}\\
  &= \sum_{i\in\Omega'}(k-\deg f_i) + \sum_{i\in\Omega''\setminus\{0\}}(k-\deg f_i)\nonumber\\
   & = k- \deg f_0 + \sum_{i\in\Omega''\setminus\{0\}}(k-\deg f_i)\label{eq:case1-equal}\\
  &= \sum_{i\in\Omega''}(k-\deg f_i)\nonumber
\end{align}
\endgroup
Note that $\Omega'\cup\Omega''\setminus\{0\}$ is a subset of $\Omega$. Therefore, the inequality in \eqref{eq:case1step1} follows from \ref{H:fromHere}. The equality \eqref{eq:case1-equal} follows from \eqref{eq:case1-assumption}. 
Now we can conclude that \ref{item:equiv2} is true for $(f_1,\dots,f_{\subRows})$ and for $(f_0, f_{\subRows+1},\dots, f_{\numRows})$.

By \ref{H:fromPre}, \ref{item:equiv1} is true for both smaller problems $(f_1,\dots,f_\subRows)\in\Skewnk^{\subRows}$ and $(f_0, f_{\subRows+1},\dots, f_{\numRows})\in\Skewnk^{\numRows-\subRows+1}$.

\steptwo Then we show \ref{item:equiv1} is also true for $(f_1,\dots, f_{\numRows})$. Suppose that for some $g_1,\dots, g_{\numRows}\in\FrobPolysn$ with $\deg g_i\cdot f_i\leq k-1, \forall i\in [\numRows]$, we have
\begin{align}
  \sum_{i=1}^{\numRows} g_i\cdot f_i=0\ . \label{eq:LHSequiv1}
\end{align}
Since $f_0\divides_r f_i$ for all $i\in \Omega'=[\subRows]$, $f_0$ is a right factor $\sum_{i=1}^{\subRows} g_i\cdot f_i$ and we can then write $\sum_{i=1}^{\subRows}g_i\cdot f_i=g_0\cdot f_0$, for some $g_0\in\FrobPolysn$. Then
\begingroup
\allowdisplaybreaks
\begin{align}
  0 &= \sum_{i=1}^{\numRows} g_i\cdot f_i\nonumber \\
  &= \sum_{i=1}^{\subRows} g_i\cdot f_i + \sum_{i=\subRows+1}^{\numRows} g_i\cdot f_i\nonumber \\
  &= g_0\cdot f_0 + \sum_{i=\subRows+1}^{\numRows} g_i\cdot f_i\label{eq:0equal}
\end{align}
\endgroup
From the conclusion that \ref{item:equiv1} is true for $(f_0,f_{\subRows+1}, \dots, f_{\numRows})$, \eqref{eq:0equal} holds only if $g_0=g_{\subRows+1}=\dots = g_{\numRows}=0$.
Similarly, since \ref{item:equiv1} is true for $(f_1,\dots, f_{\subRows})$, $0=g_0\cdot f_0=\sum_{i=1}^{\subRows}g_i\cdot f_i$ only if $g_1 = \dots = g_{\subRows}=0$.
Therefore, \eqref{eq:LHSequiv1} holds only if $g_1 = \dots = g_{\numRows}=0$ and \ref{item:equiv1} is proven for $(f_1,\dots, f_{\numRows})\in\cS_{n,k}^{\numRows}$ with the parameter tuple $(k,s,n)$.
\end{proof}

\ref{case_s3n2_d} $\forall\ \Omega\subset [\numRows]$ with $2\leq |\Omega|\leq \numRows-1$, \eqref{eq:equiv2} holds strictly and $\exists$ at least two $i\in[\numRows]$ such that $\tfzt_i=0$.
\begin{proof}
Assume w.l.o.g.~that $\tfzt_{\numRows-1}=\tfzt_{\numRows}=0$. Then for $i=\numRows-1, \numRows$, $\deg f_{i}=|\zeroSet_{i}|$.
If $\zeroSet_{\numRows-1}=\zeroSet_{\numRows}$, then for $\Omega = \{\numRows-1,\numRows\}$, \ref{item:equiv2} implies
\begingroup
\allowdisplaybreaks
\begin{align*}
  k-\deg f_\numRows&=k-\deg f_{\numRows-1}\\
  &= k-\deg \gcrd \{f_{\numRows-1}, f_{\numRows}\}\\
  &\geq k-\deg f_{\numRows-1}+k-\deg f_{\numRows}
\end{align*}
\endgroup
which contradicts with $\deg f_{i}\leq k-1$ for any $i\in[\numRows]$. Hence,  $\zeroSet_{\numRows-1}\neq[n]$ or $\zeroSet_{\numRows}\neq [n]$.
W.l.o.g., assume $\zeroSet_{\numRows}\neq  [n]$ and $n\not\in \zeroSet_{\numRows}$.

Note that $n=\indMap(\ell,n_{\ell})$. We will substitute the variable $\beta_{\ell,n_\ell}=0$. For all $i\in[\numRows]$, let $f_i'\coloneqq f_i|_{\beta_{\ell,n_\ell}=0}$. Since $n\not\in\zeroSet_s$, we have $f_s'=f_s\in\cS_{n-1,k}$. For other $i\in[\numRows-1]$, by \ref{p:reducevar}, $f_i'\in\cS_{n-1,k}$ and
\begin{align}\label{eq:fi2c}
  f_i' = \begin{cases}
    \fZt(Z_i,\tfzt_i) & n\not\in Z_i\\
    \fZt(Z_i\setminus\{n\}, \tfzt_i+1) & n\in Z_i
  \end{cases}\ .
\end{align}
In the first case of~\eqref{eq:fi2c} we denote $\zeroSet_i' = \zeroSet_i$ and $\tau'_i=\tau_i$, whereas in the second we denote $\zeroSet_i' = \zeroSet_i\setminus\{n\}$ and $\tau'_i=\tau_i+1$. Additionally, we define $f'_\Omega=\gcrd_{i\in\Omega}f'_i$.

\stepone We will first show that $(f_1',\dots, f_s')$ satisfies \ref{item:equiv2}. That is, we show that $\forall \varnothing\neq \Omega'\subseteq[s]$, $k-\deg f_{\Omega'}'\geq  \sum_{i\in\Omega'} (k-\deg f_i') $.

For $|\Omega'|=1$, it is trivial.

For $2\leq |\Omega'|\leq s-1$,
\begingroup
\allowdisplaybreaks
\begin{align}
  k-\deg f_{\Omega'}' &= k-|\bigcap_{i\in\Omega'}\zeroSet_i'|-\min_{i\in\Omega'} \tfzt_i'\nonumber \\
 & \geq k- |\bigcap_{i\in\Omega'}\zeroSet_i|  - \min_{i\in\Omega'} \tfzt_i - 1 \label{eq:case2cineq1}\\
  &= k-\deg f_{\Omega'} -1\nonumber \\
  &\geq  \sum_{i\in\Omega'} (k-\deg f_i) \label{eq:case2cineq2}\\
  &= \sum_{i\in\Omega'} (k-\deg f_i')\label{eq:case2c-deg-equal}\ .
\end{align}
\endgroup
The inequality \eqref{eq:case2cineq1} is because $|\bigcap_{i\in\Omega}\zeroSet_i'|\leq |\bigcap_{i\in\Omega} \zeroSet_i|$ and $\min_{i\in\Omega} \tfzt_i'\leq \min_{i\in\Omega} \tfzt_i +1$. The inequality \eqref{eq:case2cineq2} is because we assume the inequality \eqref{eq:equiv2} in \ref{item:equiv2} holds strictly for all $2\leq |\Omega|\leq s-1$. The equality \eqref{eq:case2c-deg-equal} holds because $\deg f_i'=\deg f_i,\forall i\in[\numRows]$ by observing \eqref{eq:fi2c}.

For $|\Omega'|=s$, \eqref{eq:equiv2} is not necessarily strict. However, since
\begin{align*}
n\not\in \zeroSet_s &\implies n\not\in \bigcap_{i\in[s]}\zeroSet_i\\
&\implies \left|\bigcap_{i\in[\numRows]}\zeroSet_i'\right|=\left|\bigcap_{i\in[\numRows]} \zeroSet_i\right|\implies f'_{[\numRows]}=f_{[\numRows]},
\end{align*}
we have
\begingroup
\allowdisplaybreaks
\begin{align*}
  k-\deg f'_{[\numRows]} &= k-\deg f_{[\numRows]}\\
  &\geq  \sum_{i\in[\numRows]} (k-\deg f_i)\\
  &=  \sum_{i\in[\numRows]} (k-\deg f_i')\ .
\end{align*}
\endgroup

Hence, \ref{item:equiv2} holds for $(f_1',\dots, f_\numRows')\in\cS_{n-1,k}^{\numRows}$.
By \ref{H:fromPre}, \ref{item:equiv1} holds for $(f_1',\dots, f_\numRows')\in\cS_{n-1,k}^{\numRows}$ with the parameter tuple $(k\geq\numRows\geq 3,n-1)$ where $n\geq 2$. Note that here we used the induction hypothesis by reducing $n$ to $n-1$.

\steptwo Suppose that for some $g_1,\dots, g_s\in\FrobPolysn$, not all zero, with $\deg (g_i\cdot f_i)\leq k-1$, we have $\sum_{i=1}^s g_i\cdot f_i =0$. Let $g_i'=g_i|_{\beta_{\ell,n_\ell}=0}\in R_{n-1}[\SkewVar;\Frobaut]$. Further assume that at least one coefficient of some $g_i$ is not divisible by $\beta_{\ell,n_\ell}$ (otherwise, divide them by $\beta_{\ell,n_\ell}$). Then $g_i'$ are not all zero. We can write
\begin{align*}
  \sum_{i=1}^s g_i'\cdot f_i' =\left. \left(\sum_{i=1}^s g_i\cdot f_i \right)\right|_{\beta_{\ell,n_\ell}=0} = 0|_{\beta_{\ell,n_\ell}=0} = 0\ .
\end{align*}
However, this contradicts \ref{item:equiv1} being true for $(f_1',\dots, f_s')$ with the parameter tuple $(k,s,n-1)$. Therefore, $g_1,\dots, g_s\in\FrobPolysn$ must be all zero to have $\sum_{i=1}^s g_i\cdot f_i =0$.
\end{proof}
\noindent \ref{case_s2n2} For $\numRows=2$ and $n\geq 2$,

\ref{case_s2n2_a} $\forall i\in\{1,2\}$, $\tfzt_i\geq 1$ (i.e., $|Z_i|\leq k-2$).

The proof for this case is the same as for \ref{case_s3n2_a}. We use the induction hypothesis by reducing $k$.

\ref{case_s2n2_b} $\exists$ a unique $i\in\{1,2\}$ such that $\tfzt_i=0$.

The proof for this case is the same as for \ref{case_s3n2_b}. We use the induction hypothesis by reducing $k$. We may need to reduce $s$, too.

\ref{case_s2n2_c} $\forall i\in\{1,2\}$, $\tfzt_i=0$.
\begin{proof}
In this case we have $\Omega=\{1,2\}$ and $\tfzt_1=\tfzt_2=0$.
Similar to \ref{case_s3n2_d}, $\zeroSet_1\neq [n]$ or $\zeroSet_2\neq[n]$. W.l.o.g., assume $\zeroSet_2\neq[n]$ and $n\not \in \zeroSet_2$. Note that $n=\indMap(\ell,n_{\ell})$. We substitute the variable $\beta_{\ell,n_{\ell}}=0$. For $i=1,2$, let $f_i'\coloneqq f_i|_{\beta_{\ell,n_\ell}=0}$ and $f'_{\Omega}\coloneq \gcrd\{f_1', f_2'\}$.
Since $n\not\in\zeroSet_2$, $f_2'=f_2$.
By \ref{p:reducevar}, $f_1'\in\cS_{n-1,k}$ and
\begingroup
\allowdisplaybreaks
\begin{align*}
  f_1' = \begin{cases}
    \fZt(Z_1,0) & n\not\in Z_1\\
    \fZt(Z_1\setminus\{n\}, 1) & n\in Z_1
  \end{cases}\ .
\end{align*}
\endgroup

\stepone We first show that $(f_1',f_2')\in\cS_{n-1,k}^2$ satisfies \ref{item:equiv2}. That is, we show that $\forall \varnothing\neq \Omega'\subseteq\Omega$, $k-\deg f_{\Omega'}'\geq  \sum_{i\in\Omega'} (k-\deg f_i') $.

For $|\Omega'|=1$, it is trivial.

For $\Omega'=\{1,2\}$, since
\begin{align*}
    n\not\in\zeroSet_2\implies n\not\in \zeroSet_1\cap\zeroSet_2 &\implies |\zeroSet_1'\cap\zeroSet_2'|=|\zeroSet_1\cap\zeroSet_2|\\
    &\implies \deg f'_{\Omega'}= \deg f_{\Omega'}\ ,
\end{align*}
we have
\begin{align*}
  k-\deg f'_{\Omega'} &= k-\deg f_{\Omega'}\\
 & \geq  \sum_{i\in\Omega'} (k-\deg f_i)\\
  &=  \sum_{i\in\Omega'} (k-\deg f_i')\ .
\end{align*}
Hence, \ref{item:equiv2} holds for $(f_1',f_2')\in\cS_{n-1,k}^2$.
By \ref{H:fromPre}, \ref{item:equiv1} holds for $(f_1',f_2')\in\cS_{n-1,k}^2$ with parameter tuple $(k\geq \numRows=2, n-1)$ where $n\geq 2$. Here we used the induction hypothesis by reducing $n$ to $n-1$.

\steptwo This step can be shown in the same manner as in \ref{case_s3n2_d}.
\end{proof}
\noindent \ref{case_s2n1} For $\numRows\geq 2$ and $n=1$,

\ref{case_s2n1_a} $\forall i\in[\numRows]$, $\tfzt_i\geq 1$ (i.e., $|Z_i|\leq k-2$).

The proof for this case is the same as for \ref{case_s3n2_a}. We use the induction hypothesis by reducing $k$.

\ref{case_s2n1_b} $\exists$ a unique $i\in\{1,2\}$ such that $\tfzt_i=0$.

The proof for this case is the same as for \ref{case_s3n2_b}. We use the induction hypothesis by reducing $k$. We may need to reduce $s$, too.

\ref{case_s2n1_c} $\exists$ at least two $i\in[\numRows]$, $\tfzt_i=0$.
\begin{proof}
W.l.o.g., assume that $\tfzt_1=\tfzt_2=0$. Since $n=1$, $|\zeroSet_i|\leq n=1, \forall i\in [\numRows]$. By the definition of $f_i\in \cS_{1,k}$ in \eqref{eq:Skewnk}, $\deg f_i \leq n=1$. Assume \ref{item:equiv2} is true for this case, then for $\Omega=\{1,2\}$, we have
\begin{align}
  k-\deg f_{\Omega}\geq k-\deg f_1 + k-\deg f_2\ . \label{eq:s2n1_equiv2}
\end{align}
If $\zeroSet_1=\zeroSet_2=\varnothing$ or $\{1\}$, then $\deg f_{\Omega}=\deg f_1=\deg f_2\leq n=1$ and \eqref{eq:s2n1_equiv2} implies $\deg f_1\geq k$, which contradicts $k\geq s\geq 2$.
Otherwise, w.l.o.g., assume $\zeroSet_1=\varnothing$ and $\zeroSet_2 = \{1\}$, then $\deg f_{\Omega}=0$ and \eqref{eq:s2n1_equiv2} implies $1=\deg f_2\geq k$, which contradicts $k\geq s\geq 2$.
Therefore, if \ref{item:equiv2} is true for $(k\geq s\geq 2,n=1)$, this case cannot happen.
\end{proof}

\backmatter

\sloppy 

\chapter{Bibliography}
\printbibliography[notkeyword={Hedongliang},heading=subbibliography,title={References}]
\addcontentsline{toc}{section}{References}
\newpage
\printbibliography[keyword={Hedongliang},keyword={indiss},heading=subbibliography,title={Publications Containing Parts of this Dissertation}]
\addcontentsline{toc}{section}{Publications Containing Parts of this Dissertation}
\newpage
\printbibliography[keyword={Hedongliang},notkeyword={indiss},heading=subbibliography,title={Other Joint Publications}]

\end{document}